 \titleformat{\subparagraph}[hang]{\normalfont}{\thesubparagraph}{0pt}{\underline}
 \titleformat{\paragraph}[hang]{\normalfont}{\theparagraph}{0pt}{\myuline}
\renewcommand{\todo}[2][]{\tikzexternaldisable\@todo[#1]{#2}\tikzexternalenable}
\setlist[enumerate]{itemsep=2.0pt plus 1.0 pt minus 0.5pt, topsep=4.0pt plus 2.0 pt minus 1.0pt}
\setlist[itemize]{itemsep=2.0pt plus 1.0 pt minus 0.5pt, topsep=4.0pt plus 2.0 pt minus 1.0pt}
\newcolumntype{M}[1]{>{\centering\arraybackslash}m{#1}}
\pgfplotsset{compat=1.16}
\newcommand{\myuline}[1]{%
  \uline{\phantom{#1}}%
  \llap{\contour{white}{#1}}%
}
\DeclareFontShape{U}{wasy}{b}{n}{ <-10> ssub * wasy/m/n
 <10> <10.95> <12> <14.4> <17.28> <20.74> <24.88>wasyb10 }{}
\DeclareFontShape{U}{wasy}{b}{it}{ <-10> ssub * wasy/m/n
 <10> <10.95> <12> <14.4> <17.28> <20.74> <24.88>wasyb10 }{}
\DeclareMathAlphabet\mathbfcal{OMS}{cmsy}{b}{n}
\newcommand\numberthis{\addtocounter{equation}{1}\tag{\theequation}}
\renewcommand{\Re}{\operatorname{Re}}
\renewcommand{\Im}{\operatorname{Im}}
\DeclareMathOperator{\sign}{sign}
\numberwithin{equation}{section}
\theoremstyle{plain}
\newtheorem{alphtheorem}{Theorem}
\newtheorem{arabictheorem}{Theorem}
\newtheorem{theorem}{Theorem}[section]
\newtheorem*{theorem*}{Theorem}
\newtheorem*{conjecture*}{Conjecture}
\newtheorem{corollary}{Corollary}[section]
\newtheorem*{corollary*}{Corollary}
\newtheorem{proposition}{Proposition}[section]
\newtheorem{lemma}[proposition]{Lemma}
\theoremstyle{definition}
\newtheorem{definition}{Definition}[subsection]
\newtheorem{remark}{Remark}[subsection]
\newtheorem*{remark*}{Remark}
\newcommand{\mb}[1]{\mathbb{#1}}
\newcommand{\mc}[1]{\mathcal{#1}}
\newcommand{\mr}[1]{\mathrm{#1}}
\newcommand{\p}{\partial}
\newcommand{\lp}{\left}
\newcommand{\rp}{\right}
\newcommand\supp{\mathrm{supp}}
\newcommand{\dbtilde}[1]{\widetilde{\raisebox{0pt}[0.85\height]{$\widetilde{#1}$}}}
\newlength{\dhatheight}
\newcommand{\uL}{\underline{L}}
\newcommand{\swei}[2]{#1^{[{#2}]}}
\newcommand{\sweie}[2]{#1^{[{#2}],\,\epsilon}}
\newcommand{\cut}{\text{\ding{33}}}
\newcommand{\cutt}{\text{\ding{35}}}
\newcommand{\cutr}{\text{\rotatebox[origin=c]{90}{\ding{34}}}}
\newcommand{\sml}[2]{#1_{ml}^{[{#2}],\,a,\,\omega}}
\newcommand{\smlk}[3]{#1_{({#3}),\,ml}^{[{#2}],\,a,\,\omega}}
\newcommand{\smlambda}[2]{#1_{m\Lambda}^{[{#2}],\,a,\,\omega}}
\newcommand{\smlambdak}[3]{#1_{({#3}),\,m\Lambda}^{[{#2}],\,a,\,\omega}}
\begin{document}
 \title{\LARGE \textbf{Boundedness and decay for the Teukolsky equation \\ on Kerr in the full subextremal range $|a|<M$:\\ physical space analysis}}

\author[1,2]{{\Large Yakov Shlapentokh-Rothman}}
\author[3,4,5]{{\Large  Rita \mbox{Teixeira da Costa}\vspace{0.4cm}}}

\affil[1]{\small  University of Toronto, Department of Mathematics, 40 St. George Street, Toronto, ON, Canada  }
\affil[2]{\small  University of Toronto Mississauga, Department of Mathematical and Computational Sciences, 3359 Mississauga Road, Mississauga, ON, Canada \vspace{0.2cm} }

\affil[3]{\small  Princeton University, Department of Mathematics, Washington Road, Princeton, NJ 08544,
United States }

\affil[4]{\small  Princeton Gravity Initiative, Jadwin Hall, Washington Road, Princeton, NJ 08544,
United States }

\affil[5]{\small 
University of Cambridge, Department of Pure Mathematics and Mathematical Statistics, Wilberforce Road, Cambridge CB3 0WA, United Kingdom}

\date{\today}

\maketitle

\begin{abstract}
This paper concludes the study, initiated by the authors in \cite{SRTdC2020}, of the Teukolsky equation of spin $\pm 1$ and spin $\pm 2$ on Kerr backgrounds in the full subextremal range of parameters $|a| < M$. In our previous \cite{SRTdC2020}, we obtained uniform-in-frequency estimates for the ODEs governing separable solutions to the Teukolsky equation. In this paper, by adapting the techniques developed by the first author with Dafermos and Rodnianski for scalar waves, we show that our ODE estimates can be upgraded to estimates for the Teukolsky PDE. In particular, we conclude the proof that solutions of the Teukolsky equation on subextremal Kerr arising from regular initial data remain bounded and decay in time. 
\end{abstract}

\bigskip\bigskip

\tableofcontents

\section{Introduction}

Kerr black holes are a family of 4-dimensional Lorentzian manifolds, solving the Einstein vacuum equations
\begin{equation}
\mathrm{Ric}(g)=0\,, \label{eq:EVE-intro}
\end{equation}
which is parametrized by $M>0$ and $|a|\leq M$. Establishing the stability of the subextremal $|a|<M$ subfamily to perturbations under \eqref{eq:EVE-intro} has been a longstanding open question which, in recent years, has seen a great deal of progress. Following the linear stability analysis of \cite{Dafermos2016a}, Dafermos, Holzegel, Rodnianski and Taylor proved  in \cite{Dafermos2021} the nonlinear stability of the Schwarzschild, $a=0$, subfamily in full codimension, i.e.\ for perturbations which eventually radiate away all angular momentum but are not assumed to verify any (axi)symmetry conditions. Soon afterwards, Giorgi, Klainerman and Szeftel \cite{Klainerman2019,Giorgi2020b,Klainerman2021,Klainerman2022,Giorgi2022} established the nonlinear stability of Kerr for $|a|\ll M$. Their work uses results of Shen \cite{Shen2022}, nonlinear and geometric insights from Klainerman and Szeftel's earlier proof of nonlinear stability of Schwarzschild for polarized axisymmetric perturbations \cite{Klainerman2017}, and builds on linear insights in \cite{Andersson2015,Dafermos2017, Dafermos2016a, Ma2017a}.  These milestones come after several decades of intense work in the problem, and we refer the reader to the introduction of our previous \cite{SRTdC2020} for a more complete overview of the literature.

A common theme to the two aforementioned works is that they crucially rely on a solid understanding of the Teukolsky equation \cite{Teukolsky1973}
\begin{align}
\begin{split}
\bigg[\Box_g &+\frac{2s}{\rho^2}(r-M)\p_r +\frac{2s}{\rho^2}\lp(\frac{a(r-M)}{\Delta}+i\frac{\cos\theta}{\sin^2\theta}\rp)\p_\phi\\
&+\frac{2s}{\rho^2}\lp(\frac{M(r^2-a^2)}{\Delta}-r -ia\cos\theta\rp)\p_t +\frac{1}{\rho^2}\lp(s-s^2\cot^2\theta\rp)\bigg]\upalpha^{[s]}  =0\,,
\end{split}\label{eq:Teukolsky-intro}
\end{align}
which, for spin  $s=\pm 2$, describes the behavior of the extreme curvature components of perturbations of Kerr black holes under the \textit{linearization} of \eqref{eq:EVE-intro} in the algebraically special frame associated to Kerr spacetimes. In \eqref{eq:Teukolsky-intro}, $\Box_g$ is the covariant scalar wave operator, $\swei{\upalpha}{s}$ is an $s$-spin weighted function (see already Definition~\ref{def:smooth-spin-weighted}) and we have used Boyer--Lindquist coordinates $(t,r,\theta,\phi)\in\mathbb{R}\times(r_+,\infty)\times\mathbb{S}^2$ with the usual conventions 
\begin{align*}
\Delta:=(r-r_+)(r-r_-)\,, \quad r_\pm :=M\pm \sqrt{M^2-a^2}\,,\qquad \rho^2=r^2+a^2\cos^2\theta\,.
\end{align*}
The importance of \eqref{eq:Teukolsky-intro} to the previous works on stability of the $a=0$ and $|a|\ll M$ family is easily understood: while the system of linearized gravity around Kerr contains many coupled equations and many different geometric unknowns, the curvature components described by $\swei{\upalpha}{\pm 2}$ have the particular feature of being gauge-invariant and solving the decoupled equations \eqref{eq:Teukolsky-intro}. Thus, the first step towards establishing stability of Kerr outside the $|a|\ll M$ perturbative regime is to establish that, for $|a|<M$, solutions to \eqref{eq:Teukolsky-intro} arising from suitably regular initial data enjoy boundedness and decay properties which are compatible with the nonlinear stability of the $|a|<M$ Kerr family. In the present paper we complete the proof, initiated in our previous \cite{SRTdC2020}, of precisely this result:
\begin{alphtheorem} \label{thm:bddness-decay-big} Fix $s\in\{0,\pm 1,\pm 2\}$, $M>0$ and $|a|<M$. On the exterior of a subextremal Kerr black hole spacetime with parameters $(a,M)$, general solutions to the Teukolsky equation \eqref{eq:Teukolsky-intro} arising from sufficiently regular initial data on a Cauchy surface 
\begin{itemize}[noitemsep]
\item have uniformly bounded energy fluxes through a suitable spacelike foliation of the black hole exterior, through the future event horizon, $\mc{H}^+$, and through future null infinity, $\mc{I}^+$, in terms of an energy flux of initial data \underline{at the same level of regularity}; 
\item satisfy a suitable version of ``integrated local energy decay'' with loss of derivatives at trapping;
\item and satisfy similar statements for higher-order energies.
\end{itemize}
Moreover, solutions to \eqref{eq:Teukolsky-intro} remain uniformly bounded pointwise and, in fact, decay at a suitable inverse polynomial rate, with estimates which depend only on $M$, $|a|$, $s$ and a suitable, higher order, initial data quantity.
\end{alphtheorem}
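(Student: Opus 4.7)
The broad strategy is to upgrade the frequency-localized ODE estimates of \cite{SRTdC2020} to physical-space statements for \eqref{eq:Teukolsky-intro}, following the scheme devised by Dafermos, Rodnianski and the first author for the scalar wave equation on the full subextremal Kerr family. The first step is to fix a smooth timelike foliation $\Sigma_\tau$ and, for a given late time $\tau$, study the cut-off field $\xi_\tau\cdot \swei{\upalpha}{s}$, where $\xi_\tau$ is a smooth cutoff in Boyer--Lindquist time supported on $[0,\tau]$. This object solves an inhomogeneous Teukolsky equation and is genuinely square-integrable in $t$, so that the Carter separation into modes $e^{-i\omega t}e^{im\phi}\swei{S}{s}_{m\ell}(a\omega,\cos\theta)\swei{R}{s}_{m\ell}(r)$ is rigorously justified via Plancherel in $t$ and the spectral theorem for the spin-weighted spheroidal operator in $\theta$. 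The inhomogeneity, however, depends on $\swei{\upalpha}{s}$ itself in the cutoff regions; the mechanism for converting the frequency-space information into an unconditional PDE estimate is therefore a continuity/bootstrap argument in $\tau$, closed only at the very end.

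Next, because the first and zeroth order terms in \eqref{eq:Teukolsky-intro} obstruct direct multiplier estimates on $\swei{\upalpha}{s}$, one should pass to a Regge--Wheeler-type field $\swei{\psi}{s}$ via the Chandrasekhar transformation; this is precisely the framework in which the ODE estimates of \cite{SRTdC2020} are formulated. For each admissible triple $(\omega,m,\Lambda_{m\ell}(a\omega))$, those estimates yield a uniform weighted $H^1$ control of the radial profile of $\swei{\psi}{s}$ in terms of the inhomogeneity and boundary terms at $r=r_+$ and $r\to\infty$, with the multiplier adapted separately to each of the bounded, low, high, trapped, superradiant, and superradiant-trapped frequency regimes. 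Summing in $\ell$ and integrating in $\omega$ via Plancherel then produces a spacetime integrated local energy decay (ILED) estimate for $\swei{\psi}{s}$, with the degeneration at trapping and the loss of derivatives permitted by Theorem~\ref{thm:bddness-decay-big}, modulo errors supported on the cutoff strips and on the boundary hypersurfaces.

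The third step is to combine the resulting ILED with physical-space ingredients that do not require Fourier localization: the red-shift energy identity of Dafermos--Rodnianski at $\mc{H}^+$, commutation with $\p_t$, $\p_\phi$ and the red-shift vector field to reabsorb the derivative loss at trapping into higher-order data norms, and the $r^p$-hierarchy near $\mc{I}^+$ adapted to the Regge--Wheeler field and its spin weight. Together these produce, once the bootstrap is closed, uniform energy bounds, ILED, and the corresponding higher-order estimates for $\swei{\psi}{s}$; inverting the Chandrasekhar transformation via the transport equations that define it transfers all of these statements back to $\swei{\upalpha}{s}$, taking care that the inversion does not cost derivatives irreversibly. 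A further $r^p$-hierarchy, applied to solutions commuted with the Killing fields, then gives polynomial decay of a suitable energy flux through $\Sigma_\tau$ by the standard dyadic mean-value argument; pointwise boundedness and decay follow by Sobolev embedding on $\Sigma_\tau$.

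The principal obstacle, and the reason the full subextremal range has been historically inaccessible, is the coexistence of superradiance and trapping for $|a|<M$: there exists a set of frequencies which are simultaneously superradiant and trapped, and for which the multiplier currents of \cite{SRTdC2020} close only through a delicate cancellation between the negative superradiant boundary contribution at $\mc{H}^+$ and the trapping-degenerate bulk current. Preserving these cancellations through the cutoff procedure, the Chandrasekhar transformation, and the summation-Plancherel step will be the most technically demanding aspect of the argument. A secondary but still nontrivial difficulty is absorbing the frequency-localized cutoff errors back into the main estimate, which can only be done after the continuity argument has produced a first rough energy bound; similarly, controlling the spin-weighted, non-self-adjoint lower-order terms of \eqref{eq:Teukolsky-intro} uniformly from $\mc{H}^+$ out to $\mc{I}^+$ requires carrying the spin $s$ explicitly through every step of the physical-space hierarchy.
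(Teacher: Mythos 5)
Your scheme is correct in broad outline (cut off in hyperboloidal time, pass to the transformed system, apply the frequency-localized estimates of \cite{SRTdC2020}, sum via Plancherel, and supplement with red-shift, $r^p$ and commutation estimates), and you rightly identify that the Chandrasekhar transformation and spin weights must be carried uniformly from $\mc{H}^+$ out to $\mc{I}^+$. However, there are two genuine gaps, one of which is fatal.

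\textbf{The continuity argument must be in the Kerr parameter $a$, not in the time $\tau$.} You propose to close the estimates via a bootstrap in $\tau$, absorbing the cutoff errors once a ``first rough energy bound'' is obtained. For general $|a|<M$ there is no mechanism to produce such a rough bound: the best that energy estimates yield \emph{a priori} is of the schematic form $E(\tau) \lesssim E(0) + |a|\,I^{\rm deg}(0,\tau)$, and for $|a|$ not small the bulk term on the right cannot be absorbed by the cutoff inequality $I^{\rm deg}(0,\tau) \lesssim E(0) + E(\tau)$ — so the bootstrap in $\tau$ simply does not close. The correct logic, and the one the paper uses, is to (i) \emph{define} the class of ``future integrable'' solutions — those whose transformed variables are a priori $L^2$ in $\tau>0$; (ii) observe that for these solutions one may take $\tau\to\infty$ in the cutoff estimate along a dyadic sequence, dropping $E(\tau)$ and obtaining ILED \emph{before} energy boundedness (the order is reversed relative to $a=0$); (iii) upgrade to non-degenerate energy boundedness via a careful frequency-localized decomposition of $\Phi^{[s]}$ indexed by the value of $r_{\rm trap}$ and a judicious choice of initial slice $\tau_0\to -\infty$; and (iv) finally show future integrability is automatic by an open-and-closed argument \emph{in $a$} for each fixed azimuthal number $m$. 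Closedness follows from the already-established energy boundedness and continuity of the solution map in $a$. Openness requires the further observation, implicit in the frequency estimates of \cite{SRTdC2020}, that at fixed $m$ trapping costs at most half a derivative, which permits a derivative-gaining estimate bounding the flux by a \emph{strictly lower-order} bulk term (Lemma~\ref{lemma:derivative-gain}); this term, in turn, is small for $a$ close to some $\mathring a\in\mathbb{A}$ by interpolating between the transformed systems at parameters $a$ and $\mathring a$.

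\textbf{The cutoff must act on the top-level transformed variable, not on $\alpha^{[s]}$.} Cutting off $\alpha^{[s]}$ directly (the ``forward cutoff'' of Lemma~\ref{lemma:teukolsky-cutoff-inhom-k}) and then pushing forward through the Chandrasekhar transformation creates inhomogeneities whose control requires, for $s>0$, $r$-weights on the initial data consistent with peeling — strictly stronger than what Theorem~\ref{thm:bddness-decay-big} assumes. The paper instead cuts off $\Phi^{[s]}=\upphi^{[s]}_{|s|}$ (or more generally at level $k_0$) and \emph{backward-integrates} the transport relations along the ingoing/outgoing null direction to define the lower-level cutoff variables $\upphi^{[s]}_{k,\text{\ding{35}}}$ for $k<k_0$ (Lemma~\ref{lemma:weird-inverses}). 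This nonlocal construction is essential to keep the inhomogeneity errors at the level of regularity and $r$-weight claimed in the theorem; see Remark~\ref{rmk:comparison-cutoff-types}. Relatedly, for $s>0$ the boundary terms at $r=r_+$ in the bounded frequency regime cannot be controlled as directly as you suggest: the paper exploits the radial Teukolsky–Starobinsky identities to transfer them to the ($-s$)-spin system, where the decay at $\mc{H}^+$ is better (Section~\ref{sec:bdry-terms}). Without these two structural choices, the errors generated by the cutoff procedure cannot be absorbed and the argument as you sketch it would not close.
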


In the above theorem, $s=0,\pm 1$ represent, respectively, linear scalar and electromagnetic waves on a fixed Kerr background. The statements are novel for $s=\pm 1,\pm 2$, but not in the case of $s=0$ which had been previously understood by the first author with Dafermos and Rodnianski in \cite{Dafermos2016b}, see also \cite{Holzegel2023}. We note also the very recent work \cite{Millet2023} by Millet on the Teukolsky equation in the full subextremal range which establishes detailed asymptotics for solutions but does not provide quantitative energy boundedness statements. Theorem~\ref{thm:bddness-decay-big} should not be seen as a statement of full linear stability for the subextremal Kerr family, as such a statement would require additionally establishing energy flux boundedness  \textit{without derivative loss}, and decay, for the many remaining geometric quantities in the linearization  of \eqref{eq:EVE-intro} (after subtracting the Kerr values) in an appropriate gauge. We direct the reader to the thesis \cite{Benomio2020} and recent preprint  \cite{Benomio2022} by Benomio for a proposal for such a gauge, and to \cite{Benomio2022a} for a proof of concept in a linear stability problem. A method for establishing decay for the system of linearized gravity around Kerr  from the estimates given in Theorem~\ref{thm:bddness-decay-big} has also been provided by Andersson, B\"ackdahl, Blue and Ma \cite{Andersson2019,Andersson2022}; see also \cite{Andersson2022b,Hafner2019} for a different approach. In the nonlinear setting, in the aforementioned works, Giorgi, Klainerman and Szeftel also proposed a method for analyzing the stability of the Kerr family, see \cite{Giorgi2020b} and references therein, which they implemented successfully for $|a|\ll M$. While many different gauges might in principle allow one to design a scheme to control the gauge-\textit{dependent} quantities in linearized gravity, closing the estimates requires, as an input, having precise control of the gauge-\textit{invariant} Teukolsky equations as given in Theorem~\ref{thm:bddness-decay-big}. We highlight in particular that  not losing derivatives in our energy boundedness statement is essential for  eventually proving an orbital linear stability statement for the subextremal Kerr black hole family.\footnote{We note that in view of the Nash--Moser inverse function theorem, it is, in principle, possible to establish certain nonlinear stability results for the Kerr spacetime even if the linear theory loses derivatives. For this type of argument to work, it is crucial to obtain the corresponding linear estimates on all spacetimes sufficiently close to the Kerr spacetime. We direct the interested readers to the works~\cite{Hintz2016,Hintz2020} where such a strategy is successfully carried out in the context of slowly rotating Kerr-de Sitter spacetimes and for Minkowski space.}

In analyzing \eqref{eq:Teukolsky-intro}, it is convenient to pass to an auxiliary system of equations introduced by Dafermos, Holzegel and Rodnianski in \cite{Dafermos2016a,Dafermos2017} in the $|s|=2$ case, after a frequency space transformation of Chandrasekhar \cite{Chandrasekhar1975a}. For $s\in\mathbb{Z}$ and $0\leq k\leq |s|$, let $\swei{\upphi}{s}_{(k)}$ be obtained by taking $k$ appropriately $r$-weighted null derivatives of $\swei{\upalpha}{s}$, i.e.\
\begin{align}
\swei{\upphi}{s}_{0}=j_{s,|s|}^{-1}(r)\swei{\upalpha}{s}\,;\qquad \swei{\upphi}{s}_{k+1} = j_{s,|s|-k}^{-1}(r)\mc{L} \swei{\upphi}{s}_{k}\,,\,\,\, k=0,...,|s|-1\,, \label{eq:def-upppsi-k-intro-basic}
\end{align}
where $\mc{L}$ denotes a null vector field and $j_{s,k}$ are appropriate functions of $r$. Then, for $k<|s|$, $\swei{\upphi}{s}_{k}$ satisfies
\begin{equation}
\swei{\mathfrak{R}}{s}_k \swei{\upphi}{s}_k = W(r)\swei{\upphi}{s}_{k+1}+\sum_{j=0}^{k-1}\swei{\mathfrak{J}}{s}_{k,j}\lp[\swei{\upphi}{s}_{j}, \p_\phi\swei{\upphi}{s}_{j} \rp]+\swei{\mathfrak H}{s}_k\,,\label{eq:transformed-equation-bottom-intro-basic}
\end{equation} 
where the last term is absent if $k=0$, and $\swei{\Phi}{s}:=\swei{\upphi}{s}_{|s|}$ satisfies
\begin{equation}
\swei{\mathfrak{R}}{s} \swei{\Phi}{s} = a \sum_{k=0}^{|s|-1}\swei{\mathfrak{J}}{s}_{k}\lp[\swei{\upphi}{s}_{k}, \p_\phi\swei{\upphi}{s}_{k} \rp]+\swei{\mathfrak H}{s}\,,\label{eq:transformed-equation-top-intro-basic}
\end{equation}
where $W(r)$ is a function of $r$, $\swei{\mathfrak{J}}{s}_{k}$ and $\swei{\mathfrak{J}}{s}_{k,j}$ are linear operators depending only on the coordinate $r$, $\swei{\mathfrak{R}}{s}$ can be interpreted as a wave operator for spin-weighted functions and $\swei{\mathfrak{R}}{s}_k$ differs from it only through a  $\p_\phi$ term and a  zeroth order term with good $r$ decay, and we have added inhomogeneities $\swei{\mathfrak H}{s}_k$, $\swei{\mathfrak H}{s}:=\swei{\mathfrak H}{s}_{|s|}$ for generality. In the $a=0$ and perturbative $|a|\ll M$ settings, the direct analysis of the Teukolsky equation~\eqref{eq:Teukolsky-intro} is commonly replaced by the analysis of the mixed hyperbolic-transport system comprising the single wave equation \eqref{eq:transformed-equation-top-intro-basic} and the $|s|$ transport equations \eqref{eq:def-upppsi-k-intro-basic}. In broad strokes, one applies scalar wave techniques to  \eqref{eq:transformed-equation-top-intro-basic} to obtain estimates for $\swei{\Phi}{s}$ up to small $O(a)$ coupling errors to $\swei{\upphi}{s}_{k}$, $k<|s|$; the latter are treated with transport estimates on \eqref{eq:def-upppsi-k-intro-basic}. Once estimates for the the top level variable $\swei{\Phi}{s}$ have been closed, estimates for $\swei{\alpha}{s}$ are obtained by integrating the transport equations \eqref{eq:def-upppsi-k-intro-basic}. Thus, in the $|a|\ll M$ case, one can completely bypass a direct analysis of \eqref{eq:Teukolsky-intro} or \eqref{eq:transformed-equation-bottom-intro-basic}. In the general subextremal $|a|<M$ case, as the coupling on the right hand side of \eqref{eq:transformed-equation-top-intro-basic} is no longer weak, it is not clear how to avoid the latter equations. Instead, our approach is to replace---or, rather, supplement---\eqref{eq:Teukolsky-intro} with the mixed hyperbolic-transport system comprising \textit{all} the $|s|$ wave equations \eqref{eq:transformed-equation-bottom-intro-basic} and \eqref{eq:transformed-equation-top-intro-basic} together with the $|s|$ transport equations \eqref{eq:def-upppsi-k-intro-basic}. The upshot is that we naturally do not lose control over any derivatives of $\swei{\upalpha}{s}$, as one expects to with transport estimates: in particular, the first result of Theorem~\ref{thm:bddness-decay-big} can be more rigorously stated as control over a (weighted) full $H^{|s|+1}$ norm of $\swei{\upalpha}{s}$ on the hypersurfaces of a hyperboloidal foliation in terms of the same (weighted) full $H^{|s|+1}$ norm of Cauchy data for $\swei{\upalpha}{s}$, see already Theorem~\ref{thm:main}.

In our analysis of the wave equations \eqref{eq:transformed-equation-bottom-intro-basic}  and \eqref{eq:transformed-equation-top-intro-basic}, we combine frequency and physical space methods. Our previous work on the subject considered separable solutions of \eqref{eq:transformed-equation-bottom-intro-basic}--\eqref{eq:transformed-equation-top-intro-basic}: we took
\begin{align*}
    \mathfrak{H}^{[s],\,\omega}_{k, \, ml}(t,r,\theta,\phi)&= e^{-i\omega t}e^{im\phi}S_{ml}^{[s],\,a\omega}(\theta)\smlk{\mathfrak G}{s}{k}(r)\,, \numberthis\label{eq:mode-ansatz-H-k}\\
\upphi^{[s]\,,\omega}_{k, \, ml}(t,r,\theta,\phi)&=e^{-i\omega t}e^{im\phi}S_{ml}^{[s],\,a\omega}(\theta)\smlk{\uppsi}{s}{k}(r)\,, \numberthis\label{eq:mode-ansatz-upphi-k}
\end{align*}
with $\sml{\mathfrak G}{s}:=\smlk{\mathfrak G}{s}{|s|}$, $\sml{\Psi}{s}:=\smlk{\uppsi}{s}{|s|}$ and $\sml{\upalpha}{s}:=j_{s,|s|}(r)\smlk{\uppsi}{s}{0}$. Here, for each $l$, $e^{im\phi}S_{ml}^{[s],\,a\omega}$ are the so-called spin-weighted spheroidal harmonics.  The functions $\smlk{\uppsi}{s}{k}(r)$ solve linear second order ODEs, the radial ODEs
\begin{align*}
\begin{split}
\lp(\smlk{\uppsi}{s}{k}\rp)''+(\omega^2-\swei{\mc V}{s}_k)\smlk{\uppsi}{s}{k}&= W(r)\smlk{\uppsi}{s}{k+1}+\sum_{j=0}^{k-1}\swei{\mathfrak{J}}{s}_{k,j}\lp[\smlk{\uppsi}{s}{j}, im\smlk{\uppsi}{s}{j} \rp]+\smlk{\mathfrak{G}}{s}{k}\,,\\
\lp(\sml{\Psi}{s}\rp)''+(\omega^2-\swei{\mc V}{s})\sml{\Psi}{s}&=a\sum_{k=0}^{|s|-1}\swei{\mathfrak{J}}{s}_{k,j}\lp[\smlk{\uppsi}{s}{k}, im\smlk{\uppsi}{s}{k} \rp]+\sml{\mathfrak{G}}{s}\,,
\end{split} \numberthis\label{eq:radial-ODEs-intro}
\end{align*}
where $'$ denotes a derivative with respect to a rescaled radial coordinate, $\swei{\mc V}{s}_k$, $\swei{\mc V}{s}$ are short range potentials. They are subject to so-called \textit{outgoing boundary conditions}, which are conditions at $r=r_+$ and $r=\infty$ which ensure that \eqref{eq:mode-ansatz-upphi-k} are regular at the future event horizon and have finite energy on a hyperboloidal spacelike hypersurface. 

In order for us to discuss the inhomogeneous \eqref{eq:radial-ODEs-intro}, it is useful to begin by considering their homogeneous versions, where $\smlk{\mathfrak G}{s}{k}\equiv 0$. For $\omega\neq 0$, denote by $\uppsi^{[s]}_{(0),\, ml,\, \mc H^+}$ and $\uppsi^{[s]}_{(0),\, ml,\, \mc I^+}$ normalized solutions of the \textit{homogeneous} radial ODE \eqref{eq:radial-ODEs-intro} for which the corresponding solutions to \eqref{eq:Teukolsky-intro} are, respectively, regular at the future event horizon and regular at the future null infinity. Building on the seminal result of Whiting \cite{Whiting1989}, it was shown in \cite{Shlapentokh-Rothman2015,Andersson2017} that these solutions are linearly independent. Thus, their Wronskian is bounded away from zero in any compact range of frequencies excluding $\omega\neq 0$. The two authors have shown that this bound can be made \textit{explicit} in terms of the black hole and frequency parameters:
\begin{arabictheorem}[\cite{Shlapentokh-Rothman2015,TeixeiradaCosta2019}] Let $\mc{A}$ be any set of real frequencies  with $\sup_{\mathcal{A}} \lp(|\omega|+|\omega|^{-1}+|m|+|\Lambda|\rp)<\infty$.
Then, for any $|a|\leq a_0<M$, the Wronskian of $\uppsi^{[s]}_{(0),\, ml,\, \mc H^+}$ and $\uppsi^{[s]}_{(0),\, ml,\, \mc I^+}$ can be bounded away from zero with a constant which depends \textbf{explicitly} on $s$, $M$, $a_0$ and $\mc A$. 
\label{thm:mode-stability-intro} 
\end{arabictheorem}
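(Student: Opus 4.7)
The plan is to combine a contradiction/compactness argument with a quantitative integration-by-parts estimate that tracks all constants explicitly in terms of the parameters $s$, $M$, $a_0$, and $\mathcal{A}$.

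First I would normalize: write $W[\uppsi_{\mc H^+}, \uppsi_{\mc I^+}]$ for the Wronskian, and observe that because the coefficients of the radial ODE \eqref{eq:radial-ODEs-intro} (with zero inhomogeneity) are continuous functions of $(\omega, m, \Lambda, a)$, and because $\uppsi^{[s]}_{(0), ml, \mc H^+}$ and $\uppsi^{[s]}_{(0), ml, \mc I^+}$ can be defined through convergent series (or Jost-type integral equations) near $r = r_+$ and $r = \infty$ respectively, these two normalized solutions also depend continuously on the frequency parameters in the appropriate sense. Under the hypotheses on $\mathcal{A}$, the frequency parameter space $(\omega, m, \Lambda, a) \in \mathcal{A} \times [-a_0, a_0]$ is precompact (note that $|m| \le |\Lambda| + O_s(1)$ from the spin-weighted spheroidal eigenvalue bounds, and $l$ takes only discretely many values once $|\Lambda|$ is bounded, so the $l$-index can be handled separately). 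If the claim failed, one could extract a sequence of parameters along which the Wronskian tends to $0$, and pass to a limit to produce a real-frequency mode solution, that is, a non-trivial solution of the homogeneous \eqref{eq:radial-ODEs-intro} satisfying both outgoing boundary conditions at the horizon and infinity.

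Second, I would rule out such real-frequency modes. The clean way is to adapt Whiting's transformation to the spin-$s$ setting: a frequency-dependent conformal/integral transformation carries the radial equation into one whose potential has a definite sign structure, so that applying a physical-space current derived from, e.g., the $T$ and $Y$ multipliers tailored to the transformed equation produces a coercive identity. In the non-superradiant regime $\omega(\omega - m\omega_+) > 0$, a Killing-energy multiplier already suffices, together with a carefully chosen bulk current, to force the $L^2$ norm of the solution to be controlled by its outgoing fluxes; since the outgoing fluxes vanish for a true mode, the solution must vanish. In the superradiant regime, the Whiting-transformed equation has real, strictly negative potential where the Killing identity loses coercivity, which is precisely where the algebraic miracle of Whiting's transformation pays off.

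Third, to make the bound quantitative rather than qualitative, I would avoid the limit entirely by performing the current estimates directly at fixed frequency. Writing any candidate solution satisfying both outgoing conditions as $\smlk{\uppsi}{s}{0}$, the identities constructed above yield a coercive lower bound of the form
\begin{equation*}
  c(s, M, a_0, \mathcal{A}) \, \|\smlk{\uppsi}{s}{0}\|_{\text{bulk}}^{2} \;\leq\; \Re \bigl( \text{boundary terms at } r_+ \text{ and } \infty \bigr),
\end{equation*}
where the right-hand side is a bilinear expression in the two fundamental solutions whose modulus is controlled by $|W[\uppsi_{\mc H^+}, \uppsi_{\mc I^+}]|$ times norms of the normalized solutions, and the left-hand side is bounded below by the same norms after a suitable choice of multiplier weights. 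Rearranging gives the desired explicit lower bound on $|W|$. All constants entering the multipliers (cutoff radii, frequency-dependent weights, Whiting-transform coefficients) are tracked explicitly in terms of $s$, $M$, $a_0$, and $\sup_{\mathcal{A}}(|\omega| + |\omega|^{-1} + |m| + |\Lambda|)$.

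The main obstacle is the superradiant frequency window, in particular the superradiant threshold $\omega = m\omega_+$ and, once $|a|$ is not small, the failure of any globally timelike Killing field. Here the naïve $T$-multiplier has the wrong sign near the horizon, and one must rely on the specific algebraic structure exposed by Whiting's transformation (in the spin-$\pm 1, \pm 2$ case, coupled with the Teukolsky--Starobinsky identities to handle the transport terms in \eqref{eq:def-upppsi-k-intro-basic}) to produce a positive-definite current. Tracking the constants through the Whiting transformation uniformly as $|a| \nearrow a_0 < M$, and simultaneously as one approaches the thresholds $\omega \to 0$ and $\omega \to m\omega_+$ allowed by the hypothesis on $\mathcal{A}$, is the delicate part of the argument.
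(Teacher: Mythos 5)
The paper itself does not prove Theorem~\ref{thm:mode-stability-intro}: it attributes it to \cite{Shlapentokh-Rothman2015,TeixeiradaCosta2019} (and restates the quantitative form as Theorem~\ref{thm:quantitative-mode-stab-in-text}, quoted from \cite{TdC-thesis}). So the comparison must be against those cited works, which indeed proceed via a quantitative Whiting transformation, the skeleton of your Steps~2--3.

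Your Steps~1--2 are a reasonable high-level summary: a compactness argument reduces the qualitative statement to the absence of real-frequency modes, and Whiting's integral transform rules such modes out for each fixed frequency even in the superradiant window where Killing multipliers fail. But your Step~3 — the only part that actually delivers the \emph{explicit} bound — is set up in a way that cannot work. You propose to ``write any candidate solution satisfying both outgoing conditions as $\smlk{\uppsi}{s}{0}$'' and derive a coercive inequality whose right-hand side you claim is controlled by $|W[\uppsi_{\mc H^+},\uppsi_{\mc I^+}]|$. But when the Wronskian is nonzero there is \emph{no} nontrivial solution outgoing at both ends; the object you are trying to estimate is identically zero, and the bilinear boundary-term expression you are proposing to bound in terms of $|W|$ never appears. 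The argument is vacuous rather than quantitative, so it cannot rearrange to give a lower bound on $|W|$.

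The correct quantitative mechanism, as carried out in \cite{Shlapentokh-Rothman2015} for $s=0$ and \cite{TeixeiradaCosta2019,TdC-thesis} for general spin, works with a \emph{single} fundamental solution — the horizon-outgoing one $\uppsi^{[s]}_{(0),\,ml,\,\mc H^+}$ — whose large-$r$ behavior decomposes into an outgoing piece and an incoming piece whose coefficient is precisely (a normalization of) the Wronskian. One then applies Whiting's transform to that one solution, tracks through the integral transform how the asymptotic coefficients at $r_+$ and $r=\infty$ change (this is the technically delicate part, requiring explicit asymptotic expansions of the transformed solution at both ends, uniformly in the frequency parameters and in $|a|\leq a_0$), and writes a microlocal energy identity for the transformed ODE — whose real potential makes the identity coercive. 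That identity relates the incoming flux at infinity (proportional to $|W|^2$) to a positive bulk plus horizon terms, producing the desired explicit lower bound. Your description also misstates what the Whiting transform achieves: it does not produce a potential of ``definite sign'' amenable to simple $T$ and $Y$ multipliers, but rather removes the complex/superradiant part of the potential so that a Wronskian-type conservation law becomes coercive. The spin-$\pm1,\pm2$ refinements in \cite{TeixeiradaCosta2019} (exploiting the Teukolsky--Starobinsky identities) are additional, but the central gap in your proposal is the missing asymptotic analysis of the transform of $\uppsi_{\mc H^+}$ alone.
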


Theorem~\ref{thm:mode-stability-intro} can be directly applied to the quantitative analysis of the inhomogenous \eqref{eq:radial-ODEs-intro}, as it allows us to define a Green's function for the latter which depends explicitly on the frequency and black hole parameters, see already Section~\ref{sec:bulk-terms}.

In our previous joint manuscript \cite{SRTdC2020}, we considered the complement of $\mc A$. Our main result was
\begin{arabictheorem}[\cite{SRTdC2020}] There exists a choice of ${\mc A}$ such that, for any $|a|\leq a_0<M$, $\smlk{\uppsi}{s}{k}$ verify $L^2_{r\in (r_+,\infty)}$ quantitative estimates with loss of derivatives at trapping, i.e.\ at $r=r_{\rm trap}(\omega,m,l)$, which are  \textbf{uniform} in $(\omega,m,l)\in\mc A^c$ and depend explicitly on $|s|$, $M$,  $a_0$ and the inhomogeneities.
\label{thm:high-low-freqs-intro}
\end{arabictheorem}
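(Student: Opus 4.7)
The plan is to establish the uniform-in-frequency $L^2$ estimates for the system \eqref{eq:radial-ODEs-intro} by frequency-localised multiplier methods: for each subregime of $\mc{A}^c$, construct a current whose bulk coercively controls a (possibly trapping-degenerate) weighted $L^2$ norm of $\smlk{\uppsi}{s}{k}$ and its derivative, while the boundary contributions at $r=r_+$ and $r=\infty$ are controlled by the energies selected by the outgoing boundary conditions. The set $\mc{A}$ should be chosen large enough that on $\mc{A}^c$ a clean partition by the qualitative behaviour of the effective potential $\omega^2-\swei{\mc V}{s}$ is available: a large-$\omega$, bounded-angular regime; a ``semiclassical'' regime where $\omega$, $m$ and $\sqrt{\Lambda}$ are comparable and large, further split into trapped, non-trapped, superradiant, and non-superradiant subregimes; and a large-$\Lambda$, small-$\omega$ regime. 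The necessity to exclude a bounded frequency range stems from the fact that when no term in the potential dominates there is no natural structural choice of multiplier, and one must instead use the mode stability bound of Theorem~\ref{thm:mode-stability-intro}.

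Second, on the top-level ODE for $\sml{\Psi}{s}$, I would combine a Morawetz-type current $f(r)\Re\bigl(\sml{\Psi}{s}\overline{(\sml{\Psi}{s})'}\bigr)$ together with a zeroth-order correction (with $f$ vanishing only at $r=r_{\rm trap}(\omega,m,l)$ in the trapped subregimes, to produce the advertised loss of derivatives and nothing else), a redshift current at $r=r_+$, and an $r^p$-type multiplier at infinity. Superradiance is the main complication: in the trapped superradiant regime one exploits that the trapping radius is separated from $r_+$, so the Morawetz bulk can be made coercive past the ergoregion and superradiant errors can be absorbed; in the non-trapped superradiant regime, one tunes $f$ to remain coercive throughout and uses the Killing energy associated with the Hawking--Reall vector field for the boundary contributions. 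Uniformity in $|a|\leq a_0<M$ requires verifying that the quantitative size of each multiplier's coercivity is bounded below by a constant depending only on $a_0$.

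Third, since the top-level equation couples to $\smlk{\uppsi}{s}{k}$ with $k<|s|$ through $a$-terms that are \emph{not} perturbative in the subextremal range, and each lower-level equation couples to $\smlk{\uppsi}{s}{k+1}$, the estimates must be closed for the full system simultaneously rather than iteratively. Concretely, I would apply an analogous multiplier on each level, sum the resulting identities with carefully-tuned relative weights (decreasing in $k$), and use the explicit $r$-decay of $W(r)$ and of the coefficients in $\swei{\mathfrak J}{s}_{k,j}$ to absorb the coupling into a single coercive bulk, supplemented by transport-type identities derived from \eqref{eq:def-upppsi-k-intro-basic}.

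The hard part will be this simultaneous control: each $\smlk{\uppsi}{s}{k}$ in principle sees its own trapping structure, so the $k$-dependent multipliers must be chosen so that, after summation and absorption of the coupling, there is a single loss of derivatives at one common trapping locus and no other. Matching the various multipliers continuously across the boundaries of the subregimes of $\mc A^c$ while preserving positivity of the total bulk, and extracting constants explicit in $a_0$ throughout, is the remaining technical burden. Theorem~\ref{thm:mode-stability-intro} plays no direct role in this analysis but will be needed to glue these high-frequency estimates with the complementary bounded-frequency piece $\mc A$ in the final physical-space synthesis.
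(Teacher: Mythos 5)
Your overall architecture is the right one and matches [SRTdC2020]: a frequency partition of $\mc A^c$ into a high-frequency region (itself subdivided by the shape of $\omega^2-\mc V$ into angular-dominated, time-dominated, semiclassical trapped/non-trapped, and superradiant/non-superradiant subranges), Morawetz-type currents $f\Re(\Psi\overline{\Psi'})$ with $f$ vanishing only at $r_{\rm trap}$, redshift and $r^p$ currents for the boundary, the Hawking--Reall field for superradiance, and a requirement of $a_0$-explicit coercivity constants. You also correctly identify that the bounded piece $\mc A$ must be carved out and handled separately via Theorem~\ref{thm:mode-stability-intro}.

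However, you misdiagnose the nature of the coupling problem for the levels $k<|s|$, and this leads you to propose a strategy the paper does not (and does not need to) follow. You suggest that each $\smlk{\uppsi}{s}{k}$ ``sees its own trapping structure'' and that the $k$-dependent multipliers must be tuned so that a single trapping loss survives after summation. In fact, the key structural insight of [SRTdC2020]---flagged in this paper's introduction and visible in the precise statement \eqref{eq:part1-ILED-withoutTS-bottom} versus \eqref{eq:part1-ILED-withoutTS-top}---is an \emph{ellipticity property at trapping for $k<|s|$}: the transport relation $\smlk{\uppsi}{s}{k+1}=w^{-1}\mc L\,\smlk{\uppsi}{s}{k}$ algebraically expresses a radial derivative of $\smlk{\uppsi}{s}{k}$ in terms of the next-level variable. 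Combined with the radial ODE, this yields a non-degenerate elliptic estimate for $\smlk{\uppsi}{s}{k}$, $k<|s|$, \emph{at} the trapping radius. So the lower levels suffer no derivative loss whatsoever; only the top level $k=|s|$ degenerates at $r_{\rm trap}$. The resulting closure is therefore much simpler than you anticipate: one proves the degenerate estimate for $\sml{\Psi}{s}$ first, then the non-degenerate lower-level estimates \eqref{eq:part1-ILED-withoutTS-bottom} by a backward induction in $k$, with coupling errors that are genuinely absorbable precisely because the lower levels do not degenerate.

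The second gap is the omission of the Teukolsky--Starobinsky identities, which you do not mention but which the paper singles out as a key new ingredient relative to $s=0$. For $s>0$ the naive Killing-energy boundary flux at $r=r_+$ fails to have the correct sign in the bounded/low $\omega$ range (and symmetrically for $s<0$ at $r=\infty$), so the standard $s=0$ argument for controlling boundary terms breaks down. The fix in [SRTdC2020]---mirrored here in the $E_W\chi_3(\omega-m\upomega_+)$ term of \eqref{eq:ODE-estimates-inhomogeneity-terms} and in the role of $\mathfrak C_s$ and $\mathfrak C_s^{(i)}$ in Proposition~\ref{prop:TS-radial-constant-identities}---is to bring in a current built from the $2|s|$-fold $\uL$-derivative operator that intertwines the spin $+s$ and $-s$ radial equations. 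This replaces the ill-signed boundary flux by the good-signed flux of the opposite-spin variable, at the cost of the explicit TS constant. Without this ingredient your proposal cannot close the boundary estimates for both signs of $s$ uniformly in $|a|\leq a_0$, which is precisely where the spin $s\neq 0$ case departs from [Dafermos2016a].
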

Theorem~\ref{thm:high-low-freqs-intro} was obtained in \cite{SRTdC2020} by a careful analysis of the structure of the radial ODEs \eqref{eq:radial-ODEs-intro}. Our proof combined the methods developed by the first author with Dafermos and Rodnianski for the case $s=0$  in \cite{Dafermos2016a} with several new insights for the difficulties associated with $s\neq 0$, most notably associated with the Teukolsky--Starobinsky identities and an ellipticity property, for $k<|s|$, at trapping. We direct the reader to the introduction of \cite{SRTdC2020} for a detailed explanation. Let us note, however, that in addition to the derivative loss problem that occurs for very large frequencies at $r=r_{\rm trap}$ mentioned above, an additional major difficulty that of both Theorems~\ref{thm:mode-stability-intro} and \ref{thm:high-low-freqs-intro} must overcome is the existence of superradiant frequencies
\begin{align*}
\omega\lp(\omega-\frac{am}{2M(M+\sqrt{M^2-a^2})}\rp)<0
\end{align*}
for which the conservation laws for the radial ODEs~\eqref{eq:radial-ODEs-intro} induced by the Killing vector fields cease to be coercive. This is the frequency space manifestation of the physical space \textit{ergoregion}, a region of size $O(|a|)$ in the exterior of rotating Kerr black holes where the stationary Killing field becomes spacelike.

Theorems~\ref{thm:mode-stability-intro} and \ref{thm:high-low-freqs-intro} have several possible applications, as outlined in our previous \cite{SRTdC2020}. In the present paper we have focused on the application to obtaining boundedness and decay in time in the forward evolution problem, i.e.\ the application to establishing Theorem~\ref{thm:bddness-decay-big}. To understand the role of Theorems~\ref{thm:mode-stability-intro} and \ref{thm:high-low-freqs-intro} in achieving that goal, recall that if a general solution to the transformed system \eqref{eq:def-upppsi-k-intro-basic}--\eqref{eq:transformed-equation-top-intro-basic} arising from suitably regular data could be expressed as an infinite superposition of separable solutions,
\begin{align*}
\swei{\upphi}{s}_k(t,r,\theta,\phi)=\int\sum_{ml}\uppsi^{[s],\,\omega}_{k, \, ml}(t,r,\theta,\phi) d\omega\,, \numberthis \label{eq:infinite-superposition}
\end{align*} 
and the same is true for the inhomogeneities 
\begin{align*}
\swei{\mathfrak H}{s}_k(t,r,\theta,\phi)=\int\sum_{ml}\mathfrak{H}^{[s],\,\omega}_{k, \, ml}(t,r,\theta,\phi) d\omega\,,\numberthis \label{eq:infinite-superposition-2}
\end{align*}
then Theorems \ref{thm:mode-stability-intro} and \ref{thm:high-low-freqs-intro} would immediately imply, by Plancherel's theorem and the completeness of $e^{im\phi}S_{ml}^{[s],\,a\omega}(\theta)$ as a basis for spin-weighted functions on the sphere, an integrated local energy decay estimate, as in Theorem~\ref{thm:bddness-decay-big}, but in terms of an infinite superposition of contributions from the inhomogeneities $\smlk{\mathfrak G}{s}{k}$ rather than initial data. A more involved argument would show that an analogous statement holds for the energy fluxes. In fact, the more precise versions of Theorem~\ref{thm:high-low-freqs-intro} in our previous \cite{SRTdC2020} in fact implies that this contribution can be rewritten in terms of (derivatives of) the physical space inhomogeneities $\swei{\mathfrak H}{s}_k$.  

The goal of the present paper is to construct appropriate inhomogeneous versions of the system \eqref{eq:def-upppsi-k-intro-basic}--\eqref{eq:transformed-equation-top-intro-basic} so that not only \eqref{eq:infinite-superposition} and \eqref{eq:infinite-superposition-2} hold but also the aforementioned contributions from $\swei{\mathfrak H}{s}_k$ can be controlled in terms of a suitable norm on the Cauchy data. Thus, Theorem~\ref{thm:high-low-freqs-intro}, the main result from our previous \cite{SRTdC2020}, is the heart of the proof of Theorem~\ref{thm:bddness-decay-big} in the present paper. While Theorem~\ref{thm:high-low-freqs-intro} required some genuinely new insights, the methods in this work to pass from these frequency space estimates to physical space estimates follow from those of \cite{Dafermos2016a} after minor, if technically challenging, adaptations which we now describe.

To start, it is useful to identify the obstacles to the representation \eqref{eq:infinite-superposition}: \textit{a priori}, solutions to the homogeneous versions of \eqref{eq:transformed-equation-bottom-intro-basic} and \eqref{eq:transformed-equation-top-intro-basic} which arise from smooth Cauchy data do not possess enough integrability in time to admit Fourier transforms, neither to the past of the initial Cauchy hypersurface nor, unless $a=0$, due to the existence of a ergoregion where the conserved energy ceases to be coercive, to its the future. This is an issue that we can easily correct: by applying suitable time cutoffs to $\swei{\upphi}{s}_k$, we can force them to be $L^2$ in time so that \eqref{eq:infinite-superposition} holds; the cutoffs in time produce {inhomogeneities} $\swei{\mathfrak{H}}{s}_k$ in the transformed wave equations~\eqref{eq:transformed-equation-bottom-intro-basic}--\eqref{eq:transformed-equation-top-intro-basic} and, thus, {inhomogeneities} $\smlk{\mathfrak{G}}{s}{k}$ in the radial ODEs~\eqref{eq:radial-ODEs-intro}. A bit more concretely, we will consider cutoffs which vanish in the future of some hypersurface $\Sigma_\tau$  and  to the past of another hypersurface $\Sigma_0$, and whose gradient is supported around these two hypersurfaces, see Figure~\ref{fig:hyp-folliations}. We will show that, after Plancherel and summing in $(m,l)$,
\begin{align*}
I^{\rm deg}(0,\tau)\lesssim E(\tau)+E(0)\,,\numberthis\label{eq:non-future-int-intro}
\end{align*}
where $I^{\rm deg}$  and $E$ are suitably weighted Sobolev norms for $\swei{\upphi}{s}_k$, $k=0,\dots, |s|$: the former is a spacetime norm which experiences loss of one derivative at trapping, and the latter denotes a flux across a spacelike hypersurface as pictured in Figure~\ref{fig:hyp-folliations}. Comparing to the $s=0$ case in \cite{Dafermos2016b}, the key differences in our proof of \eqref{eq:non-future-int-intro} are that we consider hyperboloidal folliations instead of asymptotically flat, and that the complications introduced by having a \textit{system} of wave equations connected by transport equations~\eqref{eq:def-upppsi-k-intro-basic} leads us to  consider a somewhat nonlocal construction of some of our cutoff variables. We direct the reader to Section~\ref{sec:physical-space-estimates}, where \eqref{eq:non-future-int-intro} is shown, for further comments.

\begin{figure}[htbp]
\centering
\includegraphics[scale=1]{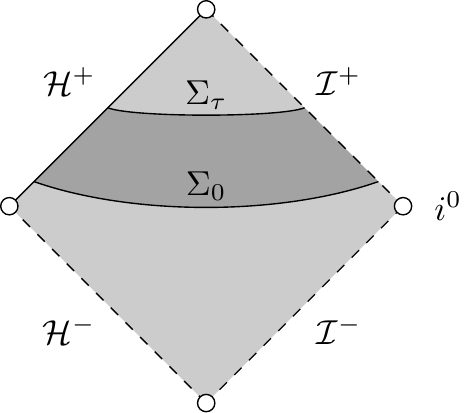}
\caption{Hyperboloidal folliations for cutoffs to the past and to the future.}
\label{fig:hyp-folliations}
\end{figure}

If we had an \textit{a priori} energy flux boundedness statement for \eqref{eq:transformed-equation-bottom-intro-basic}--\eqref{eq:transformed-equation-top-intro-basic}, estimate \eqref{eq:non-future-int-intro} would imply the integrated local energy decay statement of Theorem~\ref{thm:bddness-decay-big}. Standard arguments for obtaining higher order estimates and, from there, decay estimates by the $r^p$ method of Dafermos and Rodnianski \cite{Dafermos2010a} (see also \cite{Moschidis2016}) would then conclude the proof of Theorem~\ref{thm:bddness-decay-big}. However, we can \textit{a priori} hope for no better than
\begin{align*}
E(\tau) \lesssim E(0) + |a| I^{\rm deg}(0,\tau)\numberthis\label{eq:non-energy-bddness-intro}
\end{align*} 
from energy estimates, and in fact the second term will not in general have a degeneration at trapping unless $|a|\ll M$ is sufficiently small. If $|a|\ll M$,  we can treat the last term as an error which can be absorbed by \eqref{eq:non-future-int-intro}, thus closing the flux and, \textit{a posteriori}, integrated estimates of Theorem~\ref{thm:bddness-decay-big} and, by the aforementioned standard estimates, establishing the decay properties of solutions to \eqref{eq:Teukolsky-intro}. We direct to the original proofs in \cite{Dafermos2010,Dafermos2016a,Dafermos2017,Ma2017,Ma2017a} for details.

In the full subextremal case $|a|<M$, the bulk term contribution to the right hand side of \eqref{eq:non-energy-bddness-intro} is not, in any sense, \textit{a priori} small. From \eqref{eq:non-future-int-intro}, we can nevertheless softly deduce an interesting fact. If we choose to restrict to solutions of the transformed system \eqref{eq:def-upppsi-k-intro-basic}--\eqref{eq:transformed-equation-top-intro-basic} which are assumed to be $L^2_{\tau\geq 0}$, i.e.\ which have enough integrability in time to the future of the Cauchy hypersurface on which we pose data, then we may take $\tau\to \infty$ in \eqref{eq:non-future-int-intro} and drop the second term on the right hand side. This establishes integrated local energy decay as in Theorem~\ref{thm:bddness-decay-big}. After having established this, and only after, we can revisit our energy estimates leading to \eqref{eq:non-energy-bddness-intro} and show, by a careful argument crucially exploiting the fact that our Theorem~\ref{thm:high-low-freqs-intro} in \cite{SRTdC2020} shows that trapping is very well-localized in frequency space, the boundedness of energy fluxes. It is worth emphasizing again how the logic of the proof differs from the case $a=0$ and the perturbative $|a|\ll M$ regime: there, as we have outlined above, one first shows an \textit{a priori} (almost) energy boundedness statement \eqref{eq:non-energy-bddness-intro} first, and local energy decay statement later by \eqref{eq:non-energy-bddness-intro}; here, in the full $|a|<M$ range, this order is inverted. We also note this is already the case in the $s=0$ setting of \cite{Dafermos2016b}, and the results for $s\neq 0$ here follow in the same fashion with little conceptual differences. The upshot is that we now have:
\begin{arabictheorem} \label{thm:bddness-decay-future-int-intro} Theorem~\ref{thm:bddness-decay-big} holds for solutions of the transformed system \eqref{eq:def-upppsi-k-intro-basic}--\eqref{eq:transformed-equation-top-intro-basic} which are assumed to be \textit{a priori} $L^2_{\tau\in(0,\infty)}$, where $\tau$ is the continuous time parameter identifying hypersurfaces in the spacelike folliation. It also holds unconditionally if we assume $|a|\ll M$ is small enough.
\end{arabictheorem}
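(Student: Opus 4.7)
The plan is to follow the two-pronged strategy sketched in the introduction, proving the unconditional $|a|\ll M$ statement by direct absorption and the a priori $L^2_{\tau}$ statement by inverting the usual order of arguments.

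First I would establish integrated local energy decay under the $L^2_{\tau\in(0,\infty)}$ hypothesis. The starting point is the physical-space estimate \eqref{eq:non-future-int-intro}, derived in Section~\ref{sec:physical-space-estimates} by inserting past and future time cutoffs into \eqref{eq:def-upppsi-k-intro-basic}--\eqref{eq:transformed-equation-top-intro-basic}, taking Fourier transforms in $\tau$ of the resulting compactly supported functions, applying Theorems~\ref{thm:mode-stability-intro} and \ref{thm:high-low-freqs-intro} frequency by frequency, and then summing via Plancherel and completeness of the spin-weighted spheroidal harmonics. The $L^2_\tau$ assumption guarantees the existence of a sequence $\tau_n\to\infty$ along which the flux $E(\tau_n)$ tends to zero; evaluating \eqref{eq:non-future-int-intro} along this sequence and passing to the limit yields
\[
I^{\rm deg}(0,\infty)\lesssim E(0)\,.
\]

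The next step is to upgrade ILED to an energy flux bound. Naively \eqref{eq:non-energy-bddness-intro} gives $E(\tau)\lesssim E(0)+|a|I^{\rm deg}(0,\tau)$, but in the full subextremal range the prefactor is not small and, crucially, the bulk error one actually obtains from multiplier identities has \emph{no} degeneration at trapping, so one cannot simply plug in the degenerate ILED above. The structural input that saves the day is the sharp frequency localization of trapping furnished by Theorem~\ref{thm:high-low-freqs-intro}: the set of $(\omega,m,l)$ where a derivative is lost is a bounded region of phase space and is separated from $\omega = 0$. I would Plancherel-decompose the offending bulk term into its trapped-frequency part, which can be controlled by a bounded number of $\p_t$ and $\p_\phi$ commutations of initial data because $|\omega|$ and $|m|$ stay bounded there, and its complement, on which the degenerate ILED estimate already controls the error without any loss. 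Combining the two pieces closes $E(\tau)\lesssim E(0)$. For the unconditional $|a|\ll M$ regime, one instead couples \eqref{eq:non-future-int-intro} and \eqref{eq:non-energy-bddness-intro} directly and uses the smallness of $|a|$ to absorb $|a|I^{\rm deg}(0,\tau)$ into the left-hand side of the combined system, exactly as in \cite{Dafermos2016a}.

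Once base-level boundedness and ILED are available in either setting, the higher-order statements in Theorem~\ref{thm:bddness-decay-big} follow by commuting \eqref{eq:def-upppsi-k-intro-basic}--\eqref{eq:transformed-equation-top-intro-basic} with the Killing fields $\p_t,\p_\phi$ and with a red-shift commutator adapted to the subextremal horizon, and time-decay then follows from the $r^p$ hierarchy of Dafermos--Rodnianski applied to spin-weighted variables on the hyperboloidal folliation of Figure~\ref{fig:hyp-folliations}. Control of $\swei{\upalpha}{s}$ at the desired level of regularity is finally recovered from control of $\swei{\Phi}{s}$ by integrating the transport chain \eqref{eq:def-upppsi-k-intro-basic} from top to bottom, without any further loss.

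The hard part will be the energy boundedness step in the full subextremal $L^2$ case. The ergoregion obstructs any a priori coercive conservation law, so the only flux information available arises from combining the frequency-localized Theorem~\ref{thm:high-low-freqs-intro} with physical-space multiplier identities for the mixed hyperbolic-transport system \eqref{eq:def-upppsi-k-intro-basic}--\eqref{eq:transformed-equation-top-intro-basic}, and one must tie these two worlds together so that the sharp trapping localization is visible at the level of physical-space Sobolev norms on $\swei{\upalpha}{s}$ without net derivative loss. A secondary technical difficulty, absent from the scalar case of \cite{Dafermos2016b}, is that the coupling between the wave equations \eqref{eq:transformed-equation-bottom-intro-basic}--\eqref{eq:transformed-equation-top-intro-basic} and the transport system \eqref{eq:def-upppsi-k-intro-basic} forces some of the cutoff variables to be constructed nonlocally in order for the inhomogeneities $\swei{\mathfrak H}{s}_k$ induced by the cutoffs to generate a right-hand side of \eqref{eq:non-future-int-intro} controllable purely in terms of $E(0)$ and $E(\tau)$.
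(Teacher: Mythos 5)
Your ILED argument matches the paper's: the $L^2_\tau$ assumption supplies a sequence $\tau_n\to\infty$ with $E(\tau_n)\to 0$, and passing to the limit in \eqref{eq:non-future-int-intro} yields $I^{\rm deg}(0,\infty)\lesssim E(0)$. The $|a|\ll M$ absorption argument and the subsequent higher-order/$r^p$/transport-integration steps are also the right skeleton. But your energy boundedness step in the full subextremal range contains a genuine conceptual error.

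You claim that "the set of $(\omega,m,l)$ where a derivative is lost is a bounded region of phase space" and that on the trapped part $|\omega|$ and $|m|$ stay bounded, so those frequencies can be handled by finitely many $\p_t,\p_\phi$ commutations of initial data. This is false: trapping in Kerr is a high-frequency phenomenon, and the trapped frequency triples extend to arbitrarily large $|\omega|$, $|m|$, $\Lambda$ (it is the \emph{low} and \emph{intermediate} frequency regimes $\mc F_{\rm low},\mc F_{\rm int}$ that are bounded). What Theorem~\ref{thm:high-low-freqs-intro} actually provides is that the trapping radius $r_{\rm trap}(\omega,m,l)$ is tightly localized in $r$ \emph{for each fixed frequency band}, not that the trapped frequencies are bounded. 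The paper's proof of Theorem~\ref{thm:energy-bddness-futurint} exploits exactly this: one first past-extends the solution via Proposition~\ref{prop:scattering-construction} and radially truncates, then partitions frequency space into finitely many sets $\mc C_i=\{(\omega,m,l):r_{\rm trap}\in I_i\}$ with $I_i$ small intervals, projects to each $\mc C_i$, and applies the Killing multiplier $T+\upomega_+\chi_i(r)Z$ from \eqref{eq:1st-order-estimate-for-energy-bddness} with a cutoff $\chi_i$ whose gradient is supported \emph{off} $I_i$. The resulting bulk errors for each piece are supported away from that piece's trapping region, hence controlled by the degenerate ILED. One then sums over $i$ using Plancherel orthogonality and eliminates the remaining $E(\tau_0)$ term by a pigeonhole choice $\tau_0\to -\infty$. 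Your decomposition by "trapped vs.\ non-trapped frequencies" cannot reproduce this because on the trapped (high-frequency) complement the degenerate ILED does \emph{not} give you the error term without loss---it has a degeneration precisely there. The fix is the frequency-dependent multiplier construction, not a boundedness-of-trapped-frequencies argument.

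A further omission: your proposal does not address how to close the fluxes for the \emph{lower-level} transformed variables $\swei{\upphi}{s}_k$, $k<|s|$, without losing a derivative relative to $\swei{\Phi}{s}$. The paper has a dedicated Step~4 for this, combining the higher-order elliptic/commutator estimates from Proposition~\ref{prop:higher-order-bulk-flux}, the radially commuted Killing estimate \eqref{eq:phys-space-Killing-multiplier-consequence-1-radial-commuted}, and an absorption argument over $\sup_\tau$; simply "integrating the transport chain from top to bottom" as you suggest would lose derivatives.
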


The final stage of our proof, as in the case $s=0$ in \cite{Dafermos2016b} is to show that the restriction in integrability we have made in Theorem~\ref{thm:bddness-decay-future-int-intro} is, in fact, no restriction at all:
\begin{arabictheorem} \label{thm:continuity-arg-intro}
Solutions to the transformed system \eqref{eq:def-upppsi-k-intro-basic}--\eqref{eq:transformed-equation-top-intro-basic} arising from  Cauchy data for $\swei{\upalpha}{s}$ with the level of regularity given in   Theorem~\ref{thm:bddness-decay-big} are, in fact, necessarily $L^2_{\tau\in(0,\infty)}$ where $\tau$ is the continuous time parameter identifying hypersurfaces in the spacelike folliation.
\end{arabictheorem}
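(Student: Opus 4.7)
The plan is a bootstrap argument in a time parameter $T$, modeled on the strategy of \cite{Dafermos2016b} for the scalar case. Let $\swei{\upphi}{s}_k$ be the solution arising from Cauchy data at the required regularity, and write $E(\tau)$ and $I^{\mathrm{deg}}(0,\tau)$ for the hyperboloidal energy flux and degenerate bulk norm, both continuous in $\tau$ by local well-posedness. The goal is a uniform-in-$\tau$ bound $E(\tau) \leq K \cdot E(0)$ with $K = K(M, a_0, s)$, which combined with the integrated local energy decay and the $r^p$-hierarchy of Dafermos--Rodnianski \cite{Dafermos2010a} (see also \cite{Moschidis2016}) yields the claimed $L^2_\tau$ integrability.

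The main tool is a temporal truncation. For each $T > 1$, introduce a smooth cutoff $\chi_T$ equal to one on $[1, T-1]$ and vanishing outside $[0, T]$, and consider $\tilde{\upphi}^T_k := \chi_T \swei{\upphi}{s}_k$. These truncated variables are compactly supported in $\tau$ and hence trivially in $L^2_\tau$; they solve inhomogeneous versions of \eqref{eq:def-upppsi-k-intro-basic}--\eqref{eq:transformed-equation-top-intro-basic} with commutator source terms supported in $[0, 1] \cup [T-1, T]$. Applying \cref{thm:bddness-decay-future-int-intro} to the truncated system, which is permissible because $\tilde{\upphi}^T_k \in L^2_\tau$ by construction, yields
\[
I^{\mathrm{deg}}(1, T-1) + \sup_{\tau \in [1, T-1]} E(\tau) \leq C \bigl( E(0) + E(T) \bigr),
\]
with $C = C(M, a_0, s)$ independent of $T$.

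A continuity argument in $T$, closed by combining the above with the Killing vector field estimate $E(\tau) \leq E(0) + c_{\mathrm{erg}}|a| \cdot I^{\mathrm{deg}}(0,\tau)$ discussed around \eqref{eq:non-energy-bddness-intro}, then shows the inequality admits a uniform-in-$T$ bound. Concretely, one defines $\mathcal{T}$ to be the set of $T^*$ for which $\sup_{[0, T^*]} E \leq K E(0)$, verifies closedness via the continuity of $E(\tau)$, and verifies openness by showing that the estimates on $\tilde{\upphi}^{T^*+\delta}_k$ strictly improve the bootstrap constant. Letting $T \to \infty$ and invoking the $r^p$-hierarchy applied to the truncated variables to extract uniform-in-$T$ decay of intermediate energies, we conclude that $\swei{\upphi}{s}_k \in L^2_\tau$.

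The main obstacle is closing the bootstrap in the non-perturbative range $|a| \sim M$, where the ergoregion coupling $c_{\mathrm{erg}}|a|$ is not small relative to $C$ and the naive algebraic absorption fails. To overcome this, one must propagate higher-order regularity through the truncation compatibly with the derivative loss at trapping built into \cref{thm:high-low-freqs-intro} of \cite{SRTdC2020}, and invoke the $r^p$-hierarchy on the truncated system to extract genuine decay of intermediate energies that strictly improves the bootstrap constant at each iteration. Adapting these ingredients to the transformed-system structure, with its $|s|$ coupled wave equations and $|s|$ transport equations, and to the spin-weighted function spaces, is the most delicate technical aspect of the argument.
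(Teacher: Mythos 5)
Your proposal attempts a bootstrap continuity argument in the time parameter $T$ and — to your credit — you correctly identify exactly where it breaks: the ergoregion coupling $c_{\mathrm{erg}}|a|$ has no smallness for $|a|$ near $M$, so the chain $E(\tau) \lesssim E(0) + |a|\,I^{\rm deg}(0,\tau)$ combined with $I^{\rm deg}(0,\tau) \lesssim E(0) + E(\tau)$ does not close algebraically. Unfortunately, the patch you suggest (propagating higher regularity through the truncation plus the $r^p$-hierarchy to produce "genuine decay of intermediate energies") is circular: the $r^p$-hierarchy yields decay only conditionally on a non-degenerate integrated local energy estimate on the whole of $\mc{R}_0$, which for general $|a|<M$ is exactly what one cannot obtain a priori without future integrability. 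There is no mechanism in your argument by which the bootstrap constant "strictly improves" — you never derive a quantity strictly smaller than $K$ that the estimates would recover.

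The paper avoids this by running the continuity argument in the black hole parameter $a$ rather than in $T$ (Proposition 9.2.5). One shows that the set $\mathcal{A}_m \subset [0,M)$ of rotation parameters for which all fixed-$m$ solutions are future-integrable is nonempty (from the Schwarzschild / very-slow-rotation literature), closed (immediate from the conditional energy boundedness of Theorem 9.3 plus continuity of solutions in $a$), and open. Openness is where the work happens and where your proposal lacks the two essential ingredients. First, one reduces to fixed azimuthal modes $m$, and then exploits a derivative-gain estimate (Lemma 9.4): at fixed $m$, trapping is localized in $r$ and costs only half a $T$-derivative rather than one, so the flux of order $J+1$ is controlled by a strictly lower-order bulk term. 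This is where the sharpened frequency estimates of \cite{SRTdC2020} are crucially used, and it is the structural replacement for the smallness you were hoping to manufacture. Second, one interpolates between the transformed operators at rotation parameters $\mathring{a}\in\mathcal{A}_m$ (to the future of $\Sigma_\tau$) and $a$ (in the past slab), so that the resulting solution is future integrable with respect to $\mathring{a}$ and the inhomogeneity error in the overlap region carries a factor of $|a-\mathring{a}|$. This smallness is in $|a-\mathring{a}|$ — not in $|a|$ — and is genuinely at your disposal in an openness argument, which is why the absorption step closes here but not in your scheme. Without the parameter-continuity framing, the fixed-$m$ reduction, and the derivative-gain lemma, the proposal cannot be completed.
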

Let us note that Theorem~\ref{thm:bddness-decay-future-int-intro} already implies Theorem~\ref{thm:continuity-arg-intro} for the Schwarzschild $a=0$ black hole background or, indeed, on  very slowly rotating $|a|\ll M$ Kerr exterior spacetimes. Thus, to obtain Theorem~\ref{thm:continuity-arg-intro}  we need only show that for each $M>0$, the set
\begin{align*}
\mathbb A :=\{|a|\in[0,M)\colon &\text{all solutions to \eqref{eq:def-upppsi-k-intro-basic}--\eqref{eq:transformed-equation-top-intro-basic} on Kerr exterior with parameters $(a,M)$ }\\
& \text{arising from regular data are $L^2_{\tau\in(0,\infty)}$}\}
\end{align*}
is open and closed. It is not hard to see from \eqref{eq:non-future-int-intro} that the future integrability properties we seek are ensured if we can prove $\sup_{\tau>0} E(\tau)<\infty$. Thus, closedness of $\mathbb{A}$ follows easily from our energy flux boundedness in Theorem~\ref{thm:bddness-decay-future-int-intro}. For openness, as in \cite{Dafermos2016b}, we find it convenient to use the completeness of the azimuthal modes to reduce the proof to the fixed azimuthal number $m$ setting. For finite $m$, revisiting the energy estimates leading to \eqref{eq:non-energy-bddness-intro} yet again, we can replace the right hand side by a bulk term still, but one which is lower order in the number of derivatives (though now the constant will, of course, depend on $m$). This derivative gain argument is similar to that in \cite{Dafermos2016b}, with a slightly different critical insight: while \cite{Dafermos2016b} appeals to the $r$-localization of trapping region for scalar waves with fixed $m$ to conclude, we rely on the more general fact, established in the precise version of  Theorem~\ref{thm:high-low-freqs-intro} from \cite{SRTdC2020}, that trapping need not cost us control over one full derivative but only half of a derivative (in fact, even this loss is not sharp). Afterwards, as in \cite{Dafermos2016b}  for $s=0$, we fix $M>0$ and study solutions to a system of the type of \eqref{eq:transformed-equation-bottom-intro-basic}--\eqref{eq:transformed-equation-top-intro-basic} where the operators which interpolate between the differential operators of \eqref{eq:transformed-equation-bottom-intro-basic}--\eqref{eq:transformed-equation-top-intro-basic} for Kerr parameter $\mathring{a}\in\mathbb A$ to the future of $\Sigma_\tau$ and another Kerr parameter $a$ such that $|a-\mathring{a}|\ll 1$ in the slab between $\Sigma_0$ and $\Sigma_{\tau-\delta_0}$.  This allows us to derive an integrated and energy flux estimate for the solutions to the system with parameter $a$ where the right hand side ``sees'' the better estimates we have for parameter $\mathring{a}$; making $\delta_0$ sufficiently small and then $|a-\mathring{a}|$ very small as well, our improved \eqref{eq:non-energy-bddness-intro} allows us to conclude that $\mathring{a}\in\mathbb{A}$ implies $a\in\mathbb{A}$, so $\mathbb{A}$ is open.

We conclude this introduction with a brief outline of the paper:
\begin{itemize}
\item Sections \ref{sec:kerr-prelims}, \ref{sec:pdes-big}, \ref{sec:toolbox-physical-space}, \ref{sec:precise-statement} and  \ref{sec:odes-big} are preliminary sections which establish some basic results and notation:  Section~\ref{sec:kerr-prelims} introduces the Kerr spacetime and relevant folliations;  Sections~\ref{sec:pdes-big} and  \ref{sec:odes-big} introduce the main differential equations of study in this paper, namely the PDEs and the corresponding (radial) ODEs, respectively; Section~\ref{sec:toolbox-physical-space} contains some preliminary energy flux and Morawetz estimates as well as a general procedure for obtaining higher order estimates. We also point out that Section~\ref{sec:precise-statement} contains a precise formulation of Theorem~\ref{thm:bddness-decay-big}, using the energy norms defined in Section~\ref{sec:template-energy-norms}.
\item Section~\ref{sec:frequency-estimates-review} is a brief section which reviews the frequency estimates obtained in our previous \cite{SRTdC2020}, i.e.\ gives a more precise version of Theorem~\ref{thm:high-low-freqs-intro}. These estimates are the core of our proof.
\item Sections~\ref{sec:physical-space-estimates} is concerned with passing from the frequency space estimates of Theorems~\ref{thm:mode-stability-intro} and \ref{thm:high-low-freqs-intro} to physical space estimates when the transformed variables have been cutoff using, respectively, hyperboloidal and radial cutoffs. In particular, we prove a precise version of \eqref{eq:non-future-int-intro} with and without the $L^2_{\tau>0}$ assumption on solutions. Comparing with the $s=0$ case of \cite{Dafermos2016b}, this is the section of the present paper where one finds the most novel, though technical, ideas.
\item Section~\ref{sec:proof-main-thm} contains the proofs Theorems~\ref{thm:bddness-decay-future-int-intro} and \ref{thm:continuity-arg-intro}. These proofs follow closely those of \cite{Dafermos2016b}, and rely on the higher order estimates in Section~\ref{sec:toolbox-physical-space}  as well the $r^p$ method of \cite{Dafermos2010a} to obtain the decay statements. 
\end{itemize}

\vspace{\baselineskip}

\noindent\textbf{Acknowledgments.} During the period in which this research was carried out, Y.\ S.\ was supported by NSF grant DMS-1900288, by an Alfred P.\ Sloan Fellowship in Mathematics, and by NSERC discovery grants RGPIN-2021-02562 and DGECR-2021-00093. R.\ TdC.\ acknowledges support through the EPSRC doctoral grant EP/L016516/01 and the NSF award DMS-2103173, as well as the hospitality of Princeton University in Fall 2018 and 2019 through their VSRC program. Both authors thank Mihalis Dafermos for many important discussions.


\section{The Kerr spacetime}
\label{sec:kerr-prelims}

In this section, we introduce some basic notation to discuss analysis on the Kerr manifold: in  \ref{sec:geometry}, we describe the Kerr family as family of metrics on a single manifold-with-boundary; in Section~\ref{sec:relevant-vector fields} we set the notation for some of the relevant vector fields on Kerr; and in Section~\ref{sec:foliations} we introduce the family of foliations generated by the flow of the stationary Killing field which we will consider in our study of PDEs on Kerr. We conclude the section with a short list of conventions and notation which will be used throughout.

\subsection{Manifold, coordinate systems and metric}
\label{sec:geometry}

Following the notation of \cite{Dafermos2010,Dafermos2016b}, we first define the Kerr exterior manifold as the manifold-with-boundary $$\mc R:=\mathbb{R}^+_0\times\mathbb{R}\times \mathbb{S}^2$$ covered by coordinates $y^*\in\mathbb{R}^+_0$, $t^*\in\mathbb{R}$ and $(\theta^*,\phi^*)\in\mathbb{S}^2$ the standard spherical coordinates. The coordinate system $(y^*,t^*,\theta^*,\phi^*)$  is global apart for the usual degeneration of spherical coordinates. The event horizon is the boundary $$\mc H^+:=\partial\mc R=\{y^*=0\}.$$ Note that the manifold is, thus, independent of the parameters $(a,M)$.

Next, we introduce Kerr-star coordinates, $(r,t^*,\theta^*,\phi^*)$, on $\mc R$ by letting $r=y^*+r_+$, where
$$r_\pm :=M\pm\sqrt{M^2-a^2}.$$
These coordinates now depend on the parameters $(a,M)$. 

We may also introduce Boyer--Lindquist coordinates  $(r,t,\theta,\phi)$ on $\mathrm{int}(\mc R)$ by letting $\theta=\theta^*$, and $t=t^*-\overline{t}(r)$, $\phi=\phi^*-\overline{\phi}(r)$, where $\overline{t}(r)$ and $\overline{\phi}(r)$ are smooth functions for $r\in(r_+,\infty)$. One possible choice, consistent with \cite{Dafermos2010,Dafermos2016b}, is to make  $\overline{t}$, $\overline{\phi}$ interpolate between the zero function for $r\geq 9M/4$, and the functions
\begin{align*}
\frac{d\overline{t}}{dr}=\frac{2M r}{(r-r_+)(r-r_-)}&\Rightarrow \overline{t}(r)=
\begin{dcases}
\frac{2Mr_+}{r_+-r_-}\log(r-r_+)-\frac{2Mr_-}{r_+-r_-}\log(r-r_-)+C,&|a|<M\\
-\frac{2M^2}{r-M}+2M\log(r-M)+C,\ |a|=M
\end{dcases}\,,\\
\frac{d\overline{\phi}}{dr}=\frac{a}{(r-r_+)(r-r_-)}&\Rightarrow \overline{\phi}(r)=
\begin{dcases}
\frac{a}{r_+-r_-}\log\lp(\frac{r-r_+}{r-r_-}\rp)+C,&|a|<M\\
-\frac{a}{r-M}+C, &|a|=M
\end{dcases}\,,
\end{align*}
holding for $r\leq 15M/8$ with a suitable choice of integration constant $C$. Let us also introduce  the tortoise coordinate $r^*=r^*(r)$ as an alternative to the Boyer--Lindquist $r$ which is defined by 
\begin{align*}
\frac{dr^*}{dr}=\frac{r^2+a^2}{(r-r_+)(r-r_-)}\,,\quad r^*(3M)=0\,.
\end{align*}

In Boyer--Lindquist coordinates, the Kerr family is a two parameter family of metrics on $\mathrm{int}(\mc R)$: for parameters $(a,M)$, it is given by 
\begin{align}
g_{a,M} &= -\frac{\Delta -a^2\sin^2\theta}{\rho^2}dt^2- \frac{4Mar\sin^2\theta}{\rho^2} dtd\phi \nonumber \\ 
&\qquad+\left(\frac{(r^2+a^2)^2-\Delta a^2 \sin^2\theta}{\rho^2}\right)\sin^2\theta d\phi^2 +\frac{\rho^2}{\Delta}dr^2+\rho^2 d\theta^2\,, \label{eq:kerr-metric-BL}
\end{align}
with inverse
\begin{align}
g_{a,M}^{-1} &= -\frac{(r^2+a^2)^2-a^2\Delta \sin^2\theta}{\rho^2\Delta}\partial_t^2-\frac{4Mar}{\rho^2\Delta}\partial_t\partial_\phi  +\frac{\Delta-a^2\sin^2\theta}{\rho^2\Delta\sin^2\theta}\partial_\phi^2+\frac{1}{\rho^2}\partial_\theta^2 +\frac{\Delta}{\rho^2}\partial_r^2\,. \label{eq:kerr-metric-inverse-BL}
\end{align}
where we have used the notation 
\begin{gather*}
\rho^2 := r^2+a^2\cos^2\theta, \quad\quad \Delta:=r^2-2Mr+a^2=(r-r_+)(r-r_-) \,. \numberthis\label{eq:rho-delta}
\end{gather*}
Changing to Kerr-star coordinates, we obtain a metric on the entire $\mc R$; for instance, if $r_+\leq r\leq 15M/8$, then we can rewrite the metric
\begin{align}
g_{a,M} &= -\lp(1-\frac{2Mr}{\rho^2}\rp)(dt^*)^2+\frac{4Mr}{\rho^2}dt^*dr + \lp(1+\frac{2Mr}{\rho^2}\rp)dr^2  -\frac{2a(2Mr+\rho^2)\sin^2\theta}{\rho^2}drd\phi^*\nonumber\\ 
&\qquad- \frac{4aMr\sin^2\theta}{\rho^2}dt^*d\phi^* +\frac{(r^2+a^2)^2-\Delta a^2\sin^2\theta}{\rho^2}\sin^2\theta(d\phi^*)^2\,, \label{eq:kerr-metric-star}
\end{align}
and its inverse
\begin{align}\label{eq:kerr-metric-star-inverse}
g_{a,M}^{-1} &=-\lp(1+\frac{2Mr}{\rho^2}\rp)\p_{t^*}^2+\frac{4Mr}{\rho^2}\p_{t^*}\p_{r^*}+\frac{\Delta}{\rho^2}\p_r^2+\frac{2a}{\rho^2}\p_r\p_{\phi^*}+\frac{1}{\rho^2}\p_\theta^2+\frac{1}{\rho^2\sin^2\theta}\p_{\phi^*}^2\,.
\end{align}

We also note the volume form in various different coordinates: we write
\begin{align*}
d\sigma = \sin\theta d \theta d\phi
\end{align*}
to be the usual volume form on $\mathbb{S}^2$, making the volume form of $g_{a,M}$
\begin{align*}
dV=\rho^2dtdrd\sigma=\frac{\rho^2~\Delta}{r^2+a^2}dtdr^*d\sigma = \rho^2 dt^*drd\sigma.
\end{align*}

\subsection{Relevant vector fields}
\label{sec:relevant-vector fields}

In $\mathrm{int}(\mc R)$, we define the vector field $X$ by the formula $X:=\p_r$, in Boyer--Lindquist coordinates. We have 
\begin{align*}
\p_{r^*}=\frac{\Delta}{r^2+a^2}X\,,
\end{align*}
and we let $'$ be a shorthand for a $\p_{r^*}$ derivative.

In $\mc{R}$, we define the vector fields $T, X^*,\Phi$ by the following formulas in Kerr-star coordinates:
\begin{align*}
T:=\p_{t^*}\,, \qquad Z := \p_{\phi^*}\,,\qquad X^* := \p_r\,,
\end{align*}
and we note that the first two are the same as in the $(a,M)$-independent coordinate chart. The restriction of these vector fields to $\mathrm{int}(\mc R)$ is given, in terms of Boyer--Lindquist coordinates, by
\begin{align*}
T=\p_{t}\,, \qquad Z = \p_{\phi}\,,\qquad X^* = X-\frac{d\overline{t}}{dr}T-\frac{d\overline{\phi}}{dr}Z\,,
\end{align*}
where $\overline{t}(r)$ and $\overline{\phi}(r)$ are identified in the previous section.

In $\mc{R}$, we define the vector fields
\begin{align*}
L&:=\frac{\Delta}{r^2+a^2}X^*+\lp(1+\frac{d\overline{t}}{dr*}\rp)T+\lp(\frac{a}{r^2+a^2} +\frac{d\overline{\phi}}{dr*}\rp)Z\,, \\
\uL &:= -\frac{\Delta}{r^2+a^2}X^*+\lp(1-\frac{d\overline{t}}{dr*}\rp)T+\lp(\frac{a}{r^2+a^2} -\frac{d\overline{\phi}}{dr*}\rp)Z\,,
\end{align*}
with respect to Kerr-star coordinates. In $\mathrm{int}(\mc R)$, using Boyer-Lindquist coordinates, they take on the more familiar form
\begin{align*}
L=\frac{\Delta}{r^2+a^2}\p_r+T+\frac{a}{r^2+a^2}Z\,, \qquad \uL=-\frac{\Delta}{r^2+a^2}\p_r+T+\frac{a}{r^2+a^2}Z\,,
\end{align*}
or, using the $r^*$ variable in lieu of the Boyer--Lindquist $r$,
\begin{align*}
L=\p_{r^*}+T+\frac{a}{r^2+a^2}Z\,, \qquad \uL=-\p_{r^*}+T+\frac{a}{r^2+a^2}Z\,.
\end{align*}
Let us also note the identities
\begin{align*}
\nabla_a\lp(L^a\frac{1}{\rho^2}\frac{r^2+a^2}{\Delta}\rp)=\nabla_a\lp(\uL^a\frac{1}{\rho^2}\frac{r^2+a^2}{\Delta}\rp)=0\,. \numberthis\label{eq:commutation-mcL-volume-form}
\end{align*}

\subsection{Spacelike foliations}
\label{sec:foliations}

\subsubsection{Hyperboloidal foliations}
\label{sec:hyp-folliation}

In Kerr-star coordinates $(r,t^*,\theta,\phi^*)$, let $\tilde{\zeta}_1=1$ for $r_+\leq r \leq r_++M/2$ and 0 for $r\geq 3M$ and $\tilde{\zeta}_2=0$ for $r \leq 3M$ and 1 for $r\geq 4M$. Set 
\begin{align*}
\tilde{t}^*= t^*+\tilde{\zeta}_1(r)\lp(r-\frac{M}{2}\log r\rp)-\tilde{\zeta}_2(r)\lp(r+2M\log r-\frac{3M^2}{r}\rp)
\end{align*}
and define
\begin{align*}
\Sigma_\tau:=\{\tilde{t}^*=\tau\}\,.
\end{align*}
We note that, since $t^*$ is independent of the Kerr parameters and $\tilde{t}^*$ depends only on the mass, $M$, of the black hole, the hypersurfaces $\Sigma_\tau$ are the same for the entire family $|a|\leq M$.

\begin{lemma} For $\tau\geq 0$, the $\Sigma_\tau$ are spacelike hypersurfaces which pierce $\mc H^+$ and are asymptotically null for large $r$. Moreover,
\begin{align*}
dV_{\Sigma_\tau}=v(r,\theta)drd\sigma\quad \text{hence}\quad dV=\rho^2dV_{\Sigma_\tau}d\tau
\end{align*}
for some smooth $v(r,\theta)$ which is bounded above and below by constants depending only on $M$.
\end{lemma}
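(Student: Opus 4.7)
The plan is to verify each claim of the lemma by direct computation. From $\tilde{t}^* = t^* + h(r)$ with $h(r) := \tilde{\zeta}_1(r)\lp(r - \tfrac{M}{2}\log r\rp) - \tilde{\zeta}_2(r)\lp(r + 2M\log r - \tfrac{3M^2}{r}\rp)$, one has $d\tilde{t}^* = dt^* + h'(r)\,dr$. Using the Kerr-star inverse metric \eqref{eq:kerr-metric-star-inverse}, a direct computation gives
\begin{equation*}
g^{-1}(d\tilde{t}^*, d\tilde{t}^*) = -\lp(1 + \frac{2Mr}{\rho^2}\rp) + \frac{4Mr}{\rho^2}h'(r) + \frac{\Delta}{\rho^2}(h'(r))^2,
\end{equation*}
and spacelikeness of $\Sigma_\tau$ is equivalent to the strict negativity of this quantity on $\mc R$, which I would verify region-by-region according to the supports of $\tilde{\zeta}_1, \tilde{\zeta}_2$.

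In the near-horizon region $r_+ \le r \le r_+ + M/2$, where $\tilde{\zeta}_1 \equiv 1$, $\tilde{\zeta}_2 \equiv 0$ and $h'(r) = 1 - M/(2r)$: evaluating at $r = r_+$ (where $\Delta = 0$) and using the identity $2Mr_+ = r_+^2 + a^2$ one obtains $g^{-1}(d\tilde{t}^*, d\tilde{t}^*)|_{r=r_+} = (a^2\sin^2\theta - 2M^2)/\rho^2$, which is strictly negative since $|a|<M$; continuity then extends this bound to a neighborhood of the horizon. In the asymptotic region $r \ge 4M$ where $h'(r) = -(1 + 2M/r + 3M^2/r^2)$, a Taylor expansion in $1/r$ yields
\begin{equation*}
g^{-1}(d\tilde{t}^*, d\tilde{t}^*) = -\frac{4M}{r} + O(r^{-2})\,,
\end{equation*}
which is both strictly negative (establishing spacelikeness) and tends to zero at infinity (giving the asymptotic null character). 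In the two interpolation regions $r_+ + M/2 \le r \le 3M$ and $3M \le r \le 4M$, where $h'(r)$ acquires additional contributions from $\tilde{\zeta}_i'(r)$, I would appeal to the fact that the cutoffs $\tilde{\zeta}_i$ can be chosen monotone with sufficiently small slope, so that $h'(r)$ interpolates in a controlled way between the already-verified endpoint values and a direct bounded computation confirms strict negativity throughout. The piercing of $\mc H^+$ is immediate from the observation that $\tilde{t}^*|_{\{r=r_+\}} = t^* + h(r_+)$ is nonconstant affine in $t^*$, so $\Sigma_\tau \cap \mc H^+ = \{t^* = \tau - h(r_+)\} \cong \mathbb{S}^2$, with transversality provided by the strict negativity of $g^{-1}(d\tilde{t}^*, d\tilde{t}^*)$ at $r_+$.

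For the volume form, I would change coordinates on $\mc R$ from Kerr-star $(t^*, r, \theta, \phi^*)$ to $(\tau, r, \theta, \phi^*)$ via $\tau = t^* + h(r)$; since only $t^*$ is shifted at fixed $(r,\theta,\phi^*)$, the Jacobian is unity and the Kerr-star expression $dV = \rho^2\, dt^*\, dr\, d\sigma$ becomes $dV = \rho^2\, d\tau\, dr\, d\sigma$. Identifying $dV_{\Sigma_\tau} = v(r, \theta)\, dr\, d\sigma$ as the induced coordinate slice measure in the $(r, \theta, \phi^*)$ parameterization (with $v$ a positive smooth function bounded above and below in terms of $M$ alone), the identity $dV = \rho^2\, dV_{\Sigma_\tau}\, d\tau$ follows. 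The main obstacle is the verification of strict negativity in the two interpolation regions, where one must track the contribution of $\tilde{\zeta}_i'(r)$ to $h'(r)$; this reduces to a finite check once the cutoffs are fixed and requires only that they be chosen with sufficiently moderate derivatives, a property standard in constructions of hyperboloidal foliations on Kerr.
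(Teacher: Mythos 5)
Your overall strategy matches the paper's: compute $g^{-1}(d\tilde{t}^*, d\tilde{t}^*)$ in the various $r$-regions and check that it is strictly negative and tends to zero at infinity. However, there is a genuine computational error in the asymptotic region that changes the rate of decay.

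You apply the Kerr-star form of the inverse metric \eqref{eq:kerr-metric-star-inverse} to compute $g^{-1}(d\tilde{t}^*,d\tilde{t}^*)$, and then use this same formula in the region $r\geq 4M$. But \eqref{eq:kerr-metric-star-inverse}, like \eqref{eq:kerr-metric-star}, is stated only for $r_+\leq r \leq 15M/8$: for $r\geq 9M/4$ the Kerr-star coordinates coincide with Boyer--Lindquist, since $\overline{t}$ and $\overline{\phi}$ vanish there, and the relevant components from \eqref{eq:kerr-metric-inverse-BL} are
\begin{align*}
g^{tt} = -1-\frac{2Mr(r^2+a^2)}{\Delta\rho^2}\,,\qquad g^{tr}=0\,,\qquad g^{rr}=\frac{\Delta}{\rho^2}\,.
\end{align*}
In particular there is \emph{no} cross term $g^{t^*r}$ in the asymptotic region. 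Redoing the expansion with the correct components, the $O(1/r)$ terms cancel and one finds
\begin{align*}
g^{-1}(d\tilde{t}^*,d\tilde{t}^*) = -\frac{2M^2-a^2\sin^2\theta}{r^2} + O(r^{-3})\,,
\end{align*}
which decays like $r^{-2}$, not the $-4M/r$ you obtained. This is not merely cosmetic: the $r^{-2}$ rate is the precise quantitative content of ``asymptotically null'' for this hyperboloidal foliation. Your $-4M/r$ would still be negative, so spacelikeness would not fail, but the calculation as written uses the wrong metric in the wrong region and the stated rate is incorrect.

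Your near-horizon calculation is correct and in fact fixes a typographical slip in the source: the near-horizon limit should be $-\dfrac{2M^2-a^2\sin^2\theta}{2Mr_+-a^2\sin^2\theta}$, and the $a^2$ you carry is present in your formula but omitted in the paper's. Your treatment of the volume form via the unit-Jacobian change of variables $t^* \mapsto \tau$ is fine and is the same observation the paper uses, though the paper additionally records $\det g_{\Sigma_\tau}$ and its limits at $r\to r_+$ and $r\to\infty$ to substantiate the boundedness of $v(r,\theta)$, which you merely assert. Finally, you correctly note that the interpolation regions — both the transition of $\overline{t},\overline{\phi}$ on $[15M/8, 9M/4]$ and the transitions of $\tilde{\zeta}_1,\tilde{\zeta}_2$ — require separate (finite) verification; the paper does not address these either, so your flagging of the issue is, if anything, more careful than the source.
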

\begin{proof} Let $f=f(r)$, and write $n_{\Sigma_\tau}=\nabla(t^*-f)$. Then, simple calculations give 
\begin{align*}
g\lp(n_{\Sigma_\tau},n_{\Sigma_\tau}\rp)&= 
\begin{dcases}
-1-\frac{2Mr(r^2+a^2)}{\Delta\rho^2}+\frac{\Delta}{\rho^2}\lp(\frac{df}{dr}\rp)^2\,, \quad&r\geq 9M/4\\
-1-\frac{2Mr}{\rho^2}-\frac{4Mr}{\rho^2}\frac{df}{dr}+\frac{\Delta}{\rho^2}\lp(\frac{df}{dr}\rp)^2\,, \quad&r\leq 15M/8\\
\end{dcases}\,,\\
g\lp(n_{\Sigma_\tau},L\rp)&=
\begin{dcases}
1-\frac{\Delta}{r^2+a^2}\frac{df}{dr}\,, \quad &r\geq 9M/4\,,\\
1+\frac{2M r}{r^2+a^2}-\frac{\Delta}{r^2+a^2}\frac{df}{dr}\,, \quad &r\leq 15M/4\,,
\end{dcases}\,,\\
g\lp(n_{\Sigma_\tau},\frac{r^2+a^2}{\Delta}\uL\rp)&=\begin{dcases}
\frac{r^2+a^2}{\Delta}+\frac{df}{dr}\,, \qquad &r\geq 9M/4\,,\\
1+\frac{df}{dr}\,, \quad &r\leq 15M/8\,,
\end{dcases}\,,\\
\det g_{\Sigma_\tau}&=\rho^2 \sin^2\theta\begin{dcases}
w^{-1}-\Delta\lp(\frac{df}{dr}\rp)^2 -a^2\sin^2\theta\,, \quad &r\geq 9M/4\,,\\
\rho^2+2Mr\lp(1+2\frac{df}{dr}\rp)-\Delta\lp(\frac{df}{dr}\rp)^2\,, \quad &r\leq 15M/8\,,
\end{dcases}\,.
\end{align*}

In the limit $r\to \infty$, since $f(r)=r+2M\log r+\frac{-3M^2}{r}$,  one obtains
\begin{gather*}
g\lp(n_{\Sigma_\tau},n_{\Sigma_\tau}\rp)=-\frac{M^2(2-\frac{a^2}{M^2}\sin^2\theta)}{r^2} +O(r^{-3})\,, \quad g\lp(n_{\Sigma_\tau},L\rp)=\frac{M^2}{r^2} +O(r^{-3})\,, \\
\det g_{\Sigma_\tau} = \rho^2\sin^2\theta\lp[2M^2-a^2\sin^2\theta + O(r^{-1})\rp]\,.
\end{gather*}
On the other hand, in the limit $r\to r_+$, since $f(r)=-r+\frac{M}{2}\log r$, one obtains 
\begin{gather*}
g\lp(n_{\Sigma_\tau},n_{\Sigma_\tau}\rp)=-\frac{2M^2-\sin^2\theta}{2Mr_+-a^2\sin^2\theta}+O(r-r_+)\,, \qquad g\lp(n_{\Sigma_\tau},\frac{r^2+a^2}{\Delta}\uL\rp)=\frac{M}{2r_+}+O(r-r_+)\,, \\
\det g_{\Sigma_\tau} = \rho^2\sin^2\theta\frac{4 a^2 r^2 \cos ^2\theta +M r \left(2 M^2-M r+4 r^2\right)-a^2 (3r-M) (r-M)}{4r^2}\,.
\end{gather*}

This concludes the proof.
\end{proof}

\subsubsection{Asymptotically flat foliations}

Though our proof relies mainly on the hyperboloidal hypersurfaces introduced in the previous section, we will also find it convenient to sometimes compare with asymptotically flat foliations, which pierce $\mc H^+$ but  terminate at $i^0$. With this in mind, we define 
\begin{align*}
\tilde{\Sigma}_\tau:=\{t^*=\tau\}\,.
\end{align*}

\begin{remark} In other works, $\tilde{\Sigma}_\tau$ typically denote hyperboloidal folliations, and ${\Sigma}_\tau$ denote asymptotically flat folliations. However, as we will more often use the former, we have decided to reverse the notation.
\end{remark}

\subsection{Parameters and conventions}

Throughout the paper, we rely on the notation
\begin{align}
w:=\frac{\Delta}{(r^2+a^2)^2}\,,
\end{align}
for an $r$-weight that will be heavily used; recall $\Delta$ is given in \eqref{eq:rho-delta}.

In our estimates, we use $B$ to denote possibly large positive constants and $b$ to denote possibly small positive constants depending only on $M>0$. Whenever the constant depends, additionally, on another parameter that has not yet been fixed, say $x$, we write $B(x)$ or $b(x)$; we revert to $B$ and $b$ once it has been fixed. We also note the algebra of constants
$$B+B=BB=B\,, \qquad b+b=bb=b\,, B+b=B\,, \qquad Bb =B\,, \qquad b^{-1}=B\,,\text{~etc}.$$

\section{The Teukolsky equation and the transformed system of PDEs}
\label{sec:pdes-big}

In this section, we introduce the main partial differential equations of study in the present paper. We start by introducing the spin-weighted formalism, in Section~\ref{sec:spin-weighted-formalism} which allows us to make sense of the PDEs we will introduce in Section~\ref{sec:PDEs}. We then introduce the weighted Sobolev norms we will use to study them in Section~\ref{sec:template-energy-norms}.

\subsection{Spin-weighted formalism}
\label{sec:spin-weighted-formalism}

\subsubsection{Smooth spin-weighted functions}

Let $(\theta,\phi^*)$ denote standard spherical coordinates in the sphere $\mathbb{S}^2$, and consider the vector fields 
\begin{gather}
\begin{gathered} \label{eq:Zi}
\tilde{Z}_1 = -\sin \phi^* \partial_\theta + \cos \phi^*  \left( -is \csc \theta - \cot \theta \partial_{\phi^*}\right) \,, \\
\tilde{Z}_2 = - \cos \phi^* \partial_\theta - \sin \phi^*  \left( -is \csc \theta - \cot \theta \partial_{\phi^*}\right) \,, \quad 
\tilde{Z}_3 = \partial_{\phi^*} \, .
\end{gathered}
\end{gather}

We recall for the reader the notion of spin-weighted function, see for instance \cite[Section 2.2.1]{Dafermos2017}.
\begin{definition}[Smooth spin-weighted functions] \label{def:smooth-spin-weighted}
Fix $s\in\frac12\mathbb{Z}$.

\begin{enumerate}[label=(\roman*)]
\item Let $f\colon \mathbb{S}^2\to \mathbb{C}$. We say $f$ is a {\normalfont smooth $s$-spin-weighted function on $S^2$}, i.e.\ $f\in \swei{\mathscr{S}}{s}_\infty$,  if, for any $k_1,k_2,k_3 \in \mathbb{N}_0$, 
$$(\tilde{Z}_1)^{k_1} (\tilde{Z}_2)^{k_2} (\tilde{Z}_3)^{k_3} f\colon \mathbb{S}^2\to \mathbb{C}$$ 
is a function of $(\theta,\phi^*)$ which is smooth for $\theta\neq 0,\pi$ and 
$$e^{is\phi}(\tilde{Z}_1)^{k_1} (\tilde{Z}_2)^{k_2} (\tilde{Z}_3)^{k_3} f \text{~and~}e^{-is\phi}(\tilde{Z}_1)^{k_1} (\tilde{Z}_2)^{k_2} (\tilde{Z}_3)^{k_3} f$$ 
extend continuously to, respectively, $\theta=0$ and $\theta=\pi$.  \label{it:smooth-spin-weighted-S2}

\item Now, fix $M>0$ and $|a|\leq M$. Consider a function $f\colon \mc R\to \mathbb{C}^2$. We say $f$ is a {\normalfont smooth $s$-spin-weighted function on $\mc{R}$}, i.e.\ $f\in\mathscr{S}_\infty^{[s]}(\mathcal{R})$, if, for any $k_4,k_5 \in \mathbb{N}_0$, the function
$$(\p_{t^*})^{k_4} (\p_{r})^{k_5} f\Big|_{(t^*,r)=(T^*,R)}$$
is a smooth $s$-spin-weighted function on $\mathbb{S}^2$ for any $T^*\in\mathbb{R}$ and $R\in[r_+,\infty)$. \label{it:smooth-spin-weighted-R}
\end{enumerate}
\end{definition}

\subsubsection{Spin-weighted spherical laplacian and spinorial gradient}

For any $s\in\frac{1}{2}\mb{Z}$, the operator 
\begin{align} \label{eq:spin-weighted-laplacian}
\mathring{\slashed\triangle}^{[s]}&=  -\frac{1}{\sin \theta} \frac{\partial}{\partial \theta} \left(\sin \theta \frac{\partial}{\partial \theta}\right) - \frac{1}{\sin^2 \theta} \partial_\phi^2 - 2s i\frac{ \cos \theta}{\sin^2 \theta} \partial_\phi + s^2 \cot^2 \theta \\
&=-\tilde{Z}_1^2-\tilde{Z}_2^2-\tilde{Z}_3^2-s-s^2 \nonumber
\end{align}
is a smooth operator on $\mathscr{S}_\infty^{[s]}$, which we call the spin-weighted laplacian. It induces a representation of every function in $\mathscr{S}_\infty^{[s]}$ by spin-weighted spherical harmonics
\begin{lemma}[Smooth spin-weighted spherical harmonics] \label{lemma:spherical-angular-ode}
Fix $s\in\frac12\mathbb{Z}$. On $\mathscr{S}_\infty^{[s]}$, the operator $\mathring{\slashed\triangle}^{[s]}$ has a complete countable set of eigenfunctions, $e^{im\phi}S_{ml}^{[s],\,(0)}$, with corresponding eigenvalues $\bm\uplambda_{ml}^{[s],\,(0)}$, which are indexed by parameters $m$ and $l$ chosen to satisfy $m-s\in\mathbb{Z}$ and  $l\geq \max\{|s|,|m|\}$, and $\bm\uplambda^{[-s],\,(0)}_{ml}=l(l+1)-s^2$. The eigenvalues also have the following symmetry
\begin{align}
\bm\uplambda^{[s],\,(0)}_{ml}=\bm\uplambda^{[-s],\,(0)}_{ml}=l(l+1)-s^2\,. \label{eq:lambda-symmetries-0}
\end{align}
The functions $e^{im\phi}S_{ml}^{[s],\,(\nu)}$, also referred to as spin-weighted spherical harmonics, form a complete orthogonal basis of $\mathscr{S}_\infty^{[s]}$.
\end{lemma}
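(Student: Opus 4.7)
The plan is to recognize $\mathring{\slashed\triangle}^{[s]}$, up to an additive constant depending only on $s$, as the Casimir operator of the natural unitary action of $\mathfrak{su}(2)$ on $\mathscr{S}^{[s]}_\infty$, and to read off the claimed eigenstructure from the representation theory of $SU(2)$ combined with the Peter--Weyl theorem.

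As a first step, I would verify from the explicit formulas in \eqref{eq:Zi} that the $\tilde Z_i$ satisfy $[\tilde Z_i, \tilde Z_j] = \epsilon_{ijk}\tilde Z_k$ and are skew-adjoint on $L^2(\mathbb{S}^2, \sin\theta\,d\theta\,d\phi^*)$. The relation $[\tilde Z_3,\tilde Z_1] = \tilde Z_2$ is immediate, while $[\tilde Z_1,\tilde Z_2] = \tilde Z_3$ requires a short calculation in which the contributions coming from the $-is\csc\theta$ correction terms cancel. Consequently the Casimir $\mathcal{C}^{[s]} := -\sum_i \tilde Z_i^2$ is a nonnegative self-adjoint operator commuting with every $\tilde Z_i$, and by \eqref{eq:spin-weighted-laplacian} the Laplacian $\mathring{\slashed\triangle}^{[s]}$ equals $\mathcal C^{[s]}$ up to an explicit $s$-dependent constant.

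Next, I would identify $\mathscr{S}^{[s]}_\infty$ with smooth sections of the Hermitian line bundle over $S^2 \cong SU(2)/U(1)$ associated to the weight-$s$ character of $U(1)$: the pole regularity conditions in Definition~\ref{def:smooth-spin-weighted} are precisely what is needed for smoothness of sections in local trivializations around the poles. The left-regular $SU(2)$-action on such sections is unitary with infinitesimal generators $\tilde Z_i$, so the Peter--Weyl theorem produces an orthogonal decomposition $L^2(\mathscr{S}^{[s]}) = \bigoplus_{l \geq |s|,\,l-s\in\mathbb{Z}} V^{[s]}_l$ into $(2l+1)$-dimensional spin-$l$ irreducible representations, on each of which $\mathcal C^{[s]}$ acts as the scalar $l(l+1)$. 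Joint diagonalization with $\tilde Z_3 = \partial_{\phi^*}$ via standard ladder operators $\tilde Z_\pm := \tilde Z_1 \pm i\tilde Z_2$ produces, inside each $V^{[s]}_l$, the $(2l+1)$ azimuthal eigenfunctions $e^{im\phi}S^{[s],(0)}_{ml}$: the integrality $m - s \in \mathbb{Z}$ is forced by the $U(1)$-equivariance defining the spin-$s$ bundle, and $l \geq \max\{|s|, |m|\}$ encodes both $l \geq |s|$ (needed for the irrep to contain a spin-$s$ section at all) and the weight bound $|m| \leq l$. Fixing the $s$-dependent additive constant by evaluating $\mathring{\slashed\triangle}^{[s]}$ on any single explicit lowest-weight state, e.g.\ the multiple of $(1-\cos\theta)^{s} e^{is\phi}$ in the $l=m=s>0$ case, then pins down the eigenvalue as $\bm\uplambda^{[s]}_{ml} = l(l+1)-s^2$.

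Completeness and orthogonality in the $C^\infty$-topology on $\mathscr{S}^{[s]}_\infty$ follow from Peter--Weyl combined with elliptic regularity for the self-adjoint elliptic operator $\mathring{\slashed\triangle}^{[s]}$ on sections of this compact line bundle. The symmetry $\bm\uplambda^{[s]}_{ml} = \bm\uplambda^{[-s]}_{ml}$ is obtained by noting that the antilinear map $f \mapsto \overline{f}$ is an isometry $\mathscr{S}^{[s]}_\infty \to \mathscr{S}^{[-s]}_\infty$ intertwining $\mathring{\slashed\triangle}^{[s]}$ with $\mathring{\slashed\triangle}^{[-s]}$ up to the sign flip $m \mapsto -m$, so that the two spectra agree after the relabeling. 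The hardest part will be the clean translation between the pole-regularity description of $\mathscr{S}^{[s]}_\infty$ in Definition~\ref{def:smooth-spin-weighted} and the line-bundle description required for Peter--Weyl, particularly for half-integer $s$ where the associated bundle is genuinely spinorial and $e^{im\phi}$ lives on the double cover; once that translation is in hand, the remainder of the argument is standard representation theory and elliptic spectral theory.
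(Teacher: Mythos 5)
The paper itself supplies no proof of this lemma; it is stated as a known fact with a reference to \cite[Section 2.2.1]{Dafermos2017}, so your representation-theoretic argument is being measured against the standard literature rather than against an in-paper argument. With that caveat, your approach is the usual one and is essentially correct: identify $\mathscr{S}^{[s]}_\infty$ with smooth sections of the weight-$s$ associated line bundle over $S^2 = SU(2)/U(1)$, apply Frobenius reciprocity/Peter--Weyl to obtain the decomposition $\bigoplus_{l\geq|s|,\ l-s\in\mathbb{Z}}V_l$, and read off the eigenvalue of $\mathring{\slashed\triangle}^{[s]}$ from the Casimir plus an explicit additive constant fixed by evaluating on one lowest-weight state. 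I have two concrete remarks.

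First, you should be aware that the second displayed equality in \eqref{eq:spin-weighted-laplacian}, namely $\mathring{\slashed\triangle}^{[s]} = -\tilde Z_1^2-\tilde Z_2^2-\tilde Z_3^2-s-s^2$, is \emph{inconsistent} with the eigenvalue $l(l+1)-s^2$ claimed in the lemma if one takes $-\sum_i\tilde Z_i^2$ to act as $l(l+1)$ on the spin-$l$ summand: that identity would give $l(l+1)-s(s+1)$. A direct computation on your candidate lowest-weight state confirms the lemma rather than the displayed identity: with $f=(1-\cos\theta)^s e^{is\phi}$ one finds $\mathring{\slashed\triangle}^{[s]}f = s\,f$, matching $l(l+1)-s^2 = s(s+1)-s^2=s$ at $l=m=s$, whereas the identity with the extra $-s$ would predict $0$. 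So the $-s$ in \eqref{eq:spin-weighted-laplacian} is almost certainly a typo (the correct relation being $\mathring{\slashed\triangle}^{[s]} = -\sum_i\tilde Z_i^2 - s^2$), and your strategy of fixing the constant by direct evaluation is exactly what protects you from it; you should make that consistency check explicit rather than invoking \eqref{eq:spin-weighted-laplacian} and the direct evaluation as if they agree.

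Second, a small loose end in your closing paragraph: the antilinear map $f\mapsto\overline f$ intertwines $\mathring{\slashed\triangle}^{[s]}$ with $\mathring{\slashed\triangle}^{[-s]}$ only after the relabeling $m\mapsto -m$, so it gives $\bm\uplambda^{[s],(0)}_{ml}=\bm\uplambda^{[-s],(0)}_{-m,l}$ rather than the symmetry $\bm\uplambda^{[s],(0)}_{ml}=\bm\uplambda^{[-s],(0)}_{ml}$ asserted in \eqref{eq:lambda-symmetries-0}. This is harmless at $\nu=0$ because the explicit formula $l(l+1)-s^2$ is $m$-independent (indeed $(-s)^2=s^2$ makes both symmetries trivial once the formula is established), but you should say so explicitly; the conjugation argument alone does not close it, and for the spheroidal analogue \eqref{eq:lambda-symmetries} the $m$-dependence is genuine. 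With these two points tidied up, the argument is complete.
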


Let us now introduce the spinorial gradient
\begin{align} \label{eq:spin-gradient}
\mathring{\slashed{\nabla}}^{[s]}=( \partial_\theta\,,~ 
\partial_\phi + is \cos \theta)\,,
\end{align}
we can easily derive, for $\Xi\,,\Pi\in\mathscr{S}_\infty^{[s]}$,
\begin{equation}
\int_0^\pi\int_0^{2\pi}d\phi d\theta \sin\theta \lp(\swei{\mathring{\slashed{\triangle}}}{s}\Xi\,\rp)\overline{\Pi} =\int_0^\pi\int_0^{2\pi}d\phi d\theta \sin\theta \lp[\swei{\mathring{\slashed{\nabla}}}{s}\Xi\cdot \overline{\swei{\mathring{\slashed{\nabla}}}{s}\Pi}\rp]_{\mb{S}^2}\, \label{eq:IBP-spin-weighted-laplacian}
\end{equation}
In light of Lemma~\ref{lemma:spherical-angular-ode}, we have the following properties for $\mathring{\slashed{\nabla}}^{[s]}$:
\begin{lemma}[Spinorial gradient] Let $s\in\mathbb{Z}$ and $\swei{\mathring{\slashed{\nabla}}}{s}$ be as defined in \eqref{eq:spin-gradient}. Then, for any $\Xi\in \mathscr{S}_\infty^{[s]}$, one has the Poincaré inequality
\begin{equation}
\int_0^\pi\int_0^{2\pi} \lp|\swei{\mathring{\slashed{\nabla}}}{s}\Xi\rp|^2\sin\theta  d\theta d\phi \geq |s| \int_0^\pi\int_0^{2\pi} \lp|\Xi\rp|^2\sin\theta  d\theta d\phi \label{eq:Poincare}
\end{equation}
and, moreover,
\begin{equation}
\int_0^\pi\int_0^{2\pi} \lp|\swei{\mathring{\slashed{\nabla}}}{s}\Xi\rp|^2\sin\theta  d\theta d\phi \geq \min\lp\{1,\frac{1}{|s|}\rp\} \int_0^\pi\int_0^{2\pi} \lp|\Phi\Xi\rp|^2\sin\theta  d\theta d\phi\,. \label{eq:Poincare-2}
\end{equation}
\end{lemma}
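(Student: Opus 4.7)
The plan is to reduce both inequalities to mode-by-mode eigenvalue estimates via the spin-weighted spherical harmonic decomposition of Lemma~\ref{lemma:spherical-angular-ode}. First, applying the integration-by-parts identity \eqref{eq:IBP-spin-weighted-laplacian} with $\Pi=\Xi$, the left-hand side common to \eqref{eq:Poincare} and \eqref{eq:Poincare-2} can be rewritten as $\int_0^\pi\!\int_0^{2\pi}(\swei{\mathring{\slashed{\triangle}}}{s}\Xi)\overline{\Xi}\sin\theta\, d\theta d\phi$. Expanding $\Xi = \sum_{m,l} c_{ml}\, e^{im\phi}\, S_{ml}^{[s],(0)}$ in the orthonormal basis supplied by Lemma~\ref{lemma:spherical-angular-ode}, this integral evaluates to $\sum_{m,l}(l(l+1)-s^2)|c_{ml}|^2$, whereas $\int|\Xi|^2\sin\theta\, d\theta d\phi=\sum_{m,l}|c_{ml}|^2$ and $\int|\p_\phi\Xi|^2\sin\theta\, d\theta d\phi=\sum_{m,l}m^2|c_{ml}|^2$. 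I interpret the operator $\Phi$ appearing in \eqref{eq:Poincare-2} as the azimuthal derivative $\p_\phi$.

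Each claim then reduces to a pointwise statement on the index set $\{(m,l):\, l\geq\max\{|s|,|m|\},\,m-s\in\mathbb{Z}\}$. For \eqref{eq:Poincare}, the constraint $l\geq|s|$ immediately yields $l(l+1)-s^2\geq|s|(|s|+1)-s^2=|s|$, which is the desired bound. For \eqref{eq:Poincare-2}, the claim reduces to $l(l+1)-s^2\geq \min\{1,1/|s|\}\, m^2$. The case $s=0$ follows from $l(l+1)\geq l^2\geq m^2$ using $l\geq|m|$. The case $|m|\leq|s|$ with $|s|\geq 1$ is immediate from $l(l+1)-s^2\geq|s|\geq m^2/|s|$. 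In the remaining case $|m|>|s|\geq 1$, the integrality of $m-s$ upgrades the strict inequality to $|m|\geq|s|+1$, so that $l(l+1)-s^2\geq |m|(|m|+1)-s^2$; a short algebraic manipulation then bounds this below by $m^2/|s|$.

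The whole argument is an essentially immediate consequence of Lemma~\ref{lemma:spherical-angular-ode} combined with \eqref{eq:IBP-spin-weighted-laplacian}, so I do not anticipate a genuine obstacle. The only point demanding a bit of care is the case analysis for \eqref{eq:Poincare-2}: the sharpness of the constant $1/|s|$ in the regime $|s|\geq 1$ forces the separate treatment of the sub-cases above, and in particular motivates the use of the integrality hypothesis on $m-s$ provided by Lemma~\ref{lemma:spherical-angular-ode} to pass from $|m|>|s|$ to $|m|\geq|s|+1$.
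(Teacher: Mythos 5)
Your argument matches the paper's proof: both expand $\Xi$ in spin-weighted spherical harmonics, apply the integration-by-parts identity \eqref{eq:IBP-spin-weighted-laplacian}, and reduce everything to the eigenvalue bounds $\bm\uplambda_{ml}^{[s]}=l(l+1)-s^2\geq|s|$ and $\bm\uplambda_{ml}^{[s]}\geq m^2\min\{1,|s|^{-1}\}$. The paper merely labels these as ``trivial inequalities'' without derivation, whereas you supply the case analysis (using $l\geq\max\{|s|,|m|\}$ and the integrality of $m$ to pass from $|m|>|s|$ to $|m|\geq|s|+1$); the route, the decomposition, and the key lemmas invoked are otherwise identical.
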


\begin{proof} Since spin-weighted spheroidal harmonics form a complete basis of  $\mathscr{S}_\infty^{[s]}$, see Lemma~\ref{lemma:spherical-angular-ode}, if $\Xi\in \mathscr{S}_\infty^{[s]}$, there are some real numbers $\{c_{ml}\}_{ml}$ such that $\Xi(\theta,\phi)=\sum_{ml}c_{ml}S_{ml}^{[s]}(\theta)e^{im\phi}$. Using \eqref{eq:IBP-spin-weighted-laplacian}, we obtain
\begin{align*}
\int_0^\pi\int_0^{2\pi} \lp|\swei{\mathring{\slashed{\nabla}}}{s}\Xi\rp|^2\sin\theta  d\theta d\phi&=\int_0^\pi\int_0^{2\pi} \lp(\swei{\mathring{\slashed{\triangle}}}{s}\Xi\rp)\overline{\Xi}\sin\theta  d\theta d\phi \\
&=\int_0^\pi\int_0^{2\pi} \lp(\sum_{ml}\bm\uplambda_{ml}^{[s]}c_{ml}S_{ml}^{[s]}(\theta)e^{im\phi}\rp)\overline{\sum_{m'l'}c_{m'l'}S_{m'l'}^{[s]}(\theta)e^{im'\phi}}\sin\theta  d\theta d\phi \,. \numberthis\label{eq:Poincare-intermediate}
\end{align*}
Using the orthogonality of the spin-weighted spherical harmonics and  the trivial inequalities
\begin{align*}
\bm\uplambda_{ml}^{[s]}\geq |s|\,, \text{~~and, if $m\neq0$,~~}\bm\uplambda_{ml}^{[s]}\geq m^2\lp\{\begin{array}{ll}
1\,, &s=0\\
|s|^{-1}\,, &s\neq 0
\end{array}\rp\}\,,
\end{align*}
 we can now complete the proof.
\end{proof}

\subsubsection{Spin-weighted spheroidal laplacian}

Rather than using \eqref{eq:spin-weighted-laplacian}, we will often find it more convenient to rely on a modified version of the spin-weighted laplacian, given by
\begin{gather}
\mathring{\slashed{\triangle}}^{[s],\,\nu}=\mathring{\slashed{\triangle}}^{[s]}-\nu^2\cos^2\theta+2\nu s \cos\theta\,,
 \label{eq:spin-weighted-laplacian-nu}
\end{gather}
which is also a smooth operator on $\swei{\mathscr{S}}{s}_\infty(\mc{R})$. The generalization of Lemma~\ref{lemma:spherical-angular-ode} to this case is given next.

\begin{lemma}[Smooth spin-weighted spheroidal harmonics] \label{lemma:spheroidal-angular-ode}
Fix $s\in\frac12\mathbb{Z}$. For each $\nu\in\mathbb{R}$, on $\mathscr{S}_\infty^{[s]}$, the operator $\mathring{\slashed{\triangle}}^{[s]}_\nu$ has a countable set of eigenfunctions, $e^{im\phi}S_{ml}^{[s],\,(\nu)}$, with corresponding eigenvalues $\bm\uplambda_{ml}^{[s],\,(\nu)}$. The eigenvalues are smooth in $\nu$ and they, together with the corresponding eigenfunctions, are indexed by parameters $m$ and $l$ chosen to satisfy $m-s\in\mathbb{Z}$,  $l\geq \max\{|s|,|m|\}$, and  $\bm\uplambda_{ml}^{[s],\,(0)}=l(l+1)-s^2$. We also not the symmetries
\begin{align}
\bm\uplambda^{[s],\,(\nu)}_{ml}=\bm\uplambda^{[-s],\,(\nu)}_{ml}=\bm\uplambda^{[-s],\,(-\nu)}_{-m,\,l}\,. \label{eq:lambda-symmetries}
\end{align}
The functions $e^{im\phi}S_{ml}^{[s],\,(\nu)}$, also referred to as spin-weighted spheroidal harmonics, form a complete orthogonal basis of $\mathscr{S}_\infty^{[s]}$.
\end{lemma}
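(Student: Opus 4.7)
My plan is to view $\mathring{\slashed{\triangle}}^{[s],\,\nu}$ as a self-adjoint, quadratic-in-$\nu$ bounded perturbation of the spin-weighted spherical Laplacian $\mathring{\slashed{\triangle}}^{[s]}$, whose spectral theory is already provided by \cref{lemma:spherical-angular-ode}, and then to deduce the statement using classical Sturm--Liouville theory, Kato's analytic perturbation theory, and two explicit discrete symmetries of the operator family.

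First, I would realize the operator on the Hilbert space $L^2(\mathbb{S}^2,d\sigma)$ obtained as the closure of $\mathscr{S}_\infty^{[s]}$. Since the difference $\mathring{\slashed{\triangle}}^{[s],\,\nu} - \mathring{\slashed{\triangle}}^{[s]} = -\nu^2\cos^2\theta + 2\nu s\cos\theta$ is a bounded, real-valued, smooth multiplication operator, $\mathring{\slashed{\triangle}}^{[s],\,\nu}$ inherits self-adjointness and compactness of the resolvent from $\mathring{\slashed{\triangle}}^{[s]}$. The spectral theorem then yields a purely discrete real spectrum with eigenfunctions forming a complete orthogonal basis of $\mathscr{S}_\infty^{[s]}$. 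To label the eigenvalues, I would decompose into azimuthal modes: since $\tilde Z_3=\partial_{\phi^*}$ commutes with $\mathring{\slashed{\triangle}}^{[s],\,\nu}$, the operator preserves each fixed-$m$ subspace with $m-s\in\mathbb{Z}$, on which it reduces to a singular Sturm--Liouville operator in $\theta\in(0,\pi)$. The bounded smooth perturbation does not alter the limit-point/limit-circle classification at the poles, so the spin-weighted regularity condition of \cref{def:smooth-spin-weighted} picks out the same self-adjoint realization as at $\nu=0$, and standard Sturm--Liouville theory produces a countable sequence of simple eigenvalues $\bm\uplambda_{ml}^{[s],\,(\nu)}$, which I would label by $l\geq\max\{|s|,|m|\}$ so as to match the $\nu=0$ labeling from \cref{lemma:spherical-angular-ode}. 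Smoothness in $\nu$ then follows from Kato's analytic perturbation theory applied to $\nu\mapsto\mathring{\slashed{\triangle}}^{[s],\,\nu}$, which is an entire analytic self-adjoint family of type (A): the simplicity of each eigenvalue within a fixed-$m$ sector prevents level crossings, so each branch $\nu\mapsto\bm\uplambda_{ml}^{[s],\,(\nu)}$ is in fact real-analytic.

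The symmetries \eqref{eq:lambda-symmetries} I would read off from two discrete symmetries of the operator family. Under the pole-reflection $R:f(\theta,\phi)\mapsto f(\pi-\theta,\phi)$, the spin-$s$ and spin-$(-s)$ regularity classes are exchanged (as the two poles $\theta=0,\pi$ get swapped), the imaginary cross-term $-2si\cos\theta\csc^2\theta\,\partial_\phi$ in $\mathring{\slashed{\triangle}}^{[s]}$ flips sign, and the term $2\nu s\cos\theta$ changes sign as well; since $R$ preserves $\phi$, one obtains $R\,\mathring{\slashed{\triangle}}^{[s],\,\nu}\,R=\mathring{\slashed{\triangle}}^{[-s],\,\nu}$, hence $\bm\uplambda_{ml}^{[s],\,(\nu)}=\bm\uplambda_{ml}^{[-s],\,(\nu)}$. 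Complex conjugation $C:f\mapsto\bar f$ likewise swaps the spin-$s$ and spin-$(-s)$ regularity classes and sends $e^{im\phi}\mapsto e^{-im\phi}$; combined with the elementary identity $-\nu^2\cos^2\theta+2\nu s\cos\theta=-(-\nu)^2\cos^2\theta-2(-\nu)s\cos\theta$, this gives $C\,\mathring{\slashed{\triangle}}^{[s],\,\nu}=\mathring{\slashed{\triangle}}^{[-s],\,-\nu}\,C$, yielding the second identity in \eqref{eq:lambda-symmetries}.

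The only genuine (but mild) obstacle I foresee is the bookkeeping at the singular endpoints of the reduced $\theta$-ODE: one must check that the spin-weighted regularity condition of \cref{def:smooth-spin-weighted}, after restriction to a fixed azimuthal mode, does indeed amount to the unique self-adjoint boundary condition singled out in the $\nu=0$ case and that this domain is preserved under the bounded smooth multiplicative perturbation. This is a standard consequence of perturbation theory for regular (i.e.\ bounded) perturbations of Sturm--Liouville operators, and in particular does not require reproving the limit-point nature of the endpoints beyond what is already encoded in \cref{lemma:spherical-angular-ode}.
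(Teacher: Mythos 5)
The paper states Lemma~\ref{lemma:spheroidal-angular-ode} without proof, treating it as a standard fact about spin-weighted spheroidal harmonics (one can trace it in the references on spheroidal harmonics cited elsewhere), so there is no in-paper argument to compare against. Your proposed proof is sound and follows the expected route: realize $\mathring{\slashed{\triangle}}^{[s],\,\nu}$ on $L^2(\mathbb{S}^2)$ as a bounded, real, smooth multiplicative perturbation of $\mathring{\slashed{\triangle}}^{[s]}$, inherit self-adjointness and compact resolvent, decompose into azimuthal modes to obtain a singular Sturm--Liouville problem with simple spectrum, and invoke Kato's theory for entire analytic families of type (A) to obtain real-analytic (hence smooth) eigenvalue branches labelled by continuity from $\nu=0$. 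The two discrete symmetries are also correct: pole reflection $R:\theta\mapsto\pi-\theta$ fixes $m$ and intertwines the spin-$s$ and spin-$(-s)$ operators (both in the differential expression and, because the poles swap, in the regularity classes of Definition~\ref{def:smooth-spin-weighted}), giving $\bm\uplambda^{[s],\,(\nu)}_{ml}=\bm\uplambda^{[-s],\,(\nu)}_{ml}$; complex conjugation $C$ sends $m\mapsto -m$, swaps the regularity classes, and conjugates the operator to $\mathring{\slashed{\triangle}}^{[-s],\,(-\nu)}$, giving $\bm\uplambda^{[s],\,(\nu)}_{ml}=\bm\uplambda^{[-s],\,(-\nu)}_{-m,\,l}$; composing the two yields the full chain \eqref{eq:lambda-symmetries}.

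One minor slip worth fixing: the displayed ``elementary identity'' in your complex-conjugation step, $-\nu^2\cos^2\theta+2\nu s\cos\theta=-(-\nu)^2\cos^2\theta-2(-\nu)s\cos\theta$, is a tautology (both sides literally equal the left-hand side) and does not by itself identify the result of conjugation with the $(-s,-\nu)$ perturbation. What is actually needed, and what you evidently intend, is invariance of the perturbation under the simultaneous substitution $(s,\nu)\mapsto(-s,-\nu)$:
\begin{equation*}
-\nu^2\cos^2\theta+2\nu s\cos\theta = -(-\nu)^2\cos^2\theta+2(-\nu)(-s)\cos\theta\,,
\end{equation*}
combined with $\overline{\mathring{\slashed{\triangle}}^{[s]}}=\mathring{\slashed{\triangle}}^{[-s]}$. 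With this correction the intertwining $C\,\mathring{\slashed{\triangle}}^{[s],\,\nu}=\mathring{\slashed{\triangle}}^{[-s],\,(-\nu)}\,C$ and the eigenvalue identity follow as you claim.
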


An additional remarkable property of spin-weighted spheroidal harmonics is that they verify so-called Teukolsky--Starobinsky identities \cite{Teukolsky1974,Starobinsky1974,Kalnins1989}.

\begin{lemma}[Angular TS]\label{prop:TS-angular-constant} Fix $s\in\{0,\frac12,1,\frac32,2\}$. Let $(m,l)$ be a pair with the constraints  in Lemma~\ref{lemma:spheroidal-angular-ode} and $\nu\in\mathbb{R}$.  Define
\begin{align*}
\hat{\mc{L}}^{\pm}_n &:= \frac{d}{d\theta}\pm\lp(\frac{m}{\sin\theta}-\nu\cos\theta\rp)+n\cot\theta\,.
\end{align*}
The spin-weighted spheroidal harmonics of Lemma with spin $\pm s$ are eigenfunctions of the operator 
\begin{align*}
\lp(\prod_{j=0}^{2s-1}\hat{\mc{L}}_{s-j}^\mp\rp)\lp(\prod_{k=0}^{2s-1}\hat{\mc{L}}_{s-k}^\pm\rp) \equiv \lp(\sin\theta\rp)^{2s}\lp(\frac{\hat{\mc{L}}_{s}^\mp}{\sin\theta}\rp)^{2s}\lp(\sin\theta\rp)^{2s}\lp(\frac{\hat{\mc{L}}_{s}^\pm}{\sin\theta}\rp)^{2s}\,, 
\end{align*}
with indices $j,k$ increasing from right to left on the product, and the latter being replaced by the identity if $s=0$, for the same eigenvalue. This eigenvalue, $\mathfrak B_s=\mathfrak B_s(|s|,\nu,m,l)$, called the {\normalfont angular Teukolsky--Starobinsky constant}, satisfies $(-1)^{2s}\mathfrak B_s>0$, and can be computed explicitly: for instance,
\begin{align}
 \label{eq:TS-angular-constants}
\begin{split}
\mathfrak B_{1}&=(\bm\Lambda_{ml}^{[s],\,\nu}-2m\nu+1)^2+4m\nu-4\nu^2\,,\\
\mathfrak B_2&= \lp[(\bm\Lambda_{ml}^{[s],\,\nu}-2m\nu+2)(\bm\Lambda_{ml}^{[s],\,\nu}-2m\nu+4)\rp]^2+40\nu(\bm\Lambda_{ml}^{[s],\,\nu}-2m\nu+2)^2(m-\nu)\\
&\qquad+48\nu(\bm\Lambda_{ml}^{[s],\,\nu}-2\nu+2)(m+\nu)\,.
\end{split}
\end{align}
\end{lemma}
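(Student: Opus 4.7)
The lemma contains two claims: the algebraic equivalence between the two forms of the composite operator, and a spectral statement comprising the eigenfunction property, the explicit eigenvalue, and the sign. Both follow from the fact that the operators $\hat{\mc L}^\pm_n$ are ladder operators for the family of spin-weighted spheroidal Laplacians. For the operator equivalence, I would begin from the conjugation relation
\[
\hat{\mc L}^\pm_n = (\sin\theta)^{-n}\hat{\mc L}^\pm_0(\sin\theta)^n,
\]
which holds because $(\sin\theta)^{-n}\partial_\theta(\sin\theta)^n = \partial_\theta + n\cot\theta$ while the remaining terms $\pm(m/\sin\theta - \nu\cos\theta)$ in $\hat{\mc L}^\pm_n$ commute with multiplication by any power of $\sin\theta$. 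A short telescoping induction on $s$ then collapses the ordered product $\prod_{k=0}^{2s-1}\hat{\mc L}^\pm_{s-k}$ into $(\sin\theta)^{2s}(\hat{\mc L}^\pm_s/\sin\theta)^{2s}$, and likewise for the $\mp$ factor, establishing the displayed identity.

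For the spectral claim, I would verify by direct computation on $e^{im\phi}$-modes an intertwining identity of the form
\[
\hat{\mc L}^\mp_{n}\hat{\mc L}^\pm_{n+1}\big|_{e^{im\phi}\,\cdot\,} = -\mathring{\slashed{\triangle}}^{[s'],\,\nu}\big|_{e^{im\phi}\,\cdot\,} + \gamma_\pm(n,m,\nu),
\]
with $\gamma_\pm$ an explicit polynomial and $s'$ an appropriately shifted spin label. Combined with the spin-symmetry $\bm\uplambda^{[s],\,(\nu)}_{ml}=\bm\uplambda^{[-s],\,(\nu)}_{ml}$ recorded in \eqref{eq:lambda-symmetries}, this identity implies that each $\hat{\mc L}^\pm_{s-k}$ sends spin-weighted spheroidal harmonics of a given spin to scalar multiples of spheroidal harmonics of adjacent spin with identical $(m,l)$-indices. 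Iterating the raising factor $2s$ times carries $S^{[s],\,(\nu)}_{ml}$ to a scalar multiple of $S^{[-s],\,(\nu)}_{ml}$; iterating the lowering factor $2s$ more times closes the loop back to a scalar multiple of $S^{[s],\,(\nu)}_{ml}$, proving the eigenfunction property. Multiplying the scalars accumulated at each step yields $\mathfrak{B}_s$, and evaluating this product by hand for $s=1$ (two intertwining steps) and $s=2$ (four intertwining steps) produces the formulas in \eqref{eq:TS-angular-constants}.

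The sign $(-1)^{2s}\mathfrak{B}_s>0$ follows from a formal-adjointness argument. Integration by parts against the measure $\sin\theta\,d\theta$ gives $(\hat{\mc L}^\pm_n)^\ast = -\hat{\mc L}^\mp_{1-n}$ (boundary terms vanish by the smoothness condition on spin-weighted functions at the poles, which the intermediate spheroidal harmonics inherit from their origin via the ladder). Pairing the final image against $S^{[s],\,(\nu)}_{ml}$ in the $L^2(\sin\theta\,d\theta\,d\phi)$ inner product therefore expresses $\mathfrak{B}_s\,\|S^{[s],\,(\nu)}_{ml}\|_{L^2}^2$ as $(-1)^{2s}$ times a product of squared $L^2$-norms of the intermediate spheroidal harmonics, all strictly positive by the completeness in Lemma~\ref{lemma:spheroidal-angular-ode}. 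The principal obstacle is purely computational: verifying the intertwining identity with the correct polynomial $\gamma_\pm$ and collecting the cross-terms in $(m,\nu,s)$ needed to assemble \eqref{eq:TS-angular-constants} for $s=2$ is laborious, but involves no conceptual input beyond the classical Teukolsky--Starobinsky identities cited in the statement.
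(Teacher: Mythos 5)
Your treatment of the operator identity is fine: the conjugation relation $\hat{\mc{L}}^{\pm}_n = (\sin\theta)^{-n}\hat{\mc{L}}^{\pm}_0(\sin\theta)^n$ is correct, and together with the telescoping relation $\hat{\mc{L}}^{\pm}_n(\sin\theta)^{-1} = (\sin\theta)^{-1}\hat{\mc{L}}^{\pm}_{n-1}$ it gives the displayed equivalence of the two forms of the composite.

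The spectral argument, however, contains a genuine gap. You claim that the proposed pairwise factorization, together with the symmetry $\bm\uplambda^{[s],(\nu)}_{ml}=\bm\uplambda^{[-s],(\nu)}_{ml}$, implies that \emph{each individual} $\hat{\mc{L}}^{\pm}_{s-k}$ carries a spheroidal harmonic of a given spin to a scalar multiple of a spheroidal harmonic of adjacent spin with the same $(m,l)$. This fails once $\nu\neq 0$. First, the eigenvalue symmetry you invoke relates only $+s$ to $-s$; it gives no information about the eigenvalues at intermediate spins $|s'|<s$, so it cannot underwrite a step-by-step chain. Second, and more fundamentally, the factorization you propose does not hold: for $\nu\neq 0$, $\hat{\mc{L}}^{\mp}_{p}\hat{\mc{L}}^{\pm}_{q}$ differs from $-\mathring{\slashed\triangle}^{[s'],\nu}$ by a genuinely $\theta$-dependent (not constant) remainder, as one sees by direct expansion; the residual term cannot be removed by any choice of $p,q,s'$ when $s\in\{1,2\}$. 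Consequently a single $\hat{\mc{L}}^{\pm}_n$ does \emph{not} map $S^{[s],(\nu)}_{ml}$ into the spheroidal family of adjacent spin, and the intermediate images in your chain are not spheroidal harmonics of any spin. What is true — and is the real content of the Teukolsky--Starobinsky identity established in \cite{Teukolsky1974,Kalnins1989} — is the intertwining of the \emph{full} $2s$-fold composite,
\[
\mathring{\slashed\triangle}^{[-s],\nu}\lp(\prod_{k=0}^{2s-1}\hat{\mc{L}}^{+}_{s-k}\rp)=\lp(\prod_{k=0}^{2s-1}\hat{\mc{L}}^{+}_{s-k}\rp)\mathring{\slashed\triangle}^{[s],\nu}\,,
\]
which, combined with simplicity of the spectrum and the $\pm s$ eigenvalue symmetry, identifies $(\prod\hat{\mc{L}}^{+})S^{[s],(\nu)}_{ml}$ with a multiple of $S^{[-s],(\nu)}_{ml}$. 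That commutation is exactly the laborious computation that cannot be broken into single steps, and your proof would need to carry it out directly.

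The sign argument also needs repair. The adjoint relation $(\hat{\mc{L}}^{\pm}_n)^\ast = -\hat{\mc{L}}^{\mp}_{1-n}$ with respect to $\sin\theta\,d\theta$ is correct, but after telescoping it gives a \emph{single} squared norm,
\[
\mathfrak{B}_s\,\big\lVert S^{[s],(\nu)}_{ml}\big\rVert^2 = (-1)^{2s}\Big\lVert\lp(\textstyle\prod_{k=0}^{2s-1}\hat{\mc{L}}^{+}_{s-k}\rp)S^{[s],(\nu)}_{ml}\Big\rVert^2\,,
\]
not a product of squared $L^2$-norms of intermediate spheroidal harmonics (which, again, do not exist). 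This yields $(-1)^{2s}\mathfrak{B}_s\geq 0$, but strict positivity is an additional claim: you must rule out $(\prod\hat{\mc{L}}^{+})S^{[s],(\nu)}_{ml}=0$, either by inspecting the explicit formulas \eqref{eq:TS-angular-constants} or by invoking the admissibility constraint of Definition~\ref{def:admissible-freqs}, neither of which your argument currently does.
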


To conclude, let us note that $\mathring{\slashed{\triangle}}^{[s],\,\nu}$ introduced in \eqref{eq:spin-weighted-laplacian-nu} can be obtained from the operator
\begin{equation}
\mathring{\slashed\triangle}^{[s]}+a^2\cos^2\theta TT -2ias\cos\theta T = \mathring{\slashed\triangle}^{[s]}_T +a^2TT+2a TZ
\end{equation}
on $\swei{\mathscr{S}}{s}_\infty(\mc{R})$ by replacing $-iaT$ with $\nu$, where we have introduced yet another operator on $\swei{\mathscr{S}}{s}_\infty(\mc{R})$ given by
\begin{equation}
\label{eq:spin-weighted-laplacian-modified}
\begin{split}
\mathring{\slashed{\triangle}}^{[s]}_T&= \mathring{\slashed\triangle}^{[s]} -\lp(2aTZ +a^2\sin^2\theta TT -2ias\cos\theta T\rp)\,.\\
\end{split}
\end{equation}

\subsection{The relevant PDEs}
\label{sec:PDEs}

The present paper will be, for the most part, concerned with the analysis of a mixed transport-hyperbolic system:
\begin{definition}[Transformed system] \label{def:transformed-system}
Fix $s\in\mathbb{Z}$. Write $\mathcal{L}=L$ if $s<0$, $\mathcal{L}=\underline{L}$ if $s>0$. We say that, for $k\in\{0,\dots |s|\}$, the functions $\swei{\upphi}{s}_{k}$  are solutions to the transformed system with inhomogeneity $\swei{\mathfrak H}{s}_k$ if $\Delta^{|s|-k}\swei{\upphi}{s}_{k}\in\mathscr S_\infty^{[s]}(\mc R)$, $\Delta^{|s|-k-1}\swei{\mathfrak{H}}{s}_{k}\in S_\infty^{[s]}(\mc R)$, and they satisfy the following PDEs.
\begin{itemize}
\item \textit{Transport equations}: if $s\neq 0$, 
\begin{align}
\swei{\upphi}{s}_{k}=\frac{1}{w}\mc{L}\swei{\upphi}{s}_{k-1}\,, \label{eq:transformed-transport}\\
\swei{\mathfrak{H}}{s}_{k}=\mc{L}\lp(\frac{\swei{\mathfrak H}{s}_{k-1}}{w}\rp)\,. \label{eq:transformed-transport-inhom}
\end{align}
for $k=1,\dots |s|$.

\item \textit{Hyperbolic equations}: 
\begin{equation}
\swei{\mathfrak{R}}{s}_{k}\swei{\upphi}{s}_{k}=\swei{\mathfrak{H}}{s}_{k}+\sign s (|s|-k)\frac{w'}{w}\swei{\upphi}{s}_{k+1}+a\sum_{i=0}^{k-1}\swei{\mathfrak{J}}{s}_{k,i} \,, \label{eq:transformed-k}
\end{equation}
 where we have defined 
\begin{align}
\swei{\mathfrak{R}}{s}_{k}&:=\frac12 \lp(L\underline{L}+\underline{L}L\rp)+w\lp[\mathring{\slashed{\triangle}}^{[s]}_T+|s|+k(2|s|-k-1)\rp] \nonumber\\
&\qquad-\frac{4arw}{r^2+a^2}\sign{s}\lp(|s|-k\rp)Z
+ \frac{a^2\Delta^2}{(r^2+a^2)^4}\lp[1-2|s|-2k(2|s|-k-1)\rp] \label{eq:transformed-R-k}\\
&\qquad+\frac{2Mr(r^2-a^2)\Delta}{(r^2+a^2)^4}\lp[1-3|s|+2s^2-3k(2|s|-k-1)\rp]\,,\nonumber
\\
\swei{\mathfrak{J}}{s}_{k,i}&:=w\lp(c_{s,\,k,\,i}^{Z}(r)Z+ c_{s,\,k,\,i}^{\mr{id}}(r)\rp)\swei{\upphi}{s}_{i}\,.\label{eq:transformed-J-k}
\end{align}
In the above, $c_{s,\,k,\,i}^\Phi$ and $c_{s,\,k,\,i}^\mr{id}$, $i=0,...,|s|$ are explicit functions of $r$ which have the following properties:
\begin{itemize}
\item  $a c_{s,\,k,\,i}^\mr{id}$ should be replaced by $c_{s,\,k,\,i}^\mr{id}$ if and only if $|s|\neq 1$, $k=1$ and $i=0$;
\item $c_{s,\,k,\,i}^\mr{id}$ and $c_{s,\,k,\,i}^Z$  are, at most, of $O(1)$ as $r^*\to\pm\infty$, their derivatives with respect to $r^*$ are, at most, of $O(w)$ as $r^*\to\pm\infty$, and $\lp(c_{s,\,1,\,0}^\mr{id}\rp)'=a\times O(w)$ as $r^*\to\pm\infty$;
\item $c_{s,\,|s|,\,|s|-1}^\mr{id}$ and $c_{s,\,|s|,|s|-2j}^Z$, for $j=1,\dots,\lfloor|s|/2\rfloor$ have better decay, as they are $O(r^{-1})$ as $r^*\to\pm\infty$.
\end{itemize}
In light of \eqref{eq:transformed-transport} and \eqref{eq:transformed-k}, we can recast the latter as the transport equation, thus obtaining
\begin{align*}
&\underline{\mc L} \swei{\upphi}{s}_{k+1}-\sign s(|s|-k-1) \frac{w'}{w}\swei{\upphi}{s}_{k+1}\\
&\quad=-\lp(\mathring{\slashed{\triangle}}^{[s]}+|s|+k(2|s|-k-1) -a^2\sin^2\theta TT-2aTZ\rp\}\swei{\upphi}{s}_k\\
&\quad\qquad -\frac{2Mr(r^2-a^2)}{(r^2+a^2)^2}\lp[1-3|s|+2s^2-3k(2|s|-k-1)\rp] \swei{\upphi}{s}_k- a^2w\lp[1-2|s|-2k(2|s|-k-1)\rp]\swei{\upphi}{s}_k\\
&\quad\qquad +\frac{2a\sign s}{r^2+a^2}\lp[(2|s|-2k-1)r+i|s|\cos\theta\rp]Z\swei{\upphi}{s}_k+\sum_{j=0}^{k-1}\lp(ac_{s,\,k,\,j}^{Z}Z+ ac_{s,\,k,\,j}^{\mr{id}}\rp)\swei{\upphi}{s}_{j}+\frac{\swei{\mathfrak{H}}{s}_k}{w}\,.\numberthis\label{eq:transformed-constraint}
\end{align*}
We note that we will sometimes use the notation:
\begin{align*}
\swei{U}{s}_k&:= w(|s|+k(2|s|-k-1))+ \frac{a^2\Delta^2}{(r^2+a^2)^4}\lp[1-2|s|-2k(2|s|-k-1)\rp]\\
&\qquad+\frac{2Mr(r^2-a^2)\Delta}{(r^2+a^2)^4}\lp[1-3|s|+2s^2-3k(2|s|-k-1)\rp]\,.
\end{align*}
For $k=|s|$, we drop the subscript on the operators and write $\swei{\Psi}{s}:=\swei{\uppsi}{s}_{|s|}$. Thus, from \eqref{eq:transformed-k}, we see that $\swei{\Psi}{s}$ solves the transformed equation
\begin{equation}
\begin{split}
&\frac12 \lp(L\underline{L}+\underline{L}L\rp)\swei{\Phi}{s} + w\lp(\mathring{\slashed{\triangle}}^{[s]}_T+ s^2\rp)\swei{\Phi}{s} +\frac{\Delta\lp[(1-2s^2)a^2\Delta + 2(1-s^2)Mr(r^2-a^2)\rp]}{(r^2+a^2)^4}\swei{\Phi}{s}\\&\quad=\swei{\mathfrak{H}}{s}+\sum_{k=0}^{|s|-1} aw s\lp[ c^{Z}_{s,\,|s|,\,k}(r) Z+c^\mr{id}_{s,\,|s|,\,k}(r)\rp]\swei{\upphi}{s}_{k}\,.
\end{split}\label{eq:Regge-Wheeler-eq}
\end{equation}
\end{itemize}
\end{definition}

The transformed system in Definition~\ref{def:transformed-system} can be thought of as a generalization of the Teukolsky equation. We say $\swei{\upalpha}{s}$ is a solution to the Teukolsky equation if $\Delta^{s}\swei{\upalpha}{s}\in\mathscr{S}_\infty^{[s]}(\mc{R})$ and
\begin{equation}
\rho^{2}\mathfrak{T}^{[s]}\upalpha^{[s]}=\frac{\Delta^{|s|/2(1-\sign s)}}{(r^2+a^2)^{|s|}}\swei{F}{s}. \label{eq:teukolsky-alpha}
\end{equation}
where, in Boyer--Lindquist coordinates,
\begin{equation} 
\begin{split}
\rho^2 \mathfrak{T}^{[s]}&=\rho^2\Box_g +2s(r-M)\p_r +2s\lp[\frac{a(r-M)}{\Delta}+i\frac{\cos\theta}{\sin^2\theta}\rp]\p_\phi 
\\
&\qquad +2s\lp(\frac{w'(r^2+a^2)}{2w} -ia\cos\theta\rp)\p_t +s-s^2\cot^2\theta 
\\
&=\Delta^{-s}\p_r(\Delta^{s+1}\p_r)-\frac{(r^2+a^2)^2}{\Delta}\lp(\p_t+\frac{a}{r^2+a^2}\p_\phi\rp)^2+s\frac{w'}{w^2}\lp(\p_t+\frac{a}{r^2+a^2}\p_\phi\rp)+\frac{4sar}{r^2+a^2}\p_\phi\\
&\qquad -\mathring{\slashed{\triangle}}^{[s]} +(2a\p_t\p_\phi+a^2\sin^2\theta \p_t^2-2ias\cos\theta\p_t)\,,
\end{split}\label{eq:teukolsky-operator}
\end{equation}
and  $\swei{F}{s}\in\mathscr{S}_\infty^{[s]}(\mc{R})$ is given.  With $\swei{\upalpha}{s}$ as the starting point, we have

\begin{lemma}[Teukolsky and the transformed system]\label{lemma:transformed-system} Fix $s\in\mathbb{Z}$. Let $\swei{\upalpha}{s}$ be a solution to the Teukolsky equation (\ref{eq:teukolsky-alpha}). Let $\swei{\upphi}{s}_{0}$ be the rescaling of the Teukolsky variable given by
\begin{equation}\label{eq:def-psi0}
\swei{\upphi}{s}_{0}:=(r^2+a^2)^{-|s|+1/2}\Delta^{|s|(1+\mr{sign}\,s)/2}\swei{\alpha}{s}\,,
\end{equation}
and $\swei{\mathfrak{H}}{s}_{0}$ be the rescaling of the Teukolsky inhomogeneity given by
\begin{equation}
\swei{\mathfrak{H}}{s}_{0}:=w^{|s|+1}(r^2+a^2)^{1/2}\swei{F}{s}\,,\label{eq:def-G0}
\end{equation}
Defining $\swei{\upphi}{s}_{k}$ and $\swei{\mathfrak H}{s}_{k}$ by the system of transport equations \eqref{eq:transformed-transport} and \eqref{eq:transformed-transport-inhom}, we see that for $k=0,\dots, |s|$, $\swei{\upphi}{s}_{k}$ solve the transformed system of Definition~\ref{def:transformed-system} with inhomogeneity $\swei{\mathfrak{H}}{s}_k$. 
\end{lemma}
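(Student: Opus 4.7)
The plan is to prove the lemma by induction on $k \in \{0, 1, \dots, |s|\}$, passing from the Teukolsky equation at the level of $\upalpha^{[s]}$ to the wave equation at the level of $\swei{\upphi}{s}_0$, and then from level $k$ to level $k+1$ via the transport equations \eqref{eq:transformed-transport} and \eqref{eq:transformed-transport-inhom}.

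For the base case $k=0$, I would start from the Teukolsky equation \eqref{eq:teukolsky-alpha} with the operator \eqref{eq:teukolsky-operator}. Using the identities from Section~\ref{sec:relevant-vector fields} to rewrite $\partial_r$, $\partial_t$, $\partial_\phi$ in terms of the null vector fields $L, \underline{L}$ and the Killing fields $T, Z$, the second-order part $\Delta^{-s}\partial_r(\Delta^{s+1}\partial_r) - (r^2+a^2)^2\Delta^{-1}(\partial_t + a(r^2+a^2)^{-1}\partial_\phi)^2$ of $\rho^2 \mathfrak{T}^{[s]}$ becomes (up to conjugation factors) $w^{-1}\cdot \frac12(L\underline{L}+\underline{L}L)$ plus lower-order $r$-weighted terms. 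Conjugating by the rescaling \eqref{eq:def-psi0}, i.e.\ writing $\upalpha^{[s]} = (r^2+a^2)^{|s|-1/2}\Delta^{-|s|(1+\sign s)/2}\upphi_0^{[s]}$, produces precisely the principal part $\frac12(L\underline{L}+\underline{L}L) + w[\mathring{\slashed{\triangle}}^{[s]}_T + |s|]$ of $\mathfrak{R}^{[s]}_0$, together with the $r$-dependent zeroth-order potential matching \eqref{eq:transformed-R-k} at $k=0$. Crucially, the conjugation produces a first-order term proportional to $\mathcal{L}\upphi_0^{[s]}$, which---by the definition of $\upphi_1^{[s]}$ in \eqref{eq:transformed-transport}---is exactly $\sign(s)\cdot |s|\cdot(w'/w)\cdot w\cdot \upphi_1^{[s]} = \sign(s)\cdot |s|\cdot(w'/w)\upphi_1^{[s]}$, reproducing the right-hand side of \eqref{eq:transformed-k} at $k=0$ (with empty sum and the inhomogeneity dictated by \eqref{eq:def-G0}).

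For the inductive step, assuming \eqref{eq:transformed-k} holds at level $k$, I would apply the operator $w^{-1}\mathcal{L}$ to both sides. By the transport equation, $w^{-1}\mathcal{L}\upphi_k^{[s]} = \upphi_{k+1}^{[s]}$ and $w^{-1}\mathcal{L}\mathfrak{H}_k^{[s]}/1 = $ leads to $\mathfrak{H}_{k+1}^{[s]}$ via \eqref{eq:transformed-transport-inhom}. The key computation is then the commutator $[w^{-1}\mathcal{L},\,\mathfrak{R}_k^{[s]}]\upphi_k^{[s]}$. This commutator is a combination of:
\begin{itemize}
\item $[w^{-1}\mathcal{L},\,\frac12(L\underline{L}+\underline{L}L)]$, producing a multiple of $(w'/w)\cdot \mathcal{L}\upphi_k^{[s]}$ together with lower-order terms in $L, \underline{L}$;
\item $[w^{-1}\mathcal{L},\,w\mathring{\slashed{\triangle}}_T^{[s]}]$, which is nonzero because $\mathring{\slashed{\triangle}}_T^{[s]}$ contains the non-commuting terms $2aTZ + a^2\sin^2\theta\, TT - 2ias\cos\theta\, T$, and because of the $w$ weight;
\item commutators with the $r$-dependent zeroth-order potential $\swei{U}{s}_k$, producing a pure function of $r$ times $\upphi_{k+1}^{[s]}$.
\end{itemize}
The plan is to match the output of these commutators, after rewriting every $\mathcal{L}\upphi_k^{[s]}$ as $w\upphi_{k+1}^{[s]}$, with the required difference $\mathfrak{R}_{k+1}^{[s]} - \mathfrak{R}_k^{[s]}$ acting on $\upphi_{k+1}^{[s]}$ together with the new coupling term $w(ac^Z_{s,k+1,k}Z + ac^{\mathrm{id}}_{s,k+1,k})\upphi_k^{[s]}$ that appears in \eqref{eq:transformed-J-k} at level $k+1$ but not at level $k$. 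The special replacement rule (``$ac^\mathrm{id}_{s,k,i}$ is replaced by $c^\mathrm{id}_{s,k,i}$ if and only if $|s|\neq 1$, $k=1$, $i=0$'') corresponds to a cancellation at $k=1$ of an $a$-independent piece coming from the commutator $[\,\cdot\,,\,\mathring{\slashed{\triangle}}^{[s]}_T]$, which I would need to verify by hand at $k=1$.

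The main obstacle is the bookkeeping in the commutator computation: one must track carefully how the $r$-decay of the coefficients $c^{Z}_{s,k,i}$, $c^{\mathrm{id}}_{s,k,i}$ asserted in Definition~\ref{def:transformed-system} emerges from the explicit $r$-weights of $w, w', \Delta/(r^2+a^2)^2$ and their derivatives along $\mathcal{L}$. In particular, verifying the sharper decay claimed for $c^{\mathrm{id}}_{s,|s|,|s|-1}$ and for $c^Z_{s,|s|,|s|-2j}$ at the top level $k=|s|$ will require identifying specific algebraic cancellations between the ``new'' contributions from the commutator and the ``old'' coupling terms inherited from level $k-1$, rather than relying on crude pointwise bounds. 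Once these cancellations are pinned down, the inductive step closes and the lemma follows.
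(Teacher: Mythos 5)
The paper's own ``proof'' of this lemma is a one-line citation to the authors' earlier work \cite{SRTdC2020}, where the derivation of the transformed system from the Teukolsky equation is carried out explicitly. Your proposal reconstructs what that cited computation must be doing: verify the $k=0$ equation directly by conjugating $\rho^2\mathfrak{T}^{[s]}$ with the rescaling \eqref{eq:def-psi0} and expressing the second-order part in terms of $L,\underline{L},T,Z$, then run an induction in $k$ by applying $w^{-1}\mathcal{L}$ to the level-$k$ equation and computing the commutator $[w^{-1}\mathcal{L},\mathfrak{R}^{[s]}_k]$. This is indeed the standard derivation (a physical-space Chandrasekhar transformation) and is the route the paper is pointing to. There is no genuine gap in the strategy.

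One small imprecision in your base case: you state that the $\mathcal{L}\upphi_0^{[s]}$ term in the conjugated operator, upon substituting $\mathcal{L}\upphi_0^{[s]}=w\,\upphi_1^{[s]}$, gives ``$\sign(s)\,|s|\,(w'/w)\,w\,\upphi_1^{[s]}=\sign(s)\,|s|\,(w'/w)\,\upphi_1^{[s]}$,'' but $(w'/w)\cdot w=w'$, not $w'/w$. The correct bookkeeping is that the $\mathcal{L}$-direction first-order term produced by the conjugation has coefficient $\sign(s)\,|s|\,(w'/w)\cdot w^{-1}=\sign(s)\,|s|\,w'/w^2$, so that substituting $\mathcal{L}\upphi_0^{[s]}=w\,\upphi_1^{[s]}$ yields the required $\sign(s)\,|s|\,(w'/w)\,\upphi_1^{[s]}$ on the right-hand side of \eqref{eq:transformed-k} at $k=0$. (Compare with the term $s\,\frac{w'}{w^2}\big(\p_t+\frac{a}{r^2+a^2}\p_\phi\big)$ already visible in \eqref{eq:teukolsky-operator}: the rescaling is chosen exactly to absorb the remaining $\p_{r^*}$ first-order piece into a pure $\mathcal{L}$ derivative.) With this coefficient corrected, your base case matches, and your description of the inductive step and of where the coupling coefficients $c^{Z}_{s,k,j}$, $c^{\mathrm{id}}_{s,k,j}$ and their $r$-decay come from is consistent with the cited computation.
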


\begin{proof} See our previous \cite{SRTdC2020}; in particular see \cite[Remark 3.2.1]{SRTdC2020} for explicit computations for $c_{s,k,j}^{\rm id}$ and $c_{s,k,j}^{Z}$ for some values of $s$.
\end{proof}

Note that our transformed system of Definition~\ref{def:transformed-system} generalizes that  of \cite{Dafermos2017} and, for $a=0$, matches those of \cite{Dafermos2016a,Pasqualotto2016}. It will sometimes be useful to consider  rescalings $\tilde{\upphi}_{k}=c_k(r)\upphi_{k}$. Then,  \eqref{eq:transformed-transport} and  \eqref{eq:transformed-transport-inhom} are replaced by
\begin{align}
\mc L \tilde\upphi_k&= \sign s \frac{c_k'}{c_k}\tilde\upphi_{k+1} + w\frac{c_{k+1}}{c_k}\tilde\upphi_{k+1}\,,\label{eq:transformed-transport-tilde}\\
\swei{\tilde{\mathfrak{H}}}{s}_{k}&=\frac{1}{{c_k}}\mc{L}\lp(\frac{c_{k-1}\swei{\tilde{\mathfrak H}}{s}_{k-1}}{w}\rp)\,,\qquad  k=1,...,|s|\,. \label{eq:transformed-transport-inhom-tilde}
\end{align}
In turn, equations \eqref{eq:transformed-k} are equivalent to
\begin{equation}
\swei{\tilde{\mathfrak{R}}}{s}_{k}\swei{\tilde\upphi}{s}_{k}=\swei{\tilde{\mathfrak{H}}}{s}_{k} -\sign s\lp(\frac{c_k'}{c_k}-(|s|-k)\frac{w'}{w}\rp)\frac{wc_{k+1}}{c_k}\tilde\upphi_{k+1}+a\sum_{i=0}^{k-1}\swei{\tilde{\mathfrak{J}}}{s}_{k,i}\,, \label{eq:transformed-k-tilde}
\end{equation}
where 
\begin{align}
\swei{\tilde{\mathfrak{R}}}{s}_{k}&:=\frac12 \lp(L\underline{L}+\underline{L}L\rp)+w\lp[\mathring{\slashed{\triangle}}^{[s]}_T+|s|+k(2|s|-k-1)\rp]-\sign s \frac{c_k'}{c_k}\underline{\mc L}\nonumber\\
&\qquad-\frac{4arw}{r^2+a^2}\sign{s}\lp(|s|-k\rp)Z
+ \frac{a^2\Delta^2}{(r^2+a^2)^4}\lp[1-2|s|-2k(2|s|-k-1)\rp] \label{eq:transformed-R-k-tilde}\\
&\qquad+\frac{2Mr(r^2-a^2)\Delta}{(r^2+a^2)^4}\lp[1-3|s|+2s^2-3k(2|s|-k-1)\rp]-\lp(\frac{c_k'}{c_k}\rp)'\,,\nonumber
\\
\swei{\tilde{\mathfrak{J}}}{s}_{k,i}&:=\frac{wc_i(r)}{c_k(r)}\lp(c_{s,\,k,\,i}^{Z}(r)Z+ c_{s,\,k,\,i}^{\mr{id}}(r)\rp)\swei{\tilde\upphi}{s}_{i}\,,\label{eq:transformed-J-k-tilde}
\end{align}
and \eqref{eq:transformed-constraint} becomes
\begin{align*}
&\underline{\mc L} \swei{\tilde\upphi}{s}_{k+1}+\sign s\lp[\frac{c_{k+1}'}{c_{k+1}}-(|s|-k-1) \frac{w'}{w}\rp]\swei{\tilde\upphi}{s}_{k+1}\\
&\quad=-\lp(\mathring{\slashed{\triangle}}^{[s]}+|s|+k(2|s|-k-1) -a^2\sin^2\theta TT-2aTZ\rp\}\swei{\tilde\upphi}{s}_k \numberthis\label{eq:transformed-constraint-tilde}\\
&\quad\qquad -\frac{2Mr(r^2-a^2)}{(r^2+a^2)^2}\lp[1-3|s|+2s^2-3k(2|s|-k-1)\rp] \swei{\tilde\upphi}{s}_k- a^2w\lp[1-2|s|-2k(2|s|-k-1)\rp]\swei{\tilde\upphi}{s}_k\\
&\quad\qquad +\frac{2a\sign s}{r^2+a^2}\lp[(2|s|-2k-1)r+i|s|\cos\theta\rp]Z\swei{\tilde\upphi}{s}_k+\sum_{j=0}^{k-1}\lp(ac_{s,\,k,\,j}^{Z}Z+ ac_{s,\,k,\,j}^{\mr{id}}\rp)\frac{c_j}{c_k}\swei{\tilde\upphi}{s}_{j}+\frac{\swei{\tilde{\mathfrak{H}}}{s}_k}{w}\,.
\end{align*}

Finally, for technical reasons it will be useful to consider a version of the system in Definition~\ref{def:transformed-system} where we assume that $\swei{\upphi}{s}_{k_0+1}$, for some $k_0=0, \dots, |s|-1$ is known already. 

\begin{definition}[Alternative transformed system]  \label{def:alt-transformed-system} Fix $s\in\mathbb{Z}$ and $k_0\in\{0,\dots |s|-1\}$. We say that, for $k\in\{0,\dots k_0\}$, the functions $\swei{\upphi}{s}_{k}$  are solutions to the alternative transformed system with inhomogeneities $\swei{\mathfrak H}{s}_k$ and $\swei{\mathfrak h}{s}_{k_0}$ and given $\swei{\upphi}{s}_{k_0+1}$ if $\Delta^{|s|-k}\swei{\upphi}{s}_{k}\in\mathscr S_\infty^{[s]}(\mc R)$ for $k=0,\dots,k_0+1$, $\Delta^{|s|-k-1}\swei{\mathfrak{H}}{s}_{k}\in S_\infty^{[s]}(\mc R)$ for $k=0,\dots k_0$, $\Delta^{|s|-k_0-1}\swei{\mathfrak{h}}{s}_{k_0}\in S_\infty^{[s]}(\mc R)$, and they satisfy the following PDEs.
\begin{itemize}
\item \textit{Transport equations}: if $s\neq 0$, \eqref{eq:transformed-transport} and \eqref{eq:transformed-transport-inhom} hold for $k=1,\dots k_0$; moreover, \eqref{eq:transformed-constraint} holds for each $k=0,\dots, k_0-1$ and it holds for $k=k_0$ if we replace $\swei{\mathfrak{H}}{s}_{k_0}$ by $\swei{\mathfrak{H}}{s}_{k_0}+\swei{\mathfrak{h}}{s}_{k_0}$.
\item \textit{Hyperbolic equations}: \eqref{eq:transformed-k} holds for each $k=0,\dots, k_0$.
\end{itemize}

\end{definition}

\subsection{Template energy norms}
\label{sec:template-energy-norms}

In this section, we introduce the main norms for the quantities in the transformed system of Definition~\ref{def:transformed-system} which will appear in the remainder of the paper. As we have mentioned in the aforementioned result, it useful to consider rescalings of our transformed variables. We write
\begin{align*}
\swei{\tilde\upphi}{s}_{k} := c_k(r)\swei{\upphi}{+s}_{k}\,, \qquad \swei{\dbtilde\upphi}{s}_{k} := \tilde c_k(r)\swei{\upphi}{+s}_{k}\,.
\end{align*}
for $c_k(r)$ and $\tilde{c}_k(r)$ given by
\begin{align*}
c_k(r)=
\begin{dcases}
(r^2+a^2)^{\frac{(|s|-k)}{2}}\,,&s>0\\
\lp(\frac{r^2+a^2}{\Delta}\rp)^{|s|-k}\,, &s<0
\end{dcases}\,,\qquad \tilde{c}_k(r)=
\begin{dcases}
(r^2+a^2)^{|s|-k}\,,&s>0\\
\lp(\frac{r^2+a^2}{\Delta}\rp)^{|s|-k}\,, &s<0
\end{dcases}\,,\numberthis\label{eq:rescalings}
\end{align*}
Note that there is a difference in these weights only for $s>0$, with the latter being consistent with peeling; see already Remark~\ref{rmk:peeling} for more comments. 

We will also find it convenient to introduce a function $\zeta$ given by
\begin{align*}
\zeta(r):=\lp(1-\frac{3M}{r}\rp)^2\lp(1-\mathbbm{1}_{[3M-\gamma_-,3M-\gamma_+]}\rp)\,,\numberthis\label{eq:def-zeta}
\end{align*}
with $\gamma_\pm(a_0,M)$ being trapping parameters identified in our previous \cite{SRTdC2020}, and to be given below in Definition~\ref{def:trapping-parameters}.

\subsubsection{First order norms}

Assume $\tau_1\leq \tau_2$. We begin with some definitions for all $0\leq k\leq |s|$:
\begin{align*}
\mb{E}_{\Sigma_\tau,p}\lp[\swei{\tilde\upphi}{s}_k\rp](\tau)&:=\int_{\Sigma_\tau} dr d\sigma \lp(r^p\big|L\swei{\tilde\upphi}{s}_k\big|^2+\frac{r^2+a^2}{\Delta}\big|\uL\swei{\tilde\upphi}{s}_k\big|^2r^{-2}+\big|\mathring{\slashed{\nabla}}^{[s]}\swei{\tilde\upphi}{s}_k\big|^2r^{-2}+s^2\big|\swei{\tilde\upphi}{s}_k\big|^2r^{-2}\rp)\,,\\
\mb{E}_{\mc H^+}\lp[\swei{\tilde\upphi}{s}_k\rp](\tau_1,\tau_2)&:=\int_{\mathbb{S}^2}\int_{\tau_1}^{\tau_2} d\tau d\sigma \,\big|L\swei{\tilde\upphi}{s}_k\big|^2\Big|_{r=r_+}\,,\\
\mb{E}_{\mc I^+,p}\lp[\swei{\tilde\upphi}{s}_k\rp](\tau_1,\tau_2)&:=\limsup_{v\to \infty}\int_{S_{(\tau_1,\tau_2)}(v)} d\tau d\sigma \lp( |\uL\swei{\tilde\upphi}{s}_k|^2 +r^{p-2}\big|\mathring{\slashed{\nabla}}^{[s]}\swei{\tilde\upphi}{s}_k\big|^2+s^2r^{p-2}\big|\swei{\tilde\upphi}{s}_k\big|^2\rp)\,,
\end{align*}
where $S_{(\tau_1,\tau_2)}(v)$ denote null hypersurfaces which, as $v\to \infty$, approach $\mc I^+_{(\tau_1,\tau_2)}$. When $k=|s|$, we define
\begin{align*}
&\mb{I}_{-\delta,p}\lp[\swei{\Phi}{s}\rp](\tau_1,\tau_2)\\
&\quad:=\int_{\tau_1}^{\tau_2}d\tau\int_{\Sigma_\tau} dr d\sigma \lp(|L\swei{\Phi}{s}|^2(pr^{p-1} + r^{-1-\delta})+\big|\uL\swei{\Phi}{s}\big|^2r^{-1-\delta}\rp)\\
&\quad\qquad + \int_{\tau_1}^{\tau_2}d\tau\int_{\Sigma_\tau} dr d\sigma r^{p-3}\lp((2-p+r^{-1})|\mathring{\slashed{\nabla}}^{[s]}\swei{\Phi}{s}|^2+\lp(r^{-p-\delta}+r^{-1}+s^2(2-p)\rp)|\swei{\Phi}{s}|^2\rp)\,,
\end{align*}
and we use the notation $\mb{I}_{p}\equiv\mb{I}_{-1,p}$, $\mb{I}_{-\delta}\equiv \mb{I}_{-\delta,0}$, $\mb{I}\equiv \mb{I}_{-1,0}$. We also define 
\begin{align*}
&\mb{I}_{-\delta,p}^{\rm deg}\lp[\swei{\Phi}{s}\rp](\tau_1,\tau_2)\\
&\quad:=\int_{\tau_1}^{\tau_2}d\tau\int_{\Sigma_\tau} dr d\sigma \lp(|L\swei{\Phi}{s}|^2\frac{(pr^{p} + r^{-\delta})}{r}+\frac{1}{r^{1+\delta}}\big|\uL\swei{\Phi}{s}\big|^2+\frac{(2-p+r^{-1})}{r^{3-p}}|\mathring{\slashed{\nabla}}^{[s]}\swei{\Phi}{s}|^2\rp)\zeta(r)\\
&\quad\qquad + \int_{\tau_1}^{\tau_2}d\tau\int_{\Sigma_\tau} dr d\sigma  \lp(r^{p-3}(s^2(2-p)+r^{-\delta})|\swei{\Phi}{s}|^2+r^{-1-\delta}|(L-\uL)\swei{\Phi}{s}|^2\rp)\,,
\end{align*}
and again write $\mb{I}_{p}^{\rm deg}\equiv\mb{I}_{-1,p}^{\rm deg}$, $\mb{I}^{\rm deg}_{-\delta}\equiv \mb{I}_{-\delta,0}^{\rm deg}$, $\mb{I}^{\rm deg}\equiv \mb{I}_{-1,0}$.

For $k<|s|$ and letting $s>0$ without loss of generality, we define
\begin{align*}
\mb{I}_{-\delta,p}\lp[\swei{\tilde\upphi}{-s}_{(k)}\rp](\tau_1,\tau_2)&:=\int_{\tau_1}^{\tau_2}d\tau\int_{\Sigma_\tau} dr d\sigma \lp(\frac{r^2+a^2}{\Delta}\big|\uL\swei{\tilde\upphi}{-s}_{(k)}\big|^2r^{-1-\delta}+|\mathring{\slashed{\nabla}}^{[s]}\swei{\tilde\upphi}{-s}_{(k)}|^2r^{p-2}+|\swei{\tilde\upphi}{-s}_{(k)}|^2r^{p-2}\rp)\,,\\
\mb{I}_{p}\lp[\swei{\tilde\upphi}{+s}_{(k)}\rp](\tau_1,\tau_2)&:=\int_{\tau_1}^{\tau_2}d\tau\int_{\Sigma_\tau} dr d\sigma \lp(\big|L\swei{\tilde\upphi}{+s}_{(k)}\big|^2r^{p-1}+(|\mathring{\slashed{\nabla}}^{[s]}\swei{\tilde\upphi}{+s}_{(k)}|^2+|\swei{\tilde\upphi}{+s}_{(k)}|^2)(2-p+r^{-1})r^{p-3}\rp)\,.
\end{align*}
We also have the following conventions: in the case of $-s$, we let $\mb{I}_{p}\equiv\mb{I}_{-1,p}$, $\mb{I}_{-\delta}\equiv \mb{I}_{-\delta,0}$ and  $\mb{I}\equiv \mb{I}_{-1,0}$; in the case of $+s$, we let $\mb{I}_{-\delta,p}\equiv\mb{I}_{p}$ and $\mb{I}\equiv \mb{I}_{-\delta,0}$.

If $|a|< M$, we will also use
\begin{align*}
\overline{\mb{E}}_{\mc H^+}\lp[\swei{\tilde\upphi}{s}_k\rp](\tau_1,\tau_2)&:=\int_{\mathbb{S}^2}\int_{\tau_1}^{\tau_2} d\tau d\sigma \lp(\big|L\swei{\tilde\upphi}{s}_k\big|^2+\big|\mathring{\slashed{\nabla}}^{[s]}\swei{\tilde\upphi}{s}_k\big|^2+\big|\swei{\tilde\upphi}{s}_k\big|^2\rp)\Big|_{r=r_+}\,,
\end{align*}
for $k=0,\dots,|s|$. We can add an overbar to any of the previous norms if we replace $\frac{r^2+a^2}{\Delta}|\uL\cdot |^2$ or $|\uL \cdot|^2$ by $|\frac{r^2+a^2}{\Delta}\uL\cdot |^2$ and the factor $s^2$ on the zeroth order term replaced by $1$.

Finally, for $A\subset[r_+,\infty]$, we will use the notation $\mathbb{E}[\mathbbm{1}_{r\in A}\cdot ]$ or $\mathbb{I}_{-\delta,p}[\mathbbm{1}_{r\in A}\cdot ]$ with or without additional superscripts to denote the same as above but replacing $\Sigma_\tau$ with $\Sigma_\tau\cup \{r\in A\}$. A similar convention holds if the set $A$ is given in terms of the $r^*$ radial coordinate.

\subsubsection{Higher order norms}

For $J\geq 1$ and $|a|<M$, we set
\begin{align*}
\overline{\mathbb{E}}^{J}_p[\swei{\tilde\upphi}{s}_k](\tau)&:=\sum_{X\in\{\mathrm{id}, T,X^*,r^{-1}\mathring{\slashed\nabla}^{[s]}\}} \overline{\mathbb{E}}^{J-1}_{p}[X\swei{\tilde\upphi}{s}_k](\tau)\,,
\end{align*}
with $\overline{\mathbb{E}}^{0}_{p}\equiv \overline{\mathbb{E}}_{p}$ introduced in the previous section. 
If, furthermore $J<|s|-k$, then we set
\begin{align*}
\overline{\mathbb{I}}^{\mathrm{deg},J}_{-\delta,p}[\swei{\tilde\upphi}{s}_k](\tau_1,\tau_2)=\overline{\mathbb{I}}^{J}_{-\delta,p}[\swei{\tilde\upphi}{s}_k](\tau_1,\tau_2)&:=\sum_{X\in\{\mathrm{id},T,X^*,r^{-1}\mathring{\slashed\nabla}^{[s]}\}} \overline{\mathbb{I}}^{J-1}_{-\delta,p}[X\swei{\tilde\upphi}{s}_k](\tau_1,\tau_2)\,,
\end{align*}
and if $J\geq |s|-k$, then we set
\begin{align*}
\overline{\mathbb{I}}^{\mathrm{deg},J}_{-\delta,p}[\swei{\tilde\upphi}{s}_k](\tau_1,\tau_2)&:=\overline{\mathbb{I}}_{-\delta,p}^{J-1}[X^*\swei{\tilde\upphi}{s}_k](\tau_1,\tau_2)+\sum_{X\in\{\mathrm{id},T,X^*,r^{-1}\mathring{\slashed\nabla}^{[s]}\}} \overline{\mathbb{I}}^{\mathrm{deg},J-1}_{-\delta,p}[X\swei{\tilde\upphi}{s}_k](\tau_1,\tau_2)\,;
\end{align*}
here, $\overline{\mathbb{I}}^{0}_{-\delta,p}\equiv \overline{\mathbb{I}}_{-\delta,p}$ and $\overline{\mathbb{I}}^{\mathrm{deg},0}_{-\delta,p}\equiv \overline{\mathbb{I}}^{\rm deg}_{-\delta,p}$ introduced in the previous section.  Finally, we set 
\begin{align*}
\overline{\mathbb{E}}^{J}_{\mc H^+}[\swei{\tilde\upphi}{s}_k](\tau_1,\tau_2)&:=\sum_{X\in\{\mathrm{id},L,\mathring{\slashed\nabla}^{[s]}\}} \overline{\mathbb{E}}^{J-1}_{\mc H^+}[X\swei{\tilde\upphi}{s}_k](\tau_1,\tau_2)\,,\\
\overline{\mathbb{E}}^{J}_{\mc I^+,p}[\swei{\tilde\upphi}{s}_k](\tau_1,\tau_2)&:=\sum_{X\in\{\mathrm{id},\uL,\mathring{\slashed\nabla}^{[s]}\}} \overline{\mathbb{E}}^{J-1}_{\mc I^+,p}[X\swei{\tilde\upphi}{s}_k](\tau_1,\tau_2)\,, \qquad k<|s| \text{~and~} J\leq |s|-k\,,\\
\overline{\mathbb{E}}^{J}_{\mc I^+,p}[\swei{\tilde\upphi}{s}_k](\tau_1,\tau_2)&:=\sum_{X\in\{\mathrm{id},\uL\}} \overline{\mathbb{E}}^{J-1}_{\mc I^+,p}[X\swei{\tilde\upphi}{s}_k](\tau_1,\tau_2)\,,
\end{align*}
with $\overline{\mathbb{E}}^{0}_{\mc H^+}\equiv \overline{\mathbb{E}}_{\mc H^+}$ and $\overline{\mathbb{E}}^{0}_{\mc I^+,p}\equiv \overline{\mathbb{E}}_{\mc I^+,p}$ introduced in the previous section.

As before, letting $A\subset[r_+,\infty]$, the notation $\mathbb{E}^J[\mathbbm{1}_{r\in A}\cdot ]$ or $\mathbb{I}_{-\delta,p}^J[\mathbbm{1}_{r\in A}\cdot ]$ with or without additional superscripts shall denote the same as above but replacing $\Sigma_\tau$ with $\Sigma_\tau\cup \{r\in A\}$; and similarly if the set $A$ is given in terms of the $r^*$ radial coordinate.

\section{A toolbox for physical space estimates}
\label{sec:toolbox-physical-space}

This section contains a sequence of basic but useful results. Thanks to some multiplier and commutator identities in Sections~\ref{sec:multiplier-identities} and \ref{sec:multiplier-identities}, respectively, we start by deriving some basic integrated and flux estimates for our transformed system in Sections~\ref{sec:first-order-basic-estimates} and \ref{sec:higher-order-estimates}. These can then be used to establish well-posedness (Section~\ref{sec:wellposedness}) and obtain a useful backwards extension result (Section~\ref{sec:extension}) for solutions to the transformed system.

\subsection{Multiplier identities}
\label{sec:multiplier-identities}

In this section, we derive several pointwise identities taking the form 
\begin{align*}
LF_L+\uL F_{\uL} + I +  O= 0\,,\qquad \int_{\mathbb{S}^2}O d\sigma=0\,.
\end{align*}
Diving by $\rho^2\Delta (r^2+a^2)^{-1}$ and integrating in $\mc{R}_{(0,\tau)}$ against the respective volume form, and appealing to \eqref{eq:commutation-mcL-volume-form}, we get the identity
\begin{align}
\begin{split}
&\int_{\Sigma_\tau}  \lp(v_1\Delta^{-1}F_L+v_2F_{\uL}\rp) drd\sigma
+\int_{\mc H^+_{(0,\tau)}} v_3F_{\uL} d\sigma d\tau+\int_{\mc I^+_{(0,\tau)}} v_4F_L d\sigma d\tau \\
&= \int_{\Sigma_0}  \lp(v_1\Delta^{-1}F_L+v_2F_{\uL}\rp) drd\sigma-\int_{\mc{R}_{(0,\tau)}} v_5 I \frac{r^2+a^2}{\Delta} dt dr d\sigma\,,
\end{split}\label{eq:template-physical-space-identity}
\end{align}
for some functions $v_i(r,\theta)$, $i=1,\dots,5$ which are bounded above and below by constants depending only on $M$ which can be computed explicitly from the definitions in Section~\ref{sec:kerr-prelims}.

In what follows, $\chi_R$ denotes  a cutoff function supported for $r^*\geq R^*$ and equal to 1  for $r^*\geq 2R^*$ for some large $R^*>0$. We will also drop the superscript on solutions of the transformed system of Definition~\ref{def:transformed-system}, to lighten the notation.

\subsubsection{Transport identities}

\begin{lemma}[Transport identities] \label{lemma:phys-space-transport-identity} Let $|s|\in\mathbb{Z}$ and $k<|s|$. Suppose $\tilde\upphi_k$ are homogeneous solutions to the transformed system of Definition~\ref{def:transformed-system}. Then, we have an identity of the form \eqref{eq:template-physical-space-identity} where the boundary terms are determined by
\begin{align*}
F_{L}= \frac12(1-\sign s)c|\tilde\upphi_k|^2\,,\qquad
F_{\uL}= \frac12(1+\sign s)c|\tilde\upphi_k|^2\,,
\end{align*}
and the bulk term is determined by 
\begin{align*}
I&=\sign s \lp(c'-\frac{2c_k'}{c_k}c\rp)|\tilde\upphi_k|^2-2w\frac{c_{k+1}}{c_k}c(r)\Re[\tilde\upphi_{k+1}\overline{\tilde\upphi_k}]\,.
\end{align*}
\end{lemma}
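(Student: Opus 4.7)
The proof is a direct computation starting from the rescaled transport equation~\eqref{eq:transformed-transport-tilde}. The plan is to first derive a pointwise divergence identity of the exact form $LF_L+\uL F_\uL + I + O = 0$ with the stated $F_L$, $F_\uL$ and $I$, and then obtain~\eqref{eq:template-physical-space-identity} by dividing out an appropriate weight and applying the divergence theorem. To begin, multiply \eqref{eq:transformed-transport-tilde} by $c(r)\,\overline{\tilde\upphi_k}$ and take twice the real part. Since $\mc L$ is a real first-order differential operator,
$$2\,\Re\bigl[c\,\overline{\tilde\upphi_k}\,\mc L\tilde\upphi_k\bigr] = c\,\mc L|\tilde\upphi_k|^2 = \mc L\bigl(c|\tilde\upphi_k|^2\bigr) - (\mc L c)\,|\tilde\upphi_k|^2.$$
Because $c$ depends only on $r$, and because $Lr^*=1$ and $\uL r^*=-1$, the convention $\mc L = L$ for $s<0$ and $\mc L = \uL$ for $s>0$ gives $\mc L c = -\sign(s)\,c'$. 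This supplies exactly the $\sign(s)\,c'\,|\tilde\upphi_k|^2$ term appearing in the claimed bulk integrand.

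Second, using the decomposition $\mc L = \tfrac12(1-\sign s)\,L + \tfrac12(1+\sign s)\,\uL$, one rewrites the left-hand side as $LF_L + \uL F_\uL$ with $F_L = \tfrac12(1-\sign s)\,c|\tilde\upphi_k|^2$ and $F_\uL = \tfrac12(1+\sign s)\,c|\tilde\upphi_k|^2$, matching the statement. Substituting the right-hand side of \eqref{eq:transformed-transport-tilde} to eliminate $\mc L\tilde\upphi_k$ produces two further contributions, $-2\sign(s)\tfrac{c_k'}{c_k}\,c\,|\tilde\upphi_k|^2$ and $-2w\tfrac{c_{k+1}}{c_k}\,c\,\Re\bigl[\tilde\upphi_{k+1}\overline{\tilde\upphi_k}\bigr]$, which combine with the $\sign(s)\,c'\,|\tilde\upphi_k|^2$ contribution from $\mc L c$ to assemble into the claimed $I$. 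Since no angular operators enter at any stage, $O\equiv 0$ identically and the spherical mean-value condition $\int_{\mathbb S^2} O\,d\sigma = 0$ is automatic.

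Third, to pass from this pointwise identity to~\eqref{eq:template-physical-space-identity}, divide by $\rho^2\Delta/(r^2+a^2)$, integrate against the spacetime volume form over $\mc R_{(0,\tau)}$, and apply Stokes' theorem with the help of the divergence identity~\eqref{eq:commutation-mcL-volume-form}. The identification of the resulting boundary terms with fluxes on $\Sigma_0$, $\Sigma_\tau$, $\mc H^+_{(0,\tau)}$ and $\mc I^+_{(0,\tau)}$ then follows from the geometric setup of Section~\ref{sec:kerr-prelims}.

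There is no substantive obstacle: the lemma is purely a transport-type identity and requires no Morawetz, elliptic, or multiplier-choice insight. The only care needed is in tracking signs --- in particular, the sign of $\mc L c$ and the $\sign(s)$ factors that distinguish between $L$- and $\uL$-fluxes --- and in correctly distributing the weight factors $c$, $c_k$, and $c_{k+1}$ when reading off the coefficients of $|\tilde\upphi_k|^2$ and $\Re[\tilde\upphi_{k+1}\overline{\tilde\upphi_k}]$ in $I$.
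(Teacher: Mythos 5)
Your proof is correct. The paper in fact does not record a proof of Lemma~\ref{lemma:phys-space-transport-identity} at all — it is stated as an elementary consequence of the rescaled transport equation~\eqref{eq:transformed-transport-tilde} — and your computation is precisely the intended derivation: multiply by $c\overline{\tilde\upphi_k}$, take twice the real part, use $\mc L c = -\sign(s)c'$ and the null decomposition of $\mc L$, move the transport source to the left, and then pass to~\eqref{eq:template-physical-space-identity} via~\eqref{eq:commutation-mcL-volume-form}. All signs and weight factors check out against the stated $F_L$, $F_\uL$, $I$, and your observation that $O\equiv 0$ (equivalently, that there is no leftover angular contribution to integrate away) is accurate. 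One small thing worth noting for the record: you have implicitly, and correctly, read the first term of~\eqref{eq:transformed-transport-tilde} as $\sign s\,\tfrac{c_k'}{c_k}\,\tilde\upphi_k$ rather than $\tilde\upphi_{k+1}$; the displayed equation in the paper carries a typo there (both summands cannot be in $\tilde\upphi_{k+1}$, as this would produce no $|\tilde\upphi_k|^2$ term in $I$), and your version is the one consistent with the statement of the lemma and with its later uses (e.g.\ Lemma~\ref{lemma:phys-space-Killing-multiplier-identity}).
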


From Lemma~\ref{lemma:phys-space-transport-identity}, setting $c=-r^{p-2}(1+(M|s|+1)/r)\chi_R$ where $p\in\mathbb{R}_+$,  we deduce that for $s<0$ and sufficiently large $R^*$
\begin{align*}
&\int_{\Sigma_\tau\cap\{r^*\geq 2R^*\}}r^{p-2}\lp(1+\frac{M|s|+1}{r}\rp)|\tilde\upphi_k|^2 drd\sigma +\int_{0}^\tau \int_{\Sigma_{\tau'}\cap\{r^*\geq 2R^*\}} \lp(2-p+\frac{1}{r}\rp)r^{p-3}|\tilde\upphi_k|^2 drd\sigma d\tau'\\
&\qquad+\int_{\mc I^+_{(0,\tau)}}r^{p-2}|\tilde\upphi_k|^2\big|_{r=\infty} drd\sigma d\tau'  \\
&\quad\leq
B\mathbb{I}^{\rm deg}[\tilde\upphi_k\mathbbm{1}_{\{R^*\leq r^*\leq 2R^*\}}](0,\tau) 
+B\mathbb{E}[\tilde\upphi_k](0)\\
&\quad\qquad+B\int_{0}^\tau \int_{\Sigma_{\tau'}\cap\{r^*\geq 2R^*\}} \lp(2-p+\frac{1}{r}\rp)r^{p-5}|\tilde\upphi_{k+1}|^2 drd\sigma d\tau'\,.\numberthis\label{eq:phys-space-transport-identity-s<0}
\end{align*}
For $s>0$, we can choose  $c=(r-r_+)\chi_R(-r^*)/r$ to get, for sufficiently large $R^*$,
\begin{align*}
&\int_{\Sigma_\tau\cap\{r^*\leq -2R^*\}} (r-r_+)|\tilde\upphi_k|^2 dr^*d\sigma +\int_{0}^\tau \int_{\Sigma_{\tau'}\cap\{r^*\leq -2R^*\}} \Delta|\tilde\upphi_k|^2 dr^*d\sigma d\tau'  \\
&\quad\leq
B\mathbb{I}^{\rm deg}[\tilde\upphi_k\mathbbm{1}_{\{R^*\leq -r^*\leq 2R^*\}}](0,\tau) 
+B\int_{0}^\tau \int_{\Sigma_{\tau'}\cap\{r^*\leq -2R^*\}}\Delta(r-r_+)^2|\tilde\upphi_{k+1}|^2 dr^*d\sigma d\tau'
+B\mathbb{E}[\tilde\upphi_k](0)\,.\numberthis\label{eq:phys-space-transport-identity-s>0}
\end{align*}
Note that estimates \eqref{eq:phys-space-transport-identity-s<0} and \eqref{eq:phys-space-transport-identity-s>0} also hold replacing $\tilde\upphi_k$ with $\mathring{\slashed{\nabla}}$ or $Z\tilde\upphi_k$.

\subsubsection{Killing multiplier identities}

In this section, we obtain the identities that will allow us to bound the flux of the transformed variables through the future boundary of a spacetime slab by bulk error terms.  These follow by applying the Killing vector fields on Kerr as multipliers:
\begin{lemma}[Killing multiplier identity I] \label{lemma:phys-space-Killing-multiplier-identity}  Let $|s|\in\mathbb{Z}$, let $\chi_T$, $\chi_Z$ and $\tilde\chi$ be smooth functions of $r$, and let $\xi$ be a smooth function of $\tilde t^*$. Suppose that, for $0\leq k\leq |s|$, $\swei{\tilde\upphi}{s}_k$ are homogeneous solutions to the transformed system of Definition~\ref{def:transformed-system}; we drop the superscripts in what follows to ease the notation. We have an identity of the form \eqref{eq:template-physical-space-identity} where 
\begin{align*}
4F_L&=\chi_T|\uL\tilde \upphi_k|^2+a^2w\sin^2\theta|(\chi_T T+\upomega_+\chi_Z Z)\tilde\upphi_k|^2+\frac12a\upomega_+\chi_Z w\lp(2-\frac{a^2\sin^2\theta}{r_+^2+a^2}\rp)|Z\tilde\upphi_k|^2\\
&\qquad +\chi_Tw|\mathring{\slashed{\nabla}} \tilde\upphi_k|^2+\chi_T{U}_k |\tilde\upphi_k|^2-2ka\upomega_+\cos\theta w\chi_Z\sign s\Im[\tilde\upphi_k\overline{Z\tilde\upphi_k}]\\
&\qquad -2\lp(\frac{a\chi_T}{r^2+a^2}-\upomega_+\chi_Z\rp)\Re[Z\tilde\upphi_k\overline{\uL \tilde\upphi_k}]\\
&\qquad +2 w\chi_T\tilde\chi \xi \sum_{j=0}^{k-1}\frac{c_j}{c_k}\lp\{\Re[(ac_{s,k,j}^ZZ+ac_{s,k,j}^{\rm id})\tilde\upphi_j\overline{\tilde\upphi_k}]+ac_{s,k,j}^Z\sign s \Re[\tilde\upphi_j\overline{Z\tilde\upphi_k}]\rp\}\\
&\qquad -(1-\sign s)\frac{a}{r^2+a^2}\chi_T\tilde\chi \xi\lp\{2\sum_{j=0}^{k-2}\frac{c_jc_{k-1}}{c_k^2}\Re[Z\tilde\upphi_j\overline{Z\upphi_{k-1}}]+\frac{a_{s,k,k-1}^Zc_{k-1}^2}{c_k^2}|Z\tilde\upphi_{k-1}|^2\rp\}\,,
\end{align*}
and $F_{\uL}$, obtained from the latter by swapping $\uL$ by $L$ and vice versa,  and replacing $\sign s$ by $-\sign s$ in the last two lines, and where
\begin{align*}
I&=-\frac12\sign s \chi_T \frac{c_k'}{c_k}|\underline{\mc{L}}\tilde\upphi_k|^2+\frac12 \upomega_+\chi_Z'\Re\lp[(L-\uL)\tilde\upphi_k\overline{Z\tilde\upphi_k}\rp]+\frac18\chi_T'(|L\tilde\upphi_k|^2-|\uL\tilde\upphi_k|^2)\\
&\qquad-\frac12\chi_T\lp(\frac{c_k'}{c_k}\rp)'\lp(\frac{c_k'}{c_k}\sign s |\tilde\upphi_k|^2+\frac{wc_{k+1}}{c_k}\Re[\tilde\upphi_k\overline{\tilde\upphi_{k+1}}] +\Re[\underline{\mc L}\tilde\upphi_k\overline{\tilde\upphi_{k}}] \rp)-\frac{a\chi_T'}{4(r^2+a^2)}\Re[(L+\uL)\tilde\upphi_k\overline{Z\tilde\upphi_k}]\\
&\qquad+\frac12\sign s \lp(\frac{c_k'}{c_k}-(|s|-k)\frac{w'}{w}\rp)w\frac{c_{k+1}}{c_k}\Re[\tilde\upphi_{k+1}\overline{\lp(\sign s \chi_T\frac{c_k'}{c_k}\tilde\upphi_k-2\lp(\frac{a\chi_T}{r^2+a^2}-\upomega_+\chi_Z\rp)Z\tilde\upphi_k\rp)}]\\
&\qquad+\frac12\sign s \lp(\frac{c_k'}{c_k}-(|s|-k)\frac{w'}{w}\rp)\lp(w\frac{c_{k+1}}{c_k}\rp)^2|\tilde\upphi_{k+1}|^2\chi_T \\
&\qquad-\frac12\sign s \frac{c_k'}{c_k}\Re\lp[\overline{\underline{\mc{L}}\tilde\upphi_k}\lp(\sign s \chi_T\frac{c_k'}{c_k}\tilde\upphi_k+\chi_T\frac{wc_{k+1}}{c_k}\tilde\upphi_{k+1}-2\lp(\frac{a\chi_T}{r^2+a^2}-\upomega_+\chi_Z\rp)Z\tilde\upphi_k\rp)\rp]\\
&\qquad -\frac{2ar\chi_T}{r^2+a^2}w\sign s (|s|-k) \Re\lp[\lp(\underline{\mc L}\tilde\upphi_k+\frac{wc_{k+1}}{c_{k}}\tilde\upphi_{k+1}\rp) \overline{Z\tilde\upphi_k}\rp]\\
&\qquad -\frac{4ar}{r^2+a^2}w\sign s (|s|-k)\lp(\upomega_+\chi_Z -\frac{a\chi_T}{r^2+a^2}\rp)|Z\tilde\upphi_k|^2\\
&\qquad -a(|s|-k)\upomega_+ \chi_Z w\lp( \frac{c_k'}{c_k}\Im[\tilde\upphi_k\overline{Z\tilde\upphi_k}]+\sign s \frac{wc_{k+1}}{c_k}\Im[\tilde\upphi_{k+1}\overline{Z\tilde\upphi_k}]+\sign s\Im[\underline{\mc L}\tilde\upphi_k\overline{Z\tilde\upphi_k}]\rp)\\
&\qquad+w\sum_{j=0}^{k-1}\frac{c_j}{c_k}\Re\lp\{\lp[\lp((1-\xi)\tilde\chi+1-\tilde \chi\rp)\chi_T T+\upomega_+\chi_Z Z\rp]\overline{\tilde\upphi_k}(ac_{s,k,j}^{Z}Z+ac_{s,k,j}^{\rm id})\tilde\upphi_j\rp\}\\
&\qquad- \chi_T\tilde\chi \xi w\sum_{j=0}^{k-1}ac_{s,k,j}^{\rm id}\frac{c_j}{c_k}\Re\lp[\lp(w\frac{c_{j+1}}{c_j}\tilde\upphi_{j+1}-\frac{a}{r^2+a^2}Z \tilde\upphi_j+\sign s \tilde\upphi_j'+\sign s\frac{c_j'}{c_j}\tilde\upphi_j\rp)\overline{\tilde\upphi_k}\rp]\\
&\qquad + \xi \sign s \sum_{j=0}^{k-1} \lp\{\lp[\lp(\frac{c_j}{c_k}ac_{s,k,j}^{\rm id}\chi_T\tilde\chi\rp)'-ac_{s,k,j}^{\rm id}\frac{c_j'}{c_k}\chi_T\tilde\chi\rp]\Re[Z\tilde\upphi_j\overline{\tilde\upphi_{k}}] + ac_{s,k,j}^{\rm id}\frac{c_j}{c_k}\chi_T\tilde\chi\Re[Z\tilde\upphi_j\overline{\tilde\upphi_{k}'}] \rp\}\\
&\qquad +\sum_{j=0}^{k-2}\lp\{c_j c_{k-1}\mc L\lp(\chi_T\tilde\chi\xi a \frac{c_{s,k,j}^Z}{c_k^2}\rp)\Re[Z\tilde{\upphi_j}\overline{Z\tilde\upphi_{k-1}}] -  a c_{s,k,j}^Z\chi_T\tilde\chi\xi\frac{c_{j+1}}{c_k}\Re[Z\tilde\upphi_{j+1}\overline{\upphi_k}] \rp\}\\
&\qquad +\sum_{j=0}^{k-2}\frac{aw}{r^2+a^2}\chi_T\tilde\chi\xi ac_{s,k,j}^Z\frac{c_{k-1}c_{j+1}}{c_jc_k}\Re[Z\tilde\upphi_{j+1}\overline{Z\tilde\upphi_{k-1}}]\\
&\qquad +\lp[c_{k-1}^2\mc L\lp(\frac{a}{r^2+a^2}\chi_T\tilde\chi\xi\frac{ac_{s,k,k-1}^Z}{2c_k^2}\rp)+\frac{a}{r^2+a^2}\chi_T\tilde\chi\xi\frac{ac_{s,k,k-2}^Zc_{k-1}^2}{c_k c_{k-2}}\rp]|Z\tilde\upphi_{k-1}|^2\\
\end{align*}
\end{lemma}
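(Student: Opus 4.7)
The plan is to derive the identity by multiplying the wave equation \eqref{eq:transformed-k-tilde} for $\tilde\upphi_k$ by the conjugate of a suitable Killing-type multiplier and integrating by parts to produce divergence terms and a bulk. Specifically, I propose to apply the multiplier
\begin{align*}
\mathcal M \tilde\upphi_k \;:=\; \chi_T(r)\, T\tilde\upphi_k + \upomega_+\chi_Z(r)\, Z\tilde\upphi_k,
\end{align*}
take the real part of $\overline{(\mc M\tilde\upphi_k)}\cdot\swei{\tilde{\mathfrak R}}{s}_k\tilde\upphi_k$, and rearrange the result into the form $LF_L+\uL F_{\uL}+I+O$. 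Since $T$ and $Z$ are Killing and commute with all differential operators in the problem, the only commutator errors will come from derivatives of the cutoffs $\chi_T$, $\chi_Z$; these produce the $\chi_T'$, $\chi_Z'$ bulk contributions listed in the statement.

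The computation splits naturally into several blocks that I would handle separately. First, the principal part $\tfrac12(L\uL+\uL L)\tilde\upphi_k$ paired with $\chi_T T\tilde\upphi_k$ yields, after integration by parts in $L$ and $\uL$, the $|\uL\tilde\upphi_k|^2$ and $|L\tilde\upphi_k|^2$ terms of $F_{\uL}$ and $F_L$ together with the $\tfrac18\chi_T'(|L\tilde\upphi_k|^2-|\uL\tilde\upphi_k|^2)$ bulk term; the pairing with $\upomega_+\chi_Z Z\tilde\upphi_k$ produces the $\Re[(L-\uL)\tilde\upphi_k\,\overline{Z\tilde\upphi_k}]$ bulk and the mixed boundary term $\Re[Z\tilde\upphi_k\,\overline{\uL\tilde\upphi_k}]$. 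Second, the angular part $w\mathring{\slashed\triangle}^{[s]}_T\tilde\upphi_k$ paired with $\mc M\tilde\upphi_k$ gives, after integration on $\mathbb{S}^2$ using \eqref{eq:IBP-spin-weighted-laplacian}, the $|\mathring{\slashed\nabla}\tilde\upphi_k|^2$ flux contribution plus the lower-order angular terms involving $a^2\sin^2\theta$ and the $\cos\theta$ term (whose angular integration vanishes in $O$); here the extra $TZ$, $T^2$, $s\cos\theta\, T$ pieces inside $\mathring{\slashed\triangle}^{[s]}_T$ collect into the $|(\chi_T T+\upomega_+\chi_Z Z)\tilde\upphi_k|^2$ flux and the $\Im[\tilde\upphi_k\,\overline{Z\tilde\upphi_k}]$ piece. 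Third, the potential $\swei U{s}_k$ contributes the $\chi_T U_k|\tilde\upphi_k|^2$ flux entries after a straightforward integration. Fourth, the transport-damping term $-\sign s\,(c_k'/c_k)\underline{\mc L}\tilde\upphi_k$ in $\swei{\tilde{\mathfrak R}}{s}_k$, when paired against $\mc M\tilde\upphi_k$, produces the line involving $\chi_T(c_k'/c_k)|\underline{\mc L}\tilde\upphi_k|^2$ and, after integrating the mixed $\underline{\mc L}\tilde\upphi_k\cdot\overline{\tilde\upphi_k}$ terms by parts, the $(c_k'/c_k)'$ bulk terms.

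The right-hand side of \eqref{eq:transformed-k-tilde} must then be paired against $\mc M\tilde\upphi_k$. The $(w c_{k+1}/c_k)\tilde\upphi_{k+1}$ coupling is a genuinely lower-order term in the hierarchy, so I would rewrite $\underline{\mc L}\tilde\upphi_k$ in the product using the transport constraint \eqref{eq:transformed-constraint-tilde} to avoid losing derivatives on $\tilde\upphi_{k+1}$; this substitution is the source of the lines containing $(c_k'/c_k-(|s|-k)w'/w)(wc_{k+1}/c_k)$ and the $4ar/(r^2+a^2)\,w\sign s\,(|s|-k)$ terms. The real obstacle, which I expect to be by far the most bookkeeping-heavy step, is treating the coupling sum $\sum_{j<k} a\swei{\tilde{\mathfrak J}}{s}_{k,j}$ to the lower-level variables. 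Naively pairing these against $T$ produces angular derivatives $TZ\tilde\upphi_j$ we cannot control at this level; this is why the lemma introduces the auxiliary cutoffs $\tilde\chi(r)$ and $\xi(\tilde t^*)$.

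For the $\tilde\chi\xi$-localized portion of the $T$-pairing I would integrate by parts in $r^\ast$ to move one derivative off $\tilde\upphi_j$ and use the transport equation \eqref{eq:transformed-transport-tilde} to convert $\mc L\tilde\upphi_j$ into a combination of $\tilde\upphi_{j+1}$ and a constraint-inherited $\sign s\,\tilde\upphi_j'$; this produces the flux contribution on the $F_L/F_{\uL}$ side and the long bulk block starting with $-\chi_T\tilde\chi\xi w\sum ac_{s,k,j}^{\rm id}(c_j/c_k)\Re[\cdots]$. The residual $(1-\xi)\tilde\chi+(1-\tilde\chi)$ piece is left in the bulk $I$ and will ultimately be absorbed by the transport estimates \eqref{eq:phys-space-transport-identity-s<0}--\eqref{eq:phys-space-transport-identity-s>0}. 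For the pure $Z$-pairing, since $Z$ commutes with everything and all coefficients are real, integration by parts in $Z$ on $\mathbb{S}^2$ yields only symmetric terms, giving the $\upomega_+\chi_Z(\cdot)Z$ bulk lines; the $c_{s,k,j}^Z$ pieces are handled by a further integration by parts in $r^\ast$ against $\tilde\chi\xi$, producing the $\mc L(\chi_T\tilde\chi\xi a c_{s,k,j}^Z/c_k^2)$ terms and, at the $j=k-1$ level, the quadratic $|Z\tilde\upphi_{k-1}|^2$ boundary term which, because of the $(1-\sign s)$ factor in $F_L$ and $(1+\sign s)$ factor in $F_{\uL}$, appears only on the relevant null direction. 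Collecting and combining all four blocks, organizing flux versus bulk versus $S^2$-divergence (the latter being the $O$ term), gives exactly the claimed identity.
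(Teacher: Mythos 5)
Your sketch follows the same approach as the paper: multiply \eqref{eq:transformed-k-tilde} by $\overline{(\chi_T T + \upomega_+\chi_Z Z)\tilde\upphi_k}$, take the real part, integrate by parts, and use the transport relation \eqref{eq:transformed-transport-tilde} for the $k<|s|$ cases together with the cutoffs $\tilde\chi$, $\xi$ to organize the coupling contributions $\tilde{\mathfrak J}_{k,j}$. One small correction of ordering: in the coupling block the variable $\tilde\upphi_j$ carries no radial derivative inside $\tilde{\mathfrak J}_{k,j}$, so you cannot begin by moving an $r^*$ derivative off it; the paper first integrates by parts in $t$ (moving $T$ from $\tilde\upphi_k$ onto $\tilde\upphi_j$), then substitutes for $T\tilde\upphi_j$ via the transport relation — which is what first generates $\tilde\upphi_j'$ — and only afterwards integrates by parts in $r^*$ against $\tilde\chi\xi$; with that reordering your block decomposition matches the paper's computation.
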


\begin{proof}
The proof follows from multiplying \eqref{eq:transformed-k-tilde} by $\overline{(\chi_T T+\upomega_+\chi_Z Z)\tilde\upphi_k}$, taking the real part and  integrating by parts. The contributions  from the left hand side of \eqref{eq:transformed-k-tilde}  are easiest when $k=|s|$, and the reader may see the steps in \cite[Appendix B]{Dafermos2017}. In the case $k<|s|$, we use the transport identity \eqref{eq:transformed-transport-tilde} whenever possible to deal with the additional terms.

For the terms due to coupling, we can move the $T$ derivative onto the lower-level transformed variables:
\begin{align*}
&aw\Re\lp[\overline{T\tilde\upphi_k}(c_{s,k,j}^{\rm id}+c_{s,k,j}^{Z}Z)\upphi_j\rp] \\
&\quad= T\Re\lp[aw\overline{\tilde\upphi_k}(c_{s,k,i}^{\rm id}+c_{s,k,i}^{Z}Z)\upphi_j\rp]
-Z\Re\lp[aw\overline{\tilde\upphi_k}c_{s,k,j}^{Z}T\upphi_j\rp]
+aw\Re\lp[T\upphi_j(c_{s,k,j}^{Z}Z-c_{s,k,j}^{\rm id})\overline{\tilde\upphi_k}\rp] \\
&\quad=T\lp\{aw\Re\lp[\overline{\tilde\upphi_k}(c_{s,k,j}^{\rm id}+c_{s,k,j}^{Z}Z)\upphi_j\rp]\rp\}
-Z\lp\{aw\Re\lp[\overline{\tilde\upphi_k}c_{s,k,j}^{Z}T\upphi_j\rp]\rp\}\\
&\qquad+awc_j\Re\lp[\lp(w\frac{c_{j+1}}{c_j}\tilde\upphi_{j+1}-\frac{a}{r^2+a^2}Z \tilde\upphi_j+\sign s \tilde\upphi_j'+\sign s\frac{c_j'}{c_j}\tilde\upphi_j\rp)(c_{s,k,j}^{Z}Z-c_{s,k,j}^{\rm id})\overline{\tilde\upphi_k}\rp]\,.
\end{align*}
When $k=|s|$, we will find it convenient to deal with the contributions from $Z\overline{\tilde\upphi_k}$ more carefully, see also our previous \cite[Lemma 6.3.2]{SRTdC2020}. We have
\begin{align*}
&aw\frac{c_j}{c_k}\tilde{\chi}\Re\lp[\lp( \tilde\upphi_j'+\frac{c_j'}{c_j}\tilde\upphi_j\rp)c_{s,k,j}^{Z}\overline{Z\tilde\upphi_k}\rp] \\
&\quad=Z\lp(\frac{awc_{s,k,j}^Z}{c_k}\tilde\chi \upphi_j'\overline{\tilde\upphi_k}\rp)- \frac{1}{2}(L-\uL)\lp[aw\frac{c_j}{c_k}c_{s,k,j}^Z\tilde \chi \Re[Z\tilde\upphi_j\overline{\tilde\upphi_k}]\rp]+\frac{awc_j}{c_k}c_{s,k,j}^Z \tilde \chi\Re[Z\tilde\upphi_j\overline{\tilde\upphi_k'}]\\
&\qquad\quad+\lp(\frac{aw}{c_k}c_{s,k,j}^Z\tilde\chi\rp)' c_j\Re[Z\tilde\upphi_j\overline{\tilde\upphi_k}]
\end{align*}
For the remaining terms, let us distinguish between two cases: the coupling with $j<k-1$ and the coupling with $j=k-1$. In the latter case,
\begin{align*}
&aw\frac{c_j}{c_k}c_{s,k,j}^Z\tilde \chi\Re\lp[\lp(\frac{wc_k}{c_j}\tilde\upphi_k-\frac{a}{r^2+a^2}\tilde\upphi_j\rp)\frac{Z\upphi_k}{c_k}\rp]\\
&=Z\lp(\frac12aw^2\frac{c_j}{c_k}c_{s,k,j}^Z\tilde\chi|\tilde\upphi_k|^2\rp)-\mc L\lp(\frac{a^2}{2(r^2+a^2)}\frac{c_j^2}{c_k^2}c_{s,k,j}^Z\tilde\chi |Z\tilde \upphi_j|^2\rp)-\sign sc_j^2\lp(\frac{a^2}{2(r^2+a^2)}\frac{c_{s,k,j}^Z}{c_k^2}\tilde\chi\rp)' |Z\tilde \upphi_j|^2
\end{align*}
whereas in the former case,
\begin{align*}
&aw\frac{c_j}{c_k}c_{s,k,j}^Z\tilde\chi\Re\lp[\lp(w\frac{c_{j+1}}{c_j}\tilde\phi_{j+1}-\frac{a}{r^2+a^2}Z\tilde\upphi_j\rp)\overline{Z\tilde\upphi_k}\rp] \\
&\quad=Z\lp(a w^2 \frac{c_{j+1}}{c_k}\tilde\chi\Re[\tilde\upphi_{j+1}\overline{\tilde\upphi_k}]\rp)-aw^2\frac{c_{j+1}}{c_k}c_{s,k,j}^Z\tilde\chi\Re[Z\tilde\upphi_{j+1}\overline{\tilde\upphi_k}]-\frac{a^2}{r^2+a^2}\frac{c_{s,k,j}^Z}{c_kc_{k-1}}\Re[Z\mc L\upphi_{k-1}\overline{Z\upphi_j}]\\
&\quad= Z\lp(a w^2 \frac{c_{j+1}}{c_k}\tilde\chi\Re[\tilde\upphi_{j+1}\overline{\tilde\upphi_k}]\rp) -
\mc L\lp(\frac{a^2\tilde\chi}{r^2+a^2}\frac{c_{s,k,j}^Z}{c_kc_{k-1}}\Re[Z\upphi_{k-1}\overline{Z\upphi_j}]\rp)-aw^2\frac{c_{j+1}}{c_k}c_{s,k,j}^Z\tilde\chi\Re[Z\tilde\upphi_{j+1}\overline{\tilde\upphi_k}]\\
&\quad\qquad+\sign sc_{j}c_{k-1}\lp(\frac{a^2\tilde\chi}{r^2+a^2}\frac{c_{s,k,j}^Z}{c_kc_{k-1}}\rp)'\Re[Z\tilde \upphi_{k-1}\overline{Z\tilde\upphi_j}]+\frac{a^2w}{r^2+a^2}\frac{c_{j+1}}{c_k}c_{s,k,j}^Z\tilde\chi\Re[Z\tilde\upphi_{k-1}\overline{Z\tilde\upphi_{j+1}}]\,.
\end{align*}
To conclude, note that we can also introduce a cutoff in (hyperboloidal) time $\xi$ in the above computations, noting that $1=\xi+(1-\xi)$. 
\end{proof}

The identity in Lemma~\ref{lemma:phys-space-Killing-multiplier-identity} is, in large part, the motivation for the norms considers for the fluxes of $\tilde\upphi_k$ across the hypersurfaces bounding $\mc{R}_{(0,\tau)}$ and even some of the norms considered on the time slab  $\mc{R}_{(0,\tau)}$ in some cases. As an example, we consider some applications of the result where  $1-\tilde\chi=\zeta$ if $k=|s|$ and $\tilde\chi=0$ otherwise, and $\xi$ is a bump function supported in $\tilde t^*\in (0,\tau)$ and equal to 1 for $\tilde t^*\in (0,\tau-1)$. Choosing $\chi_T\equiv 1$, and $0\leq \chi_Z \leq 1$ to be a smooth compactly supported function equal to 1 at $r=r_+$ such that $T+\upomega_+\chi_Z\Phi$ is is timelike in the support of $\tilde\upphi_k$ for all $k=0,\dots, |s|$, we have
\begin{align*}
&\mathbb{E}[\tilde\upphi_k](\tau)+\mathbb{E}_{\mc I^+}[\tilde\upphi_k](0,\tau)+\mathbb{E}_{\mc H^+}[\tilde\upphi_k](0,\tau)+\int_0^\tau \int_{\Sigma_{\tau'}}\frac{(-\sign s )}{c_k}\frac{dc_k}{dr}|\underline{\mc{L}}\tilde\upphi_k|^2 drd\sigma d\tau' \\
&\qquad +\int_0^\tau \int_{\Sigma_{\tau'}}a\upomega_+\chi_Z\sum_{j=0}^{k-1}w\frac{c_j}{c_k}c_{s,k,j}^Z(r)\Re[Z\tilde\upphi_k\overline{Z}\tilde\upphi_j]drd\sigma d\tau'\\
&\quad\leq B\sum_{j=0}^k \mathbb{E}[\tilde\upphi_j](0)+B(\varepsilon+|a|\mathbbm{1}_{\{k=|s|\}}) \sum_{j=0}^{k-1}\mathbb{I}^{\rm deg}[\tilde\upphi_k](0,\tau)+B(|s|-k)\varepsilon\mathbb{I}[\tilde\upphi_k](0,\tau)\\
&\quad\qquad+B(|s|-k)\varepsilon^{-1}\int_0^\tau \int_{\Sigma_{\tau'}}r^{-3}\lp(|\tilde\upphi_{k+1}|^2+|\tilde\upphi_k|^2+a^2|Z\tilde\upphi_k|^2\rp)drd\sigma d\tau'\\
&\quad\qquad +Ba^2\varepsilon^{-1}\mathbbm{1}_{\{k=|s|\}}\int_0^\tau \int_{\Sigma_{\tau'}}\lp[r^{-3}\sum_{j=0}^{|s|-1}|Z\tilde\upphi_j|^2+\lp|\frac{d\chi_Z}{dr}\rp||Z\tilde\upphi_k|^2\rp]drd\sigma d\tau'\,, \numberthis \label{eq:phys-space-Killing-multiplier-consequence-1}
\end{align*}
for some small $\varepsilon>0$. To obtain this estimate, we have used the fact that any bulk terms involving $|T\Psi|^2$ come with smallness and: either are supported for $r$ outside the trapping region, in which case they are controlled by $\mathbb I^{\rm deg}$; or are localized to $\mc{R}_{(0,1)}\cup\mc{R}_{(\tau-1,\tau)}$, and thus are controlled by energy fluxes. Similarly, if we simply change from $c_k$ to $\tilde c_k$ for $s>0$, we obtain
\begin{align*}
&\mathbb{E}[\dbtilde\upphi_k](\tau)+\mathbb{E}_{\mc I^+}[\dbtilde\upphi_k](0,\tau)+\mathbb{E}_{\mc H^+}[\dbtilde\upphi_k](0,\tau)+\int_0^\tau \int_{\Sigma_{\tau'}}\frac{2r}{r^2+a^2}(|s|-k)|\underline{\mc{L}}\dbtilde\upphi_k|^2 drd\sigma d\tau' \\
&\quad\leq B\sum_{j=0}^k \mathbb{E}[\dbtilde\upphi_j](0)+B(\varepsilon+|a|\mathbbm{1}_{\{k=|s|\}}) \sum_{j=0}^{k-1}\mathbb{I}^{\rm deg}[\dbtilde\upphi_k](0,\tau)+B(|s|-k)\varepsilon\mathbb{I}[\dbtilde\upphi_k](0,\tau)\\
&\quad\qquad+B(|s|-k)\varepsilon^{-1}\int_0^\tau \int_{\Sigma_{\tau'}}r^{-2}\lp(|\dbtilde\upphi_{k+1}|^2+|\dbtilde\upphi_k|^2+a^2|Z\dbtilde\upphi_k|^2\rp)drd\sigma d\tau'\\
&\quad\qquad +Ba^2\varepsilon^{-1}\mathbbm{1}_{\{k=|s|\}}\int_0^\tau \int_{\Sigma_{\tau'}}\lp[r^{-2}\sum_{j=0}^{|s|-1}|Z\dbtilde\upphi_j|^2+\lp|\frac{d\chi_Z}{dr}\rp||Z\dbtilde\upphi_k|^2\rp]drd\sigma d\tau'\,. \numberthis \label{eq:phys-space-Killing-multiplier-consequence-1-peel}
\end{align*}

We can also choose $\tilde \chi\equiv 0$, and $\chi_T=\chi_R$; then, we obtain
\begin{align*}
&\mathbb{E}[\tilde\upphi_k\mathbbm{1}_{\{r^*\geq 2R^*\}}](\tau)+\mathbb{E}_{\mc I^+}[\tilde\upphi_k](0,\tau)+\int_0^\tau \int_{\Sigma_{\tau'}\cap\{r^*\geq 2R^*\}}\frac{(-\sign s )}{c_k}\frac{dc_k}{dr}|\underline{\mc{L}}\tilde\upphi_k|^2 drd\sigma d\tau'\\
 &\quad\leq  B\sum_{j=0}^k \lp(\mathbb{E}[\tilde\upphi_j](0)+\mathbb{E}[\tilde\upphi_j\mathbbm{1}_{\{R^*\leq r^*\leq 2R^*\}}](\tau)\rp)+\varepsilon\mathbb{I}_0[\tilde\upphi_k\mathbbm{1}_{\{r^*\geq R^*\}}](0,\tau)\\
 &\quad\qquad+B\sum_{j=0}^{k-1}\lp(\mathbb{E}[\tilde\upphi_k\mathbbm{1}_{\{r^*\geq 2R^*\}}](\tau)+\frac{\varepsilon^{-1}}{R^*}\mathbb{I}_0[\tilde\upphi_j\mathbbm{1}_{\{r^*\geq R^*\}}](0,\tau)\rp)\\
 &\quad\qquad +B(|s|-k)\varepsilon^{-1}\int_0^\tau \int_{\Sigma_\tau'\cap[R^*,\infty)} r^{-3}|\tilde\upphi_{k+1}|dr d\sigma d\tau'\\
 &\quad\leq  B\sum_{j=0}^k \lp(\mathbb{E}[\tilde\upphi_j](0)+\mathbb{E}[\tilde\upphi_j\mathbbm{1}_{\{R^*\leq r^*\leq 2R^*\}}](\tau)+\varepsilon\mathbb{I}_0[\tilde\upphi_j\mathbbm{1}_{\{r^*\geq R^*\}}](0,\tau)\rp)\\
 &\quad\qquad+B(|s|-k)\varepsilon^{-1}\int_0^\tau \int_{\Sigma_\tau'\cap[R^*,\infty)} r^{-3}|\tilde\upphi_{k+1}|dr d\sigma d\tau'\,, \numberthis \label{eq:phys-space-Killing-multiplier-consequence-3}
\end{align*}
for sufficiently large $R^*=\varepsilon^{-2}$, where we are using the transport estimate \eqref{eq:phys-space-transport-identity-s<0} to obtain the gain in $R^*$ in the third line in the case $s\leq 0$. As before, the same estimate holds for the weighted solutions $\dbtilde\upphi_k$ if $r^{-3}$ is replaced by $r^{-2}$ in the bulk integral and if $\mathbb{I}_0$ is replaced by $\mathbb{I}_1$.

A similar procedure for $r^*\to -\infty$, choosing $\chi=\chi_T=\chi_R(-r^*)$, 
\begin{align*}
&\mathbb{E}[\tilde\upphi_k\mathbbm{1}_{\{r^*\leq -2R^*\}}](\tau)+\mathbb{E}_{\mc H^+}[\tilde\upphi_k](0,\tau)+\int_0^\tau \int_{\Sigma_{\tau'}\cap\{r^*\leq -2R^*\}}\frac{(-\sign s )}{c_k}\frac{dc_k}{dr}|\underline{\mc{L}}\tilde\upphi_k|^2 drd\sigma d\tau'\\
 &\quad\leq  B\sum_{j=0}^k \lp(\mathbb{E}[\tilde\upphi_j](0)+\mathbb{E}[\tilde\upphi_j\mathbbm{1}_{\{R^*\leq -r^*\leq 2R^*\}}](\tau)+\varepsilon\mathbb{I}_0[\tilde\upphi_j\mathbbm{1}_{\{r^*\leq -R^*\}}](0,\tau)\rp)\\ 
 &\quad\qquad+B(|s|-k)\varepsilon^{-1}\int_0^\tau \int_{\Sigma_\tau'\cap(-\infty,-R^*]} |\tilde\upphi_{k+1}|dr d\sigma d\tau'\,,\numberthis \label{eq:phys-space-Killing-multiplier-consequence-4}
\end{align*}
for sufficiently large $R^*=\varepsilon^{-2}$, and using the transport estimate \eqref{eq:phys-space-transport-identity-s>0} in the case $s>0$. In the above, $\tilde\upphi_k$ may be replaced by $\dbtilde\upphi$, as those $r$-weights are not important.

As we can see from all the above examples, estimates for the data of $\tilde\upphi_{k}$ and for certain spacetime integrals lead to control over the fluxes across $\Sigma_\tau$, $\mc{H}^+_{(0,\tau)}$, $\mc{I}^+_{(0,\tau)}$ and, if $s\neq 0$, integrated control over a null derivative on $\mc{R}_{(0,\tau)}$.

\subsubsection{Virial multiplier identities}

In this section we derive the identities, following from using $L-\uL$ as a commutator, which will allows us to control the bulk terms for the transformed variables on a spacetime slab $\mc{R}_{(0,\tau)}$, in terms of the fluxes through its boundary.

\begin{lemma}[Virial multiplier identity] \label{lemma:phys-space-y-multiplier-identity} Let $|s|\in\mathbb{Z}$, and $y$ be a continuous, piecewise $C^1$ function. Suppose that, for $0\leq k\leq |s|$, $\swei{\tilde\upphi}{s}_k$ are homogeneous solutions to the transformed system of Definition~\ref{def:transformed-system}; we drop the superscripts in what follows to ease the notation. We have an identity of the form \eqref{eq:template-physical-space-identity} where  the bulk term is given by
\begin{align*}
I&=\frac12 y'|\mc L\tilde\upphi_k|^2+\frac12\lp( y'-2\frac{c_k'}{c_k}y\rp)|\underline{\mc L}\tilde\upphi_k|^2-(wy)'|\mathring{\slashed{\nabla}} \tilde\upphi_k|^2-\lp[y \lp(U_k-\lp(\frac{c_k'}{c_k}\rp)'\rp)\rp]'|\tilde\upphi_k|^2-\frac{2a}{r^2+a^2}(wy)'|Z\tilde\upphi_k|^2\\
&\qquad+a\lp((wy)'-\frac{2ryw}{r^2+a^2}\rp)\Re[Z\tilde\upphi_k\overline{(L+\uL)\tilde\upphi_k}]+(wy)'a^2\sin^2\theta|T\tilde\upphi_k|^2-as\cos\theta(wy)'\Im[\tilde\upphi_k\overline{T\tilde\upphi_k}]\\
&\qquad + y\frac{c_k'}{c_k}\Re\lp[\underline{\mc L} \tilde\upphi_k\overline{\lp(\sign s \frac{c_k'}{c_k}\tilde\upphi_k+\frac{wc_{k+1}}{c_k}\tilde\upphi_{k+1}\rp)}\rp] -y\lp(\frac{c_k'}{c_k}-(|s|-k)\frac{w'}{w}\rp)\lp(\frac{wc_{k+1}}{c_k}\rp)^2|\tilde\upphi_{k+1}|^2\\
&\qquad -y\lp(\frac{c_k'}{c_k}-(|s|-k)\frac{w'}{w}\rp)\frac{wc_{k+1}}{c_k}\Re\lp[\lp(\sign s\frac{c_k'}{c_k}\tilde\upphi_k-\underline{\mc L}\tilde\upphi_k\rp)\overline{\tilde\upphi_{k+1}}\rp]\\
&\qquad +\frac{4arw}{r^2+a^2}(|s|-k)y\frac{wc_{k+1}}{c_k}\Re[\overline{Z\tilde\upphi_k}\tilde\upphi_{k+1}]-\frac{4arw}{r^2+a^2}y(|s|-k)\Re[\overline{Z\tilde\upphi_k}\underline{\mc L}\tilde\upphi_{k}]\\
&\qquad+\sign s\sum_{j=0}^{k-1}wy\frac{c_j}{c_k}\sign s\Re\lp[(ac_{s,k,j}^{\rm id}+ac_{s,k,j}^{Z}Z)\tilde\upphi_j\overline{\underline{\mc{L}}\tilde\upphi_k}\rp]\\
&\qquad-\sum_{j=0}^{k-1}\lp[\lp(yw\frac{c_j}{c_k}ac_{s,k,j}^{\rm id}\rp)'\Re[\tilde\upphi_j\overline{\tilde\upphi_k}]+\lp(yw\frac{c_j}{c_k}ac_{s,k,j}^{Z}\rp)'\Re[Z\tilde\upphi_j\overline{\tilde\upphi_k}]\rp]\\
&\qquad+\sign s\sum_{j=0}^{k-1}wy\frac{c_j}{c_k}\sign s\Re\lp[\lp(\sign s \frac{c_j'}{c_j}\tilde\upphi_j+w\frac{c_{j+1}}{c_j}\tilde\upphi_{j+1}\rp)(ac_{s,k,j}^{\rm id}-ac_{s,k,j}^{Z}Z)\overline{\tilde\upphi_k}\rp]\,,
\end{align*}
and the boundary terms are determined through:
\begin{align*}
2F_L&=-\frac12y|\uL \tilde\upphi_k|^2+yw|\mathring{\slashed{\nabla}} \tilde\upphi_k|^2+y\lp[U_k-\lp(\frac{c_k'}{c_k}\rp)'\rp]|\tilde\upphi_k|^2+\frac{2a^2}{r^2+a^2}wy|Z\tilde\upphi_k|^2-2awy\Re[Z\tilde\upphi_k\overline{\uL\upphi_k}]\\
&\qquad-\frac{a}{r^2+a^2}\lp(\frac{a^3wy\sin^2\theta}{r^2+a^2}|Z\tilde\upphi_k|^2-2a\Re[Z\tilde\upphi_k\overline{\uL\tilde\upphi_k}]\rp)+awys\cos\theta\Im\lp[\tilde\upphi_k\overline{\lp(2\uL-\frac{a}{r^2+a^2}Z\rp)\tilde\upphi_k}\rp]\\
&\qquad-\frac14a^2wy\sin^2\theta\lp(|\uL\tilde\upphi_k|^2+3|L\tilde\upphi_k|^2+2\Re[L\tilde\upphi_k\overline{\uL\tilde\upphi_k}]\rp)\\
&\qquad +\frac{(1-\sign s)}{2} \sum_{j=0}^{k-1}yw\frac{c_j}{c_k}\Re[\lp(ac_{s,k,j}^{\rm id}+ac_{s,k,j}^{Z}Z\rp)\tilde\upphi_j\overline{\tilde\upphi_k}]
\end{align*}
with $F_{\uL}$ obtained by replacing $L$ by $\uL$, inverting the sign in every term and replacing $\sign s$ by $-\sign s$. 

Alternatively, we may use the identity 
\begin{align*}
&y\frac{c_k'}{c_k}\Re\lp[\underline{\mc L} \tilde\upphi_k\overline{\lp(\sign s \frac{c_k'}{c_k}\tilde\upphi_k+\frac{wc_{k+1}}{c_k}\tilde\upphi_{k+1}\rp)}\rp]\\
&\quad  = \mc L \lp(y\frac{c_k'}{c_k}\Re\lp[\underline{\mc L}\tilde\upphi_k\overline{\tilde\upphi_k}\rp]\rp) +\sign s \lp(y\frac{c_k'}{c_k}\rp)'\sign s \Re\lp[\underline{\mc L}\tilde\upphi_k\overline{\tilde\upphi_k}\rp]- y\frac{c_k'}{c_k}\Re\lp[\mc L \underline{\mc L}\tilde\upphi_k\overline{{\mc L}\tilde\upphi_k}\rp]\,,
\end{align*}
 replace the last term on the right hand side through  \eqref{eq:transformed-k-tilde} and integrate by parts once more to split up the derivatives.
\end{lemma}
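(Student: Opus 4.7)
The plan is to multiply the wave equation \eqref{eq:transformed-k-tilde} (with $\swei{\tilde{\mathfrak H}}{s}_k\equiv 0$, since we are in the homogeneous setting) by the quantity $\frac12 y\,\overline{(L-\underline{L})\tilde\upphi_k}$, take real parts, and then re-express each contribution as a sum of a transport derivative $L F_L + \underline{L}F_{\underline L}$, a piece of the form $I$, and a spherical angular divergence $O$ with $\int_{\mathbb S^2} O\, d\sigma = 0$. This is the direct analogue of the classical Morawetz/virial identity and follows the template already used for $\swei{\Phi}{s}$ in \cite{Dafermos2017}, the novelty being purely bookkeeping coming from the additional lower-order terms $-\sign s (c_k'/c_k)\underline{\mc L}$ and the couplings to $\tilde\upphi_{j}$ for $j\leq k-1$ and $j=k+1$.

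First I would handle the principal part $\tfrac12(L\underline{L}+\underline{L}L)\tilde\upphi_k$: using the elementary identity $2\Re[L\underline{L}\tilde\upphi_k\cdot\overline{L\tilde\upphi_k}]=L|L\tilde\upphi_k|^2$ and its conjugate, together with the direct computation $[L,\underline{L}]=-\tfrac{4arw}{r^2+a^2}Z$ from the expressions in Section~\ref{sec:relevant-vector fields}, one obtains the $\tfrac12 y'(|\mc L\tilde\upphi_k|^2+|\underline{\mc L}\tilde\upphi_k|^2)$ bulk piece (after integration by parts against $y$) and a $Z$-cross correction which, combined with the $-\tfrac{4arw}{r^2+a^2}\sign s (|s|-k)Z$ term in \eqref{eq:transformed-R-k-tilde} and the substitution $T=\tfrac12(L+\underline{L})-\tfrac{a}{r^2+a^2}Z$, produces the $\Re[Z\tilde\upphi_k\cdot\overline{(L+\underline{L})\tilde\upphi_k}]$ contributions displayed in $I$. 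Next, for the angular piece $w\mathring{\slashed\triangle}_T^{[s]}\tilde\upphi_k$ I would split $\mathring{\slashed\triangle}_T^{[s]}=\mathring{\slashed\triangle}^{[s]}-(2aTZ+a^2\sin^2\theta\, TT-2ias\cos\theta\, T)$ and apply \eqref{eq:IBP-spin-weighted-laplacian}: the first summand, after integration by parts in $r^*$ with weight $wy$, yields the $(wy)'|\mathring{\slashed\nabla}^{[s]}\tilde\upphi_k|^2$ bulk term and the matching $yw|\mathring{\slashed\nabla}^{[s]}\tilde\upphi_k|^2$ boundary term, while the remaining pieces account for the $(wy)'a^2\sin^2\theta|T\tilde\upphi_k|^2$ and $as\cos\theta$ contributions. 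The algebraic potential $U_k-(c_k'/c_k)'$ is then treated by a single integration by parts against $y$, and the lower-order $-\sign s(c_k'/c_k)\underline{\mc L}$ term produces, after expanding $L-\underline{L}=\mc L-\underline{\mc L}$, precisely the $\tfrac12 y(c_k'/c_k)(\cdots)$ contribution.

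For the coupling on the right-hand side of \eqref{eq:transformed-k-tilde}, the $\tilde\upphi_{k+1}$ piece is substituted using the transport identity \eqref{eq:transformed-transport-tilde} in the form $w(c_{k+1}/c_k)\tilde\upphi_{k+1}=\mc L\tilde\upphi_k-\sign s(c_k'/c_k)\tilde\upphi_k$, so that the weight $\sign s (|s|-k)(w'/w)-(c_k'/c_k)$ naturally appears, matching the structure shown. Each coupling $\swei{\tilde{\mathfrak J}}{s}_{k,i}$ with $i<k$ is split through $\tfrac12 y\overline{(L-\underline{L})\tilde\upphi_k}=\tfrac12 y\overline{\mc L\tilde\upphi_k}-\tfrac12 y\overline{\underline{\mc L}\tilde\upphi_k}$; the $\mc L$ half is integrated by parts against the $r$-dependent weight $yw(c_j/c_k)(ac_{s,k,j}^{\mathrm{id}}+ac_{s,k,j}^{Z}Z)$, producing the boundary $F_L$, $F_{\underline L}$ entries carrying the $(1-\sign s)/2$ prefactors and the bulk derivative terms $(yw(c_j/c_k)ac_{s,k,j}^{\mathrm{id,Z}})'$, while the $\underline{\mc L}$ half contributes directly to $I$. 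The second, alternative, form of the $(c_k'/c_k)\underline{\mc L}\tilde\upphi_k$ term is obtained by the further integration by parts
\[
y\tfrac{c_k'}{c_k}\Re[\underline{\mc L}\tilde\upphi_k\cdot\overline{\mc L\tilde\upphi_k}]=\mc L\bigl(y\tfrac{c_k'}{c_k}\Re[\underline{\mc L}\tilde\upphi_k\overline{\tilde\upphi_k}]\bigr)+\sign s\bigl(y\tfrac{c_k'}{c_k}\bigr)'\Re[\underline{\mc L}\tilde\upphi_k\overline{\tilde\upphi_k}]-y\tfrac{c_k'}{c_k}\Re[\mc L\underline{\mc L}\tilde\upphi_k\overline{\mc L\tilde\upphi_k}],
\]
after which the last term is re-expressed through \eqref{eq:transformed-k-tilde} and the commutator $[\mc L,\underline{\mc L}]$. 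The principal obstacle is purely organizational: ensuring that the various $Z\tilde\upphi_{k-1}$ and $Z\tilde\upphi_j$ $(j<k-1)$ terms produced by the coupling — which carry different numbers of $Z$ derivatives depending on whether one activates the transport equation for $\tilde\upphi_{k-1}$ to trade $\mc L\tilde\upphi_{k-1}$ for $\tilde\upphi_k$ — assemble exactly into the displayed mixed quadratic forms modulo angular divergences.
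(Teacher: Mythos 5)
Your proposal is correct and follows essentially the same strategy as the paper: multiply the homogeneous version of \eqref{eq:transformed-k-tilde} by $y\,\overline{(L-\underline{L})\tilde\upphi_k}$, take real parts, integrate by parts, treat the $k=|s|$ contributions as in \cite[Appendix B]{Dafermos2017}, treat $U_k-(c_k'/c_k)'$ as a unit, use the transport identity \eqref{eq:transformed-transport-tilde} to convert the new terms that appear for $k<|s|$, and for the couplings $\swei{\tilde{\mathfrak J}}{s}_{k,j}$ integrate by parts to move the $\mc L$-derivative onto $\tilde\upphi_j$ before invoking transport again.

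One small slip worth flagging: you split the multiplier as $\overline{(L-\underline{L})\tilde\upphi_k}=\overline{\mc L\tilde\upphi_k}-\overline{\underline{\mc L}\tilde\upphi_k}$, but the correct relation (which the paper uses explicitly) is $L-\underline{L}=-\sign s\,(\mc L-\underline{\mc L})$. These agree only when $s<0$; for $s>0$ the two differ by a sign, and this factor of $-\sign s$ is precisely what produces the various $\sign s$ prefactors in the displayed $I$ and $F_L$, $F_{\underline L}$ (for example it is responsible for the symmetry rule ``replace $\sign s$ by $-\sign s$'' when passing between $F_L$ and $F_{\underline L}$). Tracking it carefully is necessary to reproduce the identity with the stated signs.
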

\begin{proof}
The proof follows from multiplying \eqref{eq:transformed-k-tilde} by $y\overline{(L-\uL)\tilde\upphi_k}$, taking the real part and  integrating by parts. The contributions from the left hand side of \eqref{eq:transformed-k-tilde}  are easy to derive when $k=|s|$, see the steps in \cite[Appendix B]{Dafermos2017}. In the case $k<|s|$, we write  $(L-\uL)=-\sign s (\mc L-\underline{\mc L})$, and use the identity \eqref{eq:transformed-transport-tilde} in all the new terms except for the one arising from $\lp(\frac{c_k'}{c_k}\rp)'$ in \eqref{eq:transformed-k-tilde}. For the coupling terms, we integrate by parts to shift the $\mc L$ derivative onto the $j$th transformed variable. 
\end{proof}

Let  $\chi_R(r^*)$ be a cutoff equal to 1 for $r^*\geq 2R^*$ and supported in $\{r^*\geq R^*\}$. Consider Lemma~\ref{lemma:phys-space-y-multiplier-identity} with $y=\chi_R(r^*)(1-|r^*|^{-\delta})$ for some $\delta\in(0,1]$. Then, if $R^*$ is sufficiently large,
\begin{align*}
&b(\delta)\mathbb{I}_{-\delta}[\tilde\upphi_k\mathbbm{1}_{\{r^*\geq 2R^*\}}](0,\tau) \\
&\quad\leq\sum_{j=0}^k\lp(\mathbb{E}[\tilde\upphi_j\mathbbm{1}_{\{r^*\geq R^*\}}](\tau)+\mathbb{E}[\tilde\upphi_j](0)+\mathbb{I}_0[\tilde\upphi_j\mathbbm{1}_{\{R^*\leq r^*\leq 2R^*\}}](0,\tau)\rp)+\mathbb{E}_{\mc I^+}[\tilde\upphi_k](0,\tau)\\
&\quad\qquad+\frac{1}{R^*}\sum_{j=0}^{k-1}\mathbb{I}_{-\delta}[\tilde\upphi_j](0,\tau)+(|s|-k)\int_{\mc R_{(0,\tau)}}r^{-3}|\tilde\upphi_{k+1}|^2\mathbbm{1}_{\{r^*\geq R^*\}}drd\sigma d\tau'\\
&\quad\leq\sum_{j=0}^{k}\lp(\mathbb{E}[\tilde\upphi_j\mathbbm{1}_{\{r^*\geq R^*\}}](\tau)+\mathbb{E}[\tilde\upphi_j](0)+\mathbb{E}_{\mc I^+}[\tilde\upphi_j](0,\tau)+\mathbb{I}_0[\tilde\upphi_j\mathbbm{1}_{\{R^*\leq r^*\leq 2R^*\}}](0,\tau)\rp)\\
&\quad\qquad+(|s|-k)\int_{\mc R_{(0,\tau)}}r^{-3}|\tilde\upphi_{k+1}|^2\mathbbm{1}_{\{r^*\geq R^*\}}drd\sigma d\tau'
\,. \numberthis\label{eq:phys-space-y-multiplier-consequence-infty}
\end{align*}
Here, for $s<0$, we have made use of the estimate \eqref{eq:phys-space-transport-identity-s<0} when needed to obtain the gain in $R^*$ in the first term on the second line. A similar estimate holds for the weighted solutions $\dbtilde\upphi_k$ if $s>0$, but where the term in $\tilde \upphi_{k+1}$ is absent from the right hand side; this is because $\lp(\frac{\dbtilde c_k'}{\tilde c_k}-(|s|-k)\frac{w'}{w}\rp)<0$ so that one in fact controls the corresponding bulk term in $\tilde \upphi_{k+1}$ rather than it emerging as an error.

By a similar argument, choosing $y=\chi_R(-r^*)(1-|r^*|^{-1})$ we get, for  $R^*$ is sufficiently large,
\begin{align*}
&\mathbb{I}_{0}[\tilde\upphi_k\mathbbm{1}_{\{r^*\leq -2R^*\}}](0,\tau)  \\
&\quad\leq B\sum_{j=0}^{k}\lp(\mathbb{E}[\tilde\upphi_j\mathbbm{1}_{\{r^*\leq -R^*\}}](\tau)+\mathbb{E}[\tilde\upphi_j](0)+\mathbb{E}_{\mc H^+}[\tilde\upphi_j](0,\tau)+\mathbb{I}_{0}[\tilde\upphi_j\mathbbm{1}_{\{-R^*\leq r^*\leq -2R^*\}}](0,\tau)\rp)\\
&\quad\qquad+B(|s|-k)\int_{\mc R_{(0,\tau)}}|\tilde\upphi_{k+1}|^2\mathbbm{1}_{\{r^*\leq -R^*\}}drd\sigma d\tau'
\,, \numberthis\label{eq:phys-space-y-multiplier-consequence-hor}
\end{align*}
making use of the \eqref{eq:phys-space-transport-identity-s>0} for $s>0$ when necessary. The exception to the  procedure we have just outlined to arrive at \eqref{eq:phys-space-y-multiplier-consequence-infty} and \eqref{eq:phys-space-y-multiplier-consequence-hor} is the case $s<0$ and $k<|s|$, where Lemma~\ref{lemma:phys-space-y-multiplier-identity} does not yield always control over the $\frac{r}{(r-r_+)}|\uL\tilde\upphi_k|^2r^{-1-\delta}$ term in $\mathbb{I}_{-\delta}$ and $\mathbb{I}_{0}$.  For such $k$ and $s$, we appeal to \eqref{eq:phys-space-Killing-multiplier-consequence-3} to recover control over that term and obtain \eqref{eq:phys-space-y-multiplier-consequence-infty} and \eqref{eq:phys-space-y-multiplier-consequence-hor}.

\subsubsection{Redshift and \texorpdfstring{$r^p$-weighted}{Dafermos--Rodnianski} multiplier identities}

In this section, we study identities which induce stronger weighted-norms, as $r\to r_+ $ and $r\to \infty$, for the transformed variables.

\begin{lemma}[Redshift multiplier identities] \label{lemma:phys-space-redshift-multiplier-identity}Let $|s|\in\mathbb{Z}$, and $\xi$ be a smooth function. Suppose that, for $0\leq k\leq |s|$, $\swei{\tilde\upphi}{s}_k$ are homogeneous solutions to the transformed system of Definition~\ref{def:transformed-system}; we drop the superscripts in what follows to ease the notation. Then, an identity of the form \eqref{eq:template-physical-space-identity} holds where the boundary terms are determined by 
\begin{align*}
F_{L}&=\frac12\frac{\xi}{r-r_+} \lp|\uL \tilde\upphi_k\rp|^2+\frac12a\frac{\xi w}{r-r_+}\Re[Z\tilde\upphi_k\overline{\uL\tilde\upphi_k}]+\frac12a^2\sin^2\theta\frac{\xi w}{r-r_+}\Re[T\tilde\upphi_k\overline{\uL\tilde\upphi_k}]\\
&\qquad +\mathbbm{1}_{\{s<0\}} \lp[\frac{\xi}{2(r-r_+)}\lp(\frac{c_k'}{c_k}-(|s|-k)\frac{w'}{w}\rp)\rp]'|\tilde\upphi_k|^2\\
&\qquad +\mathbbm{1}_{\{s<0\}}\frac{\xi w}{2(r-r_+)}\lp(\frac{c_k'}{c_k}-(|s|-k)\frac{w'}{w}\rp)a^2\sin^2\theta\Re[T\tilde\upphi_k\overline{\tilde\upphi_k}]\,,\\
F_{\uL}&=\frac{\xi w}{r-r_+}\lp(\frac12|\mathring{\slashed{\nabla}} \tilde\upphi_k|^2+\frac{a^2}{r^2+a^2}|Z\tilde\upphi_k|^2+\frac{U_k}{2w}|\tilde\upphi_k|^2-\frac12 a^2\sin^2\theta|T\tilde\upphi_k|^2-\frac12a\Re[Z\tilde\upphi_k\overline{L\tilde\upphi_k}]\rp)\\
&\qquad+\frac12a^2\sin^2\theta\frac{\xi w}{r-r_+}\Re[T\tilde\upphi_k\overline{\uL \tilde\upphi_k}]+\mathbbm{1}_{\{s>0\}}\sum_{j=0}^{k-1}\frac{c_j}{c_k}\frac{\xi w}{r-r_+}\Re[(a c_{s,k,j}^{\rm id}+ac_{s,k,j}^Z Z)\tilde\upphi_j\overline{\tilde\upphi_k}]\\
&\qquad -\mathbbm{1}_{\{s<0\}}\frac{\xi}{2(r-r_+)}\lp\{2\lp(\frac{c_k''}{c_k}-(|s|-k)\frac{w'}{w}\frac{c_k'}{c_k}\rp)|\tilde\upphi_k|^2+\frac{\xi}{r-r_+}\lp(\frac{c_k'}{c_k}-(|s|-k)\frac{w'}{w}\rp)\Re[L\tilde\upphi_k\overline{\tilde\upphi_k}]\rp\}\\
&\qquad -\mathbbm{1}_{\{s<0\}}\frac{\xi}{2(r-r_+)}\frac{c_k'}{c_k}\lp(\frac{c_k'}{c_k}-(|s|-k)\frac{w'}{w}\rp)|\tilde\upphi_k|^2\\
&\qquad +\mathbbm{1}_{\{s<0\}}\frac{\xi w}{2(r-r_+)}\lp(\frac{c_k'}{c_k}-(|s|-k)\frac{w'}{w}\rp)a^2\sin^2\theta\Re[T\tilde\upphi_k\overline{\tilde\upphi_k}]\,,
\end{align*}
and the bulk term is determined via
\begin{align*}
I&=\lp(\frac{r-r_-}{2(r^2+a^2)}+\frac{c_k'}{c_k}\mathbbm{1}_{\{s<0\}}\rp)\frac{\xi}{r-r_+}\lp|\uL\tilde\upphi_k\rp|^2+\frac{2ar(r-r_-)}{(r^2+a^2)^3}\xi\Re[Z\tilde\upphi_k\overline{\uL\tilde\upphi_k}]+\lp(\frac{U_k\xi}{2(r-r_+)}\rp)'|\tilde\upphi_k|^2\\
&\qquad-\frac{\xi'}{2(r-r_+)}|\uL\tilde\upphi_k|^2-\frac{\xi'(r-r_-)}{(r^2+a^2)^2}\lp(|\mathring{\slashed{\nabla}} \tilde\upphi_k|^2+a\Re[Z\tilde\upphi_k\overline{T\tilde\upphi_k}]+\frac12a^2\sin^2\theta|T\tilde\upphi_k|^2\rp)\\
&\qquad-2as\cos\theta\frac{\xi(r-r_-)}{(r^2+a^2)^2}\Im[T\tilde\upphi_k\overline{\uL\tilde\upphi_k}]\\
&\qquad +\mathbbm{1}_{\{s>0\}}\sum_{j=0}^{k-1}\lp\{\lp(a c_{s,k,j}^{\rm id}\frac{c_j}{c_k}\frac{\xi(r-r_-)}{(r^2+a^2)^2}\rp)'\Re[\tilde\upphi_j\overline{\tilde\upphi_k}]+\lp(a c_{s,k,j}^{\rm Z}\frac{c_j}{c_k}\frac{\xi(r-r_-)}{(r^2+a^2)^2}\rp)'\Re[Z\tilde\upphi_j\overline{\tilde\upphi_k}]\rp\}\\
&\qquad +\sum_{j=0}^{k-1}\frac{\xi(r-r_-)}{(r^2+a^2)^2}\Re\lp[(a c_{s,k,j}^{\rm id}+ac_{s,k,j}^Z Z)\lp(\frac{c_{j}}{c_k}\tilde\upphi_{j}\uL \overline{\tilde\upphi_k}\mathbbm{1}_{\{s<0\}}-\frac{c_{j+1}}{c_k}w\tilde\upphi_{j+1}\overline{\tilde\upphi_k}\mathbbm{1}_{\{s>0\}}\rp)\rp]\\
&\qquad -\mathbbm{1}_{\{s>0\}}\frac{\xi}{r-r_+}\lp\{\lp(\frac{c_k'}{c_k}\rp)'\Re[\uL\tilde\upphi_k\overline{\tilde\upphi_k}]+\lp(\frac{c_k'}{c_k}\rp)^2\Re[\uL\tilde\upphi_k\overline{\tilde\upphi_k}]\rp\}\\
&\qquad +\mathbbm{1}_{\{s>0\}}\frac{\xi}{r-r_+}\frac{wc_{k+1}}{c_k}\lp\{\lp(\frac{c_k'}{c_k}-(|s|-k)\frac{w'}{w}\rp)\Re[\uL\tilde\upphi_k\overline{\tilde\upphi_{k+1}}]-\frac{c_k'}{c_k}\Re[L\tilde\upphi_k\overline{\tilde\upphi_{k+1}}]\rp\}\\
&\qquad +\mathbbm{1}_{\{s<0\}}\lp\{\lp[\frac{\xi}{2(r-r_+)}\lp(\frac{c_k''}{c_k}-(|s|-k)\frac{w'}{w}\frac{c_k'}{c_k}\rp)\rp]' -\lp[\frac{\xi}{2(r-r_+)}\lp(\frac{c_k'}{c_k}-(|s|-k)\frac{w'}{w}\rp)\rp]''\rp\}|\tilde\upphi_k|^2\\
&\qquad +\mathbbm{1}_{\{s<0\}}\lp\{\lp[\frac{\xi}{2(r-r_+)}\frac{c_k'}{c_k}\lp(\frac{c_k'}{c_k}-(|s|-k)\frac{w'}{w}\rp)\rp]'+\frac{\xi}{r-r_+}\lp[\lp(\frac{c_k'}{c_k}\rp)'-U_k\rp]\lp(\frac{c_k'}{c_k}-(|s|-k)\frac{w'}{w}\rp)\rp\}|\tilde\upphi_k|^2\\
&\qquad+\mathbbm{1}_{\{s<0\}}\frac{\xi}{r-r_+}\lp(\frac{c_k'}{c_k}-(|s|-k)\frac{w'}{w}\rp)^2\frac{wc_{k+1}}{c_k}\Re[\tilde\upphi_{k+1}\overline{\tilde\upphi_k}]-\mathbbm{1}_{\{s<0\}}\frac{\xi w}{r-r_+}\lp(\frac{c_k'}{c_k}-(|s|-k)\frac{w'}{w}\rp)|\mathring{\slashed\nabla}\tilde\upphi_k|^2\\
&\qquad-\mathbbm{1}_{\{s<0\}}\frac{\xi w}{r-r_+}\lp(\frac{c_k'}{c_k}-(|s|-k)\frac{w'}{w}\rp)\lp[a^2\sin^2\theta|T\tilde\upphi_k|^2 +2a\Re[T\tilde\upphi_k\overline{Z\tilde\upphi_k}]+2as\cos\theta\Im[T\tilde\upphi_k\overline{\upphi_k}]\rp]
 \,.
\end{align*}
Alternatively, we can remove all terms with $\mathbbm{1}_{\{s<0\}}$ in $F_L$, $F_{\uL}$ and in the last 4 lines of $I$ by adding to $I$ 
\begin{align*}
\mathbbm{1}_{\{s<0\}}\frac{\xi}{r-r_+}\lp\{\lp(\frac{c_k'}{c_k}-(|s|-k)\frac{w'}{w}\rp)\frac{wc_{k+1}}{c_k}\Re[\tilde\upphi_{k+1}\overline{\uL\tilde\upphi_k}]-\lp(\frac{c_k'}{c_k}\rp)'\Re[\tilde\upphi_k\overline{\uL\tilde\upphi_k}]\rp\}\,.
\end{align*}
\end{lemma}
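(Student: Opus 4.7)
The plan is to derive the identity by multiplying the hyperbolic equation \eqref{eq:transformed-k-tilde} by the redshift multiplier $\frac{\xi}{r-r_+}\overline{\uL\tilde\upphi_k}$, taking real parts, and organizing the resulting expression into the schematic form $LF_L+\uL F_{\uL}+I+O=0$ with $\int_{\mathbb S^2}O\,d\sigma=0$. The main motivation for this multiplier choice is that, near $r=r_+$, the weight $(r-r_+)^{-1}$ compensates for the degeneration of $\uL$ as a generator of $\mc H^+$ and produces a positive zeroth-order boundary flux on $\mc H^+$, while the factor $\xi$ can be chosen to localize the identity near the horizon.

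First, I would dispose of the principal part. Writing
$$
\tfrac12(L\uL+\uL L)\tilde\upphi_k=\tfrac12 L(\uL\tilde\upphi_k)+\tfrac12\uL(L\tilde\upphi_k)
$$
and multiplying by $\frac{\xi}{r-r_+}\overline{\uL\tilde\upphi_k}$ produces, after taking real parts and using Leibniz, a total $L$-derivative of $\tfrac{\xi}{2(r-r_+)}|\uL\tilde\upphi_k|^2$ plus a total $\uL$-derivative of $\tfrac{\xi}{2(r-r_+)}\Re[L\tilde\upphi_k\,\overline{\uL\tilde\upphi_k}]$, minus the commutator error $\tfrac12[\uL,\tfrac{\xi}{r-r_+}]|\uL\tilde\upphi_k|^2$, which yields the $\tfrac{r-r_-}{2(r^2+a^2)}$ redshift coefficient and the $-\tfrac{\xi'}{2(r-r_+)}|\uL\tilde\upphi_k|^2$ bulk term. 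For the spin-weighted angular part $w\mathring{\slashed{\triangle}}^{[s]}_T\tilde\upphi_k$, I would use the adjoint identity \eqref{eq:IBP-spin-weighted-laplacian} to move one angular (or $T$, $Z$) derivative onto $\overline{\uL\tilde\upphi_k}$ and then integrate by parts in $\uL$, producing the $\mathring{\slashed\nabla}^{[s]}$, $Z$, $T$ boundary terms in $F_{\uL}$ and the  $-\tfrac{\xi'(r-r_-)}{(r^2+a^2)^2}$ bulk contribution. The zeroth-order potential contributes $(\tfrac{\xi U_k}{2(r-r_+)})'|\tilde\upphi_k|^2$ after an $\uL$-integration by parts.

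The split into $\mathbbm 1_{\{s<0\}}$ and $\mathbbm 1_{\{s>0\}}$ terms comes from the fact that $\mc L=L$ for $s<0$ and $\mc L=\uL$ for $s>0$, so the role of the transport equation \eqref{eq:transformed-transport-tilde} in eliminating $\tilde\upphi_{k+1}$ from the right-hand side of \eqref{eq:transformed-k-tilde} is dual. Concretely, the coupling term $-\sign s\bigl(\tfrac{c_k'}{c_k}-(|s|-k)\tfrac{w'}{w}\bigr)\tfrac{wc_{k+1}}{c_k}\tilde\upphi_{k+1}$ in \eqref{eq:transformed-k-tilde} can, when $s>0$ (so $\mc L=\uL$), be combined with the multiplier directly to produce derivatives of $\tilde\upphi_k$, leaving the pairings shown on the right-hand side; for $s<0$ (so $\mc L=L$), we must instead substitute $\tilde\upphi_{k+1}=(w c_{k+1}/c_k)^{-1}(L-\mathrm{l.o.t.})\tilde\upphi_k$ only partially and integrate by parts, which is what creates the more elaborate collection of $\mathbbm 1_{\{s<0\}}$ boundary and bulk terms. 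The $-\sign s\,(c_k'/c_k)\underline{\mc L}$ contribution from \eqref{eq:transformed-R-k-tilde} is handled analogously and gives rise to the $\mathbbm 1_{\{s<0\}}$ contribution $(c_k'/c_k)\tfrac{\xi}{r-r_+}|\uL\tilde\upphi_k|^2$ in $I$ (when $\mc L=L$ the identity is symmetric in the opposite sense). The alternative form stated at the end of the lemma simply records the identity one obtains by not integrating this $s<0$ transport term by parts.

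Finally, the coupling terms $a\sum_{i=0}^{k-1}\swei{\tilde{\mathfrak J}}{s}_{k,i}$ are handled by pairing with $\tfrac{\xi}{r-r_+}\overline{\uL\tilde\upphi_k}$: for $s>0$, integrating by parts in $\uL$ converts these into the $F_{\uL}$ boundary terms $\sum_j (c_j/c_k)\tfrac{\xi w}{r-r_+}\Re[(ac_{s,k,j}^{\mathrm{id}}+ac_{s,k,j}^Z Z)\tilde\upphi_j\overline{\tilde\upphi_k}]$ plus the bulk derivatives of these coefficients displayed in $I$; for $s<0$, we instead keep $\overline{\uL\tilde\upphi_k}$ and use \eqref{eq:transformed-transport-tilde} to replace $\uL\tilde\upphi_k$ by $(wc_{k+1}/c_k)\tilde\upphi_{k+1}+(c_k'/c_k)\tilde\upphi_k$ in whichever terms match the stated form. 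The main technical obstacle is careful bookkeeping: verifying that every term produced (including the awkward $-2as\cos\theta$ spin-$T$ coupling and the $a^2\sin^2\theta$ terms from $\mathring{\slashed{\triangle}}^{[s]}_T$) ends up in exactly the stated $F_L,F_{\uL},I,O$ piles, and that pure $\mathbb S^2$-derivatives are collected into $O$. Once that is done, dividing by $\rho^2\Delta(r^2+a^2)^{-1}$ and integrating over $\mc R_{(0,\tau)}$ with \eqref{eq:commutation-mcL-volume-form} yields \eqref{eq:template-physical-space-identity}, concluding the proof.
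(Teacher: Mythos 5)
Your sketch is correct and follows essentially the same route as the paper: multiply \eqref{eq:transformed-k-tilde} by the redshift multiplier $\frac{\xi}{r-r_+}\overline{\uL\tilde\upphi_k}$, take real parts, integrate by parts, and split the $s>0$ and $s<0$ cases according to whether the transport direction $\mc L$ matches the multiplier direction $\uL$. One small misstatement to flag: in your final paragraph you say that for $s<0$ one replaces $\uL\tilde\upphi_k$ by $(wc_{k+1}/c_k)\tilde\upphi_{k+1}+(c_k'/c_k)\tilde\upphi_k$ via \eqref{eq:transformed-transport-tilde}, but for $s<0$ we have $\mc L=L$, so the transport equation controls $L\tilde\upphi_k$ and not $\uL\tilde\upphi_k$ (which is precisely why the $s<0$ case is the harder one); the paper instead keeps the factor $\uL\overline{\tilde\upphi_k}$ in the bulk for the lower-level coupling terms, and for the $\tilde\upphi_{k+1}$-coupling it performs a double integration by parts to produce $\Re[\uL L\tilde\upphi_k\overline{\tilde\upphi_k}]$, which is then re-expressed by invoking the PDE \eqref{eq:transformed-k-tilde} once more.
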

\begin{proof}
The proof follows from multiplying \eqref{eq:transformed-k-tilde} by $(r-r_+)^{-1}\xi\overline{\uL \tilde\upphi_k}$, taking the real part and  integrating by parts. In the case $k=|s|$, the contributions from the left hand side of \eqref{eq:transformed-k-tilde}  are easily obtained, see the steps in \cite[Appendix B]{Dafermos2017}. 

In the case $k<|s|$ and $s>0$, we use \eqref{eq:transformed-transport-tilde} to simplify
\begin{align*}
&-\frac{c_k'}{c_k}\xi\Re[L\tilde\upphi_{k}\overline{\uL\tilde\upphi_k}]=-
\lp(\frac{c_k'}{c_k}\rp)^2\xi\Re[L\tilde\upphi_{k}\overline{\tilde\upphi_{k}}]-\frac{wc_{k+1}}{c_k}\frac{c_k'}{c_k}\xi\Re[L\tilde\upphi_{k}\overline{\tilde\upphi_{k+1}}]\,.
\end{align*}
For the coupling terms to $j<k$, if $s>0$ we integrate by parts to move the $\uL$ derivative onto the $j$th transformed variable and then make use of \eqref{eq:transformed-transport}. 

In the case $k<|s|$ but $s<0$, we want to move the $\uL$ derivative in the terms where there is coupling to the $k+1$ equation. We deduce:
\begin{align*}
&\frac{\xi}{r-r_+}\Re\lp\{\lp[\lp(-\frac{c_k''}{c_k}+(|s|-k)\frac{w'}{w}\frac{c_k'}{c_k}\rp)\tilde\upphi_k-\frac{c_k'}{c_k}(L-\uL)\tilde\upphi_k+(|s|-k)\frac{w'}{w}L\tilde\upphi_k\rp]\overline{\uL\tilde\upphi_k}\rp\}\\
&\quad= -\frac{\xi}{2(r-r_+)}\lp(\frac{c_k''}{c_k}-(|s|-k)\frac{w'}{w}\frac{c_k'}{c_k}\rp)\uL|\tilde\upphi_k|^2+\frac{\xi}{r-r_+}\frac{c_k'}{c_k}|\uL\tilde\upphi_k|^2\\
&\qquad\quad-\frac{\xi}{r-r_+}\lp(\frac{c_k'}{c_k}-(|s|-k)\frac{w'}{w}\rp)\Re[L\tilde\upphi_k\overline{\uL\tilde\upphi_k}]\\
&\quad =-\uL\lp\{\frac{\xi}{2(r-r_+)}\lp(\frac{c_k''}{c_k}-(|s|-k)\frac{w'}{w}\frac{c_k'}{c_k}\rp)|\tilde\upphi_k|^2+\frac{\xi}{r-r_+}\lp(\frac{c_k'}{c_k}-(|s|-k)\frac{w'}{w}\rp)\Re[L\tilde\upphi_k\overline{\tilde\upphi_k}]\rp\}\\
&\quad\qquad +L\lp\{\lp[\frac{\xi}{2(r-r_+)}\lp(\frac{c_k'}{c_k}-(|s|-k)\frac{w'}{w}\rp)\rp]'|\tilde\upphi_k|^2\rp\} +\frac{\xi}{r-r_+}\frac{c_k'}{c_k}|\uL\tilde\upphi_k|^2\\
&\quad\qquad +\lp[\frac{\xi}{2(r-r_+)}\lp(\frac{c_k''}{c_k}-(|s|-k)\frac{w'}{w}\frac{c_k'}{c_k}\rp)\rp]'|\tilde\upphi_k|^2 -\lp[\frac{\xi}{2(r-r_+)}\lp(\frac{c_k'}{c_k}-(|s|-k)\frac{w'}{w}\rp)\rp]''|\tilde\upphi_k|^2\\
&\quad\qquad +\frac{\xi}{r-r_+}\lp(\frac{c_k'}{c_k}-(|s|-k)\frac{w'}{w}\rp)\Re[\uL L\tilde\upphi_k\overline{\tilde\upphi_k}]\,,
\end{align*}
for which we can now invoke the PDE \eqref{eq:transformed-k-tilde}.
\end{proof}

We will consider Lemma~\ref{lemma:phys-space-redshift-multiplier-identity} with $\xi(r^*)=\chi_R(-r^*)$, and assuming $|a|<M$. For the case $s>0$, using additionally the transport estimate \eqref{eq:phys-space-transport-identity-s>0}, we can easily deduce that 
\begin{align*}
&\overline{\mathbb{E}}[\tilde\upphi_k\mathbbm{1}_{(-\infty,-2R^*]}](\tau)+\overline{\mathbb{E}}_{\mc H^+}[\tilde\upphi_k](0,\tau)+\overline{\mathbb{I}}[\tilde\upphi_k\mathbbm{1}_{(-\infty,-2R^*]}](0,\tau)\\
&\quad \leq B(a)\sum_{j=0}^k\lp(\mathbb{E}_0[\tilde\upphi_j\mathbbm{1}_{(-\infty,-R^*]}](\tau)+\mathbb{E}_{\mc I^+,0}[\tilde\upphi_j](0,\tau)+\frac{1}{R^*}\mathbb{I}_0[\tilde\upphi_j\mathbbm{1}_{(-\infty,-R^*}](0,\tau)\rp)\\ &\quad\qquad+B(a)\sum_{j=0}^k \overline{\mathbb{E}}[\tilde\upphi_j](0)+B(a)(|s|-k)\int_0^\tau\int_{\Sigma_{\tau'}\cap(-\infty,-R^*]}|\tilde\upphi_{k+1}|^2drd\sigma d\tau\,. \numberthis\label{eq:phys-space-redshift-multiplier-consequence-s>0}
\end{align*}
In the case $s<0$, we can use the alternative given at the end of Lemma~\ref{lemma:phys-space-redshift-multiplier-identity} to obtain
\begin{align*}
&\overline{\mathbb{E}}[\tilde\upphi_k\mathbbm{1}_{(-\infty,-2R^*]}](\tau)+\overline{\mathbb{E}}_{\mc H^+}[\tilde\upphi_k](0,\tau)+\overline{\mathbb{I}}[\tilde\upphi_k\mathbbm{1}_{(-\infty,-2R^*]}](0,\tau)\\
&\quad\leq B\overline{\mathbb{E}}[\tilde\upphi_k](0)+B\mathbb{E}[\tilde\upphi_k\mathbbm{1}_{(-\infty,-R^*]}](\tau)+B\mathbb{E}_{\mc H^+}[\tilde\upphi_k](0,\tau)+B\mathbb{I}[\tilde\upphi_k\mathbbm{1}_{(-\infty,-R^*]}](0,\tau)\\
&\quad\qquad+ \frac{B}{R^*}\sum_{j=0}^{k-1}\mathbb{I}[\tilde\upphi_j\mathbbm{1}_{(-\infty,-R^*]}](0,\tau)+B(|s|-k)\int_0^\tau\int_{\Sigma_{\tau'}\cap(-\infty,-R^*]}|\tilde\upphi_{k+1}|^2dr^*\,. \numberthis\label{eq:phys-space-redshift-multiplier-consequence-s<0}
\end{align*}

\begin{lemma}[$r^p$ multiplier identity] \label{lemma:phys-space-rp-multiplier-identity} Let $|s|\in\mathbb{Z}$, and $\xi$ be a smooth function. Suppose that, for $0\leq k\leq |s|$, $\swei{\tilde\upphi}{s}_k$ are homogeneous solutions to the transformed system of Definition~\ref{def:transformed-system}; we drop the superscripts in what follows to ease the notation. Then, an identity of the form \eqref{eq:template-physical-space-identity} holds where the boundary terms are determined by
\begin{align*}
F_{\uL}&=\frac12 r^p \xi \lp|L \tilde\upphi_k\rp|^2+\frac12awr^p\xi \Re[Z\tilde\upphi_k\overline{L\tilde\upphi_k}]+\frac12a^2\sin^2\theta wr^p\xi\Re[T\tilde\upphi_k\overline{L\tilde\upphi_k}]\\
&\qquad -\mathbbm{1}_{\{s>0\}} \lp[\frac12r^p\xi\lp(\frac{c_k'}{c_k}-(|s|-k)\frac{w'}{w}\rp)\rp]'|\tilde\upphi_k|^2\\
&\qquad +\mathbbm{1}_{\{s>0\}} wr^p\xi\lp(\frac{c_k'}{c_k}-(|s|-k)\frac{w'}{w}\rp)a^2\sin^2\theta \Re[T\tilde\upphi_k\overline{\tilde\upphi_k}]\\
&\qquad +\mathbbm{1}_{\{s>0\}}\frac12 wr^p\xi\lp(\frac{c_k'}{c_k}-(|s|-k)\frac{w'}{w}\rp)a^2\sin^2\theta\Re[T\tilde\upphi_k\overline{\tilde\upphi_k}]\,,\\
F_{L}&=\frac12 wr^p\xi |\mathring{\slashed{\nabla}} \tilde\upphi_k|^2+\frac{a^2}{r^2+a^2}wr^p\xi |Z\tilde\upphi_k|^2+\frac{1}{2} \lp(U_k-\lp(\frac{c_k'}{c_k}\rp)'\mathbbm{1}_{\{s>0\}}\rp) r^p\xi|\tilde\upphi_k|^2-\frac12awr^p\xi\Re[Z\tilde\upphi_k\overline{\uL\tilde\upphi_k}]\\
&\qquad-\frac12a^2\sin^2\theta wr^p\xi\lp(\frac14|L\tilde\upphi_k|^2-\frac14|\uL\tilde\upphi_k|^2-\frac{a^2}{(r^2+a^2)^2}|Z\tilde\upphi_k|^2+\frac{a}{r^2+a^2}\Re[\uL\tilde\upphi_k\overline{Z\tilde\upphi_k}]\rp)\\
&\qquad+\mathbbm{1}_{\{s<0\}}\sum_{j=0}^{k-1}\frac{c_j}{c_k}wr^p\xi\Re[(a c_{s,k,j}^{\rm id}+ac_{s,k,j}^Z Z)\tilde\upphi_j\overline{\tilde\upphi_k}]\\
&\qquad -\mathbbm{1}_{\{s>0\}}r^p\xi\lp\{\frac12\lp[\lp(\frac{c_k'}{c_k}\rp)'+\lp(\frac{c_k'}{c_k}-(|s|-k)\frac{w'}{w}\rp)\frac{c_k'}{c_k}\rp]|\tilde\upphi_k|^2-r^p\xi\lp(\frac{c_k'}{c_k}-(|s|-k)\frac{w'}{w}\rp)\Re[\uL\tilde\upphi_k\overline{\tilde\upphi_k}]\rp\}\\
&\qquad +\mathbbm{1}_{\{s>0\}}\frac12 wr^p\xi\lp(\frac{c_k'}{c_k}-(|s|-k)\frac{w'}{w}\rp)a^2\sin^2\theta\Re[T\tilde\upphi_k\overline{\tilde\upphi_k}]\,,
\end{align*}
and the bulk term is determined via
\begin{align*}
I&=\lp((r^p\xi)'-\frac{c_k'}{c_k}r^p\xi\mathbbm{1}_{\{s>0\}}\rp)\lp|L\tilde\upphi_k\rp|^2-\frac{2ar}{r^2+a^2}wr^p\xi\Re[Z\tilde\upphi_k L\overline{\tilde\upphi_k}]-\frac12\lp(U_k r^p\xi\rp)'|\tilde\upphi_k|^2\\
&\qquad-(wr^p\xi)'\lp(|\mathring{\slashed{\nabla}} \tilde\upphi_k|^2-\frac12a\Re[Z\tilde\upphi_k\overline{(L+\uL)\tilde\upphi_k}]+\frac{a}{r^2+a^2}|Z\tilde\upphi_k|^2+\frac12a^2\sin^2\theta|T\tilde\upphi_k|^2\rp)\\
&\qquad-as\cos\theta wr^p\xi\Im\lp[\lp(\uL\tilde\upphi_k-\frac{2a}{r^2+a^2}\rp)\overline{L\tilde\upphi_k}\rp]\\
&\qquad -\mathbbm{1}_{\{s<0,k=|s|\}}\sum_{j=0}^{k-1}\lp\{\lp(a c_{s,k,j}^{\rm id}\frac{c_j}{c_k}wr^p\xi\rp)'\Re[\tilde\upphi_j\overline{\tilde\upphi_k}]+\lp(a c_{s,k,j}^{\rm Z}\frac{c_j}{c_k}wr^p\xi\rp)'\Re[Z\tilde\upphi_j\overline{\tilde\upphi_k}]\rp\}\\
&\qquad +\sum_{j=0}^{k-1}wr^p\xi\Re\lp[(a c_{s,k,j}^{\rm id}+ac_{s,k,j}^Z Z)\lp(\frac{c_{j}}{c_k}\tilde\upphi_{j}L\overline{\tilde\upphi_k}\mathbbm{1}_{\{s>0\}}-\frac{c_{j+1}}{c_k}w\tilde\upphi_{j+1}\overline{\tilde\upphi_k}\mathbbm{1}_{\{s<0\}}\rp)\rp]\\
&\qquad +\mathbbm{1}_{\{s<0\}}r^p\xi\lp\{\frac{c_k'}{c_k}\frac{wc_{k+1}}{c_k}\Re[\tilde\upphi_{k+1}\overline{\uL\tilde\upphi_k}]-\lp(\frac{c_k'}{c_k}-(|s|-k)\frac{w'}{w}\rp)\frac{wc_{k+1}}{c_k}\Re[L\tilde\upphi_k\overline{\tilde\upphi_{k+1}}]\rp\}\\
&\qquad -\mathbbm{1}_{\{s<0\}}r^p\xi\lp\{\lp(\frac{c_k'}{c_k}\rp)'\Re[L\tilde\upphi_k\overline{\tilde\upphi_k}]+\lp(\frac{c_k'}{c_k}\rp)^2\Re[\uL\tilde\upphi_k\overline{\tilde\upphi_k}]\rp\}\\
&\qquad +\mathbbm{1}_{\{s>0\}}\lp\{\lp[\frac12 r^p\xi\lp(\frac{c_k'}{c_k}\rp)'\rp]'|\tilde\upphi_k|^2 +\lp[\frac12 r^p\xi\lp(\frac{c_k'}{c_k}-(|s|-k)\frac{w'}{w}\rp)\rp]''|\tilde\upphi_k|^2\rp\}\\
&\qquad +\mathbbm{1}_{\{s>0\}}r^p\xi\lp[\lp(\frac{c_k'}{c_k}\rp)'-U_k\rp]\lp(\frac{c_k'}{c_k}-(|s|-k)\frac{w'}{w}\rp)|\tilde\upphi_k|^2\\
&\qquad+\mathbbm{1}_{\{s>0\}}r^p\xi\lp(\frac{c_k'}{c_k}-(|s|-k)\frac{w'}{w}\rp)^2\frac{wc_{k+1}}{c_k}\Re[\tilde\upphi_{k+1}\overline{\tilde\upphi_k}]-\mathbbm{1}_{\{s>0\}}wr^p\xi \lp(\frac{c_k'}{c_k}-(|s|-k)\frac{w'}{w}\rp)|\mathring{\slashed\nabla}\tilde\upphi_k|^2\\
&\qquad-\mathbbm{1}_{\{s>0\}}w r^p\xi\lp(\frac{c_k'}{c_k}-(|s|-k)\frac{w'}{w}\rp)\lp[a^2\sin^2\theta|T\tilde\upphi_k|^2 +2a\Re[T\tilde\upphi_k\overline{Z\tilde\upphi_k}]+2as\cos\theta\Im[T\tilde\upphi_k\overline{\tilde\upphi_k}]\rp] \,.
\end{align*}
Alternatively, we can remove all terms with $\mathbbm{1}_{\{s>0\}}$ in $F_L$, $F_{\uL}$ and in the last 4 lines of $I$ by adding 
\begin{align*}
\mathbbm{1}_{\{s>0\}}\lp\{r^p\xi\lp(\frac{c_k'}{c_k}-(|s|-k)\frac{w'}{w}\rp)\frac{wc_{k+1}}{c_k}\Re[\tilde\upphi_{k+1}\overline{L\tilde\upphi_k}]\rp\}\,.
\end{align*}
to $I$  and replacing, in $F_L$ and in $I$, $ U_k$ by $U_k-\lp(\frac{c_k'}{c_k}\rp)'$.
\end{lemma}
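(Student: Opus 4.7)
The plan is to mirror the proof of Lemma~\ref{lemma:phys-space-redshift-multiplier-identity}, replacing the redshift current $(r-r_+)^{-1}\xi\,\overline{\uL\tilde\upphi_k}$ by the $r^p$-current $r^p\xi\,\overline{L\tilde\upphi_k}$. Concretely, I will multiply the transformed equation \eqref{eq:transformed-k-tilde} by $r^p\xi\,\overline{L\tilde\upphi_k}$, take the real part, and integrate by parts in $r^*$ and in the angular variables to reorganise the resulting identity into the template form \eqref{eq:template-physical-space-identity}. For the principal part of $\swei{\tilde{\mathfrak R}}{s}_k$, the key identity is that $2\Re[\overline{L\tilde\upphi_k}\cdot\tfrac12(L\uL+\uL L)\tilde\upphi_k]$ equals $\uL|L\tilde\upphi_k|^2$ up to an exact $(L-\uL)$-derivative; combined with integration by parts against $w\mathring{\slashed{\triangle}}^{[s]}_T$ via \eqref{eq:IBP-spin-weighted-laplacian}, against the zeroth-order piece $U_k$, and against the azimuthal cross term $-4arw(r^2+a^2)^{-1}\sign s\,(|s|-k)Z$, this produces the non-indicator pieces of $F_L$, $F_{\uL}$ and $I$ in the statement. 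For $k=|s|$, this is the verbatim computation carried out in \cite[Appendix B]{Dafermos2017}.

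The novel ingredients compared with $k=|s|$ are the first-order term $-\sign s\,(c_k'/c_k)\underline{\mc L}\tilde\upphi_k$ in \eqref{eq:transformed-R-k-tilde}, the algebraic coupling $\sign s\,(|s|-k)(w'/w)\tilde\upphi_{k+1}$ from the right-hand side of \eqref{eq:transformed-k-tilde}, and the lower coupling $\swei{\tilde{\mathfrak J}}{s}_{k,j}$. Their treatment depends on the sign of $s$. If $s>0$, the transport direction $\mc L=L$ matches the derivative appearing in the multiplier, so the identity \eqref{eq:transformed-transport-tilde} lets me convert $L\tilde\upphi_k$ into an algebraic expression in $\tilde\upphi_k,\tilde\upphi_{k+1}$ wherever useful; this is precisely the source of the $\mathbbm 1_{\{s>0\}}$ terms in $F_L$ and in the last four lines of the bulk $I$. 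If $s<0$, then $\mc L=\uL$ is the transport direction, and I instead substitute \eqref{eq:transformed-transport-tilde} directly into the coupling to $\tilde\upphi_{k+1}$, producing the $\mathbbm 1_{\{s<0\}}$ lines of $I$ in which no derivative falls on $\tilde\upphi_{k+1}$. The lower coupling is handled symmetrically: for $s<0$ I integrate by parts in $r^*$ to move the $L$ derivative onto the $j$th transformed variable and use \eqref{eq:transformed-transport-tilde} to replace $L\tilde\upphi_j=(wc_{j+1}/c_j)\tilde\upphi_{j+1}$, leaving only an algebraic coupling in the bulk; for $s>0$, $L\tilde\upphi_j$ is not directly available from transport and the coupling instead appears, in divergence form, in $F_L$ and in the corresponding $\mathbbm 1_{\{s>0\}}$ line of $I$.

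For the alternative form stated at the end, the idea is to further integrate by parts the $\mathbbm 1_{\{s>0\}}$ contributions quadratic in $|\tilde\upphi_k|^2$ (those containing $(c_k'/c_k)'$ and $(c_k'/c_k)^2$) so that their derivatives move onto $|\tilde\upphi_k|^2$, and then invoke \eqref{eq:transformed-k-tilde} once more to eliminate the resulting $\mc L\underline{\mc L}\tilde\upphi_k$ in favour of lower-order terms. This precisely converts the indicator terms in $F_L$, $F_{\uL}$ into the single additional bulk term displayed at the end of the statement and shifts $U_k\mapsto U_k-(c_k'/c_k)'$ in both $F_L$ and $I$. I expect the main obstacle to be purely combinatorial bookkeeping: keeping the transport-direction-dependent ($\mc L$ versus $\underline{\mc L}$) substitutions consistent across all the coupling terms, so that each one ends up either in divergence form or in the displayed algebraic form, with the right $r$-weights from $c_j/c_k$, $c_{j+1}/c_k$. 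Conceptually the argument is entirely parallel to Lemma~\ref{lemma:phys-space-redshift-multiplier-identity}, with the roles of $L$ and $\uL$ swapped and the weight $(r-r_+)^{-1}\xi$ replaced by $r^p\xi$.
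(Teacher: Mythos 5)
Your overall strategy is correct: the paper's own proof of this lemma is literally ``entirely analogous to Lemma~\ref{lemma:phys-space-redshift-multiplier-identity}'', one multiplies \eqref{eq:transformed-k-tilde} by $r^p\xi\overline{L\tilde\upphi_k}$, takes the real part, integrates by parts, and simplifies the $T$-cross terms via the auxiliary identity appearing in the proof.

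However, there is a genuine error in your second paragraph: you have the convention for $\mc L$ reversed. Definition~\ref{def:transformed-system} sets $\mc L=L$ if $s<0$ and $\mc L=\uL$ if $s>0$, so the transport relation \eqref{eq:transformed-transport-tilde} makes $L\tilde\upphi_k$ algebraic in $\tilde\upphi_k,\tilde\upphi_{k+1}$ precisely when $s<0$, not $s>0$. Consequently the assignment of ``easy'' versus ``hard'' cases is backwards: the multiplier $\overline{L\tilde\upphi_k}$ aligns with the transport direction when $s<0$, producing the $\mathbbm{1}_{\{s<0\}}$ lines of $I$ (algebraic $\tilde\upphi_{k+1}$, at most one derivative on $\tilde\upphi_k$); whereas for $s>0$ the transport direction is $\uL$, one cannot replace $L\tilde\upphi_k$ algebraically, and the treatment instead mirrors the $s<0$ branch of the redshift proof (move the $L$ derivative off $\tilde\upphi_k$ by a further integration by parts, invoke the wave equation for $\mc L\underline{\mc L}\tilde\upphi_k$). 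That hard branch is what generates the $\mathbbm{1}_{\{s>0\}}$ terms with double primes, $|\mathring{\slashed\nabla}\tilde\upphi_k|^2$, and the $\zeta_k^2\Re[\tilde\upphi_{k+1}\overline{\tilde\upphi_k}]$ coupling. You can see the correspondence directly: the $\mathbbm{1}_{\{s>0\}}$ block of this lemma is term-for-term the $\mathbbm{1}_{\{s<0\}}$ block of Lemma~\ref{lemma:phys-space-redshift-multiplier-identity} with the weight changed, exactly what one expects from swapping $L\leftrightarrow\uL$.

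Your third paragraph on the lower coupling does use the correct convention (for $s<0$ you can replace $L\tilde\upphi_j$ by transport), which makes the proposal internally inconsistent. If you follow your stated convention literally through the computation, the resulting identity would have the $\mathbbm{1}_{\{s>0\}}$ and $\mathbbm{1}_{\{s<0\}}$ indicators interchanged and would therefore not be the identity stated. The fix is simply to read the sign condition on $\mc L$ from Definition~\ref{def:transformed-system} and re-label the two branches accordingly.
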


\begin{proof} The proof is entirely analogous to that of Lemma~\ref{lemma:phys-space-redshift-multiplier-identity}, and we again direct the reader to \cite[Appendix B]{Dafermos2017} for some computations in the case $k=|s|$. The identity
\begin{align*}
\Re[T\tilde\upphi_k\overline{(L-T)\tilde\upphi_k}]=\frac14|L\tilde\upphi_k|^2-\frac14|\uL\tilde\upphi_k|^2-\frac{a^2}{(r^2+a^2)^2}|Z\tilde\upphi_k|^2+\frac{a}{r^2+a^2}\Re[\uL\tilde\upphi_k\overline{Z\tilde\upphi_k}]
\end{align*}
was used to simplify the expressions shown here, and make the estimates that follow more easily obtained from the identities.
\end{proof}

We will consider Lemma~\ref{lemma:phys-space-rp-multiplier-identity} with $\xi=(1+4M/r)\tilde\chi_R(r)$. For the case $s<0$, using additionally the transport estimate \eqref{eq:phys-space-transport-identity-s<0}, we can easily deduce that for $p\in[0,2]$
\begin{align*}
&\mathbb{E}_p[\tilde\upphi_k\mathbbm{1}_{[2R^*,\infty)}](\tau)+\mathbb{E}_{\mc I^+,p}[\tilde\upphi_k](0,\tau)+\mathbb{I}_p[\tilde\upphi_k\mathbbm{1}_{[2R^*,\infty)}](0,\tau)\\
&\quad \leq B\sum_{j=0}^k\lp(\mathbb{E}_0[\tilde\upphi_j\mathbbm{1}_{[R^*,\infty)}](\tau)+\mathbb{E}_{\mc I^+,0}[\tilde\upphi_j](0,\tau)+\mathbb{I}_0[\tilde\upphi_j\mathbbm{1}_{[R^*,\infty)}](0,\tau)\rp)+\mathbbm{1}_{\{p=2\}}\mathbb{I}_1[\tilde\upphi_k\mathbbm{1}_{[R^*,\infty)}](0,\tau)\\ &\quad\qquad+B\sum_{j=0}^k \mathbb{E}_p[\tilde\upphi_j](0)+B(|s|-k)\int_0^\tau\int_{\Sigma_{\tau'}\cap[R^*,\infty]}r^{p-5}|\tilde\upphi_{k+1}|^2drd\sigma d\tau\,. \numberthis\label{eq:phys-space-rp-multiplier-consequence-s<0}
\end{align*}
Note that the case $p=2$ is special because we lose $r$-weights on the zeroth order term and angular derivatives; thus its proof requires appealing to the statement for some other $p\in[0,2)$. In the case $s>0$, we must distinguish between the cases with weights $\tilde c_k$ and with weights $c_k$. In the former setting, we note that
\begin{align*}
\mathbbm{1}_{\{s>0\}}\lp(\frac{\tilde c_k'}{\tilde c_k}-(|s|-k)\frac{w'}{w}\rp) = -\frac{2M(r^2-a^2)(|s|-k)}{(r^2+a^2)^2}\leq 0
\end{align*}
and, similarly, the coefficient on the $|\dbtilde\upphi_k|^2$ term in $I$ is positive. Therefore, we have for $p\in[0,2]$
\begin{align*}
&\mathbb{E}_p[\dbtilde\upphi_k\mathbbm{1}_{[2R^*,\infty)}](\tau)+\mathbb{E}_{\mc I^+,p}[\dbtilde\upphi_k](0,\tau)+\mathbb{I}_p[\dbtilde\upphi_k\mathbbm{1}_{[2R^*,\infty)}](0,\tau)\\
&\quad\leq \frac{B}{R^*}\mathbb{E}_p[\dbtilde\upphi_k\mathbbm{1}_{[2R^*,\infty)}](\tau)+B\mathbb{E}_p[\dbtilde\upphi_k](0)+B\mathbb{E}_0[\dbtilde\upphi_k\mathbbm{1}_{[R^*,\infty)}](\tau)+B\mathbb{E}_{\mc I^+,0}[\dbtilde\upphi_k](0,\tau)+B\mathbb{I}_0[\dbtilde\upphi_k\mathbbm{1}_{[R^*,\infty)}](0,\tau)\\
&\quad\qquad+B\mathbbm{1}_{\{p=2\}}\mathbb{I}_1[\dbtilde\upphi_k\mathbbm{1}_{[R^*,\infty)}](0,\tau)+ \frac{B}{R^*}\sum_{j=0}^{k-1}\mathbb{I}_p[\dbtilde\upphi_j](0,\tau)+B(|s|-k)\int_0^\tau\int_{\Sigma_{\tau'}\cap[R^*,\infty]}r^{p-4}|\dbtilde\upphi_{k+1}|^2dr^*\\
&\quad\leq B\lp(\mathbb{E}_p[\dbtilde\upphi_k](0)+\mathbb{E}_0[\dbtilde\upphi_k\mathbbm{1}_{[R^*,\infty)}](\tau)+\mathbb{E}_{\mc I^+,0}[\dbtilde\upphi_j](0,\tau)+\mathbb{I}_0[\dbtilde\upphi_k\mathbbm{1}_{[R^*,\infty)}](0,\tau)+\mathbbm{1}_{\{p=2\}}\mathbb{I}_1[\dbtilde\upphi_k\mathbbm{1}_{[R^*,\infty)}](0,\tau)\rp)\\
&\quad\qquad+\frac{B}{R^*}\sum_{j=0}^{k-1} \lp(\mathbb{E}_p[\dbtilde\upphi_j](0)+\mathbb{E}_0[\dbtilde\upphi_j\mathbbm{1}_{[R^*,\infty)}](\tau)+\mathbb{E}_{\mc I^+,0}[\tilde\upphi_j](0,\tau)+\mathbb{I}_0[\dbtilde\upphi_j\mathbbm{1}_{[R^*,\infty)}](0,\tau)\rp)\\
&\quad\qquad+B(|s|-k)\int_0^\tau\int_{\Sigma_{\tau'}\cap[R^*,\infty)}r^{p-4}|\dbtilde\upphi_{k+1}|^2drd\sigma d\tau\,. \numberthis\label{eq:phys-space-rp-multiplier-consequence-s>0-peeling}
\end{align*}
If we choose $c_k$ weights, i.e.\ $\tilde\upphi_k$ instead of $\dbtilde\upphi_k$, then we work with the alternative version of Lemma~\ref{lemma:phys-space-rp-multiplier-identity}, and obtain
\begin{align*}
&\mathbb{E}_p[\tilde\upphi_k\mathbbm{1}_{[2R^*,\infty)}](\tau)+\mathbb{E}_{\mc I^+,p}[\tilde\upphi_k](0,\tau)+\mathbb{I}_p[\tilde\upphi_k\mathbbm{1}_{[2R^*,\infty)}](0,\tau)\\
&\quad\leq B\mathbb{E}_p[\tilde\upphi_k](0)+B\mathbb{E}[\tilde\upphi_k\mathbbm{1}_{[R^*,2R^*]}](\tau)+B\mathbb{E}_{\mc I^+,0}[\tilde\upphi_k](0,\tau)+B\mathbb{I}[\tilde\upphi_k\mathbbm{1}_{[R^*,2R^*]}](0,\tau)\\
&\quad\qquad+ B\int_0^\tau\int_{\Sigma_{\tau'}\cap[R^*,\infty]}\lp(\sum_{j=0}^{k-1}r^{p-5}|\tilde\upphi_j|^2+(|s|-k)r^{p-3-}|\tilde\upphi_{k+1}|^2\rp)dr^*\,, 
\end{align*}
so that, if $p\in[0,2)$ we have a gain in the bulk term involving lower level variables: 
\begin{align*}
&\sum_{j=0}^k\lp(\mathbb{E}_p[\tilde\upphi_j\mathbbm{1}_{[2R^*,\infty)}](\tau)+\mathbb{E}_{\mc I^+,p}[\tilde\upphi_j](0,\tau)+\mathbb{I}_p[\tilde\upphi_j\mathbbm{1}_{[2R^*,\infty)}](0,\tau)\rp)\\
&\quad\leq B\sum_{j=0}^k\lp(\mathbb{E}_p[\tilde\upphi_j](0)+\mathbb{E}[\tilde\upphi_j\mathbbm{1}_{[R^*,2R^*]}](\tau)+B\mathbb{E}_{\mc I^+,0}[\tilde\upphi_j](0,\tau)\rp)+\mathbb{I}[\tilde\upphi_j\mathbbm{1}_{[R^*,2R^*]}](0,\tau)\\
&\quad\qquad+ B(|s|-k)\int_0^\tau\int_{\Sigma_{\tau'}\cap[R^*,\infty]}r^{p-3}|\tilde\upphi_{k+1}|^2dr^*\,. \numberthis\label{eq:phys-space-rp-multiplier-consequence-s>0-no-peeling}
\end{align*}

\subsection{First order estimates}
 \label{sec:first-order-basic-estimates}
 
As an application of the results of the previous Section~\ref{sec:multiplier-identities}, we deduce a series of basic estimates involving first order derivatives of the transformed variables of the sytem in Definition~\ref{def:transformed-system}.

\begin{proposition}[Finite-in-time energy estimates]\label{prop:finite-in-time-first-order} Fix $s=\{0,\pm 1,\pm 2\}$, $a_0\in[0,M)$, and let $p\in[0,2)$. Let $0< \tau<\infty$, $k\in\{0,\dots,|s|\}$ and $\chi=\chi(r^*)$. Then, if $\swei{\tilde\upphi}{s}_k$, for $k=0,\dots,|s|$  are homogeneous solutions to the transformed system of Definition~\ref{def:transformed-system}, we have the following estimates.
\begin{itemize}
\item Finite-in-time boundedness of fluxes through on hyperboloidal foliation: 
\begin{align}
\mathbb{E}_p[\swei{\tilde\upphi}{s}_{k}](\tau) &\leq B(\tau) \sum_{j=0}^{|s|} \mathbb{E}_p[\swei{\tilde\upphi}{s}_{j}](0)\,,
\end{align}
and, if $|a|\leq a_0$, the same holds replacing $\mathbb{E}$ with $\overline{\mathbb{E}}$ or $\overline{\mathbb{E}}_p$ and $B(\tau)$ with $B(a_0,\tau)$. If $s>0$, the same holds for $\swei{\dbtilde\upphi}{s}_k$ if we replace the $p$ range by $p\in[1,2]$.
\item Finite-in-time boundedness of fluxes through $\mc{I}^+$:
\begin{align*}
\mathbb{E}_{\mc I^+,p}[\swei{\tilde\upphi}{s}_{k}](0,\tau)\leq B(\tau) \sum_{j=0}^{|s|} \mathbb{E}_p[\swei{\tilde\upphi}{s}_{j}](0)\,.
\end{align*}
If $s>0$, the same holds for $\swei{\dbtilde\upphi}{s}_k$ as long as $p\in[1,2]$.
\item Finite in time boundedness of fluxes through $\mc H^+$:
\begin{align*}
\mathbb{E}_{\mc H^+}[\swei{\tilde\upphi}{s}_{k}](0,\tau)\leq B(\tau) \sum_{j=0}^{|s|} \mathbb{E}[\swei{\tilde\upphi}{s}_j](0)\,,
\end{align*}
and, if $|a|\leq a_0$, the same holds replacing $\mathbb{E}$ and $\mathbb{E}_{\mc H^+}$  with $\overline{\mathbb{E}}$ and $\overline{\mathbb{E}}_{\mc H^+}$, respectively, and $B(\tau)$ with $B(a_0,\tau)$. If $s>0$, the same holds for $\swei{\dbtilde\upphi}{s}_k$ after replacing $\mathbb{E}$ by $\mathbb{E}_1$ on the right hand side.
\item Finite-in-time energy boundedness on hyperboloidal slab: 
\begin{align}
\mathbb{I}_p^{\rm deg}[\swei{\tilde\upphi}{s}_{k}](0,\tau)\leq \mathbb{I}_p[\swei{\tilde\upphi}{s}_k](0,\tau) &\leq B(\tau) \sum_{j=0}^{|s|} \mathbb{E}_p[\swei{\tilde\upphi}{s}_j](0)\,,
\end{align}
and, if $|a|\leq a_0$ the same holds replacing  $\mathbb{E}_p$ with $\overline{\mathbb{E}}_p$, $\mathbb{I}_p$ with $\overline{\mathbb{I}}_p$ and $B(\tau)$ with $B(a_0,\tau)$. If $s>0$, the same holds for $\swei{\dbtilde\upphi}{s}_k$ as long as $p\in[1,2]$.
\end{itemize} 
The above estimates also hold for $p=2$ if $s\leq 0$.
\end{proposition}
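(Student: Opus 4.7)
The plan is to combine the multiplier identities already derived in Section~\ref{sec:multiplier-identities} with a Gr\"onwall argument, exploiting the fact that the constant $B(\tau)$ is allowed to depend (and grow) in $\tau$. This means one need not separate coercive and non-coercive parts of the current in the careful way required for uniform-in-$\tau$ estimates; it suffices to bound every bulk error by $\int_0^\tau\sum_{j=0}^{|s|}\mathbb{E}[\tilde\upphi_j](\tau')\,d\tau'$ plus initial data.

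Concretely, I would first apply the Killing multiplier identity of Lemma~\ref{lemma:phys-space-Killing-multiplier-identity} with $\chi_T\equiv 1$ and a smooth compactly supported $\chi_Z$ chosen so that $T+\upomega_+\chi_Z Z$ is timelike wherever $\chi_Z>0$ (possible because $|a|<M$). The consequence \eqref{eq:phys-space-Killing-multiplier-consequence-1} (or \eqref{eq:phys-space-Killing-multiplier-consequence-1-peel} in the $\dbtilde$-weighted case) then controls
\[
\mathbb{E}[\tilde\upphi_k](\tau)+\mathbb{E}_{\mc I^+}[\tilde\upphi_k](0,\tau)+\mathbb{E}_{\mc H^+}[\tilde\upphi_k](0,\tau)
\]
by $\sum_{j\leq k}\mathbb{E}[\tilde\upphi_j](0)$ plus bulk errors, all of which, by the definition of the spacetime norms $\mathbb{I}$ and $\mathbb{I}^{\rm deg}$ and the trivial bound $|\tilde\upphi_{k+1}|^2$-terms $\lesssim\mathbb{E}[\tilde\upphi_{k+1}](\tau')$, are dominated by $\int_0^\tau\sum_{j=0}^{|s|}\mathbb{E}[\tilde\upphi_j](\tau')\,d\tau'$. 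Summing over $k=0,\dots,|s|$ and applying Gr\"onwall yields $\sum_k\mathbb{E}[\tilde\upphi_k](\tau)\leq e^{B\tau}\sum_j\mathbb{E}[\tilde\upphi_j](0)$, which is the base estimate. Feeding this back into the Killing identity integrated over $(0,\tau)$ recovers the $\mc H^+$ and $\mc I^+$ flux bounds.

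To upgrade to $p$-weighted norms, I would apply the $r^p$-multiplier identity of Lemma~\ref{lemma:phys-space-rp-multiplier-identity} in its corollaries \eqref{eq:phys-space-rp-multiplier-consequence-s<0}, \eqref{eq:phys-space-rp-multiplier-consequence-s>0-peeling}, \eqref{eq:phys-space-rp-multiplier-consequence-s>0-no-peeling}, which reduce $\mathbb{E}_p[\tilde\upphi_k\mathbbm{1}_{[2R^*,\infty)}](\tau)+\mathbb{E}_{\mc I^+,p}+\mathbb{I}_p$ to $\mathbb{E}_p$ of the initial data plus unweighted fluxes in $[R^*,2R^*]$, which are already controlled by the previous step; combined with compactness of the relevant radial region this gives the full $\mathbb{E}_p$ and $\mathbb{E}_{\mc I^+,p}$ bounds. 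The exclusion of $p=2$ when $s>0$ with $\tilde c_k$ weights is inherited from \eqref{eq:phys-space-rp-multiplier-consequence-s>0-no-peeling}, where the $1/R^*$ smallness in the coupling error requires $p<2$; for $s\leq 0$ or for the $\dbtilde c_k$ weights, the sign of $\bigl(\tfrac{c_k'}{c_k}-(|s|-k)\tfrac{w'}{w}\bigr)$ is favorable and one reaches $p=2$ via the bootstrap scheme already indicated after \eqref{eq:phys-space-rp-multiplier-consequence-s<0}. The bulk estimate on $\mathbb{I}_p^{\rm deg}\leq \mathbb{I}_p$ then follows from combining the virial identity of Lemma~\ref{lemma:phys-space-y-multiplier-identity}, via \eqref{eq:phys-space-y-multiplier-consequence-infty} and \eqref{eq:phys-space-y-multiplier-consequence-hor} and a standard intermediate-$r$ virial function for the wave operator $\swei{\mathfrak R}{s}_k$, with the flux estimates just established. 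Finally, assuming $|a|\leq a_0<M$, the redshift identity of Lemma~\ref{lemma:phys-space-redshift-multiplier-identity}, in its corollaries \eqref{eq:phys-space-redshift-multiplier-consequence-s>0} and \eqref{eq:phys-space-redshift-multiplier-consequence-s<0}, upgrades to the overlined norms $\overline{\mathbb{E}}$, $\overline{\mathbb{E}}_{\mc H^+}$, $\overline{\mathbb{I}}_p$, at the cost of a constant $B(a_0,\tau)$ which blows up as $a_0\to M$ -- consistent with the statement.

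The main obstacle is not analytic difficulty -- all identities are in place -- but rather the bookkeeping associated with the hierarchical structure of the transformed system: the wave equation \eqref{eq:transformed-k} for $\tilde\upphi_k$ contains on its right-hand side both the next-level variable $\tilde\upphi_{k+1}$ (when $k<|s|$) and the lower-level coupling $\sum_{j<k}\swei{\tilde{\mathfrak J}}{s}_{k,j}$. One must verify that every such coupling can either be absorbed (through the smallness factor $1/R^*$ in the far-away regions, or the explicit weight $(|s|-k)$ elsewhere) or bounded by a bulk integral of a flux, so that Gr\"onwall closes on the full vector $(\mathbb{E}[\tilde\upphi_0](\tau),\dots,\mathbb{E}[\tilde\upphi_{|s|}](\tau))$ simultaneously rather than one $k$ at a time. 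The derivation of estimates such as \eqref{eq:phys-space-Killing-multiplier-consequence-1} already does this bookkeeping; it remains only to sum in $k$ and apply Gr\"onwall.
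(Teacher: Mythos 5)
Your overall strategy for the flux bounds matches the paper's: the Killing multiplier identity \eqref{eq:phys-space-Killing-multiplier-consequence-1} (or \eqref{eq:phys-space-Killing-multiplier-consequence-1-peel} for $\dbtilde\upphi_k$) with $\chi_T\equiv 1$ plus Gr\"onwall closes the base $p=0$ estimate after summing in $k$; the redshift corollaries \eqref{eq:phys-space-redshift-multiplier-consequence-s>0}, \eqref{eq:phys-space-redshift-multiplier-consequence-s<0} upgrade to overlined norms when $|a|\leq a_0<M$; and the $r^p$-corollaries \eqref{eq:phys-space-rp-multiplier-consequence-s<0}, \eqref{eq:phys-space-rp-multiplier-consequence-s>0-no-peeling}, \eqref{eq:phys-space-rp-multiplier-consequence-s>0-peeling} handle $p\neq 0$, including the observed restriction to $p<2$ when $s>0$ with the $c_k$ weights. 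That part is sound and is essentially what the paper does.

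However, your route to the bulk estimate $\mathbb{I}_p[\tilde\upphi_k](0,\tau)\leq B(\tau)\sum_j\mathbb{E}_p[\tilde\upphi_j](0)$ contains a genuine gap. You propose combining the virial corollaries \eqref{eq:phys-space-y-multiplier-consequence-infty}, \eqref{eq:phys-space-y-multiplier-consequence-hor} with ``a standard intermediate-$r$ virial function for the wave operator $\swei{\mathfrak R}{s}_k$.'' But there is no such standard object for Kerr: any virial/Morawetz multiplier estimate in the bounded-$|r^*|$ region must degenerate at trapping, and trapping depends on frequency; producing a non-degenerate $\mathbb{I}_p$ estimate by a physical-space virial current is precisely the obstruction that forces the entire frequency-space apparatus of Sections~\ref{sec:odes-big}--\ref{sec:physical-space-estimates}. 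The corollaries \eqref{eq:phys-space-y-multiplier-consequence-infty} and \eqref{eq:phys-space-y-multiplier-consequence-hor} control only $r^*\geq 2R^*$ and $r^*\leq -2R^*$, so the bounded-$|r^*|$ region is exactly where your proposal would stall.

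The point you are missing is that the finite-in-time bulk bound needs no multiplier estimate whatsoever. Since the bulk density in $\mathbb{I}_p$ is (after Hardy/Poincar\'e on the zeroth-order term) pointwise dominated by the flux density on each slice, one has the elementary bound
\begin{align*}
\mathbb{I}_p[\swei{\tilde\upphi}{s}_k](0,\tau)\leq B\int_0^\tau\mathbb{E}_p[\swei{\tilde\upphi}{s}_k](\tau')\,d\tau'\,,
\end{align*}
and plugging in the flux bound just established immediately yields the claimed estimate, with $\mathbb{I}_p^{\rm deg}\leq\mathbb{I}_p$ following trivially from $\zeta\leq 1$. The trapping issue never arises precisely because the constant is allowed to grow with $\tau$. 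Replacing your virial step by this observation is what actually closes the proof.
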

\begin{proof}
Let us begin the case where we consider the usual $c_k$, starting by establishing the results for the fluxes. The result with $p=0$ and no overbar follows from \eqref{eq:phys-space-Killing-multiplier-consequence-1}, for instance with $\chi\equiv 1$, using Gr\"{o}nwall's lemma. For the variations, one combines \eqref{eq:phys-space-Killing-multiplier-consequence-1} with \eqref{eq:phys-space-redshift-multiplier-consequence-s>0} and \eqref{eq:phys-space-redshift-multiplier-consequence-s<0}  for the result with an overbar in the case $|a|<M$, or with \eqref{eq:phys-space-rp-multiplier-consequence-s<0} and \eqref{eq:phys-space-rp-multiplier-consequence-s>0-no-peeling}  for the result with $p\neq 0$. Finally, for the result concerning the slab, we simply note that 
\begin{align*}
\mathbb{I}_p[\swei{\tilde\upphi}{s}_k](0,\tau)\leq B \int_0^\tau \mathbb{E}_p[\swei{\tilde\upphi}{s}_k](\tau')d\tau' \,,
\end{align*}
and similarly for the energy norms with an overbar. Thus, applying the bounds on the fluxes concludes the proof. 

For the statements involving the weighted $\dbtilde\upphi_k$, we appeal to the variations of the above estimates in this case, which show that the usual Killing and virial currents lead to error terms with can be controlled by adding the $r^p$ currents with $p\geq 1$.
\end{proof}

\begin{proposition}[Large-$|r^*|$ bulk and flux estimates]\label{prop:bulk-flux-large-r} Fix $s\in\mathbb{Z}$, $a_0\in[0,M)$.  For some sufficiently large $R^*$, and for any $\tau>0$, any $\delta\in(0,1]$, the following estimates hold:
\begin{align*} 
&b(\delta)\mathbb{I}_{-\delta,p}[\swei{\tilde\upphi}{s}_k\mathbbm{1}_{\{|r^*|\geq R^*\}}](0,\tau)+\mathbb{E}_{p}[\swei{\tilde\upphi}{s}_{k}\mathbbm{1}_{\{|r^*|\geq R^*\}}](\tau)+\mathbb{E}_{\mc H^+}[\swei{\tilde\upphi}{s}_{k}](0,\tau)+\mathbb{E}_{\mc I^+,p}[\swei{\tilde\upphi}{s}_{k}](0,\tau)\\
&\quad\leq B\sum_{j=0}^{k}\lp(\mathbb{E}_{p}[\swei{\tilde\upphi}{s}_j](0)+\mathbb{E}[\swei{\tilde\upphi}{s}_j\mathbbm{1}_{\{R^*\leq |r^*|\leq 2R^*\}}](\tau)+\mathbb{I}[\swei{\tilde\upphi}{s}_j\mathbbm{1}_{\{R^*\leq |r^*|\leq 2R^*\}}](0,\tau)\rp)\\
&\quad\qquad+B(|s|-k)\int_{0}^\tau\int_{\Sigma_\tau'}(r^{p-4}\mathbbm{1}_{\{s<0\}}+r^{p-3}\mathbbm{1}_{\{s>0\}})|\tilde\upphi_{k+1}|^2\mathbbm{1}_{\{|r^*|\geq R^*\}} dr d\sigma d\tau'\,. \numberthis \label{eq:bulk-flux-estimate-large-r}
\end{align*}
(If $s>0$, then the above estimates also hold for $\swei{\dbtilde\upphi}{s}_k$ for $p\in[1,2]$.) In particular, we have for $p\in[0,2)$
\begin{align*}
&\sum_{k=0}^{|s|}\lp(b(\delta)\mathbb{I}_{-\delta,p}[\swei{\tilde\upphi}{s}_k\mathbbm{1}_{\{|r^*|\geq R^*\}}](0,\tau)+\mathbb{E}_p[\swei{\tilde\upphi}{s}_{k}\mathbbm{1}_{\{|r^*|\geq R^*\}}](\tau)+\mathbb{E}_{\mc H^+}[\swei{\tilde\upphi}{s}_{k}](0,\tau)+\mathbb{E}_{\mc I^+,p}[\swei{\tilde\upphi}{s}_{k}](0,\tau)\rp)\\
&\quad\leq B\sum_{k=0}^{|s|}\lp(\mathbb{E}_{p}[\swei{\tilde\upphi}{s}_k](0)+\mathbb{E}_0[\swei{\tilde\upphi}{s}_k\mathbbm{1}_{\{R^*\leq |r^*|\leq 2R^*\}}](\tau)+\mathbb{I}_0[\swei{\tilde\upphi}{s}_k\mathbbm{1}_{\{R^*\leq |r^*|\leq 2R^*\}}](0,\tau)\rp)\,, \numberthis \label{eq:bulk-flux-estimate-large-r-summed}
\end{align*}
and the same holds for $p=2$ if $s\leq 0$.  Furthermore, if $|a|\leq a_0$, then we can replace $\mathbb{I}_p$, $\mathbb{E}_p$ and $\mathbb{E}_{\mc H^+}$ on the left hand side in the above estimates by their versions with overbar if we moreover add an overbar to the $\mathbb{E}$ quantities on the right hand side; then the constant $B$ depends on $a_0$.
\end{proposition}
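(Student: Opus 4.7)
The proof proposal combines three families of estimates already derived from the multiplier identities. Estimates \eqref{eq:phys-space-Killing-multiplier-consequence-3} and \eqref{eq:phys-space-Killing-multiplier-consequence-4} from the Killing multiplier identity provide control of the fluxes on $\mc{H}^+$, $\mc{I}^+$ and on $\Sigma_\tau\cap\{|r^*|\geq 2R^*\}$, but contain $\varepsilon$-small errors from bulk terms and lose control over the $|\underline{\mc L}\tilde\upphi_k|^2$ direction on its own. Estimates \eqref{eq:phys-space-y-multiplier-consequence-infty} and \eqref{eq:phys-space-y-multiplier-consequence-hor} from the virial multiplier identity recover the $\mathbb{I}_{-\delta}$ spacetime bulk (and in particular the missing $\underline{\mc L}$-direction, with the help of the Killing estimate for $k<|s|$ in the $s<0$ case). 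Finally, the $r^p$-multiplier estimates \eqref{eq:phys-space-rp-multiplier-consequence-s<0}, \eqref{eq:phys-space-rp-multiplier-consequence-s>0-peeling}, and \eqref{eq:phys-space-rp-multiplier-consequence-s>0-no-peeling} promote the weightless $\mathbb{E}$ and $\mathbb{I}$ quantities to their $p$-weighted versions $\mathbb{E}_p$, $\mathbb{E}_{\mc I^+,p}$, $\mathbb{I}_{-\delta,p}$. To obtain \eqref{eq:bulk-flux-estimate-large-r} we add the three consequences with appropriately chosen parameters: the small $\varepsilon$ in the Killing estimate and the $1/R^*$ gain in the virial estimate allow overlapping bulk contributions from the three inequalities to be absorbed into the left-hand side once $R^*$ is taken sufficiently large. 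All remaining terms either sit in the transition region $\{R^*\leq |r^*|\leq 2R^*\}$, in the initial data, or take the form of $\tilde\upphi_{k+1}$ bulk couplings inherited from the hyperbolic equation \eqref{eq:transformed-k-tilde} and the associated transport relation \eqref{eq:transformed-transport}.

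For \eqref{eq:bulk-flux-estimate-large-r-summed}, I would sum \eqref{eq:bulk-flux-estimate-large-r} over $k\in\{0,\dots,|s|\}$. The coupling term $\int r^{p-4}|\tilde\upphi_{k+1}|^2$ (for $s<0$) or $\int r^{p-3}|\tilde\upphi_{k+1}|^2$ (for $s>0$) on the right-hand side is strictly dominated by the corresponding entry of $\mathbb{I}_{-\delta,p}[\tilde\upphi_{k+1}]$ on $\{|r^*|\geq R^*\}$ by a factor of $R^{*-\delta'}$ for some $\delta'>0$, using that $r\gtrsim R^*$ in the relevant region, so after summing these couplings can be absorbed into the $\mathbb{I}_{-\delta,p}[\tilde\upphi_{k+1}\mathbbm{1}_{\{|r^*|\geq R^*\}}]$ contributions on the left. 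Similarly, lower-level coupling terms involving $\tilde\upphi_j$ with $j<k$, which enter the Killing and virial estimates with a $R^{*-1}\varepsilon^{-1}$ factor, are absorbed by the corresponding summed quantities on the left after taking $R^*$ large (fixing $\varepsilon$ first). The restriction $p\in[0,2)$ ensures the $r^p$ estimate closes without an upgrade from $\mathbb{I}_1$, which is only needed when $p=2$; the case $p=2$ for $s\leq 0$ uses \eqref{eq:phys-space-rp-multiplier-consequence-s<0} with $p=2$, at the cost of first applying the scheme with some $p\in[1,2)$ to control the auxiliary $\mathbb{I}_1$ term. The overbarred version under $|a|\leq a_0<M$ follows by adding the redshift estimates \eqref{eq:phys-space-redshift-multiplier-consequence-s>0}, \eqref{eq:phys-space-redshift-multiplier-consequence-s<0}, which upgrade the control of $\mathbb{E}$ and $\mathbb{E}_{\mc H^+}$ near the horizon to their overbarred versions; here the implicit constants pick up dependence on $a_0$ through $r_+ - r_- \geq 2\sqrt{M^2 - a_0^2}$.

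The main obstacle, as is typical for mixed transport-wave systems, lies in the interplay between the sign of $\sign s$ and the chosen rescaling. For $s<0$ the Killing identity naturally produces a favorable bulk $-\sign s(c_k'/c_k) |\underline{\mc L}\tilde\upphi_k|^2$, which for $k<|s|$ is strong enough to compensate for the Killing estimate's inability to see $\underline{\mc L}$ decay on its own at large $r$. For $s>0$ with the rescaling $\tilde\upphi_k$ this favorable bulk is absent, and one must use the alternative version of Lemma~\ref{lemma:phys-space-rp-multiplier-identity} (and the transport identity \eqref{eq:phys-space-transport-identity-s>0} near the horizon) to close; when instead using $\dbtilde\upphi_k$ the sign $\dbtilde c_k'/\dbtilde c_k - (|s|-k)w'/w \leq 0$ is what makes \eqref{eq:phys-space-rp-multiplier-consequence-s>0-peeling} work, restricting $p\geq 1$. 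The proof is completed by carefully tracking these sign dispositions across the three multiplier estimates and verifying that the smallness parameters $\varepsilon$ and $R^{*-1}$ can be chosen consistently across all $k\in\{0,\dots,|s|\}$ simultaneously.
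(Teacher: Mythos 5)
Your proof is correct and follows essentially the same approach as the paper's: combine the Killing-energy consequences \eqref{eq:phys-space-Killing-multiplier-consequence-3}, \eqref{eq:phys-space-Killing-multiplier-consequence-4} (for fluxes) with the virial consequences \eqref{eq:phys-space-y-multiplier-consequence-infty}, \eqref{eq:phys-space-y-multiplier-consequence-hor} (for bulks), then upgrade via the $r^p$ consequences \eqref{eq:phys-space-rp-multiplier-consequence-s<0}, \eqref{eq:phys-space-rp-multiplier-consequence-s>0-no-peeling}, \eqref{eq:phys-space-rp-multiplier-consequence-s>0-peeling} and the redshift consequences \eqref{eq:phys-space-redshift-multiplier-consequence-s<0}, \eqref{eq:phys-space-redshift-multiplier-consequence-s>0}, absorbing overlapping bulk errors using the $\varepsilon$- and $R^{*-1}$-smallness. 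One minor slip: in your last paragraph you attribute to the Killing estimate an "inability to see $\underline{\mc L}$ decay" that is actually the \emph{virial} estimate's deficiency for $s<0$, $k<|s|$ (cf.\ the remark following \eqref{eq:phys-space-y-multiplier-consequence-hor}); the favorable $\underline{\mc L}$ bulk you correctly cite is supplied \emph{by} the Killing estimate \eqref{eq:phys-space-Killing-multiplier-consequence-3} to fill that gap.
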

\begin{proof} We start with the case $p=0$ with no overbars on the energy norms: we combine the bulk estimates \eqref{eq:phys-space-y-multiplier-consequence-infty} and \eqref{eq:phys-space-y-multiplier-consequence-hor} with the boundary term estimates \eqref{eq:phys-space-Killing-multiplier-consequence-3} and  \eqref{eq:phys-space-Killing-multiplier-consequence-4} to control the first term on the left hand side. The cases of $p\neq 0$ and energies with an overbar then follow by invoking the estimates \eqref{eq:phys-space-redshift-multiplier-consequence-s>0}, \eqref{eq:phys-space-redshift-multiplier-consequence-s<0}, \eqref{eq:phys-space-rp-multiplier-consequence-s<0} and \eqref{eq:phys-space-rp-multiplier-consequence-s>0-no-peeling} in Section~\ref{sec:multiplier-identities}. Finally, for the statement with $\dbtilde\upphi_k$, we rely on the alternative versions of the aforementioned estimates obtained in the last section for this rescaled variable.
\end{proof}

To conclude the section, we state some conditional estimates which will be useful later on.

\begin{proposition}[Two conditional estimates] \label{prop:estimates-assuming-ILED} Fix $s\in\mathbb{Z}$, $M>0$, and $\tau>0$.
\begin{itemize}
\item For $k\in\{0,\dots,|s|\}$, let $\swei{\tilde\upphi}{s}_k$ be solutions to the inhomogeneous transformed system of Definition~\ref{def:transformed-system}, with inhomogeneity $\swei{\tilde{\mathfrak H}}{s}_k$, satisfying the integrated local energy decay estimate 
\begin{align*}
\mathbb{I}^{\rm deg}[\tilde\upphi_k](-\infty,\infty)\leq B\sum_{j=0}^{|s|}\mathbb{E}[\tilde\upphi_j](0)\,,
\end{align*}
and let $\chi=\chi(r^*)$ be a smooth function with the property that $T+\upomega_+\chi Z$ is timelike in the support of $\swei{\tilde\upphi}{s}_k$ for all $k=0,\dots, |s|$. Then for any $\tau_0<\tau$ we have the energy estimate
\begin{align}
\begin{split}
\mathbb{E}[\swei{\tilde\upphi}{s}_k](\tau) &\leq B\sum_{j=0}^{|s|}\mathbb{E}[\swei{\tilde\upphi}{s}_j](\tau_0) +B\upomega_+\int_{\tau_0}^\tau \int_{\Sigma_{\tau'}}\lp|\frac{d\chi}{dr}\rp||Z\swei{\Phi}{s}|^2drd\sigma d\tau' \\
&\qquad +B\sum_{j=0}^{|s|-1}\upomega_+ \lp|\int_{\tau_0}^\tau\int_{\Sigma_{\tau'}}\chi wc_jc_{s,j}^Z(r)\Re[Z\swei{\Phi}{s}\overline{Z\swei{\tilde\upphi}{s}_j}]drd\sigma d\tau'\rp| \\
&\qquad +B\sum_{j=0}^{|s|}\lp|\int_{\tau_0}^\tau\int_{\Sigma_{\tau'}}\Re[\swei{\tilde{\mathfrak{H}}}{s}_j\overline{(T+\upomega_+\chi Z)\swei{\tilde\upphi}{s}_j}]drd\sigma d\tau'\rp|\,.
\end{split}\label{eq:1st-order-estimate-for-energy-bddness}
\end{align}

\item For $k\in\{0,\dots,|s|\}$, let $\swei{\tilde\upphi}{s}_k$ be solutions to the homogeneous transformed system of Definition~\ref{def:transformed-system}, satisfying the integrated local energy decay estimate
\begin{align*}
\sum_{k=0}^{|s|}\mathbb{I}_{-\delta,p}^{\rm deg}[\swei{\dbtilde\upphi}{s}_{k}](0,\tau)\leq B(\delta)\sum_{k=0}^{|s|}\lp(\overline{\mathbb{E}}_{p}[\swei{\dbtilde\upphi}{s}_{k}](\tau)+\overline{\mathbb{E}}_{p}[\swei{\dbtilde\upphi}{s}_{k}](0)\rp)
\end{align*}
for any $p\in(1,2)$ and $\delta\in(0,1]$. Then, we have the 
\begin{align*}
\sum_{k=0}^{|s|}\overline{\mathbb{E}}_{p}[\swei{\tilde\upphi}{s}_k](\tau)
&\leq B\sum_{k=0}^{|s|}\int_0^\tau\int_{\Sigma_{\tau'}} r^{-3}(s^2+r^{-1})\lp(|\swei{\tilde\upphi}{s}_k|^2+|Z\swei{\tilde\upphi}{s}_k|^2\rp)+B\sum_{k=0}^{|s|}\overline{\mathbb{E}}_p[\swei{\tilde\upphi}{s}_k](0)\,.\numberthis\label{eq:estimate-for-openness-first-order}
\end{align*}
\end{itemize}
\end{proposition}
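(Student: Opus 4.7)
The plan for the first estimate is to apply the Killing multiplier identity of Lemma~\ref{lemma:phys-space-Killing-multiplier-identity} with $\chi_T \equiv 1$ and $\chi_Z = \chi$, where $\chi$ is the given cutoff, integrating against the inhomogeneous equations \eqref{eq:transformed-k-tilde} over $\mc{R}_{(\tau_0,\tau)}$ against the volume form $\rho^2(r^2+a^2)^{-1}\Delta^{-1}dV$. The boundary contribution on $\Sigma_\tau$ is coercive because $T + \upomega_+\chi Z$ is timelike on the support of $\tilde\upphi_k$ by hypothesis, and yields $\mathbb{E}[\tilde\upphi_k](\tau)$ (up to constant factors); the boundary contributions on $\mc H^+$ and $\mc I^+$ are nonnegative and can be discarded; the boundary contribution on $\Sigma_{\tau_0}$ is controlled by $\mathbb{E}[\tilde\upphi_j](\tau_0)$. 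The inhomogeneity $\tilde{\mathfrak H}_j$ produces, through the pairing with $\overline{(T+\upomega_+\chi Z)\tilde\upphi_j}$, exactly the last term of \eqref{eq:1st-order-estimate-for-energy-bddness}.

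For the bulk term in the identity, I would keep track of three kinds of contributions. First, the terms involving $\chi'$ in $I$ give rise to $|d\chi/dr||Z\swei{\Phi}{s}|^2$-type bulk integrals, accounting for the second term of \eqref{eq:1st-order-estimate-for-energy-bddness}. Second, the coupling terms (via the $\sweit{\mathfrak J}{s}_{k,j}$ sum) between the top level $\swei{\Phi}{s}$ and lower levels $\swei{\tilde\upphi}{s}_j$ involving the $Z$-operator reproduce exactly the third term of \eqref{eq:1st-order-estimate-for-energy-bddness}. Third, the remaining bulk errors, involving schematically $|T\swei{\Phi}{s}|^2$ away from trapping (controlled by $\mathbb{I}^{\rm deg}$) and lower-order coupling terms of the form $\varepsilon \sum_j \mathbb{I}^{\rm deg}[\tilde\upphi_j] + \varepsilon^{-1}\int r^{-3}|\tilde\upphi_{k+1}|^2$ as in the analysis leading to \eqref{eq:phys-space-Killing-multiplier-consequence-1}, are absorbed using the assumed ILED bound after applying Cauchy--Schwarz in $\varepsilon$.

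For the second estimate, the strategy is to derive a closed $r^p$-weighted flux estimate modulo bulk errors absorbable by the ILED hypothesis. Concretely, I would combine the $r^p$-weighted multiplier identity of Lemma~\ref{lemma:phys-space-rp-multiplier-identity} (chosen with the alternative form and $\xi = \tilde\chi_R$) with the redshift multiplier identity of Lemma~\ref{lemma:phys-space-redshift-multiplier-identity} and a Killing current, summed over $k = 0, \ldots, |s|$, as in the derivation of \eqref{eq:phys-space-rp-multiplier-consequence-s<0} and \eqref{eq:phys-space-rp-multiplier-consequence-s>0-peeling}. This produces a flux identity of the form
\[
\sum_{k=0}^{|s|} \overline{\mathbb{E}}_p[\swei{\tilde\upphi}{s}_k](\tau) + \sum_{k=0}^{|s|} \overline{\mathbb{I}}_{-\delta,p}^{\rm deg}[\swei{\tilde\upphi}{s}_k](0,\tau) \lesssim \sum_{k=0}^{|s|} \overline{\mathbb{E}}_p[\swei{\tilde\upphi}{s}_k](0) + (\text{bulk errors}).
\]
The bulk errors split naturally into two groups: those absorbable into $\varepsilon \sum_k \overline{\mathbb{I}}_{-\delta,p}^{\rm deg}[\swei{\tilde\upphi}{s}_k]$ (which by the ILED hypothesis, inserted twice, are ultimately controlled by $\varepsilon (\overline{\mathbb{E}}_p(\tau) + \overline{\mathbb{E}}_p(0))$ and thus absorbed back into the left hand side for $\varepsilon$ small); and the genuine lower-order terms with $r^{-3}(s^2 + r^{-1})$ decay arising from the mass-like and $\mathring{\slashed\triangle}^{[s]}_T$ contributions to $U_k$ that do not improve under the $r^p$-method. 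These residual terms are precisely the ones on the right hand side of \eqref{eq:estimate-for-openness-first-order}.

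The main obstacle will be the careful bookkeeping of coupling terms between the levels $k$ and $j < k$ in the transformed system, which appear in both the Killing current (via $\swei{\tilde{\mathfrak J}}{s}_{k,j}$) and the $r^p$ current. In the first estimate, these must not be naively bounded but rather isolated into the explicit form of the third term in \eqref{eq:1st-order-estimate-for-energy-bddness}, since, without appealing to Theorem~\ref{thm:high-low-freqs-intro}, these contributions cannot be controlled by $\mathbb{E}$ alone because of the superradiant nature of the $Z\swei{\Phi}{s}\overline{Z\swei{\tilde\upphi}{s}_j}$ pairing. In the second estimate, the analogous subtlety is that the redshift identity for the lower levels $k < |s|$, especially for $s < 0$, produces cross terms that must be handled by using the transport estimate \eqref{eq:phys-space-transport-identity-s<0} to trade integrated control of $\tilde\upphi_{k+1}$ for integrated control of $\tilde\upphi_k$, gaining a factor of $R^{*-1}$, so as to not overload the right hand side with additional derivative-consuming terms.
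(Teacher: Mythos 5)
For the first estimate, your approach is the paper's: the paper compresses the argument to a single sentence ("combine the assumptions with \eqref{eq:phys-space-Killing-multiplier-consequence-1}"), and your detailed accounting of how each term in the Killing current identity of Lemma~\ref{lemma:phys-space-Killing-multiplier-identity} matches the terms in \eqref{eq:1st-order-estimate-for-energy-bddness} is a faithful expansion of that.

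For the second estimate, your strategy diverges from the paper's and has a gap near $r = r_+$. You propose to combine the Killing, $r^p$-weighted, and redshift currents in one pass, then absorb bulk errors into $\varepsilon \sum_k \overline{\mathbb{I}}^{\rm deg}_{-\delta,p}$ using the ILED hypothesis. But the ILED hypothesis only gives control over $\mathbb{I}^{\rm deg}_{-\delta,p}$ (no overbar), and the right-hand side of the redshift estimate \eqref{eq:phys-space-redshift-multiplier-consequence-s<0} (or the $s>0$ version) contains the \emph{degenerate} energy flux $\mathbb{E}[\tilde\upphi_k\mathbbm{1}_{(-\infty,-R^*]}](\tau)$ at the final time $\tau$—a pointwise-in-time quantity near the horizon which is not controlled by any bulk integral, degenerate or otherwise, and which does not come with a small parameter. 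A direct absorption would then read $\overline{\mathbb{E}}(\tau) \lesssim \mathbb{E}(\tau) + \ldots$, which does not close because you cannot trade $\mathbb{E}(\tau)$ for an $\varepsilon$-small piece of $\overline{\mathbb{E}}(\tau)$. This is precisely why the paper does not attempt a one-pass argument: it first closes a degenerate estimate $\mathbb{E}_0(\tau) \lesssim \text{bulk} + \text{data}$, then upgrades the $r\to\infty$ weights by interpolation (absorbing $\overline{\mathbb{E}}_{1+\eta}$ into $\overline{\mathbb{E}}_{1+2\eta}$ via a large $R$), and finally—for the $r\to r_+$ weights—establishes the integral inequality
\[
\sum_{k}\overline{\mathbb{E}}_0[\tilde\upphi_k](\tau)+\sum_{k}\int_0^\tau \overline{\mathbb{E}}_0[\tilde\upphi_k](\tau')\,d\tau' \leq B\sum_k\int_0^\tau {\mathbb{E}}_0[\tilde\upphi_k](\tau')\,d\tau' + B\sum_k\overline{\mathbb{E}}_1[\tilde\upphi_k](0)\,,
\]
from which the desired pointwise bound $\overline{\mathbb{E}}_0(\tau) \lesssim \sup_{\tau'}\mathbb{E}_0(\tau') + \text{data}$ is extracted via the Gr\"onwall-type trick from the last paragraph of the proof of Lemma~11.2.1 of \cite{Dafermos2016b}. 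That step—converting an inequality in which the non-degenerate flux appears on the left both pointwise and under a time integral into a pointwise bound with no spurious $\tau$-growth—is the crux of the $r\to r_+$ upgrade and is not present in your proposal. Your remarks about the transport estimate \eqref{eq:phys-space-transport-identity-s<0} gaining a factor $R^{*-1}$ for the lower levels are correct but orthogonal to this issue.
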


\begin{proof}  The first statement follows easily by combining the assumptions with \eqref{eq:phys-space-Killing-multiplier-consequence-1}. Now let us turn to the second statement. From estimate \eqref{eq:phys-space-Killing-multiplier-consequence-1}
\begin{align*}
\sum_{k=0}^{|s|}{\mathbb{E}}_0[\tilde\upphi_k](\tau)
&\leq B(a_0)\varepsilon^{-1}\sum_{k=0}^{|s|}\int_0^\tau\int_{\Sigma_{\tau'}} r^{-3}(s^2+r^{-1})\lp(|\tilde\upphi_k|^2+|Z\tilde\upphi_k|^2\rp)\\
&\qquad+B(a_0)\varepsilon\sum_{k=0}^{|s|}{\mathbb{I}}^{\mathrm{deg}}_0[\tilde\upphi_k](0,\tau)+B(a_0)\sum_{k=0}^{|s|}{\mathbb{E}}_0[\tilde\upphi_k](0)\\
&\leq B(a_0)\sum_{k=0}^{|s|}\int_0^\tau\int_{\Sigma_{\tau'}} r^{-3}(s^2+r^{-1})\lp(|\tilde\upphi_k|^2+|Z\tilde\upphi_k|^2\rp)+B(a_0)\sum_{k=0}^{|s|}{\mathbb{E}}_0[\tilde\upphi_k](0)\,,\numberthis\label{eq:estimate-for-openness-intermediate-1}
\end{align*}
after using the assumptions. It remains to improve the weights at $r=r_+$ and $r=\infty$ on the left hand side. 

We begin with the large $r$ weights. Fix some $\eta\in(0,1)$. By Proposition~\ref{prop:bulk-flux-large-r}, we have
\begin{align*}
\sum_{k=0}^{|s|}\overline{\mathbb{E}}_{1+2\eta}[\tilde\upphi_k](\tau)
&\leq B\sum_{k=0}^{|s|}\lp(\mathbb{E}_{0}[\tilde\upphi_k](\tau) +{\mathbb{I}}^{\mathrm{deg}}_0[\tilde\upphi_k](0,\tau)\rp)+B\sum_{k=0}^{|s|}\overline{\mathbb{E}}_{1+2\eta}[\tilde\upphi_k](0)\,,
\end{align*}
as the errors on the right hand side of \eqref{eq:bulk-flux-estimate-large-r} are only located in a bounded $|r^*|$ region where the weights are unimportant. By the assumptions, we then have
\begin{align*}
\sum_{k=0}^{|s|}\overline{\mathbb{E}}_{1+2\eta}[\tilde\upphi_k](\tau)
&\leq B\sum_{k=0}^{|s|}\overline{\mathbb{E}}_{1+\eta}[\tilde\upphi_k](\tau) +B\sum_{k=0}^{|s|}\overline{\mathbb{E}}_{1+2\eta}[\tilde\upphi_k](0)\,. \numberthis\label{eq:estimate-for-openness-intermediate-2}
\end{align*}
Note that we can interpolate
\begin{align*}
\overline{\mathbb{E}}_{1+\eta}[\tilde\upphi_k](\tau) \leq B\overline{\mathbb{E}}_{0}[\tilde\upphi_k](\tau)+R^{-\eta}\overline{\mathbb{E}}_{1+2\eta}[\tilde\upphi_k](\tau)\,.
\end{align*}
By choosing $R$ sufficiently large depending on $\eta$, we can absorb the second term by the left hand side of \eqref{eq:estimate-for-openness-intermediate-2}. Since our choice of $\eta$ was arbitrary, we conclude 
\begin{align*}
\sum_{k=0}^{|s|}\overline{\mathbb{E}}_{p}[\tilde\upphi_k](\tau)
&\leq B\sum_{k=0}^{|s|}\overline{\mathbb{E}}_{0}[\tilde\upphi_k](\tau) +B\sum_{k=0}^{|s|}\overline{\mathbb{E}}_{p}[\tilde\upphi_k](0)\,, \numberthis\label{eq:estimate-for-openness-intermediate-3}
\end{align*}
for any $p\in(1,2)$.

We now turn to the weights as $r\to r_+$. By Proposition~\ref{prop:bulk-flux-large-r}, we again have
\begin{align*}
\sum_{k=0}^{|s|}\lp(\overline{\mathbb{E}}_{0}[\tilde\upphi_k](\tau)+\overline{\mathbb{I}}_{1}[\tilde\upphi_k](0,\tau)\rp)
&\leq B\sum_{k=0}^{|s|}\lp(\mathbb{E}_{0}[\tilde\upphi_k](\tau) +{\mathbb{I}}_1[\tilde\upphi_k](0,\tau)\rp)+B\sum_{k=0}^{|s|}\overline{\mathbb{E}}_{1}[\tilde\upphi_k](0)\\
&\leq B\sum_{k=0}^{|s|}{\mathbb{I}}_1[\tilde\upphi_k](0,\tau)+B\sum_{k=0}^{|s|}\overline{\mathbb{E}}_{1+2\eta}[\tilde\upphi_k](0)\,,
\end{align*}
using the appropriate version of \eqref{eq:phys-space-Killing-multiplier-consequence-1} in the last inequality. Hence, 
\begin{align*}
\sum_{k=0}^{|s|}\overline{\mathbb{E}}_0[\tilde\upphi_k](\tau)+\sum_{k=0}^{|s|}\int_0^\tau \overline{\mathbb{E}}_0[\tilde\upphi_k](\tau')d\tau' 
&\leq B\sum_{k=0}^{|s|}\int_0^\tau {\mathbb{E}}_0[\tilde\upphi_k](\tau')d\tau' + B\sum_{k=0}^{|s|}\overline{\mathbb{E}}_1[\tilde\upphi_k](0)\\
&\leq B\tau \sup_{0\leq \tau'\leq \tau} \sum_{k=0}^{|s|}\mathbb{E}_0[\tilde\upphi_k](\tau')+ B\sum_{k=0}^{|s|}\overline{\mathbb{E}}_1[\tilde\upphi_k](0)\,,
 \end{align*}
from which we deduce (see \cite[last paragraph of proof of Lemma 11.2.1]{Dafermos2016b}) the estimate.
 \begin{align*}
\sum_{k=0}^{|s|}\overline{\mathbb{E}}_0[\tilde\upphi_k](\tau)\leq  B\sup_{0\leq \tau'\leq \tau}\sum_{k=0}^{|s|}{\mathbb{E}}_0[\tilde\upphi_k](\tau')+ B\overline{\mathbb{E}}_1[\tilde\upphi_k](0)\,. \numberthis\label{eq:estimate-for-openness-intermediate-4}
 \end{align*}

By combining \eqref{eq:estimate-for-openness-intermediate-1}, \eqref{eq:estimate-for-openness-intermediate-3} and \eqref{eq:estimate-for-openness-intermediate-4}, we conclude the proof.
\end{proof}

\subsection{Commutator identities}

In this section, we derive some commutation identities which are useful in obtaining higher order estimates for the transformed system of Definition~\ref{def:transformed-system}. We begin with a commutation relation which will be useful in recovering derivatives of the lower level quantities as well as eliminate the higher order bulk norms:

\begin{lemma}[Radial derivative commutation] \label{lemma:radial-commutation} Fix $M>0$, $a\in[0,M)$ and $s\in\mathbb{Z}$. Suppose that, for $k\in\{0,\dots,|s|\}$, $\swei{\tilde\upphi}{s}$ are solutions to the homogeneous transformed system of Definition~\ref{def:transformed-system}; we drop superscripts in what follows to easy the notation. We have 
\begin{align*}
\tilde{\mathfrak{R}}_k\lp[\tilde\upphi_k'\rp]= \lp[\tilde{\mathfrak{R}}_k, \frac12(L-\uL) \rp]\tilde\upphi_k  + \lp(\tilde{\mathfrak{R}}_k\tilde\upphi_k\rp)'\,,
\end{align*}
where the first term is given by
\begin{align*}
\lp[\tilde{\mathfrak{R}}_k, \frac12(L-\uL) \rp]
&=\frac{2arw}{r^2+a^2}Z(L+\uL)-w'\mathring{\slashed{\triangle}}^{[s]}_T  +\frac12\sign s \lp(\frac{c_k'}{c_k}\rp)'\underline{\mc L }\\
&\qquad +\sign s \lp[(|s|-k)\lp(\frac{4arw}{r^2+a^2}\rp)'-\frac{2arw}{r^2+a^2}\frac{c_k'}{c_k}\rp]Z-\lp(U_k-\lp(\frac{c_k'}{c_k}\rp)'\rp)'\,.
\end{align*}
\end{lemma}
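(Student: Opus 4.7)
The plan is the following. Since $\tfrac12(L-\uL)=\partial_{r^*}$ by the definitions in Section~\ref{sec:relevant-vector fields}, the identity
\[
\tilde{\mathfrak{R}}_k\lp[\tfrac12(L-\uL)\tilde\upphi_k\rp] = \lp[\tilde{\mathfrak{R}}_k,\tfrac12(L-\uL)\rp]\tilde\upphi_k + \tfrac12(L-\uL)\lp(\tilde{\mathfrak{R}}_k\tilde\upphi_k\rp)
\]
is a tautological consequence of the definition of the commutator, and so the substance of the lemma lies entirely in computing $[\tilde{\mathfrak{R}}_k,\partial_{r^*}]$ explicitly. This I would do by going through the six summands appearing in the definition \eqref{eq:transformed-R-k-tilde} of $\tilde{\mathfrak{R}}_k$ in turn.

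To streamline the bookkeeping, I would introduce the shorthand $X:=\partial_{r^*}$ and $Y:=T+\tfrac{a}{r^2+a^2}Z$, so that $L=X+Y$, $\uL=-X+Y$, and $\underline{\mc L}=\sign s\,X+Y$. Since $T$ and $Z$ are Killing and commute both with one another and with every function of $r$ appearing in $\tilde{\mathfrak{R}}_k$, the only non-trivial commutator involving $X$ is
\[
[X,Y]=\lp(\tfrac{a}{r^2+a^2}\rp)'Z=-\tfrac{2arw}{r^2+a^2}Z,
\]
obtained using $\tfrac{dr}{dr^*}=\Delta/(r^2+a^2)$. Every other contribution to the commutator reduces to applying the product rule and differentiating an explicit radial coefficient in $r^*$.

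The contributions are then organized as follows. The principal part simplifies to $\tfrac12(L\uL+\uL L)=-X^2+Y^2$ after the antisymmetric pieces cancel; commuting with $X$ reduces to $[Y^2,X]=\tfrac{4arw}{r^2+a^2}ZY=\tfrac{2arw}{r^2+a^2}Z(L+\uL)$, producing the first term of the claimed identity. The angular part $w\mathring{\slashed{\triangle}}^{[s]}_T$ contributes only through its coefficient, giving $-w'\mathring{\slashed{\triangle}}^{[s]}_T$, because $\mathring{\slashed{\triangle}}^{[s]}_T$ commutes with $X$. The transport-type term $-\sign s\,(c_k'/c_k)\,\underline{\mc L}$ splits via the product rule into a piece proportional to $(c_k'/c_k)'\underline{\mc L}$ plus a $Z$ contribution arising from $[X,\underline{\mc L}]=[X,Y]$ above. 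The remaining $Z$-multiplier $-\tfrac{4arw}{r^2+a^2}\sign s(|s|-k)Z$ yields the $\sign s(|s|-k)\lp(\tfrac{4arw}{r^2+a^2}\rp)'Z$ contribution, and the purely multiplicative piece $U_k-(c_k'/c_k)'$ contributes $-(U_k-(c_k'/c_k)')'$. Summing and grouping the $Z$ terms produces the displayed formula.

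The only real obstacle is careful bookkeeping---in particular, keeping track of the $\sign s$-dependent conventions that distinguish $\mc L$ from $\underline{\mc L}$, and correctly differentiating the various radial coefficients, all of which are explicit smooth functions of $r^*$. No new ideas are required beyond the single commutation relation for $[X,Y]$ noted above.
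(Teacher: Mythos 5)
Your approach is the right one, and it is essentially the only sensible one: the first displayed identity is a tautology from the definition of the commutator together with $\tfrac12(L-\uL)=\partial_{r^*}$, and the substance reduces to term-by-term bookkeeping using the single nontrivial relation $[\partial_{r^*},T+\tfrac{a}{r^2+a^2}Z]=-\tfrac{2arw}{r^2+a^2}Z$. The paper does not give a proof, so there is nothing to compare against, but all of your organizational observations check out: the principal part $\tfrac12(L\uL+\uL L)=-X^2+Y^2$ contributes only $[Y^2,X]=\tfrac{2arw}{r^2+a^2}Z(L+\uL)$; the angular operator $\mathring{\slashed\triangle}^{[s]}_T$ commutes with $X$ so only its coefficient $w$ is differentiated; the first-order term $-\sign s\,(c_k'/c_k)\,\underline{\mc L}$ splits under the product rule into $\underline{\mc L}$ and $Z$ pieces; and the remaining multiplicative and $Z$-multiplier terms are trivial.

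One thing to watch when you do carry out the bookkeeping: from the product rule, the $-\sign s\,(c_k'/c_k)\,\underline{\mc L}$ term produces the coefficient $\sign s\,(c_k'/c_k)'$ in front of $\underline{\mc L}$ (together with the $-\sign s\,\tfrac{2arw}{r^2+a^2}\tfrac{c_k'}{c_k}Z$ piece that is grouped into the second line of the displayed formula), whereas the lemma as printed shows $\tfrac12\sign s\,(c_k'/c_k)'\underline{\mc L}$. There is no other source of an $\underline{\mc L}$ contribution among the remaining summands of $\tilde{\mathfrak{R}}_k$, so the printed $\tfrac12$ appears to be a typo. This has no downstream consequence in the paper --- the lemma is only used inside estimate \eqref{eq:phys-space-Killing-multiplier-consequence-1-radial-commuted}, where the commutator is hit with Cauchy--Schwarz and the coefficient is absorbed into the constant $B$ --- but you should resolve this when writing up, so that your final displayed identity is self-consistent.
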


Let us note already that, by the repeating the procedure followed to obtain \eqref{eq:phys-space-Killing-multiplier-consequence-1}, we may use the identities in Lemma~\ref{lemma:phys-space-Killing-multiplier-identity} to deduce
\begin{align*}
\mathbb{E}[\tilde\upphi_k'](\tau)
& \leq B\sum_{j=0}^{k}\mathbb{I}^{\rm deg, 1}[X\tilde\upphi_j](0,\tau)+B(|s|-k)\mathbb{I}^{\rm deg}[\tilde\upphi_{k+1}](0,\tau)\\
&\qquad +B\sum_{j=0}^k\lp|\int_0^\tau\int_{\Sigma_{\tau'}} \Re\lp([\tilde{\mathfrak{R}}_j,\p_{r^*}]\tilde\upphi_j\overline{(T+\upomega_+\chi Z)\tilde\upphi_j'}\rp) drd\sigma d\tau\rp| \numberthis \label{eq:phys-space-Killing-multiplier-consequence-1-radial-commuted}\\
&\leq B(|s|-k)\mathbb{I}^{\rm deg}[\tilde\upphi_{k+1}](0,\tau)+ B\varepsilon^{-1}\sum_{j=0}^{k}\mathbb{I}^{\rm deg, 1}[\tilde\upphi_j](0,\tau)+B\varepsilon\sum_{j=0}^{k}\mathbb{I}^1[\tilde\upphi_{j}](0,\tau)+ \sum_{j=0}^k\mathbb{E}[\tilde\upphi_j'](0)\,. 
\end{align*}

In order to establish higher order estimates close to $r=r_+$, it will be useful to have the following commutation lemma:

\begin{lemma}[Redshift commutation] \label{lemma:redshift-commutation} Fix $M>0$, $a\in[0,M)$ and $s\in\mathbb{Z}$. Suppose that, for $k\in\{0,\dots,|s|\}$, $\swei{\tilde\upphi}{s}$ are solutions to the homogeneous transformed system of Definition~\ref{def:transformed-system}; we drop superscripts in what follows to easy the notation. We have 
\begin{align*}
\tilde{\mathfrak{R}}_k\lp[\frac{r^2+a^2}{\Delta}\underline{L}\tilde\upphi_k\rp]= w\lp[\frac{(r^2+a^2)^2}{\Delta}\tilde{\mathfrak{R}}_k, \frac{r^2+a^2}{\Delta}\underline{L} \rp]\tilde\upphi_k +(r^2+a^2)\uL\lp(\frac{1}{w}\tilde{\mathfrak{R}}_k[\tilde\upphi_k]\rp)
\end{align*}
where the first term on the right hand side is given by
\begin{align*}
&\lp[\frac{(r^2+a^2)^2}{\Delta}\tilde{\mathfrak{R}}_k, \frac{r^2+a^2}{\Delta}\underline{L} \rp]\\
&\quad= -\frac{2M (r^2-a^2)}{r^2+a^2}\lp(\frac{r^2+a^2}{\Delta}\uL\rp)^2+2r\lp(\frac{r^2+a^2}{\Delta}\uL\rp)L-\frac{4ar}{r^2+a^2}Z\lp(\frac{r^2+a^2}{\Delta}\uL\rp)\\
&\quad\qquad -\lp[\frac{4Mr(r^2-3a^2)}{(r^2+a^2)^2}-\frac{1-\sign s}{2}\frac{d}{dr}\lp(\frac{(r^2+a^2)c_k'}{c_k}\rp)+\frac{1+\sign s}{2}\frac{2M(r^2-a^2)}{(r^2+a^2)c_k}\frac{dc_k}{dr}\rp]\lp(\frac{r^2+a^2}{\Delta}\uL\rp)\\
&\quad\qquad -\frac{1+\sign s}{2}\frac{d}{dr}\lp(\frac{c_k'}{c_k w}\rp) L+\frac{d}{dr}\lp(\frac{U_k}{w}-\frac{d}{dr}\lp(\frac{c_k'}{c_k}\rp)(r^2+a^2)\rp)\\
&\quad\qquad-\frac{2a}{(r^2+a^2)^2}\lp[5r^2-a^2-2\sign s (|s|-k)(r^2-a^2) +(1+\sign s)\frac{c_k'}{c_k w}\rp]Z\,.
\end{align*}
\end{lemma}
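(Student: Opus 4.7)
The identity is a commutator computation aimed at producing the favorable redshift term $-\frac{2M(r^2-a^2)}{r^2+a^2}\hat\uL^2$ at the event horizon, where I write $\hat\uL := \frac{r^2+a^2}{\Delta}\uL$ for the rescaled, non-degenerate ingoing null generator. The plan is to verify the algebraic identity directly by expanding both sides using the explicit form of $\tilde{\mathfrak{R}}_k$ in \eqref{eq:transformed-R-k-tilde}.

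I would first observe that the statement is a weighted rearrangement of the tautology
$$\tilde{\mathfrak{R}}_k(\hat\uL\tilde\upphi_k) = \hat\uL(\tilde{\mathfrak{R}}_k\tilde\upphi_k) + [\tilde{\mathfrak{R}}_k, \hat\uL]\tilde\upphi_k,$$
conjugated by the redshift-friendly weight $1/w = (r^2+a^2)^2/\Delta$: the contribution $\hat\uL(\tilde{\mathfrak{R}}_k\tilde\upphi_k)$ is rewritten as $(r^2+a^2)\uL\lp(\frac{1}{w}\tilde{\mathfrak{R}}_k\tilde\upphi_k\rp)$ modulo a zeroth-order-in-$\tilde{\mathfrak{R}}_k\tilde\upphi_k$ correction, which is absorbed into the commutator via the elementary identity $[\tilde{\mathfrak{R}}_k, \hat\uL] = w\lp[\frac{1}{w}\tilde{\mathfrak{R}}_k, \hat\uL\rp] + w\hat\uL\lp(\frac{1}{w}\rp)\tilde{\mathfrak{R}}_k$. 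This rearrangement is a product-rule manipulation, so the real content is the explicit computation of the weighted commutator $w[\frac{1}{w}\tilde{\mathfrak{R}}_k, \hat\uL]$.

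For this, I would decompose $\tilde{\mathfrak{R}}_k$ into the pieces appearing in \eqref{eq:transformed-R-k-tilde}: the principal wave part $\frac12(L\uL+\uL L)$; the angular block $w[\mathring{\slashed\triangle}^{[s]}_T + |s|+k(2|s|-k-1)]$; the azimuthal current $-\frac{4arw}{r^2+a^2}\sign s\,(|s|-k)Z$; the transport term $-\sign s \,\frac{c_k'}{c_k}\underline{\mc L}$; and the purely $r$-dependent zeroth-order potential. Since $\hat\uL$ is first order with $r$-dependent coefficient and involves only $\uL$, it commutes with $T$, $Z$, and every angular derivative. The only surviving contributions thus come from (a) the identity $[\uL, f(r)] = -\p_{r^*} f$ for smooth $f$, producing factors like $\p_{r^*}\lp(\frac{r^2+a^2}{\Delta}\rp)$, $\p_{r^*}(w)$, $\p_{r^*}(c_k'/c_k)$, and $\p_{r^*}U_k$; and (b) the basic commutator $[\uL, L] = \frac{4arw}{r^2+a^2}Z$, which follows at once from $L = \p_{r^*}+T+\frac{a}{r^2+a^2}Z$, $\uL = -\p_{r^*}+T+\frac{a}{r^2+a^2}Z$. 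The redshift term $-\frac{2M(r^2-a^2)}{r^2+a^2}\hat\uL^2$ arises precisely from commuting $\frac12(L\uL+\uL L)$ past the weight $\frac{r^2+a^2}{\Delta}$ in $\hat\uL$, which after simplification using $\Delta = (r-r_+)(r-r_-)$ produces a second $\hat\uL$ factor multiplied by $\p_{r^*}\lp(\frac{r^2+a^2}{\Delta}\rp)\cdot\frac{\Delta}{r^2+a^2}$.

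The remaining generators in the conclusion — the $\hat\uL\cdot L$ term, the $\hat\uL\cdot Z$ term, the first-order $\hat\uL$-term with coefficient involving $w'/w$ and $c_k'/c_k$, and the scalar potential — each appear by the same mechanism and assemble by direct algebra. The main obstacle is bookkeeping: tracking the asymmetry between $s>0$ and $s<0$, which controls whether $\underline{\mc L}$ in the transport term equals $L$ or $\uL$, and thereby produces the factors $\frac{1\pm\sign s}{2}$ that appear in the coefficient of $\hat\uL$ in the stated commutator (the $L$-case contributes only via the nontrivial bracket $[\uL, L]$, while the $\uL$-case contributes only via derivatives of weights). Once this is laid out termwise and each coefficient reduced to closed form, the stated identity follows; no new analytic input beyond the elementary commutator identities above is required.
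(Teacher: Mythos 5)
The paper gives no proof for this lemma, so there is nothing to compare against directly; it is a computational identity and your approach — expand the commutator termwise using the explicit form of $\tilde{\mathfrak{R}}_k$ from \eqref{eq:transformed-R-k-tilde}, exploit $[\hat\uL, T]=[\hat\uL, Z]=[\hat\uL,\mathring{\slashed\nabla}^{[s]}]=0$, and use $[\uL, L]=\frac{4arw}{r^2+a^2}Z$ together with $[\uL, f(r)]=-\p_{r^*}f$ — is the natural and correct one. The identification of the redshift term as arising from commuting the principal part past the $\frac{r^2+a^2}{\Delta}$ weight is also the right mechanism.

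However, the product-rule rearrangement step contains an error that you would catch immediately upon carrying out the computation. You assert that $\hat\uL(\tilde{\mathfrak{R}}_k\tilde\upphi_k)$ and $(r^2+a^2)\uL\bigl(\tfrac1w\tilde{\mathfrak{R}}_k\tilde\upphi_k\bigr)$ differ by a zeroth-order correction in $\tilde{\mathfrak{R}}_k\tilde\upphi_k$. This is false for the expression as printed: since $w = \tfrac{\Delta}{(r^2+a^2)^2}$, one has $w\hat\uL = \tfrac{1}{r^2+a^2}\uL$, so that $\tfrac{1}{r^2+a^2}\uL\bigl(\tfrac1w g\bigr) = \hat\uL g + w\hat\uL\bigl(\tfrac1w\bigr)\,g$ (a genuine zeroth-order correction), whereas $(r^2+a^2)\uL\bigl(\tfrac1w g\bigr) = (r^2+a^2)^2\hat\uL g + (r^2+a^2)\uL\bigl(\tfrac1w\bigr)\,g$, which differs from $\hat\uL g$ by a \emph{first}-order term carrying the unwanted factor $(r^2+a^2)^2-1$. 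The conclusion is that the displayed final term in the lemma's statement should read $\tfrac{1}{r^2+a^2}\uL\bigl(\tfrac1w\tilde{\mathfrak{R}}_k[\tilde\upphi_k]\bigr)$, equivalently $w\hat\uL\bigl(\tfrac1w\tilde{\mathfrak{R}}_k[\tilde\upphi_k]\bigr)$, not $(r^2+a^2)\uL\bigl(\tfrac1w\cdot\bigr)$. Your rationalization silently inherits this typo rather than detecting it; a proof constructed from your strategy would force the correction, and you should state the corrected form explicitly rather than assert that the printed rearrangement works ``modulo a zeroth-order correction.''
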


We can now essentially repeat the arguments in Section~\ref{sec:multiplier-identities} with the transformed system of PDEs verified by ${\textstyle\frac{r^2+a^2}{\Delta}\uL}\tilde\upphi_k$, noting that the transport equations now only hold up to first order errors and, moreover, that there are additional terms in the bulk term $I$ due to the commutator in Lemma~\ref{lemma:redshift-commutation}.

Let us start with the Killing $K$ energy current from Lemma~\ref{lemma:phys-space-Killing-multiplier-identity}. The commutator in Lemma~\ref{lemma:redshift-commutation} creates a lot of extra bulk errors. For the zeroth and first order terms, we apply Cauchy--Schwarz. Then, we use that 
\begin{align*}
&\Re\lp[\lp({\textstyle\frac{r^2+a^2}{\Delta}\uL}\rp)L\tilde\upphi_k K\lp({\textstyle\frac{r^2+a^2}{\Delta}\uL}\rp)\overline{\tilde\upphi_k}\rp] \\
&\quad = 2\lp|{\textstyle\frac{r^2+a^2}{\Delta}\uL}K\tilde\upphi_k\rp|^2-\frac{\Delta}{r^2+a^2}\Re\lp[\lp({\textstyle\frac{r^2+a^2}{\Delta}\uL}\rp)^2\tilde\upphi_k {\textstyle\frac{r^2+a^2}{\Delta}\uL}K\overline{\tilde\upphi_k}\rp]\\
&\quad\qquad-\frac{r^2-r_+^2}{r^2+a^2}\upomega_+\Re[{\textstyle\frac{r^2+a^2}{\Delta}\uL}Z\tilde\upphi_k {\textstyle\frac{r^2+a^2}{\Delta}\uL}K\overline{\tilde\upphi_k}]\,,
\end{align*}
which can be controlled, after spacetime integration in $\{r^*\leq -R^*\}$, by $\overline{\mathbb{I}}[K\tilde\upphi_k]+(R^*)^{-1}\overline{\mathbb{I}}[{\textstyle\frac{r^2+a^2}{\Delta}\uL}\tilde\upphi_k]$; and we apply  Cauchy--Schwarz making sure to introduce a small parameter multiplying any bulk term in $\lp({\textstyle\frac{r^2+a^2}{\Delta}\uL}\rp)^2\tilde\upphi_k$. Hence our analogue of \eqref{eq:phys-space-Killing-multiplier-consequence-4} will have  $\overline{\mathbb{I}}$ instead of $\mathbb{I}$:
\begin{align*}
&\mathbb{E}[{\textstyle\frac{r^2+a^2}{\Delta}\uL}\tilde\upphi_k\mathbbm{1}_{\{r^*\leq -2R^*\}}](\tau)+\mathbb{E}_{\mc H^+}[{\textstyle\frac{r^2+a^2}{\Delta}\uL}\tilde\upphi_k](0,\tau)\\
&\qquad +\int_0^\tau \int_{\Sigma_{\tau'}\cap\{r^*\leq -2R^*\}}\frac{(-\sign s )}{c_k}\frac{dc_k}{dr}|\underline{\mc{L}}\lp({\textstyle\frac{r^2+a^2}{\Delta}\uL}\rp)\tilde\upphi_k|^2 drd\sigma d\tau'\\
 &\quad\leq  B\sum_{j=0}^k \lp(\mathbb{E}[{\textstyle\frac{r^2+a^2}{\Delta}\uL}\tilde\upphi_j](0)+\overline{\mathbb{E}}[\tilde\upphi_j](0)+\mathbb{E}[{\textstyle\frac{r^2+a^2}{\Delta}\uL}\tilde\upphi_j\mathbbm{1}_{\{-2R^*\leq r^*\leq -R^*\}}](\tau)\rp)\\
 &\quad\qquad +B\varepsilon\sum_{j=0}^k \overline{\mathbb{I}}[{\textstyle\frac{r^2+a^2}{\Delta}\uL}\tilde\upphi_j\mathbbm{1}_{\{r^*\leq -R^*\}}](0,\tau)+B(|s|-k)\varepsilon^{-1}\int_{\mc R_{(0,\tau)}}\lp|{\textstyle\frac{r^2+a^2}{\Delta}\uL}\tilde\upphi_{k+1}\rp|^2\mathbbm{1}_{\{r^*\leq -R^*\}}]drd\sigma d\tau'\\
 &\qquad\quad+B\sum_{j=0}^k\overline{\mathbb{E}}[\tilde\upphi_j\mathbbm{1}_{\{r^*\leq -R^*\}}](\tau)+ B\varepsilon^{-1}\sum_{j=0}^k\sum_{X\in\{T,Z,\mathrm{id}\}} \overline{\mathbb{I}}[X\tilde\upphi_j\mathbbm{1}_{\{r^*\leq -R^*\}}](0,\tau)\,,\numberthis\label{eq:phys-space-Killing-multiplier-consequence-4-Y-version}
\end{align*}
for sufficiently large $R^*=\varepsilon^{-2}$. Next, we revisit the virial identity in Lemma~\ref{lemma:phys-space-y-multiplier-identity}. Noting in the applications of Cauchy--Schwarz that  $L-\uL=2K-\frac{r^2-r_+^2}{r^2+a^2}\upomega_+ Z-2\frac{\Delta}{r^2+a^2}\lp(\frac{r^2+a^2}{\Delta}\uL\rp)$,  we can similarly re-derive an analogue of \eqref{eq:phys-space-y-multiplier-consequence-hor} for sufficiently large $R^*$:
\begin{align*}
&\mathbb{I}[{\textstyle\frac{r^2+a^2}{\Delta}\uL}\tilde\upphi_k\mathbbm{1}_{\{r^*\leq -2R^*\}}](0,\tau) \\
&\quad\leq B\sum_{j=0}^{k}\lp(\mathbb{E}[{\textstyle\frac{r^2+a^2}{\Delta}\uL}\tilde\upphi_j\mathbbm{1}_{\{r^*\leq -R^*\}}](\tau)+\mathbb{E}_{\mc H^+}[{\textstyle\frac{r^2+a^2}{\Delta}\uL}\tilde\upphi_j](0,\tau)+\mathbb{I}[{\textstyle\frac{r^2+a^2}{\Delta}\uL}\tilde\upphi_j\mathbbm{1}_{\{-2R^*\leq r^*\leq -R^*\}}](\tau)\rp)\\
&\quad\qquad+
B(|s|-k)\int_{\mc R_{(0,\tau)}}\lp|{\textstyle\frac{r^2+a^2}{\Delta}\uL}\tilde\upphi_{k+1}\rp|^2\mathbbm{1}_{\{r^*\leq -R^*\}}drd\sigma d\tau'+B\sum_{j=0}^k\sum_{X\in\{T,Z,\mathrm{id}\}}\overline{\mathbb{I}}[X\tilde\upphi_k\mathbbm{1}_{\{r^*\leq -R^*\}}](0,\tau)\\
&\quad\qquad +B\sum_{j=0}^k\overline{\mathbb{E}}[{\textstyle\frac{r^2+a^2}{\Delta}\uL}\tilde\upphi_k](0)+B\sum_{j=0}^k\mathbb{E}[\tilde\upphi_k](0) +B\sum_{j=0}^{k}\overline{\mathbb{E}}[\tilde\upphi_k\mathbbm{1}_{\{r^*\leq -R^*\}}](\tau)\,.\numberthis\label{eq:phys-space-y-multiplier-consequence-hor-Y-version}
\end{align*}

Finally, we turn to the redshift estimates. Our starting point is Lemma~\ref{lemma:phys-space-redshift-multiplier-identity} and we note that, among the additional terms arising from the commutator, the term with $\lp(\frac{r^2+a^2}{\Delta}\uL\rp)^2$ has a good sign. We apply Cauchy--Schwarz to all others, obtaining an analogue of \eqref{eq:phys-space-redshift-multiplier-consequence-s<0} and \eqref{eq:phys-space-redshift-multiplier-consequence-s>0}:
\begin{align*}
&\overline{\mathbb{E}}[{\textstyle\frac{r^2+a^2}{\Delta}\uL}\tilde\upphi_k\mathbbm{1}_{\{r^*\leq -2R^*\}}](\tau)+\overline{\mathbb{E}}_{\mc H^+}[{\textstyle\frac{r^2+a^2}{\Delta}\uL}\tilde\upphi_k](0,\tau)+\overline{\mathbb{I}}_0[{\textstyle\frac{r^2+a^2}{\Delta}\uL}\tilde\upphi_k\mathbbm{1}_{\{r^*\leq -2R^*\}}](0,\tau) \\
&\quad\leq B(a_0)\sum_{j=0}^k\lp({\mathbb{E}}[{\textstyle\frac{r^2+a^2}{\Delta}\uL}\tilde\upphi_k\mathbbm{1}_{\{r^*\leq -R^*\}}](\tau)+{\mathbb{E}}_{\mc H^+}[{\textstyle\frac{r^2+a^2}{\Delta}\uL}\tilde\upphi_k](0,\tau)+\mathbb{I}[{\textstyle\frac{r^2+a^2}{\Delta}\uL}\tilde\upphi_j\mathbbm{1}_{\{r^*\leq -R^*\}}](0,\tau)\rp)\\
&\quad\qquad +B(a_0)\overline{\mathbb{E}}[{\textstyle\frac{r^2+a^2}{\Delta}\uL}\tilde\upphi_k](0)+B(|s|-k)\int_{\mc R_{(0,\tau)}}\lp|{\textstyle\frac{r^2+a^2}{\Delta}\uL}\tilde\upphi_{k+1}\rp|^2\mathbbm{1}_{\{r^*\leq -R^*\}}drd\sigma d\tau'\\
&\quad\qquad +B\sum_{X\in\{T,Z,\mathrm{id}\}}\overline{\mathbb{I}}[X\tilde\upphi_k\mathbbm{1}_{\{r^*\leq -R^*\}}](0,\tau)+B\overline{\mathbb{E}}[\tilde\upphi_k](0)\,.\numberthis\label{eq:phys-space-redshift-multiplier-consequence-Y-version}
\end{align*}

\subsection{Higher order estimates}
\label{sec:higher-order-estimates}

In this section, we derive higher order estimates for solutions to the transformed system of Definition~\ref{def:transformed-system}, conditional on having first order estimates for bulk and flux terms.

\begin{proposition}[Higher order bulk and flux estimates] \label{prop:higher-order-bulk-flux} Fix $s\in\mathbb{Z}$, $a_0\in[0,M)$, and $J\geq 0$. For $k\in\{0,\dots,|s|\}$, let $\swei{\tilde\upphi}{s}_k$ be solutions to the homogeneous transformed system of Definition~\ref{def:transformed-system}. Suppose $|a|\leq a_0$. Then for all $p\in[0,2)$ excluding $p=0$ if $J\neq 0$, and all $R^*>0$ sufficiently large, we have the flux estimates 
\begin{align*}
&\sum_{k=0}^{|s|}\overline{\mathbb{E}}^{J+|s|-k}_p[\tilde\upphi_k](\tau) +\mathbb{E}_{\mc I^+,p}^{J}[\Phi](0,\tau) + \overline{\mathbb{E}}^{J}_{\mc H^+}[\Phi](0,\tau)\\
&\qquad+\sum_{k=0}^{|s|-1}\lp(\mathbbm{1}_{\{s<0\}}\mathbb{E}_{\mc I^+,p}^{|s|-k}[\tilde\upphi_{k}](0,\tau)+ \mathbbm{1}_{\{s>0\}}\overline{\mathbb{E}}^{|s|-k}_{\mc H^+}[\tilde\upphi_k](0,\tau)\rp) \\
&\quad\leq B(a_0,J)\sum_{k=0}^{|s|}\overline{\mathbb{E}}^{J+|s|-k}_p[\tilde\upphi_k](0)+B(a_0,J)\sum_{J_1+J_2=0}^J{\mathbb{E}}[T^{J_1}Z^{J_2}\Phi\mathbbm{1}_{[-R^*,R^*]}](\tau)\\
&\quad\qquad +B(a_0,J)\sum_{k=0}^{|s|-1}\sum_{J_1+J_2=0}^{J+|s|-k-1}\sum_{X\in\{\mathrm{id},X^*\}}\lp(\mathbb{E}[T^{Z_1}Z^{J_2}X\tilde\upphi_k\mathbbm{1}_{[-R^*,R^*]}](\tau)+\mathbb{I}[T^{Z_1}Z^{J_2}X\tilde\upphi_k\mathbbm{1}_{[-R^*,R^*]}](\tau)\rp)
\\
&\quad\qquad+B(a_0,J)\mathbb{I}^{\mathrm{deg},J+|s|-k}[\Phi\mathbbm{1}_{[-R^*,R^*]}](0,\tau)\,, \numberthis\label{eq:higher-order-energy-bddness}
\end{align*}
and for any $\delta\in(0,1)$, we have the bulk estimates
\begin{align*}
&b(\delta)\sum_{k=0}^{|s|}\overline{\mathbb{I}}^{{\rm deg},J+|s|-k}_{-\delta,p}[\tilde\upphi_k](0,\tau)\\
&\quad\leq B(a_0,J)\sum_{k=0}^{|s|}\overline{\mathbb{E}}^{J+|s|-k}_p[\tilde\upphi_k](0)+B(a_0,J)\sum_{J_1+J_2=0}^{J}{\mathbb{I}}^{{\rm deg}}[T^{J_1}Z^{J_2}\Phi\mathbbm{1}_{[-R^*,R^*]}](0,\tau)\\
&\quad\qquad+B(a_0,J)\sum_{k=0}^{|s|-1}\sum_{J_1+J_2=0}^{J+|s|-k-1}\sum_{X\in\{\mathrm{id},X^*\}}{\mathbb{I}}^{\mathrm{deg}}[T^{J_1}Z^{J_2}X\tilde\upphi_k\mathbbm{1}_{[-R^*,R^*]}](0,\tau)\\
&\quad\qquad+B(a_0,J)\sum_{k=0}^{|s|}\mathbb{E}^{J+|s|-k}[\tilde\upphi_k\mathbbm{1}_{[-R^*,R^*]}](\tau)\,, \numberthis\label{eq:higher-order-ILED}
\end{align*}
and
\begin{align*}
\sum_{k=0}^{|s|}\overline{\mathbb{I}}^{{\rm deg},J+|s|-k}_{-\delta,p}[\tilde\upphi_k](0,\tau) &\leq B(a_0,J)\sum_{k=0}^{|s|}\lp(\overline{\mathbb{I}}^{J+|s|-k}_{-\delta,p}[\tilde\upphi_k](0,\tau)+\mathbb{E}^{J+|s|-k}_p[\tilde\upphi_k](\tau)\rp) \\
&\qquad + B(a_0,J)\sum_{X\in\{T,Z\}}\sum_{k=0}^{|s|}\lp(\overline{\mathbb{I}}^{J+|s|-k}_{0}[X\tilde\upphi_k](0,\tau)+\mathbb{E}^{J+|s|-k}_0[X\tilde\upphi_k](\tau)\rp)\\
&\qquad + B(a_0,J)\sum_{k=0}^{|s|}\Big(\mathbb{E}^{J+|s|-k}_p[\tilde\upphi_k](0)+\sum_{X\in\{T,Z\}}\mathbb{E}^{J+|s|-k}_0[X\tilde\upphi_k](0)\Big)\,.\numberthis\label{eq:ILED-for-decay}
\end{align*}
The above estimates also hold for $p=2$ if $s\leq 0$.
In the case of $s>0$, we can drop the overlines from both sides of the inequalities as long as $\overline{\mathbb{E}}_{\mc H^+}^J[\Phi]$ is dropped from the left hand side. The above estimates also hold for $p\in[1,2)$ if we replace $\tilde\upphi_k$ by $\dbtilde\upphi_k$. 
\end{proposition}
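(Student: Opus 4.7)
The plan is to argue by induction on the commutation order $J$, combining the three types of commutators at our disposal: the Killing fields $T$ and $Z$, which commute with the transformed operators $\tilde{\mathfrak{R}}_k$ for free modulo the coupling terms; the radial field $X^*$, for which we invoke Lemma~\ref{lemma:radial-commutation}; and the redshift commutator $\tfrac{r^2+a^2}{\Delta}\underline{L}$ of Lemma~\ref{lemma:redshift-commutation}, whose principal bulk term has a favorable sign and thus upgrades estimates near $\mathcal{H}^+$ to the barred norms. The count $J+|s|-k$ reflects the fact that the transport equation \eqref{eq:transformed-transport-tilde} trades one derivative of $\tilde\upphi_{k}$ for the value of $\tilde\upphi_{k+1}$ on the right-hand side: to close estimates for $\Phi=\tilde\upphi_{|s|}$ at order $J$ we must control $J+|s|-k$ derivatives at each lower level $k$.

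The base case $J=0$ is contained in Proposition~\ref{prop:bulk-flux-large-r}, together with the transport estimates \eqref{eq:phys-space-transport-identity-s<0}--\eqref{eq:phys-space-transport-identity-s>0} to redistribute information between levels, and the redshift/$r^p$ consequences \eqref{eq:phys-space-redshift-multiplier-consequence-s>0}--\eqref{eq:phys-space-rp-multiplier-consequence-s>0-no-peeling} to generate the barred and $r^p$-weighted norms. For the inductive step, suppose the estimates hold at all orders $<J$. We commute the system \eqref{eq:transformed-k-tilde} with each $X\in\{T,Z,r^{-1}\mathring{\slashed{\nabla}}^{[s]}\}$: since $T,Z$ are Killing and the angular commutator $[\mathring{\slashed{\triangle}}^{[s]}_T,r^{-1}\mathring{\slashed{\nabla}}^{[s]}]$ is a lower-order operator controllable via the Poincar\'e inequality \eqref{eq:Poincare}, the resulting inhomogeneities reduce to quantities of order $J-1$, to which we apply the inductive hypothesis together with the multiplier identities of Section~\ref{sec:multiplier-identities}. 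The coupling terms $\tilde{\mathfrak{J}}^{[s]}_{k,i}$, which mix levels $i<k$, produce controlled errors after applying Cauchy--Schwarz thanks to the fact that we have one more derivative available at level $i$ than at level $k$ by the $J+|s|-k$ bookkeeping.

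The delicate step is the $X^*$ commutator, which we invoke both to recover radial derivatives of the lower-level variables and to produce the $\overline{\mathbb{I}}^{\mathrm{deg}}$-norm gain in \eqref{eq:higher-order-ILED}. Away from the horizon, Lemma~\ref{lemma:radial-commutation} yields a commutator $[\tilde{\mathfrak{R}}_k,X^*]$ free of principal terms in tangential directions, so the analogue \eqref{eq:phys-space-Killing-multiplier-consequence-1-radial-commuted} applies; its error terms are either degenerate in the bulk (and hence absorbed) or localized to $|r^*|\leq R^*$, where they appear on the right-hand side of \eqref{eq:higher-order-energy-bddness}--\eqref{eq:higher-order-ILED} through the cutoffs $\mathbbm{1}_{[-R^*,R^*]}$. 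Near $\mathcal{H}^+$ we replace $X^*$ by $\tfrac{r^2+a^2}{\Delta}\underline{L}$, using \eqref{eq:phys-space-Killing-multiplier-consequence-4-Y-version}, \eqref{eq:phys-space-y-multiplier-consequence-hor-Y-version} and \eqref{eq:phys-space-redshift-multiplier-consequence-Y-version}; the dangerous term $(\tfrac{r^2+a^2}{\Delta}\underline{L})^2\tilde\upphi_k$ arising from Lemma~\ref{lemma:redshift-commutation} comes with a good sign and is absorbed into the bulk. Near $\mathcal{I}^+$, we repeat Lemma~\ref{lemma:phys-space-rp-multiplier-identity} for the commuted variables to recover the $r^p$ flux through $\mathcal{I}^+$ and the weighted bulk, with the range $p\in[0,2)$ enforced by the loss of weight at the zeroth-order term in the identity (and extending to $p=2$ when $s\leq 0$ as in \eqref{eq:phys-space-rp-multiplier-consequence-s<0}).

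The main obstacle is bookkeeping: at each level $k$, the $X^*$ commutator applied to the coupling $\tilde{\mathfrak{J}}^{[s]}_{k,i}$ generates radial derivatives of $\tilde\upphi_i$ for $i<k$, which must in turn be estimated by an already-completed inductive step at level $i$ with one more commutation; this is exactly accommodated by the $J+|s|-k$ count and by performing the induction simultaneously over $J$ and downwards in $k$, so that the higher-$k$ estimates have access to all lower-$k$ ones. Finally, \eqref{eq:ILED-for-decay} follows from \eqref{eq:higher-order-ILED} by removing the degeneracy at trapping: the trapping set localizes to a fixed frequency regime, and commuting once more with $T$ and $Z$ lets us trade the degenerate bulk norm for a non-degenerate one up to an extra $T$- and $Z$-derivative of the flux, which is exactly the content of \eqref{eq:ILED-for-decay}. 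The statements for $\dbtilde\upphi_k$ with $s>0$ follow identically once the weights $c_k$ are replaced by $\tilde c_k$ throughout, as in the first-order case.
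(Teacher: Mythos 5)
Your sketch follows the same broad strategy as the paper's proof: induction on the commutation order $J$ (the paper presents the $J=1$ step and observes that the upgrade from $J-1$ to $J$ is identical for all $J$), the same $J+|s|-k$ bookkeeping reflecting the derivative loss across the transport hierarchy, separate treatment of the large-$r$, near-$\mathcal{H}^+$, and bounded-$|r^*|$ regions, use of Lemma~\ref{lemma:radial-commutation} for $X^*$-commutation, and the redshift commutator of Lemma~\ref{lemma:redshift-commutation} near the horizon with the good-signed $\big(\tfrac{r^2+a^2}{\Delta}\underline{L}\big)^2$ term. Where you diverge genuinely from the paper is in how the angular derivatives in $\overline{\mathbb{E}}^J$ and $\overline{\mathbb{I}}^J$ are recovered. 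You propose commuting the equation with $r^{-1}\mathring{\slashed\nabla}^{[s]}$ and controlling the commutator $[\mathring{\slashed\triangle}^{[s]}_T,r^{-1}\mathring{\slashed\nabla}^{[s]}]$ via the Poincar\'e inequality. The paper instead never commutes with angular operators: it rewrites the principal part of \eqref{eq:transformed-k-tilde} as
\begin{align*}
\tfrac{1}{2}(L\underline{L}+\underline{L}L)+w\mathring{\slashed\triangle}^{[s]} = \big[w\mathring{\slashed\triangle}^{[s]}-L^2\big] + 2L\big(T+\tfrac{a}{r^2+a^2}Z\big)+\tfrac{4arw}{r^2+a^2}Z
\end{align*}
and uses ellipticity of the bracketed operator (and, in the bounded-$|r^*|$ region, of $-\partial_{r^*}^2+w\mathring{\slashed\triangle}^{[s]}$) to extract the second angular derivatives via elliptic estimates on spheres and on $\Sigma_\tau$, as in \eqref{eq:H2-estimate}--\eqref{eq:H1-estimate}. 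This distinction matters: for spin-weighted functions the spinorial gradient $\mathring{\slashed\nabla}^{[s]}$ is a two-component operator with nontrivial commutation relations with $\mathring{\slashed\triangle}^{[s]}_T$, and, unlike $T$ and $Z$, it is not a Killing symmetry of the equation; while your route can in principle be made to work, you would need to verify more carefully that the lower-order commutator terms close under the norms at hand. The elliptic-estimate route is the cleaner and standard device for spin-weighted systems and sidesteps this entirely; it is also what makes the bounded-$|r^*|$ step (the paper's Step~4) run through, where the relevant operator is elliptic because $T$ and $Z$ span a timelike direction there. Your explanation of \eqref{eq:ILED-for-decay} as trading the degeneration at trapping for extra $T$- and $Z$-commutations is consistent with the statement and the paper's framing, though the paper does not spell out a separate argument for it.
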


\begin{proof} The procedure to upgrade the estimates with $J-1$ to that with $J$ is the same for all $J\geq 1$, so we present only the case $J=1$. This follows several steps: we separate the spacetime region into the large $r$ region (step 1), region of $r$ close to $r_+$ (step 2), the boundaries $\mc H^+$ and $\mc I^+$ (step 3) and the bounded $|r^*|$ region (step 4).

\medskip
\noindent\textit{Step 1: the case of $r\to \infty$.} Using the identity $\uL=-L+2\lp(T+\frac{a}{r^2+a^2}Z\rp)$, we get
\begin{align*}
&\mathbb{E}_p[r^{-1}\mathring{\slashed\nabla}\tilde\upphi_k\mathbbm{1}_{[2R^*,\infty)}](\tau)\\
&\quad\leq 
B \int_{\Sigma_\tau\cap[2R^*,\infty)} 
\lp(r^{-4}|\mathring{\slashed\nabla}^2\tilde\upphi_k|^2+r^{p-2}|L\mathring{\slashed\nabla}\tilde\upphi_k|^2+r^{-4}|T\mathring{\slashed\nabla}\tilde\upphi_k|^2+a^2r^{-6}|Z\mathring{\slashed\nabla}\tilde\upphi_k|^2 +s^2r^{-4}|\mathring{\slashed\nabla}\tilde\upphi_k|^2\rp)drd\sigma \\
&\quad \leq B\sum_{X\in\{\mathrm{id},T\}}\mathbb{E}_p[X\tilde\upphi_k\mathbbm{1}_{[2R^*,\infty)}](\tau)+ B\int_{\Sigma_\tau\cap[2R^*,\infty)} \lp(r^{p-2}|L\mathring{\slashed\nabla}\tilde\upphi_k|^2+r^{-4}|\mathring{\slashed\nabla}^2\tilde\upphi_k|^2\rp)drd\sigma\,.
\end{align*}
Similarly, from $\p_{r^*}=-L+T+\frac{a}{r^2+a^2}Z$, we obtain
\begin{align*}
&\mathbb{E}_p[X^*\tilde\upphi_k\mathbbm{1}_{[2R^*,\infty)}](\tau)\\
&\quad\leq 
B \int_{\Sigma_\tau\cap[2R^*,\infty)} 
\lp(r^p|L^2\tilde\upphi_k|^2+r^{-2}|L\mathring{\slashed\nabla}\tilde\upphi_k|^2+r^{-2}|L\uL\tilde\upphi_k|^2 +s^2r^{-2}|L\tilde\upphi_k|^2\rp)drd\sigma \\
&\quad \qquad +B\mathbb{E}_p[T\tilde\upphi_k\mathbbm{1}_{[2R^*,\infty)}](\tau)+B(R^*)^{-1}\mathbb{E}_p[r^{-1}\mathring{\slashed\nabla}\tilde\upphi_k\mathbbm{1}_{[2R^*,\infty)}](\tau)\\
&\quad \leq B\sum_{X\in\{\mathrm{id},T\}}\mathbb{E}_p[X\tilde\upphi_k\mathbbm{1}_{[2R^*,\infty)}](\tau)\\
&\quad\qquad + B\int_{\Sigma_\tau\cap[2R^*,\infty)} \lp(r^{p}|L\uL\tilde\upphi_k|^2+r^p|L^2\tilde\upphi_k|^2+r^{p-2}|L\mathring{\slashed\nabla}\tilde\upphi_k|^2+r^{-4}|\mathring{\slashed\nabla}^2\tilde\upphi_k|^2\rp)drd\sigma\\
&\quad \leq B\mathbb{E}_p[T\tilde\upphi_k\mathbbm{1}_{[2R^*,\infty)}](\tau)+B\sum_{j=0}^{k}\mathbb{E}_p[\tilde\upphi_j\mathbbm{1}_{[2R^*,\infty)}](\tau) +B(|s|-k)\int_{\Sigma_\tau\cap[2R^*,\infty)} r^{p-4}|\tilde\upphi_{k+1}|^2 dr d\sigma\\
&\quad\qquad + B\int_{\Sigma_\tau\cap[2R^*,\infty)} \lp(r^p|L^2\tilde\upphi_k|^2+r^{p-2}|L\mathring{\slashed\nabla}\tilde\upphi_k|^2+r^{p-4}|\mathring{\slashed\nabla}^2\tilde\upphi_k|^2\rp)drd\sigma\,,
\end{align*}
using the PDE \eqref{eq:transformed-k-tilde} in the last inequality. Combining the previous estimates, we deduce
\begin{align*}
&\sum_{X\in\{\mathrm{id}, T,X^*,r^{-1}\mathring{\slashed\nabla}\}}\mathbb{E}_p[X\tilde\upphi_k\mathbbm{1}_{[2R^*,\infty)}](\tau)  \\
&\quad\leq B\mathbb{E}_p[T\tilde\upphi_k\mathbbm{1}_{[2R^*,\infty)}](\tau)+B\sum_{j=0}^{k}\mathbb{E}_p[\tilde\upphi_j\mathbbm{1}_{[2R^*,\infty)}](\tau) +B(|s|-k)\int_{\Sigma_\tau\cap[2R^*,\infty)} r^{p-4}|\tilde\upphi_{k+1}|^2 dr d\sigma\\
&\quad\qquad + B\int_{\Sigma_\tau\cap[2R^*,\infty)} \lp(r^p|L^2\tilde\upphi_k|^2+r^{p-2}|L\mathring{\slashed\nabla}\tilde\upphi_k|^2+r^{p-4}|\mathring{\slashed\nabla}^2\tilde\upphi_k|^2\rp)drd\sigma\,.\numberthis \label{eq:2nd-order-large-r-flux-intermediate}
\end{align*}
To control the last line in \eqref{eq:2nd-order-large-r-flux-intermediate}, we employ an elliptic estimate for \eqref{eq:transformed-k-tilde}:  we rewrite
\begin{align*}
\frac{1}{2}(L\uL+\uL L)+w\mathring{\slashed\triangle} &= \lp[w\mathring{\slashed\triangle}-L^2\rp]+2L\lp(T+\frac{a}{r^2+a^2} Z\rp) +\frac{4arw}{r^2+a^2} Z  \,,
\end{align*}
and use the ellipticity of the operator in square brackets to obtain
\begin{align*}
&\int_{\Sigma_\tau\cap[2R^*,\infty)}\lp(r^p|L^2\tilde\upphi_k|^2+r^{p-2}|L\mathring{\slashed\nabla}\tilde\upphi_k|^2+r^{p-4}|\mathring{\slashed\nabla}^2\tilde\upphi_k|^2\rp)drd\sigma\\
&\quad\leq  B\mathbb{E}_p[T\tilde\upphi_k\mathbbm{1}_{[R^*,\infty)}](\tau)+B\sum_{j=0}^{k}\mathbb{E}_p[\tilde\upphi_j\mathbbm{1}_{[R^*,\infty)}](\tau) +B(|s|-k)\int_{\Sigma_\tau\cap[R^*,\infty)} r^{p-4}|\tilde\upphi_{k+1}|^2 dr d\sigma\,.
\end{align*}
We conclude 
\begin{align*}
&\sum_{X\in\{\mathrm{id}, T,X^*,r^{-1}\mathring{\slashed\nabla}\}}\mathbb{E}_p[X\tilde\upphi_k\mathbbm{1}_{[2R^*,\infty)}](\tau)  \\
&\quad\leq B\mathbb{E}_p[T\tilde\upphi_k\mathbbm{1}_{[2R^*,\infty)}](\tau)+B\sum_{j=0}^{k}\mathbb{E}_p[\tilde\upphi_j\mathbbm{1}_{[2R^*,\infty)}](\tau) +B(|s|-k)\int_{\Sigma_\tau\cap[2R^*,\infty)} r^{p-4}|\tilde\upphi_{k+1}|^2 dr d\sigma\,.\numberthis \label{eq:2nd-order-large-r-flux}
\end{align*}

In an entirely similar fashion, one can show that
\begin{align*}
&\sum_{X\in\{\mathrm{id}, T,X^*,r^{-1}\mathring{\slashed\nabla}\}}\mathbb{I}_{-\delta,p}[X\tilde\upphi_k\mathbbm{1}_{[2R^*,\infty)}](0,\tau)  \\
&\quad\leq B\mathbb{I}_{-\delta,p}[T\tilde\upphi_k\mathbbm{1}_{[2R^*,\infty)}](0,\tau)+B\sum_{j=0}^{k}\mathbb{I}_{-\delta,p}[\tilde\upphi_j\mathbbm{1}_{[2R^*,\infty)}](0,\tau) \numberthis \label{eq:2nd-order-large-r-bulk-intermediate} \\
&\quad\qquad+B(|s|-k)\int_0^\tau\int_{\Sigma_{\tau'}\cap[2R^*,\infty)} (pr^{p-2}+r^{-2-\delta})r^{-3}|\tilde\upphi_{k+1}|^2 dr d\sigma d\tau'\\
&\quad\qquad + B\int_0^\tau\int_{\Sigma_{\tau'}\cap[2R^*,\infty)} \lp((pr^{p-1}+r^{-1-\delta})|L^2\tilde\upphi_k|^2+r^{p-3}|L\mathring{\slashed\nabla}\tilde\upphi_k|^2+r^{p-5}(2-p+r^{-1})|\mathring{\slashed\nabla}^2\tilde\upphi_k|^2\rp)drd\sigma\,.
\end{align*}
Then, by the same elliptic estimate as before, we obtain
\begin{align*}
&\int_{\mc R_{(0,\tau)}\cap[2R^*,\infty)}(pr^{p-1}+r^{-1-\delta})\lp(|L^2\tilde\upphi_k|^2+r^{-2}|L\mathring{\slashed\nabla}\tilde\upphi_k|^2+r^{-4}|\mathring{\slashed\nabla}^2\tilde\upphi_k|^2\rp)drd\sigma d\tau'\\
&\quad\leq  B\mathbb{I}_{-\delta,p}[T\tilde\upphi_k\mathbbm{1}_{[R^*,\infty)}](0,\tau)+B\sum_{j=0}^{k}\mathbb{I}_{-\delta,p}[\tilde\upphi_j\mathbbm{1}_{[R^*,\infty)}](0,\tau)\\
&\quad\qquad+B(|s|-k)\int_{\mc R_{(0,\tau)}\cap[R^*,\infty)} (pr^{p-1}+r^{-1-\delta})r^{-4}|\tilde\upphi_{k+1}|^2 dr d\sigma\,.
\end{align*}
Thus, for $p\neq 0$,
\begin{align*}
&\sum_{X\in\{\mathrm{id}, T,X^*,r^{-1}\mathring{\slashed\nabla}\}}\mathbb{I}_{-\delta,p}[X\tilde\upphi_k\mathbbm{1}_{[2R^*,\infty)}](0,\tau)  \\
&\quad\leq B\mathbb{I}_{-\delta,p}[T\tilde\upphi_k\mathbbm{1}_{[2R^*,\infty)}](0,\tau)+B\sum_{j=0}^{k}\mathbb{I}_{-\delta,p}[\tilde\upphi_j\mathbbm{1}_{[2R^*,\infty)}](0,\tau) \numberthis \label{eq:2nd-order-large-r-bulk-p-not-zero} \\
&\quad\qquad+B(|s|-k)\int_0^\tau\int_{\Sigma_{\tau'}\cap[2R^*,\infty)} r^{p-5}|\tilde\upphi_{k+1}|^2 dr d\sigma d\tau'\,.
\end{align*}
In the case $p=0$, our elliptic estimates are not enough to control the last two terms in \eqref{eq:2nd-order-large-r-bulk-intermediate}. However, in the $k<|s|$ we can control these terms in a different manner: if $s<0$ we can use the transport equation \eqref{eq:transformed-k-tilde}, and if $s>0$ we can use the fact that $\uL=-L+2\lp(T+\frac{a}{r^2+a^2}Z\rp)$ as well as the PDE  \eqref{eq:transformed-constraint} to deduce that 
\begin{align*}
&\int_{\mc R_{(0,\tau)}\cap[2R^*,\infty)}\lp(r^{-3}|L\mathring{\slashed\nabla}\tilde\upphi_k|^2+r^{-5}|\mathring{\slashed\nabla}^2\tilde\upphi_k|^2\rp)drd\sigma d\tau' \\
&\quad\leq B\mathbb{I}[T\tilde\upphi_{k}\mathbbm{1}_{[2R^*,\infty)}](0,\tau)+B\sum_{j=0}^{k}\mathbb{I}[\tilde\upphi_{j}\mathbbm{1}_{[2R^*,\infty)}](0,\tau)+ B\mathbb{I}[\tilde\upphi_{k+1}\mathbbm{1}_{[2R^*,\infty)}](0,\tau)\,.
\end{align*}
Hence, we obtain for $k<|s|$, 
\begin{align*}
&\sum_{X\in\{\mathrm{id}, T,X^*,r^{-1}\mathring{\slashed\nabla}\}}\mathbb{I}_{-\delta,p}[X\tilde\upphi_k\mathbbm{1}_{[2R^*,\infty)}](0,\tau)  \\
&\quad\leq B\mathbb{I}_{-\delta,p}[T\tilde\upphi_k\mathbbm{1}_{[2R^*,\infty)}](0,\tau)+B\sum_{j=0}^{k}\mathbb{I}_{-\delta,p}[\tilde\upphi_j\mathbbm{1}_{[2R^*,\infty)}](0,\tau)+B\mathbb{I}[\tilde\upphi_{k+1}\mathbbm{1}_{[2R^*,\infty)}](0,\tau)\,. \numberthis \label{eq:2nd-order-large-r-bulk-lower-level} 
\end{align*}

To conclude the present step, we would like to appeal to Proposition~\ref{prop:bulk-flux-large-r} and the fact that $T$ is Killing to control the first term on the right hand side in the estimates \eqref{eq:2nd-order-large-r-flux}, \eqref{eq:2nd-order-large-r-bulk-p-not-zero}, and \eqref{eq:2nd-order-large-r-bulk-lower-level}. However, the $r$-weights on $T\tilde\upphi_{k+1}$ that would come out of the $T$-commuted Proposition~\ref{prop:bulk-flux-large-r} are not ideal; to improve them, we revisit the proof, in particular the proof of estimate \eqref{eq:phys-space-rp-multiplier-consequence-s>0-no-peeling}. Note that the error term involving the $k+1$ variable is a bulk term
\begin{align*}
&c(r)T\tilde\upphi_{k+1}\overline{LT\tilde\upphi_k}\,,
\end{align*}
where $c(r)=r^p\xi\frac{wc_{k+1}}{c_k}\lp(\frac{c_k'}{c_k}-(|s|-k)\frac{w'}{w}\rp)$. If $s<0$, we can use the transport equation \eqref{eq:transformed-transport-tilde};  if $s>0$, we may apply Cauchy--Schwarz to close an estimate for $p\in[0,1]$. However, if $s>0$ then we can instead rely on the identities
\begin{align*}
&c(r)T\tilde\upphi_{k+1}\overline{LT\tilde\upphi_k}\\
&\quad = \frac12 c(r)T\tilde\upphi_{k+1}\overline{L\lp(\frac{c_k'}{c_k}\tilde\upphi_k+\frac{wc_{k+1}}{c_k}\tilde\upphi_{k+1}+\frac{2a}{r^2+a^2}Z\tilde\upphi_k\rp)}+L\lp(c(r)T\tilde\upphi_{k+1}\overline{L\tilde\upphi_k}\rp)-c'(r)T\tilde\upphi_{k+1}\overline{L\tilde\upphi_k}\\
&\qquad \quad-T\lp(c(r) L\tilde\upphi_{k+1}\overline{L\tilde\upphi_k}\rp)+c(r)L\tilde\upphi_{k+1}\overline{LT\tilde\upphi_k}\,,
\end{align*}
so, we obtain for $p\in[0,2)$
\begin{align*}
&\mathbb{E}_p[T\tilde\upphi_k\mathbbm{1}_{[2R^*,\infty)}](\tau)+b(\delta)\mathbb{I}_{-\delta,p}[T\tilde\upphi_k\mathbbm{1}_{[2R^*,\infty)}](0,\tau)\\
&\quad\leq B\sum_{j=0}^{k}\sum_{X\in\{\mathrm{id},T\}}\lp(\mathbb{E}_p[X\tilde\upphi_j](0)+\mathbb{E}[X\tilde\upphi_j\mathbbm{1}_{[R^*,2R^*]}](\tau)+\mathbb{I}[X\tilde\upphi_j\mathbbm{1}_{[R^*,2R^*]}](0,\tau)\rp) \\
&\quad\qquad  +B\int_0^\tau\int_{\Sigma_{\tau'}\cap[R^*,\infty)} \lp(r^{p-4}(|\tilde\upphi_{k+1}|^2+|T\tilde\upphi_{k+1}|^2)\mathbbm{1}_{\{s\leq 0\}}+r^{p-3}(|\tilde\upphi_{k+1}|^2+|L\tilde\upphi_{k+1}|^2)\mathbbm{1}_{\{s> 0\}}\rp) dr d\sigma\,,
\end{align*}
where we may include $p=2$ if $s>0$. For $\dbtilde\upphi_{k}$, a similar statement to the case $s<0$ holds but where the $p-4$ exponents above are replaced by $p-3$; thus, the last term is controlled by $\mathbb{I}_{-\delta,p}[\dbtilde\upphi_{k+1}](0,\tau)$ for $p\in(1,2)$ if we choose $\delta$ appropriately depending on $p$.

Combining all the previous estimates, we see that for any $\delta\in(0,1]$, we have 
\begin{align*}
&\sum_{X\in\{\mathrm{id}, T,X^*,r^{-1}\mathring{\slashed\nabla}\}}\lp(\mathbb{E}_p[X\tilde\upphi_k\mathbbm{1}_{[2R^*,\infty)}](\tau)+b(\delta)\mathbb{I}_{-\delta,p}[X\tilde\upphi_k\mathbbm{1}_{[2R^*,\infty)}](0,\tau)\rp)\\
&\quad\leq B\sum_{j=0}^{k}\Big[\sum_{X\in\{\mathrm{id},T,X^*,r^{-1}\mathring{\slashed\nabla}\}}\mathbb{E}_p[X\tilde\upphi_j](0)+\sum_{X\in\{\mathrm{id},T\}}\lp(\mathbb{E}[X\tilde\upphi_j\mathbbm{1}_{[R^*,2R^*]}](\tau)+\mathbb{I}[X\tilde\upphi_j\mathbbm{1}_{[R^*,2R^*]}](0,\tau)\rp)\Big]\\
&\quad\qquad  +B(|s|-k)\lp(\mathbb{E}[\tilde\upphi_{k+1}\mathbbm{1}_{[R^*,\infty)}](\tau)+\mathbb{I}[\tilde\upphi_{k+1}\mathbbm{1}_{[R^*,\infty)}](0,\tau)\rp)\,,\numberthis \label{eq:2nd-order-large-r-bulk-flux-conclusion-1}
\end{align*}
with the following constraints on $p$: if $k=|s|$ then $p\in(0,2]$; otherwise if $k<|s|$ and $s<0$ then $p\in[0,2]$, and if $k<|s|$ with  $s>0$ then $p\in[0,2)$. Note that we can apply Proposition~\ref{prop:bulk-flux-large-r} to the last line in \eqref{eq:2nd-order-large-r-bulk-flux-conclusion-1}.

\medskip
\noindent\textit{Step 2: the case of $r\to r_+$.} We turn now to the region close to $\mc H^+$ and we assume that $|a|\leq a_0<M$. We may again use the identities $L=-\uL+2K-\frac{\upomega_+(r^2-r_+^2)}{r^2+a^2}Z$ and $X^*=\frac{r^2+a^2}{\Delta}\uL+T$ to obtain 
\begin{align*}
&\sum_{X\in\{\mathrm{id},T,X^*,r^{-1}\mathring{\slashed\nabla}^{[s]}\}}\lp(\overline{\mathbb{E}}[X\tilde\upphi_k\mathbbm{1}_{(-\infty,-2R^*]}](\tau)+\overline{\mathbb{E}}_{\mc H^+}[X\tilde\upphi_k](0,\tau)+\overline{\mathbb{I}}[X\tilde\upphi_k\mathbbm{1}_{(-\infty,-2R^*]}](0,\tau)\rp)\\
&\quad \leq B(a_0)\sum_{X\in\{\mathrm{id},T,Z,{\frac{r^2+a^2}{\Delta}}\uL\}}\lp(\overline{\mathbb{E}}[X\tilde\upphi_k\mathbbm{1}_{(-\infty,-2R^*]}](\tau)+\overline{\mathbb{E}}_{\mc H^+}[X\tilde\upphi_k](0,\tau)+\overline{\mathbb{I}}[X\tilde\upphi_k\mathbbm{1}_{(-\infty,-2R^*]}](0,\tau)\rp) \\
&\quad\qquad +B(a_0)\int_{\Sigma_{\tau}\cap(-\infty,-2R^*]}|\mathring{\slashed\nabla}^2\tilde\upphi_k|^2 drd\sigma +B(a_0)\int_{\mc H^+_{(0,\tau)}}|\mathring{\slashed\nabla}^2\tilde\upphi_k|^2 \sigma d\tau'\\
&\quad\qquad +B(a_0)\int_0^\tau \int_{\Sigma_{\tau'}\cap(-\infty,-2R^*]}|\mathring{\slashed\nabla}^2\tilde\upphi_k|^2 drd\sigma d\tau'\,. \numberthis\label{eq:2nd-order-hor-intermediate-0} 
\end{align*}

To control the angular derivatives in \eqref{eq:2nd-order-hor-intermediate-0}, we use the standard elliptic estimate on $\mathbb{S}^2$ by interpreting the PDE  \eqref{eq:transformed-k-tilde} as an identity for $\mathring{\slashed \triangle}\tilde\upphi_k$. We rewrite
\begin{align*}
\frac{1}{2w}(L\uL+\uL L)&=L\lp(\frac{r^2+a^2}{\Delta}\uL\rp)+(r^2+a^2)\lp[\frac{r^2+a^2}{\Delta}\uL, L\rp] +\frac{1}{2w}[L,\uL]\\
&=L\lp(\frac{r^2+a^2}{\Delta}\uL\rp)-\frac{2M(r^2-a^2)}{r^2+a^2}\lp(\frac{r^2+a^2}{\Delta}\uL\rp)+\frac{2ar}{r^2+a^2}Z\,,
\end{align*}
obtaining the estimate
\begin{align*}
\int_{\Sigma_\tau \cap (-\infty,-2R^*]} |\mathring{\slashed\nabla}^2\tilde\upphi_k|^2 drd\sigma  
&\leq B(a_0)\sum_{X\in\{\mathrm{id},K,\frac{r}{r-r_+}\uL\}}\overline{\mathbb{E}}[X\tilde\upphi_k\mathbbm{1}_{\{r^*\leq -2R^*\}}](\tau)\\
&\qquad +B(a_0)(|s|-k)\int_{\Sigma_\tau\cap (-\infty,-2R^*]}|\tilde\upphi_{k+1}|^2 drd\sigma \,.
\end{align*}
Thus, \eqref{eq:2nd-order-hor-intermediate-0} becomes
\begin{align*}
&\sum_{X\in\{\mathrm{id},T,X^*,r^{-1}\mathring{\slashed\nabla}^{[s]}\}}\lp(\overline{\mathbb{E}}[X\tilde\upphi_k\mathbbm{1}_{(-\infty,-2R^*]}](\tau)+\overline{\mathbb{E}}_{\mc H^+}[X\tilde\upphi_k](0,\tau)+\overline{\mathbb{I}}[X\tilde\upphi_k\mathbbm{1}_{(-\infty,-2R^*]}](0,\tau)\rp)\\
&\quad \leq B(a_0)\sum_{X\in\{\mathrm{id},T,Z,{\frac{r^2+a^2}{\Delta}}\uL\}}\lp(\overline{\mathbb{E}}[X\tilde\upphi_k\mathbbm{1}_{(-\infty,-2R^*]}](\tau)+\overline{\mathbb{E}}_{\mc H^+}[X\tilde\upphi_k](0,\tau)+\overline{\mathbb{I}}[X\tilde\upphi_k\mathbbm{1}_{(-\infty,-2R^*]}](0,\tau)\rp) \\
&\quad\qquad+B(a_0)(|s|-k)\lp(\int_{\Sigma_\tau\cap (-\infty,-2R^*]}|\tilde\upphi_{k+1}|^2 drd\sigma+\int_0^\tau\int_{\Sigma_{\tau'}\cap (-\infty,-R^*]}|\tilde\upphi_{k+1}|^2 drd\sigma d\tau'\rp)\,. \numberthis\label{eq:2nd-order-hor-intermediate-1}
\end{align*}

We turn to the terms with $X=\frac{r^2+a^2}{\Delta}\uL$ on the right hand side of \eqref{eq:2nd-order-hor-intermediate-1}. Putting together estimates \eqref{eq:phys-space-Killing-multiplier-consequence-4-Y-version}--\eqref{eq:phys-space-redshift-multiplier-consequence-Y-version} as in the proof of Proposition~\ref{prop:bulk-flux-large-r}, we obtain its analogue for ${\textstyle\frac{r^2+a^2}{\Delta}\uL}{\tilde\upphi}_k$ in the region $r\to r_+$:
\begin{align*} 
&\overline{\mathbb{I}}[{\textstyle\frac{r^2+a^2}{\Delta}\uL}{\tilde\upphi}_k\mathbbm{1}_{\{r^*\leq -2R^*\}}](0,\tau) +\overline{\mathbb{E}}[{\textstyle\frac{r^2+a^2}{\Delta}\uL}{\tilde\upphi}_{k}\mathbbm{1}_{\{r^*\leq -2R^*\}}](\tau)+\overline{\mathbb{E}}_{\mc H^+}[{\textstyle\frac{r^2+a^2}{\Delta}\uL}{\tilde\upphi}_{k}](0,\tau)\\
&\quad\leq B(a_0)\sum_{j=0}^k\lp(\overline{\mathbb{E}}[\tilde\upphi_j\mathbbm{1}_{\{r^*\leq -R^*\}}](\tau)+\sum_{X\in\{\mathrm{id},T,Z\}}\overline{\mathbb{I}}[X\tilde\upphi_j\mathbbm{1}_{\{r^*\leq -R^*\}}](\tau)\rp)\\
&\quad\qquad + B(a_0)\sum_{j=0}^k\lp({\mathbb{I}}[{\textstyle\frac{r^2+a^2}{\Delta}\uL}{\tilde\upphi}_k\mathbbm{1}_{\{-2R^*\leq r^*\leq -R^*\}}](0,\tau) +{\mathbb{E}}[{\textstyle\frac{r^2+a^2}{\Delta}\uL}{\tilde\upphi}_{k}\mathbbm{1}_{\{-2R^*\leq r^*\leq -R^*\}}](\tau)\rp)\\
&\quad\qquad +B(a_0)\sum_{X\in\{\mathrm{id},\frac{r^2+a^2}{\Delta}\uL\}}\sum_{j=0}^{k}\overline{\mathbb{E}}[X{\tilde\upphi}_j](0) + B(a_0)(|s|-k)\int_{\mc R_{(0,\tau)}\cap\{r^*\leq -R^*\}} \lp|{\textstyle\frac{r^2+a^2}{\Delta}\uL}\tilde\upphi_{k+1}\rp|^2 dr d\sigma d\tau'\,.
\end{align*}

Finally, we note $\mathrm{span}\lp\{T,\frac{r^2+a^2}{\Delta}\uL\rp\}=\mathrm{span}\{T,X^*\}$ and, moreover, that for any terms on the right hand side we can replace $Z$ by $\mathring{\slashed\nabla}$ by Lemma~\ref{lemma:spherical-angular-ode}. Hence, after we appeal to Proposition~\ref{prop:bulk-flux-large-r} and the fact that $T$  and $Z$ are Killing, \eqref{eq:2nd-order-hor-intermediate-1} becomes
\begin{align*}
&\sum_{X\in\{\mathrm{id},T,X^*,r^{-1}\mathring{\slashed\nabla}^{[s]}\}}\lp(\overline{\mathbb{E}}[X\tilde\upphi_k\mathbbm{1}_{(-\infty,-2R^*]}](\tau)+\overline{\mathbb{E}}_{\mc H^+}[X\tilde\upphi_k](0,\tau)+\overline{\mathbb{I}}[X\tilde\upphi_k\mathbbm{1}_{(-\infty,-2R^*]}](0,\tau)\rp)\\
&\quad \leq  B(a_0)(|s|-k)\sum_{X\in\{\mathrm{id},T,Z\}}\lp(\int_{\Sigma_\tau\cap (-\infty,-R^*]}|X\tilde\upphi_{k+1}|^2 drd\sigma+\int_0^\tau\int_{\Sigma_{\tau'}\cap (-\infty,-R^*]}|X\tilde\upphi_{k+1}|^2 drd\sigma d\tau'\rp)\\
&\quad\qquad+ B(a_0)\sum_{j=0}^k\sum_{X\in\{\mathrm{id}, T, X^*,r^{-1}\mathring{\slashed\nabla}^{[s]}\}}\lp({\mathbb{I}}[X{\tilde\upphi}_k\mathbbm{1}_{[-2R^*, -R^*]}](0,\tau) +{\mathbb{E}}[X{\tilde\upphi}_{k}\mathbbm{1}_{[-2R^*, -R^*]}](\tau)\rp)\\
&\quad\qquad+B(a_0)\sum_{X\in\{\mathrm{id},T,X^*,r^{-1}\mathring{\slashed\nabla}^{[s]}\}}\sum_{j=0}^k\overline{\mathbb{E}}[X\tilde\upphi_j](0)\\
&\quad \leq   B(a_0)\sum_{j=0}^k\sum_{X\in\{\mathrm{id}, T, X^*,r^{-1}\mathring{\slashed\nabla}^{[s]}\}}\lp(\overline{\mathbb{E}}[X\tilde\upphi_j](0)+{\mathbb{I}}[X{\tilde\upphi}_k\mathbbm{1}_{[-2R^*, -R^*]}](0,\tau) +{\mathbb{E}}[X{\tilde\upphi}_{k}\mathbbm{1}_{[-2R^*, -R^*]}](\tau)\rp)\\
&\qquad\quad + B(a_0)(|s|-k)\lp({\mathbb{I}}[{\tilde\upphi}_{k+1}\mathbbm{1}_{(-\infty,-R^*]}](0,\tau) +{\mathbb{E}}[{\tilde\upphi}_{k+1}\mathbbm{1}_{(-\infty,-R^*]}](\tau)\rp)
\,. \numberthis\label{eq:2nd-order-hor-conclusion}
\end{align*}
Note that we can invoke Proposition~\ref{prop:bulk-flux-large-r} to control the last line of \eqref{eq:2nd-order-hor-conclusion}. We also note that if $s>0$ and $k<|s|$, 
\begin{align*}
\sum_{X\in\{\mathrm{id}, T, X^*,r^{-1}\mathring{\slashed\nabla}^{[s]}\}}\overline{\mathbb{E}}[X\tilde\upphi_k](0)&\leq  \sum_{X\in\{\mathrm{id}, T, X^*,r^{-1}\mathring{\slashed\nabla}^{[s]}\}}{\mathbb{E}}[X\tilde\upphi_k](0)+ \int_{\Sigma_\tau}\lp|\frac{\uL}{w}(X\tilde\upphi_j)\rp|^2 drd\sigma \\
&\leq B{\mathbb{E}}[X\tilde\upphi_{k+1}](0)+B\sum_{X\in\{\mathrm{id}, T, X^*,r^{-1}\mathring{\slashed\nabla}^{[s]}\}}{\mathbb{E}}[X\tilde\upphi_k](0)\,,
\end{align*}
and hence we do not need the norms with an overbar with optimal weights near $r_+$. More generally, if $s>0$ and $k<|s|$, we need not appeal to the commutator in Lemma~\ref{lemma:redshift-commutation} to control the right hand side of \eqref{eq:2nd-order-hor-intermediate-0}: we may instead rely on the identity \eqref{eq:transformed-constraint} and the transport equation \eqref{eq:transformed-transport-tilde} to show that we do not need any overbars in \eqref{eq:higher-order-energy-bddness} and \eqref{eq:higher-order-ILED} as long as we drop the term $\mathbb{E}_{\mc H^+}[\Phi]$ from the left hand side.

\medskip
\noindent \textit{Step 3: fluxes through $\mc H^+$ and $\mc I^+$.} Let us first consider the case $s<0$ and $k<|s|$. Since $\uL\tilde\upphi_k=-L\tilde\upphi_k+2\lp(T+\frac{a^2}{r^2+a^2}Z\rp)\tilde\upphi_k$, the transport equation \eqref{eq:transformed-transport-tilde} implies that we can replace $\uL$ by $T$ derivatives along $\mc I^+$:
\begin{align*}
&\sum_{X\in\{\mathrm{id},\mathring{\slashed\nabla},\uL\}}\mathbb{E}_{\mc I^+,p}[X\tilde\upphi_k](0,\tau)\\
&\quad\leq B\mathbb{E}_{\mc I^+,p}[\tilde\upphi_k](0,\tau)+B\lim_{r\to \infty}\int_{\mc I^+_{(0,\tau)}} \lp(r^{p-2}|\mathring{\slashed\nabla}\uL\tilde\upphi_k|^2+r^{p-}|\mathring{\slashed\triangle}\tilde\upphi_{k}|^2+|\uL\uL\tilde\upphi_k|^2\rp)d\sigma d\tau'\\
&\quad\leq B\sum_{X\in\{\mathrm{id},T\}}\mathbb{E}_{\mc I^+,p}[X\tilde\upphi_k](0,\tau)+B\lim_{r\to \infty}\int_{\mc I^+_{(0,\tau)}} r^{p-2}|\mathring{\slashed\triangle}\tilde\upphi_{k}|^2 d\sigma d\tau'\,.
\end{align*}
Now, from \eqref{eq:transformed-constraint}, we deduce
\begin{align*}
\int_{\mc I^+_{(0,\tau)}}|\mathring{\slashed\triangle}\tilde\upphi_{k}|^2 d\sigma d\tau \leq B\int_{\mc I^+_{(0,\tau)}}|\uL\tilde\upphi_{k+1}|^2 d\sigma d\tau'+B\sum_{X\in\{\mathrm{id},T\}}\mathbb{E}_{\mc I^+,2}[X\tilde\upphi_k](0,\tau)+\sum_{j=0}^k\mathbb{E}_{\mc I^+,2}[\tilde\upphi_j](0,\tau)\,,
\end{align*}
thus obtaining 
\begin{align*}
\sum_{j=0}^k\sum_{X\in\{\mathring{\slashed\nabla},\uL\}}\mathbb{E}_{\mc I^+,p}[X\tilde\upphi_j](0,\tau)&
\leq B\sum_{j=0}^k\sum_{X\in\{\mathrm{id},T\}}\mathbb{E}_{\mc I^+,p}[X\tilde\upphi_k](0,\tau)+B\mathbb{E}_{\mc I^+,0}[\tilde\upphi_{k+1}](0,\tau)\,.
\end{align*}

An analogous reasoning at $\mc H^+$ with $s>0$ and $k<|s|$ leads to 
\begin{align*}
\sum_{j=0}^k\sum_{X\in\{\mathring{\slashed\nabla},L\}}\overline{\mathbb{E}}_{\mc H^+}[X\tilde\upphi_j](0,\tau)\leq B\sum_{j=0}^k\sum_{X\in\{\mathrm{id},T,Z\}}\overline{\mathbb{E}}_{\mc H^+}[X\tilde\upphi_j](0,\tau)+B\overline{\mathbb{E}}_{\mc H^+}[\tilde\upphi_{k+1}](0,\tau)\,.
\end{align*}

\medskip
\noindent \textit{Step 4: bounded $|r^*|$}. It remains to understand how to obtain higher order estimates away from $r^*=\pm \infty$. Using the wave equation \eqref{eq:transformed-k-tilde},
\begin{align*}
-\p_{r^*}^2+w\mathring{\slashed{\triangle}}&=  \lp(T+\frac{a}{r^2+a^2}Z\rp)^2 +w(a^2\sin^2\theta TT+2aTZ-2ias\cos\theta T)-\frac{4arw}{r^2+a^2}\sign{s}\lp(|s|-k\rp)Z\\
&\qquad -\sign s \frac{c_k'}{c_k}\underline{\mc L}\tilde\upphi_k
+ \frac{a^2\Delta^2}{(r^2+a^2)^4}\lp[1-2|s|-2k(2|s|-k-1)\rp]+w[|s|+k(2|s|-k-1)]\\
&\qquad +\frac{2Mr(r^2-a^2)\Delta}{(r^2+a^2)^4}\lp[1-3|s|+2s^2-3k(2|s|-k-1)\rp]-\lp(\frac{c_k'}{c_k}\rp)'\tilde\upphi_k\\
&\qquad +\sign s\lp(\frac{c_k'}{c_k}-(|s|-k)\frac{w'}{w}\rp)\frac{wc_{k+1}}{c_k}\tilde\upphi_{k+1}\,,
\end{align*}
and the fact that $T$ and $Z$ span a timelike direction for bounded $|r^*|$, we can run elliptic estimates: an $H^2$ estimate  over the hypersurfaces $\Sigma_\tau$,
\begin{align*}
&\int_{\Sigma_\tau\cap \{|r^*|\leq R^*\}} \lp(|\tilde\upphi_k''|^2+|\mathring{\slashed\nabla}\tilde\upphi_k'|^2+|\mathring{\slashed\nabla}^2\tilde\upphi_k|^2\rp)drd\sigma d\tau' \\
&\quad\leq B(R^*)\int_{\Sigma_\tau\cap \{|r^*|\leq 2R^*\}} \Big[\Big|\Big(T+\frac{a}{r^2+a^2}Z\Big)^2\tilde\upphi_k\Big|^2+(|s|-k)|\tilde\upphi_{k+1}|^2\Big]drd\sigma  \\
&\qquad + B(R^*)\sum_{j=0}^k\mathbb{E}[\tilde\upphi_j\mathbbm{1}_{[-2R^*,2R^*]}](\tau) \numberthis\label{eq:H2-estimate}
\,, 
\end{align*}
and an $H^1$ estimate over $\mc R_{(0,\tau)}$,
\begin{align*}
&\int_{\mc R_{(0,\tau)}\cap \{|r^*|\leq R^*\}} \lp(|\tilde\upphi_k'|^2+|\mathring{\slashed\nabla}\tilde\upphi_k|^2\rp)drd\sigma d\tau' \\
&\quad\leq B(R^*)\int_{\mc R_{(0,\tau)}\cap \{|r^*|\leq 2R^*\}} \Big[\Big|\Big(T+\frac{a}{r^2+a^2}Z\Big)\tilde\upphi_k\Big|^2+\sum_{j=0}^{k}|\tilde\upphi_j|^2+(|s|-k)|\tilde\upphi_{k+1}|^2\Big]drd\sigma d\tau' \numberthis\label{eq:H1-estimate}\\
&\quad\qquad +B(R^*)\mathbb{E}[\tilde\upphi_{k}\mathbbm{1}_{[-2R^*,2R^*]}](\tau)+B(R^*)\sum_{j=0}^k\mathbb{E}[\tilde\upphi_{j}\mathbbm{1}_{[-2R^*,2R^*]}](0) \,. 
\end{align*}

Let us first review the applications of the $H^2$ estimate \eqref{eq:H2-estimate} towards the present step. For $k<|s|$, we note that by the transport equation \eqref{eq:transformed-transport-tilde}, we have 
\begin{align*}
\mathbb{E}[\big({\textstyle T+\frac{a}{r^2+a^2}Z}\big)\tilde\upphi_k\mathbbm{1}_{[-R^*,R^*]}](\tau) &\leq B(R^*)\mathbb{E}[X^*\tilde\upphi_{k}\mathbbm{1}_{[-2R^*,2R^*]}](\tau)+B(R^*)\mathbb{E}[\tilde\upphi_{k}\mathbbm{1}_{[-2R^*,2R^*]}](\tau)\\
&\qquad +B(R^*)\int_{\Sigma_\tau\cap [-2R^*,2R^*]} |\tilde\upphi_{k+1}|^2 dr d\sigma\,;
\end{align*}
thus, combining with \eqref{eq:H2-estimate}, we obtain 
\begin{align*}
\sum_{X\in\{\mathrm{id},T,X^*,r^{-1}\mathring{\slashed\nabla}\}\}}\mathbb{E}[X\tilde\upphi_k\mathbbm{1}_{[-R^*,R^*]}](\tau) &\leq B(R^*)\mathbb{E}[X^*\tilde\upphi_{k}\mathbbm{1}_{[-2R^*,2R^*]}](\tau)+B(R^*)\sum_{j=0}^k\mathbb{E}[\tilde\upphi_{j}\mathbbm{1}_{[-2R^*,2R^*]}](\tau)\\
&\qquad +B(R^*)\int_{\Sigma_\tau\cap [-2R^*,2R^*]} |\tilde\upphi_{k+1}|^2 dr d\sigma\,. \numberthis\label{eq:2nd-order-bdd-r-bulk-low-level}
\end{align*}
We apply \eqref{eq:2nd-order-bdd-r-bulk-low-level} whenever $k=|s|-1$. An even more direct estimate following from \eqref{eq:H2-estimate}, which we apply in other settings, is 
\begin{align*}
&\sum_{X\in\{\mathrm{id},T,X^*,r^{-1}\mathring{\slashed\nabla}\}}\mathbb{E}[X{\tilde\upphi}_k\mathbbm{1}_{[-R^*,R^*]}](\tau)\\
&\quad \leq B(R^*)\sum_{X\in\{\mathrm{id},T, Z\}}\sum_{j=0}^{k}\mathbb{E}[X{\tilde\upphi}_j\mathbbm{1}_{[-2R^*,2R^*]}](\tau) +B(R^*)(|s|-k)\int_{\Sigma_\tau\cap [-2R^*,2R^*]} |\tilde\upphi_{k+1}|^2 dr d\sigma \numberthis\label{eq:2nd-order-bdd-r-bulk-intermediate-higher}\,.
\end{align*}
Furthermore, by adapting the (proof of the) $H^2$ estimate \eqref{eq:H2-estimate} and integrating in time, we can show
\begin{align*}
&\sum_{X\in\{\mathrm{id},T,X^*,r^{-1}\mathring{\slashed\nabla}\}}\mathbb{I}^{\rm deg}[X\tilde\upphi_k\mathbbm{1}_{[-R^*,R^*]}](0,\tau)\\
&\quad \leq B(R^*)\sum_{X\in\{\mathrm{id},T,X^*,r^{-1}\mathring{\slashed\nabla}\}}\int_{\Sigma_\tau\cap [-2R^*,2R^*]} \zeta\lp(|TX\tilde\upphi_k|^2+|\mathring{\slashed\nabla} X\tilde\upphi_k|^2\rp) dr d\sigma\\
&\quad\qquad+B(R^*)\mathbb{I}[\tilde\upphi_k\mathbbm{1}_{[-2R^*,2R^*]}](0,\tau) +B(R^*)\mathbb{I}[X^*\tilde\upphi_k\mathbbm{1}_{[-2R^*,2R^*]}](0,\tau) \\
&\quad\leq B(R^*)\sum_{X\in\{\mathrm{id},T, X^*,r^{-1}\mathring{\slashed\nabla}\}}\int_{\Sigma_\tau\cap [-2R^*,2R^*]} \zeta\Big|X\Big(T+\frac{a}{r^2+a^2}Z\Big)\tilde\upphi_k\Big|^2 dr d\sigma\\
&\quad\qquad+B(R^*)(|s|-k)\int_{\Sigma_\tau\cap [-2R^*,2R^*]} |\tilde\upphi_{k+1}|^2 dr d\sigma\\
&\quad\qquad+B(R^*)\sum_{j=0}^k\mathbb{I}[\tilde\upphi_j\mathbbm{1}_{[-2R^*,2R^*]}](0,\tau) +B(R^*)\mathbb{I}[X^*\tilde\upphi_k\mathbbm{1}_{[-2R^*,2R^*]}](0,\tau)\,.
\end{align*} 
If $k<|s|$, by the transport equation \eqref{eq:transformed-transport-tilde},
\begin{align*}
&\sum_{X\in\{\mathrm{id},T,X^*,r^{-1}\mathring{\slashed\nabla}\}}\mathbb{I}^{\rm deg}[X\tilde\upphi_k\mathbbm{1}_{[-R^*,R^*]}](0,\tau)\\
&\quad\leq B(R^*)\mathbb{I}^{\rm deg}[\tilde\upphi_{k+1}\mathbbm{1}_{[-2R^*,2R^*]}](0,\tau)\\
&\qquad\quad +B(R^*)\sum_{j=0}^k\mathbb{I}[\tilde\upphi_j\mathbbm{1}_{[-2R^*,2R^*]}](0,\tau) +B(R^*)\mathbb{I}[X^*\tilde\upphi_k\mathbbm{1}_{[-2R^*,2R^*]}](0,\tau)\,; \numberthis\label{eq:2nd-order-bdd-r-bulk-intermediate-low-level}
\end{align*} 
we invoke this estimate whenever $k=|s|-1$. In other settings, we rely on the more direct estimate 
\begin{align*}
&\sum_{X\in\{\mathrm{id},T,X^*,r^{-1}\mathring{\slashed\nabla}\}}\mathbb{I}^{\rm deg}[X\tilde\upphi_k\mathbbm{1}_{[-R^*,R^*]}](0,\tau)\\
&\quad \leq B(R^*)\sum_{X\in\{\mathrm{id},T, Z\}}\sum_{j=0}^{k}\mathbb{I}^{\rm deg}[X{\tilde\upphi}_j\mathbbm{1}_{[-2R^*,2R^*]}](0,\tau) +B(R^*)(|s|-k)\int_{\Sigma_\tau\cap [-2R^*,2R^*]} |\tilde\upphi_{k+1}|^2 dr d\sigma\\
&\quad\qquad+B(R^*)\sum_{j=0}^k\mathbb{I}[\tilde\upphi_j\mathbbm{1}_{[-2R^*,2R^*]}](0,\tau) +B(R^*)\mathbb{I}[X^*\tilde\upphi_k\mathbbm{1}_{[-2R^*,2R^*]}](0,\tau) \,.\numberthis\label{eq:2nd-order-bdd-r-bulk-intermediate-high-level}
\end{align*} 

To conclude, we turn to applying the $H^1$ estimate \eqref{eq:H1-estimate} to control the last line in \eqref{eq:2nd-order-bdd-r-bulk-intermediate-low-level} and \eqref{eq:2nd-order-bdd-r-bulk-intermediate-high-level}. For the first term, we have 
\begin{align*}
&\mathbb{I}[\tilde\upphi_k\mathbbm{1}_{[-R^*,R^*]}](0,\tau)\\
&\quad \leq \mathbb{I}^{\rm deg}[\tilde\upphi_k\mathbbm{1}_{[-R^*,R^*]}](0,\tau) + \int_{\mc R_{(0,\tau)}\cap \{|r^*|\leq R^*\}} \lp(|T\tilde\upphi_k|^2+|\mathring{\slashed\nabla}\tilde\upphi_k|^2\rp)drd\sigma d\tau' \\
&\quad \leq B(R^*)\sum_{X\in\{\mathrm{id},T, Z\}}\sum_{j=0}^{k}\mathbb{I}^{\rm deg}[X{\tilde\upphi}_j\mathbbm{1}_{[-2R^*,2R^*]}](0,\tau)+B(|s|-k)\int_0^\tau\int_{\Sigma_{\tau'}\cap [-2R^*,2R^*]}|\tilde\upphi_{k+1}|^2dr d\sigma d \tau'\\
&\quad\qquad +B(R^*)\sum_{X\in\{\mathrm{id},T\}}\mathbb{E}[X{\tilde\upphi}_k\mathbbm{1}_{[-2R^*,2R^*]}](\tau)+B\sum_{X\in\{\mathrm{id},T\}}\mathbb{E}[X{\tilde\upphi}_k](0)\,,
\end{align*}
whenever $\mathbb{I}$ and $\mathbb{I}^{\rm deg}$ do not agree. For the second term in the last line of \eqref{eq:2nd-order-bdd-r-bulk-intermediate-low-level} and \eqref{eq:2nd-order-bdd-r-bulk-intermediate-high-level}, we adapt (the proof of the) $H^1$ estimate \eqref{eq:H1-estimate} to deal with an $\p_{r^*}$ commuted version of the PDE~\eqref{eq:transformed-k-tilde},  see Lemma~\ref{lemma:radial-commutation}:
\begin{align*}
&\mathbb{I}[X^*\tilde\upphi_k\mathbbm{1}_{[-R^*,R^*]}](0,\tau)\\
&\quad\leq B(R^*)\int_{\mc R_{(0,\tau)}\cap \{|r^*|\leq 2R^*\}} \Big[\Big|\Big(T+\frac{a}{r^2+a^2}Z\Big)\tilde\upphi_k'\Big|^2+\sum_{j=0}^{k}(|\tilde\upphi_j|^2+|\tilde\upphi_j'|^2)+|\tilde\upphi_{k+1}'|^2+|\tilde\upphi_{k+1}|^2\Big]drd\sigma d\tau' \\
&\quad\qquad +B(R^*)\mathbb{I}[\tilde\upphi_{k}\mathbbm{1}_{[-2R^*,2R^*]}](0,\tau)\\
&\qquad\quad +B(R^*)\mathbb{E}[\tilde\upphi_{k}'\mathbbm{1}_{[-2R^*,2R^*]}](\tau)+B(R^*)\sum_{j=0}^k\mathbb{E}[\tilde\upphi_{j}'\mathbbm{1}_{[-2R^*,2R^*]}](0)\\
&\quad\leq B(R^*)\int_{\mc R_{(0,\tau)}\cap \{|r^*|\leq 2R^*\}} \Big[\Big|\Big(T+\frac{a}{r^2+a^2}Z\Big)\tilde\upphi_k'\Big|^2 drd\sigma d\tau' \\
&\quad\qquad +B(R^*)\sum_{j=0}^k\mathbb{I}[\tilde\upphi_{j}\mathbbm{1}_{[-2R^*,2R^*]}](0,\tau) +B(R^*)\mathbb{I}^{\rm deg}[\tilde\upphi_{k+1}\mathbbm{1}_{[-2R^*,2R^*]}](0,\tau)\\
&\qquad\quad +B(R^*)\mathbb{E}[\tilde\upphi_{k}'\mathbbm{1}_{[-2R^*,2R^*]}](\tau)+B(R^*)\sum_{j=0}^k\mathbb{E}[\tilde\upphi_{j}'\mathbbm{1}_{[-2R^*,2R^*]}](0)\\
&\quad \leq B(R^*)\sum_{X\in\{\mathrm{id},T,Z\}}\sum_{j=0}^k\mathbb{I}^{\rm deg}[X\tilde\upphi_{j}\mathbbm{1}_{[-2R^*,2R^*]}](0,\tau) +B(|s|-k)\mathbb{I}^{\rm deg}[\tilde\upphi_{k+1}\mathbbm{1}_{[-2R^*,2R^*]}](0,\tau) \\
&\quad\qquad +B(R^*)\mathbb{E}[\tilde\upphi_{k}'\mathbbm{1}_{[-2R^*,2R^*]}](\tau)+B(R^*)\sum_{j=0}^k\mathbb{E}[\tilde\upphi_{j}'\mathbbm{1}_{[-2R^*,2R^*]}](0)\,;
\end{align*}
where we find it convenient to consider the middle inequality if $k=|s|-1$. This concludes step 4 and, thus, the entire proof.
\end{proof}

\begin{remark}[Peeling for fixed $\tau$] \label{rmk:peeling-fixed-tau} It is straightforward to use these various weighted estimates, the equation~\eqref{eq:transformed-k-tilde}, Sobolev inequalities on $\mathbb{S}^2$, and the fundamental theorem of calculus,  to show that if $\upphi_k^{[s]}$ arises from smooth compactly supported data, then there exists a constant $C(\tau,\upphi_k^{[s]})$ depending only on $\tau$ and $\upphi_k^{[s]}$ so that we have
\[\left|\dbtilde\upphi_k^{[s]}\right| \leq C\left(\tau,\upphi_k^{[s]}\right),\qquad \left|L\dbtilde\upphi_k^{[s]}\right| \leq C\left(\tau,\upphi_k^{[s]}\right)r^{-2}.\]
The same statement holds if we replace $\dbtilde\upphi_k^{[s]}$ with any sequence of derivatives from $\left\{X^*,T,r^{-1}\mathring{\slashed\nabla}^{[s]}\right\}$ applied to $\dbtilde\upphi_k^{[s]}$.
\end{remark}

\subsection{Well-posedness, regularity and smooth dependence}
\label{sec:wellposedness}

Standard theory yields that the transformed system of Definition~\ref{def:transformed-system}, with the choice of $r$-weights leading to the rescaled variables $\swei{\tilde\upphi}{s}_k$ or $\swei{\dbtilde\upphi}{s}_k$, is well-posed. We refer the reader to \cite{Dafermos2010,Dafermos2017} for further details.

\begin{proposition}[Well-posedness] \label{prop:well-posedness}
Let $j\geq |s|$, and take $(\tilde\upphi^{[s],\circ}_0, \tilde\upphi^{[s],\circ\circ}_0)\in {}^{[s]}H^j_{\rm loc}(\Sigma_0)\times {}^{[s]}H^{j-1}_{\rm loc}(\Sigma_0)$ to be  $s$-spin-weighted functions.

Then, for each $k=0,\dots, |s|$ there exist unique $s$-spin-weighted $\swei{\tilde\upphi}{s}_k$ on $\mathcal{R}_0$ satisfying the homogeneous wave-type PDE \eqref{eq:transformed-k-tilde} and the transport relations \eqref{eq:transformed-transport-tilde} with 
$\swei{\tilde\upphi}{s}_k \in {}^{[s]}H^{j-k}_{\rm loc} (\Sigma_\tau)$, $n_{\Sigma_{\tau}} \swei{\tilde\upphi}{s}_0\in {}^{[s]}H^{j-k-1}_{\rm loc}(\Sigma_\tau)$ such that $\swei{\tilde\upphi}{s}_0\big|_{\Sigma_0}=\tilde\upphi^{[s],\circ}_0$, and 
$(n_{\Sigma_0}\swei{\tilde\upphi}{s}_0)\big|_{\Sigma_0}=\tilde\upphi^{[s],\circ\circ}_0$. 
In particular, if 
 $(\tilde\upphi^{[s],\circ}_0, \tilde\upphi^{[s],\circ\circ}_0)\in \mathscr{S}^{[s]}_{\infty}(\Sigma_0)$ 
then $\swei{\tilde\upphi}{s}_k\in \mathscr{S}^{[s]}_{\infty}(\mathcal{R}_0)$ for all $k=0,\dots,|s|$. Finally, the map $(\tilde\upphi^{[s],\circ}_0, \tilde\upphi^{[s],\circ\circ}_0)\times a\mapsto \{\swei{\upphi}{s}_{k,a}\}_{k=0}^{|s|}$, where $\swei{\upphi}{s}_{k,a}$ denotes the above $\swei{\upphi}{s}_{k}$ considering the equations with respect to Kerr parameters $(a,M)$, is $C^0\times C^\infty$.

The same statement holds with $\Sigma_0$, $\Sigma_\tau$, and 
$\tilde{\mathcal{R}}_0$ in place of $\Sigma_0$, $\Sigma_\tau$, and $\mathcal{R}_0$,
respectively.
\end{proposition}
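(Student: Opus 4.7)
My plan is to reduce the transformed system to a single linear wave-type equation for $\tilde\upphi_0^{[s]}$, apply standard hyperbolic well-posedness to obtain $\tilde\upphi_0^{[s]}$, and then generate the remaining $\tilde\upphi_k^{[s]}$ by iterating the transport relations. First, observe that the $k=0$ case of \eqref{eq:transformed-k-tilde} contains no coupling to lower-level variables, as the sum over $i<0$ is vacuous; its only source is a multiple of $\tilde\upphi_1^{[s]}$. The $k=0$ transport relation \eqref{eq:transformed-transport-tilde} expresses $\tilde\upphi_1^{[s]}$ as a linear combination of $\tilde\upphi_0^{[s]}$ and $\mc L \tilde\upphi_0^{[s]}$ with smooth coefficients on $\mathrm{int}(\mc R)$, so substituting yields a closed second-order linear PDE for $\tilde\upphi_0^{[s]}$ alone. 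Its principal symbol is that of $\frac12(L\underline L+\underline L L)+w\mathring{\slashed\triangle}^{[s]}_T$, which by \eqref{eq:kerr-metric-star-inverse}---up to the spin-weighted modification of the spherical Laplacian, which does not affect the principal part---is the scalar Kerr wave operator. Hence the reduced equation is strictly hyperbolic on $\mc R$, with $\Sigma_0$ spacelike and $\mc H^+$ an ingoing characteristic null hypersurface on which no boundary condition need be prescribed.

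Standard well-posedness for linear hyperbolic equations on globally hyperbolic Lorentzian manifolds-with-characteristic-boundary (see e.g.\ \cite{Dafermos2010,Dafermos2017}) then produces a unique $\tilde\upphi_0^{[s]}$ with $\tilde\upphi_0^{[s]}\in {}^{[s]}H^j_{\rm loc}(\Sigma_\tau)$ and $n_{\Sigma_\tau}\tilde\upphi_0^{[s]}\in {}^{[s]}H^{j-1}_{\rm loc}(\Sigma_\tau)$ matching the prescribed Cauchy data, and smoothness of the data propagates to smoothness of $\tilde\upphi_0^{[s]}$. I then define $\tilde\upphi_{k+1}^{[s]}$ inductively for $k=0,\dots,|s|-1$ by solving \eqref{eq:transformed-transport-tilde} algebraically for $\tilde\upphi_{k+1}^{[s]}$ in terms of $\mc L \tilde\upphi_k^{[s]}$ and $\tilde\upphi_k^{[s]}$. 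The coefficient $wc_{k+1}/c_k$ is nonvanishing on $\mathrm{int}(\mc R)$, and the potentially singular weights at $\mc H^+$ and $\mc I^+$ have been absorbed precisely by the choice of $c_k$ in \eqref{eq:rescalings}, so each inversion is safe; each one costs one derivative, yielding $\tilde\upphi_k^{[s]}\in {}^{[s]}H^{j-k}_{\rm loc}(\Sigma_\tau)$. That the transport-generated $\tilde\upphi_k^{[s]}$ then automatically satisfy \eqref{eq:transformed-k-tilde} for $k\geq 1$ is not an independent constraint but a structural consequence: the full system was \emph{derived} by applying $\mc L^k$ to the Teukolsky variable, as recorded in Lemma~\ref{lemma:transformed-system}, so the higher-$k$ wave equations hold as identities once the $k=0$ equation and the transport relations do.

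For the smooth dependence statement, the coefficients of both \eqref{eq:transformed-k-tilde} and \eqref{eq:transformed-transport-tilde} depend smoothly on $a\in[0,M)$, and the principal part of the reduced equation is uniformly hyperbolic on compact subsets of $\mc R$ for $|a|\leq a_0<M$. Standard parameter-dependence results for linear hyperbolic PDEs, combined with finite-in-time energy estimates of the type in Proposition~\ref{prop:finite-in-time-first-order}, then give $C^\infty$ dependence on $a$ and $C^0$ dependence on the Cauchy data. The statement with $\tilde\Sigma_\tau$ in place of $\Sigma_\tau$ is identical, since the asymptotically flat foliation is equally admissible as a Cauchy foliation of $\mathrm{int}(\mc R)$. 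The only real bookkeeping point---and what I would expect to be the main obstacle if any---is confirming that the regularity cascade through the transport inversions is sharp, i.e.\ that no half-derivatives are lost to the degenerate coefficient structure near $\mc H^+$ for $s>0$ or near $\mc I^+$ for $s<0$; this is precisely why the weights $c_k$ and $\tilde c_k$ were designed as they were in \eqref{eq:rescalings}, and the argument closes cleanly.
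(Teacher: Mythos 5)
The paper offers no proof of Proposition~\ref{prop:well-posedness} at all---it simply asserts that ``standard theory'' applies and cites \cite{Dafermos2010,Dafermos2017}. Your proposal is a reasonable, correct-in-outline expansion of that citation, and the approach---reduce to a single second-order hyperbolic equation for $\tilde\upphi_0^{[s]}$, solve by standard theory, then cascade upward through the transport relations---is almost certainly what the cited references do. The structural observation that the $k\geq 1$ wave equations are not independent constraints but are generated by commuting $w^{-1}\mathcal L$ through the $k=0$ equation is exactly right, and is the content implicit in Lemma~\ref{lemma:transformed-system}.

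Two points deserve a sharper formulation than you give them. First, to invoke Lemma~\ref{lemma:transformed-system} as the mechanism propagating the $k\geq 1$ equations, you need to state explicitly that your reduced $k=0$ equation (the $k=0$ member of \eqref{eq:transformed-k-tilde} with $\tilde\upphi_1^{[s]}$ eliminated via \eqref{eq:transformed-transport-tilde}) is equivalent, after un-rescaling by the factor in \eqref{eq:def-psi0}, to the Teukolsky equation \eqref{eq:teukolsky-alpha}; it is this equivalence that licenses the application of the lemma. You invoke the lemma but do not make the identification visible. Second, ``each inversion is safe'' is stated too casually: for one sign of $s$ the coefficient multiplying $\tilde\upphi_{k+1}^{[s]}$ in \eqref{eq:transformed-transport-tilde} vanishes as $\Delta\to 0$, so the algebraic inversion produces a formally singular prefactor. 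The point is that the combination on the other side vanishes at the compensating rate precisely because the constraint $\Delta^{|s|-k}\upphi_k^{[s]}\in\mathscr S_\infty^{[s]}(\mathcal R)$ is preserved by the transport hierarchy (this is, again, what Lemma~\ref{lemma:transformed-system} encodes); the cancellation is structural rather than accidental. You gesture at this---``this is precisely why the weights were designed as they were''---but the argument would be materially strengthened by tracing the vanishing orders once, rather than treating it as a design choice that surely works. Neither of these is a gap that invalidates the approach; they are the two places where a reader who wanted a self-contained proof would press you.
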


\subsection{A backwards extension procedure}
\label{sec:extension}

In this section, we present a backwards extension argument which will be convenient in simplifying the setting of the proof of our main result, see already Lemma~\ref{lemma:redshift-commutation}, and in our proof of energy boundedness in Theorem~\ref{thm:energy-bddness-futurint}. The extensions argument is as follows:

\begin{proposition}[An extension procedure] \label{prop:scattering-construction} Let $s\in\{0,\pm 1,\pm 2\}$. For $k=0,\dots,|s|$, let $\swei{\tilde\upphi}{s}_{k}$ be  solutions to the homogeneous transformed system on $\tilde{\mc{R}}_{(0,1)}$ arising from smooth data on $\Sigma_0$, with the following additional properties:
\begin{itemize}
\item $\swei{\Phi}{s}|_{\Sigma_0}$ is compactly supported and, if $|a|=M$, is compactly supported away from $\mc H^+\cap \Sigma_0$;
\item if $s>0$, $\swei{\tilde\upphi}{s}_k|_{\Sigma_0}$ is compactly supported  for $k=0,\dots,|s|$;
\item if $s<0$ and $|a|=M$, $\swei{\tilde\upphi}{s}_k|_{\Sigma_0}$ is supported away from $\mc H^+\cap \Sigma_0$ for $k=0,\dots,|s|$.
\end{itemize}
Then, for any large integer $q$, we can find extensions, $\swei{\hat{\upphi}}{s}_{k} \in C^q$, solving the transformed system in the spacetime slab $\tilde{\mc{R}}_{(-1,1)}$ such that 
\begin{enumerate}[label=(\roman*)]
\item $\swei{\hat{\upphi}}{s}_{k}=\swei{\tilde{\upphi}}{s}_{k}$ on $\tilde{\mc{R}}_{(0,1)}$; \label{it:scattering-construction-equalities-1}
\item If $s > 0$, then $\swei{\hat{\upphi}}{s}_{k}$ vanishes for $r$ sufficiently large in $\tilde{\mc{R}}_{(-1,1)}$;
\item $\swei{\hat{\upphi}}{s}_{k}\equiv 0$ on $\mc I^+_{(-1,-\frac12)}$ and on $\mc H^+_{(-1,-\frac12)}$; \label{it:scattering-construction-equalities-2}
\item we have the following estimates: if $|a|=M$, then \label{it:scattering-construction-energy}
\begin{align*}
\sum_{k=0}^{|s|}\sum_{0\leq j_1+j_2\leq |s|-k}\mathbb{E}[Z^{j_1}T^{j_2}\swei{\hat{\upphi}}{s}_{k}](-1)\leq B\sum_{k=0}^{|s|}\sum_{0\leq j_1+j_2\leq |s|-k}\mathbb{E}[Z^{j_1}T^{j_2}\swei{\tilde{\upphi}}{s}_{k}](0)\,,
\end{align*}
but if $|a|\leq a_0<M$, then
\begin{align*}
\sum_{k=0}^{|s|}\sum_{0\leq j_1+j_2\leq |s|-k}\overline{\mathbb{E}}[Z^{j_1}T^{j_2}\swei{\hat{\upphi}}{s}_{(k)}](-1)\leq B(a_0)\sum_{k=0}^{|s|}\sum_{0\leq j_1+j_2\leq |s|-k}\overline{\mathbb{E}}[Z^{j_1}T^{j_2}\swei{\tilde{\upphi}}{s}_{k}](0)\,.
\end{align*}
\end{enumerate}
\end{proposition}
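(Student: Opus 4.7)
We would construct $\swei{\hat{\upphi}}{s}_k$ by combining a backward well-posedness argument with a cutoff procedure, exploiting the hypothesized support properties of the data.

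First, extend $\swei{\tilde{\upphi}}{s}_k$ backwards from $\Sigma_0$. Applying Proposition~\ref{prop:well-posedness} to the backward Cauchy problem for the coupled transformed system---which is admissible because backward evolution over the bounded time interval of length $2$ loses only finitely many derivatives---we obtain a solution $\psi_k$ of the homogeneous transformed system on $\tilde{\mc R}_{(-1,1)}$ matching $\swei{\tilde{\upphi}}{s}_k$ on $\Sigma_0$ (and hence, by uniqueness of the forward problem, on all of $\tilde{\mc R}_{(0,1)}$). For $|a|<M$, the blueshift at $\mc H^+$ causes a bounded loss of derivatives over the bounded time interval, so $C^q$ regularity for any prescribed $q$ is achievable with constants depending on $a_0$.

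Second, enforce the vanishing requirements~\ref{it:scattering-construction-equalities-2} together with the compact $r$-support condition in the case $s>0$. By finite speed of propagation applied to the wave equation~\eqref{eq:Regge-Wheeler-eq} for $\swei{\Phi}{s}$ and the transport equations~\eqref{eq:transformed-transport-tilde}, the solution $\psi_k$ has $r^*$-support bounded above (by compactness of the data for $\swei{\Phi}{s}$, propagated by the bounded time) and, when $|a|=M$, also bounded below away from $\mc H^+$ (by the hypothesis of support away from $\mc H^+$). Under this scenario, setting $\swei{\hat{\upphi}}{s}_k = \psi_k$ already satisfies the vanishing conditions on $\mc I^+_{(-1,-1/2)}$ and $\mc H^+_{(-1,-1/2)}$. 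Otherwise---typically in the subextremal case where $\swei{\tilde{\upphi}}{s}_k|_{\Sigma_0}$ for $k<|s|$ may reach $\mc H^+$---we apply a spatial cutoff $\chi(r^*)$ to the data on the intermediate hypersurface $\Sigma_{-1/2}$, equal to $1$ on the essential support of $\psi_k|_{\Sigma_{-1/2}}$ and vanishing in neighborhoods of $\mc H^+$ and $\mc I^+$, then re-solve backward on $\tilde{\mc R}_{(-1,-1/2)}$ from the cut-off data and glue to $\psi_k$ across a thin transition slab $\tilde{\mc R}_{(-1/2-\eta,-1/2+\eta)}$ via a smooth partition of unity. The gluing produces a $C^q$ extension at the cost of a bounded regularity loss, and the integration constants for the transport equations~\eqref{eq:transformed-transport-tilde} are chosen along the $\mc L$-characteristics so that the final $\swei{\hat{\upphi}}{s}_k$ solves the homogeneous system throughout.

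Estimate~\ref{it:scattering-construction-energy} then follows from backward energy inequalities, obtained by applying the physical-space multiplier identities of Lemmas~\ref{lemma:phys-space-Killing-multiplier-identity}--\ref{lemma:phys-space-redshift-multiplier-identity} over the bounded time interval $[-1,0]$: in the subextremal case the redshift current contributes a constant $B(a_0)$, while in the extremal case the support-away-from-$\mc H^+$ hypothesis combined with finite speed of propagation allows one to bound the energies without seeing the horizon, yielding a constant depending only on $M$.

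\textbf{Main obstacle.} The principal difficulty is ensuring compatibility of the spatial cutoff and subsequent gluing with the coupling in the transformed system: the cutoff introduces inhomogeneities in the wave equation for $\swei{\Phi}{s}$ and in the transport equations for the lower-level $\swei{\upphi}{s}_k$, and these must be absorbed without disturbing either the matching with $\swei{\tilde{\upphi}}{s}_k$ on $\tilde{\mc R}_{(0,1)}$ or the prescribed vanishing on the past null boundary portions. The strategy is to concentrate the cutoff sufficiently far in the past that its effect on $\Sigma_0$, propagated through the coupled system, can be absorbed by an appropriate choice of integration constants for the lower-level transport equations along $\mc L$-characteristics, which is precisely the freedom provided by \eqref{eq:transformed-transport-tilde}.
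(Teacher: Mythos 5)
Your proposal misses the central difficulty and replaces it with a construction that does not produce a solution of the \emph{homogeneous} transformed system. The backward Cauchy evolution from $\Sigma_0$ recovers the unique solution on $\tilde{\mc R}_{(-1,0)}$, which will in general have a non-vanishing trace on $\mc H^+_{(-1,-1/2)}$ (the backward domain of dependence of data reaching $\mc H^+ \cap \Sigma_0$ stays at the horizon, and finite speed of propagation is of no help there), so a modification is genuinely needed. But your proposed remedy --- applying a spatial cutoff $\chi(r^*)$ to the data on $\Sigma_{-1/2}$, re-solving backward, and gluing across a thin slab --- destroys the property that the $\swei{\hat\upphi}{s}_k$ solve the \emph{homogeneous} system. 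Multiplying by $\chi$ creates inhomogeneities in the wave equation for $\swei{\Phi}{s}$ and in each transport equation $\mc L\swei{\upphi}{s}_{k-1}=w\swei{\upphi}{s}_k$; these cannot be absorbed by ``choosing integration constants along $\mc L$-characteristics'' because once the data on $\Sigma_0$ are fixed (as they must be by condition~(i)) the lower-level transformed variables are uniquely determined by integrating the transport equations, so there is no remaining gauge freedom.

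The missing idea is that for $s\neq 0$, the transformed system imposes \emph{constraints} on characteristic data along $\mc I^+$ (when $s<0$) and $\mc H^+$ (when $s>0$): the limits~\eqref{eq:constraint-null-infty-1}--\eqref{eq:constraint-horizon-2} of~\eqref{eq:transformed-constraint} are ODEs in the null time parameter which the traces of $\swei{\hat\upphi}{s}_k$ must satisfy. One therefore cannot simply prescribe arbitrary (e.g.\ compactly-supported-extension-by-zero) data on these hypersurfaces. The paper instead solves a mixed spacelike-characteristic IVP from $\Sigma_0\cup\mc I^+_{(-1,0)}\cup\mc H^+_{(-1,0)}$, where the characteristic data are built to satisfy these constraints \emph{and} to vanish past $u=-1/2$. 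This is done by starting from the naive ansatz $\chi_0(u)\,\mathrm{Ext}_N[\upphi_0]$ and adding compensating terms $\sum_{j}\chi_j(u)\hat\upphi_{0,j}(\theta,\phi)$; integrating the constraint ODEs determines the $\hat\upphi_{0,j}$ uniquely, via invertibility of explicit angular elliptic operators, precisely so that the higher-level variables $\hat\upphi_k|_{\mc I^+}$ (resp.\ $\hat\upphi_k|_{\mc H^+}$) hit zero at $u=-1/2$ (resp.\ $v=-1/2$). This constraint analysis is the essential input that your sketch omits, and without it neither condition~(iii) nor the claim that the extension solves the transformed system can be justified.

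One part of your proposal is salvageable: for $s>0$ the compact support of all $\swei{\tilde\upphi}{s}_k|_{\Sigma_0}$ and the backward domain-of-dependence argument do give the $r$-boundedness in (ii), and vanishing on $\mc I^+$ is indeed automatic on that side; the issue is entirely with the side where the constraints live.
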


To prove Proposition~\ref{prop:scattering-construction}, we want to consider a mixed spacelike-characteristic initial value problem for the transformed system, where data is set\footnote{To say one sets data on $\mc I^+$ is an abuse of notation which we will use throughout what follows. More rigorously, we set data on $\Sigma_0\cup\mc{H}_{(-1,0)}^+\cup \mc{C}$ where $\mc{C}$ is a suitable null cone and later take the limit $r(\Sigma_0\cap \mc{C})\to \infty$.} on $\Sigma_0\cup\mc{I}_{(-1,0)}^+\cup\mc{H}_{(-1,0)}^+$, we require that $\left(\swei{\hat{\upphi}}{s}_{k},n_{\Sigma_0}\swei{\hat{\upphi}}{s}_{k}\right)|_{\Sigma_0} = \left(\swei{\tilde\upphi}{s}_k,n_{\Sigma_0}\swei{\tilde\upphi}{s}_k\right)|_{\Sigma_0}$, we solve to the past of $\Sigma_0$, and then glue the result to the previously defined solution $\swei{\tilde\upphi}{s}_k$. In order for the resulting solution $\swei{\hat{\upphi}}{s}_{k}$ to be $C^q$ across $\Sigma_0$ we must have that the data we set for $\swei{\hat{\upphi}}{s}_{k}$ along $\mc{H}_{(-1,0)}$ and $\mc{I}_{(-1,0)}^+$ is compatible with the traces of $\swei{\tilde\upphi}{s}_k$ along $\mc{H}_{(0,1)}$ and $\mc{I}_{(0,1)}^+$. More concretely, if the traces along $\mc{I}_{(-1,1)}^+$ and $\mc{H}_{(-1,1)}^+$ are $C^{\tilde{q}}$ for a sufficiently large $\tilde{q} > q$, then the solution will also be $C^q$. The following lemma will be useful for the construction of data satisfying these compatibility conditions.

\begin{lemma}[$C^k$ extension]\label{lemma:Ck-extension} Let $k\geq 0$ and consider a function $f\in C^k([0,1)\times\mathbb{S}^2)$. Then, there is a map $\mathrm{Ext}_k[f]\colon (-1,1)\times\mathbb{S}^2\to \mathbb{R}$ such that 
\begin{itemize}
\item $\mathrm{Ext}_k[f]=f$ for $x\in[0,1)$, 
\item $\mathrm{Ext}_k[f]\in C^k((-1,1)\times\mathbb{S}^2)$, and
\item $\lVert \mathrm{Ext}_k[f]\rVert_{H^k([-1,0]\times\mathbb{S}^2)}\lesssim_k \lVert \mathrm{Ext}_k[f]\rVert_{H^k([0,1]\times\mathbb{S}^2)}$.
\end{itemize}
\end{lemma}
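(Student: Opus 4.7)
The result to prove is a standard $C^k$ extension statement, independent of the geometry of Kerr or the specific transformed system; the angular factor $\mathbb{S}^2$ is inert and only the one-dimensional extension in the $x$-variable matters. The natural approach is the classical Seeley--Hestenes reflection: one builds the extension for $x < 0$ as a finite linear combination of scaled reflections of the form $f(-\lambda_j x, \omega)$, with the scalings $\lambda_j > 0$ pairwise distinct and the coefficients chosen so that derivatives of orders $0,1,\dots,k$ agree across $x=0$.

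Concretely, my plan is as follows. First I would fix $k+1$ distinct positive numbers $\lambda_0 < \lambda_1 < \dots < \lambda_k$ (for example $\lambda_j = j+1$) and seek constants $a_0,\dots,a_k \in \mathbb{R}$ so that setting
\begin{equation*}
\mathrm{Ext}_k[f](x,\omega) :=
\begin{cases}
f(x,\omega), & x \in [0,1),\\
\displaystyle\sum_{j=0}^{k} a_j\, f(-\lambda_j x, \omega), & x \in (-\tfrac{1}{\lambda_k},0),
\end{cases}
\end{equation*}
produces a $C^k$ function across $x=0$. The matching of derivatives up to order $k$ at $x=0$ amounts, after differentiating $m$ times in $x$ and cancelling factors of $\partial_x^m f(0,\omega)$, to the linear conditions
\begin{equation*}
\sum_{j=0}^{k} a_j (-\lambda_j)^m = 1, \qquad m = 0,1,\dots,k.
\end{equation*}
This is a Vandermonde system in $(a_j)$ with determinant $\prod_{i<j}(-\lambda_j+\lambda_i) \neq 0$, hence uniquely solvable; the solution depends only on $k$.

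Next I would extend the construction from the subinterval $(-1/\lambda_k,0)$ to the full interval $(-1,0)$. This is done by pre-composing with a smooth orientation-preserving diffeomorphism $\psi\colon(-1,1)\to(-1,1)$ which is the identity near $x=1$ and for $x\geq 0$, and maps $(-1,0)$ into $(-1/\lambda_k,0)$; pulling back by $\psi$ and multiplying by a smooth cutoff supported in $(-1,1)$ and equal to $1$ on $[-\tfrac12,1)$ preserves the $C^k$ matching at $x=0$ and the equality $\mathrm{Ext}_k[f]=f$ on $[0,1)$. The $C^k$-regularity follows immediately from the matching conditions and the smoothness of $f$ away from $x=0$; since the whole construction is linear in $f$ with coefficients and diffeomorphisms depending only on $k$, and involves only compositions with smooth maps of bounded derivative, the bound
\begin{equation*}
\|\mathrm{Ext}_k[f]\|_{H^k([-1,0]\times\mathbb{S}^2)} \leq C(k)\,\|f\|_{H^k([0,1]\times\mathbb{S}^2)}
\end{equation*}
follows by a direct change-of-variables computation in the $x$-variable, bounding each $\partial_x^j$ of the reflected pieces by the supremum of $|\lambda_j|^j$ times $\partial_x^j f$ evaluated on $[0,1)$, and leaving the angular integration on $\mathbb{S}^2$ untouched.

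There is no serious obstacle here: the only mild subtlety is to make sure the extension is genuinely defined on all of $(-1,0)$ (whence the $\psi$ reparametrisation step, needed because $\lambda_k > 1$) and that the $H^k$ estimate is stated with the unweighted integral on both sides, which follows since all constants depend only on $k$ and the fixed $\lambda_j$.
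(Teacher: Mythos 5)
Your proof is correct in spirit and uses the classical Hestenes reflection, which is indeed the standard way to prove this kind of $C^k$ extension; the paper actually states the lemma without proof, treating it as well-known, so there is no distinct ``paper approach'' to compare against. Your Vandermonde argument for solvability and the resulting linear dependence of the extension operator on $f$, as well as the change-of-variables computation for the $H^k$ bound, are all fine.

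Two minor remarks. First, you created the domain issue yourself by picking $\lambda_j = j+1 > 1$: with $\lambda_j \in (0,1]$ (say $\lambda_j = 1/(j+1)$), the reflections $f(-\lambda_j x, \omega)$ are defined for all $x \in (-1,0)$ directly, the Vandermonde matrix is still nonsingular, and the change of variables $y = -\lambda_j x$ maps $(-1,0)$ into $[0,1)$ without any reparametrisation. Second, as written, $\psi$ cannot be a diffeomorphism of $(-1,1)$ onto itself: if $\psi$ is the identity on $[0,1)$ and sends $(-1,0)$ into the proper subinterval $(-1/\lambda_k,0)$, it cannot be surjective. You mean a smooth embedding $\psi\colon(-1,1)\hookrightarrow(-1/\lambda_k,1)$ that is the identity on $[0,1)$; with that correction (or with the simpler choice $\lambda_j \leq 1$) the argument goes through, and the trailing cutoff step is then superfluous.
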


We are now ready to establish points \ref{it:scattering-construction-equalities-1} and \ref{it:scattering-construction-equalities-2} of Lemma~\ref{prop:scattering-construction}:

\begin{proof}[Proof of Lemma~\ref{prop:scattering-construction}(i,ii)] By well-posedness for the transformed system, the solutions $\tilde\upphi_{k}$ to the future of $\Sigma_0$ are smooth, in particular on the hypersurface\footnote{We are again employing a slight abuse of notation: $\mc{I}_{(0,1)}^+$ should be replaced by suitable future directed cones $\{\mc {C}_r\}_{r > 0}$ which approaches $\mc{I}_{(0,1)}^+$ as $r\to \infty$.} $\Sigma_0\cup\mc{I}_{(0,1)}^+\cup\mc{H}_{(0,1)}^+$. Thus, a natural first attempt at an ansatz for the characteristic data in our mixed spacelike-characteristic problem is to consider an extension by 0 via  Lemma~\ref{lemma:Ck-extension}:
\begin{align}
\hat{\upphi}_{0}\Big|_{\mc I^+}(u,\theta,\phi):=\chi_0(u)\mathrm{Ext}_{N}[\tilde\upphi_{0}](u,\theta,\phi) \,, \qquad \hat{\upphi}_{0}\Big|_{\mc H^+}(v,\theta,\phi):=\chi_0(v)\mathrm{Ext}_{N}[\tilde\upphi_{0}](v,\theta,\phi) \,, \label{eq:scattering-ansatz-naive}
\end{align}
where $u\in[-1,1]$ is an affine parameter along $\mc{I}^+_{(-1,1)}$, $v\in[-1,1]$ is an affine parameter along $\mc{I}^+_{(-1,1)}$, $N$ is sufficiently large and $0\leq \chi_0(z)\leq 1$ is a smooth function which is 1 for $|z|\ll 1$ and 0 for $z\leq -1/4$. In fact, we can and will drop the tilde on the right hand side of the $\mc H^+$ ansatz if $s>0$, since $\tilde\upphi_k|_{\mc H^+}$ and $\upphi_k|_{\mc H^+}$ are the same up to a constant. 

The above ansatzes yield the desired statement when $s=0$, but they fail in the case of $s\neq 0$, due to the existence of conservations laws along $\mc{I^+}$, if $s<0$, and  $\mc H^+$, if $s>0$. Indeed, taking the limit as $r\to\infty$ in \ref{eq:transformed-constraint}, we get for $s<0$
\begin{align}
|s|\leq 2\implies \uL \hat\upphi_{1}\Big|_{\mc I^+}&=- (\mathring{\slashed{\triangle}}^{[s]}+|s|)\hat\upphi_{0}\Big|_{\mc I^+}+\lp(\frac14a^2\sin^2\theta \uL\uL+aZ\uL+ia|s|\cos\theta \uL\rp)\hat\upphi_{0}\Big|_{\mc I^+}\,,\label{eq:constraint-null-infty-1}\\
|s|=2\implies \uL \hat\upphi_{2}\Big|_{\mc I^+}&=- (\mathring{\slashed{\triangle}}^{[-2]}+4)\hat\upphi_{1}\Big|_{\mc I^+}+\lp(\frac14a^2\sin^2\theta \uL\uL+aZ\uL+2ia\cos\theta \uL\rp)\hat\upphi_{1}\Big|_{\mc I^+}\nonumber\\
&\qquad-2\lp(3M+5aZ\rp)\hat\upphi_{0}\Big|_{\mc I^+}\,,\label{eq:constraint-null-infty-2}
\end{align}
and taking the limit $r\to r_+$ we get for $s>0$ 
\begin{align*}
|s|\leq 2\implies  L\hat\upphi_1\Big|_{\mc H^+}
&=-\lp(\mathring{\slashed{\triangle}}^{[+s]}_{\mc H}+|s|-\frac{2a}{M}(s-1)Z+\kappa r_+(s-1)(2s-1)\rp)\hat\upphi_0\Big|_{\mc H^+}\\
&\qquad+\lp(\frac{a\rho^2(r_+)}{2Mr_+}Z+\frac14a^2\sin^2\theta L-ias\cos\theta\rp)L\hat\upphi_0\Big|_{\mc H^+}+(s-1)\kappa \hat\upphi_1\Big|_{\mc H^+}\,,\numberthis\label{eq:constraint-horizon-1}\\
|s|=2\implies L\hat\upphi_2\Big|_{\mc H^+}
&=-\lp(\mathring{\slashed{\triangle}}^{[+2]}_{\mc H}+4-3\kappa r_+\rp)\hat\upphi_1\Big|_{\mc H^+} +\kappa r_+\lp(3r_+-10aZ\rp)\hat\upphi_0\Big|_{\mc H^+} \\
&\qquad  +\lp(\frac{a\rho^2(r_+)}{2Mr_+}Z+\frac14s^2\sin^2\theta L-2ia\cos\theta\rp)L\hat\upphi_1\Big|_{\mc H^+}\,,\numberthis\label{eq:constraint-horizon-2}
\end{align*}
where we have introduced the notation $\kappa=(r_+-r_-)/(2Mr_+)$ and 
\begin{align}
\mathring{\slashed{\triangle}}^{[+s]}_{\mc H}:=\mathring{\slashed{\triangle}}^{[+s]}+\frac{a^2}{Mr_+}ZZ-\frac{a^4}{(2Mr_+)^2}\sin^2\theta ZZ -\frac{a^2}{Mr_+}is\cos\theta Z-\frac{a}{M}Z\,. \label{eq:def-triangle-H}
\end{align}
In order to accommodate these extra constraints in the allowed data across the characteristic hypersurfaces, we must change the ansatz \eqref{eq:scattering-ansatz-naive}: for $s<0$ we change the ansatz at $\mc{I^+}$ to be 
\begin{align}
\hat{\upphi}_{0}\Big|_{\mc I^+}(u,\theta,\phi):=\chi_0(u)\mathrm{Ext}_{N}[\upphi_0](u,\theta,\phi) + \sum_{j=1}^{|s|}\chi_j(u)\hat{\upphi}_{0,j}(\theta,\phi)\,, \label{eq:scattering-ansatz-s<0}
\end{align}
and for $s>0$ we change the ansatz at $\mc{H^+}$ to be 
\begin{align}
\hat{\upphi}_{0}\Big|_{\mc H^+}(v,\theta,\phi):=\chi_0(v)\mathrm{Ext}_{N}[\upphi_{0}](u,\theta,\phi) + \sum_{j=1}^{|s|}\chi_j(v)\hat{\upphi}_{0,j}(\theta,\phi)\,. \label{eq:scattering-ansatz-s>0}
\end{align}
Here, $\hat{\upphi}_{0,j}$ denotes smooth $s$-spin-weighted functions on $\mathbb{S}^2$ which are, for the moment, free. Furthermore, $\chi_{1}^\pm(z)$ and $\chi_{2}^\pm(z)$ are smooth functions supported in $(-1/2,-1/4)$ such that
\begin{gather*}
\int_{-1/2}^{-1/4}\chi_1^-(z)dz=b_{1,1}\,,\quad \int_{-1/2}^{-1/4}\int_{-1/2}^{z}\chi_1^-(\tilde z)d\tilde z  dz=b_{1,2}\,,\\
\int_{-1/2}^{-1/4}\chi_2^-(z)dz=0\,,\quad\int_{-1/2}^{-1/4}\int_{-1/2}^{z}\chi_2^-(\tilde z)d\tilde z =b_{2,2}\,,
\\
\int_{-1/2}^{-1/4}\chi_1^+(v)dv=\tilde b_{1,1}\,, \qquad\int_{-1/2}^{-1/4}e^{\kappa v}\chi_1^+(v)dv=:b_{1,1}\,,\quad \int_{-1/2}^{-1/4}e^{-\kappa v}\int_{-1/2}^{v}e^{\gamma \tilde v}\chi_1^+(\tilde v)d\tilde vdv =b_{1,2}\,,\\
\int_{-1/2}^{-1/4}\chi_2^+(v)dv=0\,, \qquad \int_{-1/2}^{-1/4}e^{\kappa v}\chi_2^+(v)dv=0\,,\quad\int_{-1/2}^{-1/4}e^{-\kappa v}\int_{-1/2}^{v}e^{\gamma \tilde v}\chi_2^+(\tilde v)d\tilde vdv =b_{2,2}\,,
\end{gather*}
for some fixed $b_{1,1},b_{2,2}\in\mathbb{R}\backslash\{0\}$ and $\tilde b_{1,1},b_{1,2}\in\mathbb{R}$. For instance, $\chi_{2}^-(z)$ can be taken to be an odd function given by the sum of two bump functions, and $\chi_2^+$ can be taken to be, for $0<\epsilon\ll 1$,
\begin{align*}
\chi_2^+(z)=\chi_{2,1}^+(z)+A_{1/3}\chi_{2,\epsilon}^+(z+1/3)+A_{5/12}\chi_{2,\epsilon}^+(z+5/12)\,,
\end{align*}
for $\chi_{2,1}^+$ a bump function supported for $z\in(-1/4,-1/3)$, $\chi_{2,\epsilon}^+$ an approximation to the identity as $\epsilon\to 0$ and $A_{1/3}$ and $A_{5/12}$ uniquely determined by the above conditions on $\chi_2^+$.

\medskip
\noindent \textit{The case of $s<0$.} We can now integrate \eqref{eq:constraint-null-infty-1} and \eqref{eq:constraint-null-infty-2} along $\mc{I}^+$. In light of the ansatz \eqref{eq:scattering-ansatz-s<0}, integrating along $u\in[0,-1/4]$ yields that $\hat\upphi_{k}\Big|_{\mc I^+}(u=-1/4,\theta,\phi)$ are uniquely determined by the initial data at $u=0$. Next, we integrate along $u\in[-1/4,-1/2]$ to determine the second term of \eqref{eq:scattering-ansatz-s<0} in such a way that $\hat\upphi_{k}\Big|_{\mc I^+}(u=-1/2,\theta,\phi)=0$.

Indeed, if $\hat\upphi_{k}\Big|_{\mc I^+}(u=-1/2,\theta,\phi)=0$, we have 
\begin{align*}
\hat\upphi_{1}\Big|_{\mc I^+}(u,\theta,\phi)&= -\sum_{j=1}^{|s|}\lp(\int_{-1/2}^{u}\chi_j^-( u')d u'\rp) \lp(\mathring{\slashed{\triangle}}^{[s]}+|s|\rp)\hat{\upphi}_{0,j}(\theta,\phi)\\
&\qquad + \sum_{j=1}^{|s|}\lp[\frac14a^2\sin^2\theta\p_u\chi_j^-(u)+a\chi_j^-(u)\lp(Z+i|s|\cos\theta \rp)\rp]\hat{\upphi}_{0,j}(\theta,\phi)\,,
\end{align*}
and for $s=-2$,
\begin{align*}
&\hat\upphi_{2}\Big|_{\mc I^+}(u,\theta,\phi)\\
&\quad=\sum_{j=1}^{2}\lp(\int_{-1/2}^{u}\int_{-1/2}^{ u'}\chi_j^-( u'')d u''du'\rp) \lp(\mathring{\slashed{\triangle}}^{[-2]}+4\rp)\lp(\mathring{\slashed{\triangle}}^{[-2]}+2\rp)\hat{\upphi}_{0,j}(\theta,\phi)\\
&\quad\qquad -\sum_{j=1}^{2}\int_{-1/2}^u a\chi_j^-( u)d u'\lp[4i\cos\theta+(Z+2i\cos\theta)\lp(2\mathring{\slashed{\triangle}}^{[-2]}+6-a(Z+2i\cos\theta)\rp)\rp]\hat{\upphi}_{0,j}(\theta,\phi)\\
&\quad\qquad -\frac14\sum_{j=1}^2 \lp[a^2\chi_j^-(u)\lp(\mathring{\slashed{\triangle}}^{[-2]}+4-a(Z+2i\cos\theta)\rp)\hat{\upphi}_{(0,j)}(\theta,\phi)+\frac{1}{4}a^4\sin^4\theta \p_u\chi_j^-(u)\rp]\hat{\upphi}_{0,j}(\theta,\phi)\\
&\quad\qquad -\sum_{j=1}^2\int_{-1/2}^u\chi_j^-( u')d u' \lp(3M+5aZ\rp)\hat\upphi_{0,1}(\theta,\phi)\,.
\end{align*}
With the assumptions made on the cutoff functions $\chi_j^-$, we deduce that 
\begin{align*}
\hat\upphi_{1}\Big|_{\mc I^+}(u=-1/4,\theta,\phi) = -b_{1,1} \lp(\mathring{\slashed{\triangle}}^{[s]}+|s|\rp)\hat{\upphi}_{0,1}(\theta,\phi)\,,
\end{align*}
from which $\hat{\upphi}_{0,1}$ can be uniquely determined, and for $s=-2$ additionally we have
\begin{align*}
&\hat\upphi_{2}\Big|_{\mc I^+}(u=-1/4,\theta,\phi)\\
&\quad= b_{2,2}\lp(\mathring{\slashed{\triangle}}^{[-2]}+4\rp)\lp(\mathring{\slashed{\triangle}}^{[-2]}+2\rp)\hat{\upphi}_{0,2}(\theta,\phi)+b_{1,2}\lp(\mathring{\slashed{\triangle}}^{[-2]}+4\rp)\lp(\mathring{\slashed{\triangle}}^{[-2]}+2\rp)\hat{\upphi}_{0,1}(\theta,\phi)\\
&\quad\qquad- b_{1,1}\lp[4ia\cos\theta+a(Z+2i\cos\theta)\lp(2\mathring{\slashed{\triangle}}^{[-2]}+6-a(Z+2i\cos\theta)\rp)+2(3M+5aZ)\rp]\hat{\upphi}_{0,1}(\theta,\phi)\,,
\end{align*}
from which $\hat{\upphi}_{0,2}$ can subsequently be uniquely determined.

\medskip
\noindent \textit{The case of $s>0$.} When $s<0$, we proceed similarly: we integrate \eqref{eq:constraint-horizon-1} and \eqref{eq:constraint-horizon-2} along $\mc{H}^+$. By \eqref{eq:scattering-ansatz-s>0}, integrating along $v\in[0,-1/4]$ yields that $\hat\upphi_{(k)}\Big|_{\mc H^+}(v=-1/4,\theta,\phi)$ are uniquely determined by the initial data at $v=0$; integrating along $u\in[-1/4,-1/2]$ allows us to determine the second term of \eqref{eq:scattering-ansatz-s>0} in such a way that $\hat\upphi_{k}\Big|_{\mc H^+}(v=-1/2,\theta,\phi)=0$. If $\hat\upphi_{k}\Big|_{\mc H^+}(v=-1/2,\theta,\phi)=0$, we have
\begin{align*}
&e^{\kappa v}\hat\upphi_1\Big|_{\mc H^+}(v,\theta,\phi)-e^{\kappa v}\lp(\frac{a\rho^2(r_+)}{2Mr_+}Z+\frac14a^2\sin^2\theta L-ias\cos\theta\rp)\hat\upphi_0\Big|_{\mc H^+}(v,\theta,\phi)\\
&\quad=-\sum_{j=1}^{|s|}\int_{-1/2}^ve^{\kappa  v'}\chi_j^+(v')d v'\lp(\mathring{\slashed{\triangle}}^{[+s]}_{\mc H}+s-\frac{2a}{M}(s-1)Z+\kappa r_+(s-1)(2s-1)\rp)\hat\upphi_{0,j}(\theta,\phi)\,,
\end{align*}
and, if $s=+2$,
\begin{align*}
&\hat\upphi_2\Big|_{\mc H^+}(v,\theta,\phi)-\lp(\frac{a\rho^2(r_+)}{2M^2}+\frac14a^2\sin^2\theta L -2ia\cos\theta\rp)\hat \upphi_1\Big|_{\mc H^+}(v,\theta,\phi)\\
&\qquad +\lp(\mathring{\slashed{\triangle}}^{[+2]}_{\mc H}+4-3\kappa r_+\rp)\lp(\frac14a^2\sin^2\theta\hat\upphi_0\Big|_{\mc H^+}(v,\theta,\phi)\rp)\\
&\quad=\sum_{j=1}^2\lp(\int_{-1/2}^v e^{-\kappa v'}\int_{-1/2}^{ v'} e^{\kappa v''}\chi_j^+(v'') dv'' d v'\rp) \lp(\mathring{\slashed{\triangle}}^{[+2]}_{\mc H}+4-3\kappa r_+\rp)\lp(\mathring{\slashed{\triangle}}^{[+2]}_{\mc H}+2-\frac{2a}{M}Z+3\kappa r_+\rp) \hat \upphi_{0,j}(\theta,\phi) \\
&\quad\qquad -\sum_{j=1}^{2}\int_{-1/2}^v\chi_j^+( v')d v'\lp[\lp(\mathring{\slashed{\triangle}}^{[+2]}_{\mc H}+4-3\kappa r_+\rp)\lp(\frac{a\rho(r_+)}{2Mr_+}Z-2ia\cos\theta\rp)-\kappa r_+(3r_+-10aZ)\rp]\hat \upphi_{0,j}(\theta,\phi) \,.
\end{align*}
Using the assumptions on $\chi_j^+$, we deduce that
\begin{align*}
&\hat\upphi_1\Big|_{\mc H^+}(v=-1/4,\theta,\phi)=-b_{1,1} \lp(\mathring{\slashed{\triangle}}^{[+s]}_{\mc H}+|s|-\frac{2a}{M}(s-1)Z+\kappa r_+(s-1)(2s-1)\rp)\hat\upphi_{0,1}(\theta,\phi)\,,
\end{align*}
from which $\hat\upphi_{0,1}(\theta,\phi)$ can be uniquely determined, and 
\begin{align*}
&\hat\upphi_2\Big|_{\mc H^+}(v=-1/4,\theta,\phi)-\lp(\frac{a\rho^2(r_+)}{2M^2}+\frac14a^2\kappa\sin^2\theta -2ia\cos\theta\rp)\upphi_1\Big|_{\mc H^+}(v=-1/4,\theta,\phi)\\
&\quad=\sum_{j=1}^2b_{j,2} \lp(\mathring{\slashed{\triangle}}^{[+2]}_{\mc H}+4-3\kappa r_+\rp)\lp(\mathring{\slashed{\triangle}}^{[+2]}_{\mc H}+2-\frac{2a}{M}Z+3\kappa r_+\rp) \hat \upphi_{0,j}(\theta,\phi) \\
&\quad\qquad -\tilde b_{1,1}\lp[\lp(\mathring{\slashed{\triangle}}^{[+2]}_{\mc H}+4-3\kappa r_+\rp)\lp(\frac{a\rho(r_+)}{2Mr_+}Z-2ia\cos\theta\rp)-\kappa r_+(3r_+-10aZ)\rp]\hat \upphi_{0,1}(\theta,\phi)  \,,
\end{align*}
from which one can then uniquely determine $\hat\upphi_{0,2}(\theta,\phi)$. 

Note that, in order to argue that $\hat\upphi_{0,j}$ can be obtained \textit{uniquely} from the previous identities, we are appealing to the invertibility of the angular operators 
\begin{align*}
\mathring{\slashed{\triangle}}^{[+2]}_{\mc H}+4-3\kappa r_+\,,\quad \mathring{\slashed{\triangle}}^{[+2]}_{\mc H}+2-\frac{2a}{M}Z+3\kappa r_+\,,\quad \mathring{\slashed{\triangle}}^{[+1]}_{\mc H}+1-\frac{2a}{M}Z\,,
\end{align*}
where $\mathring{\slashed{\triangle}}^{[s]}_{\mc H}$ is defined in  \eqref{eq:def-triangle-H}. Indeed, if $a=0$ and in the space of axisymmetric functions the above operators reduce to, respectively,
\begin{align*}
\mathring{\slashed{\triangle}}^{[+2]}+1+3(1-\kappa r_+)\,,\quad \mathring{\slashed{\triangle}}^{[+2]}_{\mc H}+2+3\kappa r_+\,,\quad \mathring{\slashed{\triangle}}^{[+1]}+1\,,
\end{align*}
whose smallest eigenvalues are lower bounded by that of $\mathring{\slashed{\triangle}}^{[s]}+1$, and so the operators are invertible. If $a\neq 0$ and we do not restrict to axisymmetry, then the terms $(a/M)Z$ lead to a nontrivial imaginary part of the eigenvalues of the operators which ensure their invertibility. 
\end{proof}

It remains to obtain energy estimates for the past extended solutions:

\begin{proof}[Proof of Lemma~\ref{prop:scattering-construction}\ref{it:scattering-construction-energy}] 
Since $\hat\upphi_k$ solve the transformed system on $\mc{R}_{(-1,0)}$ it follows from the finite in time estimates of Proposition~\ref{prop:finite-in-time-first-order} that
\begin{align*}
\sum_{k=0}^{|s|}\mathbb{E}[\hat{\upphi}_{(k)}](-1)\leq B\sum_{k=0}^{|s|}\lp[\mathbb{E}[\tilde{\upphi}_{(k)}](0)+\mathbb{E}_{\mc I^+}[\hat{\upphi}_{(k)}](-1,0)+\mathbb{E}_{\mc H^+}[\hat{\upphi}_{(k)}](-1,0)\rp]\,.
\end{align*}
Thus, it remains to show that, for the extension we constructed, the energy fluxes through $\mc I^+_{(-1,0)}$ and $\mc H^+_{(-1,0)}$ can be controlled from data on $\Sigma_0$.

\medskip
\noindent \textit{The case of $\mc I^+_{(-1,0)}$}. 
In the case $s\geq 0$, our ansatz for the extension ensures that 
\begin{align*}
\int_{\mc I_{(-1,0)}^+}|\uL\hat\upphi_{k}|^2d\sigma d\tau &=  \int_{\mc I_{(-1,0)}^+}|\uL\uL^{k}\hat\upphi_{0}|^2d\sigma d\tau=0\,,
\end{align*}

In the case $s<0$, the construction at $\mc I^+_{(-1,0)}$ leads to the estimate
\begin{align*}
\int_{\mc I_{(-1,0)}^+}|\uL\hat\upphi_{k}|^2d\sigma d\tau &= \int_{\mc I_{(-\frac12,-\frac14)}^+}|\uL\hat\upphi_{k}|^2d\sigma d\tau + \int_{\mc I_{(-\frac14,0)}^+}|\uL\hat\upphi_{k}|^2d\sigma d\tau\\
&\leq \int_{\mc I_{(-\frac12,-\frac14)}}|\uL\hat\upphi_{k}|^2d\sigma d\tau + B\int_{\mathbb{S}^2}\lp|\hat\upphi_k\big|_{\mc I^+}(u=-1/4,\theta,\phi)\rp|^2 d\sigma \\
&\leq B\int_{\mc I_{(-\frac12,-\frac14)}^+}|\uL\hat\upphi_{k}|^2d\sigma d\tau \,.
\end{align*}
Suppose $k=1$; from \eqref{eq:constraint-null-infty-1}  we compute 
\begin{align*}
\uL \hat{\upphi}_1\Big|_{\mc I^+}(u)&=\chi_0(u)\mathrm{Ext}_N[\uL{\upphi}_1](u)+\chi_0'(u)\lp(\frac12a^2\sin^2\theta\uL+a(Z+i|s|\cos\theta)\rp)\mathrm{Ext}_N[{\upphi}_0](u)\\
&\qquad +\frac14a^2\sin^2\theta\chi_0''(u)(\mathring{\slashed{\triangle}}^{[s]}+|s|)^{-1}\mathrm{Ext}_N[(\mathring{\slashed{\triangle}}^{[s]}+|s|){\upphi}_0](u)
\\
&\qquad+\chi_0(u)\lp[\frac14a^2\sin^2\theta \uL\uL+aZ\uL+ia|s|\cos\theta \uL,\mathrm{Ext}_N\rp]{\upphi}_0(u)\\\
&=\chi_0(u)\mathrm{Ext}_N[\uL{\upphi}_1](u)+\chi_0'(u)\lp(\frac12a^2\sin^2\theta\uL+a(Z+i|s|\cos\theta)\rp)\mathrm{Ext}_N[{\upphi}_0](u)\\
&\qquad +\frac14a^2\sin^2\theta\chi_0''(u)(\mathring{\slashed{\triangle}}^{[s]}+|s|)^{-1}\mathrm{Ext}_N\lp[\lp(\frac14a^2\sin^2\theta\uL+a(Z+i|s|\cos\theta)\rp)\uL\upphi_0-\uL{\upphi}_1\rp](u)
\\
&\qquad+\chi_0(u)\lp[\frac14a^2\sin^2\theta \uL\uL+aZ\uL+ia|s|\cos\theta \uL,\mathrm{Ext}_N\rp]{\upphi}_0(u)\,,
\end{align*}
for $u\in(-1/4,0)$. Thus, we have the estimate
\begin{align*}
\int_{\mc I_{(-\frac12,-\frac14)}^+}|\uL\hat\upphi_{1}|^2d\sigma d\tau \leq B \int_{\mc I_{(\frac14,\frac12)}^+}\lp(|\uL\upphi_{1}|^2+|\uL\uL\upphi_0|^2+|Z\uL\upphi_0|^2+|\uL\upphi_0|^2\rp)d\sigma d\tau\,;
\end{align*} 
a similar computation for $k=2$ yields  
\begin{align*}
&\int_{\mc I_{(-\frac12,-\frac14)}^+}|\uL\hat\upphi_{2}|^2d\sigma d\tau \\
&\quad\leq B \int_{\mc I_{(\frac14,\frac12)}^+}\lp(|\uL\upphi_{2}|^2+|\uL\uL\upphi_1|^2+|\uL\upphi_1|^2+|Z\uL\upphi_1|^2\rp)d\sigma d\tau\\
&\quad\qquad + B \int_{\mc I_{(\frac14,\frac12)}^+}\lp(|\uL\uL\uL\upphi_0|^2+|Z\uL\uL\upphi_0|^2+|ZZ\uL\upphi_0|^2+|\uL\uL\upphi_0|^2+|Z\uL\upphi_0|^2+|\uL\upphi_0|^2\rp)d\sigma d\tau\,.
\end{align*}
Furthermore, since $\uL=-L+2\lp(T+\frac{a}{r^2+a^2}Z\rp)$ and we have the transport equation \eqref{eq:transformed-transport}, we can replace $\uL$ by $T$ above. Moreover, since $T$ and $Z$ are Killing, we can commute the previous estimates with these vector fields. Thus, from the finite in time estimates of Proposition~\ref{prop:finite-in-time-first-order} and the fact that $T$ and $Z$ are Killing vectors, we have
\begin{align*}
\sum_{k=0}^{|s|}\sum_{0\leq j_1+j_2\leq |s|-k}\mathbb{E}_{\mc I^+}[Z^{j_1}T^{j_2}\hat{\upphi}_{(k)}](-1,1)&\leq B\sum_{k=0}^{|s|}\sum_{0\leq j_1+j_2\leq |s|-k}\mathbb{E}_{\mc I^+}[Z^{j_1}T^{j_2}\tilde{\upphi}_{(k)}](0,1)\\
&\leq B\sum_{k=0}^{|s|}\sum_{j_1+j_2\leq |s|-k}\mathbb{E}[Z^{j_1}T^{j_2}\tilde{\upphi}_{(k)}](0)\,. 
\end{align*}

\medskip
\noindent \textit{The case of $\mc H^+_{(-1,0)}$}. 
If $s>0$, analogous formulas as those above may be derived from the construction of the ansatz for data on $\mc H^+_{(-1,0)}$, keeping in mind the triangle inequality
\begin{align*}
\lp|L\upphi_{1}\rp|^2\leq \lp|L(e^{\kappa v}\upphi_{1})\rp|^2+\kappa^2|\hat\upphi_{1}|\,.
\end{align*}
Repeating the procedure of the previous step, we have
\begin{align*}
\sum_{k=0}^{|s|}\mathbb{E}_{\mc H^+}[\hat{\upphi}_{k}](-1,1)&\leq B\sum_{k=0}^{|s|}\sum_{j_1+j_2\leq |s|-k}\mathbb{E}_{\mc H^+}[Z^{j_1}L^{j_2}\tilde{\upphi}_{k}](0,1)+B(M^2-a^2)\sum_{k=1}^{|s|-1}\overline{\mathbb{E}}_{\mc H^+}[\tilde{\upphi}_{k}](0,1) \\
&\leq B\sum_{k=0}^{|s|}\sum_{j_1+j_2\leq |s|-k}\mathbb{E}_{\mc H^+}[Z^{j_1}T^{j_2}\tilde{\upphi}_{k}](0,1)+B(M^2-a^2)\sum_{k=1}^{|s|-1}\overline{\mathbb{E}}_{\mc H^+}[\tilde{\upphi}_{k}](0,1) \,, 
\end{align*} 
where in the last line we used that, similarly to the previous case, the identity $L=-\uL+2\lp(T+\frac{a}{r^2+a^2}Z\rp)$ and the transport equation \eqref{eq:transformed-transport} allows us to replace $L$ by $K$.  Finite in time estimates of Proposition~\ref{prop:finite-in-time-first-order} and the fact that $T$ and $Z$ are Killing vectors then implies
\begin{align*}
&\sum_{k=0}^{|s|}\sum_{j_1+j_2\leq |s|-k}\mathbb{E}_{\mc H^+}[Z^{j_1}T^{j_2}\hat{\upphi}_{k}](-1,1)\\
&\quad\leq B\sum_{k=0}^{|s|}\sum_{j_1+j_2\leq |s|-k}\mathbb{E}_{\mc H^+}[Z^{j_1}L^{j_2}\tilde{\upphi}_{k}](0,1)+B(M^2-a^2)\sum_{k=1}^{|s|-1}\overline{\mathbb{E}}_{\mc H^+}[\tilde{\upphi}_{k}](0,1) \\
&\quad\leq B\sum_{k=0}^{|s|}\sum_{0\leq j_1+j_2\leq |s|-k}\mathbb{E}[Z^{j_1}T^{j_2}\tilde{\upphi}_{k}](0) +B(M^2-a^2)\sum_{k=1}^{|s|-1}\overline{\mathbb{E}}[\tilde{\upphi}_{k}](0)\,. 
\end{align*}

In the case $s\leq 0$, we have
\begin{align*}
\int_{\mc H_{(-1,0)}^+}|L\hat\upphi_{k}|^2d\sigma d\tau &=  \int_{\mc H_{(-1,0)}^+}\lp|L \lp(2Mr_+L-(|s|-k)\frac{r_+-M}{Mr_+}\rp)^k \hat\upphi_{0}\rp|^2d\sigma d\tau\,.
\end{align*}
If $|a|=M$, then our ansatz dictates that $\hat\upphi_{0}$ and its $L$ derivatives vanish. However, if $|a|< M$, the ansatz \eqref{eq:scattering-ansatz-naive} leads to 
\begin{align*}
\int_{\mc H_{(-1,0)}^+}|L\hat\upphi_{k}|^2d\sigma d\tau &\leq B \sum_{j=0}^k\int_{\mc H_{(0,1)}^+}\lp(|\tilde\upphi_j|^2+|L\tilde\upphi_{j}|^2\rp)d\sigma d\tau\,.
\end{align*}
Thus, we have
\begin{align*}
\sum_{k=0}^{|s|}\mathbb{E}_{\mc H^+}[\hat{\upphi}_{(k)}](-1,1)&\leq B\sum_{k=0}^{|s|}\mathbb{E}_{\mc H^+}[\tilde{\upphi}_{(k)}](0,1)\leq B \sum_{k=0}^{|s|}{\mathbb{E}}[\tilde{\upphi}_{(k)}](0)
\end{align*}
with $\mathbb{E}$ replaced by $\overline{\mathbb{E}}$ if $|a|<M$.
\end{proof}

\section{Precise statement of the main theorem}
\label{sec:precise-statement}

Having defined the energy norms we will consider in Section \ref{sec:template-energy-norms} and motivated them through the basic physical space estimates of Section \ref{sec:toolbox-physical-space}, we are finally ready to give a precise statement of our main theorem:

\begin{theorem}[Main result] \label{thm:main}
Fix $s\in\{0,\pm 1,\pm 2\}$, $M>0$, $a_0\in[0,M)$. For all $\delta\in(0,1]$ and $p\in[0,2)$, there exists a constant $C=C(a_0,M,|s|,\delta, p)$ such that all sufficiently regular solutions the \textit{homogeneous} system of PDEs from Definition~\ref{def:transformed-system} on the exterior of any subextremal Kerr black hole spacetime with parameters $(a,M)$ where $|a|\leq a_0$ satisfy the following estimates.
\begin{itemize}[noitemsep]
\item Energy flux boundedness:
\begin{align*}
&\overline{\mathbb{E}}_{p}[\swei{\Phi}{s}](\tau)+\overline{\mathbb{E}}_{\mc H^+}[\swei{\Phi}{s}](0,\tau)+\mathbb{E}_{\mc I^+,p}[\swei{\Phi}{s}](0,\tau)\\
&\qquad + \sum_{k=0}^{|s|-1}\lp(\overline{\mathbb{E}}^{|s|-k}_{p}[\swei{\tilde\upphi}{s}_k](\tau)+\mathbbm{1}_{\{s>0\}}\overline{\mathbb{E}}_{\mc H^+}^{|s|-k}[\swei{\tilde\upphi}{s}_k](0,\tau)+\mathbbm{1}_{\{s<0\}}{\mathbb{E}}_{\mc I^+,p}^{|s|-k}[\swei{\tilde\upphi}{s}_k](0,\tau)\rp)\\
&\quad \leq C\overline{\mathbb{E}}_{p}[\swei{\Phi}{s}](0)+ C\sum_{k=0}^{|s|-1}\overline{\mathbb{E}}^{|s|-k}_{p}[\swei{\tilde\upphi}{s}_k](0)\,.
\end{align*}
\item Integrated local energy decay: 
\begin{align*}
&\overline{\mathbb{I}}^{\mathrm{deg},J}_{-\delta,p}[\swei{\Phi}{s}](0,\tau)+\sum_{k=0}^{|s|-1}\lp(\overline{\mathbb{I}}^{\mathrm{deg},|s|-k}_{-\delta,p}[\swei{\tilde\upphi}{s}_k](0,\tau)+\overline{\mathbb{I}}^{|s|-k-1}_{-\delta,p}[\swei{\tilde\upphi}{s}_k](0,\tau)\rp)\\
&\quad\leq C\overline{\mathbb{E}}_{p}[\swei{\Phi}{s}](0)+ C\sum_{k=0}^{|s|-1}\overline{\mathbb{E}}^{|s|-k}_{p}[\swei{\tilde\upphi}{s}_k](0)\,.
\end{align*}

\item Higher-order energies estimates: for instance, as long as $p\neq 0$, we have
\begin{align*}
&\overline{\mathbb{E}}^{J}_{p}[\swei{\Phi}{s}](\tau)+\overline{\mathbb{I}}^{\mathrm{deg}}_{-\delta,p}[\swei{\Phi}{s}_k](0,\tau)\\
&\qquad+\sum_{k=0}^{|s|-1}\lp(\overline{\mathbb{E}}^{J+|s|-k}_{p}[\swei{\tilde\upphi}{s}_k](\tau)+\overline{\mathbb{I}}^{\mathrm{deg},{J+|s|-k}}_{-\delta,p}[\swei{\tilde\upphi}{s}_k](0,\tau)+\overline{\mathbb{I}}^{{|s|-k-1}}_{-\delta,p}[\swei{\tilde\upphi}{s}_k](0,\tau)\rp)\\
&\quad\leq C\overline{\mathbb{E}}^{J}_{p}[\swei{\Phi}{s}](0)+C\sum_{k=0}^{|s|-1}\overline{\mathbb{E}}^{J+|s|-k}_{p}[\swei{\tilde\upphi}{s}_k](0)\,,
\end{align*}
where $C$ depends on $J\in \mathbb{N}_{\geq 1}$ as well.
\end{itemize}
The estimates above also hold for $p=2$ if $s\leq 0$.
\end{theorem}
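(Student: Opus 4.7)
\medskip

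\noindent\textbf{Proof plan.} The overall strategy follows the schematic outlined in the introduction: rather than attempting to estimate $\swei{\Phi}{s}$ and the lower level $\swei{\tilde\upphi}{s}_k$ directly via a closed physical space Morawetz argument (which one cannot hope to close in the full subextremal range without smallness in $|a|$), we use the frequency space information of Theorem~\ref{thm:high-low-freqs-intro}. The plan is to first introduce suitable cutoffs $\xi_{0,\tau}$, supported in $\tilde{t}^*\in(-1,\tau+1)$ and equal to $1$ in $(0,\tau)$, construct $\xi_{0,\tau}\swei{\tilde\upphi}{s}_k$ (extended by the backwards scattering construction of Proposition~\ref{prop:scattering-construction} to the past of $\Sigma_0$), Fourier transform in $t$ and decompose in spin-weighted spheroidal harmonics so that the Fourier components satisfy \eqref{eq:radial-ODEs-intro} with inhomogeneities $\smlk{\mathfrak{G}}{s}{k}$ generated solely by $[\xi_{0,\tau},\tilde{\mathfrak{R}}_k]$ and supported where $\nabla\xi_{0,\tau}\neq 0$. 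Plancherel, summation in $(m,l)$, and Theorem~\ref{thm:high-low-freqs-intro} then yield the core physical-space inequality
\begin{equation*}
\overline{\mathbb{I}}^{\mathrm{deg}}_{-\delta,p}[\swei{\Phi}{s}](0,\tau)+\sum_{k=0}^{|s|-1}\overline{\mathbb{I}}^{\mathrm{deg},|s|-k}_{-\delta,p}[\swei{\tilde\upphi}{s}_k](0,\tau)\leq B\sum_{k=0}^{|s|}\Bigl(\overline{\mathbb{E}}^{|s|-k}_{p}[\swei{\tilde\upphi}{s}_k](\tau)+\overline{\mathbb{E}}^{|s|-k}_{p}[\swei{\tilde\upphi}{s}_k](0)\Bigr),
\end{equation*}
with the crucial feature that trapping derivative loss is restricted to the highest-level variable $\swei{\Phi}{s}$, matching the $\overline{\mathbb{I}}^{\mathrm{deg},|s|-k}$ norms in the statement.

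The next step is to combine this bulk estimate with \eqref{eq:1st-order-estimate-for-energy-bddness} of Proposition~\ref{prop:estimates-assuming-ILED}. Under the a priori assumption that $\swei{\tilde\upphi}{s}_k\in L^2_{\tau\in(0,\infty)}$, I would let $\tau\to\infty$ in the above display, drop the $\overline{\mathbb{E}}(\tau)$ term, and obtain ILED outright. Feeding this ILED back into \eqref{eq:1st-order-estimate-for-energy-bddness}, and exploiting the fact (essentially from Theorem~\ref{thm:high-low-freqs-intro}) that the superradiant contribution and the coupling error terms $a\, c_{s,k,j}^Z$ admit a trapping-frequency localization so that the corresponding bulk terms can be absorbed via $\overline{\mathbb{I}}^{\mathrm{deg}}$, yields energy flux boundedness. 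This establishes Theorem~\ref{thm:main} conditionally on future $L^2$-integrability, which is the analogue for the transformed system of Theorem~\ref{thm:bddness-decay-future-int-intro}.

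The hard part, and the main obstacle, is removing the future integrability assumption. Following \cite{Dafermos2016b}, I would define
\begin{equation*}
\mathbb{A}:=\bigl\{|a|\in[0,M)\colon \text{all solutions arising from regular data are }L^2_{\tau\in(0,\infty)}\bigr\}
\end{equation*}
and show it is both open and closed (it contains $0$ by the unconditional $|a|\ll M$ theory). Closedness is immediate from the conditional energy boundedness just derived, since $\sup_\tau \overline{\mathbb{E}}_p[\cdot](\tau)<\infty$ implies the required $L^2_\tau$ integrability via the already-established ILED. Openness is the delicate step: fixing an azimuthal number $m$, I would set up a perturbative argument by considering a system whose operator interpolates between $\tilde{\mathfrak{R}}_k^{(\mathring{a},M)}$ to the future of $\Sigma_{\tau-\delta_0}$ and $\tilde{\mathfrak{R}}_k^{(a,M)}$ to the past of $\Sigma_0$, exploiting the sharper version of Theorem~\ref{thm:high-low-freqs-intro} in \cite{SRTdC2020} which gives only a half-derivative loss at trapping (rather than a full one). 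The estimate \eqref{eq:estimate-for-openness-first-order} of Proposition~\ref{prop:estimates-assuming-ILED} is tailored to this purpose: the bulk error on its right-hand side is lower-order in derivatives (for fixed $m$), so choosing $\delta_0$ small and $|a-\mathring{a}|$ sufficiently small makes the resulting error absorbable, proving $a\in\mathbb{A}$.

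Finally, with unconditional first-order energy boundedness and ILED in hand, higher-order estimates follow by commuting with the Killing fields $T$, $Z$ and with $X^*$ and $\mathring{\slashed\nabla}^{[s]}$, using the commutator Lemmas~\ref{lemma:radial-commutation} and \ref{lemma:redshift-commutation} to handle the radial and redshift commutators; the mechanism is already packaged in Proposition~\ref{prop:higher-order-bulk-flux}, so what remains is to close the bootstrap by applying the first-order result to $T^{J_1}Z^{J_2}$-commuted quantities and iterating. The $r^p$-type weighted estimates captured by the $p\in[0,2)$ range (and $p=2$ when $s\leq 0$) follow from the weighted multiplier identities of Lemma~\ref{lemma:phys-space-rp-multiplier-identity} combined with the bulk flux estimate of Proposition~\ref{prop:bulk-flux-large-r}, noting that the extra coupling terms generated at each level $k<|s|$ can be absorbed inductively using the transport estimates \eqref{eq:phys-space-transport-identity-s<0}--\eqref{eq:phys-space-transport-identity-s>0}. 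The polynomial decay in $\tau$ then follows by the standard $r^p$-hierarchy argument of \cite{Dafermos2010a,Moschidis2016} applied level by level from $k=0$ up to $k=|s|$.
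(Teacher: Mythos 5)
Your proposal follows the same high-level architecture as the paper: hyperboloidal cutoffs converting the frequency estimates of Theorem~\ref{thm:high-low-freqs-intro} into a bulk estimate of the form \eqref{eq:non-future-int-intro}; obtaining ILED and energy boundedness for future-integrable solutions; removing the future-integrability hypothesis by showing the set $\mathbb{A}$ (reduced to fixed azimuthal $m$) is nonempty, closed, and open; and deriving higher-order and decay estimates afterwards. The endpoints and the order of operations all match the paper's Theorems~\ref{thm:ILED-futurint}, \ref{thm:energy-bddness-futurint}, \ref{thm:decay}, \ref{thm:cont-arg}.

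There is, however, a genuine gap in your treatment of the energy boundedness step. You write that ``feeding this ILED back into \eqref{eq:1st-order-estimate-for-energy-bddness}, and exploiting the fact\ldots that the superradiant contribution and the coupling error terms admit a trapping-frequency localization so that the corresponding bulk terms can be absorbed via $\overline{\mathbb{I}}^{\mathrm{deg}}$, yields energy flux boundedness.'' But for a single fixed multiplier vector field $T+\upomega_+\chi Z$, the bulk error terms in \eqref{eq:1st-order-estimate-for-energy-bddness} (notably $\upomega_+\int|\chi'||Z\Phi|^2$ and the coupling terms $\Re[Z\Phi\,\overline{Z\tilde\upphi_j}]$) do \emph{not} degenerate at trapping, and $\mathbb{I}^{\mathrm{deg}}$ cannot absorb them; the ILED alone is not sufficient. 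What is actually needed is the decomposition of the \emph{solution itself} into frequency-localized pieces $\Phi_{\cutr}^{(i)}=\mathcal{P}_{\mathcal{C}_i}[\Phi_{\cutr}]$ with $r_{\rm trap}(\omega,m,l)$ confined to a narrow interval $I_i$ on each piece, so that one may choose an $i$-dependent cutoff $\chi_i$ whose gradient is supported away from $I_i$, making each bulk error controllable by $\mathbb{I}^{\mathrm{deg}}[\Phi_{\cutr}^{(i)}]$ before recombining via Plancherel (this is the content of Step~2 of the proof of Theorem~\ref{thm:energy-bddness-futurint}). For that frequency decomposition to even make sense, the solution must first be extended to all of $\mathcal{R}$ via the backward scattering construction of Proposition~\ref{prop:scattering-construction} and then radially cut off to compact $r^*$ support (Proposition~\ref{prop:ODE-to-PDE-radial-cutoff-ILED}), and the initial slice must be sent to $\tau_0\to-\infty$ along a pigeonhole sequence to eliminate the $\sum_i\mathbb{E}[\tilde\upphi_j^{(i)}](\tau_0)$ term, which is otherwise uncontrollable after the splitting. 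None of this appears in your outline, and it is the decisive mechanism for closing energy boundedness in the full $|a|<M$ range. A secondary, minor inaccuracy is that in the openness step you point to \eqref{eq:estimate-for-openness-first-order} directly; the paper in fact routes this through the derivative-gain Lemma~\ref{lemma:derivative-gain}, whose proof uses the half-derivative loss at trapping via the $T^{1/2}$ microlocal decomposition, together with the interpolated operator $\hat{\mathfrak{R}}_{0,\tau}$; \eqref{eq:estimate-for-openness-first-order} is an input to that lemma rather than the estimate one iterates.
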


From the result above, we obtain the boundedness and decay in time of solutions to \eqref{eq:teukolsky-alpha}:

\begin{corollary}[Decay] \label{cor:main} Fix $s\in\{0,\pm 1,\pm 2\}$, $M>0$, $a_0\in[0,M)$ as in Theorem~\ref{thm:main}. For all $\eta\in(0,1)$, there exists a constant $C=C(a_0,M,|s|, \eta)$ such that all sufficiently regular solutions the \textit{homogeneous} system of PDEs from Definition~\ref{def:transformed-system} on the exterior of any subextremal Kerr black hole spacetime with parameters $(a,M)$ where $|a|\leq a_0$ remain uniformly bounded in time  and, in fact, decay in time:
\begin{align*}
s\leq 0 &\implies \sup_{\Sigma_\tau} \lp(|\swei{\Phi}{s}|^2+ |\swei{\tilde\upphi}{s}_k|^2\rp) \leq C \sqrt{\overline{\mathbb{D}}_{2,4}} \tau^{-1/2}\,,\\
s> 0 &\implies\sup_{\Sigma_\tau} \lp(|\swei{\Phi}{s}|^2+ |\swei{\tilde\upphi}{s}_k|^2\rp) \leq C \sqrt{\overline{\mathbb{D}}_{2-\eta,4}} \tau^{-1/2-\eta/2}\,,
\end{align*}
for $\tau>1$, where $\overline{\mathbb{D}}_{2,4}$ is a higher order norm at the level $J=4$ in the notation of Theorem~\ref{thm:main}.
\end{corollary}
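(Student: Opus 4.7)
The plan is to upgrade Theorem~\ref{thm:main} to pointwise decay by a two-step program: extract polynomial decay of the higher-order energy fluxes using the $r^p$-method of Dafermos--Rodnianski, and then convert the resulting energy bounds into a supremum bound via Sobolev embedding on the spacelike slices $\Sigma_\tau$. First I would invoke Theorem~\ref{thm:main} at $J = 4$, with $\delta \in (0,1)$ and $p = 2$ when $s \leq 0$, respectively $p = 2 - \eta$ when $s > 0$; the restriction $p < 2$ in the positive-spin case is unavoidable because the $r^p$ identity of Lemma~\ref{lemma:phys-space-rp-multiplier-identity} only closes for $p \in [1,2)$ on the peeling-compatible rescaling $\swei{\dbtilde\upphi}{s}_k$, and it is precisely this restriction that forces the $\eta$-loss in the stated decay rate.

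Next I would apply the $r^p$-identity in a telescoping dyadic-time window, combined with the redshift identities of Lemma~\ref{lemma:phys-space-redshift-multiplier-identity}, to derive for each $p' \in (0, p]$ and each $J' \leq J$ an energy hierarchy of the form
\begin{align*}
\overline{\mathbb{E}}^{J'}_{p'}[\swei{\Phi}{s}](\tau) + \int_\tau^{\infty} \overline{\mathbb{E}}^{J'}_{p'-1}[\swei{\Phi}{s}](\tau')\,d\tau' \lesssim \overline{\mathbb{E}}^{J'}_{p'}[\swei{\Phi}{s}](\tau_0) + \text{(coupling to $\swei{\tilde\upphi}{s}_k$)}\,,
\end{align*}
with analogous statements for the lower-level variables. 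A dyadic pigeonhole combined with the near-monotonicity of $\overline{\mathbb{E}}^{J'}_{p'-1}$ under forward evolution then yields polynomial decay of the shape $\overline{\mathbb{E}}^{J-2}_0(\tau) \lesssim \tau^{-p}\,\overline{\mathbb{D}}_{p,4}$. Because the bulk estimate \eqref{eq:ILED-for-decay} degenerates at trapping, the iteration is commuted throughout with the Killing vector field $T$ (which commutes with the whole transformed system), so that the ``$p = 2$ flux'' used at each step is always controlled at one extra derivative. Pointwise decay for $\swei{\Phi}{s}$ then follows by Sobolev embedding on $\Sigma_\tau$---whose induced Riemannian metric is uniformly non-degenerate for $|a| \leq a_0 < M$---together with a trace-type Cauchy--Schwarz inequality schematically of the form
\begin{align*}
\sup_{\Sigma_\tau}|\swei{\Phi}{s}|^2 \lesssim \sqrt{\,\overline{\mathbb{E}}^{2}_p[\swei{\Phi}{s}](0)\cdot \overline{\mathbb{E}}^{2}_p[\swei{\Phi}{s}](\tau)\,}\,,
\end{align*}
which produces the stated rates once one substitutes the polynomial energy decay into the right-hand side. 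The lower-level variables $\swei{\tilde\upphi}{s}_k$ with $k < |s|$ are then handled by integrating the transport relation \eqref{eq:transformed-transport} along null generators of $\mc{I}^+$ (if $s < 0$) or $\mc{H}^+$ (if $s > 0$) and combining with the pointwise decay already obtained for $\swei{\Phi}{s}$, exactly as in the peeling argument of Remark~\ref{rmk:peeling-fixed-tau}.

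The main obstacle will be closing the $r^p$-hierarchy for the coupled transport--hyperbolic system. The coupling terms on the right-hand side of \eqref{eq:transformed-k-tilde} mix $\swei{\Phi}{s}$ with the lower-level variables $\swei{\tilde\upphi}{s}_k$ through operators which are not manifestly bounded by the appropriate $r^p$-flux, so the iteration must be organised as a nested induction in $(k, p')$: starting from the highest $k = |s|$ and the largest weight $p' = p$, one strips off one power of $r$ at a time, using the transport relations to absorb couplings from $k' > k$ and the decay already established at lower weights to absorb couplings from $k' < k$. In the scalar case $s = 0$ this is carried out in detail in \cite{Dafermos2016b}; for $s = \pm 1, \pm 2$ it extends with purely technical but routine modifications using only the commutation and multiplier identities already assembled in Section~\ref{sec:toolbox-physical-space}, with no new conceptual input required.
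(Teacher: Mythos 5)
Your proposal follows essentially the same route as the paper's proof of Theorem~\ref{thm:decay}: apply the flux and bulk hierarchy of Lemma~\ref{lemma:ILED-for-non-future-int} and Proposition~\ref{prop:higher-order-bulk-flux} with $p=2$ (respectively $p=2-\eta$), run the Dafermos--Rodnianski dyadic pigeonhole iterated between $p$ and $p-1$, interpolate in the $p$-weight to handle the $\eta$-loss when $s>0$, and then convert energy decay to pointwise decay by a weighted Sobolev/trace inequality on $\Sigma_\tau$. One small inaccuracy: you propose to handle the lower-level $\swei{\tilde\upphi}{s}_k$ ($k<|s|$) by a separate step of integrating the transport relation along null generators of $\mc{I}^+$ or $\mc{H}^+$, but this only gives information on those boundary hypersurfaces rather than on $\Sigma_\tau$; the paper instead keeps all $k$ together in the summed energies $\sum_{k}\overline{\mathbb{E}}^{|s|-k}_p[\swei{\tilde\upphi}{s}_k]$ throughout the $r^p$ hierarchy, so that pointwise decay for each $k$ drops out of the Sobolev step directly without any post-processing.
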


\begin{remark}[On the $s>0$ weights] \label{rmk:peeling} For $s>0$, we note that in  Theorem~\ref{thm:main} and Corollary~\ref{thm:main} the norms that we consider on the initial data, and that we propagate in time, have weaker $r$-weights than those consistent with so-called peeling properties for $\swei{\upalpha}{s}$. We direct the reader to \cite{Christodoulou2012,Kehrberger2022} for further comments on peeling, and why it is considered an unphysical assumption. Our choice of norms matches, up to factors of $r^\eta$, the norms in \cite{Dafermos2017}. 

Finally, note that by assuming that stronger weighted norms of the initial data are controlled, one naturally recovers stronger results in the evolution: see, for instance, \cite[Remark 4.4]{Pasqualotto2016} for $s=\pm 1$ and $a=0$, and the more recent \cite{Ma2021a} for sharp decay under stronger conditions.
\end{remark}

By the well-posedness statement in Proposition~\ref{prop:well-posedness}, starting from suitably regular data for the homogeneous transformed system on $\Sigma_0$, we can evolve it to produce a solution in a finite hyperboloidal slab, say, $\tilde{\mc{R}}_{(0,1)}$. It turns out that, without loss of generality, we can assume that such a solution has some useful support properties:

\begin{lemma}[Reduction to compact support in hyperboloidal slab] \label{lemma:reduction-argument} It is enough to prove Theorem~\ref{thm:main} and Corollary~\ref{thm:main} in the case where the solutions, $\swei{\tilde\upphi}{s}_k$, of the homogeneous transformed system of Definition \ref{def:transformed-system} satisfy the following conditions:
\begin{enumerate}[label=(\roman*)]
\item $\swei{\tilde\upphi}{s}_k\big|_{\tilde{\mc{R}}_{(0,1)}}$ is smooth for all $k=0,\dots,|s|$;
\item $\swei{\tilde\upphi}{s}_k\big|_{\tilde{\mc{R}}_{(0,1)}}$ is compactly supported for all $k=0,\dots,|s|$;
\item $\swei{\tilde\upphi}{s}_k\big|_{\tilde{\mc{R}}_{(0,1)}}$ is supported away from $\mc H^+_{(0,1)}$ for all $k=0,\dots,|s|$.
\end{enumerate}
\end{lemma}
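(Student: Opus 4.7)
The plan is a standard three-step density-plus-finite-speed-of-propagation argument, exploiting linearity of the transformed system and the continuous dependence on data provided by Proposition~\ref{prop:well-posedness}.

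\textbf{Step 1 (density).} Fix data $(\tilde\upphi^{[s],\circ}_0, \tilde\upphi^{[s],\circ\circ}_0)$ for which we wish to verify Theorem~\ref{thm:main}. The energy norms $\overline{\mathbb{E}}^{J+|s|-k}_p[\cdot](0)$ introduced in Section~\ref{sec:template-energy-norms} are weighted Sobolev norms on $\Sigma_0$, and the ambient space admits smooth, compactly supported spin-weighted functions supported away from $\mc H^+\cap\Sigma_0$ as a dense subset. Hence I can choose a sequence $(\tilde\upphi^{[s],\circ}_{0,n}, \tilde\upphi^{[s],\circ\circ}_{0,n})$ of such functions converging to the original data in the norms that appear on the right-hand sides of the estimates in Theorem~\ref{thm:main} (and Corollary~\ref{cor:main}). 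Using the transport relations \eqref{eq:transformed-transport-tilde} to read off $\tilde\upphi^{[s]}_{k,n}\big|_{\Sigma_0}$ from $\tilde\upphi^{[s],\circ}_{0,n}$ and $\tilde\upphi^{[s],\circ\circ}_{0,n}$ via the equation \eqref{eq:transformed-constraint-tilde} for $k\geq 1$, the induced data for the higher-level variables inherit compact support, support away from $\mc H^+\cap\Sigma_0$, and convergence in the corresponding norms.

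\textbf{Step 2 (finite speed of propagation).} For each $n$, Proposition~\ref{prop:well-posedness} produces smooth solutions $\swei{\tilde\upphi}{s}_{k,n}\in\mathscr{S}_\infty^{[s]}(\mc R_0)$, settling (i). In the compact slab $\tilde{\mc R}_{(0,1)}$, finite speed of propagation for the wave equation \eqref{eq:transformed-k-tilde} applied to $\Phi^{[s]}_n$, combined with the transport equations \eqref{eq:transformed-transport-tilde} for the lower-level variables (which propagate support only along the integral curves of $\mc L$), implies that each $\swei{\tilde\upphi}{s}_{k,n}\big|_{\tilde{\mc R}_{(0,1)}}$ is compactly supported, giving (ii). Moreover, since $\mc H^+_{(0,1)}$ is generated as the causal image of $\mc H^+\cap\Sigma_0$ under the null generator $\mc L$ along $\mc H^+$, the assumption that the approximating data vanishes in a fixed neighborhood of $\mc H^+\cap\Sigma_0$ guarantees, again by finite speed of propagation, that $\swei{\tilde\upphi}{s}_{k,n}$ vanishes in a neighborhood of $\mc H^+_{(0,1)}$, yielding (iii).

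\textbf{Step 3 (passage to the limit).} Assuming Theorem~\ref{thm:main} (resp.\ Corollary~\ref{cor:main}) has been proven for data satisfying (i)--(iii), I apply it to each $\swei{\tilde\upphi}{s}_{k,n}$ to obtain uniform estimates whose constants depend only on $a_0,M,|s|,\delta,p,J$ and not on $n$. By linearity, the differences $\swei{\tilde\upphi}{s}_{k,n}-\swei{\tilde\upphi}{s}_{k,m}$ are also solutions of the homogeneous transformed system and inherit the same estimates from the differences of the data, which tend to zero as $n,m\to\infty$ by Step~1. Thus $\{\swei{\tilde\upphi}{s}_{k,n}\}$ is Cauchy in the norms appearing on the left-hand sides of Theorem~\ref{thm:main}; combined with the continuity of the solution map from Proposition~\ref{prop:well-posedness}, the limit coincides with the solution of the original data, and the estimates are preserved by lower semicontinuity of the norms. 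The same argument transfers the pointwise statement in Corollary~\ref{cor:main} through uniform convergence on compact subsets of $\Sigma_\tau$.

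The only non-routine point is the simultaneous density in Step~1: one must verify that a single approximating sequence for $(\tilde\upphi^{[s],\circ}_0, \tilde\upphi^{[s],\circ\circ}_0)$ induces convergence of the derived data $\tilde\upphi^{[s]}_{k,n}\big|_{\Sigma_0}$ for all $k=1,\dots,|s|$ in their respective (higher-order weighted) norms. This follows because \eqref{eq:transformed-transport-tilde} and \eqref{eq:transformed-constraint-tilde} express $\tilde\upphi^{[s]}_{k,n}\big|_{\Sigma_0}$ as the action of an explicit differential operator on $(\tilde\upphi^{[s],\circ}_{0,n}, \tilde\upphi^{[s],\circ\circ}_{0,n})$ with coefficients smooth away from $\mc H^+\cap\Sigma_0$, so convergence at the appropriate regularity level is automatic once the data for $\tilde\upphi_0$ are approximated at the corresponding higher level.
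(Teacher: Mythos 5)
Your Step~2 contains a genuine gap: \emph{finite speed of propagation does not imply that the solution is compactly supported in $\tilde{\mathcal{R}}_{(0,1)}$, nor that it vanishes near $\mathcal{H}^+_{(0,1)}$ or $\mathcal{I}^+_{(0,1)}$, even when the data on $\Sigma_0$ has those support properties.} The hyperboloidal slices $\Sigma_\tau$ are asymptotically null, so the future domain of influence of a compact region of $\Sigma_0$ contains all of $\Sigma_\tau\cap\{r\geq R\}$ for $\tau>0$: outgoing null rays from the edge of the support travel along with the foliation and reach $\mathcal{I}^+$ at retarded times inside $(0,1)$. (Concretely, in the Minkowski model with $\tau = t-\sqrt{1+r^2}$, compactly supported data at $\tau=0$ generates a solution whose support on $\Sigma_\tau$, $\tau>0$, is unbounded in $r$.) Similarly, ingoing rays from the bulk of the support reach $\mathcal{H}^+_{(0,1)}$, so vanishing of data near $\mathcal{H}^+\cap\Sigma_0$ does not propagate. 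This is precisely why the lemma is not a routine density argument.

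The paper's proof instead hinges on the \emph{backwards extension procedure} of Proposition~\ref{prop:scattering-construction}: one poses an auxiliary mixed spacelike--characteristic IVP to the \emph{past} of $\Sigma_0$, with data prescribed on $\mathcal{H}^+$ and $\mathcal{I}^+$ chosen (using the asymptotic conservation laws \eqref{eq:constraint-null-infty-1}--\eqref{eq:constraint-horizon-2}) so that the extended solution vanishes identically on $\mathcal{H}^+_{(-1,-\frac12)}$ and $\mathcal{I}^+_{(-1,-\frac12)}$, with energy on $\Sigma_{-1}$ controlled by the original data on $\Sigma_0$. After shifting/rescaling $\tau$, this extended solution has the desired support properties in the slab $\tilde{\mathcal{R}}_{(0,1)}$; the reduction then alternates this extension step with two separate density approximations. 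Your Step~1 also glosses over the point that for $s<0$ the density of compactly supported functions fails at the lower levels $k<|s|$ in the relevant weighted norms; the paper handles this by first approximating only at the top level $\Phi$, using the extension procedure, and only then running a second density argument for the remaining $k$. Your Step~3 (passage to the limit by linearity and well-posedness) is sound, but rests on the previous two steps and so does not rescue the argument.
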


\begin{proof}[Proof of Lemma~\ref{lemma:reduction-argument}] The reduction proceeds in several steps, combining density arguments with the extension procedure of Proposition~\ref{prop:scattering-construction}. We drop the superscripts below for readability. 
\begin{itemize}
\item We start with some basic density arguments. In the case $s\geq 0$, one can show that smooth compactly supported, in  $r\in[r_+,\infty)$, data $\tilde\upphi_0|_{\Sigma_0}$ are dense with respect to  the $\mathbb{E}_0$ norm which we assume are finite for  Theorem~\ref{thm:main}. Thus, without loss of generality, we can assume $\tilde\upphi_k|_{\Sigma_0}$ are smooth and compactly supported for $s<0$. (A similar result would hold for $s>0$ near $r=r_+$ if $|a|=M$: one could approximate the data by smooth functions  supported away from $r=r_+$.) In the case $s<0$, this is not true for all levels of the transformed system, but one can nevertheless show that it holds for the top level: without loss of generality, we can assume $\Phi|_{\Sigma_0}$ is smooth and compactly supported; for $k=0,\dots, |s|-1$, we can assume $\tilde\upphi_k|_{\Sigma_0}$ are smooth but not necessarily compactly supported.
\item We now invoke  Proposition~\ref{prop:scattering-construction}. After shifting and rescaling the $\tau$ coordinate, we have reduced the proof of Theorem \ref{thm:main} to initial data such that $\tilde\upphi_k$ vanish at $\mc{I}^+_{(0,1)}$ and at $\mc{H}^+_{(0,1)}$, for all $k=0,\dots, |s|$, but are not necessarily compactly supported.
\item Another density argument yields that the initial data of the previous step, $\tilde\upphi_k|_{\Sigma_0}$ for $k=0,\dots, |s|$,  can be assumed, without loss of generality, to be data which is smooth and compactly supported in $r\in(r_+,\infty)$ \textit{regardless of the sign of $s$}.
\item Finally, an extension procedure as in Proposition~\ref{prop:scattering-construction}, now with identically zero data set on $\mc{I}^+_{(-1,0)}$ and $\mc H^+_{(-1,0)}$, then shows that is is enough to consider data such that  $\tilde\upphi_k$ will be compactly supported away from $r=r_+$ along each hyperboloidal slice in a finite-time slab.
\end{itemize}
This yields all of the conditions in the statement. We emphasize that, if $|a|<M$, then the compact support near $r=r_+$ requires assuming that non-degenerate-at-$\mc H^+$ norms are finite, i.e.\ $\overline{\mathbb{E}}^{|s|-k}[\tilde\upphi_k](0)<\infty$. This condition would not be necessary in the case $|a|=M$.
\end{proof}

The proofs of Theorem~\ref{thm:main} and Corollary~\ref{cor:main}, which rely on results from the previous an ensuing sections, will be concluded in Section~\ref{sec:proof-main-thm}.

\section{The Teukolsky radial ODE and the transformed system of ODEs}
\label{sec:odes-big}

In this section, we present the main ordinary differential equations featuring in the present paper. We define our space of admissible frequency parameters in Section~\ref{sec:admissible-freqs}. In Section~\ref{sec:radial-ODEs}, we introduce the (radial) Teukolsky and transformed ODEs and discuss their boundary conditions and general properties. Finally, in Section~\ref{sec:physical-to-frequency} we discuss the connections between this section and  Section \ref{sec:pdes-big}.

\subsection{Admissible frequencies}
\label{sec:admissible-freqs}

For $\omega\in\mathbb{R}$ and $m\in\frac12\mathbb{Z}$, it will be convenient to define:
\begin{equation}
\xi := -i\frac{2M r_+}{r_+-r_-}(\omega-m\upomega_+) \,,\quad \beta:=2iM^2(\omega-m\upomega_+)\,, \quad \upomega_+:=\frac{a}{2Mr_+}\,. \label{eq:xi-upomega+}
\end{equation}

For the remainder of this paper, we will be interested in the following parameters:
\begin{definition}[Admissible frequencies] \label{def:admissible-freqs} Fix $s\in\frac12\mathbb{Z}$. 
\begin{enumerate}
\item We say the frequency parameter $m$ is admissible with respect to $s$ when, if $s$ is an integer, $m$ is also an integer and when, if $s$ is a half-integer, so is $m$.
\item We say the frequency pair $(m,l)$ is admissible with respect to $s$ when $m$ is admissible with respect to $s$, $l$ is an integer or half-integer if $s$ is an integer or half-integer, respectively, and $l\geq \max\{|m|,|s|\}$. 
\item We say the frequency triple $(\omega,m,l)$ is admissible with respect to $s$ when the pair $(m,l)$ is admissible with respect to $s$ and $\omega\in\mathbb{R}$.
\item We say the frequency triple $(\omega,m,\Lambda)$ is admissible with respect to $s$ when $m$ is admissible with respect to $s$ and $\Lambda\in\mathbb{R}$ satisfies
\begin{enumerate}[label = (\alph*), ref=\theenumi{}(\alph*)]
\item if $s=0$, $\Lambda \geq 2am\omega$ and  $\Lambda \geq m^2$; \label{it:admissible-freqs-triple-wave}
\item $\Lambda \geq \max\{|m|,|s|\}(\max\{|m|,|s|\}+1)-s^2-2|s||a\omega|$; \label{it:admissible-freqs-triple-Lambda-lower-bound}
\item $|\Lambda| \leq \Lambda +s^2+a^2\omega^2+4|s||a\omega|$; \label{it:admissible-freqs-triple-Lambda-upper-bound}
\item assuming $q:=a\omega/m\in [-B,-b]\cup[b,B]$ for some $0<b<B<\infty$, \label{it:admissible-freqs-triple-Lambda-q-bound}
\begin{itemize} 
\item if $q\in(1,B]$, $
\Lambda -2am\omega = 4\lp(l-|m|-|s|-\frac12\rp)|a\omega|\sqrt{\frac{q-1}{q}}+O(|a\omega|^{1/2})$ as $|a\omega|\to\infty$;
\item if $q=1$, $\Lambda -2am\omega \geq 3|a\omega|^{2/3}|s|^{4/3}+o(|a\omega|^{2/3})$ as $|a\omega|\to\infty$;
\item if $q\in[-B,-b]\cup[b,1)$, $\Lambda -2am\omega \geq m^2(q-1)^2+o(|a\omega|^{2})$ as $|a\omega|\to\infty$;
\end{itemize}
\item $\mathfrak{C}_s\geq 0$, for $|s|\leq 2$, for $\mathfrak{C}_s$ as given in \eqref{eq:TS-radial-constants}. \label{it:admissible-freqs-triple-Cs-positivity}
\end{enumerate}
\item We say the frequency triples $(\omega,m,l)$ and $(\omega,m,\Lambda)$ are admissible with respect to $s$ and $a$ when they are admissible with respect to $s$ and, moreover, $\omega$ satisfies $\omega\in\mathbb{R}\backslash\{0\}$ and, if $|a|=M$ $\omega\neq m\upomega_+$. 
\end{enumerate}
\end{definition}

\subsection{The radial ODEs and their boundary conditions}
\label{sec:radial-ODEs}

As in Section~\ref{sec:PDEs}, we will introduce a system of radial ODEs:
\begin{definition}[Transformed system of ODEs] \label{def:transformed-system-ODE}
Fix $s\in\mathbb{Z}$, and $(\omega,m,\Lambda)\in\mc F_{\rm admiss}$. Write $\mathcal{L}=L$ if $s<0$, $\mathcal{L}=\underline{L}$ if $s>0$; here and throughout the section we are considering, by abuse of notation, $L$ and $\uL$ to be given by their frequency space analogues,
\begin{align*}
L=\frac{d}{dr^*}-i\omega+\frac{iam}{r^2+a^2}\,,\quad \uL=-\frac{d}{dr^*}-i\omega+\frac{iam}{r^2+a^2}\,.
\end{align*}
We say that, for $k\in\{0,\dots |s|\}$, the functions $\smlambdak{\uppsi}{s}{k}$  are solutions to the transformed system with inhomogeneity $\smlambdak{\mathfrak H}{s}{k}$ if they satisfy the following ODEs.
\begin{itemize}
\item \textit{Transport ODEs}: if $s\neq 0$, 
\begin{align}
\smlambdak{\uppsi}{s}{k}&=\frac{1}{w}\mc{L}\smlambdak{\uppsi}{s}{k-1}\,,\qquad  k=1,...,|s|\,, \label{eq:transformed-transport-separated}\\
\smlambdak{\mathfrak{G}}{s}{k}&=\mc{L}\lp(\frac{\smlambdak{\mathfrak G}{s}{k-1}}{w}\rp)\,,\qquad  k=1,...,|s|\,, \label{eq:transformed-transport-inhom-separated}
\end{align}
for $k=1,\dots |s|$.

\item \textit{Radial ODEs}:  for $k=0,\dots,|s|$,
\begin{equation}
\begin{gathered}
\lp(\smlambdak{\uppsi}{s}{k}\rp)''-(|s|-k)\lp(\frac{w'}{w}\rp)\lp(\smlambdak{\uppsi}{s}{k}\rp)' + \lp(\omega^2-\smlambdak{\mc V}{s}{k}\rp)\smlambdak{\uppsi}{s}{k}\\
= \smlambdak{\mathfrak G}{s}{k}+aw\sum_{j=0}^{k-1}\lp(im c_{s,\,k,\,j}^{Z}+ c_{s,\,k,\,j}^{\mr{id}}\rp)\smlambdak{\uppsi}{s}{j}\,,
\end{gathered} \label{eq:transformed-k-separated}
\end{equation}
where   $c_{s,\,k,\,j}^{Z}$ and $c_{s,\,k,\,j}^{\mr{id}}$ are the functions introduced in Definition~\ref{def:transformed-system} (recall $ac_{s,\,1,\,0}^{\mr{id}}$ in \eqref{eq:transformed-k-separated} should be replaced by $c_{s,\,1,\,0}^{\mr{id}}$ when $|s|\neq 1$) and
the potential $\smlambdak{\mc V}{s}{k}$ is characterized by
\begin{align*}
\Re\smlambdak{\mc V}{s}{k}&=\frac{\Delta\Lambda+4Mram\omega-a^2m^2}{(r^2+a^2)^2}+w\lp[|s|+k(2|s|-k-1)\rp]\\
&\qquad+\frac{a^2\Delta w}{(r^2+a^2)^2} [1-2|s|-2k(2|s|-k-1)]\\
&\qquad +\frac{2Mr(r^2-a^2)w}{(r^2+a^2)^2}[1-3|s|+2s^2-3k(2|s|-k-1)]\,,\numberthis \label{eq:radial-ODE-potential-k-tilde}\\
\Im\smlambdak{\mc V}{s}{k}&=\sign s (|s|-k)\lp[\frac{w'}{w}\lp(\omega-\frac{am}{r^2+a^2}\rp)-\frac{4amrw}{(r^2+a^2)}\rp]\,. 
\end{align*}
In light of \eqref{eq:transformed-transport-separated} and \eqref{eq:transformed-k-separated}, we can recast the latter for $k<|s|$ as a transport ODE:
\begin{align*}
&\lp[\sign s\frac{d}{dr^*}-i\lp(\omega-\frac{am}{r^2+a^2}\rp)\rp]\smlambdak{\uppsi}{s}{k+1}-\sign s(|s|-k-1) \frac{w'}{w}\smlambdak{\uppsi}{s}{k+1}\\
&\quad =(2am\omega-\Lambda)\smlambdak{\uppsi}{s}{k}+\frac{2ar}{r^2+a^2}\sign s(2|s|-2k-1)im\smlambdak{\uppsi}{s}{k}-\lp(|s|+k(2|s|-k-1)\rp)\smlambdak{\uppsi}{s}{k}\\
&\quad\qquad -\lp(a^2w(1-2|s|-2k(2|s|-k-1))+\frac{2M r(r^2-a^2)}{(r^2~+a^2)^2}(1-3|s|-3k(2|s|-k-1))\rp)\smlambdak{\uppsi}{s}{k}\\
&\quad\qquad +\sum_{j=0}^{k}(ac_{s,k,j}^Zim+ac_{s,k,j}^{\rm id})\smlambdak{\uppsi}{s}{j}+\frac{1}{w}\smlambdak{\mathfrak G}{s}{k}\,.\numberthis\label{eq:transformed-constraint-separated}
\end{align*}

In particular, $\smlambda{\Psi}{s}:=\smlambdak{\uppsi}{s}{|s|}$ solves the transformed radial ODE
\begin{align}
\lp(\smlambda{\Psi}{s}\rp)''+\lp(\omega^2-\smlambda{\mc{V}}{s}\rp)\smlambda{\Psi}{s}=\smlambda{\mathfrak G}{s}+aw\sum_{k=0}^{|s|-1}\lp(im c_{s,\,|s|,\,k}^{Z}+ c_{s,\,|s|,\,k}^{\mr{id}}\rp)\smlambdak{\uppsi}{s}{k}\,, \label{eq:radial-ODE-Psi}
\end{align}
where $\smlambda{\mc{V}}{s}=\mc{V}^{[s]}_{m\Lambda,\,0}+{\mc{V}}^{[s]}_{m\Lambda,\,1}$ is real and given by
\begin{equation}
\begin{split}
\mc{V}^{[s]}_{m\Lambda,\,0}&:=\frac{4Mram\omega-a^2m^2+\Delta\Lambda}{(r^2+a^2)^2} \\
\mc{V}^{[s]}_{m\Lambda,\,1}&:=s^2\frac{\Delta}{(r^2+a^2)^2}+\frac{\Delta}{(r^2+a^2)^4}\lp[(1-2s^2)a^2\Delta + 2(1-s^2)Mr(r^2-a^2)\rp] \,.
\end{split} \label{eq:radial-ODE-Psi-potentials}
\end{equation}
\end{itemize}
\end{definition}

The transformed system in Definition~\ref{def:transformed-system-ODE} can be thought of as a generalization of the Teukolsky ODE when $s\in\mathbb{Z}$. Let $s\in\frac12\mathbb{Z}$, $M>0$, $|a|\leq M$ and $(\omega,m,\Lambda)$ be an admissible frequency triple. We say $\smlambda{\alpha}{s}$ is a solution to the Teukolsky radial ODE if
\begin{align*}
&\lp[\Delta^{-s}\frac{d}{dr}\lp(\Delta^{s+1}\frac{d}{dr}\rp)  +\frac{[\omega(r^2+a^2)-am]^2-2is(r-M)[\omega(r^2+a^2)-am]}{\Delta}\rp] \smlambda{\upalpha}{s}(r)\\
&\qquad\quad+\lp(4is\omega r -\Lambda-s+2am\omega\rp)\smlambda{\upalpha}{s}(r)=\frac{\Delta^{|s|/2(1-\sign s)}}{(r^2+a^2)^{|s|}}\smlambda{F}{s}(r)\,,\numberthis \label{eq:radial-ODE-alpha}
\end{align*}
where $\smlambda{F}{s}(r)$ is given. Defining 
\begin{equation}
\smlambda{u}{s}:=\Delta^{s/2}(r^2+a^2)^{1/2}\smlambda{\upalpha}{s}\,, \label{eq:def-u}
\end{equation}
 the Teukolsky radial ODE \eqref{eq:radial-ODE-alpha} can also be rewritten as
\begin{gather} \label{eq:radial-ODE-u}
\lp(\smlambda{u}{s}\rp)''  + \lp(\omega^2- V^{[s]}\rp) \smlambda{u}{s} =\smlambda{H}{s}\,,\qquad  \sml{H}{s}:=\frac{\Delta^{|s|/2+1}}{(r^2+a^2)^{|s|+3/2}} \smlambda{F}{s}\,,
\end{gather}
letting $'$ denote $r^*$ derivatives as usual. Here, $\swei{V}{s}=\swei{V}{s}_0+V_1+i\swei{V}{s}_2$, where
\begin{equation}
\begin{split}
\swei{V}{s}_0&:=\frac{4Mra m\omega -a^2m^2+\Delta\Lambda_{ml}^{[s]}}{(r^2+a^2)^2}\,, \\
\swei{V}{s}_1&:=s^2\frac{(r-M)^2}{(r^2+a^2)^2}+\frac{\Delta}{(r^2+a^2)^4}\lp(a^2\Delta+2Mr(r^2-a^2)\rp) \geq 0 \,,  \\
\swei{V}{s}_2&:=-\frac{2s\omega \lp(r(r^2+a^2)+M(a^2-3r^2)\rp)+2sam(r-M)}{(r^2+a^2)^2}\,.
\end{split} \label{eq:radial-ODE-u-potentials} 
\end{equation}
With $\smlambda{\upalpha}{s}$ or $\smlambda{u}{s}$ as the starting point, we have

\begin{lemma}[Teukolsky ODE and the transformed system of ODEs]\label{lemma:transformed-system-ODE} Fix $s\in\mathbb{Z}$ and $(\omega,m,\Lambda)\in\mc F_{\rm admiss}$. Let $\smlambda{\upalpha}{s}$ be a solution to the Teukolsky ODE (\ref{eq:radial-ODE-alpha}). Let $\swei{\uppsi}{s}_{0}$ be the rescaling of the Teukolsky radial variable given by
\begin{equation}\label{eq:def-psi0-separated}
\smlambdak{\uppsi}{s}{0}:=(r^2+a^2)^{-|s|+1/2}\Delta^{|s|(1+\mr{sign}\,s)/2}\smlambda{\alpha}{s}\,,
\end{equation}
and $\smlambdak{\mathfrak{H}}{s}{0}$ be the rescaling of the Teukolsky radial inhomogeneity given by
\begin{equation}
\smlambdak{\mathfrak{G}}{s}{0}:=w^{|s|+1}(r^2+a^2)^{1/2}\smlambda{F}{s}\,,\label{eq:def-G0-separated}
\end{equation}
Defining $\smlambda{\uppsi}{s}{k}$ and $\smlambda{\mathfrak G}{s}{k}$ by the system of transport equations \eqref{eq:transformed-transport-separated} and \eqref{eq:transformed-transport-inhom-separated}, we see that for $k=0,\dots, |s|$, $\smlambda{\uppsi}{s}{k}$ solve the transformed ODE system of Definition~\ref{def:transformed-system-ODE} with inhomogeneity $\smlambda{\mathfrak{G}}{s}{k}$. 
\end{lemma}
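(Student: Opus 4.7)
This lemma mirrors at the level of the radial ODE the transformation already established for the Teukolsky PDE in Lemma~\ref{lemma:transformed-system}, and the most conceptual route is to reduce to that earlier result by separation of variables. Specifically, in the case that $\Lambda = \bm\Lambda_{ml}^{[s],a\omega}$ coincides with a spheroidal eigenvalue as in Lemma~\ref{lemma:spheroidal-angular-ode}, the function
\[
\swei{\upalpha}{s}(t,r,\theta,\phi) := e^{-i\omega t + im\phi}\, S_{ml}^{[s],\,a\omega}(\theta)\, \sml{\upalpha}{s}(r)
\]
is, in Boyer--Lindquist coordinates, a mode solution of the Teukolsky PDE \eqref{eq:teukolsky-alpha}. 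Applying Lemma~\ref{lemma:transformed-system} yields the transformed PDE system for the separated variables $\swei{\upphi}{s}_k$, and stripping off the common angular and time dependence---using the eigenvalue relation for $S_{ml}^{[s],a\omega}$ to convert each appearance of $\mathring{\slashed\triangle}^{[s]}_T$ into an explicit scalar involving $\Lambda$, $am\omega$, $a^2\omega^2$, and $|s|$---recovers the ODE system \eqref{eq:transformed-k-separated}--\eqref{eq:radial-ODE-Psi} together with the transport relations \eqref{eq:transformed-transport-separated}--\eqref{eq:transformed-transport-inhom-separated}.

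Because Definition~\ref{def:admissible-freqs} permits arbitrary real $\Lambda$ subject only to inequality constraints, and not merely spheroidal eigenvalues, the above mode-reduction serves only as a guide and the algebraic identities must be verified directly at the ODE level. The plan is to proceed by induction in $k$. The base case $k=0$ is obtained by first putting the Teukolsky radial ODE \eqref{eq:radial-ODE-alpha} in the Regge--Wheeler--like form \eqref{eq:radial-ODE-u} via Chandrasekhar's rescaling \eqref{eq:def-u}, and then applying the further rescaling $\smlambdak{\uppsi}{s}{0} = w^{|s|/2}\smlambda{u}{s}$, which, combined with \eqref{eq:def-u}, reproduces exactly \eqref{eq:def-psi0-separated}. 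A direct calculation shows that this second rescaling introduces the first-order term $-|s|(w'/w)(\smlambdak{\uppsi}{s}{0})'$ matching \eqref{eq:transformed-k-separated} at $k=0$ and modifies the potential $\swei{V}{s}$ of \eqref{eq:radial-ODE-u-potentials} by a zeroth-order conjugation correction that yields $\smlambdak{\mc V}{s}{0}$; the coupling sum is vacuous at this level. For the inductive step, I would apply $w^{-1}\mc L$ to \eqref{eq:transformed-k-separated} at level $k$ and use \eqref{eq:transformed-transport-separated} to rewrite derivatives of $\smlambdak{\uppsi}{s}{k}$ in terms of $\smlambdak{\uppsi}{s}{k+1}$; the commutator of $\mc L$ with the left-hand side operator then produces the shifted potential $\smlambdak{\mc V}{s}{k+1}$ and the new coupling coefficients $ac_{s,k+1,j}^Z$, $ac_{s,k+1,j}^{\rm id}$.

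The main obstacle is the intricate bookkeeping at each inductive step: one must collect numerous terms into the precise real and imaginary parts of $\smlambdak{\mc V}{s}{k}$ prescribed by \eqref{eq:radial-ODE-potential-k-tilde}, and separately verify that the structural constraints on $c_{s,k,j}^Z$, $c_{s,k,j}^{\rm id}$ listed in Definition~\ref{def:transformed-system} propagate through the recursion. This obstacle can largely be sidestepped by appealing to the explicit calculations already performed in the PDE setting in \cite[Remark 3.2.1]{SRTdC2020}: each PDE algebraic identity descends to an ODE identity under the substitutions $\p_t \mapsto -i\omega$, $\p_\phi \mapsto im$, and the replacement of $\mathring{\slashed\triangle}^{[s]}_T$ by its spheroidal scalar value, so the desired ODE formulas can be read off directly from the corresponding PDE expressions.
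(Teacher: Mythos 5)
Your proposal is correct and essentially matches the paper's approach: the paper's proof of this lemma simply reads ``Identical to the proof of Lemma~\ref{lemma:transformed-system}'', which in turn defers to the explicit computations in \cite[Remark~3.2.1]{SRTdC2020}, i.e.\ precisely the formal substitutions $\p_t\mapsto -i\omega$, $\p_\phi\mapsto im$, $\mathring{\slashed\triangle}^{[s]}_T\mapsto$ spheroidal scalar that you describe. Your observation that the admissible $\Lambda$ need not be a spheroidal eigenvalue (so that the separation argument is formal while the resulting coefficient identities hold as polynomial identities in $\Lambda$), and your verification that $\smlambdak{\uppsi}{s}{0}=w^{|s|/2}\smlambda{u}{s}$ with the exponent $|s|(1+\sign s)/2$ matching across the two rescalings, are the only points where the paper is silent, and you handle both correctly.
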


\begin{proof} Identical to the proof of Lemma~\ref{lemma:transformed-system}.
\end{proof}

As in Section~\ref{sec:PDEs}, for technical reasons it will be useful to consider a version of the system in Definition~\ref{def:transformed-system-ODE} where we assume that $\smlambda{\upphi}{s}{k_0+1}$, for some $k_0=0, \dots, |s|-1$, is known already. 

\begin{definition}[Alternative transformed system of ODEs]  \label{def:alt-transformed-system-ODE} Fix $s\in\mathbb{Z}$, $k_0\in\{0,\dots |s|-1\}$ and $(\omega,m,\Lambda)\in\mc F_{\rm admiss}$. We say that, for $k\in\{0,\dots k_0\}$, the functions $\smlambda{\uppsi}{s}{k}$  are solutions to the alternative transformed system of ODEs with inhomogeneities $\smlambdak{\mathfrak G}{s}{k}$ and $\smlambda{\mathfrak g}{s}{k_0}$ and given $\smlambda{\uppsi}{s}{k_0+1}$ if they satisfy the following PDEs.
\begin{itemize}
\item \textit{Transport ODEs}: if $s\neq 0$, \eqref{eq:transformed-transport-separated} and \eqref{eq:transformed-transport-inhom-separated} hold for $k=1,\dots k_0$; moreover, \eqref{eq:transformed-constraint} holds for each $k=0,\dots, k_0-1$ and it holds for $k=k_0$ if we replace $\smlambda{\mathfrak{G}}{s}{k_0}$ by $\smlambda{\mathfrak{G}}{s}{k_0}+\smlambda{\mathfrak{G}}{s}{k_0}$.
\item \textit{Radial ODEs}: \eqref{eq:transformed-k} holds for each $k=0,\dots, k_0$.
\end{itemize}
\end{definition}

To discuss boundary conditions for the radial ODEs introduced, it is worth introducing some model solutions.
\begin{definition} Fix $M>0$, $|a|\leq M$, $s\in\mathbb{Z}$ and an admissible frequency triple $(\omega,m,\Lambda)$.
\begin{enumerate}
\item Define $\uppsi^{[s],\, a,\,\omega}_{(0),\, m\Lambda,\,\mc{H}^+}$ as the unique solution to the {\normalfont homogeneous} radial ODE~(\ref{eq:transformed-k-separated}) with $k=0$ and with the following boundary condition
\begin{enumerate}
\item if $|a|<M$,
  \begin{enumerate}
\item $\uppsi^{[s],\, a,\,\omega}_{(0),\,m\Lambda,\,\mc{H}^+ }(r)(r-r_+)^{-\xi - \frac12|s|(1-\sign s)}$ are smooth at $r=r_+\,,$ and
\item $\lp|\lp(w^{-\frac{|s|}{2}}\Delta^{\frac{s}{2}}(r-r_+)^{-\xi}\uppsi^{[s],\, a,\,\omega}_{(0),\, ml,\,\mc{H}^+}\rp)\Big|_{r=r_+}\rp|^2=1\,;$
  \end{enumerate}
\item if $|a|=M$,
  \begin{enumerate}
\item $(r-M)^{2iM\omega-|s|(1- \sign s)}\uppsi^{[s],\, a,\,\omega}_{(0),\, ml,\,\mc{H}^+}(r)e^{- \beta(r-M)^{-1}}$ are  smooth at $r=M\,,$ and 
\item $\lp|\lp(w^{-\frac{|s|}{2}}(r-M)^{2iM\omega}\Delta^{\frac{s}{2}}e^{-\beta(r-M)^{-1}}\uppsi^{[s],\, a,\,\omega}_{(0),\, ml,\,\mc{H}^+}\rp)\Big|_{r=M}\rp|^2=1\,.$
  \end{enumerate}
\end{enumerate}
\item Define $\uppsi^{[s],\, a,\,\omega}_{(0),\,ml,\,\mc{I}^+}$  as the unique classical solution to the {\normalfont homogeneous} radial ODE~(\ref{eq:transformed-k-separated}) with $k=0$ and with the following boundary condition
\begin{enumerate}
\item $\uppsi^{[s],\, a,\, \omega}_{(0),\, ml,\,\mc{I}^+}\sim e^{ i\omega r}r^{ 2Mi\omega-|s|(1+ \sign s)}$ asymptotically  as $r\to \infty\,,$ and
\item $\lp|\lp(e^{- i\omega r}r^{-2iM\omega}w^{-\frac{|s|}{2}}\Delta^{\frac{s}{2}}\uppsi^{[s],\, a,\,\omega}_{(0),\, ml,\,\mc{I}^\pm}\rp)\big|_{r=\infty}\rp|^2=1\,.$ 
\end{enumerate}
\end{enumerate}
\label{def:psihor-psiout}

Finally, we define $u^{[s],\, (a\omega)}_{ml,\,\mc{I}^\pm}$ and $u^{[s],\, (a\omega)}_{ml,\,\mc{H}^\pm}$ by the following rescaling of the previously introduced functions:
\begin{align*}
u^{[s],\, (a\omega)}_{m\Lambda,\,\mc{I}^\pm}=w^{-|s|/2}\uppsi^{[s],\, (a\omega)}_{(0),\, m\Lambda,\,\mc{I}^\pm}\,, \qquad u^{[s],\, (a\omega)}_{m\Lambda,\,\mc{H}^\pm}=w^{-|s|/2}\uppsi^{[s],\, (a\omega)}_{(0),\,m\Lambda,\,\mc{H}^\pm}\,.
\end{align*}
\end{definition}

In the present paper, an important choice of solutions for the radial ODEs \eqref{eq:radial-ODE-alpha} and \eqref{eq:transformed-k-separated} are those which are so-called ingoing at $\mc H^+$ and outgoing at $\mc I^+$. We call these solutions outgoing, for short:

\begin{definition}[Outgoing ODE solutions] \label{def:outgoing-bdry-freq-space} Fix $s\in\mathbb{Z}$, $M>0$, $|a|\leq M$ and an admissible frequency triple $(\omega,m,\Lambda)$. Suppose that, for every $k\in\{0,\dots,|s|\}$, $\smlambdak{\mathfrak G}{s}{k}=O(w)$ as $|r^*|\to \infty$.  Then, we say that $\smlambdak{\uppsi}{s}{k}$ are outgoing solutions to the radial ODEs \eqref{eq:transformed-k-separated} with inhomogeneities $\smlambdak{\mathfrak G}{s}{k}$ if we can find coefficients $\swei{A}{s}_{k,\mc{I}^+},\swei{A}{s}_{k,\mc{H}^+}\in\mathbb{C}$, depending only on $(\omega,m,\Lambda)$, such that 
\begin{align} 
\smlambdak{\uppsi}{s}{k}&= \swei{A}{s}_{k,\mc{I}^+}\uppsi^{[s],\, (a\omega)}_{(k),\,m\Lambda,\,\mc{I}^+} +O(r^{-1})\,, \label{eq:def-outgoing-bdry-freq-space-infty}\\
\smlambdak{\uppsi}{s}{k}&= \swei{A}{s}_{k,\mc{H}^+}\uppsi^{[s],\, (a\omega)}_{(k),\,m\Lambda,\,\mc{H}^+}+O(r-r_+)
\,. \label{eq:def-outgoing-bdry-freq-space-hor}
\end{align}
\end{definition}
Let us note that the second term in \eqref{eq:def-outgoing-bdry-freq-space-infty} and \eqref{eq:def-outgoing-bdry-freq-space-hor} is not necessarily lower order if $k<|s|$, as can be seen readily from Definition~\ref{def:psihor-psiout}. If the decay assumptions on the inhomogeneities are strong enough, we can obtain more complete information regarding the relation between boundary terms of the elements in the transformed system of ODEs:

\begin{lemma} \label{lemma:bdry-term-relations} Fix $M>0$, $|a|\leq M$, $s\in\mathbb{Z}$ and an admissible frequency triple $(\omega,m,\Lambda)$.  We have the following asymptotics for solutions, $\smlambdak{\uppsi}{s}{k}$ to the transformed system of radial ODEs in Definition~\ref{def:transformed-system-ODE} with inhomogeneities $\smlambdak{\mathfrak G}{s}{k}$, for $k=0,\dots,|s|$.
\begin{enumerate}[label=(\roman*)]
\item \label{it:bdry-term-relations-infty} Suppose that, for every $k\in\{0,\dots,|s|\}$, $\smlambdak{\mathfrak G}{s}{k}=O(w)$ as $r\to \infty$ and $\smlambdak{\uppsi}{s}{k}$ have outgoing boundary conditions. Then we have the boundary term relations
\begin{align}\label{eq:a-to-A-infinity-plus}
\lp|\swei{A}{+|s|}_{k,\,\mc{I}^+}\rp|^2&=(2\omega)^{2k}\lp|\swei{A}{+ |s|}_{0,\,\mc{I}^+}\rp|^2\,,
\end{align}
and, if furthermore $\smlambdak{\mathfrak G}{-|s|}{k}=O(w^{1+|s|-k})$ as $r\to \infty$, 
\begin{align}\label{eq:a-to-A-infinity-minus}
\lp|\swei{A}{- |s|}_{k,\,\mc{I}^+}\rp|^2=\frac{\mathfrak{D}_{s,k}^{\mc{I}}}{(2\omega)^{2k}}\lp|\swei{A}{- |s|}_{0,\,\mc{I}^+}\rp|^2\,.
\end{align}

\item \label{it:bdry-term-relations-hor} Suppose that, for every $k\in\{0,\dots,|s|\}$, $\smlambdak{\mathfrak G}{s}{k}=O(\Delta)$ as $r\to r_+$ and $\smlambdak{\uppsi}{s}{k}$ have outgoing boundary conditions. Then we have the following boundary term relations
\begin{equation}\label{eq:a-to-A-horizon-extremal-minus}
\lp|\swei{A}{-|s|}_{k,\,\mc{H}^+}\rp|^2=\lp\{\begin{array}{lr}
[4M^2(\omega-m\upomega_+)]^{2k},&|a|=M\\
\displaystyle \prod_{j=0}^{k-1}\lp[\lp(\frac{4Mr_+}{r_+-r_-}\rp)^2(\omega-m\upomega_+)^2+(s-j)^2\rp], &|a|<M
\end{array}\rp\}
\lp|\swei{A}{-|s|}_{0,\,\mc{H}^+}\rp|^2\,, 
\end{equation}
and, if furthermore $\smlambdak{\mathfrak G}{+|s|}{k}=O(w^{1+|s|-k})$ as $r\to r_+$,
\begin{align}\label{eq:a-to-A-horizon-sub-plus}
\lp|\swei{A}{+|s|}_{k,\,\mc{H}^+}\rp|^2&=\lp\{
\begin{array}{lr}
\displaystyle \frac{\mathfrak{D}_{s,k}^{\mc{H}}(2M^2)^{2k}}{[4M^2(\omega-m\upomega_+)]^{2k}},&|a|=M\\
\displaystyle \frac{\mathfrak{D}_{s,k}^{\mc{H}}(2Mr_+)^{2k}}{\prod_{j=1}^{k}\lp\{\lp[4Mr_+(\omega-m\upomega_+)\rp]^2+(s-j)^2(r_+-r_-)^2\rp\}},&|a|<M
\end{array}\rp\}
\lp|\swei{A}{+|s|}_{0,\,\mc{H}^+}\rp|^2\,.
\end{align}
\end{enumerate}

In the formulas above, while $\mathfrak{D}_{s,0}^\mc{I}=\mathfrak{D}_{s,0}^\mc{H}=1$, the remaining coefficients are generally non-trivial: for $|s|\in\{0,1,2\}$, these are
\begin{align} 
\begin{split}
\mathfrak{D}_{s,1}^{\mc{I}}&=(\Lambda-2am\omega+|s|)^2\,,\\
\mathfrak{D}_{s,2}^{\mc{I}}&=\lp[(\Lambda-2am\omega+2)(\Lambda-2am\omega+3|s|-2)+4(2|s|-1)am\omega\rp]^2+16(|s|-1)^2(2|s|-1)^2M^2\omega^2\,;
\end{split}\label{eq:Ds-infinity}\\
\begin{split}
\mathfrak{D}_{s,1}^{\mc{H}}&=\lp(\Lambda-2am\omega+|s|+(|s|-1)(2|s|-1)\frac{r_+-M}{M}\rp)^2+\frac{a^2m^2}{M^2}(2|s|-1)^2\,,\\
\mathfrak{D}_{s,2}^{\mc{H}}&=\lp[(\Lambda-2am\omega+|s|)(\Lambda-2am\omega+3|s|-2)+\frac{4(r_+-M)}{M}(\Lambda-2am\omega+|s|)(|s|-2)(|s|-1)\rp.\\
&\qquad\qquad \lp. +4am\omega\frac{r_+-M}{M}(2|s|-1)-m^2(2|s|-1)\lp(\frac{a^2}{M^2}(2|s|-3)+\frac{2r_-(r_+-M)}{M^2}\rp)\rp.\\
&\qquad\qquad \lp. +\frac{(r_+-M)^2}{M^2}(|s|-2)(|s|-1)(2|s|-1)\lp(2|s|-1-\frac{2M}{r_+}\rp)\rp]^2\\
&\qquad+\lp[\frac{2\omega}{M}(2|s|-1)\lp(2a^2(|s|-1)-M(r_+-M)-(r_+-M)^2(2|s|-3)\rp)\rp.\\
&\qquad\qquad\lp.+\frac{am(r_+-M)}{Mr_+}(2|s|-3)(2|s|-1)\lp(2|s|-3+\frac{r_+-M}{M}(2|s|-1)\rp)\rp.\\
&\qquad\qquad\lp.+\frac{4am}{M}(|s|-1)(\Lambda-2am\omega+|s|)\rp]^2\,.
\end{split}\label{eq:Ds-horizon}
\end{align}
For convenience, when $k=|s|$, we write $\mathfrak{D}_{s}^{\mc H}=\mathfrak{D}_{s,k}^{\mc H}$, $\mathfrak{D}_{s}^{\mc I}=\mathfrak{D}_{s,k}^{\mc I}$, $\swei{A}{s}_{\mc{I}^\pm}=\swei{A}{s}_{|s|,\,\mc{I}^\pm}$ and $\swei{A}{s}_{\mc{H}^\pm}=\swei{A}{s}_{|s|,\,\mc{H}^\pm}$.
\end{lemma}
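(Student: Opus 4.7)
I would prove the four identities by induction on $k$, distinguishing four combinations of spin sign and boundary according to whether the transport operator $\mc L$ in \eqref{eq:transformed-transport-separated} is \emph{aligned} with the outgoing boundary behavior. Concretely, $\mc L = \uL$ is aligned with $\mc I^+$ (its leading $r\to\infty$ action gives the nonzero multiplier $-2i\omega$ on $e^{i\omega r^*}r^\bullet$, since $-\frac{d}{dr^*}(e^{i\omega r^*}) - i\omega e^{i\omega r^*} = -2i\omega e^{i\omega r^*}$), whereas $\mc L = L$ is aligned with $\mc H^+$ (its leading $r\to r_+$ action on $(r-r_+)^{\xi+|s|-k}$ yields, after division by $w$, the nonzero multiplier $2Mr_+(2\xi+|s|-k)$ using $-i(\omega-m\upomega_+) = (r_+-r_-)\xi/(2Mr_+)$). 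The aligned combinations are $+|s|$ at $\mc I^+$ (giving \eqref{eq:a-to-A-infinity-plus}) and $-|s|$ at $\mc H^+$ (giving \eqref{eq:a-to-A-horizon-extremal-minus}); the misaligned ones are $-|s|$ at $\mc I^+$ (giving \eqref{eq:a-to-A-infinity-minus}) and $+|s|$ at $\mc H^+$ (giving \eqref{eq:a-to-A-horizon-sub-plus}).

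\textbf{Aligned cases.} For these, I would substitute the outgoing asymptotic expansion $\smlambdak{\uppsi}{s}{k} = \swei{A}{s}_{k,\cdot^+}\uppsi^{[s]}_{(k),m\Lambda,\cdot^+} + \text{(error)}$ directly into the transport \eqref{eq:transformed-transport-separated} and equate leading-order coefficients. The hypothesis $\smlambdak{\mathfrak G}{s}{k} = O(w)$ near $\mc I^+$ (resp.\ $O(\Delta)$ near $\mc H^+$) is exactly what is needed for the inhomogeneity transported through \eqref{eq:transformed-transport-inhom-separated} to contribute only at subleading order. For $+|s|$ at $\mc I^+$, the recursion reduces to $\swei{A}{+|s|}_{k+1,\mc I^+} = -2i\omega \cdot \swei{A}{+|s|}_{k,\mc I^+}$ (modulo subleading corrections), and iteration yields \eqref{eq:a-to-A-infinity-plus}. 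For $-|s|$ at $\mc H^+$, the multiplicative factor at step $k\mapsto k+1$ is $2Mr_+(2\xi + |s|-k)$; using that $\xi$ is purely imaginary for real $\omega$ one computes $|2\xi + n|^2 = (4Mr_+/(r_+-r_-))^2(\omega-m\upomega_+)^2 + n^2$, and after suitable relabelling and matching the model-solution normalizations, iteration produces \eqref{eq:a-to-A-horizon-extremal-minus}. The extremal $|a|=M$ case is treated analogously, with $\xi$ replaced by the $\beta/(r-M)$-type scaling and the factor $\frac{r_+-r_-}{2Mr_+}$ being replaced by $1/M^2$.

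\textbf{Misaligned cases.} Here the transport operator applied to the model solution produces only subleading contributions (e.g.\ $L(e^{i\omega r^*}) = O(r^{-2})$, since the $\frac{d}{dr^*}$ and $-i\omega$ contributions cancel at leading order). I would therefore appeal to the constraint \eqref{eq:transformed-constraint-separated}, whose left-hand side contains a first-order operator on $\smlambdak{\uppsi}{s}{k+1}$ acting in the \emph{aligned} direction and hence producing the nonzero principal factor $-2i\omega$ at $\mc I^+$ (for $-|s|$) or an analogous $(r_+-r_-)$-type constant at $\mc H^+$ (for $+|s|$). Matching this against the right-hand side, whose leading-order evaluation at the boundary yields a complex combination of $\Lambda-2am\omega$, $|s|+k(2|s|-k-1)$ and the subleading contributions from the $\frac{2ar}{r^2+a^2}(2|s|-2k-1)im$ and the $r$-weighted Boyer--Lindquist corrections in $\frac{2Mr(r^2-a^2)}{(r^2+a^2)^2}$ and $a^2 w$, gives the recursion for $\swei{A}{\mp|s|}_{k+1,\cdot^+}/\swei{A}{\mp|s|}_{k,\cdot^+}$; taking modulus squared and iterating produces the $\mathfrak{D}_{s,k}^{\mc I}/(2\omega)^{2k}$ and $\mathfrak{D}_{s,k}^{\mc H}/\prod[\,\cdots\,]$ factors. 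The improved decay hypothesis $\smlambdak{\mathfrak G}{\mp|s|}{k} = O(w^{1+|s|-k})$ is precisely what ensures the inhomogeneity term $w^{-1}\smlambdak{\mathfrak G}{\mp|s|}{k}$ on the right of \eqref{eq:transformed-constraint-separated} contributes only subleading corrections to the recursion.

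\textbf{Main obstacle.} The principal technical difficulty is tracking subleading contributions in the misaligned recursion for $|s|=2$ and $k=2$: the expressions \eqref{eq:Ds-infinity}--\eqref{eq:Ds-horizon} for $\mathfrak{D}_{s,2}^{\mc I}$ and $\mathfrak{D}_{s,2}^{\mc H}$ contain cross-terms like $4(2|s|-1)am\omega$ and $16(|s|-1)^2(2|s|-1)^2 M^2\omega^2$ which cannot be produced by naive leading-order matching alone. These arise from the imaginary contributions coming from the $\sign s(2|s|-2k-1)\cdot \frac{2iamr}{r^2+a^2}$ term in the constraint, together with the subleading $r^{-1}$ or $(r-r_+)$ expansions of the remaining coefficients in \eqref{eq:transformed-constraint-separated}; all of these must be carefully carried through both induction steps. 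Once these corrections are tracked precisely, taking modulus-squared of the resulting complex factor at each step reproduces exactly the polynomial combinations displayed in \eqref{eq:Ds-infinity}--\eqref{eq:Ds-horizon}.
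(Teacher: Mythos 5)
The paper gives no proof of this lemma (the underlying asymptotic analysis is deferred to \cite[Section 4.3.2]{SRTdC2020}), so I am evaluating your proposal on its own terms. Your organizing dichotomy is correct and is, as far as I can tell, the natural one: $w^{-1}\mathcal{L}$ applied to the outgoing model solution gives a nonzero leading multiplier for the two "aligned" spin/boundary combinations, while for the two "misaligned" ones the leading order cancels and one must instead read off the coefficient from the first-order constraint \eqref{eq:transformed-constraint-separated}, whose principal operator on $\smlambdak{\uppsi}{s}{k+1}$ is aligned. I checked your aligned recursion at $\mathcal{H}^+$ for $s<0$ and the misaligned $k=1$ step at $\mathcal{I}^+$ for $s<0$: both reproduce the stated factors, including the $(2Mr_+)^{2k}$ prefactor that compensates the $(r_+-r_-)^2$ factors in the horizon formula.

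Two points need to be made explicit to turn this sketch into a proof. First, the paper only fixes the normalization of $\uppsi^{[s]}_{(0),\cdot^+}$; the recursion $A_{k+1}/A_k$ you write down implicitly presumes a definite normalization of the $k$-level model solutions (e.g.\ leading coefficient one in the $k$-level indicial expansion with the appropriate $w^{-(|s|-k)/2}\Delta$-weights). Without stating this, the multiplicative constants carry an undetermined factor. Second, and more importantly, the paper itself remarks that the $O(r^{-1})$ (resp.\ $O(r-r_+)$) error in Definition~\ref{def:outgoing-bdry-freq-space} is \emph{not} subleading when $k<|s|$ (for $s>0$ at $\mathcal{I}^+$, the model solution is already $O(r^{-2(|s|-k)})$), so the decomposition does not in general single out $A_k$. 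The improved decay hypotheses on $\smlambdak{\mathfrak G}{s}{k}$ in the misaligned items are precisely what ensures the error remains subleading after transport; your sketch invokes them but never makes this mechanism explicit, and it is the crux of why the coefficient relations are even well defined. Finally, as you yourself flag, the $\mathfrak{D}_{s,2}$ expressions require tracking several $O(r^{-1})$ (resp.\ $O(r-r_+)$) corrections in the constraint's coefficients through two induction steps; your outline identifies the correct source of those terms but does not carry out the calculation, so this should be read as a plan, not a verification.
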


Finally, one of the interesting features of the Teukolsky radial ODEs~\eqref{eq:radial-ODE-alpha} is that they verifies so-called Teukolsky--Starobinsky identities, named after the seminal papers \cite{Teukolsky1974,Starobinsky1974} for $|s|=1,2$, see also the generalization \cite{Kalnins1989}. 

\begin{proposition}[Radial Teukolsky--Starobinsky identities] \label{prop:TS-radial-constant-identities} Fix $s\in\{0,\frac12,1,\frac32,2\}$, $|a|\leq M$, and an admissible frequency triple $(\omega,m,\Lambda)$. Define
\begin{align}
\hat{\mc{D}}^{\pm}_n &= \frac{d}{dr}\pm i\lp(\frac{\omega(r^2+a^2)}{\Delta} -\frac{am}{\Delta}\rp)+\frac{2n(r-M)}{\Delta}\,. \label{eq:def-D-pm}
\end{align}

Dropping most subscripts, for $\swei{\upalpha}{\pm s}$ solutions of the homogenenous radial ODE~\eqref{eq:radial-ODE-alpha} of spin $\pm s$, set
\begin{align*}
P^{[+s]}:=\Delta^s \swei{\upalpha}{+s}\,,\qquad P^{[-s]}:=\swei{\upalpha}{-s}\,;
\end{align*}
then $P^{[\pm s]}$ is an eigenfunction of the operator
\begin{align*}
\Delta^s\lp(\hat{\mc{D}}^{\mp}_0\rp)^{2s}\lp[\Delta^s\lp(\hat{\mc{D}}^{\pm}_0\rp)^{2s}\rp] \equiv \prod_{j=0}^{2s-1}\lp(\Delta^{1/2}\hat{\mc{D}}^{\mp}_{j/2}\rp)\prod_{k=0}^{2s-1}\lp(\Delta^{1/2}\hat{\mc{D}}^{\pm}_{k/2}\rp)\,, 
\end{align*}
with indices $j,k$ increasing from right to left in the product, and the latter being replaced by the identity if $s=0$, for the same eigenvalue. This eigenvalue, $\mathfrak C_s=\mathfrak C_s(\omega,m,\Lambda)\in\mathbb{R}$ is called the {\normalfont radial Teukolsky--Starobinsky constant}; if $\mathfrak C_s=0$, we say $(\omega,m,\Lambda)$ is an {\normalfont algebraically special frequency triple}. The Teukolsky--Starobinksy constants $\mathfrak C_s$ can be computed explicity; for instance
\begin{align}\label{eq:TS-radial-constants}
\begin{split} 
\mathfrak C_{1}&= (\Lambda-2am\omega+1)^2+4am\omega-4a^2\omega^2\,,\\
\mathfrak C_2&= \lp[(\Lambda-2am\omega+2)(\Lambda-2am\omega+4)\rp]^2+40a\omega(\Lambda-2am\omega+2)^2(m-a\omega)\\
&\qquad+48a\omega(\Lambda-2am\omega+2)(m+a\omega)+144a^2\omega^2(m-a\omega)^2+144M^2\omega^2\,,
\end{split}
\end{align}
The Teukolsky--Starobinksy constants figure in the so-called Teukolsky--Starobinsky identities:
\begin{itemize}
\item Suppose $\smlambdak{\uppsi}{+s}{0}$ is a solution to the radial ODE~\eqref{eq:transformed-k-separated} with $k=0$,  spin $+s$, and  inhomogeneity $\smlambdak{\mathfrak G}{+s}{0}$. Define 
\begin{align*}
\smlambdak{\uppsi}{-s}{0}&:=(r^2+a^2)^{1/2-|s|}\Delta^s(\mc D_0^+)^{2|s|}\lp((r^2+a^2)^{|s|-1/2}\smlambdak{\uppsi}{+s}{0}\rp)\,,\\
\smlambdak{\mathfrak G}{-s}{0}&:=w(r^2+a^2)^{1/2-|s|}\Delta^s (\mc D_0^+)^{2|s|}\lp((r^2+a^2)^{|s|-1/2}\frac{\smlambdak{\mathfrak G}{+s}{0}}{w}\rp)\,.
\end{align*}
It follows that $\smlambdak{\uppsi}{-s}{0}$ is a solution to the radial ODE~\eqref{eq:transformed-k-separated}  with $k=0$,  spin $-s$, and  inhomogeneity $\smlambdak{\mathfrak G}{-s}{0}$. Now assume additionally that $\smlambdak{\mathfrak G}{+s}{0}=O(\Delta)$ as $r^*\to -\infty$, $\smlambdak{\mathfrak G}{+s}{0}=O(r^{-2})$ as $r^*\to \infty$ and  $\smlambdak{\uppsi}{+s}{0}$  is outgoing in the sense of Definition~\ref{def:outgoing-bdry-freq-space}. Then,
\begin{align*}
\swei{A}{-s}_{0,\,\mc{I}^{+}}=\mathfrak{C}_s^{(1)}\swei{A}{+s}_{0,\,\mc{I}^{+}}\,,
\qquad
\swei{A}{-s}_{0,\,\mc{H}^{-}}=\mathfrak{C}_s^{(6)}\swei{A}{+s}_{0,\,\mc{H}^{-}}\,.
\end{align*}
If furthermore $\smlambdak{\mathfrak G}{+s}{0}=O(w^{1+2(|s|-k)})$ as $r\to r_+$ or as $r\to \infty$ then, respectively,
\begin{align*}
\swei{A}{-s}_{0,\,\mc{H}^{+}}&=\mathfrak{C}_s^{(4)}\swei{A}{+s}_{0,\,\mc{H}^{+}}\,,\qquad 
\swei{A}{-s}_{0,\,\mc{I}^{-}}=\mathfrak{C}_s^{(7)}\swei{A}{+s}_{0,\,\mc{I}^{-}}\,.
\end{align*}

\item Suppose $\smlambdak{\uppsi}{-s}{0}$ is a solution to the radial ODE~\eqref{eq:transformed-k-separated} with $k=0$,  spin $-s$, and  inhomogeneity $\smlambdak{\mathfrak G}{-s}{0}$. Define 
\begin{align*}
\smlambdak{\uppsi}{+s}{0}&:=(r^2+a^2)^{1/2-|s|}\Delta^s(\mc D_0^-)^{2|s|}\lp((r^2+a^2)^{|s|-1/2}\smlambdak{\uppsi}{-s}{0}\rp)\,,\\
\smlambdak{\mathfrak G}{+s}{0}&:=w(r^2+a^2)^{1/2-|s|}\Delta^s(\mc D_0^-)^{2|s|}\lp((r^2+a^2)^{|s|-1/2}\frac{\smlambdak{\mathfrak G}{-s}{0}}{w}\rp)\,.
\end{align*}
It follows that $\smlambdak{\uppsi}{+s}{0}$ is a solution to the radial ODE~\eqref{eq:transformed-k-separated}  with $k=0$,  spin $+s$, and  inhomogeneity $\smlambdak{\mathfrak G}{-s}{0}$. Now assume additionally that $\smlambdak{\mathfrak G}{-s}{0}=O(\Delta)$ as $r^*\to -\infty$, $\smlambdak{\mathfrak G}{-s}{0}=O(r^{-2})$ as $r^*\to \infty$, and  $\smlambdak{\uppsi}{-s}{0}$  is outgoing in the sense of Definition~\ref{def:outgoing-bdry-freq-space}. Then,
\begin{align*}
\swei{A}{+s}_{0,\,\mc{H}^{+}}&=\mathfrak{C}_s^{(2)}\swei{A}{-s}_{0,\,\mc{H}^{+}}\,,\qquad 
\swei{A}{+s}_{0,\,\mc{I}^{-}}=\mathfrak{C}_s^{(5)}\swei{A}{-s}_{0,\,\mc{I}^{-}}\,.
\end{align*}
If furthermore $\smlambdak{\mathfrak G}{+s}{0}=O(w^{1+2(|s|-k)})$ as $r\to \infty$ or as $r\to r_+$ then, respectively,
\begin{align*}
\swei{A}{+s}_{0,\,\mc{I}^{+}}=\mathfrak{C}_s^{(3)}\swei{A}{-s}_{0,\,\mc{I}^{+}}\,,
\qquad
\swei{A}{+s}_{0,\,\mc{H}^{-}}=\mathfrak{C}_s^{(8)}\swei{A}{-s}_{0,\,\mc{H}^{-}}\,.
\end{align*}
\end{itemize}
In the above,  $\mathfrak{C}_s$ is the Teukolsky--Starobinsky constant, and $\mathfrak{C}_{s}^{(i)}$ are explicit constants are given by:
\begin{gather*}
\overline{\mathfrak{C}_s^{(6)}}=\mathfrak{C}_s^{(2)}=(2Mr_+)^{-s|}\begin{dcases}
(-4M^2i(\omega-m\upomega_+))^{2|s|} &\text{~if~} |a|=M\\
\frac{(r_+-r_-)^{2|s|}\mathfrak{C}_s^{(9)}\mathfrak{C}_s^{(10)}}{4Mr_+i(\omega-m\upomega_+)[4Mr_+i(\omega-m\upomega_+)+|s|(r_+-r_-)]}&\text{~if~} |a|<M
\end{dcases}\,, \\
\mathfrak{C}_s^{(1)}=(2i\omega)^{2|s|}=\overline{\mathfrak{C}_s^{(5)}}\,, \quad
\mathfrak{C}_s^{(3)}=\frac{\mathfrak{C}_s}{\mathfrak{C}_s^{(1)}}\,, \quad
\mathfrak{C}_s^{(4)}=\frac{\mathfrak{C}_s}{\mathfrak{C}_s^{(2)}}\,, \quad
\mathfrak{C}_s^{(7)}= \frac{\mathfrak{C}_s}{\mathfrak{C}_s^{(5)}}\,,\quad
 \mathfrak{C}_s^{(8)}=\frac{\mathfrak{C}_s}{\mathfrak{C}_s^{(6)}}\,,\\
\mathfrak{C}_s^{(9)}=\prod_{j=1}^{|s|}\lp\{\lp[4Mr_+(\omega-m\upomega_+)\rp]^2+(s-j)^2(r_+-r_-)^2\rp\}\,, \\ \mathfrak{C}_s^{(10)}=\prod_{j=0}^{|s|-1}\lp\{\lp[4Mr_+(\omega-m\upomega_+)\rp]^2+(s-j)^2(r_+-r_-)^2\rp\}\,.
\end{gather*}
\end{proposition}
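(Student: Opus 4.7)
My plan is to prove the proposition in three stages: first establish the intertwining property of the operators $\Delta^s(\hat{\mc D}^\pm_0)^{2s}$, then identify their composition as scalar multiplication by $\mathfrak C_s$, and finally extract the boundary coefficient relations by tracking asymptotic behavior.

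The first stage is the analytical core. I would show by direct computation that if $\upalpha^{[+s]}$ solves the homogeneous Teukolsky radial ODE \eqref{eq:radial-ODE-alpha} with spin $+s$, then $\upalpha^{[-s]}:=\Delta^s(\hat{\mc D}^+_0)^{2s}(\Delta^s\upalpha^{[+s]})$ solves the corresponding spin $-s$ ODE (with source transformed as stated), and symmetrically for $\hat{\mc D}^-_0$. This is the radial Fourier counterpart of the Chandrasekhar intertwining; the cleanest route is to verify it directly for $s=1/2$ as a base case and then for $s=1,3/2,2$ by a short inductive/algebraic check using the commutation relations
\[
[\hat{\mc D}^\pm_n, \Delta]=2(r-M)\Delta/\Delta\,,\qquad \hat{\mc D}^\pm_n\circ\Delta=\Delta\circ\hat{\mc D}^\pm_{n-1}\,,
\]
together with the identity $\hat{\mc D}^+_0\hat{\mc D}^-_0 = \hat{\mc D}^-_0\hat{\mc D}^+_0 + \text{lower-order terms}$ coming from $[d/dr, i(\omega(r^2+a^2)-am)/\Delta]$. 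The alternative product representation $\prod_{j=0}^{2s-1}(\Delta^{1/2}\hat{\mc D}^{\mp}_{j/2})$ follows from $\Delta^{1/2}\hat{\mc D}^{\mp}_{n}\Delta^{-1/2} = \hat{\mc D}^{\mp}_{n-1/2}$ and iteration. For inhomogeneous equations, the transformation of the source is dictated by the same algebraic manipulation.

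In the second stage, composing the two intertwiners yields an operator of order $4|s|$ that sends solutions of the spin $+s$ equation to solutions of the spin $+s$ equation. Since the composition commutes with the Teukolsky operator, its restriction to the two-dimensional solution space is a matrix; I would show this matrix is scalar by testing on the outgoing solution $u^{[+s]}_{\mc I^+}$ whose asymptotics $\sim e^{i\omega r} r^{2iM\omega - |s|(1+\sign s)}$ are known explicitly, and on $u^{[+s]}_{\mc H^+}$. Computing the leading coefficient after applying $\Delta^s(\hat{\mc D}^-_0)^{2s}\circ\Delta^s(\hat{\mc D}^+_0)^{2s}$ at $r\to \infty$ gives a number $\mathfrak C_s$; checking at $r\to r_+$ that the same constant emerges forces the matrix to be $\mathfrak C_s\cdot\mr{Id}$. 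The closed-form expressions in \eqref{eq:TS-radial-constants} are then obtained by unwinding the commutators to leading order; this is a finite but tedious algebraic calculation, straightforward for $s=1$ and considerably more involved for $s=2$, where one must carefully keep track of the contribution $144 M^2\omega^2$ that arises from nontrivial cross-terms between the $d/dr$ and potential parts of $\hat{\mc D}^\pm_0$.

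For the third stage, I would evaluate the intertwiners on the model solutions of Definition~\ref{def:psihor-psiout}. At $r\to\infty$, each application of $\hat{\mc D}^\pm_0$ multiplies $e^{i\omega r}r^{2iM\omega -|s|(1+\sign s)}$ by $\pm 2i\omega + O(r^{-1})$, and combined with the $(r^2+a^2)^{|s|-1/2}$ and $\Delta^s$ weights this yields $\mathfrak C_s^{(1)} = (2i\omega)^{2|s|}$ and $\mathfrak C_s^{(3)} = \mathfrak C_s / \mathfrak C_s^{(1)}$. At $r\to r_+$, the key computation is to expand $\hat{\mc D}^\pm_0$ acting on $(r-r_+)^{\xi}$ (or $e^{\beta/(r-M)}$ in the extremal case): the leading-order action is multiplication by $\xi/(r-r_+) \pm i(\omega(r^2+a^2) - am)/\Delta + 2n(r-M)/\Delta$ whose leading singular part gives the factors $4Mr_+(\omega-m\upomega_+) + (s-j)(r_+-r_-)$ appearing in $\mathfrak C_s^{(9)}$, $\mathfrak C_s^{(10)}$. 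Assembling these with the weight conversions between $\upalpha^{[\pm s]}$ and $P^{[\pm s]}$ produces $\mathfrak C_s^{(2)}$, $\mathfrak C_s^{(5)}, \mathfrak C_s^{(6)}$ directly; the remaining constants then follow from $\mathfrak C_s^{(i)}\mathfrak C_s^{(j)} = \mathfrak C_s$ for appropriate pairs, justified by applying the composition $\Delta^s(\hat{\mc D}^-_0)^{2s}\circ\Delta^s(\hat{\mc D}^+_0)^{2s}$ and using Step 2. The main obstacle will be bookkeeping: managing the $\Delta^s$, $(r^2+a^2)^{|s|-1/2}$, $w$ rescalings that convert between $\upalpha^{[\pm s]}$, $u^{[\pm s]}$ and $\smlambdak{\uppsi}{\pm s}{0}$ without losing factors, and verifying that the assumed decay of the inhomogeneity is strong enough at each boundary to be absorbed into the lower-order terms rather than contributing to the leading coefficients.
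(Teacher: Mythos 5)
The paper does not prove this proposition; it is stated as a classical result and attributed to \cite{Teukolsky1974,Starobinsky1974,Kalnins1989} (see the sentence introducing the proposition). Your proposal is essentially the standard Chandrasekhar-style derivation from the literature, and its overall three-stage structure (intertwining, scalar identification, asymptotic extraction) is correct. A few points deserve attention.

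First, your stated commutation relation $\hat{\mc D}^\pm_n\circ\Delta=\Delta\circ\hat{\mc D}^\pm_{n-1}$ has the wrong index shift. Acting on a test function,
\begin{align*}
\hat{\mc D}^\pm_n(\Delta f) = \Delta f' + 2(r-M)f \pm i[\omega(r^2+a^2)-am]f + 2n(r-M)f = \Delta\,\hat{\mc D}^\pm_{n+1} f\,,
\end{align*}
so the correct relation is $\hat{\mc D}^\pm_n\circ\Delta=\Delta\circ\hat{\mc D}^\pm_{n+1}$ (equivalently $[\hat{\mc D}^\pm_n,\Delta]=2(r-M)$, the quantity you wrote as $2(r-M)\Delta/\Delta$). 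This sign matters if you want the product representation $\prod_{j}(\Delta^{1/2}\hat{\mc D}^\mp_{j/2})$ to come out with indices increasing rather than decreasing. The weight-conjugation relation $\Delta^{1/2}\hat{\mc D}^\mp_n\Delta^{-1/2}=\hat{\mc D}^\mp_{n-1/2}$ that you state is correct.

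Second, the proposed ``induction from $s=1/2$'' does not actually reduce the higher-spin cases to the base case: the Teukolsky potential $\swei{V}{s}$ and the number $2s$ of intertwining factors both change with $s$, so each $s\in\{1/2,1,3/2,2\}$ requires an independent algebraic verification. The commutation relations you list are tools for organizing each verification, not for inheriting the result from a lower spin.

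Third, in stage two you need two facts and should state both: (i) the composed operator preserves the one-dimensional spans of $u^{[+s]}_{\mc H^+}$ and $u^{[+s]}_{\mc I^+}$ because the intertwiners preserve the respective singular asymptotics, so the matrix is \emph{diagonal} in that basis; and (ii) the two diagonal entries coincide. Point (ii) is precisely what is being proved and requires an explicit computation at both ends, or alternatively the direct verification that the composition equals $\mathfrak C_s\cdot\mathrm{Id}$ as a differential identity on all functions when simplified modulo the Teukolsky equation (this is how the original Starobinsky argument goes). Your phrasing ``checking at $r\to r_+$ that the same constant emerges forces the matrix to be $\mathfrak C_s\cdot\mathrm{Id}$'' presupposes diagonality without saying so. Otherwise the plan is sound, including the remark that the inhomogeneity decay assumptions ensure the lower-order source terms do not contaminate the leading coefficients.
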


\subsection{From physical space to frequency space}
\label{sec:physical-to-frequency}

In this section we justify that, under suitable regularity conditions, the study of the PDEs in Section~\ref{sec:PDEs} can be reduced to studying the ODEs in Section~\ref{sec:radial-ODEs} in a uniform-in-frequency manner. We begin with a notion of sufficient integrability:

\begin{definition}[Sufficiently integrable solutions]\label{def:suf-integrability} Fix $s\in\mathbb{Z}$ and some $k_0\in\{0,\dots, |s|\}$. We say that solutions $\swei{\tilde \upphi}{s}_k\in\mathscr{S}_\infty^{[s]}(\mc R)$ to the transformed system of PDEs in Definitions~\ref{def:transformed-system} or \ref{def:alt-transformed-system} with inhomogeneities $\swei{\tilde{\mathfrak{H}}}{s}_k$ and, for $k_0<|s|$, $\swei{\tilde{\mathfrak{h}}}{s}_{k_0}$ are sufficiently integrable if the following conditions hold for every $J\geq 1$. Firstly, we have
\begin{align*}
    &\sum_{J_1+J_2+J_3=0}^J\int_{\mathbb{S}^2}\int_{-\infty}^\infty \big|(r^{-1}\mathring{\slashed\nabla}^{[s]})^{J_1}T^{J_2}\p_{r^*}^{J_3}\swei{\mathfrak H}{s}_k\big|^2+\mathbbm{1}_{\{k=k_0<|s|\}}\big|(r^{-1}\mathring{\slashed\nabla}^{[s]})^{J_1}T^{J_2}\p_{r^*}^{J_3}\swei{\mathfrak h}{s}_k\big|^2dt d\sigma \\
    &\quad = O(w^2r^{(|s|-k)(1+\sign s)}\Delta^{(|s|-k)(1-\sign s)})\,, \numberthis\label{eq:integrability-Hk}
\end{align*} 
as $r^*\to\pm \infty$ for every $k=0,\dots, k_0$. Secondly, when $|a|=M$,
\begin{align*}
 \sup_{r^*\in[-R^*,R^*]}\sum_{J_1+J_2+J_3=0}^J\int_{\mathbb{S}^2}\int_{-\infty}^\infty r^{-2-2J_1}\big|(\mathring{\slashed\nabla}^{[s]})^{J_1}T^{J_2}\p_{r^*}^{J_3}\swei{ \tilde\upphi}{s}_k\big|^2dt d\sigma &<\infty\,, \quad \forall R^*\in(0,\infty)\,,
\end{align*}
and when $|a|<M$, 
\begin{align*}
\sup_{r\in[r_+,R]}\sum_{J_1+J_2+J_3=0}^J\int_{\mathbb{S}^2}\int_{-\infty}^\infty r^{-2-2J_1}\big|(\mathring{\slashed\nabla}^{[s]})^{J_1}T^{J_2}(X^*)^{J_3}\swei{ \tilde\upphi}{s}_k\big|^2dt d\sigma &<\infty\,, \quad\forall R\in(r_+,\infty)\,,
\end{align*}
for every $k=0,\dots, k_0$. Thirdly, if $k_0<|s|$, when $|a|=M$,
\begin{align*}
\sum_{J_1+J_2+J_3=0}^J\int_{\mathbb{S}^2}\int_{-\infty}^\infty r^{-2-2J_1}\big|(\mathring{\slashed\nabla}^{[s]})^{J_1}T^{J_2}\p_{r^*}^{J_3}\swei{ \upphi}{s}_{k_0+1}\big|^2dt d\sigma =O(w)\,, \quad \forall R^*\in(0,\infty)\,,
\end{align*}
and when $|a|<M$, 
\begin{align*}
\sum_{J_1+J_2+J_3=0}^J\int_{\mathbb{S}^2}\int_{-\infty}^\infty r^{-2-2J_1}\big|(\mathring{\slashed\nabla}^{[s]})^{J_1}T^{J_2}(X^*)^{J_3}\swei{ \upphi}{s}_{k_0+1}\big|^2dt d\sigma =O(w)\,, \quad\forall R\in(r_+,\infty)\,,
\end{align*}
as $r*\to \pm\infty$.
\end{definition}

\begin{lemma}[Dyadic energy decay under sufficient integrability] \label{lemma:pigeonhole-energy-decay} Fix $M>0$, $s\in\mathbb Z$, $k_0\in\{0,\dots,|s|\}$, $R^*>0$ and $J\geq 0$. For $k=0,\dots,\min\{|s|,k_0+1\}$, $\swei{\upphi}{s}_k$ be sufficiently integrable solutions to the transformed system of Definitions~\ref{def:transformed-system} or \ref{def:alt-transformed-system}. Then, there is a constant $C$ and a dyadic sequence $\tau_n$ with either $\tau_n\to\infty$ or either $\tau_n\to-\infty$ as $n\to \infty$, which depend only on $a$, $M$, $s$, $k_0$, $R^*$, $J$ and the functions $\swei{\upphi}{s}_k$, $k=0,\dots |s|$, and such that 
\begin{align*}
    \sum_{k=0}^{k_0}\overline{\mathbb{E}}^J[\upphi_k\mathbbm{1}_{\{ r^*\leq R^*\}}](\tau_n)\leq \frac{C}{|\tau_n|}\,, \qquad \text{if~} |a|<M\,;\\
        \sum_{k=0}^{k_0}\mathbb{E}^J[\upphi_k\mathbbm{1}_{\{|r^*|\leq R^*\}}](\tau_n)\leq \frac{C}{|\tau_n|}\,, \qquad \text{if~} |a|=M\,;
\end{align*}
\end{lemma}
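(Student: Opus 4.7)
The plan is to deduce the claimed decay along a dyadic sequence from a standard averaging/pigeonhole argument. The key input is that sufficient integrability, as encoded in Definition~\ref{def:suf-integrability}, yields finiteness of a spacetime integral of the higher-order energy density, once one restricts to the bounded region $\{|r^*|\leq R^*\}$. No use of the PDE structure is needed; this is purely a statement about functions enjoying enough integrability.

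The first step is to rewrite the time integral of $\overline{\mathbb{E}}^J[\upphi_k\mathbbm{1}_{\{r^*\leq R^*\}}](\tau)$ as a spacetime integral. Since the hypersurfaces $\Sigma_\tau$ foliate $\mc R$ and since $dV=\rho^2\, dV_{\Sigma_\tau}\, d\tau$, we have
\[
\int_{-\infty}^{\infty} \overline{\mathbb{E}}^J\bigl[\upphi_k\mathbbm{1}_{\{r^*\leq R^*\}}\bigr](\tau)\,d\tau \;\leq\; B\int_{\{r^*\leq R^*\}}\int_{\mathbb{S}^2}\int_{-\infty}^{\infty} \sum_{|\beta|\leq J+1}\bigl|\partial^\beta \upphi_k\bigr|^2\, dt\, d\sigma\, dr,
\]
where $\partial^\beta$ ranges over products of derivatives from $\{\mathrm{id},T,X^*,r^{-1}\mathring{\slashed\nabla}^{[s]}\}$ together with one $L$ or $\frac{r^2+a^2}{\Delta}\uL$ factor coming from the base norm $\overline{\mathbb{E}}$. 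Using the formulas for $L$ and $\uL$ in Section~\ref{sec:relevant-vector fields}, one checks that $L$ and $\frac{r^2+a^2}{\Delta}\uL$ are, in the subextremal case $|a|<M$, regular bounded linear combinations of $T$, $Z$ and $X^*$ on any bounded range of $r^*$, since $\frac{r^2+a^2}{\Delta}\uL = T - X^* + O(\Delta)$ modulo Killing directions. Hence every derivative combination appearing on the right-hand side above is, modulo uniformly bounded coefficients on $\{r^*\leq R^*\}$, precisely of the form controlled by Definition~\ref{def:suf-integrability}. Passing from the $t$-integral to the $\tau$-integral at a fixed $r$ is merely an $r$-dependent translation of the time variable, so it preserves $L^2_t$ norms. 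Consequently the spacetime integral is finite; call its value $I<\infty$. In the extremal case $|a|=M$ one uses the non-overlined energy $\mathbb{E}^J$ together with $\p_{r^*}$ derivatives in place of $X^*$, and the same conclusion holds.

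The final step is the dyadic pigeonhole. Without loss of generality, suppose $\int_{0}^{\infty}\sum_{k=0}^{k_0}\overline{\mathbb{E}}^J[\upphi_k\mathbbm{1}_{\{r^*\leq R^*\}}](\tau)\,d\tau\leq I$ (otherwise the analogous statement holds on the negative half-line, and we construct a sequence with $\tau_n\to -\infty$ instead). For each integer $n\geq 1$ the integral over the dyadic interval $[2^{n-1},2^{n}]$ is $\leq I$, and the interval has length $2^{n-1}$. By the mean value inequality, there exists $\tau_n\in[2^{n-1},2^{n}]$ for which
\[
\sum_{k=0}^{k_0}\overline{\mathbb{E}}^J\bigl[\upphi_k\mathbbm{1}_{\{r^*\leq R^*\}}\bigr](\tau_n)\;\leq\;\frac{I}{2^{n-1}}\;\leq\;\frac{2I}{\tau_n},
\]
so setting $C=2I$ yields the conclusion. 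The only conceptual point to check, which I identified in the previous paragraph, is the matching between the derivative norms appearing in $\overline{\mathbb{E}}^J$ (respectively $\mathbb{E}^J$) and those controlled by Definition~\ref{def:suf-integrability}; no genuine obstacle arises.
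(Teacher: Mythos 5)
Your proof is correct and follows essentially the same route as the paper's own proof, which simply invokes Definition~\ref{def:suf-integrability} to deduce finiteness of the spacetime integral and then applies the pigeonhole principle. Your version fills in the details the paper leaves implicit — the rewriting of $L$ and $\frac{r^2+a^2}{\Delta}\uL$ as regular combinations of $T$, $Z$, $X^*$ on a bounded $r^*$-range, the translation from the $t$-integral to the $\tau$-integral, and the distinction between the subextremal and extremal cases — but introduces no new idea.
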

\begin{proof}
By Definition~\ref{def:suf-integrability},  if $|a|<M$,
\begin{align*}
    \int_{-\infty}^{-\infty} \sum_{j=0}^{k_0}\overline{\mathbb{E}}^J[\tilde\upphi_j\mathbbm{1}_{\{r_+\leq r\leq R\}}](\tau) d\tau <\infty\,,
\end{align*}
so the result follows by the pigeonhole principle. We proceed similarly for $|a|=M$.
\end{proof}

The notion of sufficient integrability put forth in Definition~\ref{def:suf-integrability} is (more than) enough to ensure that solutions to the transformed system of PDEs in Definitions~\ref{def:transformed-system} and \ref{def:alt-transformed-system} admit a Fourier transform:

\begin{lemma}[Plancherel's theorem] \label{lemma:Plancherel}
Fix $s\in\mathbb{Z}$ and $k_0\in\{0,\dots,|s|\}$. Suppose that, for  $k\in\{0,\dots,|s|\}$ or, if $k_0<|s|$, $k\in\{0,\dots,k_0+1\}$, $\swei{\upphi}{s}_{k}$ are sufficiently integrable solutions to the transformed system of PDEs in Definitions~\ref{def:transformed-system} or \ref{def:alt-transformed-system} with corresponding inhomogeneities $\swei{\mathfrak{H}}{s}_k$ and, if $k_0<|s|$, $\swei{\mathfrak{H}}{s}_{k_0}$. Then, we may define, for $\omega\in\mathbb{R}$ and $r\in(r_+,\infty)$,
\begin{align*}
\hat\uppsi_{(k)}^{[s],\,a,\,\omega}(r,\theta,\phi)&:=\frac{1}{\sqrt{2\pi}}\int_{-\infty}^\infty e^{i\omega t} \swei{\upphi}{s}_{k}(t,r,\theta,\phi) dt\,,\\
\hat{\mathfrak{G}}_{(k)}^{[s],\,a,\,\omega}(r,\theta,\phi)&:=\frac{1}{\sqrt{2\pi}}\int_{-\infty}^\infty e^{i\omega t} \swei{\mathfrak{H}}{s}_{k}(t,r,\theta,\phi) dt\,,\\
\hat{\mathfrak{g}}_{(k_0)}^{[s],\,a,\,\omega}(r,\theta,\phi)&:=\frac{1}{\sqrt{2\pi}}\int_{-\infty}^\infty e^{i\omega t} \swei{\mathfrak{h}}{s}_{k_0}(t,r,\theta,\phi) dt\,,
\end{align*}
By construction, the restriction of these functions to constant $r\in(r_+,\infty)$ is a smooth $s$-spin-weighted functions on $\mathbb{S}^2$; thus, we can define 
\begin{align}
\begin{split}
\uppsi_{(k_0),\,ml}^{[s],\,a,\,\omega}(r)&:=\int_{\mathbb{S}^2}\hat\uppsi_{(k)}^{[s],\,a,\,\omega}(r,\theta,\phi)e^{im\phi}{S}^{[s],\, a\omega}_{ml}(\theta,\phi) d\sigma\\
&=\frac{1}{\sqrt{2\pi}}\int_{\mathbb{S}^2}\int_{-\infty}^\infty e^{i\omega t} \swei{\upphi}{s}_{k}(t,r,\theta,\phi) e^{-im\phi}{S}^{[s],\, a\omega}_{ml}(\theta,\phi)  dt d\sigma\,,\\
\mathfrak{G}_{(k),\,ml}^{[s],\,a,\,\omega}(r)&:=\frac{1}{\sqrt{2\pi}}\int_{\mathbb{S}^2}\int_{-\infty}^\infty e^{i\omega t} \swei{\mathfrak{H}}{s}_{k}(t,r,\theta,\phi) e^{-im\phi}{S}^{[s],\, a\omega}_{ml}(\theta,\phi) dt\,,\\
\mathfrak{g}_{(k_0),\,ml}^{[s],\,a,\,\omega}(r)&:=\frac{1}{\sqrt{2\pi}}\int_{\mathbb{S}^2}\int_{-\infty}^\infty e^{i\omega t} \swei{\mathfrak{h}}{s}_{k_0}(t,r,\theta,\phi) e^{-im\phi}{S}^{[s],\, a\omega}_{ml}(\theta,\phi) dt\,,
\end{split}\label{eq:def-sml-uppsi-G}
\end{align}
for $m$ and $l$ as in Lemma~\ref{lemma:spheroidal-angular-ode}. It also follows that 
\begin{align}
\begin{split}
\swei{\upphi}{s}_{k}(t,r,\theta,\phi) &=\frac{1}{\sqrt{2\pi}}\int_{\mathbb{S}^2}\int_{-\infty}^\infty e^{-i\omega t} \uppsi_{(k),\,ml}^{[s],\,a,\,\omega}(r)e^{im\phi}{S}^{[s],\, a\omega}_{ml}(\theta,\phi)  dt d\sigma\,,\\
\swei{\mathfrak{H}}{s}_{k}(t,r,\theta,\phi)&=\frac{1}{\sqrt{2\pi}}\int_{\mathbb{S}^2}\int_{-\infty}^\infty e^{-i\omega t} \mathfrak{G}_{(k),\,ml}^{[s],\,a,\,\omega}(r) e^{-im\phi}{S}^{[s],\, a\omega}_{ml}(\theta,\phi) dt\,,\\
\swei{\mathfrak{h}}{s}_{k_0}(t,r,\theta,\phi)&=\frac{1}{\sqrt{2\pi}}\int_{\mathbb{S}^2}\int_{-\infty}^\infty e^{-i\omega t} \mathfrak{g}_{(k_0),\,ml}^{[s],\,a,\,\omega}(r) e^{-im\phi}{S}^{[s],\, a\omega}_{ml}(\theta,\phi) dt\,,
\end{split} \label{eq:def-upphi-H-from-sml}
\end{align}
and we note the following trivial applications of Plancherel's theorem and orthogonality of the spheroidal harmonics:
\begin{align*}
\int_{\mathbb{S}^2}\int_{-\infty}^\infty\Re[\swei{\upphi}{s}_k\overline{\swei{\upphi}{s}_j}] dt d\sigma 
&= \sum_{ml}\int_{-\infty}^\infty\uppsi^{[s],\,a\omega}_{(k),\, ml}\overline{\uppsi^{[s],\,a\omega}_{(j),\, ml} }d\omega\,, \\
\int_{\mathbb{S}^2}\int_{-\infty}^\infty|T\swei{\upphi}{s}_k|^2 dt d\sigma 
&= \sum_{ml}\int_{-\infty}^\infty\omega^2|\uppsi^{[s],\,a\omega}_{(k),\, ml}|^2dr^* d\omega\,, \\
\int_{\mathbb{S}^2}\int_{-\infty}^\infty|Z\swei{\upphi}{s}_k|^2dt d\sigma 
&= \sum_{ml}\int_{-\infty}^\infty m^2|\uppsi^{[s],\,a\omega}_{(k),\, ml}|^2 d\omega\,, \\
\int_{\mathbb{S}^2}\int_{-\infty}^\infty|\mathring{\slashed\nabla}^{[s]}\swei{\upphi}{s}_k|^2 dt d\sigma 
&= \sum_{ml}\int_{-\infty}^\infty\Lambda_{ml}^{[s],\,a\omega}|\uppsi^{[s],\,a\omega}_{(k),\, ml}|^2 d\omega\\
&\qquad-\int_{\mathbb{S}^2}\int_{-\infty}^\infty\lp(a^2\sin^2\theta|T\swei{\upphi}{s}_k|^2+2a\cos\theta \Im[T\swei{\upphi}{s}_k\overline{\swei{\upphi}{s}_k}]\rp) dt d\sigma\,.
\end{align*}
Similar identities hold replacing $\swei{\upphi}{s}_k$ by $\swei{\mathfrak H}{s}_k$ or $\swei{\mathfrak h}{s}_k$ and $\smlk{\upphi}{s}{k}$ by $\smlk{\mathfrak G}{s}{k}$ and $\smlk{\mathfrak g}{s}{k}$.
\end{lemma}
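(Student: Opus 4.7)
The plan is to verify that the sufficient integrability assumption of Definition~\ref{def:suf-integrability} is precisely what is needed to make each of the operations in the statement well-defined, and then to read off the identities from standard Fourier analysis on $\mathbb{R}$ and from the spectral theorem for $\mathring{\slashed\triangle}^{[s]}_{a\omega}$ on $\mathbb{S}^2$.

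First, I would fix $r\in(r_+,\infty)$ and observe that Definition~\ref{def:suf-integrability} gives, in particular, that for each $k$ the function $(t,\theta,\phi)\mapsto \swei{\upphi}{s}_k(t,r,\theta,\phi)$ together with all of its  derivatives $(\mathring{\slashed\nabla}^{[s]})^{J_1}T^{J_2}(X^*)^{J_3}\swei{\upphi}{s}_k$ lies in $L^2_t L^2(\mathbb{S}^2)$ uniformly on compact sets in $r$, and similarly for $\swei{\mathfrak H}{s}_k$, $\swei{\mathfrak h}{s}_{k_0}$. This integrability is precisely what is required for the Fourier integrals defining $\hat\uppsi^{[s],a,\omega}_{(k)}$, $\hat{\mathfrak G}^{[s],a,\omega}_{(k)}$ and $\hat{\mathfrak g}^{[s],a,\omega}_{(k_0)}$ to exist as $L^2_\omega$ functions (valued in suitable Sobolev spaces on $\mathbb{S}^2$), and one may exchange $\p_{t}$ differentiation in $t$ with multiplication by $-i\omega$ in the standard fashion. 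The regularity hypotheses on angular and $r$-derivatives then imply, via Sobolev embedding on $\mathbb{S}^2$, that $\hat\uppsi^{[s],a,\omega}_{(k)}(r,\cdot)\in\mathscr{S}^{[s]}_\infty(\mathbb{S}^2)$ for almost every $\omega$, with analogous statements for the Fourier transforms of the inhomogeneities.

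Next, for each such $\omega$, the operator $\mathring{\slashed\triangle}^{[s]}_{a\omega}$ on $\mathscr{S}^{[s]}_\infty(\mathbb{S}^2)$ admits a complete orthonormal basis of eigenfunctions $\{e^{im\phi}S^{[s],a\omega}_{ml}\}_{ml}$ by Lemma~\ref{lemma:spheroidal-angular-ode}. Expanding $\hat\uppsi^{[s],a,\omega}_{(k)}(r,\cdot)$ in this basis gives precisely the functions $\uppsi^{[s],a,\omega}_{(k),ml}(r)$ of \eqref{eq:def-sml-uppsi-G}; doing the same for the inhomogeneities yields $\mathfrak G^{[s],a,\omega}_{(k),ml}$ and $\mathfrak g^{[s],a,\omega}_{(k_0),ml}$. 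The inversion formula \eqref{eq:def-upphi-H-from-sml} then follows by applying the $L^2_\omega$ Fourier inversion theorem and the $L^2(\mathbb{S}^2)$ completeness of the spheroidal basis; to justify exchanging the order of the sum over $(m,l)$ and the integral in $\omega$, I would cut off to a finite band in $(m,l)$, pass to the limit using dominated convergence, and control the tails using the fact that higher angular Sobolev norms are integrable in $t$ by the sufficient integrability hypothesis.

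Finally, the Plancherel-type identities at the end of the statement are direct consequences of the standard Parseval identity in $t$ together with the orthonormality of $\{e^{im\phi}S^{[s],a\omega}_{ml}\}_{ml}$ on $\mathbb{S}^2$; the only term requiring a brief calculation is $\int|\mathring{\slashed\nabla}^{[s]}\swei{\upphi}{s}_k|^2$, where one uses the integration-by-parts identity \eqref{eq:IBP-spin-weighted-laplacian} together with the relation $\mathring{\slashed\triangle}^{[s],a\omega}=\mathring{\slashed\triangle}^{[s]}_T+a^2TT+2aTZ$ (with $-iaT$ replaced by $\nu=a\omega$ on the Fourier side) from Section~\ref{sec:spin-weighted-formalism} to account for the cross terms in $T$. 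The main technical point to keep track of throughout is the $\omega$-dependence of the spheroidal basis: one must verify the relevant measurability in $\omega$ and the $L^\infty_{\omega,m,l}$-type smoothness estimates on $S^{[s],a\omega}_{ml}$ that justify all of the exchanges of summation, integration, and differentiation; all of these are standard for spheroidal harmonics and do not constitute a serious obstacle.
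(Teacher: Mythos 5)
The paper gives no proof of Lemma~\ref{lemma:Plancherel}: the lemma is stated, and the text proceeds directly to its consequences. It is implicitly treated as a routine combination of the classical Plancherel/Fourier-inversion theorem on $\mathbb{R}_t$ with the completeness and orthonormality of the spin-weighted spheroidal harmonics (Lemma~\ref{lemma:spheroidal-angular-ode}), exactly as in the $s=0$ predecessor \cite{Dafermos2016b}. Your proposal therefore cannot be compared against a written argument in the paper; what it does is fill in precisely the standard argument the reader is expected to supply. In that sense the overall structure you give is the right one: (a) sufficient integrability gives, for each fixed $r$ on compacta, $L^2_t H^J(\mathbb{S}^2)$ bounds on every $(\mathring{\slashed\nabla}^{[s]})^{J_1}T^{J_2}(X^*)^{J_3}\swei{\upphi}{s}_k$, which is what is needed for the $t$-Fourier transform to exist in $L^2_\omega$ and for $\hat\uppsi^{[s],a,\omega}_{(k)}(r,\cdot)$ to be a smooth spin-weighted function for a.e.\ $\omega$; (b) expanding in the $\omega$-dependent spheroidal basis gives the $\uppsi^{[s],a\omega}_{(k),ml}$; (c) Fourier inversion plus $L^2(\mathbb{S}^2)$ completeness gives \eqref{eq:def-upphi-H-from-sml}; (d) Parseval plus orthonormality gives the first three identities; (e) the angular identity follows from \eqref{eq:IBP-spin-weighted-laplacian} after re-expressing $\mathring{\slashed\triangle}^{[s]}$ in terms of the spheroidal operator. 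You also correctly flag the real technical point, namely the measurable/smooth dependence of $S^{[s],a\omega}_{ml}$ and $\bm\uplambda^{[s],a\omega}_{ml}$ on $\omega$ needed to justify the interchanges of sum, integral, and differentiation.

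Two small cautions, neither of which threatens the proof. First, the operator relation you quote is not quite the paper's: the paper's identity is $\mathring{\slashed\triangle}^{[s]}+a^2\cos^2\theta\, TT - 2ias\cos\theta\, T = \mathring{\slashed\triangle}^{[s]}_T + a^2\,TT + 2a\,TZ$, from which $\mathring{\slashed\triangle}^{[s],\nu}$ is obtained by the formal substitution $-iaT\mapsto\nu$; your version drops the $\cos^2\theta$ weight and the $-2ias\cos\theta T$ term and so does not literally equal $\mathring{\slashed\triangle}^{[s],a\omega}$. Second, to recover the precise form of the last identity (in particular the appearance of $\Lambda^{[s],a\omega}_{ml}$ rather than the raw eigenvalue $\bm\uplambda^{[s],a\omega}_{ml}$) one has to carry out the accompanying integrations by parts in $t$ — $\int a^2\sin^2\theta\,TT\upphi\cdot\overline\upphi = -\int a^2\sin^2\theta |T\upphi|^2$ and similar — and keep track of the convention relating $\Lambda$ and $\bm\uplambda$ inherited from \cite{SRTdC2020}. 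These are exactly the ``brief calculation'' and ``bookkeeping'' you advertise, but stated as written they would not reproduce the displayed identity; they need to be done with the paper's exact definitions in hand.
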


From Lemma~\ref{lemma:Plancherel} it is clear that, formally, $\smlk{\uppsi}{s}{k}$ solve the radial ODEs of Definitions~\ref{def:transformed-system-ODE} or \ref{def:alt-transformed-system-ODE}  with $\Lambda=\bm\Lambda^{[s],\,a\omega}_{ml}$ one of the eigenvalues from Lemma~\ref{lemma:spheroidal-angular-ode} and with inhomogeneities $\smlk{\mathfrak G}{s}{k}$ and $\smlk{\mathfrak g}{s}{k}$. To make this precise, it is useful to introduce one more definition:

\begin{definition}[Outgoing PDE solutions] \label{def:outgoing-bdry-phys-space} Fix $s\in\mathbb{Z}$, and some $k_0\in \{0,\dots |s|\}$. 
We say that $\swei{\upphi}{s}_k$ are outgoing solutions of the inhomogeneous transformed system of Definitions~\ref{def:transformed-system} or \ref{def:alt-transformed-system} with inhomogeneities $\swei{\mathfrak{H}}{s}_k$ and, if $k_0<|s|$, $\swei{\mathfrak{h}}{s}_{k_0}$ if there exists an $\epsilon>0$ such that for all $\tau\leq -\epsilon^{-1}$,
\begin{align*}
\swei{\upphi}{s}_k\Big|_{\Sigma_\tau\cap \lp(\{r\leq r_++\epsilon\}\cup \{r\geq\epsilon^{-1}\}\rp)}=0\,,\qquad 
\swei{\mathfrak{H}}{s}_k\Big|_{\Sigma_\tau\cap \lp(\{r\leq r_++\epsilon\}\cup \{r\geq\epsilon^{-1}\}\rp)}=0\,,\qquad 
\swei{\mathfrak{h}}{s}_{k_0}\Big|_{\Sigma_\tau\cap \lp(\{r\leq r_++\epsilon\}\cup \{r\geq\epsilon^{-1}\}\rp)}=0\,,
\end{align*}
in the first case for $k=0,\dots \min\{|s|,k_0+1\}$ and in the second case for $k=0,\dots, k_0$.
\end{definition}

Under the outgoing assumption, sufficient integrability follows if certain bulk and energy norms in $\mc R$ are finite. Conversely, outgoing and sufficient integrability imply finiteness of bulk norms in $\mc R$: 

\begin{lemma}[Sufficient integrability and outgoing property] \label{lemma:suf-integrability} Fix $s\in\mathbb{Z}$, $k_0\in\{0,\dots,|s|\}$ and $M>0$. Suppose that, for each $k=0,\dots,\min\{k_0+1,|s|\}$, $\swei{\upphi}{s}_k$ denotes an outgoing solution, in the sense of Definition~\ref{def:outgoing-bdry-phys-space} to the inhomogeneous transformed system of Definitions~\ref{def:transformed-system} or Definitions~\ref{def:alt-transformed-system}, with inhomogeneities $\swei{\mathfrak H}{s}_k$ and, if $k_0<|s|$, $\swei{\mathfrak h}{s}_{k_0}$.

If for each $R^*>0$, $J\geq 0$, and $k\in\{0,\dots |s|\}$ we have
\begin{align*}
\overline{\mathbb{I}}^J[\swei{\tilde\upphi}{s}_k\mathbbm{1}_{[-R^*,R^*]}](-\infty,\infty)+\sup_{\tau\leq 0}\overline{\mathbb{E}}^J[\swei{\tilde\upphi}{s}_k\mathbbm{1}_{[-R^*,R^*]}](\tau) <\infty\,,\qquad \text{~if~}|a|<M\,,\\
\overline{\mathbb{I}}^J[\swei{\tilde\upphi}{s}_k\mathbbm{1}_{[-R^*,R^*]}](-\infty,\infty)<\infty\,,\qquad \text{~if~}|a|=M\,,
\end{align*}
and conditions~\ref{eq:integrability-Hk} for the inhomogeneities; then, $\swei{\tilde\upphi}{s}_k$ are sufficiently integrable for each $k=0,\dots,|s|$. 

Conversely, if $\swei{\tilde\upphi}{s}_k$ are sufficiently integrable, then we have that for all $J\geq 0$ and $k\in\{0,\dots k_0\}$
\begin{align*}
\overline{\mathbb{I}}_1^{J}[{\tilde\upphi}_k](-\infty,\infty)<\infty\,,\qquad \text{~if~}|a|<M\,,\\
\overline{\mathbb{I}}_1[{\tilde\upphi}_k](-\infty,\infty)<\infty\,,\qquad \text{~if~}|a|=M\,.
\end{align*}
\end{lemma}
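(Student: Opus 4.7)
The proof is a technical bookkeeping exercise tying Definition~\ref{def:suf-integrability} to the physical-space toolbox assembled in Section~\ref{sec:toolbox-physical-space}. The key observation at the outset is that $\tilde{t}^*$ differs from Boyer--Lindquist $t$ only by a function of $r$, so at fixed $(r,\theta,\phi)$ we have $d\tau=dt$, and by Fubini we may freely trade the $\tau$-integrals appearing in the bulk norms $\overline{\mathbb{I}}^J$ for the $t$-integrals appearing in Definition~\ref{def:suf-integrability} at fixed spatial point. Below we sketch the case $|a|<M$; the extremal case $|a|=M$ is parallel, with the caveat that the overlined (non-degenerate-at-$\mc{H}^+$) norms must be replaced by their un-overlined counterparts near the horizon since the redshift estimate is unavailable there.

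For the forward direction, condition~\eqref{eq:integrability-Hk} on the inhomogeneities is among the hypotheses, so it remains to produce conditions (ii) and (iii) of Definition~\ref{def:suf-integrability}. For the bounded-$|r^*|$ condition on $\swei{\tilde\upphi}{s}_k$, let $G(r^*)$ denote the integrand in $r^*$ of the definition (i.e.\ the function whose sup one is asked to bound); the hypothesis $\overline{\mathbb{I}}^{J+1}[\swei{\tilde\upphi}{s}_k\mathbbm{1}_{[-R^*,R^*]}](-\infty,\infty)<\infty$ together with Fubini shows that $G$ and $\partial_{r^*}G$ both lie in $L^1_{r^*}([-R^*,R^*])$, whereupon the standard one-dimensional embedding $W^{1,1}\hookrightarrow L^\infty$ on a compact interval yields the required uniform-in-$r^*$ bound. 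For the $O(w)$ decay of $\swei{\upphi}{s}_{k_0+1}$ as $r^*\to\pm\infty$, we induct from $k=|s|$ down to $k=k_0+1$: the outgoing property supplies vanishing data at large $|r^*|$ in the far past, and the large-$|r^*|$ and redshift estimates of Proposition~\ref{prop:bulk-flux-large-r} propagate $r$-weighted bulk and flux control forward in $\tau$, with errors controlled inductively by the level $k+1$ above. The same $W^{1,1}\hookrightarrow L^\infty$ argument in $r^*$ then upgrades the resulting $r$-integrated weighted control to the pointwise-in-$r^*$ $O(w)$ decay.

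For the converse direction, the bounded-$|r^*|$ part of $\overline{\mathbb{I}}_1^J[\swei{\tilde\upphi}{s}_k](-\infty,\infty)$ is obtained directly by integrating the $L^\infty_{r^*}$ bound from Definition~\ref{def:suf-integrability}(ii) over the compact $r^*$-interval. For the large-$|r^*|$ contribution we induct from $k=k_0$ down to $k=0$, applying Proposition~\ref{prop:bulk-flux-large-r} and its redshift analogue to control $\overline{\mathbb{I}}^J_{-\delta,1}[\swei{\tilde\upphi}{s}_k\mathbbm{1}_{|r^*|\geq R^*}]$ by initial data, bounded-region contributions (finite by the previous paragraph), inhomogeneity contributions (finite by~\eqref{eq:integrability-Hk}), and the large-$|r^*|$ bulk of $\swei{\tilde\upphi}{s}_{k+1}$. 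The seed of the induction at $k=k_0$ is the bulk of $\swei{\upphi}{s}_{k_0+1}$ at large $|r^*|$, which is finite: the $r^{p-3}$-weight on the zeroth-order term in $\overline{\mathbb{I}}^J_1$ combines with the $O(w)=O(r^{-2})$ decay from Definition~\ref{def:suf-integrability}(iii) to yield an $r^{-4}$ integrand at infinity, with an analogous redshift-weighted computation near $\mc{H}^+$. Higher-order $J$ is handled by commuting with $\{T,Z,X^*,r^{-1}\mathring{\slashed\nabla}^{[s]}\}$ exactly as in the proof of Proposition~\ref{prop:higher-order-bulk-flux}, using the elliptic estimate derived from the PDE~\eqref{eq:transformed-k-tilde} to trade pairs of $X^*$ derivatives for tangential ones.

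The only genuine technical point is the $W^{1,1}\hookrightarrow L^\infty$ upgrade in one spatial dimension required to pass from bulk norms to the pointwise-in-$r^*$ quantities of Definition~\ref{def:suf-integrability}; everything else is a routine inductive descent across the levels $k=0,\dots,|s|$ of the transformed system using the physical-space estimates of Section~\ref{sec:toolbox-physical-space}, combined with Fubini to convert between $\tau$- and $t$-integrations at fixed $r$.
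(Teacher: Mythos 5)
Your forward direction is sound and essentially mirrors the paper's argument: the $W^{1,1}(r^*)\hookrightarrow L^\infty(r^*)$ embedding on a compact interval is precisely the fundamental-theorem-of-calculus estimate used in the paper, and the outgoing support condition together with the $\sup_{\tau\le 0}$ flux hypothesis is exactly what lets you restrict the $\tau_1$-flux to a bounded $r^*$ region and invoke the large-$|r^*|$ estimates of Proposition~\ref{prop:bulk-flux-large-r}.

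The converse direction, however, has a genuine gap. You propose to bound $\overline{\mathbb{I}}_1^{J}[\tilde\upphi_k](-\infty,\infty)$ by ``applying Proposition~\ref{prop:bulk-flux-large-r} \ldots\ by initial data, bounded-region contributions, inhomogeneity contributions, and the large-$|r^*|$ bulk of $\tilde\upphi_{k+1}$.'' But Proposition~\ref{prop:bulk-flux-large-r} is an estimate over a finite slab $(\tau_1,\tau_2)$, and its right-hand side carries an energy flux $\overline{\mathbb{E}}_p[\tilde\upphi_j](\tau_1)$. There is no ``initial data'' at $\tau_1=-\infty$. In the converse direction the only hypotheses are the outgoing condition and sufficient integrability; there is no $\sup_{\tau\le 0}\overline{\mathbb{E}}$ bound available (that is a hypothesis of the \emph{forward} direction only). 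The outgoing condition lets you insert $\mathbbm{1}_{[-R^*,R^*]}$ inside the $\tau_1$-flux for very negative $\tau_1$, but finiteness of $\overline{\mathbb{I}}^J[\tilde\upphi_k\mathbbm{1}_{[-R^*,R^*]}](-\infty,\infty)$ (which is what sufficient integrability gives) does \emph{not} imply a uniform-in-$\tau$ bound on $\overline{\mathbb{E}}^J[\tilde\upphi_k\mathbbm{1}_{[-R^*,R^*]}](\tau)$, only $L^1_\tau$ control. You therefore cannot send $\tau_1\to-\infty$ directly. What is needed, and what the paper does, is the dyadic pigeonhole argument of Lemma~\ref{lemma:pigeonhole-energy-decay}: sufficient integrability produces a sequence $\tau_n\to-\infty$ along which the bounded-$r^*$ flux vanishes, and only along this subsequence can the $\tau_1$-flux term on the right-hand side be dropped before sending $\tau_2\to\infty$. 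Without this step the converse direction does not close. (There is a secondary issue in the same spot: the raw estimate~\eqref{eq:bulk-flux-estimate-large-r} also has a $\tau_2$-dependent flux on its right-hand side, and one must first reprove it in the form~\eqref{eq:suf-int-integrated}, where only a $\tau_1$-flux survives, by absorbing the $\tau_2$-terms as in the paper; your sketch glosses over this as well.)
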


\begin{proof} Let us first assume only that $\upphi_k$ are outgoing. 
By the fundamental theorem of calculus, 
\begin{align*}
&\sup_{r\in[r_+,R]}\sum_{J_1+J_2+J_3=0}^{J}\int_{\mathbb{S}^2}\int_{-\infty}^\infty r^{-2-2J_1}\big|(\mathring{\slashed\nabla})^{J_1}T^{j_2}(X^*)^{J_3}\upphi_k\big|^2dt d\sigma \\
&\quad \leq B\overline{E}_{\mc H^+}^J[\tilde\upphi_k](-\infty,\infty) +B\mathbb{I}^{J+1}[\tilde\upphi_k\mathbbm{1}_{[r_+,R]}](-\infty,\infty)\,.
\end{align*}
Take $p\in[0,1]$ excluding $p=0$ if $J\neq 0$. By Propositions~\ref{prop:bulk-flux-large-r} and \ref{prop:higher-order-bulk-flux}, we have for each $|a|<M$, 
\begin{align*}
&\sum_{k=0}^{|s|}\lp(\overline{\mathbb{I}}^{J+|s|-k}_p[{\tilde\upphi}_k](\tau_1,\tau_2)+\overline{\mathbb{E}}_{\mc H^+}^{J+|s|-k}[{\tilde\upphi}_k](\tau_1,\tau_2)\rp) \\
&\quad\leq B\sum_{k=0}^{|s|}\lp(\overline{\mathbb{I}}^{J+|s|-k}[{\tilde\upphi}_k\mathbbm{1}_{[-R^*,R^*]}](\tau_1,\tau_2) +B\overline{\mathbb{E}}_p^{J+|s|-k}[{\tilde\upphi}_k](\tau_1)\rp) \\
&\qquad\quad+ B\sum_{k=0}^{|s|}\sum_{J_1+J_2+J_3=0}^{J+|s|-k}\int_{\mc R_{(\tau_1,\tau_2)}}wr^{-2J_1}\lp|(\mathring{\slashed\nabla})^{J_1}T^{j_2}(X^*)^{J_3}\frac{\tilde{\mathfrak{H}}_k}{w}\rp|^2dr^*d\sigma d\tau\,, \numberthis\label{eq:suf-int-integrated}
\end{align*}
for sufficiently large $R^*$; here we have repeated the proof of the propositions with simple applications of Cauchy--Schwarz to handle the inhomogeneities. If $\tau_1$ is chosen to be sufficiently negative, then by the support conditions in Definition~\ref{def:outgoing-bdry-phys-space}, we can introduce a factor of $\mathbbm{1}_{[-R^*,R^*]}$ inside the energy $\mathbb{E}$.
\begin{align*}
&\sum_{k=0}^{|s|}\lp(\overline{\mathbb{I}}_p^{J+|s|-k}[{\tilde\upphi}_k](\tau_1,\tau_2)+\overline{\mathbb{E}}_{\mc H^+}^{J+|s|-k}[{\tilde\upphi}_k](\tau_1,\tau_2)\rp) \\
&\quad\leq B\sum_{k=0}^{|s|}\lp(\overline{\mathbb{I}}^{J+|s|-k}[{\tilde\upphi}_k\mathbbm{1}_{[-R^*,R^*]}](\tau_1,\tau_2) +B\overline{\mathbb{E}}^{J+|s|-k}[{\tilde\upphi}_k\mathbbm{1}_{[-R^*,R^*]}](\tau_1)\rp) \\
&\qquad\quad+ B\sum_{k=0}^{|s|}\sum_{J_1+J_2+J_3=0}^{J+|s|-k}\int_{\mc R_{(\tau_1,\tau_2)}}wr^{-2J_1}\lp|(\mathring{\slashed\nabla})^{J_1}T^{j_2}(X^*)^{J_3}\frac{\tilde{\mathfrak{H}}_k}{w}\rp|^2dr^*d\sigma d\tau\,,
\end{align*}
where the last line is bounded by the assumption that \eqref{eq:integrability-Hk} holds.

If $\upphi_k$ are further assumed to be sufficiently integrable for $k=0,\dots, |s|$,  Lemma~\ref{lemma:pigeonhole-energy-decay} produces a dyadic sequence $\{\tau_n\}_{n=1}^\infty$ with $\tau_n\to -\infty$ as $n\to \infty$ such that, choosing $\tau_1=\tau_n$ and taking the limit $n\to \infty$, we have
\begin{align*}
&\sum_{k=0}^{|s|}\lp(\overline{\mathbb{I}}_p^{J+|s|-k}[{\tilde\upphi}_k](-\infty,\tau_2)+\overline{\mathbb{E}}_{\mc H^+}^{J+|s|-k}[{\tilde\upphi}_k](-\infty,\tau_2)\rp) \\
&\quad\leq B\sum_{k=0}^{|s|}\overline{\mathbb{I}}^{J+|s|-k}[{\tilde\upphi}_k\mathbbm{1}_{[-R^*,R^*]}](-\infty,\tau_2)  \\
&\qquad\quad+ B\sum_{k=0}^{|s|}\sum_{J_1+J_2+J_3=0}^{J+|s|-k}\int_{\mc R_{(-\infty,\tau_2)}}wr^{-2J_1}\lp|(\mathring{\slashed\nabla})^{J_1}T^{j_2}(X^*)^{J_3}\frac{\tilde{\mathfrak{H}}_k}{w}\rp|^2dr^*d\sigma d\tau\,,
\end{align*}
Now we can also take $\tau_2\to \infty$, since the right hand side will be finite by Definition~\ref{def:suf-integrability}. In the case where we assume sufficient integrability for $k=0,\dots, k_0+1$ with $k_0<|s|$ follows similarly.
\end{proof}

We are finally ready to relate the ODEs of Definitions~\ref{def:transformed-system-ODE} and \ref{def:alt-transformed-system-ODE} with the PDEs of Definitions~\ref{def:transformed-system} and \ref{def:alt-transformed-system}:

\begin{lemma}[Reduction to classical solutions to ODEs] \label{lemma:reduction-classical-odes} Fix $M>0$ and $s\in\mathbb Z$. Suppose that, for $k=0,\dots,|s|$, $\swei{\upphi}{s}_k$ denote sufficiently integrable and outgoing solutions to the transformed system of Definition~\ref{def:transformed-system} with inhomogeneities $\swei{\mathfrak{H}}{s}_k$. Let $\smlk{\uppsi}{s}{k}$, $\smlk{\mathfrak G}{s}{k}$  be as in Lemma~\ref{lemma:Plancherel}. Then, for almost every $\omega\in\mathbb{R}$, $\smlk{\uppsi}{s}{k}$ are smooth solutions, with outgoing boundary conditions as per Definition~\ref{def:outgoing-bdry-freq-space}, to the transformed system of ODEs of Definition~\ref{def:transformed-system-ODE}, with $\Lambda=\bm\Lambda^{[s],\,a\omega}_{ml}$ being one of the eigenvalues from Lemma~\ref{lemma:spheroidal-angular-ode}, and with inhomogeneities $\smlk{\mathfrak G}{s}{k}$.

A similar statement holds for the alternative transformed system of Definition~\ref{def:transformed-system} when $k_0<|s|$ is fixed: if, for $k=0,\dots k_0+1$, $\swei{\upphi}{s}_k$ denote sufficiently integrable and outgoing solutions to the transformed system of Definition~\ref{def:alt-transformed-system} with inhomogeneities $\swei{\mathfrak{H}}{s}_k$ and  $\swei{\mathfrak{h}}{s}_{k_0}$ and $\smlk{\uppsi}{s}{k}$, $\smlk{\mathfrak G}{s}{k}$, $\smlk{\mathfrak g}{s}{k_0}$  be as in Lemma~\ref{lemma:Plancherel}, then for almost every $\omega\in\mathbb{R}$, $\smlk{\uppsi}{s}{k}$ are smooth solutions, with outgoing boundary conditions as per Definition~\ref{def:outgoing-bdry-freq-space}, to the transformed system of ODEs of Definition~\ref{def:alt-transformed-system-ODE} and with inhomogeneities $\smlk{\mathfrak G}{s}{k}$ and $\smlk{\mathfrak g}{s}{k_0}$.
\end{lemma}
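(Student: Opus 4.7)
The first step is to use sufficient integrability, in the form of Lemma~\ref{lemma:Plancherel}, to define $\smlk{\uppsi}{s}{k}$ and $\smlk{\mathfrak G}{s}{k}$ as elements of $L^2_\omega$ and to justify differentiating through the Fourier/spheroidal integrals up to the orders dictated by the PDE system. Since $T$ and $Z$ are Killing they commute with the differential operators in the transformed system of Definition~\ref{def:transformed-system}; under conjugation by $e^{-i\omega t}e^{im\phi^*}S^{[s],a\omega}_{ml}(\theta)$, they reduce to $-i\omega$ and $im$, $L$ and $\uL$ reduce to their frequency-space analogues in Definition~\ref{def:transformed-system-ODE}, and $\mathring{\slashed\triangle}^{[s]}_T$ reduces to the spheroidal eigenvalue $\bm\Lambda_{ml}^{[s],a\omega}$ by Lemma~\ref{lemma:spheroidal-angular-ode}. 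Together with orthogonality of the spin-weighted spheroidal harmonics, this yields the system \eqref{eq:transformed-transport-separated}--\eqref{eq:transformed-k-separated} as an $L^2_\omega$ identity, hence holds pointwise for almost every $\omega$. For such $\omega$, viewing \eqref{eq:transformed-k-separated} as a linear second-order ODE with coefficients smooth on $(r_+,\infty)$, a standard bootstrap from the $H^2_{\mathrm{loc}}$ regularity inherited from sufficient integrability upgrades each $\smlk{\uppsi}{s}{k}$ to $C^\infty$ in $r$ on $(r_+,\infty)$.

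The remaining, and harder, task is to verify the outgoing boundary conditions of Definition~\ref{def:outgoing-bdry-freq-space}. By Definition~\ref{def:outgoing-bdry-phys-space}, $\swei{\upphi}{s}_k$ (and the inhomogeneities) vanish on a neighborhood of $\mc{H}^+$ and a neighborhood of spatial infinity whenever $\tau\le -\epsilon^{-1}$. Since $\tilde t^* - t$ is a function of $r$ alone, this vanishing translates, at each fixed $r$ close to $r_+$, to compact support from below in the Boyer--Lindquist $t$ variable; the analogous statement holds at each fixed large $r$, corresponding to vanishing along $\mc{I}^-$. Up to the rescalings of Definition~\ref{def:psihor-psiout}, the two independent homogeneous ODE solutions behave near $r_+$ like $(r-r_+)^{\pm\xi}$ (respectively $e^{\pm\beta/(r-M)}$ in the extremal case) and near $r=\infty$ like $e^{\pm i\omega r^*}r^{\mp 2iM\omega\mp|s|(1\pm\sign s)}$. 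A Paley--Wiener-type argument, using that $\upphi_k(\cdot,r)$ has support in $t$ bounded from below at fixed $r$ close to the boundaries, shows that the Fourier coefficients extend analytically into the appropriate half-plane in $\omega$; only the outgoing mode is compatible with that analyticity and the growth dictated by the $L^2_\omega$ bound, forcing the incoming-mode coefficient to vanish identically.

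The main obstacle is then to upgrade this leading-order matching to the $O(r-r_+)$ and $O(r^{-1})$ remainder bounds in \eqref{eq:def-outgoing-bdry-freq-space-hor}--\eqref{eq:def-outgoing-bdry-freq-space-infty}. I would proceed by an iterative Frobenius-type expansion: once the incoming component is known to vanish, the ODE itself, together with the decay $\smlk{\mathfrak G}{s}{k}=O(w)$ (which follows from the growth conditions \eqref{eq:integrability-Hk} in the definition of sufficient integrability), forces the subleading behavior to match the next term in the formal power-series expansion of $\uppsi^{[s],(a\omega)}_{(k),m\Lambda,\mc{H}^+}$ or $\uppsi^{[s],(a\omega)}_{(k),m\Lambda,\mc{I}^+}$, yielding the stated remainders. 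The alternative transformed system of Definition~\ref{def:alt-transformed-system} is handled identically: its only additional feature is the extra inhomogeneity $\swei{\mathfrak h}{s}_{k_0}$, whose Fourier coefficient $\smlk{\mathfrak g}{s}{k_0}$ simply enters as a further contribution to the effective inhomogeneity at level $k_0$ in Definition~\ref{def:alt-transformed-system-ODE}, with no effect on either the formal reduction or the boundary analysis.
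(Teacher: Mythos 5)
The smoothness step is in the same spirit as the paper's: use sufficient integrability plus Plancherel to show that an appropriate family of radial $C^J$ norms of $\smlk{\uppsi}{s}{k}$ is $L^1_\omega$, hence finite for almost every $\omega$. (The paper does this directly at every order $J$ via the fundamental theorem of calculus rather than via an elliptic bootstrap from $H^2_{\rm loc}$, but that is a cosmetic difference.)

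The boundary‐condition step is where you genuinely diverge from the paper, and where your sketch has a gap. The paper's route is entirely $L^2$-based: from Lemma~\ref{lemma:suf-integrability} one gets an $L^2_{\omega}l^2_{ml}L^2_{r^*}$ bound on the quantity
\[
\big(\p_{r^*} \mp i(\omega - m\upomega_\pm)\big)\big(c_k\,\smlk{\uppsi}{s}{k}\big)
\]
in a neighborhood of $r=r_+$ (respectively $r=\infty$); a pigeonhole argument then produces a radial sequence along which the $L^2_\omega$ norm of this quantity vanishes, and for a.e.\ $\omega$ the pointwise quantity tends to zero along a subsequence. Since the two Frobenius branches of the homogeneous ODE are distinguished precisely by whether this quantity tends to zero, the outgoing condition of Definition~\ref{def:outgoing-bdry-freq-space} follows. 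Your proposal replaces this with a Paley--Wiener argument: since at fixed $r$ near the boundary $\swei{\upphi}{s}_k(\cdot,r)$ vanishes for $t$ below a threshold $T(r)$, $\hat\upphi_k(\omega,r)$ extends analytically to $\Im\omega>0$, and you claim ``only the outgoing mode is compatible with that analyticity.'' This last claim does not hold as stated. Both Frobenius solutions $(r-r_+)^{\pm\xi}$ are entire in $\omega$, so analyticity of $\hat\upphi_k(\omega,r)$ in $\Im\omega>0$ does not by itself force the incoming coefficient $A_{\rm in}(\omega)$ to vanish. What would be needed is a quantitative comparison of exponential rates: the Paley--Wiener bound $|\hat\upphi_k(\omega,r)|\lesssim C(r)\,e^{-T(r)\Im\omega}$, where $T(r)\sim -r^*$ as $r\to r_+$ (because $\overline{t}(r)\sim r^*$ near $r_+$), against the growth $|\psi_{\rm in}(\omega,r)|\sim e^{-r^*\Im\omega}$ and the decay $|\psi_{\rm out}(\omega,r)|\sim e^{r^*\Im\omega}$, together with an a priori growth bound on $A_{\rm in}$, $A_{\rm out}$ into $\Im\omega>0$ (which is not supplied by the $L^2_\omega$ bound on the real axis; an $L^2$ bound on the line controls nothing pointwise off it). One also needs to rule out cancellation between the two branches, and to check that the ``constant'' $C(r)$ in the Paley--Wiener bound is actually finite — $\upphi_k(\cdot,r)$ at fixed $r$ is $L^2_t$ and supported in a half-line but not obviously $L^1_t$, so even the basic Paley--Wiener estimate requires care. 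None of these points is addressed. This is the essential idea that the paper's pigeonhole argument is designed to avoid: by working with quantities such as $(\p_{r^*}+i(\omega-m\upomega_+))(\cdots)$ that literally pick out one branch, the paper never needs any analytic continuation in $\omega$. As written, your boundary argument does not close, although the underlying intuition is not wrong.

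The Frobenius‐expansion remark for the $O(r-r_+)$ and $O(r^{-1})$ remainders, and the treatment of the alternative system via the extra inhomogeneity $\swei{\mathfrak h}{s}_{k_0}$, are both fine and match the paper.
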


\begin{proof} The proof is identical to the case of $s=0$, established in \cite[Sections 5.3 and 5.4]{Dafermos2016b}. It is also similar to prove the statement with $k_0=|s|$ or $k_0<|s|$, and so we will focus on the former case.

\medskip
\noindent\textit{Step 1: smoothness}. By convention, see Lemma~\ref{lemma:spheroidal-angular-ode}, $\Lambda_{ml}^{[s],\,a\omega}$ is a measurable function of $\omega$, so it is enough to prove the result for a fixed admissible pair $(m,l)$. 

Let us first deal with the case $|a|<M$. For $n\in\mathbb{Z}_+$, define
\begin{align*}
 U_n^{(J)}:=\{\omega\in\mathbb{R}\colon \sml{\uppsi}{s}{k}\in C^J(r_+,r_++n),\,\, k=0,\dots,|s|\}\,,\qquad U:=\cap_{J,n=1}^\infty U_n^{(J)};
\end{align*}
The goal of this step is to show that $U$, containing the set of $\omega\in\mathbb{R}$ such that $\sml{\uppsi}{s}{k}$ is smooth, has full measure. Since
\begin{align*}
|U^c|=|\cup_{n,J=1}^\infty (U_n^{(J)})^c|\leq \sum_{n,J=1}^\infty|U_n^{(J)}|\,,
\end{align*}
it is enough to show that $|U_n^{(J)}|=0$ for every $n,J\in\mathbb{Z}_+$. This is guaranteed if, for instance the function  $\sup_{r\in[r_+,r_++n]}\sum_{k=0}^{|s|}|(X^*)^j \sml{\uppsi}{s}{k}|^2$ is in $L^1_\omega$, or if 
\begin{align*}
\sum_{0\leq j\leq J}\int_{-\infty}^\infty \sum_{ml} \sum_{k=0}^{|s|}\sup_{r\in[r_+,r_++n]}\big|(X^*)^j \smlk{\uppsi}{s}{k}\big|^2d\omega <\infty\,, \numberthis\label{eq:smoothness-intermediate}
\end{align*}
for each $n,J\in\mathbb{Z}_+$. 

We have reduced the proof to establishing \eqref{eq:smoothness-intermediate}. First, we note that, after applying the Plancherel, Lemma~\ref{lemma:Plancherel}, to the sufficient integrability condition in Definition~\ref{def:suf-integrability}, we almost obtain the formula on the left hand side of \eqref{eq:smoothness-intermediate}: the only difference will be that the supremum in $r$ is taken after, not before, integration in $\omega$. To get around this difficulty, we apply the fundamental theorem of calculus:
\begin{align*}
\sup_{r\in[r_+,r_++n]}\big|(X^*)^j \smlk{\uppsi}{s}{k}\big|^2 \leq \big|(X^*)^j\smlk{\uppsi}{s}{k}\big|^2\Big|_{r=r_+}+\int_{r_+}^{r_++n}\big|(X^*)^{j+1} \smlk{\uppsi}{s}{k}\big|^2\,.
\end{align*}
Thus, applying Plancherel in the forms in Lemma~\ref{lemma:Plancherel},
\begin{align*}
&\sum_{0\leq j\leq J}\int_{-\infty}^\infty \sum_{ml} \sum_{k=0}^{|s|}\sup_{r\in[r_+,r_++n]}\big|(X^*)^j \smlk{\uppsi}{s}{k}\big|^2d\omega\\
&\quad\leq
\sum_{k=0}^{|s|}\sum_{0\leq j\leq J}\lp(\int_{-\infty}^\infty\int_{\mathbb{S}^2} \big|(X^*)^j \swei{\upphi}{s}_{k}\big |^2\Big|_{r=r_+} d\sigma dt +\int_{-\infty}^\infty\int_{\mathbb{S}^2}\int_{r_+}^{r_++n} \big|(X^*)^{j+1} \swei{\upphi}{s}_{k}\big |^2 dr d\sigma dt \rp)\\
&\quad \leq n\sum_{k=0}^{|s|}\sum_{0\leq j\leq J+1} \sup_{r\in[r_+,r_++n]}\big|(X^*)^j \swei{\upphi}{s}_{k}\big |^2<\infty\,,
\end{align*}
by the sufficient integrability assumption, see Definition~\ref{def:suf-integrability}.

Now, for the case $|a|=M$, we follow a similar strategy, replacing $X^*$ with $\p_{r^*}$ and the intervals $r\in[r_+,r_++n]$ with intervals of the form $r^*\in[-n,n]$.

\medskip
\noindent\textit{Step 2: boundary conditions}.  As a consequence of Lemma~\ref{lemma:suf-integrability}, in particular we have that
\begin{align*}
&\int_{-\infty}^{\infty}\sum_{ml}\int_{R}^\infty\lp|(\p_{r^*}-i\omega)\lp(w^{-\frac{|s|-k}{2}}\Delta^{\frac{|s|-k}{2}\sign s}(r^2+a^2)^{-\frac{|s|-k}{4}(1+\sign s)} \smlk{\uppsi}{s}{k}\rp)\rp|^2 dr d\omega  \\
&=
\int_{\mathbb{S}^2}\int_{-\infty}^{\infty}\int_{R}^\infty\lp|\lp(\p_{r^*}+\p_t\rp)\swei{\tilde\upphi}{s}_k\rp|^2 dr d\tau d\sigma \leq B\overline{\mathbb{I}}_{1}[\swei{\dbtilde\upphi}{s}_k](-\infty,\infty)<\infty\,,
\end{align*}
so, by the pigeonhole principle, we can find a dyadic sequence $\{r_n\}_{n=1}^\infty$ with $r_n\to\infty$ as $n\to \infty$ such that
\begin{align*}
\lim_{n\to \infty}\int_{-\infty}^{\infty}\lp|(\p_{r^*}-i\omega)\lp(c_k \smlk{\uppsi}{s}{k}\rp)\rp|^2\Big|_{r=r_n} d\omega = 0 \,.
\end{align*}
We deduce that, for almost every $\omega\in\mathbb{R}$, 
\begin{align*}
\lim_{n\to \infty}\lp|\lp(c_k\smlk{\uppsi}{s}{k}\rp)'-i\omega \lp(c_k \smlk{\uppsi}{s}{k}\rp)\rp|^2(r_n) = 0 \,.
\end{align*} 

In the case $|a|=M$, the same argument, \textit{mutatis mutandis}, may be applied close to $r=r_+$ to obtain the boundary condition there: there is a sequence $\{r_n\}_{n=1}^\infty$, with the property that $r_n\to r_+$ as $n\to \infty$ such that, for almost every $\omega\in\mathbb{R}$, 
\begin{align*}
\lim_{n\to \infty}\lp|\lp(c_k \smlk{\uppsi}{s}{k}\rp)'+i(\omega-m\upomega_+) \lp(c_k \smlk{\uppsi}{s}{k}\rp)\rp|^2(r_n) = 0 \,.
\end{align*} 
On the other hand, an asymptotic analysis at these irregular singularities tells us that there are two linearly independent behaviors which the solution can take, see our previous \cite[Section 4.3.2]{SRTdC2020}, but the above condition is only verified if \eqref{eq:def-outgoing-bdry-freq-space-infty} in the case $r\to \infty$, or \eqref{eq:def-outgoing-bdry-freq-space-hor} in the case $r\to r_+$ for $|a|=M$, are satisfied. In the case of $|a|<M$, since we have that
\begin{align*}
&\int_{-\infty}^{\infty}\sum_{ml}\int^{r_++\epsilon}_{r_+}\frac{1}{(r-r_+)^2}\lp|(\p_{r^*}+i(\omega-m\upomega_+))\lp(w^{-\frac{|s|-k}{2}}\Delta^{\frac{|s|-k}{2}\sign s} \smlk{\uppsi}{s}{k}\rp)\rp|^2 dr d\omega  \\
&\quad=
\int_{\mathbb{S}^2}\int_{-\infty}^{\infty}\int^{r_++\epsilon}_{r_+}\lp|\frac{1}{r-r_+}\lp(\p_{r^*}-T-\frac{a}{2Mr_+}Z\rp)\swei{\tilde\upphi}{s}_k\rp|^2 dr d\tau d\sigma \\
&\quad\leq B\overline{\mathbb{I}}[\swei{\tilde\upphi}{s}_k](-\infty,\infty)<\infty\,,
\end{align*}
by Lemma~\ref{lemma:suf-integrability}, we deduce that as $r\to r_+$
\begin{align*}
\int_{-\infty}^{\infty}\sum_{ml}\lp|(\p_{r^*}+i(\omega-m\upomega_+))\lp(w^{-\frac{|s|-k}{2}}\Delta^{\frac{|s|-k}{2}\sign s} \smlk{\uppsi}{s}{k}\rp)\rp|^2 dr d\omega =O(r-r_+)\,.
\end{align*}
Once more, the asymptotic analysis at regular singularity $r=r_+$ tells us that there are two linearly independent behaviors which the solution can take, see our previous \cite[Section 4.3.2]{SRTdC2020}, but the above condition is only verified if \eqref{eq:def-outgoing-bdry-freq-space-hor} is satisfied. 
\end{proof}

Finally, we introduce some notation which will be useful to keep in mind in the following sections: for a spin-weighted function $\swei{f}{s}$ which is sufficiently integrable, defining $\sml{f}{s}$ as in Lemma~\ref{lemma:Plancherel}, we define the microlocal operators
\begin{align*}
\mc{P}_{\rm trap}[\swei{f}{s}]&:=\frac{1}{\sqrt{2\pi}}\int_{-\infty}^\infty \sum_{ml}\lp|1-(1-\zeta)\frac{r_{\rm trap}(\omega,m,l)}{r}\rp|e^{-i\omega t} \uppsi_{(k),\,ml}^{[s],\,a,\,\omega}(r)e^{im\phi}{S}^{[s],\, a\omega}_{ml}(\theta,\phi)  dt d\sigma\,,\\
\widetilde{\mc{P}}_{\rm trap}[\swei{f}{s}]&:=\frac{1}{\sqrt{2\pi}}\int_{-\infty}^\infty \sum_{ml}\mathbbm{1}_{\{r_{\rm trap}(\omega,m,l)\neq 0\}}e^{-i\omega t} \uppsi_{(k),\,ml}^{[s],\,a,\,\omega}(r)e^{im\phi}{S}^{[s],\, a\omega}_{ml}(\theta,\phi)  dt d\sigma\,,\\
T^{1/2}[\swei{f}{s}]&:=\frac{1}{\sqrt{2\pi}}\int_{-\infty}^\infty \sum_{ml}|\omega|^{1/2}e^{-i\omega t} \uppsi_{(k),\,ml}^{[s],\,a,\,\omega}(r)e^{im\phi}{S}^{[s],\, a\omega}_{ml}(\theta,\phi)  dt d\sigma\,,\\
Z^{1/2}[\swei{f}{s}]&:=\frac{1}{\sqrt{2\pi}}\int_{-\infty}^\infty \sum_{ml}|m|^{1/2}e^{-i\omega t} \uppsi_{(k),\,ml}^{[s],\,a,\,\omega}(r)e^{im\phi}{S}^{[s],\, a\omega}_{ml}(\theta,\phi)  dt d\sigma\,.
\end{align*}


\section{Frequency space estimates}
\label{sec:frequency-estimates-review}

In this section, we briefly recall for the reader the results, for the transformed system of radial ODEs in Definition~\ref{def:transformed-system-ODE}, which have been previously obtained in \cite{Shlapentokh-Rothman2015,TeixeiradaCosta2019, SRTdC2020}. As always, we take $s\in\mathbb{Z}$, $|a|\leq M$ and let $(\omega,m,\Lambda)$ be an admissible frequency triple. For some $0<\omega_{\rm low}<\omega_{\rm high}<\infty$ and $\varepsilon_{\rm width}>0$, it will also be convenient to define
\begin{align*}
\mc{F}_{\rm low}&:=\lp\{(\omega,m,\Lambda) \text{~admissible~}\colon \Lambda\leq \varepsilon_{\rm width}^{-1}\omega_{\rm high}^2\,,\,\, |\omega|< \omega_{\rm low}\rp\}\,,\\
\mc{F}_{\rm int}&:=\lp\{(\omega,m,\Lambda) \text{~admissible~}\colon \Lambda\leq \varepsilon_{\rm width}^{-1}\omega_{\rm high}^2\,,\,\, \omega_{\rm low}<|\omega|< \omega_{\rm high}\rp\}\,,\\
\mc{F}_{\rm high}&:=\mc F_{\rm admiss}\backslash\lp(\mc{F}_{\rm low} \cup \mc{F}_{\rm int}\rp)\,.
\end{align*}

Let us first introduce the notation
\begin{align*}
\swei{\mathfrak{W}}{s}(\omega,m,\Lambda):=\lp({u}^{[s],\,a,\,\omega}_{m\Lambda,\,\mc{I}^+}\rp)'{u}^{[s],\,a,\,\omega}_{m\Lambda,\,\mc{H}^+}-{u}^{[s],\,a,\,\omega}_{m\Lambda,\,\mc{I}^+}\lp({u}^{[s],\,a,\,\omega}_{m\Lambda,\,\mc{H}^+}\rp)'\,,
\end{align*}
where ${u}^{[s],\,a,\,\omega}_{m\Lambda,\,\mc{I}^+}$ and ${u}^{[s],\,a,\,\omega}_{m\Lambda,\,\mc{H}^+}$ are given in Definition~\ref{def:psihor-psiout}. $\swei{\mathfrak{W}}{s}$ is independent of $r$ and obeys explicit bounds in $\mc F_{\rm int}$ proven in \cite{Shlapentokh-Rothman2015,TeixeiradaCosta2019,TdC-thesis}, based on the seminal paper \cite{Whiting1989}. We quote such bounds from the thesis \cite{TdC-thesis}:

\begin{theorem}[Quantitative mode stability] \label{thm:quantitative-mode-stab-in-text} Fix $s\in\frac12\mathbb{Z}$, $M>0$ and some $a_0\in[0,M)$.  Then, assuming $|a|\leq a_0$, we have 
\begin{align*}
\sup_{(\omega,m,\Lambda)\in\mc F_{\rm int},s\leq 0}\lp|\swei{\mathfrak{W}}{s}(\omega,m,\Lambda)\rp|^{-1}\leq B(a_0,\omega_{\rm high},\omega_{\rm low},\varepsilon_{\rm width})\,,\\
\sup_{(\omega,m,\Lambda)\in\mc F_{\rm int},s> 0}|\omega-m\upomega_+|^{-1}\lp|\swei{\mathfrak{W}}{s}(\omega,m,\Lambda)\rp|^{-1}\leq B(a_0,\omega_{\rm high},\omega_{\rm low},\varepsilon_{\rm width})\,.
\end{align*}
\end{theorem}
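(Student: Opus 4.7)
The overall strategy is a compactness--continuity argument built upon qualitative mode stability. First, I would invoke the qualitative mode stability result of Whiting \cite{Whiting1989}, extended to the full subextremal range $|a|<M$ and to all spins $|s|\leq 2$ in \cite{Shlapentokh-Rothman2015, Andersson2017}: for any admissible triple $(\omega,m,\Lambda)$ as in Definition~\ref{def:admissible-freqs}, the outgoing solutions $u^{[s],a,\omega}_{m\Lambda,\mc{H}^+}$ and $u^{[s],a,\omega}_{m\Lambda,\mc{I}^+}$ are linearly independent, i.e.\ $\mathfrak{W}^{[s]}\neq 0$. The proof of this qualitative statement relies on Whiting's integral transformation, which maps the radial ODE into one amenable to a positivity/integration-by-parts argument; this is the deep input of the theorem.

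Next, I would establish that $\mathfrak{W}^{[s]}$ depends continuously on $(\omega,m,\Lambda,a)$ for $(\omega,m,\Lambda)\in\mc F_{\rm int}$ and $|a|\leq a_0<M$. At the regular singular point $r=r_+$, the outgoing solution $u^{[s],a,\omega}_{m\Lambda,\mc{H}^+}$ admits a convergent Frobenius-type expansion whose coefficients are rational in the parameters, yielding continuous dependence uniformly on compact subsets of $(r_+,\infty)$; this uses crucially that $r_+-r_->0$, hence the restriction $|a|\leq a_0<M$. At the irregular singular point $r=\infty$, an asymptotic expansion in $e^{i\omega r}r^{2iM\omega}$ combined with a standard variation-of-constants argument produces the analogous continuity statement for $u^{[s],a,\omega}_{m\Lambda,\mc{I}^+}$. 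The Wronskian $\mathfrak{W}^{[s]}$ is therefore continuous in its arguments.

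The third step is compactness of the parameter range. The set of $(\omega,m,\Lambda,a)$ under consideration is compact: $\omega\in[-\omega_{\rm high},-\omega_{\rm low}]\cup[\omega_{\rm low},\omega_{\rm high}]$, $a\in[-a_0,a_0]$, the bound $\Lambda\leq\varepsilon_{\rm width}^{-1}\omega_{\rm high}^2$ combined with the admissibility lower bound~\ref{it:admissible-freqs-triple-Lambda-lower-bound} forces $m$ to range over a finite set of (half-)integers, and consequently $\Lambda$ lies in a bounded interval. By continuity and nonvanishing of $\mathfrak{W}^{[s]}$ on this compact set, $|\mathfrak{W}^{[s]}|$ attains a strictly positive minimum, giving the desired uniform bound for $s\leq 0$.

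For $s>0$, I would use the Teukolsky--Starobinsky identities of Proposition~\ref{prop:TS-radial-constant-identities}, which relate the $+s$ and $-s$ outgoing solutions via the operator $\Delta^{s}(\hat{\mc{D}}_0^-)^{2s}$. The horizon transmission coefficient $\mathfrak{C}_s^{(2)}$ carries an explicit factor $(\omega-m\upomega_+)^{2|s|}$ (visible in \eqref{eq:a-to-A-horizon-extremal-minus} and in the formula for $\mathfrak{C}_s^{(2)}$), so tracking the normalizations of the model solutions yields $|\mathfrak{W}^{[+s]}|\geq b(a_0,\omega_{\rm low},\omega_{\rm high},\varepsilon_{\rm width})|\omega-m\upomega_+|\,|\mathfrak{W}^{[-s]}|$, at which point the $s\leq 0$ bound from Step~3 concludes the argument. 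The main obstacle is Step~1, as mode stability requires the delicate Whiting transformation together with a careful sign/positivity argument; Steps~2 and~3 are standard ODE theory, while Step~4 is essentially algebraic once one has booked the right powers of $(\omega-m\upomega_+)$.
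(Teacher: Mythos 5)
This theorem is not proven in the paper but quoted from \cite{Shlapentokh-Rothman2015,TeixeiradaCosta2019} and the thesis \cite{TdC-thesis}, so there is no ``paper's proof'' in the literal sense; I will compare to the approach of those references. Your route is genuinely different from theirs. The cited works prove \emph{quantitative} mode stability: they run Whiting's integral transformation and then extract, by careful estimates on the transformed ODE, a lower bound on $|\mathfrak{W}^{[s]}|$ that is \emph{explicit} in the black hole and frequency parameters. You instead propose a soft compactness--continuity argument: qualitative mode stability (Whiting/Shlapentokh-Rothman/Andersson--Ma--Paganini--Whiting) plus continuity of $\mathfrak{W}^{[s]}$ on a compact parameter set. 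This would indeed establish the literal inequality in the theorem statement (the constant $B$ is not required to be explicit there), so your argument is not wrong, but it deliberately discards the main content of the references --- which is the explicit dependence on parameters emphasized in Theorem~\ref{thm:mode-stability-intro} of the introduction. For the purposes of \emph{this} paper the non-explicit bound would suffice, which is why the soft route is a legitimate alternative; it is simply weaker than what is actually cited.

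Two technical cautions. First, in your Step~3 you take the supremum over all admissible $(\omega,m,\Lambda)\in\mc F_{\rm int}$, which includes real $\Lambda$ that are not angular eigenvalues. Whiting's original argument intertwines the full separated PDE, so you should be explicit that the form of mode stability you need --- nonvanishing of $\mathfrak{W}^{[s]}$ for \emph{every} admissible real $\Lambda$, treated as a free parameter in the radial ODE --- is exactly what \cite{Shlapentokh-Rothman2015} establishes (by running the transform at the level of the radial ODE alone); otherwise compactness over the continuum of admissible $\Lambda$ would not be justified. Second, in Step~4 your bookkeeping of the $(\omega-m\upomega_+)$ power is internally inconsistent: the factor $(\omega-m\upomega_+)^{2|s|}$ you cite from \eqref{eq:a-to-A-horizon-extremal-minus} and from the $\mathfrak{C}_s^{(2)}$ formula is the \emph{extremal} ($|a|=M$) case, which is excluded here since $|a|\leq a_0<M$. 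For $|a|<M$ one should read off from the formula in Proposition~\ref{prop:TS-radial-constant-identities} that the numerator of $\mathfrak{C}_s^{(2)}$ contains exactly one factor $[4Mr_+(\omega-m\upomega_+)]^2$ (from the $j=|s|$ term of $\mathfrak{C}_s^{(9)}$) while the denominator has a single $(\omega-m\upomega_+)$, giving a net linear vanishing --- which is what produces the $|\omega-m\upomega_+|^{-1}$ in the theorem --- but this should be derived, not asserted by appeal to the extremal formula. The underlying idea (pass to spin $-s$ via Teukolsky--Starobinsky, where the compactness argument does give a uniform lower bound, then track the horizon transmission coefficient) is correct and necessary, since $\mathfrak{W}^{[+s]}$ does degenerate at $\omega=m\upomega_+$ and a direct compactness argument would fail there.
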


Next, we review\footnote{The analogous results to Theorems~\ref{thm:ODE-ILED} and \ref{thm:ODE-ILED-fixed-m} in our previous work are \cite[Theorems 6.1 and 6.2]{SRTdC2020}. Not every statement in Theorems~\ref{thm:ODE-ILED} and \ref{thm:ODE-ILED-fixed-m} is explicitly stated in \cite[Theorems 6.1 and 6.2]{SRTdC2020}, but the needed results follow easily from the proofs given for the latter, see \cite[Remark 6.1.1]{SRTdC2020}.} the main results of our frequency analysis in our previous work \cite{SRTdC2020}:
\begin{theorem}[Uniform high and low frequency estimates] \label{thm:ODE-ILED}
Fix $s\in\{0,\pm 1,\pm 2\}$, $k_0\in\{0,\dots,|s|\}$, $M>0$, and $a_0\in[0,M)$. For $k=0,\dots, |s|$, let $\smlambdak{\uppsi}{s}{k}$ be solutions, with outgoing boundary conditions (see Definition~\ref{def:outgoing-bdry-freq-space}), to the transformed system of radial ODEs in Definition~\ref{def:alt-transformed-system-ODE} with respect to inhomogeneities $\smlambdak{\mathfrak G}{s}{k}$ and $\smlambdak{\mathfrak g}{s}{k_0}$ satisfying $\smlambdak{\mathfrak G}{s}{k},\smlambdak{\mathfrak g}{s}{k_0}=O(w)$ as $r^*\to \pm \infty$.

There are parameters $\omega_{\rm low}>0$, $\omega_{\rm high}>0$, $\varepsilon_{\rm width}>0$, $E>0$ and $E_W>0$, depending only on $s$, $M$, $a_0$ and $J_{\rm max}$, such that, for each  $|a|\leq a_0$ and each admissible $(\omega,m,\Lambda)$, there is a parameter $r_{\rm trap}$ and a choice of functions $y_{(k)}$, $\hat{y}_{(k)}$, $\tilde{y}$, $f$,  $h_{(k)}$, $\chi_1$, $\chi_2$, $\chi_3$, for each $0\leq k\leq |s|$ (we drop the subscripts when $k=|s|$), which depend on the frequency triple only in a region $r^*\in[-R^*_{\rm large},R^*_{\rm large}]$ and satisfy the uniform bounds
\begin{gather*}
\begin{gathered}
|y_{(k)}|+|\hat{y}_{(k)}|+|\tilde{y}|+|f|+|f'|+|h_{(k)}|+|\chi_1|+|\chi_2|+|\chi_3|+|r_{\rm trap}|\leq B\,,\\
 (|y'_{(k)}|+|\hat{y}_{(k)}'|+|\tilde y'|+|f'|+|h_{(k)}|)(r^2+|r^*|^{3/2})\leq B\,,\\
 \chi_1,\chi_3=\begin{dcases}
 1,&r^*\geq R^*_{\rm large}\\
 0,&r^*\leq -R^*_{\rm large}
 \end{dcases}\,, \qquad  \chi_2=\begin{dcases}
 0,&r^*\geq R^*_{\rm large}\\
 1,&r^*\leq -R^*_{\rm large}
 \end{dcases}
 \end{gathered}\,,\numberthis\label{eq:current-bounds}\\
 |r_{\rm trap}-r_+|^{-1}\leq B(a_0) \,,\qquad |r_{\rm trap}-3M|+|r_{\rm trap}-3M|^{-1}\leq B(a_0)\,,
\end{gather*}
so that, writing
\begin{align*}
&\lp(\smlambdak{\mathfrak G}{s}{j}+\smlambdak{\mathfrak g}{s}{j}\rp)\cdot(f_j,h_j,y_j,\chi)\cdot \lp(\smlambdak{\uppsi}{s}{j},\lp(\smlambdak{\uppsi}{s}{j}\rp)'\rp) \\
&\quad = \lp\{2(y+\hat{y}+\tilde{y}+f)\Re\lp[\smlambda{\mathfrak G}{s}(\overline{\smlambda{\Psi}{s}})'\rp]+(h+f')\Re\lp[\smlambda{\mathfrak G}{s}\overline{\smlambda{\Psi}{s}}\rp]\rp\}\mathbbm{1}_{\{j=k_0=|s|\}}\\
&\quad\qquad -E\lp(\chi_1(\omega-m\upomega_+)+\chi_2\omega\rp)\Im\lp[\smlambda{\mathfrak G}{s}\overline{\smlambda{\Psi}{s}}\rp]\mathbbm{1}_{\{j=k_0=|s|\}}\\
&\quad\qquad+\mathbbm{1}_{\{j\leq k_0,j<|s|\}} 2(y_{(j)}+\hat{y}_{(j)})\Re\lp[\smlambdak{\mathfrak G}{s}{j}(\overline{\smlambdak{\uppsi}{s}{j}})'\rp] +\mathbbm{1}_{\{j<k_0\leq |s|\}}h_{(j)}\Re\lp[\smlambdak{\mathfrak G}{s}{j}\overline{\smlambdak{\uppsi}{s}{j}}\rp]\\
&\quad\qquad +\mathbbm{1}_{\{j\leq k_0,j<|s|\}}\mathbbm{1}_{\mc F_{\rm high}}\Re\lp[\lp(\smlambdak{\mathfrak G}{s}{j}+\smlambdak{\mathfrak g}{s}{j}\rp)\overline{\smlambdak{\uppsi}{s}{j}}\rp]\\
&\quad\qquad -E\mathbbm{1}_{\{j< k_0\leq |s|\}}\mathbbm{1}_{\mc F_{\rm high}^c}\lp(\chi_2\omega  -E\chi_1(\omega-m\upomega_+)\rp)\Im\lp[\smlambdak{\mathfrak G}{s}{j}\overline{\smlambdak{\uppsi}{s}{j}}\rp]\\
&\quad\qquad-\frac12(-1)^s\sign s \mathbbm{1}_{\{j=k_0=|s|\}} E_W\chi_3(\omega-m\upomega_+)w(r^2+a^2)^{|s|+1/2}\\
&\quad\qquad\qquad \times \lp\{\Im\lp[\frac{\smlambdak{\mathfrak G}{s}{0}}{w}\overline{\lp(\frac{r^2+a^2}{\Delta}\uL\rp)^{2|s|}\lp((r^2+a^2)^{|s|-1/2}\smlambdak{\uppsi}{s}{0}\rp)}\rp]\rp.\\
&\quad\qquad\qquad\qquad\lp.-\Im\lp[\smlambdak{\uppsi}{s}{0}\overline{\lp(\frac{r^2+a^2}{\Delta}\uL\rp)^{2|s|}\lp((r^2+a^2)^{|s|-1/2}\frac{\smlambdak{\mathfrak G}{s}{0}}{w}\rp)}\rp]\rp\}
\,. \numberthis \label{eq:ODE-estimates-inhomogeneity-terms}
\end{align*}
the following estimates hold.

\begin{enumerate}[label=(\roman*)]
\item \label{it:thm-ODE-ILED-withoutTS} {\normalfont Estimates for weak $r\to r_+$ decay.}  If for $s>0$  we can only assume $\smlambdak{\mathfrak G}{s}{k}=O(\Delta)$ as $r^*\to -\infty$, then we have
\begin{align*}
&\int_{r_+}^\infty \lp\{\frac{1}{r^2}|(\smlambda{\Psi}{s})'|^2+\lp(1-\frac{r_{\rm trap}}{r}\rp)^2\lp(\frac{\omega^2}{r^2}+\frac{|\Lambda|}{r^3}\rp)|\smlambda{\Psi}{s}|^2+\frac{1}{r^3}\lp(s^2+\frac{1}{r}\rp)|\smlambda{\Psi}{s}|^2\rp\}dr \\
&\qquad +\int_{r_+}^\infty\lp\{ \frac{1}{r^{5/2}}(|m|+|\omega|)|\smlambda{\Psi}{s}|^2+\sum_{k=0}^{|s|-1}\frac{r_{\rm trap}}{r^3}(|m|^3+|\omega|^3)|\smlambdak{\uppsi}{s}{k}|^2\rp\}dr\\
&\qquad + \sum_{k=0}^{|s|-1}\int_{r_+}^\infty \lp\{\frac{1}{r^2}|(\smlambdak{\uppsi}{s}{k})'|^2+\frac{1}{r^2}\lp(\omega^2+\frac{|\Lambda|+1}{r}\rp)|(\smlambdak{\uppsi}{s}{k})|^2\rp\}dr\\
&\qquad+\sum_{k=0}^{|s|}\lp\{(\omega-m\upomega_+)^2|\smlambdak{\uppsi}{s}{k}|^2\Big|_{r=-\infty}+\omega^2|\smlambdak{\uppsi}{s}{k}|^2\Big|_{r=\infty}\rp\}\\
&\quad\leq B\mathbbm{1}_{\mc F_{\rm int}}\sum_{k=0}^{|s|}\int_{-R^*_{\rm large}}^{R^*_{\rm large}} \lp(|(\smlambdak{\uppsi}{s}{k})'|^2+|\smlambdak{\uppsi}{s}{k}|^2\rp)dr^*  \\
&\quad\qquad+ B\int_{-\infty}^\infty \sum_{k=0}^{|s|}(\smlambdak{\mathfrak G}{s}{k}+\smlambdak{\mathfrak g}{s}{k})\cdot(f_k,h_k,y_k,\chi_k)\cdot \lp(\smlambdak{\uppsi}{s}{k},\lp(\smlambdak{\uppsi}{s}{k}\rp)'\rp) dr^*\\
&\quad\qquad + B\mathbbm{1}_{\{s>0\}}\mathbbm{1}_{\mc F_{\rm low}}\lp(|\smlambdak{\uppsi}{s}{0}|^2\Big|_{r=-\infty}+B\mathbbm{1}_{\mc F_{\rm low}}\sum_{j=0}^{|s|-1}|w^{-1}\smlambdak{\mathfrak G}{s}{j}|^2\Big|_{r=-\infty}\rp)  \,, \numberthis\label{eq:part1-ILED-withoutTS-top}
\end{align*}
and we can improve our estimates for $k<|s|$ to
\begin{align*}
&b(a_0)\int_{r_+}^\infty \lp\{\frac{1}{r^2}|(\smlambdak{\uppsi}{s}{k})''|^2+\frac{1}{r^2}(\omega^2+|\Lambda|+1)|(\smlambdak{\uppsi}{s}{k})'|^2+ \frac{1}{r^2}\lp(\omega^2+\frac{|\Lambda|+1}{r}\rp)|\smlambdak{\uppsi}{s}{k}|^2\rp\}dr\\
&\quad\qquad +\lp\{(\omega-m\upomega_+)^2|\smlambdak{\uppsi}{s}{k}|^2\Big|_{r=r_+}+\omega^2|\smlambdak{\uppsi}{s}{k}|^2\Big|_{r=\infty}\rp\}\\
&\quad\leq B\int_{r_+}^\infty \lp\{\frac{1}{r^2}|(\smlambdak{\uppsi}{s}{k+1})'|^2+\frac{1}{r^{5/2}}(|m|+|\omega|)|\smlambdak{\uppsi}{s}{k+1}|^2+\frac{1}{r^3}|\smlambdak{\uppsi}{s}{k+1}|^2\rp\}dr  \\
&\quad\qquad  +B\sum_{j=0}^{k+1}\mathbbm{1}_{\mc F_{\rm int}}\int_{-R^*_{\rm large}}^{R^*_{\rm large}} \lp(|(\smlambdak{\uppsi}{s}{j})'|^2+|\smlambdak{\uppsi}{s}{j}|^2\rp)dr^*\numberthis\label{eq:part1-ILED-withoutTS-bottom} \\
&\quad\qquad +B\int_{-\infty}^\infty \sum_{j=0}^{k}(\omega^2+m^2+1)\lp(\smlambdak{\mathfrak G}{s}{j}+\smlambdak{\mathfrak g}{s}{j}\rp)\cdot(f_j,h_j,y_j,\chi_j)\cdot\lp(\smlambdak{\uppsi}{s}{j},\lp(\smlambdak{\uppsi}{s}{j}\rp)'\rp) dr^*\\
&\quad\qquad +B\int_{-\infty}^\infty \sum_{j=0}^{k}\lp(\smlambdak{\mathfrak G}{s}{j}+\smlambdak{\mathfrak g}{s}{j}\rp)'\cdot(f_j,h_j,y_j,\chi_j)\cdot\lp(\lp(\smlambdak{\uppsi}{s}{j}\rp)',\lp(\smlambdak{\uppsi}{s}{j}\rp)''\rp) dr^* \\ 
&\quad \qquad + B\mathbbm{1}_{\{s>0\}}\mathbbm{1}_{\mc F_{\rm low}}\lp(|\smlambdak{\uppsi}{s}{0}|^2\Big|_{r=r_+}+\sum_{j=0}^{k-1}\lp(|w^{-1}\smlambdak{\mathfrak G}{s}{j}|^2+|w^{-1}\lp(\smlambdak{\mathfrak G}{s}{j}\rp)'|^2\rp)\Big|_{r=r_+}\rp)\,,
\end{align*}
where we take $E_W=0$ in \eqref{eq:ODE-estimates-inhomogeneity-terms} if $s>0$.

\item \label{it:thm-ODE-ILED-withoutenergy} {\normalfont Estimates for compact $r^*$ support.} If $\smlambdak{\uppsi}{s}{k}$ and $\smlambdak{\mathfrak G}{s}{k}$ are compactly supported in $r^*$, then  \eqref{eq:part1-ILED-withoutTS-top} and  \eqref{eq:part1-ILED-withoutTS-bottom} also hold with the following modifications: we set $E_W=E= 0$ in \eqref{eq:ODE-estimates-inhomogeneity-terms}, we suppress the term on the right hand side which has been localized to $\mc F_{\rm int}$ and the very last line in both estimates.

\item \label{it:thm-ODE-ILED-bdry-term} {\normalfont Boundary term estimates.} If we have the stronger decay $\smlambdak{\mathfrak G}{s}{k}=O(w^{1+(|s|-k)(1+\sign s)})$ as $r^*\to -\infty$, then we have the following boundary term estimate for any $R_{\rm bdry^*}>-\infty$:
\begin{align*}
&\mathbbm{1}_{\mc F_{\rm low}}\lp[\sum_{k=0}^{|s|}\lp[(\omega-m\upomega_+)^2+(|s|-k)^2\frac{(r_+-r_+)^2}{4Mr_+}\rp]|\smlk{\uppsi}{s}{k}|^2\Big|_{r=r_+}\rp]\\
&\quad\leq B\int_{-\infty}^{R^*_{\rm bdry}} \sum_{k=0}^{|s|}\lp(\smlambdak{\mathfrak G}{s}{k}+\smlambdak{\mathfrak g}{s}{k}\rp)\cdot(f_k,h_k,y_k,\chi_k)\cdot \lp(\smlambdak{\uppsi}{s}{k},\lp(\smlambdak{\uppsi}{s}{k}\rp)'\rp) dr^*\\
&\quad \qquad +\sum_{k=0}^{|s|}\int_{R_{\rm bdry}^*}^{R_{\rm bdry}^*+1}\mathbbm{1}_{\mc F_{\rm low}}\lp(|(\smlambdak{\uppsi}{s}{k})'|^2+|\smlambdak{\uppsi}{s}{k}|^2\rp)dr^*
\,, \numberthis\label{eq:part1-bdry-withTS}
\end{align*}
where $E_W\neq 0$. If $s>0$ and we only have  $\smlambdak{\mathfrak G}{s}{k}=O(\Delta)$ as $r^*\to -\infty$, then we take take $E_W=0$; there is an  $R_{\rm bdry}^*>0$ sufficiently large such that if $\omega_{\rm low}$ in the statement above is sufficiently small depending on $R_{\rm bdry}^*$ then
\begin{align*}
&\mathbbm{1}_{\mc F_{\rm low}}\sum_{k=0}^{|s|}R_{\rm bdry}^*\int_{R_{\rm bdry}^*}^{2R_{\rm bdry}^*} \lp\{\frac{1}{r^2}|(\smlambdak{\uppsi}{s}{k})'|^2+\frac{1}{r^2}\lp(\omega^2+\frac{|\Lambda|+1}{r}\rp)|\smlambdak{\uppsi}{s}{k}|^2\rp\}dr\\
&\quad\leq \sum_{k=0}^{|s|} |\smlambdak{\uppsi}{s}{k}|^2\Big|_{r=r_+}\\
&\quad\quad +B\int_{-\infty}^\infty \sum_{k=0}^{|s|}(\smlambdak{\mathfrak G}{s}{k}+\smlambdak{\mathfrak g}{s}{k})\cdot(f_k,h_k,y_k,\chi_k)\cdot \lp(\smlambdak{\uppsi}{s}{k},\lp(\smlambdak{\uppsi}{s}{k}\rp)'\rp) dr^*\,. \numberthis\label{eq:part1-ILED-withoutTS-top-low}
\end{align*}
\end{enumerate}
\end{theorem}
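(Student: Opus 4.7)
Since this theorem is essentially a restatement of the main ODE result of the authors' previous paper \cite{SRTdC2020}, the plan is to recall the strategy rather than rederive every identity. The starting point is a standard microlocal decomposition of the admissible frequency domain: on $\mc F_{\rm int}$ (where $|\omega|$ and $\Lambda$ are both bounded and $\omega$ is bounded away from $0$, and from $m\upomega_+$ if $|a|=M$), the Green's function representation for the Wronskian-normalised homogeneous solutions $u^{[s],\,a,\,\omega}_{m\Lambda,\,\mc H^+}$ and $u^{[s],\,a,\,\omega}_{m\Lambda,\,\mc I^+}$ together with the quantitative mode stability bound of Theorem~\ref{thm:quantitative-mode-stab-in-text} yields the fixed-frequency estimate with constant depending only on $a_0$, $\omega_{\rm low}$, $\omega_{\rm high}$, $\varepsilon_{\rm width}$; this is precisely what produces the compactly-supported $\mc F_{\rm int}$ error on the right-hand side of \eqref{eq:part1-ILED-withoutTS-top}.

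The substance of the argument is the construction, separately on $\mc F_{\rm low}$ and $\mc F_{\rm high}$, of virial-type multipliers $y_{(k)}$, $\hat y_{(k)}$, $\tilde y$, $f$, $h_{(k)}$ and Killing-type currents with coefficients $E, E_W$ so that the resulting energy identity dominates the left-hand side of \eqref{eq:part1-ILED-withoutTS-top} up to the frequency-localised $\mc F_{\rm int}$ error and the inhomogeneity contributions \eqref{eq:ODE-estimates-inhomogeneity-terms}. At top level $k=|s|$ one repeats the scheme of \cite{Dafermos2016a} for $s=0$: a Morawetz current $y$ with derivative carefully vanishing at $r_{\rm trap}(\omega,m,l)$ to absorb trapping, a Killing current $E(\chi_1(\omega-m\upomega_+)+\chi_2\omega)$ to handle the ergoregion/superradiance, and large-$r$ and red-shift-type corrections $\hat y$, $\tilde y$, $f$, $h$ to produce coercivity in the weighted norms with the desired $r$-weights near $\mc H^+$ and $\mc I^+$. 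The new difficulties compared to $s=0$ are, first, that the potential $\smlambda{\mc V}{s}$ has an imaginary part forcing the $E_W$ Teukolsky--Starobinsky current (built from $(\tfrac{r^2+a^2}{\Delta}\uL)^{2|s|}$ acting on $\smlambdak{\uppsi}{s}{0}$) in order to recover a positive trapping term; and second, that the $k<|s|$ radial ODEs are coupled through the terms $aw(imc^Z+c^{\rm id})\smlambdak{\uppsi}{s}{j}$, and moreover acquire a first-order drift $-(|s|-k)(w'/w)$ whose sign is good only for $s<0$. The lower-level estimates \eqref{eq:part1-ILED-withoutTS-bottom} are obtained from these by additionally running the transport identity \eqref{eq:transformed-constraint-separated} as an ``elliptic-at-trapping'' estimate that gains a derivative from the $k{+}1$ variable, which is why trapping is absent on the left of \eqref{eq:part1-ILED-withoutTS-bottom}; the coupling errors are then closed by a downward induction on $k$ from $k=|s|$.

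Parts \ref{it:thm-ODE-ILED-withoutenergy} and \ref{it:thm-ODE-ILED-bdry-term} are corollaries of the construction in \ref{it:thm-ODE-ILED-withoutTS}. For \ref{it:thm-ODE-ILED-withoutenergy}, compact $r^*$-support of $\smlambdak{\uppsi}{s}{k}$ annihilates all boundary contributions at $r_\pm$, so the Killing currents with coefficients $E$, $E_W$ may be switched off (they are only needed precisely to absorb those boundary terms) and the $\mc F_{\rm int}$ piece becomes trivial, since for compactly supported data the frequency-space solutions are entire in $\omega$. For \ref{it:thm-ODE-ILED-bdry-term}, one reads off the horizon flux terms generated by the very same multipliers: the improved decay $O(w^{1+(|s|-k)(1+\sign s)})$ of the inhomogeneity guarantees that the Teukolsky--Starobinsky $E_W$ current is still applicable when $s<0$, yielding the enhanced horizon control \eqref{eq:part1-bdry-withTS}; when $s>0$ the $E_W$ current must be discarded and a separate low-frequency argument (a Hardy-type bound on an interval $[R_{\rm bdry}^*,2R_{\rm bdry}^*]$ after choosing $\omega_{\rm low}$ small enough relative to $R_{\rm bdry}^*$) recovers \eqref{eq:part1-ILED-withoutTS-top-low}.

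The main obstacle, both conceptually and technically, is the construction in the high-frequency regime of the multipliers with the precise degeneration at $r_{\rm trap}$ and the correct interplay with the $E$ and $E_W$ currents across the superradiant range: ensuring that (a) the bulk density is nonnegative everywhere simultaneously, (b) the boundary terms at $r_\pm$ are dominated by the Killing-current flux with the prescribed signs, and (c) the coupling errors from $c^Z_{s,k,j}, c^{\rm id}_{s,k,j}$ can be absorbed by the lower-order ($k<|s|$) bulk estimates with a constant independent of $(\omega,m,\Lambda)\in\mc F_{\rm high}$. This is exactly the content of the high-frequency analysis carried out in \cite{SRTdC2020}, to which we refer for the detailed construction.
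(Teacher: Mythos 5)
You correctly observe that Theorem~\ref{thm:ODE-ILED} is quoted, not proved, in the present paper: the footnote preceding the statement defers to \cite{SRTdC2020} (their Theorems 6.1 and 6.2 and Remark 6.1.1), and your high-level recapitulation of that paper's scheme — frequency decomposition into $\mc F_{\rm int}$, $\mc F_{\rm low}$, $\mc F_{\rm high}$; quantitative mode stability plus Green's functions on $\mc F_{\rm int}$; Morawetz currents with degeneration at $r_{\rm trap}$ combined with the $E$ Killing current and the $E_W$ Teukolsky--Starobinsky current to absorb superradiance and the imaginary part of the potential; the transport/elliptic improvement gaining a derivative from $\smlambdak{\uppsi}{s}{k+1}$ for the lower-level estimates — is a faithful summary, with the same approach as the paper (namely, to cite). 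The one detail worth flagging is your stated reason for dropping the $\mc F_{\rm int}$ error in part~\ref{it:thm-ODE-ILED-withoutenergy}: the claim that compactly supported data make the frequency-space solutions entire in $\omega$ conflates compact support in \emph{time} (which yields entireness by Paley--Wiener) with the compact \emph{$r^*$-support} assumed in the hypothesis; the actual point is that with compact $r^*$-support the boundary fluxes at $r_\pm$ vanish, so the Killing currents can be switched off and the multiplier estimate can be closed on $\mc F_{\rm int}$ directly, never generating the bulk error. This does not affect the substance of your argument, which is correctly deferred to \cite{SRTdC2020}.
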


Furthermore the results of \cite{SRTdC2020} allow us to define the trapping parameters $\gamma_\pm(a_0,M)$ appearing in our energy norms, see \eqref{eq:def-zeta}:
\begin{definition}[Trapping parameters] \label{def:trapping-parameters} Fix $M>0$ and $a_0\leq [0,M)$. We set
\begin{align*}
3M-\gamma_-&:=\inf_{a\leq a_0, (\omega,m,\Lambda)\in\mc F_{\rm high},r_{\rm trap}(\omega,m,\Lambda)\neq 0}r_{\rm trap}(\omega,m,\Lambda) - \eta(a_0)\,,\\
3M+\gamma_+&:=\sup_{a\leq a_0, (\omega,m,\Lambda)\in\mc F_{\rm high},r_{\rm trap}(\omega,m,\Lambda)\neq 0}r_{\rm trap}(\omega,m,\Lambda) + \eta(a_0)\,,
\end{align*}
where $\eta(a_0)$ is a fixed continuous function with the properties that $\eta(0)=\lim_{a_0\to M}\eta(a_0)=0$ and $\eta(a_0)>0$ for $a_0\in(0,M)$. 
\end{definition}

With this definition, we have
\begin{align*}
\zeta=\lp(1-\frac{3M}{r}\rp)^2\lp(1-\mathbbm{1}_{[3M-\gamma_-,3M+\gamma_+]}\rp)\leq B\lp(1-\frac{r_{\rm trap}}{r}\rp)^2
\end{align*}
for any admissible $(\omega,m,\Lambda)$. Our lower bound on $r_{\rm trap}$ becomes sharper if we restrict to fixed $m$ solutions:

\begin{theorem}[Fixed $m$ high and low frequency estimates] \label{thm:ODE-ILED-fixed-m} 
Fix $s\in\{0,\pm 1,\pm 2\}$, $M>0$, $a_0\in[0,M)$ and an admissible azimuthal number $m$. There are positiive parameters $R^*_{\rm large}$, $\omega_{\rm high}$, $\omega_{\rm low}$, $\varepsilon_{\rm width}$, $E$ and $E_W$, depending only on $s$, $M$, $a_0$ and $m$, such that, for each  $|a|\leq a_0$ and each admissible $\omega$ and $\Lambda$, there is a parameter $r_{\rm trap}$, which is either zero or verifies $(1+\sqrt{2})M<r_{\rm trap}\leq B(a_0)$, and a choice of functions $y_{(k)}$, $\hat{y}_{(k)}$, $\tilde{y}$, $f$,  $h_{(k)}$, $\chi_1$, $\chi_2$, $\chi_3$, for each $0\leq k\leq |s|$ (we drop the subscripts when $k=|s|$), depending on the  frequencies only in a region $r^*\in[-R^*_{\rm large},R^*_{\rm large}]$ and satisfying the uniform bounds  \eqref{eq:current-bounds} so that the estimates of Theorem~\ref{thm:ODE-ILED} hold with constant $B$ depending also on $m$.
\end{theorem}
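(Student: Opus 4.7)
The plan is to follow the proof of Theorem~\ref{thm:ODE-ILED} (\cite[Theorem 6.1]{SRTdC2020}), now permitting all constants, thresholds, and cutoffs to depend additionally on the fixed azimuthal number $m$. The frequency-space case analysis carried out there---splitting into low, intermediate, and high frequencies and, within the last, into non-trapped, superradiant, and trapped regimes---is already uniform in $m$ modulo constants, and every multiplier construction, Teukolsky--Starobinsky coupling estimate, and cutoff argument goes through \emph{mutatis mutandis}. The estimates \eqref{eq:part1-ILED-withoutTS-top}--\eqref{eq:part1-ILED-withoutTS-top-low} are thus reproduced in the fixed-$m$ setting with constants of the form $B(a_0, M, |s|, m)$. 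The upper bound $r_{\rm trap}\leq B(a_0)$ is immediate from the same analysis.

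The only substantive new point is the refined lower bound on the trapping parameter. In \cite{SRTdC2020}, whenever $r_{\rm trap}$ is nonzero, it is defined as the unique critical point in a bounded $r$-interval of the leading Schwarzschild-like potential
\[
V_0^{[s]} \;=\; \frac{\Delta \Lambda + 4Mram\omega - a^2 m^2}{(r^2+a^2)^2},
\]
arising in the trapped high-frequency regime $|\omega| \geq \omega_{\rm high}$, $\Lambda \leq \varepsilon_{\rm width}^{-1}\omega^2$. For fixed $m$ and $|\omega|\to\infty$, the subleading pieces $am/\omega$ and $(am/\omega)^2$ tend to zero uniformly, so the rescaled potential $V_0^{[s]}/\omega^2$ converges on compact subsets of $(r_+,\infty)$ to $(\Lambda/\omega^2)\,\Delta/(r^2+a^2)^2$, whose unique critical point outside $r_+$ is the largest real root $r_*(a)$ of the cubic
\[
r^3 - 3Mr^2 + a^2 r + M a^2 \;=\; 0.
\]
A direct factorization gives $r_*(M) = (1+\sqrt{2})M$, and $r_*(a)$ is strictly decreasing in $|a|$ on $[0,M]$, ranging from $3M$ down to $(1+\sqrt{2})M$. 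Hence for $|a|\leq a_0 < M$ one has $r_*(a)>(1+\sqrt{2})M$ with a strict gap depending on $a_0$, and a continuity/implicit function argument, with $\omega_{\rm high}(m,a_0,M,|s|)$ chosen sufficiently large, forces $r_{\rm trap} > (1+\sqrt{2})M$. For all other frequency ranges, $r_{\rm trap}$ is zero by convention, so no further work is required.

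The main technical obstacle is arranging that the $m$-dependent thresholds $\omega_{\rm low}$, $\omega_{\rm high}$, $\varepsilon_{\rm width}$ and the cutoff scale $R^*_{\rm large}$ can be chosen consistently: the critical-point analysis above requires $\omega_{\rm high}$ large depending on $m$, while the coercivity arguments for the virial currents $y_{(k)}$ in the intermediate and near-superradiant regimes impose competing constraints tying $\omega_{\rm low}$, $\omega_{\rm high}$, and $\varepsilon_{\rm width}$ together. These constraints are, however, order-preserving, and can all be satisfied by fixing the parameters in the correct order (first $\varepsilon_{\rm width}$, then $\omega_{\rm high}$, then $\omega_{\rm low}$) as functions of $m$, $a_0$, $M$, and $|s|$, exactly as in the proof of Theorem~\ref{thm:ODE-ILED}.
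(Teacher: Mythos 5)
The paper does not prove Theorem~\ref{thm:ODE-ILED-fixed-m} here at all: it is a restatement of a result from~\cite{SRTdC2020}, as the footnote preceding Theorem~\ref{thm:ODE-ILED} makes explicit, so there is no in-paper proof to compare against. That said, your reconstruction is sound. Running the proof of Theorem~\ref{thm:ODE-ILED} with $m$-dependent constants and thresholds is indeed the only sensible route, and the new ingredient you isolate is the right one. Your computation of the trapping cubic is correct: differentiating $\Delta/(r^2+a^2)^2$ yields $\Delta'(r^2+a^2)-4r\Delta = -2\bigl(r^3 - 3Mr^2 + a^2 r + Ma^2\bigr)$, and at $a=M$ the cubic factors as $(\rho-1)(\rho^2-2\rho-1)$ in $\rho=r/M$, giving the largest root $\rho=1+\sqrt{2}$. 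The implicit-differentiation argument for strict monotonicity of $r_*(a)$ on $[0,M]$ and the passage from the $|\omega|\to\infty$ limit to finite $\omega\ge\omega_{\rm high}(m)$ via choosing $\omega_{\rm high}$ large enough that $am/\omega$, $(am/\omega)^2$ are negligible are both correct; the strict inequality $r_{\rm trap}>(1+\sqrt{2})M$ requires exactly the hypothesis $a_0<M$, which you use. Your remark on the dependency ordering of $\varepsilon_{\rm width}$, $\omega_{\rm high}$, $\omega_{\rm low}$ is the right bookkeeping observation. The only caveat worth flagging is that the monotonicity claim deserves one line of justification in a full writeup (you implicitly use $\partial_r f>0$ at the largest root, which holds since $f\to+\infty$ as $r\to\infty$ and the root is simple), but this is not a gap in the argument.
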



\section{Summing the frequency space estimates}
\label{sec:physical-space-estimates}

The main purpose of this section is to upgrade Theorems~\ref{thm:quantitative-mode-stab-in-text}, \ref{thm:ODE-ILED} and \ref{thm:ODE-ILED-fixed-m} into  a series of integrated in $r$ energy estimates for the transformed system of Definitions~\ref{def:transformed-system} and \ref{def:alt-transformed-system}.  A first result is 

\begin{proposition}[ILED for hyperboloidal cutoff I] \label{prop:ODE-to-PDE-future-int} Fix $s\in\{0,\pm 1,\pm 2\}$, $M>0$, and $a_0\in[0,M)$. Let $\xi=\xi(\tilde t^*)$ be a smooth cutoff which vanishes to the past of $\Sigma_0$ and is equal to 1 in the future of $\Sigma_1$. For $k=0,\dots, |s|$, let $\swei{\upphi}{s}_k$ denote solutions to the homogeneous transformed system of  Definition~\ref{def:transformed-system} with the property that $\xi\swei{\upphi}{s}_k$ verify the conditions in  Definitions~\ref{def:suf-integrability} and \ref{def:outgoing-bdry-phys-space}. Then, we have the integrated local energy decay estimate
\begin{align*}
&\mathbb{I}^{\rm deg}[\Phi](0,\infty) +\int_{\mc{R}_0} \frac{1}{r^2}\lp(|\Phi'|^2+|T\mc{P}_{\rm trap}[\xi\Phi]|^2+r^{-1}|\mathring{\slashed\nabla}^{[s]}\mc{P}_{\rm trap}[\xi\Phi]|^2+r^{-1}\lp(s^2+r^{-1}\rp)|\Phi|^2\rp)dr  d\sigma d\tau\\
&\quad+\sum_{k=0}^{|s|-1}\sum_{J_1+J_2=0}^{|s|-k-1}\int_{\mc{R}_0} \frac{1}{r^2}\lp(|T^{J_1}Z^{J_2}\upphi_k''|^2+|T^{J_1+1}Z^{J_2}\upphi_k'|^2+r^{-1}|T^{J_1}Z^{J_2}\mathring{\slashed\nabla}^{[s]}\upphi_k'|^2+|T^{J_1}Z^{J_2}\upphi_k'|^2\rp)dr  d\sigma d\tau\\
&\quad+\sum_{k=0}^{|s|-1}\sum_{J_1+J_2=0}^{|s|-k-1}\int_{\mc{R}_0} \frac{1}{r^2}\lp(|T^{J_1+1}Z^{J_2}\upphi_k|^2+r^{-1}|T^{J_1}Z^{J_2}\mathring{\slashed\nabla}^{[s]}\upphi_k|^2+r^{-1}|T^{J_1}Z^{J_2}\upphi_k|^2\rp)dr  d\sigma d\tau\\
&\!\!\quad\leq  B(a_0)\sum_{k=0}^{|s|}\overline{\mathbb{E}}^{|s|-k}[\tilde\upphi_k](0)\,. \numberthis \label{eq:ODE-to-PDE-future-int}
\end{align*}
\end{proposition}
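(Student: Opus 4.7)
The strategy is to reduce the physical space estimate \eqref{eq:ODE-to-PDE-future-int} to the frequency space estimates of Theorem~\ref{thm:ODE-ILED} via Plancherel's theorem. First I would apply the hyperboloidal cutoff: set $\swei{\tilde\upphi}{s}_k := \xi \swei{\upphi}{s}_k$, which by construction vanishes to the past of $\Sigma_0$ and agrees with $\swei{\upphi}{s}_k$ to the future of $\Sigma_1$. Since the original $\swei{\upphi}{s}_k$ solve the homogeneous transformed system of Definition~\ref{def:transformed-system}, a direct computation using that $T\xi$ is supported in $\mc R_{(0,1)}$ shows that $\swei{\tilde\upphi}{s}_k$ solve the inhomogeneous transformed system with inhomogeneities $\swei{\tilde{\mathfrak H}}{s}_k$ whose support is contained in $\mc R_{(0,1)}$, and whose pointwise magnitude is controlled by $\xi'$ times first derivatives of $\swei{\upphi}{s}_k$, modulo transport-type contributions coming from \eqref{eq:transformed-transport}. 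By assumption the $\swei{\tilde\upphi}{s}_k$ are sufficiently integrable and outgoing, so Lemma~\ref{lemma:reduction-classical-odes} applies: their Fourier--spin-weighted-spheroidal transforms $\smlk{\uppsi}{s}{k}(r)$ are, for a.e.\ $\omega$, classical outgoing solutions to the radial ODE system of Definition~\ref{def:transformed-system-ODE} with inhomogeneities $\smlk{\mathfrak G}{s}{k}$ inherited from $\swei{\tilde{\mathfrak H}}{s}_k$.

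Next I would invoke Theorem~\ref{thm:ODE-ILED}\ref{it:thm-ODE-ILED-withoutTS} at each admissible $(\omega,m,\Lambda)$ to obtain the frequency-localized integrated local energy decay estimates \eqref{eq:part1-ILED-withoutTS-top} and \eqref{eq:part1-ILED-withoutTS-bottom}. Integrating in $\omega$ and summing in $(m,l)$, applying the Plancherel identities from Lemma~\ref{lemma:Plancherel}, the left-hand side of the frequency estimate turns into the physical-space bulk norm appearing on the left of \eqref{eq:ODE-to-PDE-future-int}: the factor $(1 - r_{\rm trap}/r)^2$ multiplied by $\omega^2$ or $|\Lambda|$ translates, modulo the region $r \in [3M-\gamma_-, 3M+\gamma_+]$ where $\zeta$ vanishes, precisely into the operator $\mc P_{\rm trap}$ applied to $T \xi \Phi$ or $\mathring{\slashed\nabla}^{[s]} \xi \Phi$. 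The terms without the trapping factor, together with the bulk terms in $\swei{\upphi}{s}_k$ for $k < |s|$ (which the ODE estimate provides without derivative loss, indeed with an extra second derivative coming from \eqref{eq:part1-ILED-withoutTS-bottom}), recover the higher-order ingredients commuted by $T^{J_1} Z^{J_2}$. Here I would use that $T$ and $Z$ are Killing, hence commute through the equations, so the whole argument applied to $T^{J_1} Z^{J_2} \swei{\tilde\upphi}{s}_k$ yields the sum with $J_1 + J_2 \leq |s|-k-1$ after noting that Fourier in $t$ and $\phi^*$ converts $T, Z$ into $\omega, m$.

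It remains to control the right-hand side of the summed frequency estimate by $\sum_k \overline{\mathbb E}^{|s|-k}[\tilde\upphi_k](0)$. Every inhomogeneity contribution in \eqref{eq:ODE-estimates-inhomogeneity-terms} involves $\smlk{\mathfrak G}{s}{k}$ paired with a current $(y, h, f, \chi)$ bounded by \eqref{eq:current-bounds} or with $\smlk{\uppsi}{s}{k}$ through a $\chi_1(\omega - m\upomega_+) + \chi_2 \omega$ factor; the Teukolsky--Starobinsky term can be handled by moving $2|s|$ derivatives onto $\swei{\tilde{\mathfrak H}}{s}_0$ via the self-adjointness of the operator $\frac{r^2+a^2}{\Delta}\uL$ with respect to the appropriate volume form, after which all the inhomogeneities are localized in $\mc R_{(0,1)}$. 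By Plancherel and the boundedness of these currents, the sum of these terms is bounded by the spacetime norm of $\swei{\tilde{\mathfrak H}}{s}_k$ with up to $|s|-k$ derivatives in the $T, Z$ directions plus one in the $L, \uL$ directions. Since $\swei{\tilde{\mathfrak H}}{s}_k$ is supported in the finite slab $\mc R_{(0,1)}$ and depends linearly on $\swei{\upphi}{s}_k$ and its first derivatives, Proposition~\ref{prop:finite-in-time-first-order} together with Proposition~\ref{prop:higher-order-bulk-flux} gives control by $\sum_k \overline{\mathbb E}^{|s|-k}[\tilde\upphi_k](0)$. The bounded-frequency contributions localized to $\mc F_{\rm int}$ and the boundary term $|\smlk{\uppsi}{s}{0}|^2\big|_{r=-\infty}$ appearing in \eqref{eq:part1-ILED-withoutTS-top} for $s > 0$ can be absorbed by the same initial-data quantity using Theorem~\ref{thm:quantitative-mode-stab-in-text} and the finite-in-time estimates for $\mc F_{\rm int}$ frequencies.

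The main obstacle is the sharp handling of the $s>0$ case: the frequency estimate \eqref{eq:part1-ILED-withoutTS-top} requires an additional horizon boundary term in $\mc F_{\rm low}$ that does not come with a decay factor, while the physical variable $\swei{\upphi}{s}_0$ itself is not controlled at $\mc H^+$ without the redshift-type weights encoded in the overbarred norms. This is precisely why the right-hand side of \eqref{eq:ODE-to-PDE-future-int} is stated in terms of $\overline{\mathbb E}^{|s|-k}$ rather than $\mathbb E^{|s|-k}$: the extension procedure of Proposition~\ref{prop:scattering-construction} applied at $\tilde\tau = 0$ converts the horizon flux of $\swei{\tilde\upphi}{s}_0$ in the cutoff slab into a non-degenerate energy on $\Sigma_0$. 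A similar subtlety near $\mc I^+$ in the $s<0$ case is dealt with by the $p$-weighted estimates of Proposition~\ref{prop:bulk-flux-large-r}, which bound the relevant weighted initial data quantity by $\overline{\mathbb E}^{|s|-k}[\tilde\upphi_k](0)$. Combining all the pieces and letting $\tau \to \infty$ (permitted because $\xi \swei{\upphi}{s}_k$ is sufficiently integrable on all of $\mc R$) produces the claimed estimate \eqref{eq:ODE-to-PDE-future-int}.
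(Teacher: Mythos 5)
Your proposal starts from the naive cutoff $\tilde\upphi_k := \xi\upphi_k$ for \emph{all} $k$, but this breaks the transport structure of Definition~\ref{def:transformed-system}: if $\upphi_k = w^{-1}\mc{L}\upphi_{k-1}$, then $w^{-1}\mc{L}(\xi\upphi_{k-1}) = \xi\upphi_k + w^{-1}(\mc{L}\xi)\upphi_{k-1} \neq \xi\upphi_k$, so the $\xi\upphi_k$ do not form a transformed system of ODEs of Definition~\ref{def:transformed-system-ODE} to which Theorem~\ref{thm:ODE-ILED} applies via Lemma~\ref{lemma:reduction-classical-odes}. The paper avoids this by applying the cutoff only at the top level $k_0$ (and $k_0+1$) and \emph{defining} the lower-level variables $\upphi_{k,\cutt}$, $k<k_0$, by backwards integration of the transport equation in frequency space (Lemma~\ref{lemma:weird-inverses}), which preserves the transport relations exactly. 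As explained in Remark~\ref{rmk:comparison-cutoff-types}, the distinction matters beyond bookkeeping: the forward cutoff you propose produces inhomogeneities with insufficient $r$-decay, and controlling the resulting current errors would force peeling-type weights on the initial data, which exceed the right-hand side $\overline{\mathbb{E}}^{|s|-k}[\tilde\upphi_k](0)$ that the Proposition claims.

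Two further gaps. First, even with the correct cutoff system, a single application of the summed frequency estimate with $k_0=|s|$ loses derivatives for the lower-level bulk quantities: the paper recovers them by iterating the construction over $k_0=|s|-1, |s|-2,\dots,0$, and your proposal has no analogue of this iteration. Second, your treatment of the $s>0$ horizon boundary term in $\mc{F}_{\rm low}$ via Proposition~\ref{prop:scattering-construction} is not what the argument requires; the scattering-type extension is used only in the a priori reduction (Lemma~\ref{lemma:reduction-argument}), while the boundary term in question is handled by Lemma~\ref{lemma:weird-inverses-bdry-term-rp}, which exploits the radial Teukolsky--Starobinsky identities together with the stronger decay of the auxiliary spin-$(-s)$ inhomogeneity to invoke Theorem~\ref{thm:ODE-ILED}\ref{it:thm-ODE-ILED-bdry-term}. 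Your handling of the Teukolsky--Starobinsky energy current error by ``self-adjointness of $\frac{r^2+a^2}{\Delta}\uL$'' is also too cavalier: the paper (Step~3 of the proof of Lemma~\ref{lemma:normal-current-errors-nonpeeling}) carefully distributes the $2|s|$ $\mc{L}$-derivatives, tracks compactly supported versus non-compactly supported bulk contributions, and requires the estimates of Lemmas~\ref{lemma:weird-cutoffs-diff-fake-real} and \ref{lemma:weird-cutoffs-inhoms-rough-estimates} to close.
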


Similar methods allow us to prove more general results. For instance, if we do not assume that $\swei{\upphi}{s}_k$ are sufficiently integrable in time, we can nevertheless establish an integrated-in-$r$ estimate:
 
\begin{proposition}[ILED for hyperboloidal cutoff II] \label{prop:ODE-to-PDE-non-future-int} Fix $s\in\{0,\pm 1,\pm 2\}$, $M>0$, $a_0\in[0,M)$, and some $\tau>0$. Let $\xi=\xi(\tilde t^*)$ be a smooth cutoff which vanishes to the past of $\Sigma_0$ and to the future of $\Sigma_\tau$ and is equal to 1 in the time slab between $\Sigma_1$ and $\Sigma_{\tau-1}$. For $k=0,\dots, |s|$, let $\swei{\upphi}{s}_k$ denote solutions to the homogeneous transformed system of  Definition~\ref{def:transformed-system} with the property that $\xi\swei{\upphi}{s}_k$  verify the conditions in  Definitions~\ref{def:suf-integrability} and \ref{def:outgoing-bdry-phys-space}. Then, for some $p>1$, we have the integrated local energy decay estimate
\begin{align*}
&\sum_{J_1+J_2=0}^{3|s|}\mathbb{I}^{\rm deg}[T^{J_1}Z^{J_2}\Phi](0,\tau)\\
&\quad+\sum_{k=0}^{|s|-1}\sum_{J_1+J_2=0}^{4|s|-k-1}\int_{\mc{R}_{(0,\tau)}} \frac{1}{r^2}\lp(|T^{J_1}Z^{J_2}\upphi_k''|^2+|T^{J_1+1}Z^{J_2}\upphi_k'|^2+r^{-1}|T^{J_1}Z^{J_2}\mathring{\slashed\nabla}^{[s]}\upphi_k'|^2+|T^{J_1}Z^{J_2}\upphi_k'|^2\rp)dr  d\sigma d\tau\\
&\quad+\sum_{k=0}^{|s|-1}\sum_{J_1+J_2=0}^{4|s|-k-1}\int_{\mc{R}_{(0,\tau)}} \frac{1}{r^2}\lp(|T^{J_1+1}Z^{J_2}\upphi_k|^2+r^{-1}|T^{J_1}Z^{J_2}\mathring{\slashed\nabla}^{[s]}\upphi_k|^2+r^{-1}|T^{J_1}Z^{J_2}\upphi_k|^2\rp)dr  d\sigma d\tau\\
&\!\!\quad\leq  B(a_0)\sum_{k=0}^{|s|}\lp(\overline{\mathbb{E}}^{4|s|-k}_p[\tilde\upphi_k](\tau)+\overline{\mathbb{E}}_p^{4|s|-k}[\tilde\upphi_k](0)\rp)\,. \numberthis \label{eq:ODE-to-PDE-non-future-int}
\end{align*}
\end{proposition}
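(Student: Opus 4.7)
My plan is to follow the same scheme as Proposition~\ref{prop:ODE-to-PDE-future-int}, but replacing the sufficient integrability hypothesis by a compactly-in-time supported cutoff and tracking the commutator errors arising from the cutoff. The core idea is that applying $\xi$ to the transformed system produces inhomogeneities localized in the two slabs $\mc R_{(0,1)}\cup \mc R_{(\tau-1,\tau)}$, which can be controlled by fluxes on $\Sigma_0$ and $\Sigma_\tau$ via the finite-in-time estimates of Proposition~\ref{prop:finite-in-time-first-order}. The cost of this trade is a derivative loss, which is why the right-hand side of \eqref{eq:ODE-to-PDE-non-future-int} carries the norms $\overline{\mathbb E}^{4|s|-k}_p$ rather than $\overline{\mathbb E}^{|s|-k}$.

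First, I would set $\swei{\upphi}{s,\xi}_k := \xi\,\swei{\upphi}{s}_k$. Using that $[\mc L,\xi]=(T\xi)$ and $[\tilde{\mathfrak R}_k,\xi]=(T\xi)\underline{\mc L}+(T\xi)\mc L+(T^2\xi)$ (schematically), these functions satisfy the inhomogeneous transformed system of Definition~\ref{def:transformed-system} with inhomogeneities $\swei{\mathfrak H}{s,\xi}_k$ and $\swei{\mathfrak h}{s,\xi}_{k}$ supported in $\mc R_{(0,1)}\cup\mc R_{(\tau-1,\tau)}$ and involving at most first derivatives of $\swei{\upphi}{s}_k$ and zeroth order quantities $\swei{\upphi}{s}_j$ for $j\leq k$. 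The hypothesis on $\xi\swei{\upphi}{s}_k$ ensures that Definitions~\ref{def:suf-integrability} and \ref{def:outgoing-bdry-phys-space} hold, so by Lemma~\ref{lemma:reduction-classical-odes} the Fourier modes $\smlambdak{\uppsi}{s}{k}$ solve the radial ODE system of Definition~\ref{def:transformed-system-ODE} with inhomogeneities $\smlambdak{\mathfrak G}{s,\xi}{k}, \smlambdak{\mathfrak g}{s,\xi}{k_0}$ obtained by Fourier transforming $\swei{\mathfrak H}{s,\xi}_k, \swei{\mathfrak h}{s,\xi}_k$, and with outgoing boundary conditions.

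Next, I would apply Theorem~\ref{thm:ODE-ILED} to each fixed triple $(\omega,m,\Lambda_{ml}^{[s],a\omega})$, using Theorem~\ref{thm:quantitative-mode-stab-in-text} to handle the intermediate frequency regime $\mc F_{\rm int}$ via the Green's function representation from Theorem~\ref{thm:mode-stability-intro}. Summing over $(m,l)$ and integrating over $\omega$ via Plancherel (Lemma~\ref{lemma:Plancherel}), the left-hand side of \eqref{eq:part1-ILED-withoutTS-top}--\eqref{eq:part1-ILED-withoutTS-bottom} reproduces, up to the trapping degeneration $(1-r_{\rm trap}/r)^2$, the spacetime integrals on the left-hand side of \eqref{eq:ODE-to-PDE-non-future-int}. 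The right-hand side reduces to bulk integrals of the cutoff inhomogeneities $\swei{\mathfrak H}{s,\xi}_k$ paired with currents $(f_k,h_k,y_k,\chi_k)\cdot(\swei{\upphi}{s,\xi}_k,\p_{r^*}\swei{\upphi}{s,\xi}_k)$, plus the boundary terms at $r=r_+$ and $r=\infty$ from part~\ref{it:thm-ODE-ILED-bdry-term} of Theorem~\ref{thm:ODE-ILED}; these boundary terms are absorbed using the $r^p$-weighted estimates with $p>1$ provided by Proposition~\ref{prop:bulk-flux-large-r} combined with a suitable application of Proposition~\ref{prop:higher-order-bulk-flux}.

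To close, I would bound the cutoff bulk integrals: since $\swei{\mathfrak H}{s,\xi}_k$ is supported in $\mc R_{(0,1)}\cup\mc R_{(\tau-1,\tau)}$ and involves first derivatives of $\swei{\upphi}{s}_k$ and $\swei{\upphi}{s}_j$ for $j\leq k$, applying Cauchy--Schwarz and then Proposition~\ref{prop:finite-in-time-first-order} to translate first-derivative spacetime norms in a unit-time slab into fluxes on $\Sigma_0$ and $\Sigma_\tau$ yields bounds by $\sum_k\overline{\mathbb E}^{1}_p[\tilde\upphi_k](0)+\sum_k\overline{\mathbb E}^1_p[\tilde\upphi_k](\tau)$. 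Before this step, however, I must commute the entire system with $T^{J_1}Z^{J_2}$ for $J_1+J_2$ ranging up to $3|s|$ at the top level and $4|s|-k-1$ at level $k<|s|$: since $T,Z$ are Killing, they commute with $\tilde{\mathfrak R}_k$ and $\mc L$ (modulo the cutoff $\xi$, which is a function of $\tilde t^*$ alone, so $Z\xi=0$ and $T\xi$ generates only the localized inhomogeneities of the same type). Repeating the above Plancherel/frequency argument for each $T^{J_1}Z^{J_2}\swei{\upphi}{s,\xi}_k$, and using the $r^p$-weighted Proposition~\ref{prop:higher-order-bulk-flux} to convert higher-derivative bulk norms into higher-derivative flux norms $\overline{\mathbb E}^{4|s|-k}_p[\tilde\upphi_k](0)+\overline{\mathbb E}^{4|s|-k}_p[\tilde\upphi_k](\tau)$, yields the claimed estimate.

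The main obstacle, as in the proof of Proposition~\ref{prop:ODE-to-PDE-future-int}, lies in the coupling terms: the frequency-space inhomogeneities $\smlambdak{\mathfrak g}{s,\xi}{k}$ at level $k<|s|$ arise in part from the Teukolsky--Starobinsky term in \eqref{eq:ODE-estimates-inhomogeneity-terms}, which, when translated back via Plancherel, involves up to $2|s|$ weighted $\uL$-derivatives of lower-level variables. Controlling these through the transport relations \eqref{eq:transformed-transport-tilde} and the finite-in-time fluxes is what forces the factor of roughly $3|s|$ extra derivatives on the RHS at the top level, and a further loss of $|s|-k$ derivatives at level $k$ due to repeated use of transport to pass from $\Phi$ back down to $\tilde\upphi_k$. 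Tracking these losses carefully, together with the one-derivative loss at trapping in Theorem~\ref{thm:ODE-ILED} and the $p>1$ requirement from the boundary term estimate \eqref{eq:part1-bdry-withTS}, yields the derivative count $4|s|-k$ stated in the proposition.
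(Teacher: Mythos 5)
Your overall plan---cut off in time, treat the resulting inhomogeneities by Plancherel together with Theorem~\ref{thm:ODE-ILED} and Theorem~\ref{thm:quantitative-mode-stab-in-text}, then bound the errors by fluxes on $\Sigma_0,\Sigma_\tau$---captures the broad outline of the argument. However, there is a genuine gap in the very first step, and it propagates.

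Setting $\swei{\upphi}{s,\xi}_k := \xi\,\swei{\upphi}{s}_k$ does \emph{not} produce a solution to the inhomogeneous transformed system of Definition~\ref{def:transformed-system} (nor of Definition~\ref{def:alt-transformed-system}). The transport relations \eqref{eq:transformed-transport} in that definition carry \emph{no} inhomogeneity; they must hold exactly. But with $\xi=\xi(\tilde t^*)$ a hyperboloidal cutoff, $\mc L\xi\neq 0$, so $\frac{1}{w}\mc L(\xi\upphi_{k-1})=\xi\upphi_k + \frac{\mc L\xi}{w}\upphi_{k-1}\neq \xi\upphi_k$: the transport equations fail. You must choose a compatible cutoff scheme. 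The paper uses the backwards construction of Lemma~\ref{lemma:weird-inverses}: cut off at some level $\upphi_{k_0,\cutt}:=\xi\upphi_{k_0}$ and then \emph{define} $\upphi_{k,\cutt}$ for $k<k_0$ by integrating transport backwards in frequency space via \eqref{eq:uppsi-L-inversion-freq-space}. The resulting lower-level inhomogeneities $\mathfrak{H}_k$, $k<k_0$, are then \emph{not} localized to $\mc R_{(0,1)}\cup\mc R_{(\tau-1,\tau)}$---they are supported throughout $\mc R_{(0,\infty)}$---so your claim that the cutoff inhomogeneities live only in the slabs where $\nabla\xi\neq 0$ is also incorrect, and the comparison estimates of Lemmas~\ref{lemma:weird-cutoffs-diff-fake-real} and \ref{lemma:weird-cutoffs-inhoms-rough-estimates} are needed to control them.

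The natural fix that does respect transport is the forward cutoff of Lemma~\ref{lemma:teukolsky-cutoff-inhom-k}: set $\upphi_{k,\cut}:=(w^{-1}\mc L)^{k}(\xi\upphi_0)$. But, as Remark~\ref{rmk:comparison-cutoff-types} explains, this route forces the peeling-weighted variables $\swei{\dbtilde\upphi}{s}_j$ (with weights $\tilde c_k=(r^2+a^2)^{|s|-k}$ for $s>0$) to appear in the inhomogeneities $\breve{\mathfrak{H}}_k$ of \eqref{eq:teukolsky-cutoff-inhom-k}, and you would then need peeling-consistent norms on the data, strictly stronger than the claimed $\overline{\mathbb{E}}_p^{4|s|-k}[\tilde\upphi_k]$ on the right-hand side of \eqref{eq:ODE-to-PDE-non-future-int}. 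In short: the forward cutoff gives a weaker result than asserted, while the backwards cutoff requires the substantial machinery of Section~\ref{sec:weird-cutoff-system}--\ref{sec:bdry-terms} that your proposal omits (in particular, the bounded-frequency $r=r_+$ boundary terms for $s>0$ are handled by the Teukolsky--Starobinsky argument of Lemma~\ref{lemma:weird-inverses-bdry-term-rp}, whose derivative cost is where the $4|s|-k$ count actually comes from).
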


Finally, it will also be useful to consider a case where the homogeneous solutions are sufficiently integrable in time but are not necessarily outgoing:

\begin{proposition}[ILED for radial cutoff] \label{prop:ODE-to-PDE-radial-cutoff-ILED} Fix $s\in\mathbb{Z}$, $M>0$ and $a_0\in[0,M)$. Let $\chi=\chi(r^*)$ be a smooth cutoff function which vanishes for $r^*\geq R^*+1$ and is equal to 1 for $r^*\leq R^*$. For $k=0,\dots,|s|$, let $\swei{\upphi}{s}_{k}$  solutions to the homogeneous transformed system in Definition~\ref{def:transformed-system}, with the property that 
\begin{align*} 
\swei{\upphi}{s}_{k,\cutr}:=\lp(w^{-1}\mc L\rp)^{-1}\lp(\chi_R \swei{\upphi}{s}_0\rp)\,,\quad k=0,\dots,|s|\,, \qquad \swei{\Phi}{s}_{\cutr}\equiv\swei{\upphi}{s}_{|s|,\cutr}\,,
\end{align*}
are sufficiently integrable and outgoing solutions to the inhomogeneous transformed system in Definition~\ref{def:transformed-system}, in the sense of Definitions~\ref{def:outgoing-bdry-phys-space} and \ref{def:suf-integrability}. Then, we have the integrated local energy decay estimate 
\begin{align*}
&\int_{\mc{R}} \lp(|\Phi_\cutr'|^2+|T\mc{P}_{\rm trap}[\Phi_\cutr]|^2+|\mathring{\slashed\nabla}^{[s]}\mc{P}_{\rm trap}[\Phi_\cutr]|^2+|T^{\frac12}[\Phi_{\cutr}]|^2+|\Phi_\cutr|^2 \rp)dr  d\sigma d\tau \\
&\qquad + \sum_{k=0}^{|s|-1}\lp(\mathbb{I}[T^{J_1}Z^{J_2}\upphi_{k,\cutr}](-\infty,\infty)+\mathbb{I}[T^{J_1}Z^{J_2}\upphi_{k,\cutr}'](-\infty,\infty)+\int_{\mc R}|Z^{\frac32}\widetilde{\mc{P}}_{\rm trap}[\upphi_{k,\cutr}]|^2 drd\sigma d\tau\rp) \\
&\quad\leq  B(a_0, R^*)\sum_{k=0}^{|s|}\mathbb{I}^{|s|-k}[\upphi_{k,\cutr}\mathbbm{1}_{\{R^*\leq |r^*|\leq R^*+1\}}](-\infty,\infty)\,. \numberthis\label{eq:ODE-to-PDE-radial-cutoff-ILED}
\end{align*}
\end{proposition}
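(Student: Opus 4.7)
The plan is to reduce the proposition to its frequency space counterpart, namely Theorem~\ref{thm:ODE-ILED}\ref{it:thm-ODE-ILED-withoutenergy}, via Plancherel and the reduction to classical ODE solutions. First I would observe that, since the cutoff $\chi_R$ depends only on $r^*$, applying $L$ (or $\uL$) and the transport relations \eqref{eq:transformed-transport} shows that $\swei{\upphi}{s}_{k,\cutr}$ satisfies the transformed system of Definition~\ref{def:transformed-system} with an inhomogeneity $\swei{\mathfrak H}{s}_{k,\cutr}$ that is supported in the annulus $\{R^*\leq |r^*|\leq R^*+1\}$ and depends linearly (with derivatives) on $\swei{\upphi}{s}_k$. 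By assumption the $\swei{\upphi}{s}_{k,\cutr}$ are sufficiently integrable and outgoing, so Lemmas~\ref{lemma:Plancherel} and \ref{lemma:reduction-classical-odes} guarantee that the Fourier decomposition $\smlk{\uppsi}{s}{k,\cutr}$ are, for almost every $\omega$, classical outgoing solutions of the transformed system of ODEs in Definition~\ref{def:transformed-system-ODE} with outgoing inhomogeneities $\smlk{\mathfrak{G}}{s}{k,\cutr}$ which are compactly supported in $r^*$.

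Next I would apply Theorem~\ref{thm:ODE-ILED}\ref{it:thm-ODE-ILED-withoutenergy} for each admissible frequency triple to estimate $\smlk{\uppsi}{s}{k,\cutr}$. The resulting estimate features the trapping degeneration factor $(1-r_{\rm trap}/r)^2$ multiplying $(\omega^2+|\Lambda|/r)|\smlambda{\Psi}{s}|^2$ on the top level, together with non-degenerate bulk estimates for the lower level variables which additionally gain $r_{\rm trap}(|m|^3+|\omega|^3)/r^3$ weights. Integrating in $\omega$ and summing in $(m,l)$, Plancherel (in the forms given in Lemma~\ref{lemma:Plancherel}) converts each frequency-weighted quantity on the physical side: $(1-r_{\rm trap}/r)|\smlk{\uppsi}{s}{|s|,\cutr}|$ becomes $|\mc{P}_{\rm trap}[\Phi_\cutr]|$ up to the explicit $(1-3M/r)\mathbbm{1}$ contribution handled by $\zeta$, the factors $|\omega|^{1/2}, |m|^{1/2}$ become $T^{1/2}, Z^{1/2}$, and $\mathbbm{1}_{\{r_{\rm trap}\neq 0\}}$ becomes $\widetilde{\mc{P}}_{\rm trap}$. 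The terms localized to $\mc{F}_{\rm int}$ and the bulk of the right-hand side, being supported in the cutoff annulus, assemble into $\sum_k \mathbb{I}^{|s|-k}[\upphi_{k,\cutr}\mathbbm{1}_{\{R^*\leq|r^*|\leq R^*+1\}}](-\infty,\infty)$ after applying Cauchy--Schwarz to the cross terms $(\smlk{\mathfrak G}{s}{k,\cutr})\cdot(f_k,h_k,y_k,\chi_k)\cdot(\smlk{\uppsi}{s}{k,\cutr},(\smlk{\uppsi}{s}{k,\cutr})')$ and using the uniform bounds \eqref{eq:current-bounds} on the multipliers.

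To recover the bulk norms for the lower level variables $k<|s|$, including the radial derivative $\upphi_{k,\cutr}'$ appearing on the left-hand side of \eqref{eq:ODE-to-PDE-radial-cutoff-ILED}, I would invoke estimate \eqref{eq:part1-ILED-withoutTS-bottom} from Theorem~\ref{thm:ODE-ILED}\ref{it:thm-ODE-ILED-withoutenergy}, noting that for a compactly supported inhomogeneity, the right-hand side reduces to localized bulk terms in the annulus. The $|s|-k$ commutations with $T$ and $Z$ on the physical side correspond to the $(\omega^2+m^2+1)$ weights and derivatives of the inhomogeneity appearing in \eqref{eq:part1-ILED-withoutTS-bottom}, which is what accounts for the summations $\sum_{J_1+J_2\leq |s|-k}$ on the left. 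A final absorption argument handles small overlaps; the boundary terms at $r=r_+$ and $r=\infty$ from the ODE estimates vanish because the inhomogeneity is compactly supported and the assumed integrability ensures these are finite, while the $|s|-k$ factors $(|m|+|\omega|)$ on the lower level are encoded as the $Z^{3/2}\widetilde{\mc{P}}_{\rm trap}$ term through Plancherel.

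The main obstacle will be the careful bookkeeping of derivative counts for the inhomogeneity $\smlk{\mathfrak G}{s}{k,\cutr}$: since $\smlk{\mathfrak G}{s}{k,\cutr}$ is obtained by applying $(w^{-1}\mc L)^k$ to $\chi_R \smlk{\uppsi}{s}{0}$, differentiating in $r^*$ picks up derivatives of $\chi_R$ as well as derivatives of $\uppsi_0$, and these must be matched precisely to the $\mathbb{I}^{|s|-k}$ norm on the right. This is not conceptually difficult, but it is the step where care is required so that all the multipliers $(f_k,h_k,y_k,\chi_k)$, their derivatives, and the associated weights $(\omega^2+m^2)$ combine to give exactly the Sobolev norm claimed, rather than a stronger one. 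Handling the derivative of the inhomogeneity appearing in the second-to-last line of \eqref{eq:part1-ILED-withoutTS-bottom} is what forces the appearance of $\upphi_{k,\cutr}'$ on the right-hand side through $\mathbb{I}^{|s|-k}$.
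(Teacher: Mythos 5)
Your proposal is correct and follows essentially the same route as the paper: reduce via Plancherel (Lemmas~\ref{lemma:Plancherel} and~\ref{lemma:reduction-classical-odes}) to the frequency-space estimates of Theorem~\ref{thm:ODE-ILED}, observe that the radial cutoff makes the inhomogeneity and the solution compactly supported in $r^*$ so that the more delicate handling of $r$-weights and boundary terms needed in the hyperboloidal case (Lemmas~\ref{lemma:weird-cutoffs-diff-fake-real}--\ref{lemma:weird-inverses-bdry-term-rp}) collapses to a direct Cauchy--Schwarz on the annulus, and sum in $(\omega,m,l)$. The paper's own proof is even terser---it simply notes the analogy with Propositions~\ref{prop:ODE-to-PDE-future-int}--\ref{prop:ODE-to-PDE-non-future-int} and observes that the forward cutoff of Lemma~\ref{lemma:teukolsky-cutoff-inhom-k} suffices because everything is compactly supported in $r^*$---and you correctly identify Theorem~\ref{thm:ODE-ILED}\ref{it:thm-ODE-ILED-withoutenergy} (the compact-$r^*$-support item) as the one to invoke, where in fact item~\ref{it:thm-ODE-ILED-withoutenergy} already drops the $\mathcal{F}_{\rm int}$-localized term and the boundary-term errors, so a few of the cases you worry about assembling into the right-hand side never actually appear.
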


\subsection{An inhomogeneous system from hyperboloidal cutoffs}

Let $0\leq \tau_0\leq \tau_1\leq \dots \tau_N\leq \tau_{N+1}\leq \infty$, with $|\tau_{2i+1}-\tau_{2i}|\leq B$ for $i=0,\dots,N-1$. Take $\xi=\xi(\tilde{t}^*)$, where $\tilde t^*$ is the coordinate identifying leaves of the hyperboloidal folliation introduced in Section~\ref{sec:hyp-folliation}, to be a smooth cutoff function satisfying
\begin{align*}
\supp(\xi)\subset \mc R_{(\tau_0,\tau_{N+1})}\,,\qquad \supp(\nabla\xi)\subset \uplus_{i=0}^{N-1} \mc R_{(\tau_{2i},\tau_{2i+1})}\,.
\end{align*}
This section is concerned with the study of a inhomogeneous version of the transformed system of Definitions~\ref{def:transformed-system} and \ref{def:alt-transformed-system} where the inhomogeneity is caused by the application of the cutoff $\xi=\xi(\tilde{t}^*)$; see Section~\ref{sec:weird-cutoff-system} for the precise definition of the transformed system.  In Sections~\ref{sec:current-errors}, \ref{sec:bulk-terms} and \ref{sec:bdry-terms} we deal with, respectively, errors in Theorem~\ref{thm:ODE-ILED} arising from interaction of the cutoffs with the application of separated current templates in high and low frequency regimes, bulk terms in an intermediate frequency and possible boundary terms from a bounded frequency regime. 

\subsubsection{Defining cutoff transformed variables}
\label{sec:weird-cutoff-system}

In this section, we introduce the relevant inhomogeneous version of the transformed system of Definition~\ref{def:transformed-system}, and prove some elementary estimates on it. 

We denote by $\swei{\upphi}{s}_k$, for $k=0,\dots,|s|$, solutions to the \textit{homogeneous} transformed system of Definition~\ref{def:transformed-system}. For some $k_0\in\{0,\dots, |s|\}$, let us assume that $\xi$ is chosen to ensure
\begin{align*}
\swei{\upphi}{s}_{k_0,\cutt}&:=\xi\swei{\upphi}{s}_{k_0}\,,\\
\mathfrak{h}^{[s]}_{k_0}&:=  -\lp(\mc L \xi \underline{\mc L}\swei{\upphi}{s}_{k_0}+\underline{\mc L}\mc L\xi\swei{\upphi}{s}_{k_0}\rp)\mathbbm{1}_{k_0\leq |s|}\,,\\
\mathfrak{H}^{[s]}_{k_0}&:=\lp([\swei{\mathfrak{R}}{s},\xi]-\mathbbm{1}_{\{k_0\neq |s|\}}\underline{\mc L}\xi \mc L\rp)\swei{\upphi}{s}_{k_0} +\mathbbm{1}_{\{k_0\neq |s|\}} w \underline{\mc L}\xi \swei{\upphi}{s}_{k_0+1}\nonumber\\
&\qquad +\sum_{k=0}^{k_0-1}w\lp(ac_{s,k_0,k}^{\rm id}+ac_{s,k_0,k}^{Z}Z\rp)(\xi\swei{\upphi}{s}_k-\swei{\upphi}{s}_{k,\cutt})\,.\numberthis \label{eq:weird-cutoffs-top-inhom-formula}
\end{align*}
are outgoing and sufficiently integrable, in the sense of Definition~\ref{def:outgoing-bdry-phys-space} and \ref{def:suf-integrability}. If $k_0<|s|$, set
\begin{align}
\swei{\upphi}{s}_{k_0+1,\cutt}:=\xi\swei{\upphi}{s}_{k_0+1}\,. \label{eq:def-top-level-weird-cutoff}
\end{align}
We let $\smlambdak{\uppsi}{s}{k_0}$ and  $\smlambdak{\uppsi}{s}{k_0+1}$ be defined from, respectively, $\xi\swei{\upphi}{s}_{k_0}$ and $\xi\swei{\upphi}{s}_{k_0+1}$; $\smlambdak{\mathfrak G}{s}{k_0}$ and $\smlambdak{\mathfrak g}{s}{k_0}$  be defined from, respectively, $\swei{\mathfrak{H}}{s}_{k_0}$ and  $\swei{\mathfrak{h}}{s}_{k_0}$ as in \eqref{eq:def-sml-uppsi-G}. 

It remains to complete the transformed systems of Definitions~\ref{def:transformed-system} and \ref{def:alt-transformed-system} by defining $\swei{\upphi}{s}_{j,\cutt}$ for $0\leq k<k_0$. For  $(\omega,m,\Lambda)\in\mc{F}_{\rm admiss}$, set
\begin{align*}
\mu(r^*) =\int_0^{r^*}i\lp(\omega-\frac{am}{x^2+a^2}\rp)dx^*
\end{align*}
so that $\mc{L}\mu=0$. With $\smlambdak{\uppsi}{s}{k_0}$ as above, we define for $k=0,\dots, k_0-1$
\begin{align}
\begin{split}
\uppsi_{(k),\,ml}^{[s],\, a\omega}(r^*)&:=e^{-\mu(r^*)}\int_{-\infty}^{r^*} e^{\mu(x^*)} w \uppsi_{(k+1),\,ml}^{[s],\, a\omega}dr^*\,, \quad s< 0\,,\\
\uppsi_{(k),\,ml}^{[s],\, a\omega}(r^*)&:=e^{\mu(r^*)}\int_{r^*}^\infty e^{-\mu(x^*)} w \uppsi_{(k+1),\,ml}^{[s],\, a\omega}dr^*\,, \quad s> 0\,,
\end{split}\label{eq:uppsi-L-inversion-freq-space}
\\
\begin{split}
\mathfrak{G}_{(k),\,ml}^{[s],\, a\omega}(r^*)&:=w e^{-\mu(r^*)}\int_{-\infty}^{r^*} e^{\mu(x^*)} \mathfrak{G}_{(k+1),\,ml}^{[s],\, a\omega}dr^*\,,\quad s< 0\,,\\
\mathfrak{G}_{(k),\,ml}^{[s],\, a\omega}(r^*)&:=w e^{\mu(r^*)}\int_{r^*}^\infty e^{-\mu(x^*)} \mathfrak{G}_{(k+1),\,ml}^{[s],\, a\omega}dr^*\,,\quad s> 0\,.
\end{split}\label{eq:RW-cutoff-freq-inhom-k}
\end{align}
In physical space, we then let
\begin{align}
\upphi_{k,\cutt}^{[s]}&:=\int_{\mathbb{S}^2}\int_{-\infty}^\infty e^{-i\omega t}\uppsi_{(k),\,ml}^{[s],\, a\omega}(r)\sml{S}{s}(\theta)e^{im\phi} d\omega d\sigma\,,\qquad k=0,\dots k_0-1\,, \label{eq:weird-psi-cut}\\
\mathfrak{H}_{k}^{[s]}&:=\frac{1}{\sqrt{2\pi}}\int_{\mathbb{S}^2}\int_{-\infty}^\infty e^{-i\omega t}\mathfrak{G}_{(k),\,ml}^{[s],\, a\omega}(r)\sml{S}{s}(\theta)e^{im\phi} d\omega d\sigma\,,\qquad k=0,\dots k_0\,. \label{eq:weird-G-cut-k}\,,
\end{align}
This system is suitable for frequency analysis:

\begin{lemma}[Backwards cutoffs]\label{lemma:weird-inverses} Fix $s\in\mathbb{Z}$ and $k_0\in\{0,\dots |s|\}$, and let $\swei{\upphi}{s}_{k,\cutt}$, $\swei{\mathfrak H}{s}_k$ and $\swei{\mathfrak h}{s}_k$, for $k=0\,\dots k_0+1$ or, if $k_0=|s|$, $k=0\,\dots |s|$, be as above. Then, if $k_0<|s|$, $\swei{\upphi}{s}_{k,\cutt}$ are outgoing and sufficiently integrable solutions to the transformed system of Definition~\ref{def:alt-transformed-system} with inhomogeneities $\mathfrak{H}_{k}^{[s]}$ and $\mathfrak{h}_{k_0}^{[s]}$; if $k_0=|s|$, $\swei{\upphi}{s}_{k,\cutt}$ are outgoing and sufficiently integrable solutions to the transformed system of Definition~\ref{def:alt-transformed-system} with inhomogeneities $\mathfrak{H}_{k}^{[s]}$. Furthermore, for $k=0,\dots, k_0-1$, $\swei{\upphi}{s}_{k,\cutt}$ satisfy the additional transport equation
\begin{align}
\mc L(\xi\upphi_k^{[s]}-\upphi_{k,\cutt}^{[s]})=\mc L\xi\upphi_k^{[s]}+w(\xi\upphi_{k+1}^{[s]}-\upphi_{k+1,\cutt}^{[s]})\,, \qquad k=0,\dots k_0-1\,. \label{eq:transport-weird-cutoff}
\end{align}
\end{lemma}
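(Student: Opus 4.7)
The strategy is to read off the claimed PDEs for $\upphi_{k,\cutt}^{[s]}$, $k<k_0$, directly from the frequency-space definitions \eqref{eq:uppsi-L-inversion-freq-space}--\eqref{eq:weird-G-cut-k}. These formulas are the integral representations of the unique solutions of the first-order transport ODEs $\mc L\uppsi_{(k),ml}^{[s],a\omega} = w\,\uppsi_{(k+1),ml}^{[s],a\omega}$ and the analogous ODE for $\mathfrak G_{(k),ml}^{[s],a\omega}$ satisfying an outgoing boundary condition at the endpoint from which the integration starts, namely $r^*\to -\infty$ for $s<0$ (where $\mc L = L$) and $r^*\to +\infty$ for $s>0$ (where $\mc L = \underline{L}$). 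The transport identity at the level of $(\uppsi_{(k),ml}^{[s],a\omega},\mathfrak G_{(k),ml}^{[s],a\omega})$ is verified by direct differentiation of the defining integral, using that the integrating factors $e^{\pm\mu}$ are unimodular and designed so that $\mc L$ acts on them trivially. By Plancherel (Lemma~\ref{lemma:Plancherel}) applied in reverse, these frequency-space transport identities are equivalent to the physical-space transport equations \eqref{eq:transformed-transport} and \eqref{eq:transformed-transport-inhom} for $k=1,\dots,k_0$.

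With the transport equations in hand, the wave equations \eqref{eq:transformed-k} for $k<k_0$ are obtained by downward induction starting from $k_0$. At each step, the alternative transport form \eqref{eq:transformed-constraint} of the wave equation at level $k+1$, combined with the transport equation connecting levels $k$ and $k+1$, produces the wave equation at level $k$ with inhomogeneity exactly $\mathfrak H_k^{[s]}$ as defined by \eqref{eq:weird-G-cut-k}. Outgoing-ness in the sense of Definition~\ref{def:outgoing-bdry-phys-space} is propagated from $\upphi_{k_0,\cutt}^{[s]}$ and $\upphi_{k_0+1,\cutt}^{[s]}$ down to lower levels by inspecting the asymptotic behaviour of the integrals in \eqref{eq:uppsi-L-inversion-freq-space}: the choice of lower endpoint of integration guarantees that the homogeneous mode at that boundary which would destroy the outgoing condition is absent, leaving $\uppsi_{(k),ml}^{[s],a\omega}$ asymptotically proportional to $\uppsi^{[s],(a\omega)}_{(k),m\Lambda,\mc H^+}$ (respectively $\uppsi^{[s],(a\omega)}_{(k),m\Lambda,\mc I^+}$) as required by Definition~\ref{def:outgoing-bdry-freq-space}, with amplitudes governed by the identities of Lemma~\ref{lemma:bdry-term-relations}.

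The principal technical obstacle is to verify sufficient integrability in the sense of Definition~\ref{def:suf-integrability} for $k<k_0$. The plan is to apply Plancherel together with the assumed sufficient integrability of $\upphi_{k_0,\cutt}^{[s]}$ and $\upphi_{k_0+1,\cutt}^{[s]}$ to obtain $L^2_\omega$ control on $\uppsi_{(k_0),ml}^{[s],a\omega}$ and $\uppsi_{(k_0+1),ml}^{[s],a\omega}$ with the weights in $r$ dictated by \eqref{eq:integrability-Hk}; then the convolution structure of \eqref{eq:uppsi-L-inversion-freq-space}, combined with the outgoing decay rates established in the previous step and with the uniform bounds on the integrating factors, produces the required weighted $L^2_\omega$ bounds at lower levels. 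Higher-order $T$ and $Z$ derivatives commute trivially through the construction, corresponding to multiplication by $-i\omega$ and $im$ respectively in frequency space, while radial derivatives $\partial_{r^*}$ (or $X^*$) can be eliminated in favour of $T$, $Z$ and lower-order transported variables using the transport equation $\mc L\uppsi_{(k),ml}^{[s],a\omega} = w\,\uppsi_{(k+1),ml}^{[s],a\omega}$ itself. The decay rates assumed on $\mathfrak H_{k_0}^{[s]}$ and $\mathfrak h_{k_0}^{[s]}$ in Definition~\ref{def:suf-integrability} are precisely what is needed for the $r$-weights in the convolution to close at each derivative level.

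Finally, the additional transport equation \eqref{eq:transport-weird-cutoff} is an immediate corollary of the Leibniz rule and the transport equations just established: since $\upphi_k^{[s]}$ solves the homogeneous transformed system, $\mc L(\xi\upphi_k^{[s]}) = (\mc L\xi)\upphi_k^{[s]} + \xi(\mc L\upphi_k^{[s]}) = (\mc L\xi)\upphi_k^{[s]} + \xi w\,\upphi_{k+1}^{[s]}$, and subtracting the identity $\mc L\upphi_{k,\cutt}^{[s]} = w\,\upphi_{k+1,\cutt}^{[s]}$ derived in the first step yields \eqref{eq:transport-weird-cutoff}.
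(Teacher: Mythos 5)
Your proposal captures the broad strategy of the paper's proof: verify the transport relations at the frequency-space level via the integrating factor $e^{\pm\mu}$, propagate the ODE outgoing boundary asymptotics down in $k$, deduce sufficient integrability via Plancherel from the sufficient integrability at level $k_0$ and $k_0+1$, and conclude \eqref{eq:transport-weird-cutoff} by the Leibniz rule combined with the homogeneous transport equation for $\upphi_k^{[s]}$ and the newly-established transport equation for $\upphi_{k,\cutt}^{[s]}$. The step-by-step verification of the radial transport ODE and the examination of the $r^*\to\pm\infty$ asymptotics is indeed the core of the paper's argument.

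However, there is a genuine gap in your treatment of the outgoing condition. You cite Definition~\ref{def:outgoing-bdry-phys-space} but what you actually verify—via the asymptotic behaviour of the integrals \eqref{eq:uppsi-L-inversion-freq-space} as $r^*\to\pm\infty$—is the frequency-space ODE outgoing condition of Definition~\ref{def:outgoing-bdry-freq-space}, namely that $\uppsi_{(k),\,ml}^{[s],a\omega}$ is asymptotically proportional to the normalised $\uppsi_{(k),\,m\Lambda,\,\mc H^+}^{[s],(a\omega)}$ or $\uppsi_{(k),\,m\Lambda,\,\mc I^+}^{[s],(a\omega)}$. The PDE outgoing condition of Definition~\ref{def:outgoing-bdry-phys-space} is a \emph{support} property: it requires $\upphi_{k,\cutt}^{[s]}$, $\mathfrak{H}_k^{[s]}$ (and $\mathfrak{h}_{k_0}^{[s]}$) to vanish identically near $\mc H^+$ and $\mc I^+$ for all $\tau\leq -\epsilon^{-1}$. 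This does not follow merely from the radial asymptotics of the Fourier modes; it requires a separate physical-space argument. The paper proves it by first establishing the transport equation \eqref{eq:transport-weird-cutoff}, then observing that in the past of $\Sigma_{\tau_0}$ the right-hand side vanishes (both $\mc L\xi=0$ there and, by downward induction, $\xi\upphi_{k+1}^{[s]}-\upphi_{k+1,\cutt}^{[s]}=0$ there), so that $\xi\upphi_k^{[s]}-\upphi_{k,\cutt}^{[s]}$ is constant along the $\mc L$-characteristics; integrating from $\mc H^-$ (if $s<0$) or $\mc I^-$ (if $s>0$) and using the decay furnished by the frequency-space construction then forces $\upphi_{k,\cutt}^{[s]}=\xi\upphi_k^{[s]}=0$ and, similarly, $\mathfrak{H}_k^{[s]}=0$ in the past of $\Sigma_{\tau_0}$. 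In your proposal, \eqref{eq:transport-weird-cutoff} is presented purely as a corollary at the very end; in fact it is a tool one must deploy \emph{before} declaring outgoing-ness of $\upphi_{k,\cutt}^{[s]}$, and omitting that integration argument leaves the PDE outgoing property unproven.
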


\begin{remark}[Decay of the inhomogeneities at $r=r_+$] \label{rmk:decay-inhoms-weird-cutoff} Note that, in the system we have constructed, the inhomogeneities $\swei{\mathfrak{H}}{s}_k$ for $s>0$ do not benefit from improved decay as $r\to r_+$; indeed, we expect only $\smlk{\mathfrak{G}}{s}{k}=O(w)$ as $r\to r_+$. 
\end{remark}

\begin{proof}[Proof of Lemma~\ref{lemma:weird-inverses}] The lemma follows by induction on $k$: since the outgoing and sufficient integrability conditions hold for $k_0$, we seek to show that they hold for $k<k_0$. Thus, assume that, for some $k=0,\dots, k_0-1$, $\smlk{\uppsi}{s}{k+1}$ solves an inhomogeneous radial ODE \eqref{eq:transformed-k-separated} with inhomogeneity $\smlk{\mathfrak G}{s}{k+1}$ and has outgoing boundary conditions; and assume that the corresponding $\swei{\upphi}{s}_{k+1,\cutt}$ is a sufficiently integrable outgoing solution to the inhomogeneous equation \eqref{eq:transformed-k} with inhomogeneity $\swei{\mathfrak H}{s}_{k+1}$ such that $\swei{\mathfrak{H}}{s}_{k+1}=0$ and $\swei{\upphi}{s}_{k+1,\cutt}=\xi\swei{\upphi}{s}_{k+1}$ to the past of $\Sigma_{\tau_0}$.

By the strong decay of $\smlk{\uppsi}{s}{k+1}$ as $r^*\to (\sign s)\infty$, the integrands in \eqref{eq:uppsi-L-inversion-freq-space} are well-defined for each $r^*\in(-\infty,\infty)$. It is also easy to see that the transport equation \eqref{eq:transformed-transport-separated} holds
\begin{align*}
\mc{L}\uppsi_{(k),\,ml}^{[s],\, a\omega}=\mc{L}\mu \uppsi_{(k),\,ml}^{[s],\, a\omega}+ w\uppsi_{(k+1),\,ml}^{[s],\, a\omega} =w\uppsi_{(k+1),\,ml}^{[s],\, a\omega}\,,
\end{align*}
in frequency space for each $r^*\in(-\infty,\infty)$; and similarly for \eqref{eq:transformed-transport-inhom-separated}.  Since 
\begin{align*}
\mu(r)&=  \int^{r^*}_{0}i\lp(\omega-\frac{am}{x^2+a^2}\rp)dx^* 
= i\omega r^*+o(r^*)\,,\qquad \text{as~}r^*\to \infty\,,\\
\mu(r)&= -\int^{0}_{r^*}i\lp(\omega-\frac{am}{x^2+a^2}\rp)dx^* 
= -i(\omega-m\upomega_+) r^*+o(r^*)\,,\qquad \text{as~}r^*\to -\infty\,,
\end{align*}
we deduce that for $s\geq 0$ we have
\begin{align*}
e^{-i\omega r^*}\uppsi_{(k),\,ml}^{[s],\, a\omega}(r^*)&=e^{-i\omega r^*}e^{\mu(r)}\int_{r^*}^{\infty} e^{-\mu(x)}w \uppsi_{(k+1),\,ml}^{[s],\, a\omega} dx^* \\
&= e^{-2i\omega r^*}\int_{r^*}^{\infty} e^{2i\omega x^*} wx^{-1}\lp(e^{-i\omega x^*}x\uppsi_{(k+1),\,ml}^{[s],\, a\omega}\rp)dx^*  = O(r^{-1})\,,
\end{align*}
as $r\to \infty$, and for $r\to r_+$,
\begin{align*}
e^{-i(\omega-m\upomega_+) r^*}\uppsi_{(k),\,ml}^{[s],\, a\omega}(r^*)
&=\int_{r^*}^{\infty} w \lp(e^{i(\omega-m\upomega_+) x^*}\uppsi_{(k+1),\,ml}^{[s],\, a\omega}\rp) dx^* \\
& = \int_{r}^{\infty} (x^2+a^2)^{-1}dx \lp(e^{i(\omega-m\upomega_+) r^*}\uppsi_{(k+1),\,ml}^{[s],\, a\omega}\rp)\Big|_{r^*=-\infty}  +O(r-r_+)\,.
\end{align*}
Similar formulas hold for the case $s<0$, \textit{mutatis mutandis}. These asymptotic formulas are consistent with outgoing boundary conditions, see Definition~\ref{def:outgoing-bdry-freq-space}. In fact, if we assume more on $\uppsi_{(k+1),\,ml}^{[s],\, a\omega}$, as expected for instance if $\mathfrak G_{(k+1),\,ml}^{[s],\, a\omega}$ has stronger $r^*$ decay, then for $s>0$
\begin{align*}
e^{-i\omega r^*}r^{2(|s|-k)}\uppsi_{(k),\,ml}^{[s],\, a\omega}(r^*)&=e^{-i\omega r^*}r^{2(|s|-k)}e^{\mu(r)}\int_{r^*}^{\infty} e^{-\mu(x)}w \uppsi_{(k+1),\,ml}^{[s],\, a\omega} dx^* \\
&= e^{-2i\omega r^*}r^{2(|s|-k)}\int_{r^*}^{\infty} e^{2i\omega x^*} wx^{-2(|s|-k-1)} \lp(e^{-i\omega x^*}x^{2(|s|-k-1)}\uppsi_{(k+1),\,ml}^{[s],\, a\omega}\rp)dx^* \\
& = \frac{1}{(2i\omega)}\lp(e^{-i\omega r^*}r^{2(|s|-k-1)}\uppsi_{(k+1),\,ml}^{[s],\, a\omega}\rp)\Big|_{r^*=\infty}  +O(r^{-1})\,,
\end{align*}
which is consistent with the boundary term relations of Lemma~\ref{lemma:bdry-term-relations}. Turning to the $s<0$ case, we find that we can also recover the boundary term relations in Lemma~\ref{lemma:bdry-term-relations} at  $r=r_+$. Similar argument shows that $\mathfrak{G}_{(k),\,ml}^{[s],\, a\omega}/w$ is a smooth function of $r$ which is regular at $\mc H^+$, and such that 
\begin{align*}
\frac{\mathfrak{G}_{(k),\,ml}^{[s],\, a\omega}}{w}&= O(r^{-(|s|-k)})\,,\quad\text{as $r\to \infty$ if $s>0$}\,,\\
\frac{\mathfrak{G}_{(k),\,ml}^{[s],\, a\omega}}{w}&= O(\Delta^{|s|-k})\,,\quad\text{as $r\to r_+$ if $s<0$}\,.
\end{align*}

Now we turn to the physical space definitions. By the above asymptotics for $\mathfrak{G}_{(k),\,ml}^{[s],\, a\omega}$, we find easily that $\mathfrak{H}_k$ satisfies \eqref{eq:integrability-Hk}.  To check that $\swei{\upphi}{s}_{k,\cutt}$ is sufficiently integrable, we need to work a bit more; for instance, if $s\geq 0$,
\begin{align*}
\int_{-R^*}^{R^*}|\smlk{\uppsi}{s}{k}|^2 dr^* \leq \int_{-R^*}^{R^*}\int_{r^*}^\infty|\smlk{\uppsi}{s}{k+1}|^2(x^*)w(x^*)dx^* dr^*\leq B(R^*) \int_{-R^*}^\infty w|\smlk{\uppsi}{s}{k+1}|^2dr^*
\end{align*} 
and by the sufficient integrability of $\swei{\upphi}{s}_{k+1,\cutt}$ and the fact that it is outgoing, in particular by the estimates in Lemma~\ref{lemma:suf-integrability}, we deduce that $\swei{\upphi}{s}_{k,\cutt}$ is also sufficiently integrable and thus well-defined. The outgoing property for $\swei{\upphi}{s}_{k,\cutt}$ is ensured by the outgoing boundary conditions of $\smlk{\uppsi}{s}{k}$ in frequency space.

The construction of $\swei{\upphi}{s}_k$ and  $\mathfrak{H}_{k}^{[s]}$  also clearly ensures that the physical transport relations \eqref{eq:transformed-transport} and \eqref{eq:transformed-transport-inhom} are verified. It is then easy to deduce that \eqref{eq:transport-weird-cutoff} holds. Thus, in the past of $\Sigma_{\tau_0}$, we compute
\begin{align*}
\mc{L}\lp(\xi\swei{\upphi}{s}_{k}-\swei{\upphi}{s}_{k,\cutt}\rp)=\mc{L}\xi\swei{\upphi}{s}_{k}=0\,.
\end{align*}
Integrating from $\mc H^-$ or $\mc I^-$, for $s<0$ and $s>0$ respectively, and using the (physical space) outgoing condition, we deduce that $\swei{\upphi}{s}_{k,\cutt}=\xi\swei{\upphi}{s}_{k}$ in the past of $\Sigma_{\tau_0}$.   Similarly, the difference between the inhomogeneity $\mathfrak{H}_{k}^{[s]}$ and one obtained by deriving the equation for $\xi\swei{\upphi}{s}_{k}$ must also vanish, and so $\mathfrak{H}_{k}^{[s]}=0$ in the past of  $\Sigma_{\tau_0}$.  
\end{proof}

Let us start by stating a simple lemma which compares the backwards-integrated $\swei{{\upphi}}{s}_{k,\cutt}$ cutoff variables introduced in Lemma~\ref{lemma:weird-inverses} with that obtained by directly multiplying the homogeneous transformed variables by the cutoff: 

\begin{lemma}[Comparison between backwards and direct cutoffs]\label{lemma:weird-cutoffs-diff-fake-real} Fix some $s\in\mathbb{Z}$ and $k_0\in\{0,\dots,|s|\}$ as in Lemma~\ref{lemma:weird-inverses}. Then, for any $k<k_0$ and $\tau\geq \tau_0$, we have the estimates
\begin{align*}
&\sum_{X\in\{\mathrm{id},T,Z,\underline{\mc L}\}}\lp(\int_{\mc R_{(\tau_0,\infty)}}w|X(\xi\upphi_k-\upphi_{k,\cutt})|^2dr^*d\tau d\sigma +\int_{\Sigma_\tau}|X(\xi\upphi_k-\upphi_{k,\cutt})|^2dr d\sigma\rp)\\
&\qquad +\sum_{X\in\{\mathrm{id},T,Z,\underline{\mc L}\}}\lp\{\begin{array}{lr}\int_{\mc H^+_{(\tau_0,\infty)}}|X(\xi\upphi_k-\upphi_{k,\cutt})|^2d\sigma d\tau, &s>0\\
\int_{\mc I^+_{(\tau_0,\infty)}}|X(\xi\upphi_k-\upphi_{k,\cutt})|^2d\sigma d\tau, &s<0
\end{array}\rp\}\leq B\sum_{i=0}^{N-1}\sum_{j=k}^{k_0-1}\mathbb{E}[\tilde\upphi_j](\tau_{2i})\,.\numberthis\label{eq:weird-cutoffs-diff-fake-real-bulk-commuted-final}
\end{align*}
\end{lemma}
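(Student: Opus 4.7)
The natural approach is downward induction on $k$ from $k_0$ down to $0$, using the fundamental observation that the difference $\delta_k := \xi\upphi_k^{[s]}-\upphi_{k,\cutt}^{[s]}$ satisfies the inhomogeneous transport equation~\eqref{eq:transport-weird-cutoff}, namely
\[
\mc L\delta_k = \mc L\xi\cdot \upphi_k^{[s]} + w\,\delta_{k+1},
\]
together with the boundary condition $\delta_{k_0}\equiv 0$ (by \eqref{eq:def-top-level-weird-cutoff}) and the fact that $\delta_k\equiv 0$ to the past of $\Sigma_{\tau_0}$ (a consequence of Lemma~\ref{lemma:weird-inverses}, since both $\xi\upphi_k$ and $\upphi_{k,\cutt}$ vanish there). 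Thus for $k=k_0-1$ the source reduces to $\mc L\xi\cdot \upphi_{k_0-1}^{[s]}$, which is supported in $\uplus_{i=0}^{N-1}\mc R_{(\tau_{2i},\tau_{2i+1})}$ with $|\mc L\xi|\leq B$, and only the zeroth step of the induction needs to be set up carefully.

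The main tool is the transport identity of Lemma~\ref{lemma:phys-space-transport-identity}, applied to $\delta_k$ with the inhomogeneity absorbing the extra source on the right-hand side. For $s<0$ one chooses a weight $c$ yielding, after integrating against $\overline{\delta_k}$ with Cauchy--Schwarz, control of the $L^2$ norm of $\delta_k$ on $\Sigma_\tau$, in the bulk $\mc R_{(\tau_0,\tau)}$, and on $\mc I^+_{(\tau_0,\tau)}$, by a bulk integral of $|\mc L\xi\cdot\upphi_k|^2 + w^2|\delta_{k+1}|^2$; for $s>0$ the same procedure yields control of the flux on $\mc H^+_{(\tau_0,\tau)}$ instead. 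Since $|\mc L\xi|^2$ is bounded and supported in the slabs of bounded $\tilde t^*$-width $B$, a standard slab-by-slab estimate converts $\int_{\mc R_{(\tau_{2i},\tau_{2i+1})}}|\upphi_k^{[s]}|^2$ into $B\cdot\mathbb{E}[\tilde\upphi_k](\tau_{2i})$ via finite-in-time boundedness (Proposition~\ref{prop:finite-in-time-first-order}), and the contribution of $w\delta_{k+1}$ is absorbed by the inductive hypothesis (with a small factor of $w$ ensuring no derivative loss).

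To obtain the commuted estimates for $X\in\{T,Z\}$, we use that $T$ and $Z$ are Killing and commute with $\mc L$, so applying $X$ to the transport equation yields
\[
\mc L(X\delta_k) = (X\mc L\xi)\cdot\upphi_k^{[s]} + \mc L\xi\cdot X\upphi_k^{[s]} + w\,X\delta_{k+1},
\]
and the same transport argument applies, now invoking $\mathbb{E}[\tilde\upphi_k](\tau_{2i})$ on the right. For $X=\underline{\mc L}$, the cleanest route is to commute $\underline{\mc L}$ through the transport equation: $\mc L(\underline{\mc L}\delta_k) = \underline{\mc L} F_k + [\mc L,\underline{\mc L}]\delta_k$, where the commutator $[\mc L,\underline{\mc L}]$ is a first-order tangential operator in $T,Z$ already controlled by the previous step, and $\underline{\mc L} F_k$ again only sees $\mc L\xi$-type factors together with $\underline{\mc L}\upphi_k$, which is a bounded derivative of the homogeneous solution. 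The transport estimate, applied once more, now with the extra $w$ weight in all boundary and bulk terms (consistent with the statement of the lemma near $r=r_+$ and $r=\infty$), then closes.

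The main obstacle is bookkeeping rather than analytic difficulty: one must track, at each step of the induction, that the errors coming from the inductive source $w\delta_{k+1}$ introduce contributions only for indices $j\geq k+1$, producing exactly the summation $\sum_{j=k}^{k_0-1}$ in the final estimate, and that the time-slab structure of $\mc L\xi$ yields precisely $\sum_{i=0}^{N-1}\mathbb{E}[\tilde\upphi_j](\tau_{2i})$ rather than a full integrated bulk of $\upphi_j$. Both points follow from the localization $\supp(\mc L\xi)\subset\uplus_{i=0}^{N-1}\mc R_{(\tau_{2i},\tau_{2i+1})}$ with slab width bounded by $B$ and from the base case $\delta_{k_0}=0$ terminating the induction cleanly.
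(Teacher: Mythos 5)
Your overall skeleton agrees with the paper's: downward induction in $k$ from $k_0$ with base $\delta_{k_0}:=\xi\upphi_{k_0}-\upphi_{k_0,\cutt}\equiv 0$, exploiting the transport equation~\eqref{eq:transport-weird-cutoff} together with the vanishing of $\delta_k$ to the past of $\Sigma_{\tau_0}$, then upgrading to $T,Z$ (Killing, commuting with $\mc L$) and to $\underline{\mc L}$ (commuting through $\mc L$ and absorbing $[\mc L,\underline{\mc L}]$, which is tangential and already controlled). But there is a genuine gap at the heart of the uncommuted ($X=\mathrm{id}$) estimate, and it concerns $r$-weights.

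You propose to control the bulk, the $\Sigma_\tau$-flux, and the $\mc I^+$- (or $\mc H^+$-) flux in one stroke by a transport multiplier plus Cauchy--Schwarz, landing on a source term $\int|\mc L\xi\cdot\upphi_k|^2+w^2|\delta_{k+1}|^2$, and then converting $\int_{\mc R_{(\tau_{2i},\tau_{2i+1})}}|\upphi_k|^2$ into $\mathbb{E}[\tilde\upphi_k](\tau_{2i})$ via finite-in-time estimates. This does not close. The inhomogeneity $\mc L\xi$ is $O(1)$ and supported in slabs of bounded $\tilde t^*$-width but unbounded $r$-extent, so the Cauchy--Schwarz error is a bulk integral of $|\upphi_k|^2$ with \emph{no} $r$-decay; this is not bounded by $\mathbb{E}[\tilde\upphi_k]$, whose zeroth-order content is only $r^{-2}|\tilde\upphi_k|^2$. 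Put differently: for any multiplier weight $c(r)$, if $c$ is bounded below at $\mc I^+$ (as required to genuinely capture the $\mc I^+$-flux), then the coercive bulk coefficient $|c'|$ decays like $r^{-2}$, so absorbing the cross term costs $c^2/|c'|\gtrsim r^2$ on $|\mc L\xi\upphi_k|^2$, which is uncontrollable.

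The paper instead uses a two-step argument. First, the boundary fluxes at $\mc H^+$ (for $s>0$), at $\mc I^+$ (for $s<0$), and on $\Sigma_\tau$ are obtained by integrating the transport equation along the $\mc L$-characteristic from $\Sigma_{\tau_0}$ to the boundary (using $\delta_k|_{\Sigma_{\tau_0}}=0$), and then applying Jensen's inequality with a splitting weight (e.g.\ $h=\Delta/r^2$) that has bounded mass along the characteristic; this converts the square of the line integral into a bulk integral carrying an \emph{extra} $r^{-2}$ weight on $\tilde\upphi_k$, which is exactly what $\mathbb{E}[\tilde\upphi_k](\tau_{2i})$ tolerates (see \eqref{eq:weird-cutoffs-diff-fake-real-hor}--\eqref{eq:weird-cutoffs-diff-fake-Sigma}). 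Only then is the transport multiplier invoked, and there the bulk $\int w|\delta_k|^2$ appears favorably while the boundary fluxes land as error terms on the other side---but those are precisely what step one already controlled. The Jensen-along-characteristics device is not optional: without it the argument cannot produce the $r^{-2}$ weight on $\tilde\upphi_k$ and the claimed bound by $\sum_i\sum_j\mathbb{E}[\tilde\upphi_j](\tau_{2i})$ fails.
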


\begin{proof} The result follows from estimates for the transport equation \eqref{eq:transport-weird-cutoff} in Lemma~\ref{lemma:weird-inverses}. We stress that we take $k<k_0$ strictly.

\medskip
\noindent \textit{Step 1: $0$th order estimates.} Let $u$ be defined through $L(u)=0$ and normalized so that $u=-\infty$ on $\mc I^-$ and $u=+\infty$ on $\mc H^+$.  From the transport equation \eqref{eq:transport-weird-cutoff} for $s>0$, we obtain
\begin{align*}
&\int_{\mc H^+_{(\tau_{0},\infty)}}|\xi\upphi_k-\upphi_{k,\cutt}|^2d\sigma d\tau\\
&\quad =\int_{\mathbb{S}^2}\int_{\tau_0}^\infty |\xi\upphi_k-\upphi_{k,\cutt}|^2(u=\infty) d\tau d\sigma = \int_{\mathbb{S}^2}\int_{\tau_0}^\infty \lp|\int_{u_0}^{\infty}\lp[\uL\xi \upphi_k + w(\xi \upphi_{k+1}-\upphi_{k+1,\cutt})\rp]du\rp|^2 d\tau d\sigma\\
&\quad \leq  \int_{\mathbb{S}^2}\int_{\tau_0}^\infty\int_{u_0}^{\infty}\lp[\Big|\frac{r^2}{\Delta}\uL\xi \upphi_k\Big|^2 + r^{-4}|\xi \upphi_{k+1}-\upphi_{k+1,\cutt}|^2\rp] \frac{\Delta}{r^2}du d\tau d\sigma\\
&\quad  \leq \int_{\mc R_{(\tau_0,\infty)}}|\uL\xi r^2/\Delta|^2|\tilde \upphi_k|^2r^{-2} drd\sigma d\tau + \int_{\mc R_{(\tau_0,\infty)}} \frac{w}{r^{2}}|\xi \upphi_{k+1}-\upphi_{k+1,\cutt}|^2 dr^* d\tau d\sigma\\
&\quad \leq B\sum_{i=0}^{N-1}\int_{\mc R_{(\tau_{2i},\tau_{2i+1})}}\mathbbm{1}_{\supp(\nabla\xi)}|\tilde \upphi_k|^2r^{-2} drd\sigma d\tau + B\int_{\mc R_{(\tau_0,\infty)}} \frac{w}{r^{2}}|\xi \upphi_{k+1}-\upphi_{k+1,\cutt}|^2 dr^* d\tau d\sigma \,. \numberthis\label{eq:weird-cutoffs-diff-fake-real-hor}
\end{align*}
In the above, we use that integrand on the left hand side vanishes for sufficiently negative $u\leq u_0$, by Lemma~\ref{lemma:weird-inverses}, as well as a Jensen inequality to exchange the absolute value and the integral in $u$. A similar procedure for $s<0$ at $\mc{I}^+$ yields
\begin{align*}
&\int_{\mc I^+_{(\tau_0,\infty)}}|\xi\upphi_k-\upphi_{k,\cutt}|^2d\sigma d\tau\\
&\quad\leq B\sum_{i=0}^{N-1}\int_{\mc R_{(\tau_{2i},\tau_{2i+1})}}w|\tilde\upphi_k|^2 dr^* d\tau d\sigma
+B\int_{\mc R_{(\tau_0,\infty)}}w|\xi\upphi_{k+1}-\upphi_{k+1,\cutt}|^2 dr^* d\tau d\sigma\,. \numberthis\label{eq:weird-cutoffs-diff-fake-real-infty}
\end{align*}
Likewise, by the same procedure we have, for any $\tau\geq \tau_0$,
\begin{align*}
&\int_{\Sigma_\tau}|\xi\upphi_k-\upphi_{k,\cutt}|^2d\sigma dr\\
&\quad\leq B\sum_{i=0}^{N-1}\int_{\mc R_{(\tau_{2i},\tau_{2i+1})}}w|\tilde\upphi_k|^2 dr^* d\tau d\sigma
+B\int_{\mc R_{(\tau_0,\tau)}}w|\xi\upphi_{k+1}-\upphi_{k+1,\cutt}|^2 dr^* d\tau d\sigma\,. \numberthis\label{eq:weird-cutoffs-diff-fake-Sigma}
\end{align*}

Now let us note that, for a smooth $c(r)$, \eqref{eq:transport-weird-cutoff} implies the identity
\begin{align*}
\mc L\lp(c(r)|\xi\upphi_k-\upphi_{k,\cutt}|^2\rp)+\sign s c'(r)|\xi\upphi_k-\upphi_{k,\cutt}|^2 = 2c(r)\lp(\mc L\xi \upphi_k +w(\xi\upphi_{k+1}-\upphi_{k+1,\cutt})\rp)(\xi\upphi_k-\upphi_{k,\cutt})\,.
\end{align*}
Choosing $c=-(r-r_+)/r$ if $s<0$ and $c=-r^{-1}$ if $s>0$, and recalling that $\xi\upphi_k-\upphi_{k,\cutt}=0$ to the past of $\Sigma_{\tau_0}$, we arrive at
\begin{align*}
&\int_{\mc R_{(\tau_0,\infty)}}w|\xi\upphi_k-\upphi_{k,\cutt}|^2dr^*d\tau d\sigma\\
&\quad \leq B
\lp\{\begin{array}{lr}
\int_{\mc H^+_{(\tau_0,\infty)}}|\xi\upphi_k-\upphi_{k,\cutt}|^2d\sigma d\tau, &s>0\\
\int_{\mc I^+_{(\tau_0,\infty)}}|\xi\upphi_k-\upphi_{k,\cutt}|^2d\sigma d\tau, &s<0
\end{array}\rp\}  + \limsup_{\tau\to \infty}\int_{\Sigma_\tau}|\xi\upphi_k-\upphi_{k,\cutt}|^2d\sigma dr \\
&\quad\qquad+B\sum_{i=0}^{N-1}\int_{\mc R_{(\tau_{2i},\tau_{2i+1})}}w|\tilde \upphi_k|^2 dr^* d\tau d\sigma
+B\int_{\mc R_{(\tau_0,\infty)}}w|\xi\upphi_{k+1}-\upphi_{k+1,\cutt}|^2 dr^* d\tau d\sigma\\
&\quad \leq B\sum_{i=0}^{N-1}\int_{\mc R_{(\tau_{2i},\tau_{2i+1})}}w|\tilde \upphi_k|^2 dr^* d\tau d\sigma+ B\int_{\mc R_{(\tau_0,\infty)}}w|\xi\upphi_{k+1}-\upphi_{k+1,\cutt}|^2 dr^* d\tau d\sigma\\
&\quad \leq B\sum_{j=k}^{k_0-1}\sum_{i=0}^{N-1}\int_{\mc R_{(\tau_{2i},\tau_{2i+1})}}w|\tilde \upphi_k|^2 dr^* d\tau d\sigma\,, \numberthis\label{eq:weird-cutoffs-diff-fake-real-bulk}
\end{align*}
where we have invoked estimates \eqref{eq:weird-cutoffs-diff-fake-real-hor},  \eqref{eq:weird-cutoffs-diff-fake-real-infty} and \eqref{eq:weird-cutoffs-diff-fake-Sigma}, and applied induction in $k\leq k_0-1$.  Appealing to the finite in time estimates of Proposition~\ref{prop:finite-in-time-first-order} concludes the case $X=\mathrm{id}$.

\medskip
\noindent \textit{Step 2: $1$st order estimates.} Since $[Z,\mc{L}]=0$ and $[T,\mc{L}]=0$, estimates  \eqref{eq:weird-cutoffs-diff-fake-real-hor}, \eqref{eq:weird-cutoffs-diff-fake-real-infty} and \eqref{eq:weird-cutoffs-diff-fake-real-bulk} hold replacing both $\xi\upphi_k-\upphi_{k,\cutt}$ by $X(\xi\upphi_k-\upphi_{k,\cutt})$ and $\tilde{\upphi}_k$ by $X\tilde\upphi_k$ for $X=Z,T$. To derive $\underline{\mc L}$-commuted estimates, we begin by commuting \eqref{eq:transport-weird-cutoff} with $\underline{\mc L}$,
\begin{align*}
\mc{L}(\underline{\mc L}(\xi\upphi_k-\upphi_{k,\cutt}))&=  L\uL\xi\upphi_k+\mc L\xi \underline{\mc L}\upphi_k+\frac{4arw}{r^2+a^2}\sign s Z(\xi\upphi_k-\upphi_{k,\cutt})\\
&\qquad+\sign s w'(\xi\upphi_{k+1}-\upphi_{k+1,\cutt})+w\mc L(\xi\upphi_{k+1}-\upphi_{k+1,\cutt})\,,
\end{align*}
and then repeat the arguments of the previous step \textit{mutatis mutandis} to conclude the proof.
\end{proof}

Very similar methods to the ones used in Lemma~\ref{lemma:weird-cutoffs-diff-fake-real} can be used to obtain some preliminary estimates on the inhomogeneities of the transformed system introduced in Section~\ref{sec:weird-cutoff-system}:

\begin{lemma}[Estimates on hyperboloidal cutoff inhomogeneities]\label{lemma:weird-cutoffs-inhoms-rough-estimates} Fix some $s\in\mathbb{Z}$ and $k_0\in\{0,\dots,|s|\}$ as in Lemma~\ref{lemma:weird-inverses}. Then, for any $k<k_0$ and $\tau\geq \tau_0$, we have the following estimates. If $k_0<|s|$, then 
\begin{align*}
&\sum_{X\in\{\mathrm{id},T,Z,\p_{r^*}\}}\lp(
\lp\{\begin{array}{lr}
\int_{\mc H^+_{(\tau_0,\infty)}}\lp|w^{-1}X\mathfrak{H}_k\rp|^2 d\tau d\sigma, &s>0\\
\int_{\mc I^+_{(\tau_0,\infty)}}\lp|w^{-1}X\mathfrak{H}_k\rp|^2 d\tau d\sigma, &s<0
\end{array}\rp\}
+\int_{\mc R_{(\tau_0,\infty)}}w\lp|\frac{X\mathfrak{H}_k}{w}\rp|^2 dr^* d\sigma d\tau\rp)\\
&\quad \leq B\sum_{i=0}^{N-1}\lp(\sum_{j=0}^{k_0}\mathbb{E}^1[\tilde\upphi_{j}](\tau_{2i})+\mathbb{E}[\tilde\upphi_{k_0+1}](\tau_{2i}) \rp)\,, \numberthis \label{eq:weird-cutoffs-inhoms-rough-estimates-lower}
\end{align*}
If $k_0=|s|$, then
\begin{align*}
&\sum_{k=0}^{|s|-1}
\lp\{\begin{array}{lr}
\int_{\mc H^+_{(\tau_0,\infty)}}\lp|w^{-1}\mathfrak{H}_k\rp|^2 d\tau d\sigma, &s>0\\
\int_{\mc I^+_{(\tau_0,\infty)}}\lp|w^{-1}\mathfrak{H}_k\rp|^2 d\tau d\sigma, &s<0
\end{array}\rp\}
+\int_{\mc R_{(\tau_0,\infty)}}w\lp|\frac{\mathfrak{H}_k}{w}\rp|^2 dr^* d\sigma d\tau\\
&\quad \leq B\sum_{i=0}^{N-1}\lp(\sum_{j=0}^{|s|}\mathbb{E}[\tilde\upphi_{j}](\tau_{2i})+
\lp\{\begin{array}{lr}
\int_{\mc H^+_{(\tau_{2i},\tau_{2i+1})}}|\Phi|^2 d\tau d\sigma, &s>0\\
\int_{\mc I^+_{(\tau_{2i},\tau_{2i+1})}}|\Phi|^2 d\tau d\sigma, &s<0
\end{array}\rp\}\rp)\,. \numberthis \label{eq:weird-cutoffs-inhoms-rough-estimates-upper}
\end{align*}
and
\begin{align*}
\int_{\mc R_{(\tau_0,\tau_N)}} \lp(|\mathfrak{H}_{|s|}|^2+w\lp|\frac{\mathfrak{H}_{|s|}}{w}-L\xi\uL \Phi-\uL\xi L \Phi\rp|^2\rp) dr^*d\tau d\sigma 
&\leq B\sum_{i=0}^{N-1}\sum_{j=0}^{|s|}\mathbb{E}[\tilde\upphi_{j}](\tau_{2i})\,. \numberthis\label{eq:weird-cutoffs-inhoms-rough-estimates-upper-improved}
\end{align*}
\end{lemma}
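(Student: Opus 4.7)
The strategy is to first estimate the top-level inhomogeneity $\mathfrak H^{[s]}_{k_0}$ directly from the formula \eqref{eq:weird-cutoffs-top-inhom-formula}, and then to propagate these estimates downwards to $\mathfrak H^{[s]}_k$ for $k<k_0$ via a backwards transport argument of the same type as in Lemma~\ref{lemma:weird-cutoffs-diff-fake-real}. The formula decomposes $\mathfrak H^{[s]}_{k_0}$ into a commutator $[\mathfrak R^{[s]}_{k_0},\xi]\upphi^{[s]}_{k_0}$; when $k_0<|s|$, a transport remainder $-\uL\xi\,\mc L\upphi^{[s]}_{k_0}$ and an upper-level piece $w\,\uL\xi\,\upphi^{[s]}_{k_0+1}$; and a coupling correction $\sum_{k<k_0}w(ac^{\mr{id}}_{s,k_0,k}+ac^{Z}_{s,k_0,k}Z)(\xi\upphi^{[s]}_k-\upphi^{[s]}_{k,\cutt})$. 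Since $\xi=\xi(\tilde t^*)$ we have $Z\xi\equiv 0$ and every other derivative of $\xi$ is bounded and supported in $\cup_i\mc R_{(\tau_{2i},\tau_{2i+1})}$; moreover, by the hyperboloidal choice of $\tilde t^*$, $L\xi=O(r^{-2})$ at $\mc I^+$. Writing $[\mathfrak R^{[s]}_{k_0},\xi]=L\xi\,\uL+\uL\xi\,L+\tfrac12(L\uL\xi+\uL L\xi)+[\text{lower order}]$, pointwise Cauchy--Schwarz in each time slab together with the finite-in-time bounds of Proposition~\ref{prop:finite-in-time-first-order} controls the first three pieces by $\sum_i(\mathbb E^1[\tilde\upphi^{[s]}_{k_0}]+\mathbb E[\tilde\upphi^{[s]}_{k_0+1}])(\tau_{2i})$, while the $w$-weighted coupling piece is absorbed through Lemma~\ref{lemma:weird-cutoffs-diff-fake-real}. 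For the improved estimate \eqref{eq:weird-cutoffs-inhoms-rough-estimates-upper-improved} in the $k_0=|s|$ case, the same commutator identity shows that $\mathfrak H^{[s]}_{|s|}/w-L\xi\,\uL\Phi^{[s]}-\uL\xi\,L\Phi^{[s]}$ retains only zeroth-order contributions in $\Phi^{[s]}$ (from the second derivatives $L\uL\xi,\uL L\xi$ and the subleading terms of $\mathfrak R^{[s]}$) and the $w$-weighted coupling correction, both of which are controlled by $\sum_{i,j}\mathbb E[\tilde\upphi^{[s]}_j](\tau_{2i})$ after integration against $w$.

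The estimates for $\mathfrak H^{[s]}_k$ with $k<k_0$ follow from the physical-space translation of the frequency-space definition \eqref{eq:RW-cutoff-freq-inhom-k}, namely the transport identity
\begin{align*}
\mc L\lp(\frac{\mathfrak H^{[s]}_k}{w}\rp)=\mathfrak H^{[s]}_{k+1}\qquad(k=0,\dots,k_0-1),
\end{align*}
together with the vanishing of $\mathfrak H^{[s]}_k/w$ at $r\to\infty$ for $s>0$ (respectively $r\to r_+$ for $s<0$). We then run the three-step argument of the proof of Lemma~\ref{lemma:weird-cutoffs-diff-fake-real} almost verbatim: integrate along the null characteristic from the boundary where $\mathfrak H^{[s]}_k/w$ vanishes and apply Jensen's inequality to bound the flux at $\mc H^+$ or $\mc I^+$ by a bulk integral of the same weight in $\mathfrak H^{[s]}_{k+1}$; then multiply the transport identity by $c(r)\overline{\mathfrak H^{[s]}_k/w}$ with $c(r)=-(r-r_+)/r$ for $s<0$ and $c(r)=-r^{-1}$ for $s>0$ to extract a coercive $w$-weighted bulk estimate; finally induct in $k$ downwards from $k_0$, absorbing the linear coupling to $\mathfrak H^{[s]}_{k+1}/w$ at each step using the weights built into \eqref{eq:weird-cutoffs-inhoms-rough-estimates-lower}/\eqref{eq:weird-cutoffs-inhoms-rough-estimates-upper}. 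Commutation with the Killing fields $T$ and $Z$ is immediate; commutation with $\p_{r^*}$ is obtained by using the transport identity itself to write $\p_{r^*}(\mathfrak H^{[s]}_k/w)$ as a sum of $\mathfrak H^{[s]}_{k+1}$ and $(T+\tfrac{a}{r^2+a^2}Z)(\mathfrak H^{[s]}_k/w)$, both already controlled.

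The main technical obstacle is the borderline interplay between the $w^{-1}$-weighted flux norms on the left and the $r$-decay (respectively $w$-smallness) of the coefficients $L\xi$, $\uL\xi$ and of the coupling-term prefactors on the right. The flux $\int_{\mc I^+}|w^{-1}\mathfrak H^{[s]}_k|^2$ for $s<0$ is integrable only thanks to the sharp bound $L\xi=O(r^{-2})$ coming from the hyperboloidal geometry of $\Sigma_\tau$, together with the explicit factor of $w$ in the coupling terms of \eqref{eq:weird-cutoffs-top-inhom-formula}; the analogous statement at $\mc H^+$ for $s>0$ depends on the same $w$-smallness of the couplings. Verifying that these sharp weights are preserved at every step of the backwards-transport induction --- so that no $r$-power or $w$-power is lost between levels $k+1$ and $k$ --- is where the bookkeeping is most delicate.
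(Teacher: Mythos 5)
Your top-level estimate for $\mathfrak{H}_{k_0}$, your reduction of \eqref{eq:weird-cutoffs-inhoms-rough-estimates-upper-improved} to the commutator identity, and your general outline of the backwards-transport mechanism $\mc L(\mathfrak H_k/w)=\mathfrak H_{k+1}$ followed by a null-flux Jensen argument and a $c(r)$-weighted bulk identity all follow the paper's route. However, there is a genuine gap in the way you close the downward induction when $k_0=|s|$, and it is precisely the hardest step.

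When you reach the base step $k=|s|-1$, the transport identity couples $\mathfrak H_{|s|-1}/w$ to $\mathfrak H_{|s|}=\mathfrak H$. Your proposal claims to absorb this ``using the weights built into \eqref{eq:weird-cutoffs-inhoms-rough-estimates-lower}/\eqref{eq:weird-cutoffs-inhoms-rough-estimates-upper},'' but those weights are not there: \eqref{eq:weird-cutoffs-inhoms-rough-estimates-upper} runs only over $k\le|s|-1$, and no $w$-weighted $L^2$ control of $\mathfrak H/w$ is available at the top level. Indeed it cannot be, because $\mathfrak H$ contains $L\xi\,\uL\Phi+\uL\xi\,L\Phi$, and the corresponding contributions to $\int\frac{r^2}{\Delta}|\mathfrak H|^2\,du$ (for $s>0$) or its $\mc I^+$ analogue (for $s<0$) diverge. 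A plain Cauchy--Schwarz step at this point does not close; the paper's mechanism is to integrate by parts in the transport direction, writing
\begin{align*}
c(r)\,\Re\!\lp[\frac{\mathfrak H_{|s|-1}}{w}\,\overline{\mathfrak H}\rp]
&=\mc L\,\Re\!\lp[c(r)\,\uL\xi\,\frac{\mathfrak H_{|s|-1}}{w}\,\overline{\Phi}-\tfrac12 c(r)(\uL\xi)^2|\Phi|^2\rp]+\text{(good bulk)}\,,
\end{align*}
so that the dangerous transversal derivative of $\Phi$ is traded for a boundary contribution along $\mc H^+$ or $\mc I^+$. That boundary term is exactly the extra $\sum_i\int_{\mc H^+_{(\tau_{2i},\tau_{2i+1})}}|\Phi|^2$ (respectively $\int_{\mc I^+}$) on the right-hand side of \eqref{eq:weird-cutoffs-inhoms-rough-estimates-upper} --- a term your proof would never generate, which should itself be a warning sign, since your argument would then give a strictly stronger estimate than the lemma claims. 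The same issue appears in the null-flux estimate: the Jensen step feeds $\mathfrak H_{|s|}$ into the $r^2/\Delta$-weighted bulk, and there too one must first split off $\uL(L\xi\,\Phi)$ (or $L(\uL\xi\,\Phi)$) and integrate it exactly, rather than apply Cauchy--Schwarz. Without this manipulation, the case $k+1=k_0=|s|$ of both the flux and the bulk estimate fails, and the entire estimate \eqref{eq:weird-cutoffs-inhoms-rough-estimates-upper} is not closed. Your proposal is sound for $k_0<|s|$, where the subtraction $\mathfrak H_{k_0}-\big([\mathfrak R_{k_0},\xi]-\uL\xi\,\mc L\big)\upphi_{k_0}-w\,\uL\xi\,\upphi_{k_0+1}$ in \eqref{eq:weird-cutoffs-top-inhom-formula} already removes the bad transversal derivative, but the $k_0=|s|$ case needs the additional integration-by-parts identity.
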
 

\begin{proof} As before, we first study the case of 0th order estimates for the inhomogeneities and only then look at first derivatives.

\medskip
\noindent \textit{Step 1: 0th order estimates.} Let us begin by noting that, from \eqref{eq:weird-cutoffs-top-inhom-formula}, we easily derive an estimate for $\mathfrak{H}_{k_0}$,
\begin{align*}
\int_{\mc R_{(\tau_0,\infty)}} |\mathfrak{H}_{k_0}|^2 dr^*d\tau d\sigma 
&\leq B\int_{\mc R_{(\tau_0,\infty)}} \lp[|[\mathfrak{R}_{k_0},\xi]\upphi_{k_0}|^2+w^2\sum_{j=0}^{k_0-1}\lp(|\xi\upphi_j-\upphi_{j,\cutt}|^2 + |Z(\xi\upphi_j-\upphi_{j,\cutt})|^2\rp)\rp]\!dr^*d\tau d\sigma\\
&\leq B\sum_{i=0}^{N-1}\mathbb{E}[\tilde\upphi_{k_0}](\tau_{2i})+B\sum_{j=0}^{k_0-1}\sum_{i=0}^{N-1}\mathbb{E}[\tilde\upphi_{j}](\tau_{2i})\leq B\sum_{j=0}^{k_0}\sum_{i=0}^{N-1}\mathbb{E}[\tilde\upphi_{j}](\tau_{2i})\,, 
\end{align*}
where we have concluded using the results of the previous Lemma~\ref{lemma:weird-cutoffs-diff-fake-real}. If $k_0<|s|$ then we may use the alternative form for $[\mathfrak{R}_{k_0},\xi]\upphi_{k_0}$ given in \eqref{eq:weird-cutoffs-top-inhom-formula} to improve the $r$-weights on the left hand side  of the above estimate:
\begin{align*}
\int_{\mc R_{(\tau_0,\infty)}} w\lp|\frac{\mathfrak{H}_{k_0}}{w}\rp|^2 dr^*d\tau d\sigma 
&\leq B\sum_{i=0}^{N-1}\lp(\sum_{j=0}^{k_0}\mathbb{E}[\tilde\upphi_{j}](\tau_{2i})+\int_{\Sigma_{\tau_{2i}}}r^{-2}|\tilde\upphi_{k_0+1}|^2dr d\sigma \rp)\,. \numberthis\label{eq:weird-cutoffs-top-inhom-estimate-lower}
\end{align*}
Similarly, we may also improve the estimate for $k_0=|s|$ if we remove the term $\underline{L}\xi L \Phi$ and $L\xi \uL \Phi$, obtaining \eqref{eq:weird-cutoffs-inhoms-rough-estimates-upper-improved}.

Turning to the lower level inhomogeneities, i.e.\ to $\mathfrak{H}_k$ with $k<k_0$, we will invoke the transport relation \eqref{eq:transport-weird-cutoff} to mimic the proof of \eqref{eq:weird-cutoffs-diff-fake-real-hor}. We will always need to be careful whenever a term involving $\mathfrak{H}$ appears. For instance, by following the steps of the proof of \eqref{eq:weird-cutoffs-diff-fake-real-hor} in the previous lemma, we easily see that for $s>0$
\begin{align*}
&\int_{\mc H^+_{(\tau_0,\infty)}}\lp|\frac{\mathfrak{H}_k}{w}\rp|^2 d\tau d\sigma \leq B \int_{\mc R_{(\tau_0,\infty)}}\frac{r^2}{\Delta}\lp|\mathfrak H_{k+1}\rp|^2 du d\tau d\sigma \,,
\end{align*}
if $k+1<|s|$, but if $k+1=|s|=k_0$ we must revisit the proof more carefully if we wish to have a finite quantity on the right hand side. With the same coordinate $u$ as in the proof of Lemma~\ref{lemma:weird-cutoffs-diff-fake-real}, chosen to satisfy $\uL u=0$, $u=-\infty$ at $\mc I^+$ and $u=\infty$ at $\mc H^+$, we have
\begin{align*}
&\int_{\mc H^+_{(\tau_0,\infty)}}\lp|\frac{\mathfrak{H}_k}{w}\rp|^2(u=\infty) d\tau d\sigma \leq \int_{\tau_0}^\infty \Big|\int_{u_0}^\infty \frac{\mathfrak{H} -\uL\xi L\Phi-L\uL\xi \Phi}{r^{-2}\Delta} \frac{\Delta}{r^2} du\Big|^2 d\tau d\sigma +\int_{\tau_0}^\infty \Big|\int_{u_0}^\infty \uL(L\xi\Phi) du\Big|^2 d\tau d\sigma\\
&\quad\leq B\sum_{i=0}^{N-1}\Big(\sum_{j=0}^{|s|}\mathbb{E}[\tilde\upphi_{j}](\tau_{2i})+\int_{\mc H^+_{(\tau_{2i},\tau_{2i+1})}}|\Phi|^2d\sigma d\tau\Big) \,.
\end{align*}
We proceed similarly for $s<0$ for the hypersurfaces $\Sigma_\tau$ with $\tau\geq \tau_0$. Thus, we arrive at 
\begin{align*}
&\lp\{\begin{array}{lr}
\int_{\mc H^+_{(\tau_0,\infty)}}\lp|\frac{\mathfrak{H}_k}{w}\rp|^2 d\tau d\sigma,\quad &s>0\\
\int_{\mc I^+_{(\tau_0,\infty)}}\lp|\frac{\mathfrak{H}_k}{w}\rp|^2 d\tau d\sigma,\quad &s<0
\end{array}\rp\} 
+\int_{\Sigma_{\tau}}\lp|\frac{\mathfrak{H}_k}{w}\rp|^2 dr d\sigma  
\\
&\quad \leq B \lp\{\begin{array}{lr}
\displaystyle \int_{\mc R_{(\tau_0,\infty)}} \lp\{\begin{array}{lr}
r^{-2}\Delta,\quad &s>0\\
r^2,\quad &s<0
\end{array}\rp\} \lp|\frac{\mathfrak H_{k+1}}{w} \rp|^2 du d\tau d\sigma,\quad &k+1<|s|\\
\displaystyle \sum_{i=0}^{N-1}\sum_{j=0}^{|s|}\Big(\mathbb{E}[\tilde\upphi_{j}](\tau_{2i})+ \lp\{\begin{array}{lr}
\int_{\mc H^+_{(\tau_{2i},\tau_{2i+1})}}|\Phi|^2 d\tau d\sigma,\quad &s>0\\
\int_{\mc I^+_{(\tau_{2i},\tau_{2i+1})}}|\Phi|^2 d\tau d\sigma,\quad &s<0
\end{array}\rp\}\Big), & k+1=k_0=|s|\\
\end{array}\rp\} \,.\numberthis \label{eq:weird-cutoffs-lower-inhom-hor-infty}
\end{align*}

Now, from \eqref{eq:weird-cutoffs-diff-fake-real-hor}, we derive the identity
\begin{align*}
\mc L\lp(c(r)\lp|\frac{\mathfrak{H}_k}{w}\rp|^2\rp)+\sign s c'(r)\lp|\frac{\mathfrak{H}_k}{w}\rp|^2 = 2c(r)\Re\lp[\frac{\mathfrak{H}_k}{w}\overline{\mathfrak H_{k+1}}\rp]\,,
\end{align*}
for a suitably regular $c(r)$. Choosing $c=-\Delta/(r^2+a^2)$ if $s<0$ and $c=-r^{-1}$ if $s>0$, again then by \eqref{eq:weird-cutoffs-lower-inhom-hor-infty}
\begin{align*}
&\int_{\mc{R}_{(\tau_0,\infty)}}w\lp|\frac{\mathfrak H_k}{w}\rp|^2dr^*d\tau d\sigma \\
&\quad\leq 
B\lp\{\begin{array}{lr}
\int_{\mc{H}^+(\tau_0,\infty)}\lp|w^{-1}\mathfrak H_k\rp|^2d\tau d\sigma & s>0\\
\int_{\mc{I}^+(\tau_0,\infty)}\lp|w^{-1}\mathfrak H_k\rp|^2d\tau d\sigma & s<0
\end{array}\rp\}+B\limsup_{\tau\to \infty}\int_{\Sigma_{\tau}} \frac{1}{r}\lp|\frac{\mathfrak H_k}{w}\rp|^2 dr d\sigma +B\int_{\mc{R}_{(\tau_0,\infty)}} \frac{|\mathfrak{H}_{k+1}|^2}{w} dr^*d\tau d\sigma\\
&\quad\leq B\int_{\mc{R}_{(\tau_0,\infty)}} w\lp|\frac{\mathfrak{H}_{k+1}}{w}\rp|^2 dr^*d\tau d\sigma \,,\numberthis\label{eq:transport-estimates-inhom-intermediate}
\end{align*} 
as long as $k+1<|s|$; we can then conclude from \eqref{eq:weird-cutoffs-top-inhom-estimate-lower}. But if $k+1=|s|=k_0$, then we must again be more careful; we have
\begin{align*}
\int_{\mc{R}_{(\tau_0,\infty)}}\frac{\lp|\mathfrak H_k\rp|^2}{w}dr^*d\tau d\sigma 
&\leq 
B\lp\{\begin{array}{lr}
\int_{\mc{H}^+(\tau_0,\infty)}\lp|w^{-1}\mathfrak H_k\rp|^2d\tau d\sigma & s>0\\
\int_{\mc{I}^+(\tau_0,\infty)}\lp|w^{-1}\mathfrak H_k\rp|^2d\tau d\sigma & s<0
\end{array}\rp\}+B\lim_{\tau\to \infty}\int_{\Sigma_{\tau}}\frac{1}{r}\lp|\frac{\mathfrak H_k}{w}\rp|^2 dr d\sigma \\
&\qquad +B\lp|\int_{\mc{R}_{(\tau_0,\tau_{N+1})}} c(r)\Re\lp[\frac{\mathfrak{H}_{|s|-1}}{w}\overline{\mathfrak{H}}\rp] dr^*d\tau d\sigma\rp|\,,
\end{align*} 
and for the last term we note the identity:
\begin{align*}
c(r)\Re\lp[\frac{\mathfrak{H}_{|s|-1}}{w}\overline{\mathfrak{H}}\rp]  &=  c(r)\underline{\mc{L}}\xi\Re\lp[\frac{\mathfrak{H}_{|s|-1}}{w}\overline{\mc L\Phi}\rp]  +c(r)\Re\lp[\frac{\mathfrak{H}_{|s|-1}}{w}\overline{\lp([\mathfrak{R},\xi]-\underline{\mc L}\xi \mc L\rp)\Phi}\rp] \\
&\qquad+wc(r)\Re\lp[\frac{\mathfrak{H}_{|s|-1}}{w}\overline{(a c_{s,k}^{\rm id}+a c_{s,k}^{Z}Z)(\xi\upphi_k-\upphi_{k,\cutt})}\rp] \\
&= \mc L \Re\lp[c(r)\underline{\mc L}\xi\frac{\mathfrak{H}_{|s|-1}}{w}\overline{\Phi}-\frac{1}{2}c(r)(\underline{\mc L}\xi)^2|\Phi|^2\rp] 
- \mc L \lp(c(r)\underline{\mc L}\xi\rp)\Re\lp[\frac{\mathfrak{H}_{|s|-1}}{w}\overline{\Phi}\rp] \\
&\qquad -c(r)\underline{\mc L}\xi\Re\lp[\lp(\mathfrak{H}-\mc L\xi \underline{\mc L}\Phi\rp)\overline{\Phi}\rp]+c(r)\Re\lp[\frac{\mathfrak{H}_{|s|-1}}{w}\overline{\lp([\mathfrak{R},\xi]-\underline{\mc L}\Phi\mc L\rp)\Phi}\rp] \\
&\qquad+ \frac{1}{2}\mc L\lp( c(r)(\underline{\mc L}\xi)^2\rp)|\Phi|^2+c(r)\sum_{k=0}^{|s|-1}w\Re\lp[\frac{\mathfrak{H}_{|s|-1}}{w}\overline{(a c_{s,k}^{\rm id}+a c_{s,k}^{Z}Z)(\xi\upphi_k-\upphi_{k,\cutt})}\rp]\,,
\end{align*}
where we have integrated by parts in any term $\underline{\mc L}\xi\mc L \Phi$ appearing due to $\mathfrak{H}$.  Thus, if $k+1=|s|=k_0$, by an application of Cauchy--Schwarz we obtain
\begin{align*}
&\sum_{k=0}^{|s|-1}\int_{\mc{R}_{(\tau_0,\infty)}}w\lp|\frac{\mathfrak H_k}{w}\rp|^2dr^*d\tau d\sigma \\
&\quad\leq B\lp|\int_{\mc{R}_{(\tau_0,\tau_{N+1})}} c(r)\Re\lp[\frac{\mathfrak{H}_{|s|-1}}{w}\overline{\mathfrak{H}}\rp] dr^*d\tau d\sigma\rp|\\
&\quad\leq B
\lp\{\begin{array}{lr}
\sum_{i=0}^{N-1}\int_{\mc H^+_{(\tau_{2i},\tau_{2i+1})}}|\Phi|^2d\tau d\sigma+\int_{\mc H^+_{(\tau_0,\tau_{N+1})}}\lp|\frac{\mathfrak H_{|s|-1}}{w}\rp|^2d\tau d\sigma, &s>0\\
\sum_{i=0}^{N-1}\int_{\mc I^+_{(\tau_{2i},\tau_{2i+1})}}|\Phi|^2d\tau d\sigma+\int_{\mc I^+_{(\tau_0,\tau_{N+1})}}\lp|\frac{\mathfrak H_{|s|-1}}{w}\rp|^2d\tau d\sigma, &s<0\\
\end{array}\rp\}
\\
&\quad\qquad +B\sum_{k=0}^{|s|}\int_{\mc{R}_{(\tau_0,\tau_{N+1})}}\frac{w}{r}\lp(|\xi\upphi_k-\upphi_{k,\cutt}|^2+|a||Z(\xi\upphi_k-\upphi_{k,\cutt})|^2\rp)dr^*d\tau d\sigma\\
&\quad\leq B\sum_{i=0}^{N-1}\sum_{k=0}^{|s|}\mathbb{E}[\tilde \upphi_k](\tau_{2i})+ B\sum_{i=0}^N\lp\{\begin{array}{lr}
\int_{\mc H^+_{(\tau_{2i},\tau_{2i+1})}}|\Phi|^2 d\tau d\sigma, &s>0\\
\int_{\mc I^+_{(\tau_{2i},\tau_{2i+1})}}|\Phi|^2 d\tau d\sigma, &s<0
\end{array}\rp\}\,,
\end{align*}
after invoking \eqref{eq:weird-cutoffs-lower-inhom-hor-infty}, estimate \eqref{eq:weird-cutoffs-diff-fake-real-bulk} from Lemma~\ref{lemma:weird-cutoffs-diff-fake-real} and the finite-in-time energy estimates of Proposition~\ref{prop:finite-in-time-first-order}. 

\medskip
\noindent \textit{Step 2: 1st order estimates.} The procedure for derivatives of the inhomogeneities is very similar to the previous step. It is useful to note that $[\mc L,T]=[\mc L,Z]$ and so estimates for the Killing derivatives $T$ and $Z$ follow immediately. For $\p_{r^*}$ derivatives, we have to derive an analogue of the transport equation \eqref{eq:transformed-transport-inhom}: for $k<k_0$,
\begin{align*}
\mc L \lp(\frac{\mathfrak H_k}{w}\rp)'= [\mc L,\p_{r^*}]\lp(\frac{\mathfrak H_k}{w}\rp)+\mathfrak{H}_{k+1}' = \frac{2ar}{r^2+a^2}Z\mathfrak{H}_k +\mathfrak{H}_{k+1}'\,,
\end{align*}
and we also have that
\begin{align*}
\frac{\mathfrak{H}_k'}{w}=\lp(\frac{\mathfrak H_k}{w}\rp)'+\frac{w'}{w}\lp(\frac{\mathfrak H_k}{w}\rp)\,.
\end{align*}
Thus, by repeating Step 1 with the above transport equation in mind, in addition to the $Z$-commuted Step 1 estimates, we conclude the proof.
\end{proof}

\subsubsection{Separated current errors}
\label{sec:current-errors}

In this section, we bound the errors arising from employing separated currents to the inhomogeneous transformed system which was constructed in Section~\ref{sec:weird-cutoff-system} by applying hyperboloidal cutoffs  to its homogeneous counterpart. More precisely,  our goal is to estimate the terms involving \eqref{eq:ODE-estimates-inhomogeneity-terms} which feature in the right hand side of \eqref{eq:part1-ILED-withoutTS-top} and \eqref{eq:part1-ILED-withoutTS-bottom} in our Theorem~\ref{thm:ODE-ILED}, first appearing in \cite{SRTdC2020}.

To begin with, we set $k_0=|s|$ in our construction from Section~\ref{sec:weird-cutoff-system} and we analyze the interactions between the hyperboloidal cutoff inhomogeneities and the Killing and virial separated currents which involves solely the top level transformed variable, $\swei{\Phi}{s}$. 

\begin{lemma}[Top level current errors from hyperboloidal cutoffs]\label{lemma:current-errors-wave} Let $s\in\mathbb{Z}$. Let   $\chi_R^\pm$ be smooth cutoffs which are supported, respectively, for $r^*\in(R^*,\infty)$ and  $r^*\in(-\infty,-R^*)$.  We have the identity
\begin{align}
\begin{split}
[\mathfrak{R},\xi]&= \lp(L\uL\xi-a^2w\sin^2\theta TT\xi\rp)+(L\xi-aw\sin^2\theta T\xi)\uL+(\uL\xi-aw\sin^2\theta T\xi)L\\
&\qquad-2aw\frac{\rho^2}{r^2+a^2}T\xi\Phi+2ia w s\cos\theta T\xi. \end{split}\label{eq:inhom-commutator}
\end{align}
Thus, the following estimates hold.
\begin{itemize}
\item For errors due to $T$ Killing energy current, if $R^*$ is sufficiently large
\begin{align*}
&\int_{\mathbb{S}^2}\int_{-\infty}^\infty\int_{-\infty}^{\infty}  \chi_R^+ \Re\lp([\mathfrak{R},\xi]\Phi \overline{T(\xi\Phi)}\rp) dtdr^*d\sigma \leq B\sum_{i=0}^{N-1}\Big\{ \mathbb{E}[\Phi](\tau_{2i}) + 
\int_{\mc I^{+}_{(\tau_{2i},\tau_{2i+1})}}|\Phi|^2d\sigma d\tau\Big\}\,,\numberthis\label{eq:T-Killing-energy-errors-wave-infty}\\
&\int_{\mathbb{S}^2}\int_{-\infty}^\infty\int_{-\infty}^{\infty} \chi_R^- \Re\lp([\mathfrak{R},\xi]\Phi \overline{T(\xi\Phi)}\rp) dtdr^*d\sigma \leq B \sum_{i=0}^{N-1}\overline{\mathbb{E}}[\Phi](\tau_{2i})\,.\numberthis\label{eq:T-Killing-energy-errors-wave-hor}
\end{align*}
\item For errors due to $K$ Killing energy current, if $R^*$ is sufficiently large
\begin{align*}
&\int_{\mathbb{S}^2}\int_{-\infty}^\infty\int_{-\infty}^\infty  \chi_R^+ \Re\lp([\mathfrak{R},\xi]\Phi \overline{K(\xi\Phi)}\rp) dtdr^*d\sigma \leq B  \sum_{i=0}^{N-1}\mathbb{E}_2[\Phi](\tau_{2i})\,, \numberthis\label{eq:K-Killing-energy-errors-wave-infty} \\
&\int_{\mathbb{S}^2}\int_{-\infty}^\infty\int_{-\infty}^\infty  \chi_R^- \Re\lp([\mathfrak{R},\xi]\Phi \overline{K(\xi\Phi)}\rp) dtdr^*d\sigma \leq B\sum_{i=0}^{N-1}\Big\{ \mathbb{E}[\Phi](\tau_{2i}) + \int_{\mc H^{+}_{(\tau_{2i},\tau_{2i+1})}}|\Phi|^2d\sigma d\tau\Big\}\,. \numberthis\label{eq:K-Killing-energy-errors-wave-hor}
\end{align*}

\item For errors due to the virial currents, if $R^*$ is sufficiently large
\begin{align*}
&\lp|\int_{\mathbb{S}^2}\int_{-\infty}^\infty\int_{-\infty}^{\infty} \chi_R^+ \Re\lp([\mathfrak{R},\xi]\Phi \overline{(\xi\Phi)'}\rp) dtdr^*d\sigma \rp|+\lp|\int_{\mathbb{S}^2}\int_{-\infty}^\infty\int_{-\infty}^\infty\frac{1}{r}\chi_R^+ \Re\lp([\mathfrak{R},\xi]\Phi \overline{(\xi\Phi)}\rp) dtdr^*d\sigma \rp|\\
&\quad\leq B \sum_{i=0}^{N-1}\Big\{ \mathbb{E}[\Phi](\tau_{2i}) + 
\int_{\mc I^{+}_{(\tau_{2i},\tau_{2i+1})}}|\Phi|^2d\sigma d\tau\Big\}\,,\numberthis\label{eq:virial-current-errors-wave-infty}\\
&\lp|\int_{\mathbb{S}^2}\int_{-\infty}^\infty\int_{-\infty}^{\infty}  \chi_R^- \Re\lp([\mathfrak{R},\xi]\Phi \overline{(\xi\Phi)'}\rp) dtdr^*d\sigma\rp|+ \lp|\int_{\mathbb{S}^2}\int_{-\infty}^\infty\int_{-\infty}^\infty\frac{\Delta}{r^2}\chi_R^- \Re\lp([\mathfrak{R},\xi]\Phi \overline{(\xi\Phi)}\rp) dtdr^*d\sigma\rp|\\
&\quad\leq B \sum_{i=0}^{N-1} \Big\{\mathbb{E}[\Phi](\tau_{2i}) + 
\int_{\mc H^{+}_{(\tau_{2i},\tau_{2i+1})}}|\Phi|^2d\sigma d\tau\Big\}\,.\numberthis\label{eq:virial-current-errors-wave-hor}
\end{align*}

\item For errors in a bounded $|r^*|$ region, 
\begin{align*}
&\lp|\int_{\mathbb{S}^2}\int_{-R^*}^{R^*}\int_{-\infty}^\infty   \Re\lp([\mathfrak{R},\xi]\Phi \overline{T(\xi\Phi)}\rp) dtdr^*d\sigma\rp|+\lp|\int_{\mathbb{S}^2}\int_{-R^*}^{R^*}\int_{-\infty}^\infty   \Re\lp([\mathfrak{R},\xi]\Phi \overline{K(\xi\Phi)}\rp) dtdr^*d\sigma\rp|\\
&\quad\qquad+\lp|\int_{\mathbb{S}^2}\int_{-R^*}^{R^*}\int_{-\infty}^\infty  \Re\lp([\mathfrak{R},\xi]\Phi\overline{(\xi\Phi)}'\rp) dtdr^*d\sigma \rp|+\lp|\int_{\mathbb{S}^2}\int_{-R^*}^{R^*}\int_{-\infty}^\infty \Re\lp([\mathfrak{R},\xi]\Phi\overline{(\xi\Phi)}\rp) dtdr^*d\sigma \rp|\\
&\quad \leq B(R^*)\sum_{i=0}^{N-1}\mathbb{E}[\Phi](\tau_{2i})\numberthis\label{eq:current-errors-wave-bounded-r}
\end{align*}
\end{itemize}
We can remove integrals over $\mc H^+$ and $\mc I^+$ if we replace $\mathbb{E}$ by, respectively, $\overline{\mathbb{E}}$ and $\mathbb{E}_2$.
\end{lemma}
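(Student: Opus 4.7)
\medskip
\noindent\textit{Proof plan.} The strategy is to expand $[\mathfrak{R},\xi]\Phi$ via identity \eqref{eq:inhom-commutator}, whose terms all carry at least one derivative of $\xi$ and are therefore supported in $\supp(\nabla\xi)\subset \cup_{i}\mc{R}_{(\tau_{2i},\tau_{2i+1})}$, i.e.\ in finitely many hyperboloidal slabs of uniformly bounded $\tilde{t}^*$-width. I would then treat each term separately according to whether the remaining derivative falls on $\Phi$ or on $\xi$, and according to whether we localize near $\mc{I}^+$ via $\chi_R^+$ or near $\mc{H}^+$ via $\chi_R^-$. The essential asymptotic input is the geometry of the hyperboloidal foliation of Section~\ref{sec:hyp-folliation}: since $\Sigma_\tau$ is asymptotically outgoing null at $\mc{I}^+$, one computes $L\tilde{t}^*=O(r^{-2})$ as $r\to\infty$, while $\uL\tilde{t}^*=O(1)$; hence $L\xi=O(r^{-2})\xi'(\tilde{t}^*)$ near $\mc{I}^+$, and symmetrically $\uL\xi$ decays along $\Sigma_\tau$ as $r\to r_+$, while $L\xi,T\xi$ remain $O(1)$ there.

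For each choice of multiplier $X\in\{T,K,\partial_{r^*},r^{-1}\mathrm{id}\}$, I would apply Cauchy--Schwarz on the slab $\supp(\nabla\xi)\cap\supp(\chi_R^{\pm})$, taking care to pair the ``bad'' first derivative of $\Phi$ (that is, $L\Phi$ near $\mc{I}^+$, $\uL\Phi$ near $\mc{H}^+$, which carry only the unweighted piece of $\mathbb{E}$ or $\overline{\mathbb{E}}$) with the fast-decaying derivative of $\xi$, and the ``good'' derivative of $\Phi$ with the $O(1)$ derivative of $\xi$. The resulting slab integrals are of the form
\[
\int_{\tau_{2i}}^{\tau_{2i+1}}\!\!\int_{\Sigma_{\tau'}\cap\supp(\chi_R^\pm)}(\text{weight})\cdot|X_1\Phi|\cdot|X_2\Phi|\,drd\sigma d\tau',
\]
which, since the $\tau'$-extent is $O(1)$, can be bounded by $\sup_{\tau'\in[\tau_{2i},\tau_{2i+1}]}\mathbb{E}[\Phi](\tau')$ and then by $\mathbb{E}[\Phi](\tau_{2i})$ via the finite-in-time energy estimates of Proposition~\ref{prop:finite-in-time-first-order}. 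The zeroth-order commutator terms (those involving $L\uL\xi$, $TT\xi$ or $T\xi$ with no derivative on $\Phi$) contribute integrals of $|\Phi|^2$ times $|T(\xi\Phi)|^2$ or $|(\xi\Phi)'|^2$ on the slab, which are handled by Cauchy--Schwarz together with a Hardy inequality in $r$; the Hardy boundary terms at $r\to\infty$ or $r\to r_+$ are precisely what produce the $\int_{\mc{I}^+}|\Phi|^2$ or $\int_{\mc{H}^+}|\Phi|^2$ contributions on the right-hand side, which can subsequently be absorbed by upgrading $\mathbb{E}$ to $\mathbb{E}_2$ or $\overline{\mathbb{E}}$.

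The principal obstacle is the interaction $\uL\xi\,L\Phi\cdot\overline{T(\xi\Phi)}$ near $\mc{I}^+$ (and the analogue $L\xi\,\uL\Phi\cdot\overline{K(\xi\Phi)}$ near $\mc{H}^+$): here $\uL\xi$ is $O(1)$, $|L\Phi|^2$ carries only weight $r^0$ in $\mathbb{E}$, and $T(\xi\Phi)$ is itself of size $|L\Phi|+|\uL\Phi|$. I would resolve this by integrating by parts in $T$ to transfer a derivative onto $\xi$,
\begin{align*}
\int \uL\xi\,(L\Phi)\,\overline{T(\xi\Phi)}\,dtdr^*d\sigma
&= -\int T(\uL\xi)(L\Phi)\overline{\xi\Phi}\,dtdr^*d\sigma-\int\uL\xi\,(TL\Phi)\,\overline{\xi\Phi}\,dtdr^*d\sigma,
\end{align*}
the $T$-boundary terms vanishing because $\xi$ has compact $\tilde{t}^*$-support; then, after writing $T=\tfrac{1}{2}(L+\uL)-\tfrac{a}{r^2+a^2}Z$ and integrating by parts in $L$ via $2\Re[L\Phi\,\overline{\Phi}]=L|\Phi|^2$, the second term reduces to controlled boundary contributions at $\mc{I}^+$ together with a zeroth-order slab integral. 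The symmetric manoeuvre with $L,\uL$, and $K$ interchanged, exploiting the horizon-side decay of $\uL\xi$, handles the analogous pairing near $\mc{H}^+$. Finally, the bounded-$|r^*|$ estimate \eqref{eq:current-errors-wave-bounded-r} is immediate: all $r$-weights are uniformly equivalent on $|r^*|\leq R^*$, so a single Cauchy--Schwarz against the slab energy $\mathbb{E}[\Phi](\tau_{2i})$, with constant $B(R^*)$, suffices.
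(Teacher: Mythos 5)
Your proposal is on the right track: you correctly extract the commutator identity, you correctly compute the crucial hyperboloidal-foliation asymptotics ($L\xi=O(r^{-2})$ near $\mc I^+$, $\uL\xi=O(\Delta)$ near $\mc H^+$), and you correctly single out $\uL\xi\,L\Phi$ paired with $\overline{T(\xi\Phi)}$ near $\mc I^+$ as the only term requiring work beyond Cauchy--Schwarz. The paper's treatment of that term is, however, different in a way that matters. The paper does \emph{not} integrate by parts in $T$ first. It Leibniz-expands $T(\xi\Phi)=(T\xi)\Phi+\xi T\Phi$, substitutes $T=\tfrac12(L+\uL)-\tfrac{a}{r^2+a^2}Z$ only into $T\Phi$, and discovers that the two problematic pieces are exact $L$-derivatives of the \emph{zeroth-order} expressions $|\Phi|^2$ and $\Re[\Phi\,\overline{\uL\Phi}]$. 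A single IBP in $L$ then yields a boundary term on $\mc I^+$ of the form $|\Phi|^2$ and $\Re[\Phi\,\overline{\uL\Phi}]$, both of which are directly comparable to $\int_{\mc I^+}|\Phi|^2\,d\sigma\,d\tau$ and $\mathbb{E}_{\mc I^+}[\Phi]$, and a bulk term with an extra $L$-derivative landing on the cutoffs (hence $O(r^{-2})$-improved) plus $\Re[\Phi\,\overline{L\uL\Phi}]$, which the paper replaces via the top-level PDE \eqref{eq:Regge-Wheeler-eq} and an additional IBP to obtain angular terms with adequate $r$-decay.

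Your route (IBP in $T$ first, so that $T$ falls on $\uL\xi\,L\Phi$) produces a $TL\Phi$-term, and after writing $T$ via $L,\uL,Z$ this yields $LL\Phi$ and $L\uL\Phi$, which you then propose to integrate by parts in $L$ once more. Here there is a genuine gap: the IBP you sketch produces boundary contributions on $\mc I^+$ of the schematic form $T|\Phi|^2$ or $\Re[L\Phi\,\overline{\Phi}]$, \emph{not} $|\Phi|^2$. At $\mc I^+$, $L\Phi$ is the transverse (non-degenerate outgoing) derivative, which is not contained in $\mathbb{E}_{\mc I^+}$ nor in $\mathbb{E}$; and $\int_{\mathbb S^2}|\Phi|^2|_{\mc I^+\cap\Sigma_\tau}\,d\sigma$, which one obtains if one tries to integrate $T|\Phi|^2$ along the slab by FTC, is not an obviously controlled quantity either. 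You would need additional manipulations (essentially recovering the paper's pairing $|\Phi|^2$, $\Re[\Phi\,\overline{\uL\Phi}]$) to close the estimate. Relatedly, you attribute the $\int_{\mc I^+}|\Phi|^2$ contribution to a Hardy boundary term from the zeroth-order commutator pieces ($L\uL\xi$, $w\,TT\xi$, etc.), but those pieces are already $O(r^{-2})$ on the slab and Cauchy--Schwarz alone suffices for them; the $\int_{\mc I^+}|\Phi|^2$ term in the statement actually originates from the $L$-IBP on the $\uL\xi\,L\Phi$ term. Finally, you should make explicit the use of the equation for $\Phi$ (i.e., \eqref{eq:Regge-Wheeler-eq}) to replace $\Re[\Phi\,\overline{L\uL\Phi}]$ by angular and lower-order quantities; ``reduces to a zeroth-order slab integral'' skips this step, which is where the required $r$-decay is recovered.
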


\begin{proof} Recall the finite-in-time estimates of Proposition~\ref{prop:finite-in-time-first-order}. In light of the properties of $\xi$, a glance at  \eqref{eq:inhom-commutator} shows that the only potentially problematic terms are due to the $\uL\xi L\Phi$ and $L\xi\uL\Phi$ terms in the inhomogeneity. We sketch the argument to obtain \eqref{eq:T-Killing-energy-errors-wave-infty} and \eqref{eq:K-Killing-energy-errors-wave-hor}; the remaining estimates either follow in a similar fashion or are the result of simply applications of Cauchy--Schwarz.

As $r\to \infty$, the term $L\xi\uL\Phi$ can be treated with Cauchy--Schwarz and it is the  term $\uL\xi L\Phi$ requires extra work due to insufficient $r$-decay. Using $2T=L+\uL-2\frac{a}{r^2+a^2}Z$, we write
\begin{align*}
\chi_R^+\uL\xi \Re[\overline{L\Phi} T(\xi\Phi)] = \frac12\chi_R^+ T\xi\uL\xi L|\Phi|^2+\frac12\chi_R^+\xi\uL\xi\Re[L\Phi\overline{\uL\Phi}]-\frac{a\chi_R^+}{r^2+a^2}\xi \uL\xi\Re[Z\Phi\overline{L\Phi}]+\frac12\chi_R^+\xi\uL\xi |L\Phi|^2\,.
\end{align*}
Here, the last two terms have enough $r$-decay to  be controlled by the initial data norm. For the first two terms, we integrate by parts in $L$, obtaining
\begin{align*}
&\frac12L\lp(\chi_R^+ T\xi\uL\xi |\Phi|^2+\chi_R^+\xi\uL\xi\Re[\Phi\overline{\uL\Phi}]\rp)\\
&\qquad-\frac12 L\chi_R^+\lp( T\xi\uL\xi |\Phi|^2+\xi\uL\xi\Re[\Phi\overline{\uL\Phi}]\rp)
-\frac12 \chi_R^+\lp( LT\xi\uL\xi |\Phi|^2+L(\xi\uL\xi)\Re[\Phi\overline{\uL\Phi}]+\xi\uL\xi\Re[\overline\Phi{L\uL\Phi}]\rp).
\end{align*}
We can now use the equation verified by $\Phi$ and integrate by parts to replace $\Re[\overline\Phi{L\uL\Phi}]$ with at most quadratic terms in first angular derivatives; it is easy to check that all these terms have suitable $r$-decay. Thus, the last line eventually leads to terms which are either compactly supported in $r$ or have sufficient $r$-decay, so it can be controlled by the initial data norm.  The first line leads to a contribution along $\mc I^+$, but after applying the finite-in-time energy estimates of Proposition~\ref{prop:finite-in-time-first-order}, only the zeroth order term cannot must remain on the right hand side of the estimates.

As $r\to r_+$, the problematic term is due to $L\xi\uL\Phi$: similarly to before, we have 
\begin{align*}
\chi_R^-L\xi \Re[\overline{\uL\Phi} K(\xi\Phi)] = \frac12\chi_R^- K\xi L\xi \uL|\Phi|^2+\frac12\chi_R^-\xi L\xi\lp\{\Re[L\Phi\overline{\uL\Phi}]+\frac{2\omega_+(r^2-r_+^2)}{r^2+a^2}\Re[Z\Phi\overline{\uL\Phi}]+ |\uL\Phi|^2\rp\}\,.
\end{align*}
Here, the last two terms have enough decay in $r-r_+$ to be controlled by the degenerate initial data norm in the full range $|a|\leq M$. For the first two terms, we integrate by parts in $\uL$, obtaining
\begin{align*}
&\frac12\uL\lp(\chi_R^- K\xi L\xi |\Phi|^2+\chi_R^-\xi L\xi\Re[L\Phi\overline{\Phi}] \rp)\\
&\qquad-\frac12\uL\chi_R^-\lp(K\xi L\xi |\Phi|^2+\xi L\xi\Re[L\Phi\overline{\Phi}] \rp)-
\frac12 \chi_R^-\lp( LK\xi L\xi |\Phi|^2+\uL(\xi L\xi)\Re[\Phi\overline{L\Phi}]+\xi L\xi\Re[\overline\Phi{\uL L\Phi}]\rp).
\end{align*}
By the same procedure as in the $r\to \infty$ case, we see that the first line leads to a contribution along $\mc H^+$ (but note again that the term proportional $|L\Phi|^2$ can be controlled by initial data) whereas the second line eventually yields terms with enough $(r-r_+)$-decay to be controlled by the degenerate initial data norm. 
\end{proof}

We are now ready to estimate the errors introduced by our choices of separated currents with full generality:

\begin{lemma}[Current errors from hyperboloidal cutoffs] \label{lemma:normal-current-errors-nonpeeling} Fix $s\in\{0,\pm 1,\pm 2\}$, $M>0$, $a_0\in[0,M)$ and $k_0\in\{0,\dots,|s|\}$ as in Lemma~\ref{lemma:weird-inverses}. Then, the following estimates hold for all $|a|\leq a_0$. If $k_0=|s|$, then 
\begin{align*}
&\sum_{k=0}^{|s|}\int_{-\infty}^\infty\sum_{ml}\int_{-\infty}^\infty\smlk{\mathfrak G}{s}{k}\cdot(f_k,h_k,y_k,\chi)\cdot \lp(\smlk{\uppsi}{s}{k},\lp(\smlk{\uppsi}{s}{k}\rp)'\rp) dr^*d\omega \\
&\quad \leq B\varepsilon \int_{-\infty}^\infty\sum_{ml} \int_{-\infty}^\infty w\lp(\sum_{k=0}^{|s|}|(\smlk{\uppsi}{s}{k})'|^2+\lp(\omega^2+\frac{|\Lambda|+1}{r}\rp)\sum_{k=0}^{|s|-1}|\smlk{\uppsi}{s}{k}|^2\rp)dr^*d\omega \\
&\quad \qquad + B\varepsilon \int_{-\infty}^\infty\sum_{ml} \int_{-\infty}^\infty w\lp[r^{-1}+\lp(\omega^2+|\Lambda|r^{-1}\rp)\frac{(r-r_{\rm trap})^2}{r^2}\rp]|\sml{\Psi}{s}|^2dr^*d\omega \\
&\qquad\quad + B\varepsilon^{-1}\sum_{i=0}^{N-1}\lp\{\sum_{j=0}^{|s|}\mathbb{E}[\swei{\tilde\upphi}{s}_j](\tau_{2i})+\int_{\mc H^+_{(\tau_{2i},\tau_{2i+1})}}|\swei{\Phi}{s}|^2drd\sigma+\int_{\mc I^+_{(\tau_{2i},\tau_{2i+1})}}|\swei{\Phi}{s}|^2drd\sigma\rp\}\,,\numberthis\label{eq:current-errors-top}
\end{align*}
with the following caveats: if there is any high frequency regime where we take $\chi_2\equiv 1$, then $\mathbb{E}$ must be replaced by $\overline{\mathbb{E}}$ and we can suppress the integral over $\mc H^+$; if there is any high frequency regime where we take $\chi_1\equiv 1$, then $\mathbb{E}$ must be replaced by $\mathbb{E}_2$ and we can suppress the integral over $\mc I^+$.

On the other hand, if $k_0<|s|$, then 
\begin{align*}
&\sum_{k=0}^{k_0}\int_{-\infty}^\infty\sum_{ml}\int_{-\infty}^\infty(\omega^2+m^2+1)\lp(\smlk{\mathfrak G}{s}{k}+\smlk{\mathfrak g}{s}{k}\rp)\cdot(f_k,h_k,y_k,\chi)\cdot \lp(\smlk{\uppsi}{s}{k},\lp(\smlk{\uppsi}{s}{k}\rp)'\rp) dr^*d\omega \\
&\qquad +\sum_{k=0}^{k_0}\int_{-\infty}^\infty\sum_{ml}\int_{-\infty}^\infty\lp(\smlk{\mathfrak G}{s}{k}+\smlk{\mathfrak g}{s}{k}\rp)'\cdot(f_k,h_k,y_k,\chi)\cdot \lp(\lp(\smlk{\uppsi}{s}{k}\rp)',\lp(\smlk{\uppsi}{s}{k}\rp)''\rp) dr^*d\omega \\
&\quad \leq B(R^*)\varepsilon \sum_{k=0}^{k_0}\int_{-\infty}^\infty\sum_{ml} \int_{r_+}^\infty \frac{1}{r^2}\lp(|(\smlk{\uppsi}{s}{k})''|^2+(1+|\Lambda|+\omega^2)|(\smlk{\uppsi}{s}{k})'|^2\rp)dr d\omega \\
&\quad\qquad + B(R^*)\varepsilon \sum_{k=0}^{k_0}\int_{-\infty}^\infty\sum_{ml} \int_{r_+}^\infty \frac{1}{r^2}\lp(\omega^2+\frac{1+|\Lambda|}{r}\rp)|\smlk{\uppsi}{s}{k}|^2dr d\omega \\
&\qquad\quad + B(R^*)\varepsilon^{-1}\sum_{i=0}^{N-1}\lp(\sum_{j=0}^{k_0}\mathbb{E}^1[\swei{\tilde\upphi}{s}_j](\tau_{2i})+\mathbb{E}[\swei{\tilde\upphi}{s}_{k_0+1}](\tau_{2i})\rp)\,.\numberthis\label{eq:current-errors-bottom}
\end{align*}
\end{lemma}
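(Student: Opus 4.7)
The plan is to use Plancherel (Lemma~\ref{lemma:Plancherel}) to rewrite the frequency-space pairings in \eqref{eq:current-errors-top}--\eqref{eq:current-errors-bottom} as spacetime integrals of the physical-space cutoff inhomogeneities $\swei{\mathfrak H}{s}_k$, $\swei{\mathfrak h}{s}_{k_0}$ (constructed in Section~\ref{sec:weird-cutoff-system}) paired against $\swei{\upphi}{s}_{k,\cutt}$ and its $T$, $Z$, $\p_{r^*}$ derivatives, with multipliers given by the physical analogues of $(f_k,h_k,y_k,\chi)$. All subsequent work is carried out in physical space, where Lemmas~\ref{lemma:weird-cutoffs-diff-fake-real} and \ref{lemma:weird-cutoffs-inhoms-rough-estimates} directly control $\swei{\mathfrak H}{s}_k$ and $\xi\swei{\upphi}{s}_k-\swei{\upphi}{s}_{k,\cutt}$ by initial data energies together with $\int_{\mc H^+_{(\tau_{2i},\tau_{2i+1})}}|\Phi|^2$ and $\int_{\mc I^+_{(\tau_{2i},\tau_{2i+1})}}|\Phi|^2$ fluxes.

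Most contributions are handled by a plain $\varepsilon$/$\varepsilon^{-1}$ Cauchy--Schwarz: one copy is absorbed into the bulk norm on the left-hand side, and the other, an $L^2$ norm of $w^{-1}\swei{\mathfrak H}{s}_k$ or $w^{-1}\swei{\mathfrak h}{s}_{k_0}$, is controlled by Lemma~\ref{lemma:weird-cutoffs-inhoms-rough-estimates}. This disposes of every term in \eqref{eq:current-errors-bottom} since the lower-level inhomogeneity $\swei{\mathfrak H}{s}_k$ for $k<|s|$ enjoys the improved decay \eqref{eq:weird-cutoffs-inhoms-rough-estimates-lower}; the $R^*$-dependence simply comes from localising the multipliers $(f_k,h_k,y_k,\chi)$ to a finite $r^*$ window and invoking \eqref{eq:weird-cutoffs-inhoms-rough-estimates-upper-improved}. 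For \eqref{eq:current-errors-top}, the same argument takes care of all pairings involving $\swei{\mathfrak H}{s}_k$ with $k<|s|$, and also of the pieces of $\swei{\mathfrak H}{s}_{|s|}$ that either come from $T\xi$, $Z\xi$, the zeroth-order part of $[\mathfrak R,\xi]\Phi$ in \eqref{eq:inhom-commutator}, or from the coupling sum $\sum_k w(ac^{\mathrm{id}}+ac^Z Z)(\xi\upphi_k-\upphi_{k,\cutt})$: all these carry either sufficient $r$-decay or are supported in a bounded $|r^*|$ region.

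The main obstacle is the pairing of $\swei{\mathfrak H}{s}_{|s|}$ against $T(\xi\Phi)$, $K(\xi\Phi)$ and $(\xi\Phi)'$, where the $L\xi\cdot\uL\Phi$ and $\uL\xi\cdot L\Phi$ parts of $[\mathfrak R,\xi]\Phi$ lack enough $r$-decay (at $r=\infty$ and $r=r_+$ respectively) for direct Cauchy--Schwarz. These are exactly the problematic terms treated in Lemma~\ref{lemma:current-errors-wave}, and I would import its strategy wholesale: decompose $T,K,\p_{r^*}$ into $L,\uL,Z$ combinations and, after localising via $\chi_R^\pm$, integrate by parts in $L$ (near $\mc I^+$) or $\uL$ (near $\mc H^+$) to convert the bulk into a boundary flux on $\mc I^+_{(\tau_{2i},\tau_{2i+1})}$ or $\mc H^+_{(\tau_{2i},\tau_{2i+1})}$ plus bulk terms with better $r$-weights. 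The former give rise to the $\int_{\mc H^+}|\Phi|^2$ and $\int_{\mc I^+}|\Phi|^2$ terms on the right-hand side of \eqref{eq:current-errors-top}, which can be upgraded to $\overline{\mathbb{E}}$ or $\mathbb{E}_2$ initial data norms precisely when $\chi_2\equiv 1$ or $\chi_1\equiv 1$ respectively, since then the corresponding $K$- or $T$-energy sign propagates to $\Sigma_0$. The latter are absorbed into $\varepsilon$ times the left-hand side.

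The remaining contribution is the Teukolsky--Starobinsky current proportional to $E_W\chi_3(\omega-m\upomega_+)$ in \eqref{eq:ODE-estimates-inhomogeneity-terms}. Here I would use the radial Teukolsky--Starobinsky identities of Proposition~\ref{prop:TS-radial-constant-identities}: the operator $\bigl(\tfrac{r^2+a^2}{\Delta}\uL\bigr)^{2|s|}$ applied to $(r^2+a^2)^{|s|-1/2}\smlk{\uppsi}{s}{0}$ produces, up to factors of $c_k$ and $w$, the top-level transformed variable of the opposite spin, of magnitude comparable to $|\Phi|$ plus lower-order corrections. In physical space this becomes a spacetime pairing of $w^{-1}\swei{\mathfrak H}{s}_0$ against such a $\Phi$-like quantity localised by $\chi_3$; the pairing closes via $\varepsilon$-Cauchy--Schwarz against the left-hand side bulk (noting that $\chi_3$ is supported away from $r_{\rm trap}$ so the $(r-r_{\rm trap})^2$ degeneracy is harmless) and against the $L^2$ control on $w^{-1}\swei{\mathfrak H}{s}_0$ from Lemma~\ref{lemma:weird-cutoffs-inhoms-rough-estimates}, yielding the remaining terms on the right-hand side of \eqref{eq:current-errors-top}.
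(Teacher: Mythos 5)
Your overall architecture matches the paper: pass to physical space via Plancherel (the paper's formula \eqref{eq:inhomogeneity-terms-Plancherel}), use Cauchy--Schwarz with Lemmas~\ref{lemma:weird-cutoffs-diff-fake-real} and \ref{lemma:weird-cutoffs-inhoms-rough-estimates} for the bulk of the terms, and import Lemma~\ref{lemma:current-errors-wave} for the $T$, $K$, $\partial_{r^*}$ pairings against $\swei{\mathfrak H}{s}_{|s|}$ near $\mathcal{H}^+$ and $\mathcal{I}^+$. That portion of your sketch, covering the virial and Killing energy current errors and the lower-level inhomogeneities, is in essence what the paper does.

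The gap is in your treatment of the Teukolsky--Starobinsky current. You propose to read $\bigl(\tfrac{r^2+a^2}{\Delta}\uL\bigr)^{2|s|}\bigl((r^2+a^2)^{|s|-1/2}K\upphi_{0,\cutt}\bigr)$ off Proposition~\ref{prop:TS-radial-constant-identities} as a ``$\Phi$-like quantity'' and then Cauchy--Schwarz against $w^{-1}\swei{\mathfrak H}{s}_0$. This does not close, for two reasons. First, $\upphi_{0,\cutt}$ solves an \emph{inhomogeneous} equation, so the TS operator produces the \emph{level-zero} spin-$(-s)$ transformed variable of the inhomogeneous system --- which, as a function, is still $2|s|$ derivatives of $\upphi_{0,\cutt}$ (i.e.\ $|s|$ derivatives beyond $K\Phi_{\cutt}$), and which is not controlled by the left-hand side bulk (the LHS only sees spin-$s$ quantities $\smlk{\uppsi}{s}{k}$, $k\le|s|$, with one radial derivative). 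Second, the companion TS term places all $2|s|$ $\mathcal L$-derivatives on $w^{-1}\swei{\mathfrak H}{s}_0$, whereas the construction of Section~\ref{sec:weird-cutoff-system} only supplies $\swei{\mathfrak H}{s}_k$ for $k\le|s|$; a direct Cauchy--Schwarz here is not even well posed. The missing idea is the identity \eqref{eq:TS-errors-intermediate}: one must integrate by parts $|s|$ times in the $\mathcal L$ direction to distribute the $2|s|$ derivatives \emph{evenly}, so that both factors land at the controllable levels $\swei{\mathfrak H}{s}_{|s|}$ and $K\Phi_{\cutt}$ (plus polynomials $p_{s,k,j}$ acting on lower levels). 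These $|s|$ integrations by parts generate boundary terms at $\Sigma_\infty$, $\mathcal H^+$, $\mathcal I^+$ --- controlled only via Lemmas~\ref{lemma:weird-cutoffs-diff-fake-real} and \ref{lemma:weird-cutoffs-inhoms-rough-estimates}, and finite at this level of regularity only for $|s|\le 2$ --- and bulk terms where $\mathcal L$ hits $\chi_3$, handled by Lemma~\ref{lemma:current-errors-wave} and Cauchy--Schwarz. Your sketch also has the $\chi_1,\chi_2$ caveat somewhat backwards: when $\chi_2\equiv 1$ the $T$-current replaces the $K$-current near $\mathcal H^+$, so \eqref{eq:T-Killing-energy-errors-wave-hor} (no $\mathcal H^+$ flux, but $\overline{\mathbb E}$) rather than \eqref{eq:K-Killing-energy-errors-wave-hor} is invoked, not that ``the $K$-energy sign propagates to $\Sigma_0$.''
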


\begin{proof} By Plancherel,  \eqref{eq:ODE-estimates-inhomogeneity-terms}  can be expressed in terms of the variables in our inhomogeneous transformed system from Section~\ref{sec:weird-cutoff-system}: for $k\leq k_0\leq |s|$, we have
\begin{align*}
&\int_{-\infty}^\infty\sum_{ml}\lp(\smlambdak{\mathfrak G}{s}{k},\smlambdak{\mathfrak g}{s}{k}\rp)\cdot(f_k,h_k,y_k,\chi)\cdot \lp(\smlambdak{\uppsi}{s}{k},\lp(\smlambdak{\uppsi}{s}{k}\rp)'\rp) dr^* \\
&\quad = \int_{-\infty}^\infty\lp\{2(y+\hat{y}+\tilde{y}+f)\Re\lp[\mathfrak{H}\overline{\Phi_\cutt}'\rp]dr^*+\int_{-\infty}^\infty(h+f')\Re\lp[{\mathfrak H}\overline{\Phi_\cutt}\rp]\rp\}\mathbbm{1}_{\{k=k_0=|s|\}}dr^*\\
&\quad\qquad +\int_{-\infty}^\infty E\lp(\chi_1\Re\lp[\mathfrak{H}\overline{K\Phi_{\cutt}}\rp]+\chi_2\Re\lp[\mathfrak{H}\overline{T\Phi_{\cutt}}\rp]\rp)\mathbbm{1}_{\{k=k_0=|s|\}}dr^*\\
&\quad\qquad+\int_{-\infty}^\infty\lp\{ 2(y_{(k)}+\hat{y}_{(k)})\Re\lp[\mathfrak{H}_k\overline{\upphi_{k,\cutt}}'\rp]\mathbbm{1}_{\{k\leq k_0\}}+h_{(k)}\Re\lp[\mathfrak{H}_k\overline{\upphi_{k,\cutt}}\rp]\mathbbm{1}_{\{k< k_0\}}\rp\}dr^*\\
&\quad\qquad +\mathbbm{1}_{\{k\leq k_0\}}\int_{-\infty}^\infty\sum_{ml}\mathbbm{1}_{\mc F_{\rm high}}\Re\lp[\lp(\smlambdak{\mathfrak G}{s}{k}+\smlambdak{\mathfrak g}{s}{k}\rp)\overline{\smlambdak{\uppsi}{s}{k}}\rp]dr^*\\
&\quad\qquad -\mathbbm{1}_{\{k<k_0\}}\int_{-\infty}^\infty\sum_{ml}\mathbbm{1}_{\mc F_{\rm high}^c}\lp(\chi_1(\omega-m\upomega_+)\Im\lp[\smlambdak{\mathfrak G}{s}{k}\overline{\smlambdak{\uppsi}{s}{k}}\rp]+\chi_2\omega\Im\lp[\smlambdak{\mathfrak G}{s}{k}\overline{\smlambdak{\uppsi}{s}{k}}\rp]\rp)dr^*\\
&\quad\qquad+\frac12 (-1)^{s}\sign s \int_{-\infty}^\infty (r^2+a^2)^{|s|+1/2} E_W\chi_3 \Re\lp\{K{\upphi}_{0,\cutt}w\lp(\frac{\mc L}{w(r^2+a^2)}\rp)^{2|s|}\lp((r^2+a^2)^{|s|-1/2}\frac{\overline{{\mathfrak{H}_{0}}}}{w}\rp)\rp.\\
&\quad\qquad\qquad\qquad\qquad\qquad\qquad+\lp.
{\mathfrak{H}}_{0}\lp(\frac{\mc L}{w(r^2+a^2)}\rp)^{2|s|}\lp((r^2+a^2)^{|s|-1/2}\overline{K{\upphi}_{0,\cutt}}\rp)\rp\}\mathbbm{1}_{\{k=k_0=|s|\}}dr^*
\,. \numberthis \label{eq:inhomogeneity-terms-Plancherel}
\end{align*}

\medskip
\noindent \textit{Step 1: virial and Killing energy currents errors for $k=k_0<|s|$.} In this step, we consider all but the last term in the inhomogeneity errors given in \eqref{eq:inhomogeneity-terms-Plancherel}, for the case $k=k_0<|s|$. Recall the formula \eqref{eq:weird-cutoffs-top-inhom-formula}  for the inhomogeneity $\mathfrak{H}_{k_0}$:
\begin{align*}
\mathfrak{H}_{k_0}&:=\lp([\mathfrak{R}_{k_0},\xi]-\mathbbm{1}_{\{k_0\neq |s|\}}\underline{\mc L}\xi \mc L\rp){\upphi}_{k_0} +\mathbbm{1}_{\{k_0\neq |s|\}} w \underline{\mc L}\xi {\upphi}_{k_0+1}\nonumber\\
&\qquad +\sum_{j=0}^{k_0-1}w\lp(ac_{s,k_0,j}^{\rm id}+ac_{s,k_0,j}^{Z}Z\rp)(\xi{\upphi}_j-{\upphi}_{j,\cutt})\,.
\end{align*}
For the first term, we can repeat the steps in our proof of Lemma~\ref{lemma:current-errors-wave}, now noting that since $k_0<|s|$ here, the result is strictly easier to show as $\upphi_{k_0}$ enjoys some additional $r$ decay. The second term leads to an error involving the next level transformed variable, which we estimate by $$\sum_{i=0}^{N-1}\int_{\Sigma_{\tau_{2i}}}r^{-2}|\tilde\upphi_{k_0+1}|dr d\sigma\,,$$ 
by the finite in time estimates.
The errors arising from third term are no longer supported solely in the finite time slabs between $\mc{R}_{(\tau_{2i},\tau_{2i+1})}$;  we nevertheless still apply Cauchy--Schwarz and the estimates of Lemma~\ref{lemma:weird-cutoffs-diff-fake-real} for the errors arising. For instance, for the virial current errors,
\begin{align*}
&\int_{\mc R_{(\tau_0,\tau_{N+1})}} w\Re\lp[\lp(ac_{s,k_0,k}^{\rm id}+ac_{s,k_0,k}^{Z}Z\rp)(\xi{\upphi}_j-{\upphi}_{j,\cutt})\lp(\overline{\upphi_{k_0,\cutt}'}+r^{-1}\overline{\upphi_{k_0,\cutt}}\rp)\rp]dr^*d\sigma d\tau \\
&\quad \leq B\varepsilon^{-1}\sum_{X\in\{\mathrm{id},Z\}}\int_{\mc R_{(\tau_0,\tau_{N+1})}} w|X(\xi{\upphi}_j-{\upphi}_{j,\cutt})|^2 dr^*d\sigma d\tau \\
&\quad\qquad + B\varepsilon\int_{\mc R_{(\tau_0,\tau_{N+1})}}w\lp(|\upphi_{k_0,\cutt}'|^2+r^{-1}|\upphi_{k_0,\cutt}|^2\rp) dr^*d\sigma d\tau\\
&\quad \leq B\varepsilon^{-1}\sum_{i=1}^N\sum_{j=0}^{k_0}\mathbb{E}[\tilde\upphi_j](\tau_i) + B\varepsilon\int_{\mc R_{(\tau_0,\tau_{N+1})}}w\lp(|\upphi_{k_0,\cutt}'|^2+r^{-1}|\upphi_{k_0,\cutt}|^2\rp) dr^*d\sigma d\tau\,.
\end{align*}
In the above, the terms with $\varepsilon$ are bulk integrals of the inhomogeneous transformed variables (which we will be able to absorb later). The same procedure can be employed for the Killing energy current errors, since we assume $k_0<|s|$. 

To conclude, we address the errors arising from the high frequencies:
\begin{align*}
&\int_{-\infty}^\infty\sum_{ml}\mathbbm{1}_{\mc F_{\rm high}}\Re\lp[\lp(\smlambdak{\mathfrak G}{s}{k_0}+\smlambdak{\mathfrak g}{s}{k_0}\rp)\overline{\smlambdak{\uppsi}{s}{k_0}}\rp]dr^*\\
&\leq B\varepsilon\int_{-\infty}^\infty\sum_{ml} \mathbbm{1}_{\mc F_{\rm high}} w |\smlambdak{\uppsi}{s}{k_0}|^2 dr^* + B\varepsilon^{-1} \int_{\mc R_{(\tau_0,\tau_{N+1})}} w\lp|\frac{\mathfrak H_{k_0}+\mathfrak h_{k_0}}{w}\rp|^2 dr^* d\sigma d\tau\\
&\quad \leq B\varepsilon^{-1}\sum_{i=1}^N\sum_{j=0}^{k_0}\mathbb{E}[\tilde\upphi_j](\tau_i) +B\varepsilon\int_{\mc R_{(\tau_0,\tau_{N+1})}}w|T\upphi_{k_0,\cutt}|^2 dr^*d\sigma d\tau\,,
\end{align*}
using Lemma~\ref{lemma:weird-cutoffs-inhoms-rough-estimates} and the fact that $\mathfrak{h}_{k_0}$ has no new terms compared to $\mathfrak{H}_{k_0}$ .

Finally, we note that, letting $X\in\{\p_{r^*},T,Z\}$, the entire procedure outlined in the step can be repeated if we replace $\upphi_{j,\cutt}$ by $X\upphi_{j,\cutt}$ and $\tilde\upphi_{j}$ by $X\tilde\upphi_{j}$ for any $j=0,\dots, k_0+1$, $\mathfrak H_k$ by $X\mathfrak{H}_k$ and $X\smlk{\uppsi}{s}{k}$ by $X\smlk{\uppsi}{s}{k}$.  

\medskip
\noindent \textit{Step 2: virial and Killing energy currents errors if $k_0=|s|$.} In this step, we consider all but the last term in the inhomogeneity errors given in \eqref{eq:inhomogeneity-terms-Plancherel}, but now for the case $k=k_0=|s|$. We recall from \eqref{eq:weird-cutoffs-top-inhom-formula}
\begin{align*}
\mathfrak{H}&:=[\mathfrak{R},\xi]{\Phi}+\sum_{j=0}^{|s|-1}w\lp(ac_{s,j}^{\rm id}+ac_{s,j}^{Z}Z\rp)(\xi{\upphi}_j-{\upphi}_{j,\cutt})\,.
\end{align*}
For the first term, the current errors can be dealt with using Lemma~\ref{lemma:current-errors-wave}. For the second term, we may apply Cauchy--Schwarz as in Step 1, for errors arising from the virial currents. The same strategy will work for errors associated to the Killing energy currents which are supported at sufficiently very large $|r^*|\geq R^*$: from Lemma~\ref{lemma:weird-cutoffs-diff-fake-real}, we have 
\begin{align*}
&\int_{\mc R_{(\tau_0,\tau_{N+1})}} w\Re\lp[\lp(ac_{s,j}^{\rm id}+ac_{s,j}^{Z}Z\rp)(\xi{\upphi}_j-{\upphi}_{j,\cutt})\lp(\chi_R^-\overline{K\Phi_{\cutt}}+\chi_R^+\overline{T\Phi_{\cutt}}\rp)\rp] dr^*d\sigma d\tau\\
&\quad\leq B\varepsilon\int_{\mc R_{(\tau_0,\tau_{N+1})}} w\lp(|T\Phi_{\cutt}|^2+\frac{1}{r}|\mathring{\slashed\nabla}\Phi_{\cutt}|^2\rp)dr^*d\sigma d\tau\\
&\quad\qquad+B\varepsilon^{-1}\sum_{j=0}^{|s|-1}\sum_{X\in\{\mathrm{id},Z\}}\int_{\mc R_{(\tau_0,\tau_{N+1})}} w|X(\xi{\upphi}_j-{\upphi}_{j,\cutt})|^2 dr^*d\sigma d\tau\\
&\quad\leq B\varepsilon^{-1}\sum_{i=0}^{N-1}\sum_{k=0}^{|s|}\mathbb{E}[\tilde\upphi_k](\tau_{2i})+B\varepsilon\int_{\mc R_{(\tau_0,\tau_{N+1})}} w(1-\chi_R^+-\chi_R^-)\lp(|T\Phi_{\cutt}|^2+\frac{1}{r}|\mathring{\slashed\nabla}\Phi_{\cutt}|^2\rp)dr^*d\sigma d\tau\,,
\end{align*}
where $\chi_{R}^\pm$ is a cutoff which is supported, respectively, for large $r^*\geq R^*$ and for very negative $r^*\leq -R^*$. Notice that the term with $\epsilon$ represents a bulk term which we can hope to absorb at the end, but only due to its degeneration for bounded $|r^*|$: for energy current errors localized in $|r^*|\leq R^*$,  one should pursue a different strategy. We note the identity
\begin{align*}
&w\xi\Re\lp[\lp(ac_{s,j}^{\rm id}+ac_{s,j}^{Z}Z\rp)(\xi{\upphi}_j-{\upphi}_{j,\cutt})\overline{T(\xi\Phi)}\rp]\\
&\quad = \Re\lp[\lp(ac_{s,j}^{\rm id}+ac_{s,j}^{Z}Z\rp)(\xi{\upphi}_j-{\upphi}_{j,\cutt})T(\mc L\upphi_{|s|-1,\cutt})\rp]\\
&\quad =\mc L\lp(\chi\Re\lp[\lp(ac_{s,j}^{\rm id}+ac_{s,j}^{Z}Z\rp)(\xi{\upphi}_j-{\upphi}_{j,\cutt})\overline{T\upphi_{|s|-1,\cutt}}\rp]\rp) \\
&\quad\qquad +\sign s \Re\lp[\lp(a(\chi c_{s,j}^{\rm id})'+a(\chi  c_{s,j}^{Z})'Z\rp)(\xi{\upphi}_j-{\upphi}_{j,\cutt})\overline{T\upphi_{|s|-1,\cutt}}\rp]\\
&\qquad\quad -\chi \Re\lp[\lp(ac_{s,j}^{\rm id}+ac_{s,j}^{Z}Z\rp)\lp[\mc L\xi \upphi_j+w(\xi\upphi_{j+1}-\upphi_{j+1,\cutt})\rp]\overline{T\upphi_{|s|-1,\cutt}}\rp]\,,
\end{align*}
where $\chi=\chi_R^++\chi_R^-$. Therefore, by the estimates of Lemma~\ref{lemma:weird-cutoffs-diff-fake-real} we have for $s>0$
\begin{align*}
&\sum_{j=0}^{|s|-1}\int_{\mc R_{(\tau_0,\tau_{N+1})}} w\chi \Re\lp[\lp(ac_{s,j}^{\rm id}+ac_{s,j}^{Z}Z\rp)(\xi{\upphi}_j-{\upphi}_{j,\cutt})\overline{T(\xi\Phi)}\rp]dr^*d\sigma d\tau \\
&\quad \leq B\varepsilon\int_{\mc R_{(\tau_0,\tau_{N+1})}}w |T\upphi_{|s|-1,\cutt}|^2dr^* d\sigma d\tau\\
&\quad\qquad  +B\varepsilon^{-1}\sum_{j=0}^{|s|-1}\sum_{X\in\{\mathrm{id},Z\}}\Big(\int_{\mc H^+_{(\tau_0,\tau_{N+1})}} |X(\xi{\upphi}_j-{\upphi}_{j,\cutt})|^2d\sigma d\tau+\int_{\mc R_{(\tau_0,\tau_{N+1})}}w|X(\xi{\upphi}_j-{\upphi}_{j,\cutt})|^2dr^* d\sigma \Big)\\
&\quad\qquad +B\varepsilon^{-1}\sum_{j=0}^{|s|-1}\int_{\mc R_{(\tau_0,\tau_{N+1})}}w|\xi{\upphi}_{j+1}-{\upphi}_{j+1,\cutt}|^2dr^* d\sigma \\
&\quad \leq B\varepsilon\int_{-\infty}^\infty  w |T\upphi_{|s|-1,\cutt}|^2dr^* d\sigma +
B\varepsilon^{-1}\sum_{k=0}^{|s|}\sum_{i=1}^N\mathbb{E}[\tilde\upphi_k](\tau_i)\,,
\end{align*}
for some $\varepsilon>0$ and, in the case $s<0$ we can argue similarly to obtain
\begin{align*}
&\sum_{j=0}^{|s|-1}\int_{\mc R_{(\tau_0,\tau_{N+1})}} w\Re\lp[\lp(ac_{s,j}^{\rm id}+ac_{s,j}^{Z}Z\rp)(\xi{\upphi}_j-{\upphi}_{j,\cutt})\overline{T(\xi\Phi)}\rp]dr^*d\sigma d\tau \\
&\quad \leq B\varepsilon\int_{-\infty}^\infty  w |T\upphi_{|s|-1,\cutt}|^2dr^* d\sigma  +
B\varepsilon^{-1}\sum_{k=0}^{|s|}\sum_{i=1}^N\mathbb{E}[\tilde\upphi_k](\tau_i)\,.
\end{align*}
The same holds with $T$ replaced by the vector field $K$, though we note that in the latter case we may use the support of $\chi$ to improve the $r$-weights in our final estimate.

\medskip
\noindent \textit{Step 3: Teukolsky--Starobinsky energy current errors in the case $k=k_0=|s|$.}  We now turn to the last term in \eqref{eq:inhomogeneity-terms-Plancherel}. Let us first note that 
\begin{align*}
\lp(\frac{\mc L}{w(r^2+a^2)}\rp)^{|s|}\lp((r^2+a^2)^{|s|-1/2}K{\upphi}_{0,\cutt}\rp)&=(r^2+a^2)^{-1/2}\sum_{k=0}^{|s|}p_{s,|s|,k}(r)K{\upphi}_{k,\cutt}\,,\\
\lp(\frac{\mc L}{w(r^2+a^2)}\rp)^{|s|}\Big((r^2+a^2)^{|s|-1/2}\frac{\mathfrak{H}_{0}}{w}\Big)&=\frac{(r^2+a^2)^{-1/2}}{w}\sum_{k=0}^{|s|}p_{s,|s|,k}(r)\mathfrak{H}_{k}\,.
\end{align*}
where our notation is that $p_{s,k,j}(r)$ are polynomials of degree $|s|-k$ in $r$ which can be explicitly computed. By integrating in the  $\mc L$ direction $|s|$ times, we see that
\begin{align*}
&(r^2+a^2)^{|s|+1/2}{\mathfrak{H}}_{0}\lp(\frac{\mc L}{w(r^2+a^2)}\rp)^{2|s|}\lp((r^2+a^2)^{|s|-1/2}\overline{K{\upphi}_{0,\cutt}}\rp)\\
&\quad=\sum_{j=1}^{|s|}(-1)^{j-1} \mc L\Big\{
\Big((r^2+a^2)^{|s|-j+1/2}\sum_{k=0}^{j-1}p_{s,j-1,k}(r)\frac{{\mathfrak{H}}_{k}}{w}\Big)\\
&\quad\qquad\qquad\qquad\qquad\qquad\times\lp(\frac{\mc L}{w(r^2+a^2)}\rp)^{|s|-j}\Big((r^2+a^2)^{-1/2}\sum_{k=0}^{|s|}p_{s,|s|,k}(r)\overline{K{\upphi}_{k,\cutt}}\Big)
\Big\}\\
&\quad\qquad+(-1)^{|s|}\Big(\sum_{k=0}^{|s|}p_{s,|s|,k}(r){\mathfrak{H}}_{k}\Big)\Big(\sum_{k=0}^{|s|}p_{s,|s|,k}(r)\overline{K{\upphi}_{k,\cutt}}\Big)\,.\numberthis\label{eq:TS-errors-intermediate}
\end{align*}
For the sake of concreteness, we can compute some of the polynomials involved in this formulas:
\begin{gather*}
p_{s,|s|,|s|}=1,\quad p_{s,1,0}= -(2|s|-1)r\sign s,\quad
p_{s,2,1}=- 4(|s|-1)r\sign s,\quad p_{s,2,0}=(2|s|-1)(a^2+2|s|r^2).
\end{gather*}
We may also expand the last term in \eqref{eq:TS-errors-intermediate} as 
\begin{align*}
&\Big(\sum_{k=0}^{|s|}p_{s,|s|,k}(r){\mathfrak{H}}_{k}\Big)\Big(\sum_{k=0}^{|s|}p_{s,|s|,k}(r)\overline{K{\upphi}_{k,\cutt}}\Big)\\
&\quad = 2\mathfrak{H}\overline{K\Phi_{\cutt}} +\overline{K\mc L\upphi_{|s|-1,\cutt}}\sum_{k=0}^{|s|-1}p_{s,|s|,k}\frac{\mathfrak{H}_k}{w} + \sum_{k=0}^{|s|}\sum_{j=0}^{|s|-1}p_{s,|s|,k}p_{s,|s|,j}{\mathfrak{H}}_{k}\overline{K{\upphi}_{j,\cutt}}\\
&\quad =\Big[\sum_{k=0}^{|s|}\sum_{j=0}^{|s|-1}p_{s,|s|,k}p_{s,|s|,j}{\mathfrak{H}}_{k}\overline{K{\upphi}_{j,\cutt}}- \overline{K\upphi_{|s|-1,\cutt}}\sum_{k=0}^{|s|-1} \Big(\sign s p_{s,|s|,k}'+w^{-1}p_{s,|s|,k-1}\Big)\frac{\mathfrak{H}_k}{w}\Big]\\
&\quad\qquad +\mc L\Big(\overline{K\upphi_{|s|-1,\cutt}}\sum_{k=0}^{|s|-1}p_{s,|s|,k}\frac{\mathfrak{H}_k}{w}\Big)+\mathfrak{H}\Big(\overline{K\Phi_{\cutt}}-\overline{K\upphi_{|s|-1,\cutt}}p_{s,|s|,|s|-1}\Big)\,,
\end{align*}
and the last term above as
\begin{align*}
\mathfrak{H}\overline{K\upphi_{|s|-1,\cutt}}p_{s,|s|,|s|-1} &= (\mathfrak{H}-L\xi\uL\Phi)\overline{K\upphi_{|s|-1,\cutt}}p_{s,|s|,|s|-1} +\uL\lp(L\xi \Phi \overline{K\upphi_{|s|-1,\cutt}}p_{s,|s|,|s|-1} \rp) \\
&\qquad- \uL(L\xi p_{s,|s|,|s|-1}) \Phi \overline{K\upphi_{|s|-1,\cutt}}-L\xi p_{s,|s|,|s|-1}w\Phi\overline{\Phi}\,.
\end{align*}
The above formulas also hold replacing  $({\mathfrak{H}}_{0}, \overline{K{\upphi}_{0,\cutt}})$ by $(w K{\upphi}_{0,\cutt}, \overline{{\mathfrak H}_{0}}/w)$, and replacing $K$ by $T$.  

From the above manipulations, we see that the last term in \eqref{eq:inhomogeneity-terms-Plancherel} leads to: (i) boundary terms, (ii) bulk terms with compact support in $r^*$ (when $\mc L$ hits $\chi_3$), and (iii) bulk terms supported away from $r=\infty$ (as $\chi_3$ is itself compactly supported in $r$). It is clear that the terms (i) and (ii) are only finite at our desired level of regularity if we restrict ourselves to $|s|\leq 2$. If this condition is met then we may appeal to Lemmas~\ref{lemma:weird-cutoffs-diff-fake-real} and \ref{lemma:weird-cutoffs-inhoms-rough-estimates} to control the boundary terms (i). For the bulk terms (ii) and (iii), we invoke Lemma~\ref{lemma:current-errors-wave} to deal with any errors of the form $\mathfrak{H}\overline{K\Phi_{\cutt}}$ and Cauchy--Schwarz together with the Lemmas~\ref{lemma:weird-cutoffs-diff-fake-real} and \ref{lemma:weird-cutoffs-inhoms-rough-estimates}, as in Steps 2 and 3, for all other terms.

\medskip
\noindent \textit{Step 4: virial and Killing current errors in the case $k<k_0$.}
Now, we turn to the errors arising from the inhomogeneities $\mathfrak{H}_k$ for $k<k_0$. These are supported in the entire $\mc R_{(\tau_0,\infty)}$. We treat these errors by applications of Cauchy--Schwarz, producing bulk terms in $\tilde\upphi_{k,\cutt}$ with smallness $\varepsilon$, and terms involving $\mathfrak{H}_k$ which we bound through Lemma~\ref{lemma:weird-cutoffs-inhoms-rough-estimates}.
\end{proof}

\subsubsection{Bulk terms in the intermediate frequency range}
\label{sec:bulk-terms}

Recall that for solutions to the radial ODEs of Definitions~\ref{def:transformed-system-ODE} or \ref{def:alt-transformed-system-ODE} supported in  $(\omega,m,\Lambda)\in\mc{F}_{\rm int}$, our Theorems~\ref{thm:ODE-ILED} and \ref{thm:ODE-ILED-fixed-m} fail to provide control of an integrated energy in a compact range of $r^*$. The goal of this subsection is to provide exactly this missing result:
\begin{proposition}[ILED for Teukolsky cutoff in $\mc{F}_{\rm int}$] \label{prop:ODE-ILED-compact-r-mode-stab} Fix $s\in\mathbb{Z}$, $k_0\in\{0,\dots,|s|\}$, $M>0$, and assume $(\omega,m,l)\in\mc{F}_{\rm int}$. For $k=0,\dots, |s|$, define $\uppsi_{(k),\,ml}^{[s],\,a,\,\omega}$ from $\swei{\upphi}{s}_{k,\,\cutt}$ as in Section~\ref{sec:weird-cutoff-system}. Then, for any given $R^*>0$, we have the following bulk estimates for $s\leq 0$ and, if $|a|=M$, $s>0$. If $k<|s|$, then 
\begin{align*}
&\sum_{k=0}^{|s|}\int_{\mc{F}_{\rm int}}\sum_{(m,l)\in\mc{F}_{\rm int}} (\omega^2+m^2+1)^{|s|-k-1}\int_{-R^*}^{R^*} \lp(\big|\uppsi_{(k),\,ml}^{[s],\,a,\,\omega}\big|^2+\big|\big(\uppsi_{(k),\,ml}^{[s],\,a,\,\omega}\big)'\big|^2+\big|\big(\uppsi_{(k),\,ml}^{[s],\,a,\,\omega}\big)''\big|^2\rp)dr^*\\
&\quad\leq B(R^*,\omega_{\rm high},\omega_{\rm low},\varepsilon_{\rm width}) \sum_{i=0}^{N-1} \sum_{k=0}^{|s|} {\mathbb{E}}[\swei{\tilde\upphi}{s}_{k}](\tau_{2i})\,.\numberthis\label{eq:ODE-ILED-compact-r-mode-stab}
\end{align*}
If $|a|\leq a_0\in[0,M)$ and $s=+1,+2$, then the estimate also holds, but the constant $B$ depends on $a_0$.
\end{proposition}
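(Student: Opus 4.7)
The plan is to build, for each admissible triple $(\omega,m,\Lambda) \in \mc F_{\rm int}$, a Green's function representation of the top-level variable $\smlambda{\Psi}{s}$ using the fundamental pair $u^{[s],\,a,\,\omega}_{m\Lambda,\,\mc H^+}$, $u^{[s],\,a,\,\omega}_{m\Lambda,\,\mc I^+}$ from Definition~\ref{def:psihor-psiout}, whose Wronskian $\swei{\mathfrak{W}}{s}(\omega,m,\Lambda)$ is uniformly bounded below on $\mc F_{\rm int}$ by Theorem~\ref{thm:quantitative-mode-stab-in-text}. By standard Duhamel, on any compact $r^*$-interval $[-R^*,R^*]$ one obtains
\begin{align*}
|\smlambda{\Psi}{s}|(r^*) &\leq \frac{B(a_0,\omega_{\rm high},\omega_{\rm low},\varepsilon_{\rm width})}{|\swei{\mathfrak{W}}{s}|}\int_{-\infty}^\infty \Big|\smlambda{\mathfrak G}{s} + a w\!\!\sum_{k=0}^{|s|-1}(im c^Z_{s,|s|,k}+c^{\rm id}_{s,|s|,k})\smlambdak{\uppsi}{s}{k}\Big|\, dr^*\,,
\end{align*}
with a uniform bound on both model solutions over $r^*\in[-R^*,R^*]$ depending only on $R^*$ and the parameters of $\mc F_{\rm int}$. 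Applying Cauchy--Schwarz in $r^*$ against a rapidly decaying weight (which compensates the integration over the whole line) produces a pointwise $L^2$ bound for $\smlambda{\Psi}{s}$ and, upon direct differentiation of the Duhamel formula, also for $(\smlambda{\Psi}{s})'$; the second derivative $(\smlambda{\Psi}{s})''$ is then recovered from the radial ODE~\eqref{eq:radial-ODE-Psi} since the potential $\smlambda{\mc V}{s}$ is uniformly bounded on $\mc F_{\rm int}$.

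For $k<|s|$ the same control is propagated through the explicit backwards integration \eqref{eq:uppsi-L-inversion-freq-space} of Lemma~\ref{lemma:weird-inverses}, which writes $\smlambdak{\uppsi}{s}{k}$ as a one-dimensional oscillatory integral of $\smlambdak{\uppsi}{s}{k+1}$ against $\mathfrak G_{(k+1)}/w$; bounds on $\smlambdak{\uppsi}{s}{k}$ on $[-R^*,R^*]$ follow by iterating downwards from $k=|s|$, while bounds on $(\smlambdak{\uppsi}{s}{k})'$ and $(\smlambdak{\uppsi}{s}{k})''$ come from the transport equation \eqref{eq:transformed-transport-separated} and the ODE \eqref{eq:transformed-k-separated} respectively. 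Since $(\omega^2+m^2+1)^{|s|-k-1}$ is bounded above and below by constants on $\mc F_{\rm int}$, the derivative weights on the left-hand side of \eqref{eq:ODE-ILED-compact-r-mode-stab} cost nothing. A minor bootstrap absorbs the coupling term $aw\sum_{j<|s|}(\cdots)\smlambdak{\uppsi}{s}{j}$ appearing in the top-level ODE into the left-hand side, since the Green's function estimate provides a factor of $R^*$ which is then dominated by the fixed intermediate-frequency constant.

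The last step is to sum in $(m,l)$, integrate in $\omega$ on $\mc F_{\rm int}$, and translate the resulting $L^2_\omega \ell^2_{ml}$ norm of $\smlambdak{\mathfrak G}{s}{k}$ (and, when $k_0<|s|$, of $\smlambdak{\mathfrak g}{s}{k_0}$) back to physical space via Plancherel, Lemma~\ref{lemma:Plancherel}. The resulting spacetime $L^2$ norms of $\swei{\mathfrak H}{s}_k/w$ and $\swei{\mathfrak h}{s}_{k_0}/w$ are precisely the quantities controlled by the dyadic energies $\sum_i \mathbb{E}[\swei{\tilde\upphi}{s}_k](\tau_{2i})$ through the bulk inhomogeneity estimates \eqref{eq:weird-cutoffs-inhoms-rough-estimates-lower} and \eqref{eq:weird-cutoffs-inhoms-rough-estimates-upper} of Lemma~\ref{lemma:weird-cutoffs-inhoms-rough-estimates}. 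The main obstacle is the case $s>0$: as noted in Remark~\ref{rmk:decay-inhoms-weird-cutoff}, the inhomogeneities $\swei{\mathfrak H}{s}_k$ do not inherit improved $\Delta$-decay at $r=r_+$, so the $L^2$ bound on $\mathfrak G_{(k)}/w$ at the horizon degenerates; the Wronskian bound of Theorem~\ref{thm:quantitative-mode-stab-in-text} in the $s>0$ case only provides control after factoring out $|\omega-m\upomega_+|$, and this factor becomes uncontrollable as $|a|\to M$. Away from extremality the extra $|\omega-m\upomega_+|^{-1}$ is uniformly bounded on $\mc F_{\rm int}$, which restricts the statement to $|a|\leq a_0<M$ in the positive-spin case while leaving the $s\leq 0$ argument (and the extremal $s>0$ argument treated separately) uniform in the full range.
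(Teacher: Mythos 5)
Your overall skeleton — Green's function plus Wronskian bound from Theorem~\ref{thm:quantitative-mode-stab-in-text}, then Plancherel and Lemma~\ref{lemma:weird-cutoffs-inhoms-rough-estimates} to translate back to physical space — matches the paper's strategy, but there are real gaps in the middle of the argument.

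First, the central step, ``applying Cauchy--Schwarz in $r^*$ against a rapidly decaying weight,'' does not work for $s\neq 0$, and this is precisely where the paper's Lemma~\ref{lemma:cutoff-mode-stability-integral-bounds} spends all of its effort. The model solutions $u^{[s]}_{m\Lambda,\mc I^+}$ grow like $r^{2|s|}$ as $r^*\to\infty$ (for $s<0$), while the inhomogeneity $\breve{\mathfrak G}_{(0)}$ decays only like $w\sim r^{-2}$ — its weighted $L^2$ norm that matches the right-hand side dyadic energies is $\int |\breve{\mathfrak G}_{(0)}/w|^2\, w\,dr^*$. Any weight $\rho$ you insert with Cauchy--Schwarz that tames the $r^{2|s|}$ growth of the model solution forces a weight $\rho^{-1}\gtrsim r^{4|s|+1}$ on the inhomogeneity, and then $|\breve{\mathfrak G}_{(0)}|^2\rho^{-1}\sim r^{4|s|-3}$ is not integrable for $|s|\geq 1$. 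The paper evades this by a change of variables to null coordinates $(u,v)$: because the cutoff is hyperboloidal, $\breve{\mathfrak{H}}_0$ becomes compactly supported in \emph{one} of the null variables, and then one integrates by parts in $L$ (or $\uL$) against $e^{2i\omega_\epsilon v}$ to move $r$-weights onto the higher-level transformed variables via the transport relations. Without this device the integral bound has no proof. (Related: you never introduce the regularization $\omega\mapsto\omega_\epsilon$, but without it the integrals reaching $r^*=\pm\infty$ in the Green's formula are not a priori defined.)

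Second, you set up the Green's function at the \emph{top} level $\sml{\Psi}{s}$ while invoking the Wronskian $\swei{\mathfrak W}{s}$, which is defined via the $k=0$ model pair. These are solutions of different ODEs when $a\neq 0$. Moreover, the $k=|s|$ equation \eqref{eq:radial-ODE-Psi} has an $O(a)$ coupling to $\smlambdak{\uppsi}{s}{k}$ for $k<|s|$ on the right-hand side, so your Duhamel formula requires the very quantities you are trying to bound; the ``minor bootstrap'' you propose to absorb this coupling needs a smallness factor that simply isn't present for general subextremal $|a|$ (the coupling is $aw$, not small). The paper works at $k=0$ where the ODE is genuinely uncoupled and the Wronskian is the right one, then propagates up via the transport relations.

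Third, for $s>0$ and $|a|<M$ the additional factor of $|\omega-m\upomega_+|^{-1}$ is only part of the story. For $k_0=|s|$ the backward-cutoff inhomogeneities $\mathfrak{H}_k$ do not gain $\Delta$-decay at $r=r_+$ (Remark~\ref{rmk:decay-inhoms-weird-cutoff}), and the Green's integral against $u_{\mc H^+}$ is then not absolutely convergent. The paper circumvents this with the decomposition $\uppsi_{(k)} = \uppsi_{G,(k)} + \uppsi_{R,(k)}$ of \eqref{eq:decomposition-R-G}: one constructs an explicit subtraction $\uppsi_G$ supported near $r=r_+$ so that the corrected inhomogeneity $\tilde{\mathfrak G}_{(0)}$ decays as $O((r-r_+)^{|s|-k})$, and the needed estimate then uses the radial Teukolsky--Starobinsky identities of Proposition~\ref{prop:TS-radial-constant-identities} to swap the spin $+s$ integrand against $u^{[-s]}_{\mc H^+}$, which has the good horizon behavior (Lemma~\ref{lemma:cutoff-mode-stability-integral-bounds-improved}). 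Your proposal gives no mechanism for this. Also note that the paper's Lemma~\ref{lemma:cutoff-mode-stability-integral-bounds} already handles $s>0$ at $|a|=M$ directly, where the asymptotics at the irregular singular point turn out to be more tractable — so your framing that things only break ``as $|a|\to M$'' has the easy and hard cases in $a$ somewhat reversed for positive spin.
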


To begin, it will be convenient to introduce an auxiliary inhomogeneous transformed system:
\begin{lemma}[Forwards cutoffs]\label{lemma:teukolsky-cutoff-inhom-k} Fix $s\in\mathbb{Z}$, and $k_0\in\{0,\dots,|s|\}$. Let $\swei{\upphi}{s}_k$, for $k=0,\dots,|s|$ be  solutions to the homogeneous transformed system of Definition~\ref{def:transformed-system}. Let 
$$\swei{\upphi}{s}_{0,\cut}:=\xi\swei{\upphi}{s}_{0}$$ 
Define
\begin{align}
\swei{\upphi}{s}_{k,\cut}:=\lp(w^{-1}\mc L\rp)^{k-k_0}(\xi\swei{\upphi}{s}_{k_0})=\sum_{j=k_0}^{k-1}d_j(t^*,r)\swei{\dbtilde{\upphi}}{s}_{j}\,,\label{eq:teukolsky-cutoff-hierarchy}
\end{align} 
where $\tilde c_j$ are given in \eqref{eq:rescalings} for explicit $d_j(t^*,r)$ supported in $\supp(\nabla \xi)$ and satisfying $d_j(t^*,r)=O(1)$ and  $\mc{\uL}d_j(t^*,r)=O(w)$ as $|r^*|\to \infty$. Furthermore,  $\swei{\upphi}{s}_{k,\cut}$ satisfies \eqref{eq:transformed-k} with 
\begin{align}
\begin{split}
\swei{\breve{\mathfrak{H}}}{s}_{k}&=\lp([\mathfrak{R},\xi]-\mathbbm{1}_{\{k<|s|\}}\uL\xi L\rp)\swei{\upphi}{s}_k+\mathbbm{1}_{\{k<|s|\}} w\uL\xi\swei{\upphi}{s}_{k+1}+(|s|-k)\frac{w'}{w}L\xi +wd_{s,k,k}^{\rm id}\swei{\upphi}{s}_{k}\\
&\qquad+ w\sum_{j=k_0}^{k-1} \lp( \frac{r^2+a^2}{\Delta}d_{s,k,j}^{\uL} \uL  +  \frac{1}{r^2}d_{s,k,j}^{Z} Z+d_{s,k,j}^{\rm id}\rp)\swei{\tilde\upphi}{s}_j, \quad s\leq 0\,,\\
\swei{\breve{\mathfrak{H}}}{s}_{k}&=\lp([\mathfrak{R},\xi]-\mathbbm{1}_{\{k<|s|\}}L\xi\uL\rp)\swei{\upphi}{s}_k+\mathbbm{1}_{\{k<|s|\}} w L\xi\swei{\upphi}{s}_{k+1}- (|s|-k)\frac{w'}{w}\uL\xi \swei{\upphi}{s}_k+wd_{s,k,k}^{\rm id}\swei{\upphi}{s}_{k}\\
&\qquad+ w \sum_{j=k_0}^{k-1} \lp((r^2+a^2) d_{s,k,j}^{L} L  + d_{s,k,j}^{Z} Z+ d_{s,k,j}^{\rm id}\rp)\swei{\dbtilde \upphi}{s}_j,\quad s\geq 0\,,
\end{split}\label{eq:teukolsky-cutoff-inhom-k}
\end{align}
where the first term can be given explicitly by \eqref{eq:inhom-commutator}. The functions  $d_{s,k,j}^X$ are explicit functions of $r$, $\theta$ and $\xi=\xi(t^*,r)$, which are supported in $\supp(\nabla \xi)$, and such that $d_{s,k,j}^X=O(1)$ and $\underline{\mc L}d_{s,k,j}^{\underline{\mc L}}=O(w)$ as $|r^*|\to \infty$, in the $L^2_{\mathbb{R}_t\times\mathbb{S}^2}$ sense. 

If $\xi$ is chosen so that $\xi\swei{\upphi}{s}_0$ is verify the conditions of Definitions~\ref{def:suf-integrability} and \ref{def:outgoing-bdry-phys-space} then, by  Remark~\ref{rmk:peeling-fixed-tau}, for each $k=0,\dots, |s|$,  $\swei{\upphi}{s}_{k,\cut}$  are outgoing and sufficiently integrable solutions of the transformed system of Definition~\ref{def:alt-transformed-system} with inhomogeneity $\swei{\breve{\mathfrak{H}}}{s}_k$.
\end{lemma}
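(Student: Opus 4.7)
The plan is to prove the lemma by induction on $k$, using the transport identity $w^{-1}\mc L\upphi_k=\upphi_{k+1}$ from \eqref{eq:transformed-transport} as the engine. The base case $k=k_0$ is just the definition $\upphi_{k_0,\cut}=\xi\upphi_{k_0}$. For the inductive step I would apply $w^{-1}\mc L$ to the representation at level $k-1$ and use Leibniz: every time $w^{-1}\mc L$ hits $\upphi_j$, it raises the level to $\upphi_{j+1}$, while every time it hits one of the cutoff-localized coefficients or $\xi$ itself, it produces a term supported in $\supp(\nabla\xi)$. Iterating $k-k_0$ times yields a representation of $\upphi_{k,\cut}$ as $\xi\upphi_k$ plus correction terms indexed by $j=k_0,\dots,k-1$, each of the form $d_j(t^*,r)\dbtilde\upphi_j$ for some coefficient built out of at most $k-j$ nested applications of $w^{-1}\mc L$ to $\xi$, together with the $r$-weights $\tilde c_j/\tilde c_k$. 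Uniform boundedness $d_j=O(1)$ and $\underline{\mc L}d_j=O(w)$ as $|r^*|\to\infty$ follows by inspection, using the asymptotics $w'/w=O(w)$ at both ends and the compact $\tilde t^*$-support of $\xi$.

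\textbf{Deriving $\breve{\mathfrak H}_k$.} To extract the inhomogeneity I will apply $\mathfrak R_k$ to this representation. For the leading $\xi\upphi_k$ contribution I split via the commutator identity $\mathfrak R_k(\xi\upphi_k)=[\mathfrak R_k,\xi]\upphi_k+\xi\mathfrak R_k\upphi_k$. The first piece reproduces the bracketed term on the right-hand side of \eqref{eq:teukolsky-cutoff-inhom-k}, computable explicitly from \eqref{eq:inhom-commutator}. For the second piece, since $\upphi_k$ satisfies the homogeneous \eqref{eq:transformed-k}, the $\xi\mathfrak R_k\upphi_k$ contribution recovers the coupling with $\upphi_{k+1}$ and with $\upphi_j$ for $j<k$; rewriting $\xi\upphi_{k+1}=\upphi_{k+1,\cut}-(\text{correction supported in }\supp(\nabla\xi))$ absorbs the main piece into the definitional wave equation for $\upphi_{k,\cut}$, leaving only the $\uL\xi\,L$ or $L\xi\,\uL$ boundary-type remainder and the $(|s|-k)(w'/w)\mc L\xi$ term, depending on $\sign s$. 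For each correction $d_j\dbtilde\upphi_j$, a similar calculation using that $\dbtilde\upphi_j$ satisfies the rescaled PDE \eqref{eq:transformed-k-tilde} assembles into the sum $w\sum_{j=k_0}^{k-1}(\dots)\dbtilde\upphi_j$ written in \eqref{eq:teukolsky-cutoff-inhom-k}, with the labels $X\in\{\uL,Z,\mathrm{id}\}$ versus $\{L,Z,\mathrm{id}\}$ reflecting whether $\mc L=L$ or $\uL$.

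\textbf{Main obstacle.} The hardest part will be the bookkeeping needed to assemble the inhomogeneity in the precise form \eqref{eq:teukolsky-cutoff-inhom-k}. Three subtleties complicate this: (i) commutators of $\mathfrak R_k$ with the polynomial $(w^{-1}\mc L)^{k-k_0}$ are nontrivial and produce contributions involving both the potential $U_k$ and the $\mathfrak J^{[s]}_{k,j}$ operators at various intermediate levels, which must be reorganized to match the claimed form; (ii) the rescalings $\tilde c_j$ defining $\dbtilde\upphi_j$ are essential for $s>0$, because only they satisfy $\mc L\tilde c_j=O(w)$ at infinity, and this is what keeps the coefficient $d_{s,k,j}^{\underline{\mc L}}$ within the claimed decay class; (iii) a diagonal self-coupling $wd_{s,k,k}^{\mathrm{id}}\upphi_k$ arises from the interplay between the iterated $w^{-1}\mc L$ and the zeroth-order potential in $\mathfrak R_k$, which explains the apparently off-pattern term in \eqref{eq:teukolsky-cutoff-inhom-k}.

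\textbf{Outgoing and sufficient integrability.} These properties follow from the assumed regularity of $\xi\upphi_0$ together with Remark~\ref{rmk:peeling-fixed-tau}, which furnishes pointwise $r$-decay of $\dbtilde\upphi_j$ and of its derivatives on every finite hyperboloidal slab; this transfers to $\upphi_{k,\cut}$ through the explicit representation. The outgoing condition in the sense of Definition~\ref{def:outgoing-bdry-phys-space} is then immediate, since the temporal support of $\xi$ is preserved by the local-in-time operator $w^{-1}\mc L$ and is inherited verbatim by $\breve{\mathfrak H}_k$.
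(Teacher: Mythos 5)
Your proposal is correct in spirit, but it takes a more laborious route than the paper's and the difference is instructive. You propose to derive $\swei{\breve{\mathfrak{H}}}{s}_{k}$ by applying the wave operator $\mathfrak{R}_k$ directly to the representation $\upphi_{k,\cut}=\xi\upphi_k+\sum_j d_j\dbtilde\upphi_j$, splitting off $[\mathfrak{R}_k,\xi]\upphi_k$ and then feeding the corrections through the level-$j$ wave equations. You correctly flag the resulting commutator of $\mathfrak{R}_k$ with $(w^{-1}\mc L)^{k-k_0}$ as the main bookkeeping hazard. The paper avoids that commutator entirely: it applies the wave operator \emph{once}, at the base level $k=0$, to read off the seed coefficients $\tilde d_{s,0,0}^X$ from $[\mathfrak{R}_0,\xi]\upphi_0$ (equation \eqref{eq:seed-d-function}), and then exploits the built-in transport relation \eqref{eq:transformed-transport-inhom} for the inhomogeneities, $\breve{\mathfrak{H}}_k=\mc L\bigl(w^{-1}\breve{\mathfrak{H}}_{k-1}\bigr)$, to propagate upward. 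The recursion thus reduces to computing $\mc L\bigl(w^{-1}\tilde d_{s,k-1,j}^{\underline{\mc L}}\,\underline{\mc L}\upphi_j\bigr)$, with $\mc L$ distributed by Leibniz and the transport identities used to shift levels; no application of $\mathfrak{R}_k$ for $k\geq 1$ is needed. The payoff is that the recursion formulas for $\tilde d_{s,k,j}^{\rm id}$, $\tilde d_{s,k,j}^{\Phi}$, $\tilde d_{s,k,j}^{\underline{\mc L}}$ fall out immediately, and the $O(1)$ and $\underline{\mc L}d=O(w)$ asymptotics can be read off from the seed plus the recursion — whereas in your route you would have to track the potentials $U_k$ and the coupling operators $\mathfrak{J}_{k,j}$ at all intermediate levels, only to see them cancel. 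Both approaches succeed; the transport-propagation route is the one that keeps the calculation tractable, and your obstacle (i) is precisely the price of not using it. (One small note: your inductive base is $k=k_0$, while the paper's explicit computation starts at $k=0$; the statement's constraint $\upphi_{0,\cut}=\xi\upphi_0$ only squares with the iterated-$\mc L$ formula when $k_0=0$, and indeed that is the only case used downstream, so this does not create a real discrepancy.)
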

\begin{proof} Let us introduce functions $\tilde d_{s,k,j}^X$ such that 
\begin{align*}
\breve{\mathfrak{H}}_k=\sum_{j=0}^{k}\sum_{X\in \{\underline{\mc L},\Phi,{\rm id}\}}\tilde d_{s,k,j}^X\cdot X\dbtilde \upphi_j
\end{align*}
For $k=0$, by direct inspection of \eqref{eq:transformed-R-k}, we easily compute
\begin{gather}
\begin{gathered}
\tilde d_{s,0,0}^{\rm id}=\lp(\frac{L\uL\xi}{w}-a^2\sin^2\theta TT\xi +2ias\cos\theta T\xi -\sign s\lp(|s|\frac{w'}{w} -\frac{\tilde c_0'}{\tilde c_0}\rp)\frac{\mc L \xi}{w}\rp)w\tilde c_0\,,\\
\tilde d_{s,0,1}^{\rm  id}=\lp(\underline{\mc L} \xi-a^2w\sin^2\theta T\xi\rp)w\tilde c_0\,,\quad
\tilde d_{s,0,0}^{\underline{\mc L}}=\lp(\frac{\mc L\xi}{w}-a^2\sin^2\theta T\xi\rp)w\tilde c_0\,,\quad
\tilde d_{s,0,0}^{\Phi}= -\frac{2aw \rho^2}{r^2+a^2}T\xi\tilde c_0\,.
\end{gathered}\label{eq:seed-d-function}
\end{gather}
To derive recursive formulas for the $\tilde d_{s,k,j}^X$, we notice
\begin{align*}
\mc L\lp(\frac{d_{s,k,j}^{\underline{\mc L}}}{w}\underline{\mc L}\upphi_j\rp) &= \mc L\lp(\frac{d_{s,k,j}^{\underline{\mc L}}}{w}\rp)\underline{\mc L}\upphi_j +\frac{d_{s,k,j}^{\underline{\mc L}}}{w}[\mc L,\underline{\mc L}]\upphi_j +\frac{d_{s,k,j}^{\underline{\mc L}}}{w}\underline{\mc L}(w\upphi_{j+1})\\
&=\mc L\lp(\frac{d_{s,k,j}^{\underline{\mc L}}}{w}\rp)\underline{\mc L}\upphi_j +\frac{2ar w}{r^2+a^2}\sign s\frac{d_{s,k,j}^{\underline{\mc L}}}{w}\Phi\upphi_j +d_{s,k,j}^{\underline{\mc L}}\underline{\mc L}\upphi_{j+1}+\sign s\frac{w'}{w}d_{s,k,j}^{\underline{\mc L}}\upphi_{j+1}\\
&=\mc L\lp(\frac{d_{s,k,j}^{\underline{\mc L}}}{w}\rp)\tilde c_j\underline{\mc L}\dbtilde\upphi_j+\sign s \tilde c_j'\mc L\lp(\frac{d_{s,k,j}^{\underline{\mc L}}}{w}\rp)\dbtilde\upphi_j +\frac{2ar }{r^2+a^2}\sign s d_{s,k,j}^{\underline{\mc L}}\tilde c_j\Phi\dbtilde\upphi_j \\
&\qquad+d_{s,k,j}^{\underline{\mc L}}\tilde c_{j+1}\underline{\mc L}\dbtilde\upphi_{j+1}+d_{s,k,j}^{\underline{\mc L}}\sign s \lp(\frac{\tilde c_{j+1}'}{\tilde c_{j+1}}+\frac{w'}{w}\rp)\tilde c_{j+1}\dbtilde\upphi_{j+1}\,,
\end{align*}
so that we have
\begin{align*}
\begin{dcases}
\tilde d_{s,k,j}^{ \rm id}&=\tilde d_{s,k-1,j-1}^{ \rm id}+\mc L(w^{-1}\tilde d_{s,k-1,j}^{ \rm id})+\sign s\lp(\frac{w'}{w}+\frac{c_{j}'}{c_j}\rp)d_{s,k-1,j-1}^{\underline{\mc L}}\,,\\
\tilde d_{s,k,j}^{\Phi}&=\tilde d_{s,k-1,j-1}^{\Phi}+\mc L(w^{-1}\tilde d_{s,k-1,j}^{\Phi})+\sign s\frac{2ar	}{r^2+a^2}d_{s,k-1,j}^{\underline{\mc L}}\,,\\
\tilde d_{s,k,j}^{\underline{\mc L}}&=\tilde d_{s,k-1,j-1}^{\underline{\mc L}}+\mc L(w^{-1}\tilde d_{s,k-1,j}^{\underline{\mc L}})\,.
\end{dcases}
\end{align*} 
From~\eqref{eq:seed-d-function}, it is now easy to obtain the asymptotics of the non-tilded, $d_{s,k,j}^X$, functions in the statement. 
\end{proof}

\begin{remark}[Comparison between two types of hyperboloidal cutoffs] \label{rmk:comparison-cutoff-types} Lemma~\ref{lemma:teukolsky-cutoff-inhom-k} shows that, when the cutoff $\xi$ is applied to the Teukolsky variable, weights consistent with peeling norms on the initial data will naturally emerge  as necessary to handle the inhomogeneity errors. In contrast, if the cutoff is applied to the transformed variable of level $k_0$ and the transport equations \eqref{eq:transformed-transport} are integrated to produce lower level transformed quantities, as in Lemma~\ref{lemma:weird-inverses} above, we are able to have weaker weights on the initial data, see Section~\ref{sec:current-errors}. This is precisely why we chose to work with the system of Section~\ref{sec:weird-cutoff-system} as our primary inhomogeneous system arising from hyperboloidal cutoffs.
\end{remark}

Since $\swei{\mathfrak{W}}{s}\neq 0$ in $\mc F_{\rm int}$ (see Theorem~\ref{thm:quantitative-mode-stab-in-text}),  we can, in principle, define Green's formulas for the corresponding inhomogeneous transformed system of radial ODEs in Lemma~\ref{lemma:transformed-system-ODE}. Formally, we expect
\begin{align}
\begin{split}
&\frac{1}{\swei{\mathfrak{W}}{s}}{\uppsi}^{[s],\,a,\,\omega}_{(k),\, ml,\,\mc{I}^+}(r^*)\int_{-\infty}^{r^*} \frac{(x^2+a^2)^{|s|}}{\Delta^{\frac{|s|}{2}(1+\sign s)}}\lp(\Delta^{\frac{s}{2}}{u}^{[s],\,a,\,\omega}_{ml,\,\mc{H}^+}\rp)(x^*)  \breve{\mathfrak{G}}^{[s],\,a,\,\omega}_{(0),\,ml}(x^*) dx^*\\
&\qquad +\frac{1}{\swei{\mathfrak{W}}{s}}{\uppsi}^{[s],\,a,\,\omega}_{(k),\,ml,\,\mc{H}^+}(r^*)\int_{r^*}^{\infty} \frac{(x^2+a^2)^{|s|}}{\Delta^{\frac{|s|}{2}(1+\sign s)}}\lp(\Delta^{\frac{s}{2}}{u}^{[s],\,a,\,\omega}_{ml,\,\mc{I}^+}\rp)(x^*) \breve{\mathfrak{G}}^{[s],\,a,\,\omega}_{(0),\,ml}(x^*) dx^*\,,
\end{split}\label{eq:cutoff-mode-stability-formal-Greens}
\end{align}
to yield a solution to \eqref{eq:transformed-k-separated} with outgoing boundary conditions. However, the integrals involved in \eqref{eq:cutoff-mode-stability-formal-Greens} are certainly not \textit{a priori} well-defined. We will find it convenient to introduce a regularization in the time frequency $\omega$: we let\footnote{Strictly speaking, we have not yet discussed how the radial ODE and its boundary conditions change if we allow $\omega\in \mathbb{C}$. However, the only point where the extension is nontrivial to the complex plane is nontrivial is our discussion of spin-weighted spheroidal harmonics: if the spheroidal parameter $\nu$ is complex then in general we cannot expect to identify $\Lambda_{ml}^{[s],\nu}$ without ambiguity in the labeling $l$. Nevertheless, if $\Im\nu$ is sufficiently small, it is easy to see that we can; see for instance the thesis \cite{TdC-thesis}.} $\omega_\epsilon:=\omega+i\epsilon$ for some $\epsilon>0$.  By adapting the arguments of the case $s=0$ in \cite[Theorem 1.9]{Shlapentokh-Rothman2015}, we obtain:
\begin{lemma}[Integral bounds  for Green's formula I]  \label{lemma:cutoff-mode-stability-integral-bounds} Fix $s\in\mathbb{Z}$ and assume  $(\omega,m,l)\in\mc{F}_{\rm int}$. Then, for any fixed $R^*<\infty$, we have the following uniform-in-$\epsilon$ estimates: for any $|a|\leq M$, we have
\begin{align}
\begin{split}
&\int_{\mc{F}_{\rm int}}\sum_{(m,l)\in\mc{F}_{\rm int}}\lp|\int_{-R^*}^{\infty} \frac{(r^2+a^2)^{|s|}}{\Delta^{\frac{|s|}{2}(1+\sign s)}}\lp(\Delta^{\frac{s}{2}}{u}^{[s],\,a,\,\omega_\epsilon}_{ml,\,\mc{I}^+}\rp)(r^*)  \breve{\mathfrak{G}}^{[s],\,a,\,\omega_\epsilon}_{(0),\,ml}(r^*) dr^*\rp|^2d\omega \\
&\quad\leq B(R^*,\omega_{\rm high},\omega_{\rm low},\varepsilon_{\rm width}) \sum_{i=0}^{N-1} \sum_{k=0}^{|s|}  {\mathbb{E}}[\swei{\tilde \upphi}{s}_{k}](\tau_{2i}) \,;
\end{split} \label{eq:cutoff-mode-stability-integral-bounds-I+}
\end{align}
and for either $a=M$ or, if $s\leq 0$, for $|a|< M$, we have
\begin{align}
&\int_{\mc{F}_{\rm int}}\sum_{(m,l)\in\mc{F}_{\rm int}}\lp(\mathbbm{1}_{\{s\leq 0\}}+
|\omega-m\omega_+|^{4|s|}\rp)\lp|\int_{-\infty}^{R^*}  \frac{(r^2+a^2)^{|s|}}{\Delta^{\frac{|s|}{2}(1+\sign s)}}\lp(\Delta^{\frac{s}{2}}{u}^{[s],\,a,\,\omega_\epsilon}_{ml,\,\mc{H}^+}\rp)(r^*)  \breve{\mathfrak{G}}^{[s],\, a,\,\omega_\epsilon}_{(0),\,ml}(r^*) dr^* \rp|^2d\omega\nonumber\\
&\quad\leq  B(R^*,\omega_{\rm high},\omega_{\rm low},\varepsilon_{\rm width}) \sum_{i=0}^{N-1} \sum_{k=0}^{|s|}  {\mathbb{E}}[\swei{\tilde\upphi}{s}_{k}](\tau_{2i}) \,. \label{eq:cutoff-mode-stability-integral-bounds-H+}
\end{align}  
\end{lemma}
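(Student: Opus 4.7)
The plan is to adapt the argument used by Shlapentokh-Rothman in \cite{Shlapentokh-Rothman2015} for the case $s=0$, with new ingredients to handle the non-trivial spin weights. The essential structure is to interpret the inner integral over $r^*$ as a duality pairing, apply Plancherel in $\omega$, and reduce everything to a physical-space bound coming from the finite-in-time energy estimates of Proposition~\ref{prop:finite-in-time-first-order}.

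First I would set up the physical-space object carefully. By Lemma~\ref{lemma:teukolsky-cutoff-inhom-k}, the inhomogeneity $\breve{\mathfrak{H}}^{[s]}_{0}$ is compactly supported in time inside $\uplus_{i=0}^{N-1}\mc{R}_{(\tau_{2i},\tau_{2i+1})}$, so that the Fourier transform $\breve{\mathfrak{G}}^{[s],a,\omega_\epsilon}_{(0),ml}(r^*)$ makes sense for $\omega_\epsilon=\omega+i\epsilon$ with any $\epsilon>0$, and in fact is analytic in $\omega_\epsilon$ with a Plancherel identity that converges uniformly in $\epsilon$ to the one with $\epsilon=0$. Importantly, since $\breve{\mathfrak{H}}^{[s]}_0$ is an explicit expression, \eqref{eq:teukolsky-cutoff-inhom-k}, involving $\swei{\upphi}{s}_k$ multiplied by derivatives of $\xi$ and supported in $\supp(\nabla\xi)$, the finite-in-time estimates yield
\[
\int_{\mathbb{S}^2}\!\!\int_{-\infty}^{\infty}\!\!\int_{-\infty}^{\infty}\! w^{-1}|\breve{\mathfrak H}^{[s]}_0|^2\, dr^*\, dt\, d\sigma \;\leq\; B\sum_{i=0}^{N-1}\sum_{k=0}^{|s|}\mathbb{E}[\swei{\tilde\upphi}{s}_{k}](\tau_{2i}),
\]
and analogous control of $(r^2+a^2)^{2|s|}|\breve{\mathfrak H}^{[s]}_0|^2$ weighted integrals near $r_+$. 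This is the only physical-space input needed.

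Next, I would establish the pointwise bounds on the model solutions that feed into the Green's formula. In $\mc F_{\rm int}$, the frequency triple $(\omega,m,\Lambda)$ lies in a compact set (excluding $\omega=0$, and when $|a|=M$ also excluding $\omega=m\upomega_+$), so the ODE~\eqref{eq:radial-ODE-u} has coefficients bounded and bounded away from singular behavior uniformly. Standard ODE theory then gives, for each fixed $\omega_\epsilon$ with $\epsilon\in[0,1]$,
\[
\sup_{-R^*\leq r^*\leq R^*}\Big|\tfrac{(r^2+a^2)^{|s|}}{\Delta^{|s|(1+\sign s)/2}}\Delta^{s/2}u^{[s],a,\omega_\epsilon}_{ml,\mc I^+}(r^*)\Big|\leq B(R^*,\omega_{\rm high},\omega_{\rm low},\varepsilon_{\rm width}),
\]
and similarly for $u^{[s],a,\omega_\epsilon}_{ml,\mc H^+}$, together with the asymptotic decay $u^{[s],a,\omega_\epsilon}_{ml,\mc I^+}\sim e^{i\omega_\epsilon r^*}$ as $r^*\to\infty$ and $u^{[s],a,\omega_\epsilon}_{ml,\mc H^+}\sim e^{-i(\omega_\epsilon-m\upomega_+)r^*}$ as $r^*\to-\infty$. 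The $\epsilon$-regularization thus converts the potentially oscillatory factors into honest exponential decay at one end.

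The heart of the argument is then the $L^2_\omega$ bound via Plancherel. Write the right-hand side of \eqref{eq:cutoff-mode-stability-integral-bounds-I+} as
\[
\int_{\mc F_{\rm int}}\sum_{ml}\Big|\int_{-R^*}^{\infty}F(r^*,\omega)\,\breve{\mathfrak G}^{[s],a,\omega_\epsilon}_{(0),ml}(r^*)\,dr^*\Big|^2 d\omega,
\]
where $F(r^*,\omega)$ denotes the weighted $u^{[s]}_{\mc I^+}$ prefactor. The plan is to split $r^*\in[-R^*,R^*]$ from $r^*\geq R^*$; on the bounded piece we use the uniform bound above, together with Cauchy--Schwarz and Plancherel in $\omega$ (after orthogonality in $(m,l)$ provided by the spin-weighted spheroidal harmonics), to control it by $\int\int w^{-1}|\breve{\mathfrak H}^{[s]}_0|^2\,dr^*dt\,d\sigma$ localized to $r^*\in[-R^*,R^*]$ integrated over a compact time slab. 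For $r^*\geq R^*$, after factoring out $e^{i\omega_\epsilon r^*}$ and absorbing the polynomial prefactor, the inner integral becomes essentially a translate of the Fourier transform of $\breve{\mathfrak H}^{[s]}_0/w$, and Plancherel directly gives the desired physical-space bound. The symmetric argument handles \eqref{eq:cutoff-mode-stability-integral-bounds-H+}, using decay as $r^*\to-\infty$ in place of $r^*\to\infty$; the factor $|\omega-m\upomega_+|^{4|s|}$ arises on the left-hand side to compensate for the growth of $u^{[s],a,\omega_\epsilon}_{ml,\mc H^+}$ with $k$ for $s>0$ via the boundary-term relations of Lemma~\ref{lemma:bdry-term-relations}, which provide the correct weight to close the estimate precisely in the cases indicated.

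The main obstacle I anticipate is keeping the constants uniform in $\epsilon$ as $\epsilon\downarrow 0$, and in particular ensuring that the model solutions $u^{[s],a,\omega_\epsilon}_{ml,\mc I^+/\mc H^+}$ depend continuously on $\omega_\epsilon$ together with the spheroidal eigenvalue $\bm\Lambda^{[s],a\omega_\epsilon}_{ml}$ on the compact set $\mc F_{\rm int}$, especially in the extremal case $|a|=M$ where the $s>0$ radial potential becomes degenerate near $r=M$ as $\omega\to m\upomega_+$. This is handled by restricting to $\mc F_{\rm int}$ (which avoids $\omega=m\upomega_+$) and using the quantitative mode stability Theorem~\ref{thm:quantitative-mode-stab-in-text} to guarantee that all Wronskian-based bounds are uniformly bounded in the relevant frequency regime.
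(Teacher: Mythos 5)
Your overall framing is right: the proof is a spin-weighted adaptation of the Green's-function argument from \cite{Shlapentokh-Rothman2015}, relying on Plancherel, the support properties of $\breve{\mathfrak{H}}^{[s]}_0$, and the finite-in-time energy estimates. The setup (physical-space bound on the inhomogeneity, uniform bounds on the model solutions in $\mc F_{\rm int}$, $\epsilon$-regularization for integrability) is all correct and appears in the paper. But there is a genuine gap in the step that carries all the technical weight for $s\neq 0$.

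The problem is in the sentence ``after factoring out $e^{i\omega_\epsilon r^*}$ and absorbing the polynomial prefactor, the inner integral becomes essentially a translate of the Fourier transform of $\breve{\mathfrak H}^{[s]}_0/w$, and Plancherel directly gives the desired physical-space bound.'' You cannot absorb the polynomial prefactor. As $r^*\to\infty$, the weighted model solution $\frac{(r^2+a^2)^{|s|}}{\Delta^{|s|(1+\sign s)/2}}\Delta^{s/2}u^{[s]}_{\mc I^+}$ behaves like $e^{i\omega r^*}\cdot c(r^*)$ where $c(r^*)$ is a polynomial of degree $2|s|$ when $s<0$. Plancherel would then require control of $\int\!\int (r^*)^{4|s|}w^{-1}|\breve{\mathfrak H}^{[s]}_0|^2$, which the finite-in-time estimates do not supply — they only give $\int\!\int w^{-1}|\breve{\mathfrak H}^{[s]}_0|^2$. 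The same obstruction appears at $r\to r_+$ with $(r-r_+)^{-2|s|}$ factors for $s>0$. The $s=0$ case evades this because $c$ is a constant, which is why the argument of \cite{Shlapentokh-Rothman2015} does not carry over unchanged; you identify ``non-trivial spin weights'' as the new difficulty but do not supply the mechanism that resolves it.

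What the paper actually does is: (i) change from $(t,r^*)$ variables to null coordinates $(u,v)$ with $v=\tfrac12(t+r^*)$, $u=\tfrac12(t-r^*)$, exploiting that $\supp(\nabla\xi)$ is compactly supported in one null direction (in $u$ for the $\mc I^+$ case); and (ii) integrate by parts $|s|$ times in the outgoing null direction, using $L(e^{2i\omega_\epsilon v})=4i\omega_\epsilon$ (valid since $0\notin\mc F_{\rm int}$) together with the transport relation \eqref{eq:transformed-transport-inhom} in the form $\mc L(\breve{\mathfrak H}_k/w)=\breve{\mathfrak H}_{k+1}$. Each integration by parts either shifts the inhomogeneity up one level of the transport hierarchy (gaining $r$-decay), or hits a cutoff (producing compact $r^*$ support), or differentiates the weight (lowering the polynomial degree). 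After $|s|$ iterations the polynomial prefactor is eliminated and Cauchy--Schwarz plus Plancherel close the estimate. Incidentally, the factor $|\omega-m\upomega_+|^{4|s|}$ on the left of \eqref{eq:cutoff-mode-stability-integral-bounds-H+} arises directly from $\uL(e^{2i\omega_\epsilon u+2im\upomega_+r^*})=4i(\omega_\epsilon-m\upomega_+)$ in this integration-by-parts, not from Lemma~\ref{lemma:bdry-term-relations} as you guess. Without the $(u,v)$ change of variables and iterated integration by parts, the estimate for $s\neq 0$ does not close.
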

\begin{proof} To ease the notation, we drop all the frequency parameters from the sub and superscripts, leaving only a superscript $\epsilon$ to indicate, for frequency space functions, that $\omega$ has been replaced by $\omega_\epsilon$  and, for physical space functions, that they have been multiplied by the regularization factor $e^{-\epsilon t}$.

We examine the integrals reaching $\pm \infty$ separately. \textit{A priori}, the integrands have too strong $r$-weights to allow for a direct application of Cauchy--Schwarz. We combine two strategies to circumvent this issue: (i) invoking  the definition of $\swei{\breve{\mathfrak{G}}}{s}_{(0)}$ and change from $(t,r^*)$ integration to integration over coordinates $(u,v)$ given by $v=\frac12(t+r^*)$ and $u=\frac12 (t-r^*)$ so as to uncover hidden cancellations, as in \cite[Lemma 3.3]{Shlapentokh-Rothman2015}, and (ii) integrate by parts in a way that improves the $r$-weights.

\medskip
\noindent\textit{Step 1: estimate \eqref{eq:cutoff-mode-stability-integral-bounds-I+}}. The term which we seek to estimate has the form
\begin{align}
&\lp|\int_{-R^*}^{\infty} \frac{(r^2+a^2)^{|s|}}{\Delta^{\frac{|s|}{2}(1+\sign s)}}\lp(\Delta^{\frac{s}{2}}\sweie{u}{s}_{\mc{I}^+}\rp)  \sweie{\breve{\mathfrak{G}}}{s}_{(0)} dr^*\rp|\leq \lp|\int_{-R^*}^{\infty} e^{i\omega r^*}\!c(r^*)\sweie{\breve{\mathfrak{G}}}{s}_{(0)} dr^*\rp|
+\lp|\int_{-R^*}^{\infty} \frac{e^{i\omega r^*}}{r^*} c_{\infty}\sweie{\breve{\mathfrak{G}}}{s}_{(0)} dr^*\rp|\,, \label{eq:cutoff-mode-stability-intermediate-1}
\end{align} 
for some $c(r^*)=\sum_{j=0}^{2|s|}c_j (r^*)^{2|s|-j}$, where the $c_j$, $j=0,\dots 2|s|$, and $c_\infty$ arise from the asymptotic expansion of $\sweie{u}{s}_{\mc{I}^+}$ as $r^*\to \infty$ (see \cite[Section 4.3.2]{SRTdC2020}) and depend on $R^*$, the frequency parameters and the spin $s$. For instance, $c_j=0$ for $j\leq 2|s|-1$ if $s\geq 0$ but this is not necessarily the case if $s<0$. From \eqref{eq:teukolsky-cutoff-inhom-k} with $k=0$, it is clear that  the second term in \eqref{eq:cutoff-mode-stability-intermediate-1}, in $L^2_{\omega\in\mc F_{\rm int}}l^2_{(m,l)\in\mc F_{\rm int}}$, can be controlled by initial data for ${\upphi}_{(0)}$ by a direct application of Cauchy--Schwarz:
\begin{align*}
\int_{\omega\in \mc{F}_{\rm int}}\sum_{(m,l)\in\mc{F}_{\rm int}}\lp|\int_{-R^*}^{\infty} \frac{e^{i\omega r^*}}{r^*} \sweie{\breve{\mathfrak{G}}}{s}_{(0)} dr^*\rp|^2
\leq \lp(\int_{-R^*}^{\infty}\frac{1}{(r^*)^2}dr^*\rp)\int_{\omega\in \mc{F}_{\rm int}}\sum_{(m,l)\in\mc{F}_{\rm int}}\lp(\int_{-R^*}^{\infty} \lp|\sweie{\breve{\mathfrak{G}}}{s}_{(0)}\rp|^2 dr^*\rp)d\omega\,,
\end{align*}
where we can now apply Plancherel, see Lemma~\ref{lemma:Plancherel}. The first term in \eqref{eq:cutoff-mode-stability-intermediate-1} does not have enough $r$-decay for the same strategy to work, so we focus on it for the remainder of the proof. 

By the definition of $\sweie{\breve{\mathfrak{G}}}{s}_{(0)}$, we have
\begin{align*}
&\int_{\omega\in \mc{F}_{\rm int}}\sum_{(m,l)\in\mc{F}_{\rm int}}\lp|\int_{R^*}^{\infty} e^{i\omega r^*}  c(x^*)\sweie{\breve{\mathfrak{G}}}{s}_{(0)} dr^*\rp|^2 d\omega
\\
&\quad= \int_{\omega\in \mc{F}_{\rm int}}\sum_{(m,l)\in\mc{F}_{\rm int}} \lp|\int_{\mathbb{S}^2}\int_{-\infty}^\infty\int_{R^*}^\infty e^{i\omega_\epsilon (t+r^*)}\sweie{\breve{\mathfrak{H}}}{s}_{(0)} c(r^*) S_{ml}^{[s],\,(a\omega_\epsilon)}(\theta)e^{-im\phi} dr^* dt d\sigma\rp|^2 d\omega\,.
\end{align*}
Now, we switch from $(t,r^*)$ variables to $(u,v)$ variables, where $v=\frac12(t+r^*)$ and $u=\frac12 (t-r^*)$. Since  $R^*>\infty$ is fixed, the integrand is compactly supported in the variable $u$, and supported in $v\in (A,\infty)$ where $A=A(R^*)$ is finite see Figure~\ref{fig:cutoff-support}.  Thus,
\begin{align}
\begin{split}
&\int_{\mc{F}_{\rm int}}\sum_{(m,l)\in\mc{F}_{\rm int}}\lp|\int_{-R^*}^{\infty} e^{i\omega r^*}  c(r^*)\sweie{\breve{\mathfrak{G}}}{s}_{(0)}  dr^*\rp|^2 d\omega \\
&\quad \leq \int_{\omega\in \mc{F}_{\rm int}}\sum_{(m,l)\in\mc{F}_{\rm int}} \int_{-\infty}^\infty\lp|\int_{\mathbb{S}^2}\int_{A}^\infty e^{2i\omega_\epsilon v}\sweie{\breve{\mathfrak{H}}}{s}_{0}  c(r^*) S_{ml}^{[s],\,a\omega_\epsilon}(\theta)e^{-im\phi} dv d\sigma\rp|^2 du d\omega\,,
\end{split} \label{eq:cutoff-mode-stability-intermediate-2}
\end{align}

\begin{figure}[htbp]
\centering
\includegraphics[scale=1]{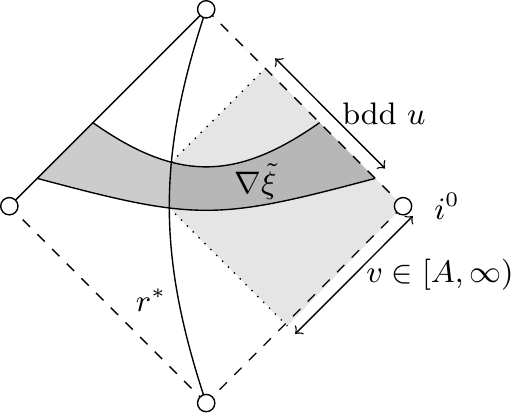}
\caption{A choice of hyperboloidal cuttof and its derivatives' support in spacetime.}
\label{fig:cutoff-support}
\end{figure}

For the term $j= 2|s|$ in $c$,  we can now conclude. From the orthonormality of the $S_{ml}^{[s],\,a\omega_\epsilon}(\theta)e^{-im\phi}$,
\begin{align*}
\int_{\mc{F}_{\rm int}}\sum_{(m,l)\in\mc{F}_{\rm int}}\lp|\int_{-R^*}^{\infty} e^{i\omega r^*} \sweie{\breve{\mathfrak{G}}}{s}_{(0)}  dr^*\rp|^2 d\omega 
&\leq B(\mc{F}_{\rm int}) \int_{\mathbb{S}^2}\int_{\mc{F}_{\rm int}}\int_{-\infty}^\infty\lp|\int_{A}^\infty e^{2i\omega_\epsilon v}\sweie{\breve{\mathfrak{H}}}{s}_{0}   dv \rp|^2 du d\omega d\sigma\\
& \leq B(R^*,\mc{F}_{\rm int}) \int_{\mathbb{S}^2}\int_{\mc{F}_{\rm int}}\int_{-\infty}^\infty\int_{-\infty}^\infty\lp|\swei{\breve{\mathfrak{H}}}{s}_{0}\rp|^2 dtdr^*d\sigma\,.
\end{align*}
Here,  \eqref{eq:inhom-commutator} guarantees that this can be controlled by initial data for $\upphi_{0}$ and, if $s\neq 0$, initial data for $\upphi_{1}$.  If $s\geq 0$, this concludes the proof; thus, from now on we assume that $s<0$.

To gain better $r$-weights,  we integrate \eqref{eq:cutoff-mode-stability-intermediate-2} by parts in the $L$ direction, using
\begin{align*}
L\lp[e^{2i\omega_\epsilon v}\rp] =4i\omega_\epsilon\,,
\end{align*}
and taking advantage of the fact that $0\not\in\mc F_{\rm int}$. Our regularization $\epsilon$ ensures that we do not pick up any boundary terms at $v=\infty$ in the integration; to avoid boundary terms at $v=A$ will also be convenient to note that the integral is unchanged if we multiply the integrand by a cutoff $0\leq \chi_{R}\leq 1$ which vanishes for $r^*\leq -R^*-1$ and is equal to 1 for $r^*\geq -R^*$. It is easy to see that, for the terms $j=2|s|-1$ and $j=2|s|-2$ in $c$, this is enough:
\begin{align*}
&|4i\omega_\epsilon|^2\int_{\omega\in \mc{F}_{\rm int}}\sum_{(m,l)\in\mc{F}_{\rm int}} \int_{-\infty}^\infty\lp|\int_{\mathbb{S}^2}\int_{A}^\infty L(e^{2i\omega_\epsilon v})\frac{\sweie{\breve{\mathfrak{H}}}{s}_{0}}{w} w (r^*)^2 S_{ml}^{[s],\,a\omega_\epsilon}(\theta)e^{-im\phi} dv d\sigma\rp|^2 du d\omega\\
&\quad\leq \int_{\mathbb{S}^2}\int_{\mc{F}_{\rm int}}\int_{-\infty}^\infty\lp|\int_{A}^\infty e^{2i\omega_\epsilon v}\sweie{\breve{\mathfrak{H}}}{s}_{1}  dv \rp|^2 du d\omega d\sigma +\int_{\mathbb{S}^2}\int_{\mc{F}_{\rm int}}\int_{-\infty}^\infty\lp|\int_{A}^\infty e^{2i\omega_\epsilon v}\sweie{\breve{\mathfrak{H}}}{s}_{0} (r^*)^2 \chi_{R}' dv \rp|^2 du d\omega d\sigma \\
&\quad\qquad+\int_{\mathbb{S}^2}\int_{\mc{F}_{\rm int}}\int_{-\infty}^\infty\lp|\int_{A}^\infty e^{2i\omega_\epsilon v}\frac{\sweie{\breve{\mathfrak{H}}}{s}_{0}}{w}\lp((r^*)^2w\rp)'   dv \rp|^2 du d\omega d\sigma\,,
\end{align*}
where we have used the transport relation~\eqref{eq:transformed-transport-inhom} for the inhomogeneities and the orthonormality of $S_{ml}^{[s],\,(a\omega_\epsilon)}(\theta)e^{-im\phi}$. By \eqref{eq:inhom-commutator} and Lemma~\ref{lemma:teukolsky-cutoff-inhom-k}, the first term and the last term can now be controlled by initial data for $\upphi_{0}$ and $\upphi_{1}$. The middle term term is compactly supported in both $u$ (as before) and $v$ (by the support properties of $\chi_R'$), so it is trivially controlled by the initial data for $\upphi_{0}$ and  $\upphi_{1}$, as its $r$-weights are not relevant.  

The above concludes the proof for $s=-1$. For $s\leq -2$, we simply need to iterate the integration by parts procedure $|s|$ times.

\medskip
\noindent\textit{Step 2: estimate \eqref{eq:cutoff-mode-stability-integral-bounds-H+}}. This step follows in a very similar manner. We seek to estimate
\begin{align}
\begin{split}
&\lp|\int_{-\infty}^{R^*} \frac{(r^2+a^2)^{|s|}}{\Delta^{\frac{|s|}{2}(1+\sign s)}}\lp(\Delta^{\frac{s}{2}}\sweie{u}{s}_{\mc{H}^+}\rp)  \sweie{\breve{\mathfrak{G}}}{s}_{(0)} dr^*\rp|
\\
&\leq \lp|\int_{-\infty}^{R^*}e^{-i(\omega_\epsilon -m\upomega_+)r^*} c(r)\sweie{\breve{\mathfrak{G}}}{s}_{(0)}dr^*\rp|
+\lp|\int_{-\infty}^{R^*}(r-r_+) c_{\infty}e^{-i(\omega_\epsilon -m\upomega_+) r^*} \sweie{\breve{\mathfrak{G}}}{s}_{(0)} dr^*\rp|\,, 
\end{split}\label{eq:cutoff-mode-stability-intermediate-3}
\end{align} 
for some $c(r^*)=\sum_{j=0}^{2|s|}c_j (r-r_+)^{j-2|s|}$, where the $c_j$, $j=0,\dots 2|s|$, and $c_\infty$ arise from the asymptotic expansion of $\sweie{u}{s}_{\mc{H}^+}$ at $r=r_+$ (see \cite[Section 4.3.2]{SRTdC2020}) and depend on $R^*$, the frequency parameters, the spin $s$, and the specific angular momentum $a$. For instance, if $s\leq 0$, then $c_j=0$ for $j\leq 2|s|-1$; if $s>0$ and $|a|<M$, then $c_j=0$ for $0\leq j< |s|$, but for $|a|=M$ generically $c_j\neq 0$. From \eqref{eq:teukolsky-cutoff-inhom-k} with $k=0$, it is clear that  the second term in \eqref{eq:cutoff-mode-stability-intermediate-3}, in $L^2_{\omega\in\mc F_{\rm int}}l^2_{(m,l)\in\mc F_{\rm int}}$, can be controlled by initial data for ${\upphi}_{(0)}$ by a direct application of Cauchy--Schwarz:
\begin{align*}
&\int_{\omega\in \mc{F}_{\rm int}}\sum_{(m,l)\in\mc{F}_{\rm int}}\lp|\int_{-\infty}^{R^*}(r-r_+) c_{\infty}e^{-i(\omega_\epsilon -m\upomega_+) r^*} \sweie{\breve{\mathfrak{G}}}{s}_{(0)} dr^*\rp|^2d\omega\\
&\quad\leq \int_{-\infty}^{R^*}(r-r_+)^2dr^*\int_{ \mc{F}_{\rm int}}\sum_{(m,l)\in\mc{F}_{\rm int}}\lp(\int_{-\infty}^{R^*} \lp|\sweie{\breve{\mathfrak{G}}}{s}_{(0)}\rp|^2 dr^*\rp)d\omega\,,
\end{align*}
where we can now apply Plancherel, see Lemma~\ref{lemma:Plancherel}. The first term in \eqref{eq:cutoff-mode-stability-intermediate-3} does not have enough $(r-r_+)$-decay for the same strategy to work, so we focus on it for the remainder of the proof. 

By the definition of $\sweie{\breve{\mathfrak{G}}}{s}_{(0)}$, we have
\begin{align*}
&\int_{\omega\in \mc{F}_{\rm int}}\sum_{(m,l)\in\mc{F}_{\rm int}}\lp|\int_{-\infty}^{R^*}e^{-i(\omega_\epsilon-m\upomega_+) r^*} c(r)\sweie{\breve{\mathfrak{G}}}{s}_{(0)}dr^*\rp|^2 d\omega
\\
&\quad= \int_{\omega\in \mc{F}_{\rm int}}\sum_{(m,l)\in\mc{F}_{\rm int}} \lp|\int_{\mathbb{S}^2}\int_{-\infty}^\infty\int_{-\infty}^{R^*} e^{i\omega_\epsilon(t-r^*)}
\sweie{\breve{\mathfrak{H}}}{s}_{0}  c(r)S_{ml}^{[s],\,(a\omega_\epsilon)}(\theta)e^{-im(\phi-\upomega_+r^*)} dr^* dt d\sigma\rp|^2 d\omega\,.
\end{align*}
As before, we switch from $(t,r^*)$ to $(u,v)$ as defined above. Since $R^*<\infty$ is fixed, the integrand is compactly supported in $v$ and supported in $v\in (-A,\infty)$ for some finite $A(R^*)$, symmetrically to the situation portrayed in Figure~\ref{fig:cutoff-support}.  Thus,
\begin{align*}
&\int_{\omega\in \mc{F}_{\rm int}}\sum_{(m,l)\in\mc{F}_{\rm int}}\lp|\int_{-\infty}^{R^*}e^{-i(\omega_\epsilon -m\upomega_+) r^*} c(r)\sweie{\breve{\mathfrak{H}}}{s}_{(0)}dr^*\rp|^2 d\omega
\\
&\quad \leq \int_{\omega\in \mc{F}_{\rm int}}\sum_{(m,l)\in\mc{F}_{\rm int}} \int_{-\infty}^\infty\lp|\int_{\mathbb{S}^2}\int_{-A}^{\infty} e^{2i\omega_\epsilon u+2im\upomega_+r^*} \sweie{\breve{\mathfrak{H}}}{s}_{0} c(r) S_{ml}^{[s],\,(a\omega_\epsilon)}(\theta)e^{-i(m\phi+m\upomega_+r^*)} du  d\sigma\rp|^2 dvd\omega\,.
\end{align*}

For the term $j=2|s|$, we can conclude as before. From the orthonormality of the angular functions,
\begin{align}
\begin{split}
&\int_{\omega\in \mc{F}_{\rm int}}\sum_{(m,l)\in\mc{F}_{\rm int}}\lp|\int_{-\infty}^{R^*}e^{-i(\omega_\epsilon -m\upomega_+) r^*} \sweie{\breve{\mathfrak{H}}}{s}_{0}dr^*\rp|^2 d\omega\\
&\quad\leq B(\mc F_{\rm int}) \int_{\mathbb{S}^2}\int_{\mc F_{\rm int}}\int_{-\infty}^\infty \lp|\int_{-A}^{\infty} e^{2i\omega_\epsilon u+im\upomega_+r^*} \sweie{\breve{\mathfrak{H}}}{s}_{0}  du  \rp|^2 dvd\omega d\sigma\\
&\quad\leq B(R^*,\mc F_{\rm int}) \int_{\mathbb{S}^2}\int_{\mc F_{\rm int}}\int_{-\infty}^\infty \int_{\mathbb{S}^2}\int_{-A}^{\infty}  \lp|\sweie{\breve{\mathfrak{H}}}{s}_{0}  du  \rp|^2 dvd\omega d\sigma\,,
\end{split}
\label{eq:cutoff-mode-stability-intermediate-4}
\end{align}
which, from \eqref{eq:inhom-commutator}, can clearly be controlled by initial data for ${\upphi}_{0}$ and, if $s\neq 0$, ${\upphi}_{1}$. If $s\leq 0$, this concludes the proof; thus from now on assume $s>0$ and $|a|=M$.

To gain better $r$-weights,  we can integrate \eqref{eq:cutoff-mode-stability-intermediate-2} by parts in the $\uL$ direction, using
\begin{align*}
\uL\lp[e^{2i\omega_\epsilon u+2im\upomega_+r^*}\rp] =4i(\omega_\epsilon-m\upomega_+)\,.
\end{align*}
Our regularization $\epsilon$ ensures that we do not pick up any boundary terms at $u=\infty$ in the integration if $|a|=M$; to avoid boundary terms at $u=A$ it will also be convenient to note that the integral is unchanged if we multiply the integrand by a cutoff $0\leq \chi_{R}\leq 1$ which vanishes for $r^*\geq R^*+1$ and is equal to 1 for $r^*\leq R^*$. It is easy to see that, for the terms $j=2|s|-1$ and $j=2|s|-2$ in $c$, this is enough:
\begin{align*}
&|4(\omega_\epsilon-m\upomega_+)|^2\int_{\omega\in \mc{F}_{\rm int}}\sum_{(m,l)\in\mc{F}_{\rm int}} \int_{-\infty}^\infty\lp|\int_{\mathbb{S}^2}\int_{A}^\infty \uL e^{2i\omega_\epsilon u}\sweie{\breve{\mathfrak{H}}}{s}_{0} \Delta^{-1} S_{ml}^{[s],\,(a\omega_\epsilon)}(\theta)e^{-im(\phi-\upomega_+)} du d\sigma\rp|^2 dv d\omega\\
&\quad\leq \int_{\mathbb{S}^2}\int_{\mc{F}_{\rm int}}\int_{-\infty}^\infty\lp|\int_{A}^\infty e^{2i\omega_\epsilon u}\sweie{\breve{\mathfrak{H}}}{s}_{1}e^{im\upomega_+r^*}   du \rp|^2 dv d\omega d\sigma \\
&\quad\qquad+\int_{\mathbb{S}^2}\int_{\mc{F}_{\rm int}}\int_{-\infty}^\infty\lp|\int_{A}^\infty e^{2i\omega_\epsilon u}\sweie{\breve{\mathfrak{H}}}{s}_{0} \Delta^{-1}\chi_{R}' e^{im\upomega_+r^*} du \rp|^2 dv d\omega d\sigma \\
&\quad\qquad+\int_{\mathbb{S}^2}\int_{\mc{F}_{\rm int}}\int_{-\infty}^\infty\lp|\int_{A}^\infty e^{2i\omega_\epsilon v}4r \sweie{\breve{\mathfrak{H}}}{s}_{0}  dv \rp|^2 du d\omega d\sigma\,,
\end{align*}
where we have used the transport relation \eqref{eq:transformed-transport-inhom} and the orthonormality of $S_{ml}^{[s],\,(a\omega_\epsilon)}(\theta)e^{-im\phi}$ again. From Lemma \ref{lemma:teukolsky-cutoff-inhom-k}, we can now conclude.
\end{proof}

In the case $s>0$ and $|a|<M$, we need a different approach. The first integral in \eqref{eq:cutoff-mode-stability-formal-Greens} is not well-defined, since even with our regularization one cannot, in general, expect the inhomogeneity to have sufficient decay as $r\to r_+$.  Thus, we will instead pursue a representation
\begin{align}
\uppsi_{(k),\, ml}^{[s],\,a,\,\omega} = \uppsi_{G, (k),\, ml}^{[s],\,a,\,\omega} +\uppsi_{R, (k),\, ml}^{[s],\,a,\,\omega}\,, \label{eq:decomposition-R-G}
\end{align}
where $\uppsi_{R, (0),\, ml}^{[s],\,a,\,\omega}$ verifies an equation,
\begin{align}
\lp(\uppsi_{R, (0),\, ml}^{[s],\,a,\,\omega}\rp)''+(\omega^2- \swei{\mc V}{s}_0)\uppsi_{R, (0),\, ml}^{[s],\,a,\,\omega} = \tilde{\mathfrak{G}}^{[s],\, a,\,\omega}_{(0),\,ml}:= \breve{\mathfrak{G}}^{[s],\, a,\,\omega}_{(0),\,ml}- \lp(\uppsi_{G, (0),\, ml}^{[s],\,a,\,\omega}\rp)''-(\omega^2- \swei{\mc V}{s}_0)\uppsi_{G, (0),\, ml}^{[s],\,a,\,\omega} \,, \label{eq:radial-ODE-psi0R}
\end{align}
whose  right hand side has improved decay as $r\to r_+$.

\begin{lemma}[Integral bounds  for Green's formula II]  \label{lemma:cutoff-mode-stability-integral-bounds-improved} Fix $s\in\mathbb{Z}_{\geq 1}$, and $a_0\in[0,M)$.  Assume $|a|\leq a_0$ and $(\omega,m,l)\in\mc{F}_{\rm int}$. Then, for any fixed $R^*<\infty$, we can find functions $\uppsi^{[s],\, a,\,\omega_\epsilon}_{G, (k),\,ml}(r^*)$ supported for $r^*\leq -R^*$ such that
\begin{align*}
&\sum_{k=0}^{|s|}  \int_{\mc{F}_{\rm int}}\sum_{(m,l)\in\mc{F}_{\rm int}}\int_{-R^*}^{R^*} \lp(|\uppsi^{[s],\, a,\,\omega_\epsilon}_{G, (k),\,ml}|^2+|(\uppsi^{[s],\, a,\,\omega_\epsilon}_{G, (k),\,ml})'|^2+|(\uppsi^{[s],\, a,\,\omega_\epsilon}_{G, (k),\,ml})''|^2\rp) dr^* \\
&\quad \leq B(R^*,a_0,\omega_{\rm high},\omega_{\rm low},\varepsilon_{\rm width}) \sum_{i=0}^{N-1} \sum_{k=0}^{|s|}  {\mathbb{E}}[\swei{\tilde\upphi}{s}_{k}](\tau_{2i})\numberthis\label{eq:cutoff-mode-stability-integral-bounds-H+-improved-subtraction}
\end{align*}
for $k=0,\dots, |s|$ and such that we have the following uniform-in-$\epsilon$ estimate:
\begin{align*}
&\int_{\mc{F}_{\rm int}}\sum_{(m,l)\in\mc{F}_{\rm int}}|\omega-m\upomega_+|^2\lp|\int_{-\infty}^{R^*}  \frac{(r^2+a^2)^{|s|}}{\Delta^{\frac{|s|}{2}(1+\sign s)}}\lp(\Delta^{\frac{s}{2}}{u}^{[s],\,a,\,\omega_\epsilon}_{ml,\,\mc{H}^+}\rp)(r^*)  \times \rp.\\
&\quad\qquad\qquad\qquad\qquad\qquad\qquad \lp.\times \lp[\breve{\mathfrak{G}}^{[s],\, a,\,\omega_\epsilon}_{(0),\,ml}(r^*)-\lp(\uppsi^{[s],\, a,\,\omega_\epsilon}_{G, (0),\,ml}(r^*)\rp)''-(\omega^2- \swei{\mc V}{s}_0)\uppsi^{[s],\, a,\,\omega_\epsilon}_{G, (0),\,ml}(r^*)\rp] dr^* \rp|^2d\omega\\
&\quad\leq  B(R^*,a_0,\omega_{\rm high},\omega_{\rm low},\varepsilon_{\rm width}) \sum_{i=0}^{N-1} \sum_{k=0}^{|s|}  {\mathbb{E}}[\swei{\tilde\upphi}{s}_{k}](\tau_{2i}) \,. \numberthis\label{eq:cutoff-mode-stability-integral-bounds-H+-improved}
\end{align*}
\end{lemma}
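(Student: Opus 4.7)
The strategy is to introduce a subtraction that removes the precise obstruction to integration by parts at $\mc{H}^+$ in the subextremal case $|a|<M$ with $s>0$. Recall that in Step~2 of the proof of Lemma~\ref{lemma:cutoff-mode-stability-integral-bounds}, the full $|s|$-fold integration by parts in the $\uL$ direction against $e^{2i\omega_\epsilon u+2im\upomega_+ r^*}$ was lossless only thanks to the exponential damping $e^{\beta/(r-M)}$ that appears in $u^{[s],\,a,\omega_\epsilon}_{ml,\mc{H}^+}$ in the extremal case (or thanks to the strong $O(w^{1+2|s|})$ decay of $\breve{\mathfrak G}$ available for $s\leq 0$). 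For $s>0$ and $|a|<M$, $\breve{\mathfrak G}^{[s],\,a,\omega_\epsilon}_{(0),ml}$ from Lemma~\ref{lemma:teukolsky-cutoff-inhom-k} decays only like $O(w)$ at $r=r_+$, which is what obstructs even a single integration by parts from being lossless. The piece $\uppsi_G$ in \eqref{eq:decomposition-R-G} is designed precisely to absorb this leading-order horizon obstruction so that the residual inhomogeneity $\tilde{\mathfrak G}$ has improved $O(w^2)$ decay at $\mc{H}^+$, sufficient for a single lossless $\uL$-integration by parts.

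Concretely, I would first perform a Frobenius analysis of \eqref{eq:radial-ODE-psi0R} near $r=r_+$ combined with the explicit form of $\breve{\mathfrak G}$ from \eqref{eq:teukolsky-cutoff-inhom-k}, writing $\breve{\mathfrak G}^{[s],\,a,\omega_\epsilon}_{(0),ml}=g_{\mathrm{lead}}(r^*;\omega,m,\Lambda)+g_{\mathrm{rem}}$ where $g_{\mathrm{lead}}$ is an explicit function capturing the leading $O(w)$ behavior and $g_{\mathrm{rem}}=O(w^2)$. Then I would define $\uppsi^{[s],\,a,\omega_\epsilon}_{G,(0),ml}$ as a solution of $\uppsi_G''+(\omega^2-\mc V^{[s]}_{m\Lambda,0})\uppsi_G=\chi(r^*)g_{\mathrm{lead}}$, with $\chi$ a cutoff supported in $r^*\leq -R^*$ equal to $1$ near $r^*=-\infty$, taking $\uppsi_G$ itself supported in $r^*\leq -R^*$ via the appropriate outgoing-Green quadrature. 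The companions $\uppsi^{[s],\,a,\omega_\epsilon}_{G,(k),ml}$ for $k\geq 1$ are generated by the transport relation consistent with Definition~\ref{def:alt-transformed-system-ODE}, preserving compact $r^*$ support.

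With this construction, estimate \eqref{eq:cutoff-mode-stability-integral-bounds-H+-improved-subtraction} follows because each $\uppsi_{G,(k)}$ is $r^*$-supported on a bounded set independent of $\epsilon$, so Plancherel reduces the desired $L^2_{\omega,ml}$ bound to a physical-space bound on the corresponding quantity $H_{G,(k)}$ associated with $g_{\mathrm{lead}}$. That quantity is controlled by traces of the forward-cutoff variables $\upphi_{k,\cut}$ near $\mc{H}^+$ and hence, via the transport estimates and the inhomogeneity control developed in Lemmas~\ref{lemma:weird-cutoffs-diff-fake-real} and \ref{lemma:weird-cutoffs-inhoms-rough-estimates}, by the initial-data fluxes $\mathbb{E}[\tilde\upphi_k](\tau_{2i})$; two $r^*$-derivatives are handled by differentiating the defining ODE. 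For \eqref{eq:cutoff-mode-stability-integral-bounds-H+-improved} I would then re-run the $(u,v)$-coordinate integration-by-parts argument of Step~2 of the proof of Lemma~\ref{lemma:cutoff-mode-stability-integral-bounds} with $\tilde{\mathfrak G}$ in place of $\breve{\mathfrak G}$. By construction $\tilde{\mathfrak G}=O(w^2)$ at $\mc{H}^+$, which renders a single $\uL$-integration by parts against $e^{2i\omega_\epsilon u+2im\upomega_+ r^*}$ lossless: boundary terms at $u=\infty$ vanish in the limit $\epsilon\downarrow 0$ uniformly in $|a|\leq a_0<M$, yielding exactly one factor of $|4(\omega_\epsilon-m\upomega_+)|^{-2}$ on the left-hand side; after applying the transport equation for the inhomogeneities once, the remaining integrand reduces to $\tilde{\mathfrak H}_1$, controlled in $L^2$ by initial-data fluxes as in Lemma~\ref{lemma:cutoff-mode-stability-integral-bounds}.

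The main obstacle is engineering $g_{\mathrm{lead}}$ so that simultaneously (i) the associated $\uppsi_G$ is $L^2_{\omega,ml}$-controllable by initial-data fluxes uniformly in $\epsilon$ and in $|a|\leq a_0<M$, and (ii) the subtraction gains the full two additional powers of $w$ needed to render the subsequent single integration by parts lossless. This requires exploiting the detailed algebraic structure of the inhomogeneity in \eqref{eq:teukolsky-cutoff-inhom-k}, in particular the asymptotic behavior of the $d_{s,k,j}^X$ coefficients as $r\to r_+$, together with the transport equations relating the different levels of the cutoff system.
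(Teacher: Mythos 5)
Your high-level intuition is correct — one must subtract a piece $\uppsi_G$ that captures the obstruction at $\mc{H}^+$ so that the residual source $\tilde{\mathfrak G}$ has improved decay as $r\to r_+$ — and your reading of the role of \eqref{eq:decomposition-R-G} is right. But the key technical apparatus of the paper's proof is entirely absent from your proposal: the Teukolsky--Starobinsky identities (Proposition~\ref{prop:TS-radial-constant-identities}). The paper does not integrate by parts directly against $e^{2i\omega_\epsilon u+2im\upomega_+ r^*}$; instead it first applies the TS identity to rewrite the integrand
\begin{align*}
\frac{(r^2+a^2)^{s}}{\Delta^{s}}\Delta^{\frac{s}{2}}u^{[s],\epsilon}_{\mc H^+}\,\tilde{\mathfrak G}^{[s],\epsilon}_{(0)}\cdot\mathfrak{C}_s^{(2)} \;\propto\;(r^2+a^2)^{s-1/2}\,\frac{\tilde{\mathfrak G}_{(0)}^{[+s],\epsilon}}{w}\,w(r^2+a^2)\left(\frac{L}{w(r^2+a^2)}\right)^{2s}\!\left((r^2+a^2)^{s-1/2}\uppsi^{[-s],\epsilon}_{(0),\mc{H}^+}\right),
\end{align*}
thereby replacing the spin $+s$ horizon solution (badly behaved at $\mc{H}^+$) by the spin $-s$ one (well behaved there), and exhibiting an explicit $2s$-fold $L$-derivative which is then distributed by integration by parts between the two factors. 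This accomplishes two things that your scheme does not: (i) it makes the required cancellations manifest via the transport relation \eqref{eq:transformed-transport-inhom-separated} (the bulk terms reduce to polynomial combinations of $\tilde{\mathfrak G}^{[+s]}_{(k)}$ and $\uppsi^{[+s]}_{(k)}$ as in the $s=1$ calculation \eqref{eq:cutoff-mode-stability-intermediate-5}); and (ii) it produces the $|\omega-m\upomega_+|^2$ weight on the left-hand side of \eqref{eq:cutoff-mode-stability-integral-bounds-H+-improved} as a direct consequence of the explicit formula for $\mathfrak{C}_s^{(2)}$ in the subextremal case (which degenerates like $(\omega-m\upomega_+)^{-1}$), rather than as an artifact of a single exponential integration by parts. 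Your arithmetic ``one IBP $\Rightarrow |4(\omega_\epsilon-m\upomega_+)|^{-2}$'' happens to give the right power but misidentifies its source, and without the TS rewriting the boundary terms at $r^*=-\infty$ after distributing $2s$ derivatives would not be controlled by the decay $\tilde{\mathfrak G}_{(k)}=O((r-r_+)^{|s|-k})$ that the subtraction actually delivers.

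There is also a gap in your construction of $\uppsi_G$: you propose defining it as the outgoing-Green solution of $\uppsi_G''+(\omega^2-\mc V^{[s]}_{m\Lambda,0})\uppsi_G=\chi g_{\mathrm{lead}}$, but this re-introduces a Wronskian weight into its definition, making the uniform $L^2_{\omega,ml}$ bound in \eqref{eq:cutoff-mode-stability-integral-bounds-H+-improved-subtraction} circular to establish. The paper instead takes $\uppsi_{G,(0)}^{[s]}=\chi(r^*)\sum_{j=1}^{|s|-1}b_{G,(0),j}\,(r-r_+)^{j+\xi}$ — an explicit finite Frobenius polynomial localized near $r_+$, with $b_{G,(0),j}$ determined by a simple triangular recursion from the traces $G^{[s],r_+,+}_{(0),j}$ of $\mathfrak{G}^{[s]}_{(0)}/w$ at $\mc{H}^+$ — and $\uppsi_{G,(k)}=(w^{-1}\uL)^k\uppsi_{G,(0)}$. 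Estimate \eqref{eq:cutoff-mode-stability-integral-bounds-H+-improved-subtraction} then reduces, via Plancherel, to horizon traces of the inhomogeneity of Lemma~\ref{lemma:teukolsky-cutoff-inhom-k}, which are controlled as in Lemmas~\ref{lemma:weird-cutoffs-diff-fake-real} and \ref{lemma:weird-cutoffs-inhoms-rough-estimates}. You should recast your subtraction in this explicit Frobenius form, invoke the TS identity to set up the integration by parts, and let $\mathfrak{C}_s^{(2)}$ account for the frequency weight.
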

\begin{proof} For brevity, we use the notation introduced in \eqref{eq:radial-ODE-psi0R} and drop most sub and superscripts. In view of the radial Teukolsky--Starobinsky identities of Proposition~\ref{prop:TS-radial-constant-identities}, we have the identity 
\begin{align*}
&\frac{(r^2+a^2)^{s}}{\Delta^{s}}\Delta^{\frac{s}{2}}{u}^{[s],\,\epsilon}_{\mc{H}^+} \tilde{\mathfrak{G}}^{[s],\,\epsilon}_{(0)} \mathfrak{C}_s^{(2)}\lp\{\begin{array}{lr}
(r_+-r_-)^{s} &|a|<M\\
1&|a|=M
\end{array}\rp\}\\
&\quad=(r^2+a^2)^{s-1/2}\frac{\tilde{\mathfrak{G}}^{[s],\,\epsilon}_{(0)}}{w}w(r^2+a^2)\lp(\frac{L}{w(r^2+a^2)}\rp)^{2s}\lp((r^2+a^2)^{s-1/2}{\uppsi}^{[-s],\,\epsilon}_{(0),\,\mc{H}^+}\rp)\,.
\end{align*}
Note that the constant $\mathfrak{C}_s^{(2)}$ is the source of the $\omega-m\upomega_+$ weights in \eqref{eq:cutoff-mode-stability-integral-bounds-H+-improved}. Thus, we have reduced the proof of \eqref{eq:cutoff-mode-stability-integral-bounds-H+-improved} to estimating the integral  
\begin{align*}
&\int_{-\infty}^{R^*} (r^2+a^2)^{s-1/2}\frac{\tilde{\mathfrak{G}}_{(0)}^{[+s],\,\epsilon}}{w}w(r^2+a^2)\lp(\frac{L}{w(r^2+a^2)}\rp)^{2s}\lp((r^2+a^2)^{s-1/2}\uppsi_{(0),\,\mc H^+}^{[-s],\,\epsilon}\rp)dr^* \numberthis\label{eq:cutoff-mode-stab-improved-intermediate}
\end{align*}
in $L^2_{\omega\in\mc F_{\rm int}}l^2_{ml\in\mc F_{\rm int}}$, where the integration by parts structure is evident. 

\medskip
\noindent\textit{Step 1: the case $s=1$.} As a first example, for $s=1$, integration by parts in \eqref{eq:cutoff-mode-stab-improved-intermediate} without the tilde leads to the identity
\begin{align}
\begin{split}
&\int_{-\infty}^{R^*} \chi_R (r^2+a^2)^{1/2}\frac{\mathfrak{G}_{(0)}^{[+1],\,\epsilon}}{w}w(r^2+a^2)\lp(\frac{L}{w(r^2+a^2)}\rp)^{2}\lp((r^2+a^2)^{1/2}\uppsi_{(0),\,\mc H^+}^{[-1],\,\epsilon}\rp)dr^* \\
&\quad =\int_{-\infty}^{R^*}\frac{d}{dr^*}\lp[(r^2+a^2)^{-1/2}\lp(r\uppsi_{(0),\,\mc H^+}^{[-1],\,\epsilon}+\uppsi_{(1),\,\mc H^+}^{[-1],\,\epsilon}\rp)\frac{\mathfrak{G}_{(0),\,ml}^{[+1],\,\epsilon}}{w}\chi_R\rp]dr^*\\
&\quad\qquad +\int_{-\infty}^{R^*}
\lp(\uppsi_{(1),\,\mc H^+}^{[-1],\,\epsilon}-r\uppsi_{(0),\,\mc H^+}^{[-1],\,\epsilon}\rp)\lp(\mathfrak{G}_{(1)}^{[+1],\,\epsilon}-r\mathfrak{G}_{(0)}^{[+1],\,\epsilon}\rp) \chi_R dr^*\\
&\quad\qquad +\int_{-\infty}^{R^*}\chi_R'\lp[(r^2+a^2)^{-1/2}\lp(r\uppsi_{(0),\,\mc H^+}^{[-1],\,\epsilon}+\uppsi_{(1),\,\mc H^+}^{[-1],\,\epsilon}\rp)\frac{\mathfrak{G}_{(0)}^{[+1],\,\epsilon}}{w}\rp]dr^*\,,
\end{split}\label{eq:cutoff-mode-stability-intermediate-5}
\end{align}
where $\chi_R$ is a bump function which is introduced to ensure there are no boundary terms at $r^*=R^*$. In the above, the boundary term at $r^*=-\infty$ vanishes due to the regularization $\epsilon$. The bulk terms with compact $r^*$ support (last line) can be controlled by Cauchy--Schwarz. Finally, for the bulk terms without compact $r^*$ support (middle line), we apply the same arguments as in Step 2 of the proof of Lemma~\ref{lemma:cutoff-mode-stability-integral-bounds}: from the outgoing boundary conditions of $\uppsi^{[+s],\,\epsilon}_{(k)}$, it is easy to see that these integrals can each be estimated similarly to \eqref{eq:cutoff-mode-stability-intermediate-3}, where $c(r)$ is a constant. Thus, we arrive at \eqref{eq:cutoff-mode-stability-integral-bounds-H+-improved} with the choice $\uppsi_{G, (k),\, ml}^{[s],\,a,\,\omega}\equiv 0$.

\medskip
\noindent\textit{Step 2: the cases $s\geq 2$.} In the case $s\geq 2$, we integrate by parts in \eqref{eq:cutoff-mode-stab-improved-intermediate} in a similar fashion to that of the previous step to distribute the derivatives between the homogeneous solution and the inhomogeneity. Again, we produce three types of terms: (i) bulk terms with compact $r^*$ support which (as long as enough integrations by parts have been performed) can be controlled by Cauchy--Schwarz; (ii) bulk terms without compact $r^*$ support of the form
\begin{align*}
\int_{-\infty}^{R^*}\lp(\sum_{k=0}^{|s|}p_{s,|s|,k}(r)\tilde{\mathfrak{G}}^{[+s],\,\epsilon}_{(k)}\rp)\lp(\sum_{k=0}^{|s|}p_{-s,|s|,k}(r)\uppsi^{[+s],\,\epsilon}_{(k)}\rp)dr^*\,,
\end{align*}
where $p_{\pm s,|s|,k}$ are polynomials in $r$ of degree $|s|-k$; and (iii) boundary terms at $r^*=-\infty$.  However, it is not hard to see that for $s\geq 2$ the structure of the boundary terms is already very different from the $s=1$ case: as $\frac{L}{w(r^2+a^2)}$ derivatives will act on $\uppsi_{(k),\, \mc H^+}^{[-s],\,\epsilon}$ in the boundary terms, these may in fact be infinite unless the inhomogeneities benefit from more decay.  Thus, we must construct a suitable $\uppsi_{G, (k),\, ml}^{[s,\,a,\,\omega]}\not\equiv 0$.

Our construction is motivated by the asymptotic analysis of \cite[Section 4.3.2]{SRTdC2020}. Since $\swei{\breve{\mathfrak{H}}}{s}_0/w$ is smooth up to $\mc H^+$, from Lemma~\ref{lemma:teukolsky-cutoff-inhom-k}, we deduce that
\begin{align*}
\frac{\mathfrak{G}_{(0)}^{[s]}}{w}=(r-r_+)^{\xi}\lp(\sum_{j=0}^{|s|-2}G_{(0),j}^{[s],r_+,+}(r-r_+)^j + O((r-r_+)^{|s|+1})\rp) 
\end{align*}
for some coefficients $G_{(0),k,j}\in\mathbb{C}$.  We let $\uppsi_{G, (0),\, ml}^{[s,\,a,\,\omega]}$ be defined by 
\begin{align*}
\uppsi_{G, (0)}^{[s]}(r^*) := \chi(r^*) \sum_{j=1}^{|s|-1} b_{G,(0),j}^{[s],r_+,+}(r-r_+)^{k+\xi}\,,
\end{align*}
where $\chi$ is a cutoff function which is supported close to $r=r_+$, and where $b_{G,(0),j}^{[s],r_+,+}$ are obtained recursively through
\begin{align*}
\lp[(\xi-s+j)(\xi-s+j-1)+ (\xi-s+j)f_0 +g_0\rp]b_{G,(0),j}^{[s],r_+,+} = G_{(0),j-1}^{[s],r_+,+}-\sum_{i=1}^{j-1}\lp((\xi-s+j)f_{j-i} +g_{j-i}\rp)b_{G,(0),i}^{[s],r_+,+}\,,
\end{align*}
with $f_0$ and $g_0$ explicit functions of $(\omega,m,l)$, $s$ and $(a,M)$ given in \cite[Lemma 4.3.1]{SRTdC2020}. We then define
\begin{align*}
\uppsi_{G, (k)}^{[s]}(r^*) :=\lp(w^{-1}\uL\rp)^k \uppsi_{G, (0)}^{[s]}(r^*)\,, 
\end{align*}
for $k=0,\dots,|s|$; hence we easily obtain
\begin{align*}
\int_{-R^*}^{R^*}\lp(|\uppsi_{G, (k)}^{[s]}|^2+|(\uppsi_{G, (k)}^{[s]})'|^2+|(\uppsi_{G, (k)}^{[s]})''|^2\rp) \leq B(R^*, \mc F_{\rm int}) \sum_{j=0}^{|s|-2}\lp|\frac{\mathfrak{G}_{(j)}^{[s]}}{w}\rp|^2(-\infty) \,,
\end{align*}
and \eqref{eq:cutoff-mode-stability-integral-bounds-H+-improved-subtraction} follows by Plancherel, Lemma~\ref{lemma:teukolsky-cutoff-inhom-k}, and the fact that the transport equation \eqref{eq:transformed-transport} allows us to control overbarred energies for $k<|s|$ by non-overbarred energies for all $k=0,\dots |s|$.

Repeating \cite[Lemmas 4.3.2, 4.3.3]{SRTdC2020} \textit{mutatis mutandis}, we find that 
\begin{align*}
\tilde{\mathfrak{G}}^{[s]}_{(k)}(r^*):= w\lp(\frac{\uL}{w}\rp)^k\lp[\frac{{\mathfrak{G}}^{[s]}_{(0)}(r^*)}{w}-\frac{\lp(\uppsi^{[s]}_{G, (0)}(r^*)\rp)''-(\omega^2- \swei{\mc V}{s}_0)\uppsi^{[s]}_{G, (0)}(r^*)}{w}\rp]  =O((r-r_+)^{|s|-k}) \numberthis\label{eq:tilde-G-decay}
\end{align*}
as $r\to r_+$. This decay is just enough to make the boundary terms (iii) finite, but adding the regularization $\omega\mapsto\omega_\epsilon$ will ensure that they in fact vanish.  The bulk integrals (ii) can then be decomposed into two terms
\begin{align*}
\int_{-\infty}^{R^*}\lp(\sum_{k=0}^{|s|}p_{s,|s|,k}(r){\mathfrak{G}}^{[+s],\,\epsilon}_{(k)}\rp)\lp(\sum_{k=0}^{|s|}p_{-s,|s|,k}(r)\uppsi^{[+s],\,\epsilon}_{(k)}\rp)dr^* \\
+\int_{-\infty}^{R^*}\lp[\sum_{k=0}^{|s|}p_{s,|s|,k}(r)\lp(\tilde{\mathfrak{G}}^{[+s],\,\epsilon}_{(k)}-{\mathfrak{G}}^{[+s],\,\epsilon}_{(k)}\rp)\rp]\lp(\sum_{k=0}^{|s|}p_{-s,|s|,k}(r)\uppsi^{[+s],\,\epsilon}_{(k)}\rp)dr^*\,,
\end{align*}
which we may estimate separately. For the first term, we can proceed as in the step above or Step 2 in Lemma~\ref{lemma:cutoff-mode-stability-integral-bounds}. For the second term, we note that the difference $\tilde{\mathfrak{G}}^{[+s]}_{(k)}-{\mathfrak{G}}^{[+s]}_{(k)}$ is entirely due to $\uppsi^{[+s]}_{G, (k)}$, which we have already estimated.
\end{proof}

Theorem~\ref{thm:quantitative-mode-stab-in-text} and Lemmas~\ref{lemma:cutoff-mode-stability-integral-bounds} and \ref{lemma:cutoff-mode-stability-integral-bounds-improved} are the heart of the proof of Proposition~\ref{prop:ODE-ILED-compact-r-mode-stab}. Indeed,   Proposition~\ref{prop:ODE-ILED-compact-r-mode-stab} now follows easily:

\begin{proof}[Proof of Proposition~\ref{prop:ODE-ILED-compact-r-mode-stab}]
Let us first assume $s\leq 0$ and $|a|=M$. By the assumptions, the $\epsilon$-dependent Green's formula 
\begin{align*}
\begin{split}
u&=\frac{1}{\swei{\mathfrak{W}}{s}(\omega_\epsilon,m,l)} {u}^{[s],\,a,\,\omega_\epsilon}_{ml,\,\mc{I}^+}(r^*)\int_{-\infty}^{r^*} \frac{(x^2+a^2)^{|s|}}{\Delta^{\frac{|s|}{2}(1+\sign s)}}\lp(\Delta^{\frac{s}{2}}{u}^{[s],\,a,\,\omega_\epsilon}_{ml,\,\mc{H}^+}\rp)(x^*)  \breve{\mathfrak{G}}^{[s],\,a,\,\omega_\epsilon}_{(0),\,ml}(x^*) dx^*\\
&\qquad +\frac{1}{\swei{\mathfrak{W}}{s}(\omega_\epsilon,m,l)}{u}^{[s],\,a,\,\omega_\epsilon}_{ml,\,\mc{H}^+}(r^*)\int_{r^*}^{\infty} \frac{(x^2+a^2)^{|s|}}{\Delta^{\frac{|s|}{2}(1+\sign s)}}\lp(\Delta^{\frac{s}{2}}{u}^{[s],\,a,\,\omega_\epsilon}_{ml,\,\mc{I}^+}\rp)(x^*) \breve{\mathfrak{G}}^{[s],\,a,\,\omega_\epsilon}_{(0),\,ml}(x^*) dx^*\,,\\
&\quad= \frac{1}{\swei{\mathfrak{W}}{s}(\omega_\epsilon,m,l)} {u}^{[s],\,a,\,\omega_\epsilon}_{ml,\,\mc{I}^+}(r^*)\int_{-\infty}^{r^*}{u}^{[s],\,a,\,\omega_\epsilon}_{ml,\,\mc{H}^+}(x^*)  {H}^{[s],\,a,\,\omega_\epsilon}_{(0),\,ml}(x^*) dx^*\\
&\qquad\quad +\frac{1}{\swei{\mathfrak{W}}{s}(\omega_\epsilon,m,l)}{u}^{[s],\,a,\,\omega_\epsilon}_{ml,\,\mc{H}^+}(r^*)\int_{r^*}^{\infty} {u}^{[s],\,a,\,\omega_\epsilon}_{ml,\,\mc{I}^+}(x^*) {H}^{[s],\,a,\,\omega_\epsilon}_{(0),\,ml}(x^*) dx^*\,,
\end{split}
\end{align*} 
is well-defined and (uniquely) specifies a solution to the inhomogeneous radial ODE~\eqref{eq:radial-ODE-u} with $\omega=\omega_\epsilon$, which has outgoing boundary conditions. This follows from standard ODE theory directly, as in the $s=0$ case worked out in \cite[Proposition 3.1]{Shlapentokh-Rothman2015}. Furthermore, taking derivatives of the expression above gives
\begin{align*}
\lp(\frac{d}{dr^*}\rp)^Nu&=\frac{1}{\swei{\mathfrak{W}}{s}(\omega_\epsilon,m,l)} \lp(\frac{d}{dr^*}\rp)^N\lp({u}^{[s],\,a,\,\omega_\epsilon}_{ml,\,\mc{I}^+}\rp)(r^*)\int_{-\infty}^{r^*} \frac{(x^2+a^2)^{|s|}}{\Delta^{\frac{|s|}{2}(1+\sign s)}}\lp(\Delta^{\frac{s}{2}}{u}^{[s],\,a,\,\omega_\epsilon}_{ml,\,\mc{H}^+}\rp)(x^*)  \breve{\mathfrak{G}}^{[s],\,a,\,\omega_\epsilon}_{(0),\,ml}(x^*) dx^*\\
& +\frac{1}{\swei{\mathfrak{W}}{s}(\omega_\epsilon,m,l)}\lp(\frac{d}{dr^*}\rp)^N\lp({u}^{[s],\,a,\,\omega_\epsilon}_{ml,\,\mc{H}^+}\rp)(r^*)\int_{r^*}^{\infty} \frac{(x^2+a^2)^{|s|}}{\Delta^{\frac{|s|}{2}(1+\sign s)}}\lp(\Delta^{\frac{s}{2}}{u}^{[s],\,a,\,\omega_\epsilon}_{ml,\,\mc{I}^+}\rp)(x^*) \breve{\mathfrak{G}}^{[s],\,a,\,\omega_\epsilon}_{(0),\,ml}(x^*) dx^*\,.
\end{align*} 
Lemma~\ref{lemma:cutoff-mode-stability-integral-bounds} and Theorem~\ref{thm:quantitative-mode-stab-in-text} ensures that for $r^*$ finite, each of the above integrals, weighted by $(\mathfrak{W}^{[s]})^{-1}$, converges to their $\epsilon=0$ counterparts in $L^2_{\omega\in\mc F_{\rm int}}l^2_{(m,l)\in\mc F_{\rm int}}$. The same is true for ${u}^{[s],\,a,\,\omega_\epsilon}_{ml,\,\mc{I}^+}$ and ${u}^{[s],\,a,\,\omega_\epsilon}_{ml,\,\mc{H}^+}$, as well as their derivatives. Thus, the above formulas yield a representation in $L^2_{\omega\in\mc F_{\rm int}}l^2_{(m,l)\in\mc F_{\rm int}}$ for $\tilde\uppsi_{(0)}$ and its $r^*$ derivatives; and, by differentiation, for any $k=0,\dots, |s|$ they also yield the representation \eqref{eq:cutoff-mode-stability-formal-Greens}  for $\tilde\uppsi_{(k)}$ in $L^2_{\omega\in\mc F_{\rm int}}l^2_{(m,l)\in\mc F_{\rm int}}$ and a similar one for its $r^*$ derivatives. The statement of the proposition then easily follows for $s\leq 0$ and for $|a|=M$.
 
On the other hand, if $|a|<M$ and $s>0$, we have an additional step in the proof: we take \eqref{eq:decomposition-R-G} and apply the Green's formula only to the second term, considering \eqref{eq:radial-ODE-psi0R}. To be precise, we note that for $\epsilon>0$, the formula 
\begin{align*}
\uppsi_{(k),\, ml}^{[s],\,a,\,\omega_\epsilon} &= \uppsi_{G, (0),\, ml}^{[s],\,a,\,\omega_\epsilon} \\
&\qquad +
\frac{{\uppsi}^{[s],\,a,\,\omega_\epsilon}_{(k),\, ml,\,\mc{I}^+}(r^*)}{\swei{\mathfrak{W}}{s}(\omega_\epsilon,m,l)}\int_{-\infty}^{r^*} \frac{(x^2+a^2)^{|s|}}{\Delta^{\frac{|s|}{2}(1+\sign s)}}\lp(\Delta^{\frac{s}{2}}{u}^{[s],\,a,\,\omega_\epsilon}_{ml,\,\mc{H}^+}\rp)(x^*) \tilde{\mathfrak{G}}^{[s],\,a,\,\omega_\epsilon}_{(0),\,ml}(x^*) dx^*\\
&\qquad +\frac{1}{\swei{\mathfrak{W}}{s}(\omega_\epsilon,m,l)}{\uppsi}^{[s],\,a,\,\omega_\epsilon}_{(k),\,ml,\,\mc{H}^+}(r^*)\int_{r^*}^{\infty} \frac{(x^2+a^2)^{|s|}}{\Delta^{\frac{|s|}{2}(1+\sign s)}}\lp(\Delta^{\frac{s}{2}}{u}^{[s],\,a,\,\omega_\epsilon}_{ml,\,\mc{I}^+}\rp)(x^*) \tilde{\mathfrak{G}}^{[s],\,a,\,\omega_\epsilon}_{(0),\,ml}(x^*) dx^*\,,
\end{align*}
is well-defined by \eqref{eq:tilde-G-decay}, and (uniquely) specifies a solution to the inhomogeneous radial ODE~\eqref{eq:transformed-k-separated} with $\omega=\omega_\epsilon$, which has outgoing boundary conditions. A similar representation can be obtained for its derivatives. Lemma~\ref{lemma:cutoff-mode-stability-integral-bounds-improved} and Theorem~\ref{thm:quantitative-mode-stab-in-text} ensures that for $r^*$ finite, each of the three terms in the above identity converges to their $\epsilon=0$ counterparts in $L^2_{\omega\in\mc F_{\rm int}}l^2_{(m,l)\in\mc F_{\rm int}}$. Thus, replacing $\omega_\epsilon$ by $\omega$ we obtain a representation for $\uppsi_{(k),\, ml}^{[s],\,a,\,\omega}$, and its $r^*$ derivatives, in $L^2_{\omega\in\mc F_{\rm int}}l^2_{(m,l)\in\mc F_{\rm int}}$. From these formulas, the conclusion of the proposition follows easily.
\end{proof}

\subsubsection{Boundary terms at \texorpdfstring{$r=r_+$}{the horizon} in the bounded frequency range} 
\label{sec:bdry-terms}

In this section, we estimate the boundary terms at $r=r_+$ which are generated from performing integrated estimates on the inhomogeneous transformed system of Section~\ref{sec:weird-cutoff-system}. More precisely, the our goal is to estimate the very last lines of \eqref{eq:part1-ILED-withoutTS-top} and \eqref{eq:part1-ILED-withoutTS-bottom} in our Theorem~\ref{thm:ODE-ILED}, which first appeared in \cite{SRTdC2020}.

\begin{lemma}[Boundary terms at horizon I] \label{lemma:weird-inverses-bdry-term-rp} Fix $a_0\in[0,M)$, $s\in\{\pm 1,\pm 2\}$ and $k_0\in\{0,\dots,|s|\}$ as in Lemma~\ref{lemma:weird-inverses}. Then, 
\begin{align*}
& \int_{\mc F_{\rm low}}\sum_{(m,l)\in\mc F_{\rm low}}(\omega^2+m^2+1)^{4|s|-k_0}|\smlambdak{\uppsi}{+s}{0}|^2 d\omega\\
&\qquad+ \int_{\mc F_{\rm high}^c}\sum_{(m,l)\in\mc F_{\rm high}^c}(\omega^2+m^2+1)^{4|s|-k_0-1}\sum_{j=0}^{k_0-1}\lp(|w^{-1}\smlambdak{\mathfrak G}{+s}{j}|^2+|w^{-1}(\smlambdak{\mathfrak G}{s}{j})'|^2\rp)\Big|_{r=-\infty}d\omega\\
 &\quad\leq B(a_0)\sum_{i=0}^{N-1}\sum_{k=0}^{|s|}\overline{\mathbb{E}}^{4|s|-k-2}[\swei{\tilde\upphi}{+s}_k](\tau_{2i})
\,. \numberthis\label{eq:weird-inverses-bdry-term-rp}
\end{align*}
However, if $\xi$ and $\swei{\upphi}{s}_k$ are such that $\swei{\breve{\mathfrak{H}}}{s}_{0}$, defined in \eqref{eq:teukolsky-cutoff-inhom-k}, is compactly supported in $r^*$ for every $k\in\{0,\dots,|s|\}$, then we have estimates at a lower level of regularity: 
\begin{align*}
\int_{\mc F_{\rm low}}\sum_{(m,l)\in\mc F_{\rm low}}|\smlambdak{\uppsi}{+s}{0}|^2 d\omega\leq B(a_0)\sum_{i=0}^{N-1}\sum_{k=0}^{|s|}{\mathbb{E}}[\swei{\tilde\upphi}{+s}_k](\tau_{2i})\,.\numberthis\label{eq:weird-inverses-bdry-term-rp-best}
\end{align*}
\end{lemma}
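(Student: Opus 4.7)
The strategy for \eqref{eq:weird-inverses-bdry-term-rp} is to reduce the frequency-space boundary values at $r=r_+$ to physical-space integrals of the cutoff variable $\swei{\upphi}{+s}_{k_0,\cutt}$ and inhomogeneity $\swei{\mathfrak H}{+s}_{k_0}$, which are supported in the finite hyperboloidal slab $\mc{R}_{(\tau_0,\tau_{N+1})}$, and then to invoke the finite-in-time energy estimates of Proposition~\ref{prop:finite-in-time-first-order} together with Lemmas~\ref{lemma:weird-cutoffs-diff-fake-real} and \ref{lemma:weird-cutoffs-inhoms-rough-estimates}. The key input is the backward-integration representation \eqref{eq:uppsi-L-inversion-freq-space}--\eqref{eq:RW-cutoff-freq-inhom-k}, which for $s>0$ defines $\smlambdak{\uppsi}{+s}{k}$ and $\smlambdak{\mathfrak{G}}{+s}{k}$ for $k<k_0$ via iterated integrals against the unimodular phase $e^{\mu(r^*)-\mu(x^*)}$.

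Iterating \eqref{eq:uppsi-L-inversion-freq-space} $k_0$ times and applying Fubini's theorem, I would obtain a representation of the form
\[
\smlambdak{\uppsi}{+s}{0}(r_+^*) = e^{\mu(-\infty)}\int_{-\infty}^\infty e^{-\mu(y^*)}\, w(y^*)\, K_{k_0}(y^*)\, \smlambdak{\uppsi}{+s}{k_0}(y^*)\, dy^*,
\]
with bounded kernel $K_{k_0}(y^*)=\tfrac{1}{(k_0-1)!}\bigl(\int_{r_+}^{r(y^*)}(z^2+a^2)^{-1}dz\bigr)^{k_0-1}$ growing only logarithmically at infinity, so that $\int w|K_{k_0}|^2\,dr^*<\infty$; an analogous formula expresses $w^{-1}\smlambdak{\mathfrak{G}}{+s}{j}(-\infty)$ as an integral of $\smlambdak{\mathfrak{G}}{+s}{k_0}$. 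Since $|e^\mu|=1$ for real $\omega$, Cauchy--Schwarz in $y^*$ with weight $w$ yields
\[
\lp|\smlambdak{\uppsi}{+s}{0}(r_+^*)\rp|^2 \leq B(a_0)\int_{-\infty}^\infty w\,\lp|\smlambdak{\uppsi}{+s}{k_0}\rp|^2\,dr^*,
\]
and likewise for the inhomogeneity contribution. Because $T$ and $Z$ commute both with the transformed system and with the backward-integration kernels, the same estimates hold after commutation with any polynomial in $(T,Z)$; summing in $(m,l)\in\mc{F}_{\rm low}$ (or $\mc{F}_{\rm high}^c$) and integrating in $\omega$ against $(\omega^2+m^2+1)^{4|s|-k_0}$, Plancherel (Lemma~\ref{lemma:Plancherel}) identifies the right-hand side with a $(T,Z)$-Sobolev bulk integral of $\swei{\upphi}{+s}_{k_0,\cutt}$ and $\swei{\mathfrak H}{+s}_{k_0}$, weighted by $w$ and supported in $\mc{R}_{(\tau_0,\tau_{N+1})}$. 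A direct comparison with $\xi\swei{\upphi}{+s}_{k_0}$ via Lemma~\ref{lemma:weird-cutoffs-diff-fake-real}, control of the inhomogeneities through Lemma~\ref{lemma:weird-cutoffs-inhoms-rough-estimates}, and Proposition~\ref{prop:finite-in-time-first-order} then produce the desired bound by $\sum_i\overline{\mathbb{E}}^{4|s|-k-2}[\swei{\tilde\upphi}{+s}_k](\tau_{2i})$.

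For the sharper estimate \eqref{eq:weird-inverses-bdry-term-rp-best}, the hypothesis that $\swei{\breve{\mathfrak H}}{+s}_0$ from Lemma~\ref{lemma:teukolsky-cutoff-inhom-k} is compactly supported in $r^*$ allows me to replace the full backward integration by a representation over a compact $r^*$-interval, where frequency weights can be absorbed by bounded geometric coefficients without derivative loss; consequently no $(T,Z)$ commutations are required, and the estimate closes at the plain energy level $\mathbb{E}[\swei{\tilde\upphi}{+s}_k](\tau_{2i})$. The principal obstacle throughout is the precise bookkeeping of the exponent $4|s|-k-2$: each factor of $(\omega^2+m^2+1)$ on the left-hand side must be matched either against a commutation of $T$ or $Z$ applied directly to $\swei{\upphi}{+s}_{k_0,\cutt}$, or against the additional commutations required by Lemma~\ref{lemma:weird-cutoffs-inhoms-rough-estimates} to bound $\swei{\mathfrak H}{+s}_{k_0}$ in terms of lower-level energies, while verifying that the backward-integration kernel $K_{k_0}$---which carries no frequency weight---does not introduce any hidden loss. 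Once this accounting is carried out uniformly in $k_0\in\{0,\ldots,|s|\}$, the estimate follows.
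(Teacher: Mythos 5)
Your proposed Cauchy--Schwarz argument on the backward-integration formula has a circularity problem that makes it collapse. After iterating \eqref{eq:uppsi-L-inversion-freq-space} and applying Cauchy--Schwarz with weight $w$, the right-hand side becomes
\[
\int_{-\infty}^\infty w\,\big|\smlambdak{\uppsi}{+s}{k_0}\big|^2\,dr^*\,,
\]
and Plancherel identifies this, after summing in $(m,l)$ and integrating in $\omega$, with a \emph{full-slab} bulk integral $\int_{\mc R_{(\tau_0,\tau_{N+1})}}w\,|\swei{\upphi}{s}_{k_0,\cutt}|^2\,dr^*\,d\sigma\,d\tau$. Since $\swei{\upphi}{s}_{k_0,\cutt}=\xi\swei{\upphi}{s}_{k_0}$ by the construction in Section~\ref{sec:weird-cutoff-system}, this is not a quantity localized to $\supp(\nabla\xi)$: it is exactly the type of integrated-in-$\tau$ bulk one is trying to establish, and in the setting of Proposition~\ref{prop:ODE-to-PDE-future-int} (where $\tau_{N+1}=\infty$) there is no \emph{a priori} control over it by $\sum_i\mathbb{E}[\swei{\tilde\upphi}{+s}_k](\tau_{2i})$. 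Lemma~\ref{lemma:weird-cutoffs-diff-fake-real} only bounds the \emph{difference} $\xi\swei{\upphi}{s}_k-\swei{\upphi}{s}_{k,\cutt}$ for $k<k_0$, and Lemma~\ref{lemma:weird-cutoffs-inhoms-rough-estimates} bounds bulk integrals of the inhomogeneities precisely because the top-level inhomogeneity $\swei{\mathfrak H}{s}_{k_0}$ is supported in $\supp(\nabla\xi)$; neither result applies to the full spacetime bulk of $\swei{\upphi}{s}_{k_0,\cutt}$ itself.

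The paper's route avoids this by a genuinely different mechanism. It first compares $\smlambdak{\uppsi}{+s}{0}(-\infty)$ with the boundary value of the \emph{forward} cutoff $\smlk{\breve\uppsi}{+s}{0}$ (the radial function of $\xi\swei{\upphi}{+s}_0$), with the difference controlled by Lemma~\ref{lemma:weird-cutoffs-diff-fake-real}. When $\swei{\breve{\mathfrak H}}{s}_0$ is compactly supported in $r^*$, the inhomogeneities of the forward system decay fast enough near $\mc H^+$ to apply the Teukolsky--Starobinsky boundary estimate \eqref{eq:part1-bdry-withTS}, and the residual bulk term over a small annulus $r^*\in[R^*_{\rm bdry},R^*_{\rm bdry}+1]$ is absorbed via \eqref{eq:part1-ILED-withoutTS-top-low} by choosing $R^*_{\rm bdry}$ large; this gives \eqref{eq:weird-inverses-bdry-term-rp-best}. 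For the general case \eqref{eq:weird-inverses-bdry-term-rp}, since the inhomogeneities for the $+s$ system lack the required decay at $r=r_+$ (cf.\ Remark~\ref{rmk:decay-inhoms-weird-cutoff}), the paper constructs via the spin-reflection map $\mathfrak S_\theta$ and the Teukolsky--Starobinsky operator an associated spin-$(-s)$ solution whose inhomogeneities do decay sufficiently, applies \eqref{eq:part1-bdry-withTS} to it, and then translates back using the boundary relations in Proposition~\ref{prop:TS-radial-constant-identities} and Lemma~\ref{lemma:bdry-term-relations}; the derivative loss is exactly what produces the exponent $4|s|-k-2$. Your proposal engages with none of this structure (the Teukolsky--Starobinsky energy current, the spin reflection, the absorption with large $R^*_{\rm bdry}$), which is where the actual content of the lemma lies.
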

\begin{proof}  Let us first note that multiplying by $(\omega^2+m^2+1)$ has no effect as these frequencies are all bounded in $\mc F_{\rm high}^c$. To control the boundary terms due to the inhomogeneities, we can apply Plancherel together with the estimates of Lemma~\ref{lemma:weird-cutoffs-inhoms-rough-estimates}. Thus, it remains to estimate the boundary term of $\smlk{\uppsi}{+s}{0}$. 

We introduce the notation $\smlk{\breve \uppsi}{+s}{k}$ for the radial function obtained from  $\swei{\upphi}{+s}_{0,\cut}:=\xi \swei{\upphi}{+ s}_{0}$ according to \eqref{eq:def-sml-uppsi-G}. We have
\begin{align*}
&\int_{\mc F_{\rm high}^c}\sum_{(m,l)\in \mc F_{\rm high}^c}|\smlk{ \uppsi}{+s}{0}|^2(-\infty)d\omega\\
&\quad\leq \int_{\mc F_{\rm high}^c}\sum_{(m,l)\in \mc F_{\rm high}^c}|\smlk{\breve \uppsi}{+s}{0}|^2(-\infty)d\omega + \int_{\mc H^+_{(\tau_0,\infty)}}|\xi\swei{\upphi}{+ s}_{0}-\swei{\upphi}{+s}_{0,\cutt}|^2d\tau d\sigma\\
&\quad\leq B\int_{\mc F_{\rm high}^c}\sum_{(m,l)\in \mc F_{\rm high}^c}|\smlk{\breve \uppsi}{+s}{0}|^2(-\infty)d\omega + B\sum_{i=1}^N\sum_{k=0}^{|s|}\mathbb{E}[\swei{\tilde\upphi}{s}_k](\tau_i)\,,
\end{align*}
by the estimates of Lemma~\ref{lemma:weird-cutoffs-diff-fake-real}. 

If $\breve{\mathfrak{H}}_0$ is compactly supported in $r^*$, then the remaining inhomogeneities $\breve{\mathfrak{H}}_k$ obtained by the forward cutoffs of Lemma~\ref{lemma:teukolsky-cutoff-inhom-k} are themselves compactly supported in $r^*$. Thus,  $\upphi_{k,\cut}:= (w^{-1}\uL)^k\breve{\upphi}_0$ and $\breve{\mathfrak{H}}_k$ are outgoing and sufficiently integrable solutions of the transformed system of Definition~\ref{def:transformed-system}, where moreover  $\breve{\mathfrak{H}}_k$ has strong decay at $r=r_+$. By Theorem~\ref{thm:ODE-ILED}\ref{it:thm-ODE-ILED-bdry-term}, \eqref{eq:part1-bdry-withTS} holds. Summing in $(\omega,m,l)$ and using the estimates of Lemma~\ref{lemma:normal-current-errors-nonpeeling}, we have
\begin{align*}
&\int_{\mc F_{\rm low}}\sum_{(m,l)\in \mc F_{\rm low}}|\smlk{\breve \uppsi}{+s}{0}|^2(-\infty)d\omega \\
&\quad\leq B(a_0)\sum_{i=0}^{N-1}\lp(\sum_{k=0}^{|s|}\mathbb{E}[\tilde\upphi_k](\tau_{2i}) +\int_{\mc H^+_{(\tau_{2i},\tau_{2i+1})}}|\Phi|d\tau d\sigma\rp)\\ &\qquad +B(a_0)\sum_{k=0}^{|s|}\int_{\mc F_{\rm low}}\sum_{(m,l)\in \mc F_{\rm low}}\int_{R^*_{\rm bdry}}^{R^*_{\rm bdry}+1}\lp(|\p_{r^*}(w^{-1}\uL)^k(\smlk{\breve\uppsi}{s}{0})|^2+|(w^{-1}\uL)^k(\smlk{\breve\uppsi}{s}{0})|^2\rp)dr^*d\omega\,.
\end{align*}
We then appeal to the estimates of Lemma~\ref{lemma:weird-cutoffs-diff-fake-real} to replace the transformed variables in last term by $\smlk{\uppsi}{s}{k}$; then,  \eqref{eq:part1-ILED-withoutTS-top-low} in Theorem~\ref{thm:ODE-ILED}\ref{it:thm-ODE-ILED-bdry-term} applies, so combining this with Lemma~\ref{lemma:normal-current-errors-nonpeeling}, we obtain
\begin{align*}
&\int_{\mc F_{\rm low}}\sum_{(m,l)\in \mc F_{\rm high}^c}|\smlk{\breve \uppsi}{+s}{0}|^2(-\infty)d\omega \\
&\quad\leq B(a_0)\sum_{i=0}^{N-1}\lp(\sum_{k=0}^{|s|}{\mathbb{E}}[\tilde\upphi_k](\tau_{2i}) +\int_{\mc H^+_{(\tau_{2i},\tau_{2i+1})}}|\Phi|d\tau d\sigma\rp)\\ &\quad\qquad +B(a_0)\sum_{k=0}^{|s|}\int_{\mc F_{\rm low}}\sum_{(m,l)\in \mc F_{\rm low}}\int_{R^*_{\rm bdry}}^{R^*_{\rm bdry}+1}\lp(|(\smlk{\uppsi}{s}{k})'|^2+|\smlk{\uppsi}{s}{k}|^2\rp)dr^*d\omega\\
&\qquad \leq B(a_0)\sum_{i=0}^{N-1}\lp(\sum_{k=0}^{|s|}{\mathbb{E}}[\tilde\upphi_k](\tau_{2i}) +\int_{\mc H^+_{(\tau_{2i},\tau_{2i+1})}}|\Phi|d\tau d\sigma\rp) + \frac{B(a_0)}{R_{\rm bdry}^*}\int_{\mc F_{\rm low}}\sum_{(m,l)\in \mc F_{\rm low}}|\smlk{\breve \uppsi}{+s}{0}|^2(-\infty)d\omega\,,
\end{align*}
where we now take $R_{\rm bdry}^*$ to be sufficiently large (thus finally fixing the frequency space partition parameters) to absorb the last term into the left hand side.

If we do not make assumptions on the support of $\upphi_{k}$,  then in general we do not have enough decay in the inhomogeneities to justify applying the Teukolsky--Starobinsky energy current to control the remaining boundary term, see Remark~\ref{rmk:decay-inhoms-weird-cutoff}. Thus, we try to apply the Teukolsky--Starobinsky identities in a roundabout way. We denote by $\mathfrak{S}_\theta\colon\mathscr S^{[s]}_\infty(\mc R)\to \mathscr S^{[-s]}_\infty(\mc R)$ the a map which acts by $\mathfrak{S}_\theta\colon f(t,r,\theta,\phi)\mapsto f(t,r,\pi-\theta,\phi)$. From our outgoing solution $\swei{\upphi}{+s}_0$ to the homogeneous transformed PDE \eqref{eq:transformed-k} with $k=0$ and spin $+s$, we can produce, by (the proof of) Proposition~\ref{prop:TS-radial-constant-identities}, an outgoing solution to the homogeneous transformed PDE \eqref{eq:transformed-k} with $k=0$ and spin $-s$, given by
\begin{align*}
\swei{\upphi}{-s}_0:=\mathfrak{S}_{\theta}(r^2+a^2)^{1/2-s}\Delta^s\lp(\frac{r^2+a^2}{\Delta}\uL\rp)^{2s}\lp((r^2+a^2)^{s-1/2}\swei{\upphi}{+s}_0\rp)\,.
\end{align*}
Specifically, we can compute
\begin{align*}
\swei{\tilde\upphi}{-1}_0&=\frac{r^2+a^2}{\Delta}\mathfrak{S}_{\theta}\uL\swei{\tilde\upphi}{+1}_1+\frac{a^2}{(r^2+a^2)^{3/2}}\mathfrak{S}_{\theta}\swei{\tilde\upphi}{+1}_0\\
\swei{\tilde\upphi}{-2}_0&=\mathfrak{S}_{\theta}\lp(\frac{r^2+a^2}{\Delta}\uL\rp)^2\swei{\tilde\upphi}{+2}_2-\frac{5r}{r^2+a^2}\lp(\frac{r^2+a^2}{\Delta}\uL\rp)\mathfrak{S}_{\theta}\swei{\tilde\upphi}{+2}_2+\frac{10}{r^2+a^2}\mathfrak{S}_{\theta}\swei{\tilde\upphi}{+2}_2\numberthis \label{eq:fake-minus-variables}\\
&\qquad-\frac{8r}{(r^2+a^2)^{3/2}}\mathfrak{S}_{\theta}\swei{\tilde\upphi}{+2}_1+\frac{a^2(2r^2+5a^2)}{(r^2+a^2)^3}\mathfrak{S}_{\theta}\swei{\tilde\upphi}{+2}_0\,.
\end{align*}
Let $\swei{\upphi}{- s}_{0,\cut}=\xi\swei{\upphi}{- s}_{0}$; this function solves an \textit{inhomogeneous} transformed PDE \eqref{eq:transformed-k} with $k=0$, spin $-s$, and inhomogeneities
\begin{align*}
\swei{\mathfrak{H}}{-s}_0=([\mathfrak{R}^{[-s]},\xi]-\underline{\mc L}\xi)\mc L\swei{\upphi}{- s}_{0}+w\underline{\mc L}\xi\swei{\upphi}{- s}_{1}\,,\qquad \swei{\mathfrak{h}}{-s}_0= -\underline{\mc L}\mc L\xi\swei{\upphi}{- s}_{0}-\mc L\xi\underline{\mc L}\swei{\upphi}{- s}_{0}\,.
\end{align*}
Since, by assumption, $\xi\swei{\upphi}{+s}_{0}$ satisfies the conditions in Definition~\ref{def:suf-integrability} and \ref{def:outgoing-bdry-phys-space}, so will $\swei{\upphi}{- s}_{0,\cut}$; thus $\swei{\mathfrak{H}}{-s}_0$ verifies the support condition in Definition~\ref{def:outgoing-bdry-phys-space} and the estimates of Section~\ref{sec:toolbox-physical-space} yield that condition \eqref{eq:integrability-Hk} is verified. Let $\smlk{\breve \uppsi}{-s}{0}$ and $\sml{\breve \Psi}{-s}$ be the radial functions defined from $\swei{\upphi}{- s}_{0,\cut}$ and $(w^{-1}L)^{|s|}\swei{\upphi}{- s}_{0,\cut}$, respectively, via \eqref{eq:def-sml-uppsi-G}, and $\sml{\Psi}{-s}$ (without the accent) be defined from $\swei{\Psi}{-s}_\cutt=\xi\swei{\Psi}{-s}$ in a similar fashion. Since the decay of $\swei{\mathfrak{H}}{-s}_0$ and $\swei{\mathfrak{h}}{-s}_0$ is strong enough for \eqref{eq:part1-bdry-withTS} in Theorem~\ref{thm:ODE-ILED} and the precise boundary relations in Lemma~\ref{lemma:bdry-term-relations} to hold, we apply these results and, after summing in $(\omega,m,l)$, we have 
\begin{align*}
&\int_{\mc F_{\rm low}}\sum_{(m,l)\in \mc F_{\rm low}}\frac{\mathfrak{C}_s^{(2)}}{\mathfrak{C}_s}\lp|\smlk{\breve \uppsi}{-s}{0}\rp|^2(-\infty)d\omega\\
&\quad \leq B(a_0)\int_{\mc F_{\rm low}}\sum_{(m,l)\in \mc F_{\rm low}}\frac{\mathfrak{C}_s^{(9)}}{\mathfrak{C}_s}\lp|\sml{\breve \Psi}{-s}\rp|^2(-\infty)d\omega\\
&\quad \leq B(a_0) \int_{\mc F_{\rm low}}\sum_{(m,l)\in \mc F_{\rm low}}\lp|\sml{\Psi}{-s}\rp|^2(-\infty)d\omega \\
&\qquad + B(a_0)\int_{\mc H^+_{(\tau_0,\infty)}} |(w^{-1}L)^{|s|}(\xi\swei{\upphi}{-s}_0)-\xi\swei{\Phi}{-s}|^2d\sigma d\tau\\
&\quad \leq B(a_0)\sum_{i=0}^{N-1}\sum_{k=0}^{|s|-1}\overline{\mathbb{E}}[\swei{\tilde\upphi}{-s}_k](\tau_{2i}) \leq B(a_0)\sum_{i=0}^{N-1}\sum_{k=0}^{|s|}\overline{\mathbb{E}}^{|s|+k}[\swei{\tilde\upphi}{+s}_k](\tau_{2i})\,,
\end{align*}
where we have used Lemma~\ref{lemma:normal-current-errors-nonpeeling} and Proposition~\ref{prop:ODE-ILED-compact-r-mode-stab} to control the right hand side of \eqref{eq:part1-bdry-withTS}, and the identities \eqref{eq:fake-minus-variables} to conclude.

To conclude, we need to relate the boundary terms of $\smlk{\breve \uppsi}{\pm s}{0}$. To this end, we introduce yet another set of functions:
\begin{align*}
\swei{\hat \upphi}{+s}_{0,\cut}:=\mathfrak{S}_{\theta}(r^2+a^2)^{1/2-s}\Delta^s\lp(\frac{r^2+a^2}{\Delta}L\rp)^{2s}\lp((r^2+a^2)^{s-1/2}\swei{\upphi}{-s}_{0,\cut}\rp)
\end{align*}
 is an outgoing solution to an inhomogeneous transformed PDE \eqref{eq:transformed-k} and we let $\smlk{\hat \uppsi}{s}{0}$ be its radial part defined from $\swei{\hat \uppsi}{s}_{0}$ via \eqref{eq:def-sml-uppsi-G}. Since $\mathfrak{C}_s$ and $\mathfrak{C}_s^{(2)}$ do not depend on the sign of $s$, we have the identity
\begin{align*}
\smlk{\breve\uppsi}{+s}{0}(-\infty)&= \lp(\smlk{\breve \uppsi}{+s}{0}-\frac{1}{\mathfrak{C}_s}\smlk{\hat\uppsi}{+s}{0}\rp)(-\infty)+\frac{1}{\mathfrak{C}_s}\smlk{\hat\uppsi}{+s}{0}(-\infty)\\
&=\frac{1}{\mathfrak{C}_s}\lp(\mathfrak{C}_s\smlk{\breve \uppsi}{+s}{0}-\smlk{\hat\uppsi}{+s}{0}\rp)(-\infty)+\frac{\mathfrak{C}_s^{(2)}}{\mathfrak{C}_s}\smlk{\breve \uppsi}{-s}{0}(-\infty)\,, 
\end{align*}
where we used that, as stated above, the spin $-s$ inhomogeneity decays sufficiently fast as $r\to r_+$ for the boundary relations at $r=r_+$ of Proposition~\ref{prop:TS-radial-constant-identities} to hold. As the second term has already been estimated in $\mc F_{\rm high}^c$, to conclude we just compute
\begin{align*}
&\int_{-\infty}^\infty\sum_{ml}\lp|\mathfrak{C}_s\smlk{\breve \uppsi}{+s}{0}-\smlk{\hat\uppsi}{+s}{0}\rp|^2(-\infty)d\omega\\
&\quad = \int_{\mc H^+_{(0,\infty)}}(r^2+a^2)^{1/2-s}\Delta^s\lp(\frac{r^2+a^2}{\Delta}L\rp)^{2s}\lp\{(r^2+a^2)^{s-1/2}\rp. \times\\
&\quad\qquad\qquad \qquad \lp. \times\lp[(r^2+a^2)^{1/2-s}\Delta^s\lp(\frac{r^2+a^2}{\Delta}\uL\rp)^{2s},\xi\rp]\lp((r^2+a^2)^{s-1/2}\swei{\upphi}{+s}_0\rp)\rp\} dtd\sigma\\
&\quad \leq B\sum_{i=0}^{N-1}\sum_{k=0}^{|s|}\overline{\mathbb{E}}^{2|s|+k-1}[\swei{\tilde \upphi}{+s}_k](\tau_{2i})\,,
\end{align*}
by using the finite-in-time estimates.
\end{proof}

\subsubsection{Putting everything together}
\label{sec:hyp-cutoffs-conclusion}

We are finally ready to prove Propositions~\ref{prop:ODE-to-PDE-future-int} and \ref{prop:ODE-to-PDE-non-future-int}: 

\begin{proof}[Proof of Propositions~\ref{prop:ODE-to-PDE-future-int} and \ref{prop:ODE-to-PDE-non-future-int}] Recall that, if $\upphi_{k}$ denote solutions to the homogeneous transformed system for $k=0,\dots,|s|$, then $\upphi_{k,\cutt}$, $\mathfrak{H}_k$ and $\mathfrak{h}_k$ denote the solution and inhomgeneities, respectively, to the inhomogeneous transformed system introduced in Section~\ref{sec:weird-cutoff-system} for some choice of $k_0\in\{0,\dots,|s|\}$. By Lemma~\ref{lemma:weird-inverses}, this system is outgoing and sufficiently integrable, so by Lemma~\ref{lemma:reduction-classical-odes} it gives rise to a frequency localized system, for $\uppsi_{k}$, $\mathfrak{G}_k$ and $\mathfrak{g}_k$, which satisfies the conditions of Theorem~\ref{thm:ODE-ILED} from our previous \cite{SRTdC2020}: in particular,  \eqref{eq:part1-ILED-withoutTS-top} and \eqref{eq:part1-ILED-withoutTS-bottom} hold, hence so do their summed-in-$(\omega,m,l)$ versions.
 
\medskip
\noindent\textit{Case 1: $\breve{\mathfrak{H}}_{0}$ is compactly supported in $r^*$.} Let us assume, for the moment, that for $k=0,\dots,|s|$ the homogeneous solutions $\upphi_k$ and $\xi$ are such that $\breve{\mathfrak{H}}_{0}$ from \eqref{eq:teukolsky-cutoff-inhom-k} have compact $r^*$ support. We begin by setting $k_0=|s|$ in the construction in Section~\ref{sec:weird-cutoff-system}, in particular in Lemma~\ref{lemma:weird-inverses}.  We combine the summed-in-$(\omega,m,l)$ \eqref{eq:part1-ILED-withoutTS-top} with Proposition~\ref{prop:ODE-ILED-compact-r-mode-stab}, to control the bulk terms in a compact $r^*$ range on the right hand side of the former estimate. For the errors in \eqref{eq:part1-ILED-withoutTS-top} involving boundary terms at $r=r_+$ when $s>0$, we employ \eqref{eq:weird-inverses-bdry-term-rp-best} from Lemma~\ref{lemma:weird-inverses-bdry-term-rp}. Then we invoke \eqref{eq:current-errors-top} in Lemma~\ref{lemma:normal-current-errors-nonpeeling}, to control the separated current errors; as long as $|a|\leq a_0<M$, choosing $\varepsilon$ sufficiently small there depending on $a_0$, we have
\begin{align*}
&\int_{\mc{R}_{(\tau_0,\tau_{N+1})}} \frac{1}{r^2}\lp(|\Phi_\cutt'|^2+|T\mc{P}_{\rm trap}[\Phi_{\cutt}]|^2+\frac{1}{r}|\mathring{\slashed\nabla}^{[s]}\mc{P}_{\rm trap}[\Phi_{\cutt}]|^2+\frac{\lp(s^2+r^{-1}\rp)}{r}|\Phi_\cutt|^2\rp)dr  d\sigma d\tau\\
&\qquad +\int_{\mc{R}_{(\tau_0,\tau_{N+1})}}\lp(\frac{1}{r^{\frac 52}}|T^{\frac12}\Phi_{\cutt}|^2+\frac{1}{r^3}|Z^{\frac12}\Phi_{\cutt}|^2\rp)dr  d\sigma d\tau\\
&\quad\leq B(a_0)\sum_{i=0}^{N-1}\Big(\sum_{k=0}^{|s|}{\mathbb{E}}[\tilde\upphi_{k}](\tau_{2i})+\int_{\mc H^+_{(\tau_{2i},\tau_{2i+1})}}|\Phi|^2d\tau d\sigma +\int_{\mc I^+_{(\tau_{2i},\tau_{2i+1})}}|\Phi|^2d\tau d\sigma \Big)\,,\numberthis\label{eq:hyp-conclusion-1}\\
&\sum_{k=0}^{|s|-1}\int_{\mc{R}_{(\tau_0,\tau_{N+1})}} \frac{1}{r^3}\lp(r|\upphi_{k,\cutt}'|^2+r|T\upphi_{k,\cutt}|^2+|\mathring{\slashed\nabla}^{[s]}\upphi_{k,\cutt}|^2+|\upphi_{k,\cutt}|^2+|Z^{\frac32}\upphi_{k,\cutt}|^2+r^{\frac12}|T^{\frac32}\upphi_{k,\cutt}|^2\rp)dr  d\sigma d\tau\\
&\quad\leq B(a_0)\sum_{i=0}^{N-1}\Big(\sum_{k=0}^{|s|}{\mathbb{E}}[\tilde\upphi_{k}](\tau_{2i}) +\int_{\mc H^+_{(\tau_{2i},\tau_{2i+1})}}|\Phi|^2d\tau d\sigma +\int_{\mc I^+_{(\tau_{2i},\tau_{2i+1})}}|\Phi|^2d\tau d\sigma \Big)\,,\numberthis\label{eq:hyp-conclusion-2}
\end{align*}
where $\mathbb{E}$ must be replaced by $\overline{\mathbb{E}}$ or $\mathbb{E}_2$ depending on our choices in Theorem~\ref{thm:ODE-ILED}. In \eqref{eq:hyp-conclusion-1}, we can drop the scissors on the left hand side by using finite-in-time energy estimates, as in Proposition~\ref{prop:finite-in-time-first-order}, in every term except where $\Phi_{\cutt}$ is acted on by the microlocal operators $\mc P_{\rm trap}$, $T^{\frac12}$ and $Z^{\frac12}$. In \eqref{eq:hyp-conclusion-1}, we may also drop the scissors, in all but the terms where $\upphi_{k,\cutt}$ is acted on by $T^{\frac32}$ and $Z^{\frac32}$, by using a combination of Lemma~\ref{lemma:weird-cutoffs-diff-fake-real} and finite-in-time estimates. 

Note that, for the $k<|s|$ quantities, estimate \eqref{eq:hyp-conclusion-2} is not at the right level of regularity. Indeed, in view of our reliance on transport estimates in Lemmas~\ref{lemma:weird-cutoffs-diff-fake-real} and \ref{lemma:weird-cutoffs-inhoms-rough-estimates}, which crucially go into the proof of the latter, we cannot expect the argument above not to lose derivatives. To gain them back, we iterate the above procedure by considering a new inhomogeneous transformed system now produced by taking $k_0=|s|-1$, then $k_0=|s|-2$, etc. Thus, fix some $k_0<|s|$ and define an outgoing and sufficiently integrable transformed system via the construction in Section~\ref{sec:weird-cutoff-system}, in particular in Lemma~\ref{lemma:weird-inverses}. As before, we combine the summed-in-$(\omega,m,l)$ \eqref{eq:part1-ILED-withoutTS-top} from Theorem~\ref{thm:ODE-ILED} with Proposition~\ref{prop:ODE-ILED-compact-r-mode-stab}, to control the bulk terms in a compact $r^*$ range on the right hand side of the former estimate, and \eqref{eq:weird-inverses-bdry-term-rp-best} from Lemma~\ref{lemma:weird-inverses-bdry-term-rp} for the boundary terms at $r=r_+$ when $s>0$. We combine this estimate with \eqref{eq:current-errors-bottom} in Lemma~\ref{lemma:normal-current-errors-nonpeeling} for the current errors, taking $\varepsilon>0$ sufficiently small so that the first two terms in the latter inequality may be absorbed by the the summed-in-$(\omega,m,l)$ left hand side of  \eqref{eq:part1-ILED-withoutTS-bottom}, and by Lemma~\ref{lemma:Plancherel} obtain
\begin{align*}
&\int_{\mc{R}_{(\tau_0,\tau_{N+1})}} \frac{1}{r^2}\lp(|\upphi_{k_0,\cutt}''|^2+|\upphi_{k_0,\cutt}'|^2+r^{-1}|\mathring{\slashed\nabla}^{[s]}\upphi_{k_0,\cutt}'|^2+|\upphi_{k_0,\cutt}'|^2\rp)dr  d\sigma d\tau\\
&\qquad+\int_{\mc{R}_{(\tau_0,\tau_{N+1})}} \frac{1}{r^2}\lp(|\upphi_{k_0,\cutt}|^2+r^{-1}|\mathring{\slashed\nabla}^{[s]}\upphi_{k_0,\cutt}|^2+r^{-1}|\upphi_{k_0,\cutt}|^2\rp)dr \\
&\quad\leq B(a_0)\sum_{i=0}^{N-1}\Big(\sum_{j=0}^{k_0}\mathbb{E}^{1}[\tilde\upphi_{j}](\tau_i)+\mathbb{E}[\tilde\upphi_{k_0+1}](\tau_{2i})\Big) \\
&\quad\qquad + B(a_0)\int_{\mc{R}_{(\tau_0,\tau_{N+1})}}\frac{1}{r^3}\lp(r|\upphi_{k_0+1,\cutt}'|^2+r^{\frac12}|T^{1/2}\upphi_{k_0+1,\cutt}|^2+|Z^{1/2}\upphi_{k_0+1,\cutt}|^2+|\upphi_{k_0+1,\cutt} |^2 \rp)dr d\sigma d\tau\,,
\end{align*}
with the same caveats on the $\mathbb{E}$ on the right hand side as before. We may drop the scissors in the previous estimate by appealing to the finite in time energy estimates, as in Proposition~\ref{prop:finite-in-time-first-order}. Then, we commute with $T$ and $Z$ to get
\begin{align*}
&\sum_{J_1+J_2=0}^{|s|-k_0-1}\int_{\mc{R}_{(\tau_0,\tau_{N+1})}} \frac{1}{r^2}\lp(|T^{J_1}Z^{J_2}\upphi_{k_0}''|^2+|T^{J_1+1}Z^{J_2}\upphi_{k_0}'|^2+r^{-1}|T^{J_1}Z^{J_2}\mathring{\slashed\nabla}^{[s]}\upphi_{k_0}'|^2+|T^{J_1}Z^{J_2}\upphi_{k_0}'|^2\rp)dr  d\sigma d\tau\\
&\quad+\sum_{J_1+J_2=0}^{|s|-k_0-1}\int_{\mc{R}_{(\tau_0,\tau_{N+1})}} \frac{1}{r^2}\lp(|T^{J_1+1}Z^{J_2}\upphi_{k_0}|^2+r^{-1}|T^{J_1}Z^{J_2}\mathring{\slashed\nabla}^{[s]}\upphi_{k_0}|^2+r^{-1}|T^{J_1}Z^{J_2}\upphi_{k_0}|^2\rp)dr \\
&\quad\leq B(a_0)\sum_{i=0}^{N-1}\sum_{j=0}^{k_0+1}\mathbb{E}^{|s|-j}[\tilde\upphi_{j}](\tau_{2i}) \\
&\qquad\quad + B(a_0)\sum_{J_1+J_2=0}^{|s|-k_0-1}\int_{\mc{R}_{(\tau_0,\tau_{N+1})}}\frac{1}{r^2}\lp(|T^{J_1}Z^{J_2}\upphi_{k_0+1,\cutt}'|^2+r^{-1}|T^{J_1}Z^{J_2}\upphi_{k_0+1,\cutt} |^2 \rp)dr d\sigma d\tau\\
&\qquad\quad + B(a_0)\sum_{J_1+J_2=0}^{|s|-k_0-1}\int_{\mc{R}_{(\tau_0,\tau_{N+1})}}\frac{1}{r^2}\lp(r^{-\frac12}|T^{J_1+1/2}Z^{J_2}\upphi_{k_0+1,\cutt}'|^2+r^{-1}|T^{J_1}Z^{J_2+1/2}\upphi_{k_0+1,\cutt} |^2 \rp)dr d\sigma d\tau\,.
\end{align*}
Iterating the estimate for different $k_0$, we get
\begin{align*}
&\sum_{k=0}^{|s|-1}\sum_{J_1+J_2=0}^{|s|-k-1}\int_{\mc{R}_{(\tau_0,\tau_{N+1})}} \frac{1}{r^2}\lp(|T^{J_1}Z^{J_2}\upphi_{k}''|^2+|T^{J_1+1}Z^{J_2}\upphi_k'|^2+r^{-1}|T^{J_1}Z^{J_2}\mathring{\slashed\nabla}^{[s]}\upphi_k'|^2+|T^{J_1}Z^{J_2}\upphi_k'|^2\rp)dr  d\sigma d\tau\\
&\quad+\sum_{J_1+J_2=0}^{|s|-k-1}\int_{\mc{R}_{(\tau_0,\tau_{N+1})}} \frac{1}{r^2}\lp(|T^{J_1+1}Z^{J_2}\upphi_k|^2+r^{-1}|T^{J_1}Z^{J_2}\mathring{\slashed\nabla}^{[s]}\upphi_k|^2+r^{-1}|T^{J_1}Z^{J_2}\upphi_k|^2\rp)dr \\
&\quad\leq B(a_0)\sum_{i=0}^{N-1}\sum_{j=0}^{|s|}\mathbb{E}^{|s|-k}[\tilde\upphi_{j}](\tau_{2i})\\
&\quad\qquad + B(a_0)\int_{\mc{R}_{(\tau_0,\tau_{N+1})}}\frac{1}{r^3}\lp(r|\Phi_{\cutt}'|^2+r^{\frac12}|T^{1/2}\Phi_{\cutt}|^2+|Z^{1/2}\Phi_{\cutt}|^2+|\Phi_{\cutt} |^2 \rp)dr d\sigma d\tau\,.\numberthis\label{eq:hyp-conclusion-3}
\end{align*}

\medskip
\noindent\textit{Case 2: no support assumptions.} If we cannot assume that $\breve{\mathfrak{H}}_{0}$ has compact $r^*$ support, then the procedure above still applies with one difference: to control the boundary terms in \eqref{eq:part1-ILED-withoutTS-bottom} we must replace \eqref{eq:weird-inverses-bdry-term-rp-best} with the worse estimates \eqref{eq:weird-inverses-bdry-term-rp} from Lemma~\ref{lemma:weird-inverses-bdry-term-rp}. To keep the estimates having the same regularity on the left and right hand side, we commute the estimate obtained by combining \eqref{eq:part1-ILED-withoutTS-top}, Proposition~\ref{prop:ODE-ILED-compact-r-mode-stab}, and \eqref{eq:current-errors-bottom} with $T$ and $Z$ a few more times. 

\medskip
\noindent\textit{Conclusion.} For Proposition~\ref{prop:ODE-to-PDE-future-int}, we consider a cutoff $\xi$ where, in the notation of the start of the section,  $N=1$, $\tau_0=0$, $\tau_1=1$, and $\tau_{2}=\infty$. By Lemma~\ref{lemma:reduction-argument} we can assume $\mathfrak{\upphi}_{k}\big|_{\mc R_{(0,1)}}$ is compactly supported in $r^*$ for every $k=0,\dots, |s|$. Thus, we are the setting of Case 1 applies and furthermore we have
\begin{align*}
\int_{\mc H^+_{(0,1)}}|\Phi|^2d\tau d\sigma +\int_{\mc I^+_{(0,1)}}|\Phi|^2d\tau d\sigma = 0\,.\numberthis\label{eq:summing-ODE-ILED-intermediate-3}
\end{align*}
In this case, we should choose $\overline{\mathbb E}$ in lieu of $\mathbb{E}$ on the right hand side of \eqref{eq:hyp-conclusion-1}, \eqref{eq:hyp-conclusion-2} and \eqref{eq:hyp-conclusion-3}.

For Proposition~\ref{prop:ODE-to-PDE-non-future-int}, take  $N=2$, $\tau_0=0$, $\tau_1=1$, and $\tau_{2}=\tau-1$ and $\tau_2=\tau$ for some $\tau>0$. In this case, we cannot assume $\mathfrak{\upphi}_{k}\big|_{\mc R_{(0,\tau)}}$ is compactly supported in $r^*$, and so though we still have \eqref{eq:summing-ODE-ILED-intermediate-3}, the same is not true for the integrals over $\mc H^+_{(\tau-1,\tau)}$ and $\mc I^+_{(\tau-1,\tau)}$. Nevertheless, in the first case we may appeal to the finite in time energy estimates of Proposition~\ref{prop:finite-in-time-first-order} to obtain 
\begin{align*}
\int_{\mc H^+_{(\tau-1,\tau)}}|\Phi|^2d\tau d\sigma \leq B\overline{\mathbb{E}}[\Phi](\tau)\,;
\end{align*}
in the second case we can use the fundamental theorem of calculus and Jensen's inequality to obtain
\begin{align*}
\int_{\mc I^+_{(\tau-1,\tau)}}|\Phi|^2d\tau d\sigma \leq \int_{\mc I^+_{(\tau-1,\tau)}}\lp|\int_{v_0}^\infty r^p L\Phi r^p dv\rp|^2d\tau d\sigma \leq B\int_{\mc R_{(\tau-1,\tau)}}r^p|L\Phi|^2 dr d\tau d\sigma \leq \mathbb{E}_p[\Phi](\tau)\,,
\end{align*}
for some $p>1$.
\end{proof}

\subsection{An inhomogeneous system from radial cutoffs}
\label{sec:cutoffs-radial} 
 
Finally, we prove Proposition~\ref{prop:ODE-to-PDE-radial-cutoff-ILED}. 

\begin{proof}[Proof of Proposition~\ref{prop:ODE-to-PDE-radial-cutoff-ILED}]  Recall that, by assumption, $\upphi_k$ denote homogeneous solutions to the transformed system of Definition~\ref{def:transformed-system} which are sufficiently integrable. By cutting them off in the radial direction, we obtain new transformed variables
\begin{align*}
\swei{\upphi}{s}_{k,\cutr}=\lp(w^{-1}\mc L\rp)^{-1}\lp(\chi_R \swei{\upphi}{s}_0\rp)\,, \numberthis\label{eq:forward-radial-cutoffs}
\end{align*}
which are outgoing, as per Definition~\ref{def:outgoing-bdry-phys-space}. 

The proof  is entirely analogous to those of Propositions~\ref{prop:ODE-to-PDE-future-int} and \ref{prop:ODE-to-PDE-non-future-int} for hyperboloidal cutoffs: we take the estimate in Theorem~\ref{thm:ODE-ILED}(iii) and bound the error terms, in this case only due to inhomogeneity errors, by initial data.  In the context of the aforementioned propositions, such terms were difficult to treat because one had to be careful with the $r$ weights placed on the initial energy norms, see Section~\ref{sec:current-errors} and Remark~\ref{rmk:comparison-cutoff-types}. In the present setting, since the functions involved are all compactly supported in $r^*$, the arguments used can be vastly simplified: in particular, considering forwards cutoffs in the style of \eqref{eq:forward-radial-cutoffs} are enough to conclude.
\end{proof}
 
\section{Proof of the main theorem}
\label{sec:proof-main-thm}

In this section, we establish our main result, namely integrated local energy decay, energy boundedness, as well as energy and pointwise decay, for solutions which are sufficiently integrable in the entire region to the future of the data on $\Sigma_0$.

\subsection{A first estimate and the notion of future integrability}

We begin by stating a first estimate which follows easily from our work in Section~\ref{sec:physical-space-estimates}:
\begin{lemma}[A general estimate] \label{lemma:ILED-for-non-future-int}
For any $p\in(1,2)$, we have
\begin{align*}
&\sum_{k=0}^{|s|}\lp(\overline{\mathbb{I}}_p^{\mathrm{deg},J+|s|-k}[\swei{\tilde\upphi}{s}_k](0,\tau)+\overline{\mathbb{E}}_p^{J+|s|-k}\swei{\tilde\upphi}{s}_k](\tau)\rp) \leq  B\sum_{k=0}^{|s|}\lp(\overline{\mathbb{E}}_p^{J+|s|-k}[\swei{\tilde\upphi}{s}_k](\tau)+\overline{\mathbb{E}}_p^{J+|s|-k}[\swei{\tilde\upphi}{s}_k](0)\rp) \,, \numberthis \label{eq:ILED-non-futurint-deg}\\
&\sum_{k=0}^{|s|}\lp(\overline{\mathbb{I}}_2^{J+|s|-k}[\swei{\tilde\upphi}{s}_k](0,\tau)+\overline{\mathbb{E}}_2^{J+|s|-k+1}[\swei{\tilde\upphi}{s}_k](\tau)\rp) \\
&\quad \leq  B\sum_{k=0}^{|s|}\lp(\overline{\mathbb{E}}_p^{J+|s|-k+1}[\swei{\tilde\upphi}{s}_k](\tau)+\overline{\mathbb{E}}_p^{J+|s|-k+1}[\swei{\tilde\upphi}{s}_k](0)\rp) \numberthis\label{eq:ILED-non-futurint-nondeg}, 
\end{align*}
for $J\geq 3|s|$. 
\end{lemma}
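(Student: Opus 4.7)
The plan is to combine the uniform-in-frequency radial ODE estimates summed in Proposition~\ref{prop:ODE-to-PDE-non-future-int} with the higher-order physical-space recovery estimates of Proposition~\ref{prop:higher-order-bulk-flux}, and then to upgrade the $r$-weight from $p\in(1,2)$ to $p=2$ by a standard $r^p$-hierarchy argument, at the cost of one derivative.

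For \eqref{eq:ILED-non-futurint-deg}, I would begin at the base regularity $J=3|s|$, where the estimate is essentially a restatement of Proposition~\ref{prop:ODE-to-PDE-non-future-int}: that result already controls $\sum_{J_1+J_2\leq 3|s|}\mathbb{I}^{\rm deg}[T^{J_1}Z^{J_2}\Phi](0,\tau)$ together with analogous $TZ$-commuted bulk terms for the lower levels by the RHS $\sum_k\overline{\mathbb{E}}_p^{4|s|-k}[\tilde\upphi_k](\tau)+\overline{\mathbb{E}}_p^{4|s|-k}[\tilde\upphi_k](0)$. To pass to general $J\geq 3|s|$, I would commute the entire transformed system with the Killing fields $T$ and $Z$ (which preserve each $\swei{\mathfrak R}{s}_k$) and reapply Proposition~\ref{prop:ODE-to-PDE-non-future-int} to the commuted system, then invoke the elliptic estimates \eqref{eq:higher-order-ILED}--\eqref{eq:ILED-for-decay} of Proposition~\ref{prop:higher-order-bulk-flux} to promote $TZ$-control into full $X^*$ and $\mathring{\slashed\nabla}^{[s]}$-control at each level $\tilde\upphi_k$. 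The energy-flux term $\overline{\mathbb{E}}_p^{J+|s|-k}(\tau)$ on the LHS is carried through passively from the RHS, reflecting the fact that in the non-future-integrable setting one cannot close a flux estimate at time $\tau$ independently.

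For \eqref{eq:ILED-non-futurint-nondeg}, I would apply the $r^p$-multiplier identity of Lemma~\ref{lemma:phys-space-rp-multiplier-identity} at $p=2$ to each level $\tilde\upphi_k$ (with the $\dbtilde\upphi_k$ rescaling when $s>0$), exactly as in the proof of \eqref{eq:phys-space-rp-multiplier-consequence-s<0} and its analogues. Together with the transport estimates \eqref{eq:phys-space-transport-identity-s<0}--\eqref{eq:phys-space-transport-identity-s>0} this delivers, after commutation with $T$, $Z$ and the elliptic recovery from Proposition~\ref{prop:higher-order-bulk-flux},
\begin{align*}
\sum_{k=0}^{|s|}\bigl(\overline{\mathbb{I}}_2^{J+|s|-k}[\tilde\upphi_k](0,\tau)+\overline{\mathbb{E}}_2^{J+|s|-k+1}[\tilde\upphi_k](\tau)\bigr) \leq  B\sum_{k=0}^{|s|}\overline{\mathbb{E}}_2^{J+|s|-k+1}[\tilde\upphi_k](0)+ B\sum_{k=0}^{|s|}\overline{\mathbb{I}}_p^{\mathrm{deg},J+|s|-k+1}[\tilde\upphi_k](0,\tau),
\end{align*}
where the gain of one derivative on the RHS reflects the well-known degeneration of angular-derivative control in the $p=2$ current (the coefficient $2-p$ in the angular bulk term in $\mathbb{I}_p^{\mathrm{deg}}$ vanishes at $p=2$). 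Substituting \eqref{eq:ILED-non-futurint-deg} at commutation level $J+1$ into the last bulk term yields \eqref{eq:ILED-non-futurint-nondeg}.

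The principal obstacle, and the reason the argument is not purely mechanical, is the simultaneous management of all $|s|+1$ levels of the transformed hierarchy in the presence of trapping. The bulk control provided by Proposition~\ref{prop:ODE-to-PDE-non-future-int} is degenerate at $r_{\rm trap}$, so each commutation with $T$ or $Z$ must be paired with the elliptic identity built into \eqref{eq:higher-order-ILED} to convert trapped $TZ$-control into full derivative control, while the transport equations \eqref{eq:transformed-transport-tilde} between levels must be exploited in a consistent order (top-down for recovering lower-level norms and bottom-up for absorbing coupling error terms) so that the derivative count $J+|s|-k$ lines up across levels. Once this bookkeeping is in place, the $r^p$ upgrade is standard.
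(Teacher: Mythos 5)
Your proposal follows the paper's own (one-sentence) proof exactly: both estimates are obtained by combining the hyperboloidal-cutoff integrated estimates of Proposition~\ref{prop:ODE-to-PDE-non-future-int} with the higher-order flux/bulk recovery estimates of Proposition~\ref{prop:higher-order-bulk-flux}, using $T$- and $Z$-commutation plus elliptic recovery to pass from the $T,Z$-commuted degenerate bulk control to the full $\overline{\mathbb{I}}^{\mathrm{deg},J+|s|-k}_{-\delta,p}$ norms, and then the $r^p$-multiplier identities at $p=2$ to produce \eqref{eq:ILED-non-futurint-nondeg} at the cost of one derivative.

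One small accounting remark on your intermediate display for \eqref{eq:ILED-non-futurint-nondeg}: after substituting \eqref{eq:ILED-non-futurint-deg} at level $J+1$ into the last bulk term, your chain leaves $\overline{\mathbb{E}}_2^{J+|s|-k+1}[\tilde\upphi_k](0)$ (rather than $\overline{\mathbb{E}}_p^{J+|s|-k+1}[\tilde\upphi_k](0)$ with $p\in(1,2)$) as the initial-data term, since the $p=2$ current applied near $\mathcal I^+$ requires $p=2$-weighted data (cf.\ the $\mathbb{E}_p[\tilde\upphi_j](0)$ contribution in \eqref{eq:phys-space-rp-multiplier-consequence-s<0}). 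This is precisely what the $r^p$-hierarchy should yield, and the discrepancy with the statement's literal $p$-weighted data term is not addressed by the paper's terse proof either; your accounting is the natural one, but you should flag explicitly that the data term you obtain carries the $p=2$ weight.
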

\begin{proof}
The statement follows from combining the higher order estimates of Proposition~\ref{prop:higher-order-bulk-flux} with the integrated estimates of Proposition~\ref{prop:ODE-to-PDE-non-future-int}.
\end{proof}

Recalling the estimates of Propositions~\ref{prop:bulk-flux-large-r} and \ref{prop:higher-order-bulk-flux}, which do not ensure an a priori almost energy boundedness statement, it is easy to see that the the estimate in Lemma~\ref{lemma:ILED-for-non-future-int} is not ideal: the lack of smallness in either of the aforementioned statements implies that we cannot use them to close neither an integrated local energy statement nor an energy boundedness statement when $a_0<M$ is not assumed to be small. As a replacement for an a priori energy boundedness statement, let us introduce the notion of future integrability:

\begin{definition}[Future integrability] \label{def:future-int} Take $\xi(t^*,r)$ to be a cutoff function which vanishes identically in the past of $\Sigma_0$, and is identically 1 in the future of $\Sigma_1$. 
We say that, for $k=0,\dots, |s|$, $\swei{\upphi}{s}_k$ are future integrable solutions to the homogeneous transformed system of Definition~\ref{def:transformed-system} if $\xi \swei{\upphi}{s}_k$ are outgoing and sufficiently integrable, in the sense of Definitions~\ref{def:outgoing-bdry-phys-space} and \ref{def:suf-integrability}.
\end{definition}
An energy boundedness statement would, of course, imply that all solutions arising from suitably regular data verify this condition:

\begin{lemma}[From finite flux to future integrability] \label{lemma:future-int-finite-flux} Fix $s\in\mathbb{Z}_{\geq 2}$, $M>0$, $|a|<M$. For each $k\in\{0,\dots,|s|\}$, let $\swei{\upphi}{s}_k$ be solutions to the homogeneous transformed system of Definition~\ref{def:transformed-system} arising from smooth, compactly supported initial data. If $\swei{\tilde\upphi}{s}_k$ have finite energy flux on each $\Sigma_\tau$, i.e.
\begin{align*}
\sup_{\tau>0}\overline{\mathbb{E}}_p^J[\swei{\tilde\upphi}{s}_k](\tau)<\infty\,, 
\end{align*}
for all $k\in\{0,\dots, |s|\}$, all $J\geq 0$ and some $p\in(1,2)$, then $\swei{\upphi}{s}_k$ are future integrable.
\end{lemma}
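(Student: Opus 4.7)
The plan is to verify the hypotheses of Lemma~\ref{lemma:suf-integrability} applied to the cutoff solutions $\xi\swei{\upphi}{s}_k$, so that the conclusion of that lemma furnishes sufficient integrability; the outgoing condition in Definition~\ref{def:outgoing-bdry-phys-space} will be immediate from the support properties of $\xi$ and the initial data. By the reduction in Lemma~\ref{lemma:reduction-argument}, we may assume without loss of generality that $\swei{\upphi}{s}_k|_{\Sigma_0}$ is smooth and compactly supported in $(r_+,\infty)$ for each $k$, so that by finite speed of propagation (and the smoothness statement in Proposition~\ref{prop:well-posedness}) the solution $\swei{\upphi}{s}_k$ remains smooth and supported in a compact set of the form $\tilde{\mc R}_{(0,1)}\cap\{r_+ < r\leq R_0\}$ on the slab carrying the gradient of $\xi$. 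Consequently $\xi\swei{\upphi}{s}_k$ vanishes identically for $t^*\leq 0$ and, throughout $\tilde{\mc R}_{(0,1)}$, vanishes for $r$ outside the same compact set, giving the outgoing property at once.

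Next, the inhomogeneity $\swei{\mathfrak{H}}{s}_k$ generated by applying $\xi$ to the homogeneous system of Definition~\ref{def:transformed-system} consists of terms of the schematic form $(\nabla\xi)\cdot\swei{\upphi}{s}_j$ for $j\leq k$ (possibly with an extra derivative), all of which are compactly supported in $\tilde{\mc R}_{(0,1)}\cap\{r_+<r\leq R_0\}$. In particular, the decay condition~\eqref{eq:integrability-Hk} at $r^*\to\pm\infty$ is satisfied vacuously, and the past-energy condition $\sup_{\tau\leq 0}\overline{\mathbb{E}}^J[\xi\swei{\tilde\upphi}{s}_k\mathbbm{1}_{[-R^*,R^*]}](\tau)=0$ is trivial. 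It therefore remains to verify, for every $J\geq 0$ and every $R^*>0$, the bulk bound
\begin{equation*}
\overline{\mathbb{I}}^{J}\lp[\xi\swei{\tilde\upphi}{s}_k\mathbbm{1}_{[-R^*,R^*]}\rp](-\infty,\infty)<\infty\,.
\end{equation*}

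For this it suffices to show the analogous bound with $\xi$ removed on the time interval $(0,\infty)$, since $\xi$ commutes with the radial cutoff and vanishes for $t^*\leq 0$. We invoke Lemma~\ref{lemma:ILED-for-non-future-int}: taking $J$ sufficiently large and using the non-degenerate form \eqref{eq:ILED-non-futurint-nondeg} (commuting with $T$ and $Z$ as needed to eliminate trapping degeneration at the cost of additional derivatives), we obtain for each $\tau>0$
\begin{equation*}
\sum_{k=0}^{|s|}\overline{\mathbb{I}}_{2}^{J+|s|-k}[\swei{\tilde\upphi}{s}_k](0,\tau)\leq B\sum_{k=0}^{|s|}\lp(\overline{\mathbb{E}}_p^{J+|s|-k+1}[\swei{\tilde\upphi}{s}_k](\tau)+\overline{\mathbb{E}}_p^{J+|s|-k+1}[\swei{\tilde\upphi}{s}_k](0)\rp)\,.
\end{equation*}
By the hypothesis of finite energy flux at every derivative level, the right-hand side is bounded uniformly in $\tau$, so we may pass to the limit $\tau\to\infty$ to conclude that $\overline{\mathbb{I}}_2^{J'}[\swei{\tilde\upphi}{s}_k](0,\infty)<\infty$ for every $J'\geq 0$. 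Restricting the integrand to $[-R^*,R^*]$ makes the $r$-weights innocuous, so we recover the sought bound on $\overline{\mathbb{I}}^{J}$ and, after applying Lemma~\ref{lemma:suf-integrability}, the sufficient integrability of $\xi\swei{\upphi}{s}_k$.

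The only mild subtlety is the derivative loss at trapping in the integrated estimate: it forces us to invoke the finite-flux hypothesis at an arbitrarily high derivative level in order to close $\overline{\mathbb{I}}^J$ for fixed $J$. Since the hypothesis of the lemma supplies this at every order, this presents no obstacle; indeed the entire argument is essentially a soft combination of Lemmas~\ref{lemma:suf-integrability} and \ref{lemma:ILED-for-non-future-int} with the trivial observation that $\xi\swei{\upphi}{s}_k$ has the correct past support.
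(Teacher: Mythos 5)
Your proof takes essentially the same route as the paper's: reduce via Lemma~\ref{lemma:suf-integrability} to the bulk bound $\overline{\mathbb{I}}^J[\xi\tilde\upphi_k\mathbbm{1}_{[-R^*,R^*]}](0,\infty)<\infty$, which is then closed by Lemma~\ref{lemma:ILED-for-non-future-int} together with the finite-flux hypothesis, taking $\tau\to\infty$. This is exactly the paper's argument, and the final conclusion is correct.

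Where you deviate is in the preliminary discussion of the outgoing condition and \eqref{eq:integrability-Hk}. The paper handles these silently, and your attempt to make them explicit is welcome, but the specific justification has a wrinkle. You invoke Lemma~\ref{lemma:reduction-argument} to place yourself in the case of data compactly supported in $(r_+,\infty)$ and then claim the solution stays away from $\mc H^+$ in the slab $\tilde{\mc R}_{(0,1)}$ so that the inhomogeneity is supported away from both ends of the $r^*$-axis. Two issues: first, Lemma~\ref{lemma:reduction-argument} is a density-and-extension argument tailored to the proof of Theorem~\ref{thm:main}; it is not a priori free to invoke it inside Lemma~\ref{lemma:future-int-finite-flux}, which sits upstream (via Theorem~\ref{thm:cont-arg}) of the final statement that the reduction is serving — one would at least need to note that the reduction is purely a well-posedness/extension fact logically independent of future integrability. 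Second, and more importantly, the detour is unnecessary. Condition \eqref{eq:integrability-Hk} near $r^*\to-\infty$ does \emph{not} require the inhomogeneity to vanish there; the $\Delta$-weights appearing in \eqref{eq:integrability-Hk} are exactly matched by the $\Delta$-weights built into the definition $\Delta^{|s|-k}\upphi_k^{[s]}\in\mathscr S_\infty^{[s]}(\mc R)$ of a solution, so for $\mathfrak{H}_k=(\nabla\xi)\cdot\upphi_j+\cdots$ with $\nabla\xi$ compactly supported in $\tau$ the condition holds automatically for any smooth solution. The decay at $r^*\to\infty$ does follow from the compact-support-of-data hypothesis (already present in the lemma) plus finite speed of propagation, as you say. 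With that substitution, the proof is clean and does not need Lemma~\ref{lemma:reduction-argument} at all.
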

\begin{proof}
By Lemma~\ref{lemma:suf-integrability}, since
\begin{align*}
\sup_{\tau\leq 0}\overline{\mathbb{E}}^J[\xi\swei{\tilde\upphi}{s}_k\mathbbm{1}_{[-R^*,R^*]}](\tau)=\overline{\mathbb{E}}^J[\xi\swei{\tilde\upphi}{s}_k\mathbbm{1}_{[-R^*,R^*]}](0)<\infty
\end{align*}
if $|a|<M$, we just need to show that 
\begin{align*}
\overline{\mathbb{I}}^J[\xi\swei{\tilde\upphi}{s}_k\mathbbm{1}_{[-R^*,R^*]}](0,\infty)<\infty
\end{align*}
for each $J$ and $R^*$. Since we have
\begin{align*}
\sum_{k=0}^{|s|}\overline{\mathbb{I}}^J[\xi\swei{\tilde\upphi}{s}_k\mathbbm{1}_{[-R^*,R^*]}](0,\tau)&\leq \sum_{k=0}^{|s|}\lp(\overline{\mathbb{I}}^J[\swei{\tilde\upphi}{s}_k\mathbbm{1}_{[-R^*,R^*]}](0,\infty)+ \overline{\mathbb{E}}^{J+1}[\swei{\tilde\upphi}{s}_k](\tau) + \overline{\mathbb{E}}^{J+1}[\swei{\tilde\upphi}{s}_k](0)\rp)\\
&\leq \sum_{k=0}^{|s|}\lp(\overline{\mathbb{E}}^{J+1}_p[\swei{\tilde\upphi}{s}_k](\tau) + \overline{\mathbb{E}}^{J+1}_p[\swei{\tilde\upphi}{s}_k](0)\rp)\,,
\end{align*}
for sufficiently large $J$ and $p\in (1,2)$ by Lemma~\ref{lemma:ILED-for-non-future-int}, the conclusion follows.
\end{proof}

\subsection{Integrated local energy decay for future integrable solutions}

In this section, we establish integrated local energy decay, or ILED for short, of the transformed system through hyperboloidal foliations, assuming future integrability.
\begin{theorem}[ILED for future integrable solutions] \label{thm:ILED-futurint} Fix $s\in\{0,\pm 1,\pm 2\}$, $a_0\in[0,M)$, and assume that, for each $k=0,\dots, |s|$, $\swei{\tilde\upphi}{s}_k$ are future integrable solutions of  the homogeneous transformed system of Definition~\ref{def:transformed-system}. For any $p\in[0,2)$ and $\delta\in(0,1)$, and any  $|a|\leq a_0$, we have the following ILED estimates:
\begin{align*}
&\int_{\mc{R}_{(0,\infty)}} \lp(\frac{1}{r^{1+\delta}}|(\swei{\Phi}{s})'|^2+\frac{1}{r^{1+\delta}}|T\mc{P}_{\rm trap}[\xi\swei{\Phi}{s}]|^2+r^{-3}|\mathring{\slashed\nabla}^{[s]}\mc{P}_{\rm trap}[\xi\swei{\Phi}{s}]|^2+r^{-3}\lp(s^2+r^{-\delta}\rp)|\swei{\Phi}{s}|^2\rp)dr  d\sigma d\tau\\
&\qquad+\overline{\mathbb{I}}^{\mathrm{deg}}_{-\delta,p}[\swei{\Phi}{s}](0,\infty)+\sum_{k=0}^{|s|-1}\overline{\mathbb{I}}^{\mathrm{deg},|s|-k}_{-\delta,p}[\swei{\tilde\upphi}{s}_k](0,\infty) \\
&\quad\leq  B(a_0,\delta,p)\overline{\mathbb{E}}_{p}[\swei{\Phi}{s}](0)+B(a_0,\delta)\sum_{k=0}^{|s|-1}\overline{\mathbb{E}}^{|s|-k}_p[\swei{\tilde\upphi}{s}_k](0)\,; \numberthis\label{eq:ILED-futurint-first-order}
\end{align*}
 and, for $J\geq 1$, 
\begin{align*}
&\overline{\mathbb{I}}^{\mathrm{deg},J}_{-\delta,p}[\swei{\Phi}{s}](0,\infty)+\sum_{k=0}^{|s|-1}\overline{\mathbb{I}}^{\mathrm{deg},J+|s|-k}_{-\delta,p}[\swei{\tilde\upphi}{s}_k](0,\infty) \\
&\quad\leq  B(a_0,\delta,p)\overline{\mathbb{E}}_{p}^J[\swei{\Phi}{s}](0)+B(a_0,\delta)\sum_{k=0}^{|s|-1}\overline{\mathbb{E}}^{J+|s|-k}_{p}[\swei{\tilde\upphi}{s}_k](0)\,, \numberthis\label{eq:ILED-futurint-higher-order}
\end{align*}
in the last case  excluding $p=0$. In the above, the estimates remain valid for $p=2$ if $s\leq 0$.
\end{theorem}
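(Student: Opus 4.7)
The first estimate \eqref{eq:ILED-futurint-first-order} is essentially a direct consequence of Proposition~\ref{prop:ODE-to-PDE-future-int}. Let $\xi$ be the cutoff from Definition~\ref{def:future-int}; by hypothesis, $\xi\swei{\upphi}{s}_k$ are outgoing and sufficiently integrable in the sense of Definitions~\ref{def:outgoing-bdry-phys-space} and \ref{def:suf-integrability}, so Proposition~\ref{prop:ODE-to-PDE-future-int} applies and yields the first three lines of \eqref{eq:ILED-futurint-first-order} on $\mc R_{(1,\infty)}$, with the right-hand side controlled by $\sum_{k=0}^{|s|}\overline{\mathbb E}^{|s|-k}[\tilde\upphi_k](0)$; the slab $\mc R_{(0,1)}$ is then absorbed using the finite-in-time estimates of Proposition~\ref{prop:finite-in-time-first-order}. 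To promote this degenerate-at-horizon, $p=0$ estimate to the $\overline{\mathbb{I}}^{\mathrm{deg}}_{-\delta,p}$ norms appearing in \eqref{eq:ILED-futurint-first-order}, I would invoke the overbarred version of Proposition~\ref{prop:bulk-flux-large-r}: for $R^*$ sufficiently large, it controls the $\overline{\mathbb{I}}_{-\delta,p}$ bulk norm on $\{|r^*|\geq R^*\}$ up to bulk errors on the annulus $\{R^*\leq|r^*|\leq 2R^*\}$ plus flux data, and those errors are in turn absorbed by the base ILED just obtained since there $r$ and $\zeta$ are bounded both above and below.

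For the higher-order estimate \eqref{eq:ILED-futurint-higher-order}, I would proceed by induction on $J$. Since $T$ and $Z$ are Killing, commute with the operators $\mathfrak R_k$, preserve the hyperboloidal foliation, and preserve future integrability, I can apply the first-order step to each $T^{J_1}Z^{J_2}\tilde\upphi_k$ with $J_1+J_2\leq J$, thereby controlling all $T,Z$-commuted $\overline{\mathbb{I}}^{\mathrm{deg},J}_{-\delta,p}$ contributions in terms of higher-order initial data. The remaining $X^*$-derivatives (which are the ones present in $\overline{\mathbb{I}}^{\mathrm{deg},J}_{-\delta,p}$ when $J\geq|s|-k$) are recovered by the elliptic-estimate machinery of Proposition~\ref{prop:higher-order-bulk-flux}: the wave equation \eqref{eq:transformed-k-tilde} is elliptic in $(\p_{r^*},\mathring{\slashed\nabla})$ modulo terms involving $T,Z$ and lower-level quantities, and the analogous argument near $\mc H^+$ with the redshift commutator from Lemma~\ref{lemma:redshift-commutation} recovers overbarred control. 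All new bulk error terms produced by these elliptic estimates live in bounded $|r^*|$ regions where they can be absorbed by the $T,Z$-commuted first-order ILED already established.

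The main obstacle is the borderline case $p=2$ for $s\leq 0$: Proposition~\ref{prop:bulk-flux-large-r} provides $p\in[0,2)$ directly but the endpoint $p=2$ requires a bootstrap, first obtaining the statement for some $p'\in (1,2)$ and then using the $r^p$-hierarchy (the estimates \eqref{eq:phys-space-rp-multiplier-consequence-s<0}) with $p=2$, where the top bulk error on angular and zeroth-order terms is absorbed using the $p=p'$ estimate and a standard interpolation argument in $r$-weights. A secondary subtlety is the translation between the microlocal trapping cutoff $\mc P_{\rm trap}$ appearing in Proposition~\ref{prop:ODE-to-PDE-future-int} and the physical-space factor $\zeta$ appearing in $\overline{\mathbb I}^{\mathrm{deg},J}_{-\delta,p}$; this is handled by observing that Definition~\ref{def:trapping-parameters} ensures $r_{\mathrm{trap}}\in[3M-\gamma_-,3M+\gamma_+]$, so the multiplier $|1-(1-\zeta)r_{\mathrm{trap}}/r|^2$ is pointwise bounded below by a constant multiple of $\zeta(r)$ outside the trapping band, while the tangential derivatives $T,Z$ can be freely commuted with both cutoffs.
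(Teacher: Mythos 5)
The central gap is in how you handle the future-boundary flux terms produced by Proposition~\ref{prop:bulk-flux-large-r}. When you write that the annulus errors ``are in turn absorbed by the base ILED just obtained,'' you are conflating two different kinds of error. Estimate \eqref{eq:bulk-flux-estimate-large-r-summed} (and its overbarred/higher-order versions in Proposition~\ref{prop:higher-order-bulk-flux}) produces on its right-hand side not only a \emph{bulk} error $\mathbb{I}_0[\tilde\upphi_k\mathbbm{1}_{\{R^*\leq|r^*|\leq 2R^*\}}](0,\tau)$ --- which, you are right, is absorbed by the base ILED since $r$ is bounded there and $\zeta\geq b>0$ on that annulus --- but also a \emph{flux} error $\mathbb{E}_0[\tilde\upphi_k\mathbbm{1}_{\{R^*\leq|r^*|\leq 2R^*\}}](\tau)$ at the future hypersurface $\Sigma_\tau$. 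A bulk ILED estimate cannot bound a flux at a fixed time $\tau$; and since Theorem~\ref{thm:energy-bddness-futurint} (energy boundedness) is proved \emph{after} this theorem using it as an input, you have no a priori control of $\mathbb{E}[\tilde\upphi_k](\tau)$ with which to close the argument. This is precisely where the paper's pigeonhole/dyadic-sequence step (Lemma~\ref{lemma:pigeonhole-energy-decay}) is indispensable: future integrability guarantees that $\int_0^\infty \sum_k\overline{\mathbb{E}}[\tilde\upphi_k\mathbbm{1}_{[-R^*,R^*]}](\tau')\,d\tau'<\infty$, so one can pick a sequence $\tau_n\to\infty$ along which the annulus flux tends to zero, apply \eqref{eq:bulk-flux-estimate-large-r-summed} at $\tau=\tau_n$, and then let $n\to\infty$ to kill the offending term while promoting the left-hand side to a bulk over $(0,\infty)$.

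Apart from this omission, the architecture is the same as the paper's (combine the bounded-$|r^*|$ control from Proposition~\ref{prop:ODE-to-PDE-future-int} with large-$|r^*|$ control from Propositions~\ref{prop:bulk-flux-large-r} and \ref{prop:higher-order-bulk-flux}; the $p=2$ endpoint for $s\leq 0$ and the $T,Z$-commutation for $J\geq 1$ are handled as you suggest). The observation about $\mc{P}_{\rm trap}$ versus $\zeta$ is also fine but unnecessary here, since Proposition~\ref{prop:ODE-to-PDE-future-int} already delivers $\mathbb{I}^{\rm deg}[\Phi](0,\infty)$ directly on its left-hand side. You should insert the pigeonhole step explicitly; without it the proof does not close.
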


\begin{proof} Let us consider the case  of \eqref{eq:ILED-futurint-first-order} first. It is very easy to show that the restriction of the bulk integrals in Theorem~\ref{thm:ILED-futurint} to the regions close to $\mc I^+$ and $\mc H^+$ are controlled by initial data; this follows from our preliminary estimates in Section~\ref{sec:toolbox-physical-space}. Indeed, recall the large $|r^*|$ estimates of Propositions~\ref{prop:bulk-flux-large-r} and \ref{prop:higher-order-bulk-flux}:
\begin{align*} 
&b(\delta)\sum_{k=0}^{|s|}\mathbb{I}_{-\delta,p}^{|s|-k}[\swei{\tilde\upphi}{s}_k\mathbbm{1}_{\{|r^*|\geq 2R^*\}}](0,\tau) \\
&\quad\leq B\sum_{k=0}^{|s|}\lp(\mathbb{E}^{|s|-k}[\swei{\tilde\upphi}{s}_j\mathbbm{1}_{\{R^*\leq|r^*|\leq 2R^*\}}](\tau)+\mathbb{E}_p[\swei{\tilde\upphi}{s}_k](0)+\mathbb{I}^{|s|-k}[\swei{\tilde\upphi}{s}_k\mathbbm{1}_{\{ R^*\leq |r^*|\leq 2R^*\}}](0,\tau)\rp)\,,\numberthis \label{eq:ILED-large-r-intermediate}
\end{align*}
for any $\tau>0$ and sufficiently large $R^*>0$, and $p$ as constrained in the statements of the aforementioned propositions. To obtain our large $|r^*|$ result, we want to remove the first term from the right hand side of \eqref{eq:ILED-large-r-intermediate}. Lemma~\ref{lemma:pigeonhole-energy-decay} produces a sequence $\{\tau_n\}_{n=1}^\infty$ such that $\tau_n\to \infty$ which can help us: 
choosing $\tau=\tau_n$ in \eqref{eq:ILED-large-r-intermediate} and taking the limit $n\to \infty$, we obtain
\begin{align*} 
\sum_{k=0}^{|s|}\mathbb{I}_{-\delta,p}^{\mathrm{deg},{|s|-k}}[\swei{\tilde\upphi}{s}_k](0,\infty) 
&\leq B(\delta)\sum_{k=0}^{|s|}\lp(\mathbb{E}_p^{|s|-k}[\swei{\tilde\upphi}{s}_k](0)+\mathbb{I}^{\mathrm{deg},|s|-k}[\swei{\tilde\upphi}{s}_k\mathbbm{1}_{\{ |r^*|\leq R^*\}}](0,\infty)\rp)\,,\numberthis\label{eq:future-int-ILED-large-r-basic}
\end{align*}
for sufficiently large $R^*$.

The difficulty in establishing the ILED estimate of Theorem~\ref{thm:ILED-futurint} arises in the bounded $r$ region, due to the occurrence of trapping. The results from our previous paper \cite{SRTdC2020}, as embodied in physical space for future integrable solutions by Proposition~\ref{prop:ODE-to-PDE-future-int}, provide this key result for subextremal Kerr:
\begin{align*}
&\int_{\mc{R}_0\cap \{|r^*|\leq R^*\}} \lp(|(\swei{\Phi}{s})'|^2+|T\mc{P}_{\rm trap}[\xi\swei{\Phi}{s}]|^2+|\mathring{\slashed\nabla}^{[s]}\mc{P}_{\rm trap}[\xi\swei{\Phi}{s}]|^2+|\swei{\Phi}{s}|^2\rp)dr  d\sigma d\tau\\
&\qquad+\sum_{k=0}^{|s|}\sum_{J=0}^{|s|-k}\mathbb{I}^{\mathrm{deg},J}[\swei{\tilde\upphi}{s}_k\mathbbm{1}_{\{|r^*|\leq R^*\}}](0,\infty) \leq B(a_0,R^*)\sum_{k=0}^{|s|}\sum_{J=0}^{|s|-k} 
\overline{\mathbb{E}}^J[\swei{\tilde\upphi}{s}_k](0)\,. \numberthis\label{eq:future-int-ILED-bdd-r-basic}
\end{align*}
Thus, fixing $R^*$ sufficiently large that \eqref{eq:future-int-ILED-large-r-basic} holds and combining it with \eqref{eq:future-int-ILED-bdd-r-basic}, we arrive at \eqref{eq:ILED-futurint-first-order}.

Finally, for estimate \eqref{eq:ILED-futurint-first-order}, we proceed similarly. We can first appeal to Proposition~\ref{prop:higher-order-bulk-flux} for large $|r^*|$ and choose a dyadic sequence  $\{\tau_n\}_{n=0}^\infty$ such that $\tau_n\to \infty$ to arrive at
\begin{align*}
&\sum_{k=0}^{|s|}\sum_{J=0}^{J_{\rm max}+|s|-k}\overline{\mathbb{I}}_{-\delta,p}^{\mathrm{deg},J}[\swei{\tilde\upphi}{s}_k](0,\infty) \\
&\quad\leq B(\delta,J_{\rm max})\sum_{k=0}^{|s|}\sum_{J=0}^{J_{\rm max}+|s|-k}\lp(\overline{\mathbb{E}}_p^J[\swei{\tilde\upphi}{s}_k](0)+\overline{\mathbb{I}}^{\mathrm{deg},J}[\swei{\tilde\upphi}{s}_k\mathbbm{1}_{\{ |r^*|\leq R^*\}}](0,\infty)\rp)\,, \numberthis\label{eq:future-int-ILED-large-r-higher}
\end{align*}
for sufficiently large. Then, fixing $R^*$ and combining with a $T$ and $Z$ commuted version of Proposition~\ref{prop:ODE-to-PDE-future-int}, by our higher order estimates of Proposition~\ref{prop:higher-order-bulk-flux}  the result follows.
\end{proof}

\subsection{Energy flux boundedness for future integrable solutions}

In this section, we establish boundedness of the energy fluxes of the transformed system through hyperboloidal foliations, through $\mc H^+$ and through $I^+$, assuming future integrability. 

\begin{theorem}[Energy boundedness for future integrable solutions] \label{thm:energy-bddness-futurint} Fix $s\in\{0,\pm 1,\pm 2\}$, $a_0\in[0,M)$, and assume that, for each $k=0,\dots, |s|$, $\swei{\tilde\upphi}{s}_k$ are future integrable solutions of  the transformed system of Definition~\ref{def:transformed-system}. For any $\tau>0$, integer $J\geq 0$, $p_1\in[0,2)$ excluding $p=0$ if $J\geq 1$, and  $|a|\leq a_0$, we have the following estimates for the energy fluxes: 
\begin{align*}
&\sum_{k=0}^{|s|-1}\overline{\mathbb{E}}^{J+|s|-k}_{p}[\tilde\upphi_k](\tau)+ \sum_{k=0}^{|s|}\overline{\mathbb{E}}^{J}_{p}[\Phi](\tau) +\mathbb{E}_{\mc I^+,p}^{J}[\Phi](0,\tau) + \overline{\mathbb{E}}^{J}_{\mc H^+}[\Phi](0,\tau)\\
&\qquad+\sum_{k=0}^{|s|-1}\lp(\mathbbm{1}_{\{s<0\}}\mathbb{E}_{\mc I^+,p}^{|s|-k}[\tilde\upphi_{k}](0,\tau)+ \mathbbm{1}_{\{s>0\}}\overline{\mathbb{E}}^{|s|-k}_{\mc H^+}[\tilde\upphi_k](0,\tau)\rp) \\
&\quad\leq B(a_0,J,p)\sum_{k=0}^{|s|}\overline{\mathbb{E}}^{J+|s|-k}_p[\tilde\upphi_k](0)\,.\numberthis\label{eq:energy-bddness-futurint}
\end{align*}
The estimate remains valid for $p=2$ if $s\leq 0$.
\end{theorem}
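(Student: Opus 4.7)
My plan would be to combine the integrated local energy decay of Theorem~\ref{thm:ILED-futurint} with the conditional energy estimate of Proposition~\ref{prop:estimates-assuming-ILED}, exploiting the fact that in the full subextremal range $|a|<M$ the ergoregion (supported in $r\leq 2M$) and the trapping range $r\in[3M-\gamma_-,3M+\gamma_+]$ are spatially separated. First, I would invoke the backwards extension of Proposition~\ref{prop:scattering-construction} to extend $\swei{\tilde\upphi}{s}_k$ to $\tilde{\mc R}_{(-1,\infty)}$ so that the extended solution vanishes to the past of $\Sigma_{-1/2}$, at the cost of a constant factor in the data norms. Future integrability is preserved, so Theorem~\ref{thm:ILED-futurint} immediately supplies the full ILED on $(-1,\tau)$ in terms of initial data.

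The central step is to apply \eqref{eq:1st-order-estimate-for-energy-bddness} with $\tau_0=-1$ and a cutoff $\chi(r)$ chosen to equal $1$ for $r$ near $r_+$, to vanish for $r\geq r_\chi$, and to transition inside $(2M+\epsilon,\,3M-\gamma_--\epsilon')$. This choice simultaneously ensures that $T+\upomega_+\chi Z$ is timelike on the support of the solutions and that $d\chi/dr$ is supported strictly away from the trapping region. The error $\upomega_+\int|d\chi/dr|\,|\swei{\Phi}{s}Z|^2$ is then controlled by $\overline{\mathbb{I}}^{\mathrm{deg},1}_{-\delta,p}[\swei{\Phi}{s}](-1,\tau)$ since the degeneration factor $\zeta$ from \eqref{eq:def-zeta} is bounded below on the support of $d\chi/dr$. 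The coupling error $\upomega_+|a|\int\chi w\,c_j c_{s,j}^Z \Re[Z\swei{\Phi}{s}\,\overline{Z\swei{\tilde\upphi}{s}_j}]$ splits via Cauchy--Schwarz into a $|Z\swei{\Phi}{s}|^2$ piece on the bounded, non-trapping support of $\chi$, again absorbed by the commuted ILED, and a $|a|^2\int\chi w^2|Z\swei{\tilde\upphi}{s}_j|^2$ piece for $j<|s|$, bounded by $\overline{\mathbb{I}}^{\mathrm{deg},|s|-j}_{-\delta,p}[\swei{\tilde\upphi}{s}_j](-1,\tau)$. The redshift identities of Lemma~\ref{lemma:phys-space-redshift-multiplier-identity}, or equivalently the base case of Proposition~\ref{prop:bulk-flux-large-r}, then upgrade $\mathbb{E}$ to $\overline{\mathbb{E}}$.

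To reach the full statement with higher-order norms and with the fluxes through $\mc H^+$ and $\mc I^+$, I would commute the transformed system with the Killing vectors $T$ and $Z$ and iterate the first-order argument at every level, feeding in the higher-order ILED \eqref{eq:ILED-futurint-higher-order} at each step. Once uniform bounds on $\overline{\mathbb{E}}^J[\swei{\Phi}{s}](\tau)$ and on the lower-level $\overline{\mathbb{E}}^{J+|s|-k}[\swei{\tilde\upphi}{s}_k](\tau)$ are in hand, the passage to $p$-weighted fluxes and to the boundary fluxes $\overline{\mathbb{E}}^J_{\mc H^+}[\swei{\Phi}{s}]$ and $\mathbb{E}^J_{\mc I^+,p}[\swei{\Phi}{s}]$ is provided directly by \eqref{eq:higher-order-energy-bddness} of Proposition~\ref{prop:higher-order-bulk-flux}, whose error terms on the right-hand side are precisely the bounded-$r$ bulk and $\Sigma_\tau$-flux quantities already controlled.

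The main obstacle I foresee is the derivative bookkeeping across the hierarchy. The coupling in \eqref{eq:1st-order-estimate-for-energy-bddness} relates the top level $\swei{\Phi}{s}$ to the lower levels $\swei{\tilde\upphi}{s}_j$ through $|a|$-small but \emph{non-perturbative} terms, and Theorem~\ref{thm:ILED-futurint} budgets $|s|-k$ extra derivatives at level $k$. Threading these budgets through the $T,Z$ commutations while keeping the Cauchy--Schwarz smallness parameters consistent (so that every bulk quantity appearing with coefficient $\varepsilon$ can be dominated by the left-hand side), and aligning the $r$-weights with the $c_k$-normalization of Section~\ref{sec:template-energy-norms} so that \eqref{eq:higher-order-energy-bddness} can be iterated cleanly, is where the bulk of the technical work will concentrate.
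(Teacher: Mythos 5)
Your plan diverges from the paper at the decisive point, and the divergence opens a genuine gap: the premise that the ergoregion and the trapping interval $[3M-\gamma_-,3M+\gamma_+]$ are spatially separated throughout the full subextremal range $|a|<M$ is false. The prograde photon orbit radius $r_{\rm ph}^{\rm pro}(a)=2M\{1+\cos[\tfrac23\cos^{-1}(-a/M)]\}$ decreases monotonically from $3M$ at $a=0$ to $M$ at $a=M$, and it enters the (equatorial) ergoregion $\{r<2M\}$ precisely when $a>M/\sqrt{2}$. Since $\gamma_-$ in Definition~\ref{def:trapping-parameters} is defined via an infimum of $r_{\rm trap}(\omega,m,l)$ over \emph{all} admissible frequencies (and the constraint of Theorem~\ref{thm:ODE-ILED} only bounds $r_{\rm trap}$ away from $r_+$ by a constant degenerating as $a_0\to M$), we have $3M-\gamma_-<2M$ once $a_0>M/\sqrt{2}$. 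In that regime your transition interval $(2M+\epsilon,\,3M-\gamma_--\epsilon')$ is empty. One can also see the obstruction directly at the level of the timelikeness constraint: writing $\omega_\pm(r,\theta)$ for the roots of $g_{tt}+2\omega g_{t\phi}+\omega^2 g_{\phi\phi}=0$, a short computation for $a$ near $M$ shows that the admissible window $\bigcap_\theta(\omega_-/\upomega_+,\,\omega_+/\upomega_+)$ shrinks to a narrow interval around $1$ as $r\downarrow r_+$ and does not intersect the admissible window further out; hence no $r$-independent value of $\chi$ is admissible across the full range where trapping can occur.

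This is exactly the obstruction the paper's proof is built to overcome. In Step~2 of the paper's argument one partitions frequency space into sets $\mc{C}_i$ on which $r_{\rm trap}$ lies in a \emph{small} interval $I_i$ of length $\epsilon$; for such a narrow target, one can perturb the globally timelike slope $2Mar/(r^2+a^2)^2$ to a $\chi_i$ that is constant on $I_i$ while keeping $T+\upomega_+\chi_i Z$ timelike. Applying \eqref{eq:1st-order-estimate-for-energy-bddness} to each frequency-projected piece $\Phi_{\cutr}^{(i)}$ and summing via Plancherel recovers the full solution. Your proposal also sidesteps a secondary issue that this splitting forces onto the paper: the pieces $\upphi^{(i)}_{k,\cutr}$ are no longer compactly supported in time after frequency projection, so one cannot simply take $\tau_0=-1$ after the backwards extension; the paper's Step~3 is precisely the dyadic $\tau_0\to-\infty$ argument (via Lemma~\ref{lemma:pigeonhole-energy-decay}) needed to remove the $\tau_0$-dependent term. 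The remainder of your outline (backwards extension to control the data slice, redshift/$r^p$ upgrades, iterating with $T,Z$ commutations, and Proposition~\ref{prop:higher-order-bulk-flux} to convert bounded-$r$ control into $\mc H^+$, $\mc I^+$, and higher-order fluxes) matches the paper's Steps~1 and~4 and is sound, but without the frequency-space decomposition the core estimate does not close for $a_0$ close to $M$ --- you would in effect only recover the previously known slowly-rotating result.
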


\begin{remark}[Energy boundedness for Schwarzschild vs.\ Kerr]
Let us note that, for $a=0$,  Theorem~\ref{thm:energy-bddness-futurint} can be easily derived from the conservation and coercivity of the Killing energy associated to the $T$ vector field. In contrast, for $|a|\leq M$, because this energy is not coercive even when $s=0$, we must rely on a more intricate argument based on the results of our previous paper \cite{SRTdC2020}. As before, the occurrence of trapping means that the  main difficulty in establishing energy flux boundedness arises in the bounded $r$ region. There, we rely crucially on the results from previous paper \cite{SRTdC2020}, as embodied in physical space by Proposition~\ref{prop:ODE-to-PDE-radial-cutoff-ILED}.
\end{remark}

\begin{proof}[Proof of Theorem~\ref{thm:energy-bddness-futurint}] 
With the results of the previous section, it is very easy to show that the portions of the fluxes which are very close to $\mc I^+$ and, if $|a|<M$, $\mc H^+$, are bounded by data, provided that the bounded-$r^*$ flux is as well: by Proposition~\ref{prop:bulk-flux-large-r} and the higher order estimates of Proposition~\ref{prop:higher-order-bulk-flux}, 
\begin{align*}
&\sum_{k=0}^{|s|}\overline{\mathbb{E}}^{J+|s|-k}[\tilde\upphi_k\mathbbm{1}_{|r^*|\geq 2R^*}](\tau) +\mathbb{E}_{\mc I^+,p}^{J}[\Phi](0,\tau) + \overline{\mathbb{E}}^{J}_{\mc H^+}[\Phi](0,\tau)\\
&\qquad+\sum_{k=0}^{|s|-1}\lp(\mathbbm{1}_{\{s<0\}}\mathbb{E}_{\mc I^+,p}^{|s|-k}[\tilde\upphi_{k}](0,\tau)+ \mathbbm{1}_{\{s>0\}}\overline{\mathbb{E}}^{|s|-k}_{\mc H^+}[\tilde\upphi_k](0,\tau)\rp)\\
&\quad \leq B\sum_{k=0}^{|s|}\overline{\mathbb{E}}^{J+|s|-k}[\tilde\upphi_k\mathbbm{1}_{|r^*|\geq R^*}](0)  +B\sum_{k=0}^{|s|}\mathbb{E}^{J+|s|-k}[\tilde\upphi_k\mathbbm{1}_{|r^*|\leq R^*}](\tau)\,.
\end{align*}
for any $\tau>0$, any integer $J\geq 0$ and sufficiently large $R^*$, and $p$ with the constraints in the aforementioned propositions. We may fix $R^*$ by the requirement that the above estimate holds.

The ensuing steps lead to control over the outstanding bounded $r^*$ flux. We note that it is enough to look at $J=0$, as the case of $J>0$ follows by induction from the higher order estimates of Proposition~\ref{prop:higher-order-bulk-flux}. Our proof is a simple adaptation of that in \cite{Dafermos2016b} for $s=0$, and follows several steps. First, we extend our solutions of the homogeneous transformed system to the past of the initial slice $\Sigma_0$ so as to obtain sufficiently integrable (i.e.\ past and future integrable simultaneously) solutions. It is these extensions which we cutoff radially to isolate the bounded $|r^*|$ region. Second, we split up our solutions to the inhomogenous transformed system into a sum of $N$ solutions each with small support in frequency space (and thus small range for $r_{\rm trap}(\omega,m,l)$), and apply \eqref{eq:1st-order-estimate-for-energy-bddness} from Proposition~\ref{prop:estimates-assuming-ILED}. This estimate requires us to choose an initial $\tau_0$; the penultimate step shows that we can make a choice such that the final estimate no longer involves our intermediate $N$ solution pieces. These steps ensure control over the energy for the $k=|s|$ bounded $|r^*|$ flux; in the last step we use this fact to obtain bounds on the fluxes of the lower level transformed variables.

\medskip
\noindent \textit{Step 1: extension}. Recall that $\upphi_k$, for $k=0,\dots,|s|$, denote solutions to the homogeneous transformed system of Definition~\ref{def:transformed-system} arising from suitably regular data on $\Sigma_0$. Thus, by the well-posedness statement of Proposition~\ref{prop:well-posedness}, $\upphi_k$ are spin-weighted functions in $\mc R_0$. However, appealing to Lemma~\ref{lemma:reduction-argument}, we can assume without loss of generality, that $\tilde\upphi_k=0$ on $\mc I^+_{(0,1)}$ and $\mc H^+_{(0,1)}$. The extension arguments in (the proof of) Lemma~\ref{prop:scattering-construction}, then ensure that we can extend $\upphi_k$, for $k=0,\dots,|s|$, to $\mc R$ in such a way that
\begin{align*}
\sum_{k=0}^{|s|}\sum_{J=0}^{|s|-k}\mathbb{E}^J[\tilde\upphi_k](0^-)\leq B\sum_{k=0}^{|s|}\sum_{J=0}^{|s|-k}\mathbb{E}^J[\tilde\upphi_k](0)\,,
\end{align*}
where $\Sigma_{0^-}$ denotes the image of $\Sigma_0$ under the map $t\mapsto -t$, in Boyer-Lindquist coordinates, and by abuse of notation we denote the extension of $\tilde \upphi_k$ by the same symbol. With this extension, the conclusion of Proposition~\ref{prop:ODE-to-PDE-radial-cutoff-ILED} becomes
\begin{align*}
&\int_{\mc{R}} \lp(|\Phi_\cutr'|^2+|T\mc{P}_{\rm trap}[\xi\Phi_\cutr]|^2+|\mathring{\slashed\nabla}^{[s]}\mc{P}_{\rm trap}[\xi\Phi_\cutr]|^2+|\Phi_\cutr|^2\rp)dr  d\sigma d\tau + \sum_{k=0}^{|s|}\mathbb{I}^{\mathrm{deg},|s|-k}[\upphi_{k,\cutr}](-\infty,\infty) \\
&\quad\leq  B(a_0)\sum_{k=0}^{|s|}\mathbb{E}^{|s|-k}[\upphi_{k,\cutr}](0)+B(a_0)\sum_{k=0}^{|s|}{\mathbb{E}^{|s|-k}}[\upphi_{k,\cutr}](0^-)\leq B(a_0)\sum_{k=0}^{|s|}{\mathbb{E}^{|s|-k}}[\upphi_{k,\cutr}](0)\,, \numberthis\label{eq:energy-bddness-intermediate-1}
\end{align*}
where $\upphi_{k,\cutr}$ is a version of $\upphi_k$ which has been cutoff to be supported only in $r^*\in [-R^*-1,R^*+1]$; we direct the reader to Section~\ref{sec:cutoffs-radial} for the precise meaning of the subscript $\cutr$. 

\medskip
\noindent \textit{Step 2: frequency space splitting}. For some $\epsilon>0$ to be specified, let us consider $N=\lceil\epsilon^{-1}(s^++s^-)\rceil$ intervals in the radial coordinate given by
\begin{align*}
I_i=[3M -s^-+(i-1)\epsilon,3M-s^-+i\epsilon]\,, \,\,i=1,\dots, N\,.
\end{align*}
We consider a partition of frequency space as follows: $\mc F_{\rm admiss}=\uplus_{i=0}^N \mc C_i$, where
\begin{align*}
\mc{C}_0:=\{(\omega,m,l)\colon r_{\rm trap}=0\}\,; \qquad \mc{C}_i:=\{(\omega,m,l)\colon r_{\rm trap}\in I_i\}\,,\,\, i=1,\dots,N\,,
\end{align*}
and define $\upphi_{k,\cutr}^{(i)}:=\mc P_{\mc C_i}[\upphi_{k,\cutr}]$, $k=0,\dots,|s|$, where $\mc P_{\mc C_i}$ is a microlocal operator which denotes the projection, in frequency space, to the set $\mc C_i$. Thus, $\upphi_{k,\cutr}^{(i)}$ is not well localized in physical space (by the previous step, it is a function in $\mc R$ instead of just $\mc R_0$), but it is very precisely localized in physical space.

An easy computation shows that the vector field $T+\frac{2Mar}{(r^2+a^2)^2}Z$ is timelike in $\mc R\backslash\mc H^+$. If $\epsilon$ is sufficiently small, we can choose functions $\chi_i$ whose gradient is supported in $I_i^c$ and such that $T+\upomega_+\chi_i(r) Z$ is also timelike; we choose $\epsilon$ in this way. Then, we apply estimate \eqref{eq:1st-order-estimate-for-energy-bddness} from Proposition~\ref{prop:estimates-assuming-ILED} to each $\Phi_{\cutr}^{(i)}$. Since  $\chi_i'$ and $\mathfrak{H}_k^{(i)}$ are supported away from the $r$-region where $\Phi_{\cutr}^{(i)}$ experiences trapping, we get
\begin{align*}
\mathbb{E}[\Phi_{\cutr}](\tau) &\leq B\sum_{i=1}^N \mathbb{E}[\Phi_{\cutr}^{(i)}](\tau)\\
&\leq B\sum_{j=0}^{|s|}\sum_{i=1}^N\mathbb{E}[\upphi_{j,\cutr}^{(i)}](\tau_0) +B\sum_{i=1}^N \int_{\mc R}\lp(\sum_{j=0}^{|s|-1}|Z^{3/2}\widetilde{\mc P}_{\rm trap}[\upphi_{j,\cutr}^{(i)}]|^2+|Z^{1/2}\Phi_{\cutr}^{(i)}|^2\rp) drd\sigma d\tau \\
&\qquad +B\sum_{i=1}^N \int_{\mc R_{(-\infty,\infty)}}\lp(|(\Phi_{\cutr}^{(i)})'|^2+|T\mc{P}_{\rm trap}[\Phi_{\cutr}^{(i)}]|^2+|\mathring{\slashed\nabla}^{[s]}\mc{P}_{\rm trap}[\Phi_{\cutr}^{(i)}]|^2\rp)drd\sigma d\tau\,.
\end{align*}
(Note that we have passed to frequency space to split the $Z$ derivatives unevenly between $\Phi_{\cutr}^{(i)}$ and $\upphi_{j,\cutr}^{(i)}$ in the last term of \eqref{eq:1st-order-estimate-for-energy-bddness}.) Now, we use Plancherel and use the disjointness of the frequency supports of $\upphi_{j,\cutr}$ to sum in $i$:
\begin{align*}
&\sum_{i=1}^N \int_{\mc R}\Big(|(\Phi_{\cutr}^{(i)})'|^2+|T\mc{P}_{\rm trap}[\Phi_{\cutr}^{(i)}]|^2+|\mathring{\slashed\nabla}^{[s]}\mc{P}_{\rm trap}[\Phi_{\cutr}^{(i)}]|^2+\sum_{j=0}^{|s|-1}|Z^{3/2}\widetilde{\mc P}_{\rm trap}[\upphi_{j,\cutr}^{(i)}]|^2\Big)drd\sigma d\tau\\
&\quad =  \int_{\mc R}\Big(|Z^{1/2}\Phi_{\cutr}|^2+|\Phi_{\cutr}'|^2+|T\mc{P}_{\rm trap}[\Phi_{\cutr}]|^2+|\mathring{\slashed\nabla}^{[s]}\mc{P}_{\rm trap}[\Phi_{\cutr}]|^2+\sum_{j=0}^{|s|- 1}|Z^{3/2}\widetilde{\mc P}_{\rm trap}[\upphi_{j,\cutr}]|^2\Big)drd\sigma d\tau\,.
\end{align*}
In this identity, it is very important that $\upphi_{k,\cutr}$ and $\upphi_{k,\cutr}^{(i)}$ live in the entire $\mc R$ so that we may apply Plancherel. Finally, appealing to \eqref{eq:energy-bddness-intermediate-1}, we obtain
\begin{align*}
\mathbb{E}[\Phi_{\cutr}](\tau) &\leq  B\sum_{j=0}^{|s|}\sum_{i=1}^N\mathbb{E}[\upphi_{j,\cutr}^{(i)}](\tau_0) +B\sum_{k=0}^{|s|}\mathbb{E}^{|s|-k}[\tilde\upphi_{k}](0) \numberthis\label{eq:energy-bddness-intermediate-2}
\end{align*}

\medskip
\noindent \textit{Step 3: a good choice of $\tau_0$}. In the previous step, we have used the form of the two last terms on the right hand side of \eqref{eq:1st-order-estimate-for-energy-bddness} to be able to, once the estimate was applied to $\upphi_{k,\cutr}^{(i)}$, sum their contributions in $i$ and recast them in terms of $\upphi_{k,\cutr}$. For the first term in \eqref{eq:energy-bddness-intermediate-2}, our strategy to remove the $i$-dependence is to make a suitable choice of initial slice $\Sigma_{\tau_0}$, similarly to the first step in the proof of Theorem~\ref{thm:ILED-futurint}.

Recall that each $\upphi_{j,\cutr}^{(i)}$ is sufficiently integrable and compact in $r^*$; hence making use of Lemma~\ref{lemma:pigeonhole-energy-decay}, we can produce a dyadic sequence $\{\tau_n^{(i)}\}_{n=1}^\infty$ such that $\tau_n^{(i)}\to -\infty$ as $n\to \infty$ and so that, 
choosing $\tau_0=\tau_n^{(i)}$ in \eqref{eq:energy-bddness-intermediate-2} and then taking $n\to\infty$, we obtain
\begin{align*}
\mathbb{E}[\Phi_{\cutr}](\tau) &\leq B\sum_{k=0}^{|s|}\mathbb{E}^{|s|-k}[\tilde\upphi_{k}](0) \numberthis\label{eq:energy-bddness-intermediate-3}
\end{align*}

\medskip

\noindent \textit{Step 4: the lower level variables}. First note that, by a similar argument to the previous two steps, estimate \eqref{eq:1st-order-estimate-for-energy-bddness} from Proposition~\ref{prop:estimates-assuming-ILED} yields that for $k<|s|$
\begin{align*}
\mathbb{E}[\upphi_{k,\cutr}](\tau) &\leq  B\sum_{j=0}^{|s|}\sum_{i=1}^N\mathbb{E}^1[\upphi_{j,\cutr}^{(i)}](\tau_0) +B\sum_{k=0}^{|s|}\mathbb{E}^{|s|-k+1}[\tilde\upphi_{k}](0)\,,
\end{align*}
so choosing $\tau_0$ as in Step 3 gives us 
\begin{align*}
\mathbb{E}[\upphi_{k,\cutr}](\tau) &\leq B\sum_{k=0}^{|s|}\mathbb{E}^{|s|-k+1}[\tilde\upphi_{k}](0)<\infty\,. \numberthis\label{eq:finite-energy-lower-levels}
\end{align*}

In estimate \eqref{eq:finite-energy-lower-levels}, we have lost one derivative. To retrieve it, we pursue a more delicate argument.  By combining the higher order estimate \eqref{eq:higher-order-energy-bddness} of Proposition~\ref{prop:higher-order-bulk-flux} with Theorem~\ref{thm:ILED-futurint}, we see that 
\begin{align*}
\sum_{k=0}^{|s|-1}\mathbb{E}^{|s|-k}[\tilde\upphi_{k,\cutr}](\tau)\leq  B\mathbb{E}[\Phi_{\cutr}](\tau)+ B\sum_{k=0}^{|s|-1}\sum_{J_1+J_2=0}^{|s|-k-1}\mathbb{E}[T^{J_1}Z^{J_2}\tilde\upphi_{k,\cutr}'](\tau)+ B\sum_{k=0}^{|s|}\mathbb{E}^{|s|-k}[\tilde\upphi_{k}](0)\,.
\end{align*}
Now let $\xi(\tilde t^*)$ be a smooth cutoff is equal to one for $\tilde t^*=\tau$ and vanishes for $\tilde t^*\leq \tau-1$ and $\tilde t^*\geq \tau+1$. We use the fact that $T$ and $Z$ are Killing as well as the estimate \eqref{eq:phys-space-Killing-multiplier-consequence-1-radial-commuted} to deduce
\begin{align*}
&\sum_{k=0}^{|s|-1}\mathbb{E}^{|s|-k}[\tilde\upphi_{k,\cutr}](\tau)= \sum_{k=0}^{|s|-1}\mathbb{E}^{|s|-k}[\xi\tilde\upphi_{k,\cutr}](\tau)\\
&\quad\leq  
B\sum_{k=0}^{|s|-1}\sum_{J_1+J_2=0}^{|s|-k-1}\Big(\mathbb{I}^{\rm deg}[T^{J_1}Z^{J_2}\tilde\upphi_{k+1,\cutr}](\tau-1,\tau+1)+\varepsilon^{-1}\sum_{J=0}^1\mathbb{I}^{\rm deg,J}[T^{J_1}Z^{J_2}\tilde\upphi_{k}](\tau-1,\tau+1)\Big)\\
&\quad\qquad+
B\varepsilon\sum_{k=0}^{|s|-1}\sum_{J_1+J_2=1}^{|s|-k-1}\mathbb{I}^{1}[T^{J_1}Z^{J_2}\tilde\upphi_{k}](\tau-1,\tau+1)+ B\sum_{k=0}^{|s|}\mathbb{E}^{|s|-k}[\tilde\upphi_{k}](0)\\
&\quad\leq  
B\sum_{k=0}^{|s|-1}\sum_{J_1+J_2=0}^{|s|-k-1}\Big(\mathbb{I}^{\rm deg}[T^{J_1}Z^{J_2}\tilde\upphi_{k+1,\cutr}](0,\tau+1)+\varepsilon^{-1}\sum_{J=0}^1\mathbb{I}^{\rm deg,J}[T^{J_1}Z^{J_2}\tilde\upphi_{k}](0,\tau+1)\Big)\\
&\quad\qquad+
B\varepsilon\sum_{k=0}^{|s|-1}\sum_{J_1+J_2=1}^{|s|-k-1}\mathbb{I}^{1}[T^{J_1}Z^{J_2}\tilde\upphi_{k}](\tau-1,\tau+1)+ B\sum_{k=0}^{|s|}\mathbb{E}^{|s|-k}[\tilde\upphi_{k}](0)\\
&\quad\leq B\varepsilon\sum_{k=0}^{|s|-1}\mathbb{E}^{|s|-k}[\tilde\upphi_{k,\cutr}](\tau)+
B\varepsilon^{-1}\sum_{k=0}^{|s|}\mathbb{E}^{|s|-k}[\tilde\upphi_{k}](0)\,,
\end{align*}
where we have used the results of Theorem~\ref{thm:ILED-futurint} once more to replace the  ILED bulk terms by data terms, and large $|r^*|$ estimates as in Proposition~\ref{prop:bulk-flux-large-r}. Taking a supremum in $\tau\in\mathbb R$, we find that for sufficiently small $\varepsilon$, the first term above, which is finite by \eqref{eq:finite-energy-lower-levels}, can be absorbed into the left hand side. This concludes the proof. 
\end{proof}

\subsection{Energy and pointwise decay of future integrable solutions}

From Theorems~\ref{thm:ILED-futurint} and \ref{thm:energy-bddness-futurint}, by (now) standard arguments introduced in \cite{Dafermos2010a} and expanded in \cite{Moschidis2016}, we have:

\begin{theorem}[Energy and pointwise decay] \label{thm:decay} Fix $s\in\{0, \pm 1,\pm 2\}$, $a_0\in[0,M)$, $\eta\in(0,1)$, and assume that, for each $k=0,\dots, |s|$, $\swei{\tilde\upphi}{s}_k$ are future integrable solutions of  the homogeneous transformed system of Definition~\ref{def:transformed-system}. Then, we have that for $\tau>1$, $\eta\in(0,1)$ and $|a|\leq a_0$, the energy flux of the system through $\Sigma_\tau$ decays as a function of $\tau$
\begin{align*}
\sum_{k=0}^{|s|}\mathbb{E}^{|s|-k}[\swei{\tilde\upphi}{s}_k](\tau)\leq \frac{B(a_0,\delta,\eta)\overline{\mathbb{D}}_{2,2}}{\tau^{2-[\eta]}}\,,
\end{align*}
and moreover the solutions themselves decay pointwise in $\tau$:
\begin{align*}
\sup_{\Sigma_\tau}\sup_{0\leq k\leq |s|} |\swei{\tilde\upphi}{s}_k|\leq \frac{B(a_0,\delta,\eta)\sqrt{\overline{\mathbb{D}}_{2,4}}}{\tau^{1/2-[\eta]/2}}\,.
\end{align*}
In the estimates above, $[\eta]=0$ if $s\leq 0$ but $[\eta]=\eta$ if $s>0$; the constants $\overline{\mathbb{D}}_{2,2}$ and $\overline{\mathbb{D}}_{2,4}$ represent higher order norms on the initial data given by
\begin{align*}
\overline{\mathbb{D}}_{2-[\eta],2}&:=
\sum_{k=0}^{|s|}\lp(\sum_{J_1+J_2=0}^1\overline{\mathbb{E}}_{2-[\eta]}^{|s|-k}[T^{J_1}Z^{J_2}\tilde\upphi_k](0)+\sum_{J_1+J_2=2}\overline{\mathbb{E}}_{0}^{|s|-k}[T^{J_1}Z^{J_2}\tilde\upphi_k](0)\rp)\,,\\
\overline{\mathbb{D}}_{2-[\eta],4}&:=\sum_{|\beta|\leq 2}\sum_{\Gamma\in\{T,X^*,\chi_R\Omega_i\}}\sum_{k=0}^{|s|}\lp(\sum_{J_1+J_2=0}^1\overline{\mathbb{E}}_{2-[\eta]}^{|s|-k}[\Gamma^\beta T^{J_1}Z^{J_2}\tilde\upphi_k](0)+\sum_{J_1+J_2=2}\overline{\mathbb{E}}_{0}^{|s|-k}[\Gamma^\beta T^{J_1}Z^{J_2}\tilde\upphi_k](0)\rp)\,,
\end{align*}
where $\Omega_i$ denotes the usual three angular operators and $\chi$ a cutoff which is supported at large $r$.
\end{theorem}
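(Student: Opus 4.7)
The plan is to derive the energy decay statement via the $r^p$-hierarchy of Dafermos--Rodnianski \cite{Dafermos2010a,Moschidis2016}, using as inputs the integrated local energy decay of Theorem~\ref{thm:ILED-futurint}, the energy flux boundedness of Theorem~\ref{thm:energy-bddness-futurint}, and the large-$r$ $r^p$-multiplier estimates already assembled in Propositions~\ref{prop:bulk-flux-large-r} and \ref{prop:higher-order-bulk-flux}. The pointwise statement will then follow by commuting with Killing fields and angular operators and applying a Sobolev inequality on the leaves $\Sigma_\tau$.

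First I would set up the hierarchy. Propositions~\ref{prop:bulk-flux-large-r} and \ref{prop:higher-order-bulk-flux}, applied on a time slab $\mathcal R_{(\tau_1,\tau_2)}$ and combined with the ILED bound of Theorem~\ref{thm:ILED-futurint} to absorb the bounded-$r^*$ error terms, produce the fundamental hierarchy inequality
\begin{align*}
\sum_{k=0}^{|s|}\overline{\mathbb{E}}^{|s|-k}_{p-1}[\tilde\upphi_k](\tau_2)
+\int_{\tau_1}^{\tau_2}\!\!\sum_{k=0}^{|s|}\overline{\mathbb{E}}^{|s|-k}_{p-1}[\tilde\upphi_k](\tau)\,d\tau
\le B\sum_{k=0}^{|s|}\overline{\mathbb{E}}^{|s|-k}_{p}[\tilde\upphi_k](\tau_1),
\end{align*}
valid for each $p\in(0,2)$, and extending to $p=2$ when $s\le 0$ (the restriction $p\in[1,2]$ in Proposition~\ref{prop:bulk-flux-large-r} for $s>0$ forces the loss $[\eta]=\eta$). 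Given this, a standard pigeonhole argument on dyadic sequences $\{\tau_n\}$ iterated through the chain $p=2-[\eta]\to 1-[\eta]\to 0$ yields $\sum_k\overline{\mathbb E}^{|s|-k}[\tilde\upphi_k](\tau)\le C\tau^{-(2-[\eta])}\overline{\mathbb D}_{2-[\eta],0}$, and commuting the same argument with the Killing operators $T,Z$ and with $X^*$ acts only to shift the derivative count, giving the stated energy decay rate at the initial data level $\overline{\mathbb D}_{2-[\eta],2}$ required to absorb the two $T$ derivatives needed in the pigeonhole step.

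For the pointwise bound, I would commute the hierarchy with the full set $\{T,Z,X^*,\chi_R\Omega_i\}$ (the angular operators are cut off for large $r$ so as not to spoil the $r^p$ weights and to remain compatible with the norms of Section~\ref{sec:template-energy-norms}), run the same dyadic procedure, and conclude decay of $\overline{\mathbb E}^{|s|-k+2}[\tilde\upphi_k](\tau)$ with loss $\tau^{-(2-[\eta])}$ in terms of the initial quantity $\overline{\mathbb D}_{2-[\eta],4}$. A Sobolev embedding on $\Sigma_\tau$ applied to $\tilde\upphi_k$ and its tangential derivatives then dominates $\sup_{\Sigma_\tau}|\tilde\upphi_k|^2$ by this commuted flux, and the square root in the target estimate produces the asserted $\tau^{-1/2+[\eta]/2}$ rate.

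The main technical obstacle is not the decay mechanism itself, which is by now standard for wave-like equations, but rather the verification that the $r^p$-hierarchy closes simultaneously for all levels $k=0,\dots,|s|$ of the transformed system in the presence of the coupling \eqref{eq:transformed-k}--\eqref{eq:transformed-J-k} and the transport relations \eqref{eq:transformed-transport}. The fact that $\swei{\Phi}{s}$ satisfies a wave equation coupled to lower-level $\swei{\upphi}{s}_k$ means that the $r^p$-identity at the top level produces error terms in $\upphi_{k+1}$ whose weights must be absorbed by the transport estimates \eqref{eq:phys-space-transport-identity-s<0}--\eqref{eq:phys-space-transport-identity-s>0}; for $s>0$ this is precisely what dictates working with the $\dbtilde\upphi_k$-variables and the restriction $p\in[1,2)$ that forces the $\eta$-loss. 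Fortunately this coupling has already been handled in Propositions~\ref{prop:bulk-flux-large-r} and \ref{prop:higher-order-bulk-flux}, so after invoking those results the dyadic argument of \cite{Dafermos2010a,Moschidis2016} transfers essentially verbatim.
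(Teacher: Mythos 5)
Your proposal takes essentially the same route as the paper: the Dafermos--Rodnianski/Moschidis $r^p$-hierarchy with inputs from Theorems~\ref{thm:ILED-futurint} and~\ref{thm:energy-bddness-futurint} and Propositions~\ref{prop:bulk-flux-large-r}, \ref{prop:higher-order-bulk-flux}, followed by dyadic pigeonhole, interpolation for $s>0$, and Sobolev embedding for pointwise decay. One caution: the ``fundamental hierarchy inequality'' as you display it is stated too cleanly---the paper's actual version, cf.\ \eqref{eq:ILED-for-decay} and \eqref{eq:corollary-starting-point}, carries an extra $T,Z$-commuted error term on the right-hand side (precisely the loss you later account for via $\overline{\mathbb D}_{2-[\eta],2}$), and without recording that loss in the inequality itself the pigeonhole chain would close without commutation, which is not the case.
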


\begin{proof} Recall estimate \eqref{eq:ILED-for-decay} from Proposition~\ref{prop:higher-order-bulk-flux} with $J=0$. By Theorems~\ref{thm:ILED-futurint} and \ref{thm:energy-bddness-futurint}, and the fact that $T$ and $Z$ are Killing, we have for any $0\leq \tau_A\leq \tau_B$, some $R^*$ sufficiently large which we fix, and $p\in[0,2)$,
\begin{align*}
&\sum_{J_1+J_2=0}^J\sum_{k=0}^{|s|}\overline{\mathbb{I}}_p^{|s|-k}[T^{J_1}Z^{J_2}\tilde\upphi_k](\tau_A,\tau_B)\\
&\quad\leq B\sum_{k=0}^{|s|}\lp(\sum_{J_1+J_2=0}^{J}\overline{\mathbb{E}}_p^{|s|-k}[T^{J_1}Z^{J_2}\tilde\upphi_k](\tau_A)+\sum_{J_1+J_2=J+1}\overline{\mathbb{E}}^{|s|-k}[T^{J_1}Z^{J_2}\tilde\upphi_k\mathbbm{1}_{[-R^*,R^*]}](\tau_A)\rp)\,,\\
&\sum_{J_1+J_2=0}^J\sum_{k=0}^{|s|}\overline{\mathbb{E}}_p^{|s|-k}[T^{J_1}Z^{J_2}\tilde\upphi_k](\tau_B)\leq B\sum_{J_1+J_2=0}^{J}\sum_{k=0}^{|s|}\overline{\mathbb{E}}_p^{|s|-k}[T^{J_1}Z^{J_2}\tilde\upphi_k](\tau_A)\,,
\numberthis\label{eq:corollary-starting-point}
\end{align*}
and we can include $p=2$ in the case $s\leq 0$. 
We can now essentially repeat the proofs given by \cite{Dafermos2010a} and \cite{Moschidis2016} for the case $s=0$. It will be convenient to use the notation $[\eta]$ to denote $[\eta]=0$ if $s\leq 0$ and $[\eta]=\eta\in(0,1)$ for $s>0$.

\smallskip
\noindent\textit{Step 1: taking $p=2-[\eta]$ for $s<0$}. 
From the definitions of the norms in Section~\ref{sec:template-energy-norms}, we have 
\begin{align*}
&\sum_{J_1+J_2=0}^1\sum_{k=0}^{|s|}\int_{\tau_0}^\infty\overline{\mathbb{E}}_{1-[\eta]}^{|s|-k}[T^{J_1}Z^{J_2}\tilde\upphi_k](\tau)d\tau \\
&\quad\leq B\sum_{J_1+J_2=0}^1\sum_{k=0}^{|s|}\overline{\mathbb{I}}_{2-[\eta]}^{|s|-k}[T^{J_1}Z^{J_2}\tilde\upphi_k](\tau_0,\infty)\\
&\quad\leq B\sum_{k=0}^{|s|}\lp(\sum_{J_1+J_2=0}^1\overline{\mathbb{E}}_{1-[\eta]}^{|s|-k}[T^{J_1}Z^{J_2}\tilde\upphi_k](\tau_0)+\sum_{J_1+J_2=2}\overline{\mathbb{E}}_0^{|s|-k}[T^{J_1}Z^{J_2}\tilde\upphi_k](\tau_0)\rp)\,,
\end{align*}
using \eqref{eq:corollary-starting-point} in the last line. By the pigeonhole principle, we can find a dyadic sequence $\{\tau_n\}_{n=1}^\infty$ with $\tau_n\to \infty$, say $\tau_n=2^n \tau_0$, such that 
\begin{align*}
&\sum_{J_1+J_2=0}^1\sum_{k=0}^{|s|}\overline{\mathbb{E}}_{1-[\eta]}^{|s|-k}[T^{J_1}Z^{J_2}\tilde\upphi_k](\tau_n) \\
&\quad \leq  \frac{B}{\tau_n}\sum_{k=0}^{|s|}\lp(\sum_{J_1+J_2=0}^1\overline{\mathbb{E}}_{2-[\eta]}^{|s|-k}[T^{J_1}Z^{J_2}\tilde\upphi_k](\tau_0)+\sum_{J_1+J_2=2}\overline{\mathbb{E}}_0^{|s|-k}[T^{J_1}Z^{J_2}\tilde\upphi_k](\tau_0)\rp)\,. \numberthis\label{eq:corollary-p2}
\end{align*}

\smallskip
\noindent\textit{Step 2: taking $p=1$}.  We repeat the steps above, now for the time slab $(\tau_A,\tau_B)=(\tau_n,\tau_{n+1})$, and commuting with $T$ and $Z$ exactly once. We have 
\begin{align*}
&\int_{\tau_n}^{\tau_{n+1}}\sum_{J_1+J_2=1}\sum_{k=0}^{|s|}\overline{\mathbb{E}}_{-[\eta]}^{|s|-k}[T^{J_1}Z^{J_2}\tilde\upphi_k](\tau)d\tau \\
&\quad \leq B\sum_{J_1+J_2=1}\sum_{k=0}^{|s|}\overline{\mathbb{I}}_{1-[\eta]}^{|s|-k}[T^{J_1}Z^{J_2}\tilde\upphi_k](\tau_n,\tau_{n+1})\\
&\quad\leq B\sum_{k=0}^{|s|}\lp(\sum_{J_1+J_2=1}\overline{\mathbb{E}}_{1-[\eta]}^{|s|-k}[T^{J_1}Z^{J_2}\tilde\upphi_k](\tau_n)+\sum_{J_1+J_2=2}\overline{\mathbb{E}}_0^{|s|-k}[T^{J_1}Z^{J_2}\tilde\upphi_k](\tau_n)\rp)\\
&\quad\leq \frac{B}{\tau_n}\sum_{k=0}^{|s|}\sum_{J_1+J_2=1}\overline{\mathbb{E}}_{2-[\eta]}^{|s|-k}[T^{J_1}Z^{J_2}\tilde\upphi_k](\tau_0)+B\sum_{J_1+J_2=2}\sum_{k=0}^{|s|}\overline{\mathbb{E}}_0^{|s|-k}[T^{J_1}Z^{J_2}\tilde\upphi_k](\tau_0)\,,
\end{align*}
using \eqref{eq:corollary-p2} for the first term on the second line plus energy boundedness, as in \eqref{eq:corollary-starting-point}, for the second term. By the mean value theorem, we deduce that for some possibly new dyadic sequence $\overline\tau_n\in[\tau_n-\frac{\tau_{n+1}-\tau_n}{4},\tau_n+\frac{\tau_{n+1}-\tau_n}{4}]$, we have
\begin{align*}
&\sum_{J_1+J_2=1}\sum_{k=0}^{|s|}\overline{\mathbb{E}}_{-[\eta]}^{|s|-k}[T^{J_1}Z^{J_2}\tilde\upphi_k](\overline\tau_n)\\
&\quad\leq \frac{B}{\overline\tau_n-\tau_n}\lp( \frac{1}{\tau_n}\sum_{J_1+J_2=1}\sum_{k=0}^{|s|}\overline{\mathbb{E}}_{2-[\eta]}^{|s|-k}[T^{J_1}Z^{J_2}\tilde\upphi_k](\tau_0)+\sum_{J_1+J_2=2}\sum_{k=0}^{|s|}\overline{\mathbb{E}}_0^{|s|-k}[T^{J_1}Z^{J_2}\tilde\upphi_k](\tau_0)\rp)\\
&\quad\leq \frac{B}{\overline \tau_n}\sum_{k=0}^{|s|}\lp(\sum_{J_1+J_2=1}\overline{\mathbb{E}}_{2-[\eta]}^{|s|-k}[T^{J_1}Z^{J_2}\tilde\upphi_k](\tau_0)+\sum_{J_1+J_2=2}\overline{\mathbb{E}}_0^{|s|-k}[T^{J_1}Z^{J_2}\tilde\upphi_k](\tau_0)\rp) \,.\numberthis\label{eq:corollary-p1-T}
\end{align*}

We repeat the same steps once more with the non-commuted version of $\overline{\mathbb{E}}_0$, now for $(\tau_A,\tau_B)=(\overline\tau_n,\overline\tau_{n+1})$:
\begin{align*}
\int_{\overline\tau_n}^{\overline\tau_{n+1}}\sum_{k=0}^{|s|}\overline{\mathbb{E}}_{-[\eta]}^{|s|-k}[\tilde\upphi_k](\tau)d\tau &\leq B\sum_{k=0}^{|s|}\overline{\mathbb{I}}_{1-[\eta]}^{|s|-k}[\tilde\upphi_k](\overline\tau_n,\infty)\\
&\leq B\sum_{k=0}^{|s|}\lp(\overline{\mathbb{E}}_{1-[\eta]}^{|s|-k}[\tilde\upphi_k](\overline\tau_n)+\sum_{J_1+J_2=1}\overline{\mathbb{E}}_{-[\eta]}^{|s|-k}[T^{J_1}Z^{J_2}\tilde\upphi_k](\overline\tau_n)\rp)\\
&\leq B\sum_{k=0}^{|s|}\lp(\overline{\mathbb{E}}_{1-[\eta]}^{|s|-k}[\tilde\upphi_k](\tau_n)+B\sum_{J_1+J_2=1}\overline{\mathbb{E}}_{-[\eta]}[T^{J_1}Z^{J_2}\tilde\upphi_k](\overline\tau_n)\rp)\\
&\leq \frac{B}{\overline\tau_n}\sum_{k=0}^{|s|}\lp(\sum_{J_1+J_2=0}^1\overline{\mathbb{E}}_{2-[\eta]}^{|s|-k}[T^{J_1}Z^{J_2}\tilde\upphi_k](\tau_0)+\sum_{J_1+J_2=2}\overline{\mathbb{E}}_0^{|s|-k}[T^{J_1}Z^{J_2}\tilde\upphi_k](\tau_0)\rp)\,,
\end{align*}
where we have used energy boundedness, as given in \eqref{eq:corollary-starting-point}, to pass to the second line and our previous estimates \eqref{eq:corollary-p2} and \eqref{eq:corollary-p1-T} to obtain the conclusion. As before, we deduce that for some possibly new dyadic sequence $\overline{\overline{\tau}}_n\in[\overline\tau_n-\frac{\overline\tau_{n+1}-\overline\tau_n}{4},\overline\tau_n+\frac{\overline\tau_{n+1}-\overline\tau_n}{4}]$, we have
\begin{align*}
\sum_{k=0}^{|s|}\overline{\mathbb{E}}_{-[\eta]}^{|s|-k}[\tilde\upphi_k](\overline{\overline{\tau}}_n)&\leq \frac{B}{\overline{\overline{\tau}}_n^2}\sum_{k=0}^{|s|}\lp(\sum_{J_1+J_2=0}^1\overline{\mathbb{E}}_{2-[\eta]}^{|s|-k}[T^{J_1}Z^{J_2}\tilde\upphi_k](\tau_0)+\sum_{J_1+J_2=2}\overline{\mathbb{E}}_0^{|s|-k}[T^{J_1}Z^{J_2}\tilde\upphi_k](\tau_0)\rp)\,. \numberthis\label{eq:corollary-p1-id}
\end{align*}

\smallskip
\noindent\textit{Step 3: energy decay}. In the case $s\leq 0$, when $[\eta]=\eta\in(0,1)$, we interpolate
\begin{align*}
\sum_{k=0}^{|s|}\overline{\mathbb{E}}_{0}^{|s|-k}[\tilde\upphi_k](\overline{\overline{\tau}}_n)&\leq \lp(\sum_{k=0}^{|s|}\overline{\mathbb{E}}_{-\eta}^{|s|-k}[\tilde\upphi_k](\overline{\overline{\tau}}_n) \rp)^{1-\eta/2}\lp(\sum_{k=0}^{|s|}\overline{\mathbb{E}}_{2-\eta}^{|s|-k}[\tilde\upphi_k](\overline{\overline{\tau}}_n)\rp)^{\eta/2}\\
&\leq \frac{B}{\overline{\overline{\tau}}_n^{2-\eta}} \lp(\sum_{k=0}^{|s|}\lp(\sum_{J_1+J_2=0}^1\overline{\mathbb{E}}_{2-\eta}^{|s|-k}[T^{J_1}Z^{J_2}\tilde\upphi_k](\tau_0)+\sum_{J_1+J_2=2}\overline{\mathbb{E}}_0^{|s|-k}[T^{J_1}Z^{J_2}\tilde\upphi_k](\tau_0)\rp)\rp)^{1-\eta/2}\\
&\quad\qquad \times \lp(\sum_{k=0}^{|s|}\overline{\mathbb{E}}_{2-\eta}^{|s|-k}[\tilde\upphi_k](\tau_0)\rp)^{-\eta/2}\\
&\quad\leq \frac{B}{\overline{\overline{\tau}}_n^{2-\eta}} \sum_{k=0}^{|s|}\lp(\sum_{J_1+J_2=0}^1\overline{\mathbb{E}}_{2-\eta}^{|s|-k}[T^{J_1}Z^{J_2}\tilde\upphi_k](\tau_0)+\sum_{J_1+J_2=2}\overline{\mathbb{E}}_0^{|s|-k}[T^{J_1}Z^{J_2}\tilde\upphi_k](\tau_0)\rp)
\,,
\end{align*}
and use the energy boundedness with $p=2-\eta$ stated in \eqref{eq:corollary-starting-point}. In the case $s\leq 0$, where $[\eta]=0$, we do not need this additional step. To conclude, we note that since we also have energy boundedness for $\overline{\mathbb{E}}_0$ by \eqref{eq:corollary-starting-point}, 
we deduce 
\begin{align*}
\sum_{k=0}^{|s|}\overline{\mathbb{E}}_0^{|s|-k}[\tilde\upphi_k](\overline{\overline{\tau_{n+1}}}) &\leq \frac{B}{\overline{\overline{\tau}}_{n}^2}\sum_{k=0}^{|s|}\lp(\sum_{J_1+J_2=1}\overline{\mathbb{E}}_{2-[\eta]}^{|s|-k}[T^{J_1}Z^{J_2}\tilde\upphi_k](\tau_0)+\sum_{J_1+J_2=2}\overline{\mathbb{E}}_0^{|s|-k}[T^{J_1}Z^{J_2}\tilde\upphi_k](\tau_0)\rp)\\
&\leq \frac{4B}{\overline{\overline{\tau}}_{n+1}^2}\sum_{k=0}^{|s|}\lp(\sum_{J_1+J_2=1}\overline{\mathbb{E}}_{2-[\eta]}^{|s|-k}[T^{J_1}Z^{J_2}\tilde\upphi_k](\tau_0)+\sum_{J_1+J_2=2}\overline{\mathbb{E}}_0^{|s|-k}[T^{J_1}Z^{J_2}\tilde\upphi_k](\tau_0)\rp)
\end{align*}
and so in fact the decay holds for any $\tau>\tau_0>0$:
\begin{align*}
\sum_{k=0}^{|s|}\overline{\mathbb{E}}_0^{|s|-k}[T^j\tilde\upphi_k](\tau) 
&\leq \frac{B}{\tau^2}\sum_{k=0}^{|s|}\lp(\sum_{J_1+J_2=1}\overline{\mathbb{E}}_{2-[\eta]}^{|s|-k}[T^{J_1}Z^{J_2}\tilde\upphi_k](\tau_0)+\sum_{J_1+J_2=2}\overline{\mathbb{E}}_0^{|s|-k}[T^{J_1}Z^{J_2}\tilde\upphi_k](\tau_0)\rp)\,.
\end{align*}

\smallskip
\noindent\textit{Step 4: pointwise decay decay}. Finally, we can obtain pointwise decay through Sobolev inequalities, see for instance \cite[Sections 5.3.7 and 5.3.8]{Dafermos2008}.
\end{proof}

\subsection{Future integrability of all solutions arising from regular data}

In this section, we show that
\begin{theorem}[Future integrability] \label{thm:cont-arg} Fix $s\in\mathbb{Z}_{\leq 2}$ and $M>0$. For all $a<M$, let $\swei{\upphi}{s}_k$ for each $k\in\{0,\dots,|s|\}$ be solutions to the homogeneous transformed system of Definition~\ref{def:transformed-system} arising from smooth, compactly supported initial data. Then, $\swei{\upphi}{s}_k$ are future integrable in the sense of Definition~\ref{def:future-int}.
\end{theorem}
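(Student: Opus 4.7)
The plan is to run a continuity argument in the rotation parameter $|a|\in[0,M)$, following the strategy of \cite{Dafermos2016b} in the scalar case. Define
$$\mathbb{A} := \{|a|\in[0,M) : \text{every smooth, compactly supported initial datum gives a future integrable solution}\}.$$
We need to show $\mathbb{A}=[0,M)$. Since $a=0$ makes the stationary Killing vector $T$ globally timelike away from $\mc H^+$, the associated Killing energy is coercive and one obtains uniform-in-$\tau$ energy flux bounds directly from \eqref{eq:1st-order-estimate-for-energy-bddness} in Proposition~\ref{prop:estimates-assuming-ILED} with $\chi\equiv 0$. Combined with Lemma~\ref{lemma:future-int-finite-flux}, this gives $0\in\mathbb{A}$. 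It then suffices to prove $\mathbb{A}$ is both open and closed in $[0,M)$.

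For \textbf{closedness}, let $a_n\to a_\infty\in[0,M)$ with $a_n\in\mathbb{A}$, and pick $a_0\in(|a_\infty|,M)$ so that $|a_n|\le a_0$ eventually. Since $a_n\in\mathbb{A}$, Theorem~\ref{thm:energy-bddness-futurint} yields uniform-in-$\tau$ energy flux bounds $\sup_\tau\sum_k \overline{\mathbb{E}}_p^{J+|s|-k}[\swei{\tilde\upphi}{s}_{k,a_n}](\tau)\le B(a_0)\sum_k\overline{\mathbb{E}}_p^{J+|s|-k}[\swei{\tilde\upphi}{s}_k](0)$, with a constant independent of $n$. By the smooth dependence statement in Proposition~\ref{prop:well-posedness}, on any finite slab $\mc{R}_{(0,T)}$ the solutions $\swei{\tilde\upphi}{s}_{k,a_n}$ converge to $\swei{\tilde\upphi}{s}_{k,a_\infty}$, so the same uniform bound passes to the limit. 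Lemma~\ref{lemma:future-int-finite-flux} then gives $a_\infty\in\mathbb{A}$.

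For \textbf{openness}, fix $\mathring a\in\mathbb{A}$. The main obstacle is that, for general $|a|<M$, estimate \eqref{eq:non-energy-bddness-intro} cannot be closed by absorbing the bulk term into a small factor of $|a|$. Our replacement is the observation, encoded in Theorem~\ref{thm:ODE-ILED-fixed-m}, that at \emph{fixed azimuthal mode} $m$ trapping costs us only \emph{half} a derivative and is localised in a set disjoint from the ergoregion. We therefore first reduce to the fixed-$m$ setting using the completeness of azimuthal modes for spin-weighted functions; the argument will be quantitative in $m$ but that is harmless once summed. With $m$ fixed, given any $\tau>1$, construct a \emph{hybrid} operator $\mathfrak{R}^\flat$ whose coefficients agree with those of Definition~\ref{def:transformed-system} for parameter $a$ on $\tilde t^*\in[0,\tau-\delta_0]$ and with those for parameter $\mathring a$ for $\tilde t^*\ge\tau$, interpolating smoothly in between. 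Denote by $\swei{\upphi}{s}_{k,\flat}$ the solution to this hybrid system with the given Cauchy data: by domain of dependence it coincides with the genuine parameter-$a$ solution on $\mc{R}_{(0,\tau-\delta_0)}$, while in its future tail it is a solution to the parameter-$\mathring a$ transformed system and therefore enjoys the estimates of Theorems~\ref{thm:ILED-futurint} and \ref{thm:energy-bddness-futurint} (applied at the $\mathring a$ parameter). Running the energy identity of Proposition~\ref{prop:estimates-assuming-ILED} on $\swei{\upphi}{s}_{k,\flat}$ produces bulk error terms supported either where the coefficients transition (a set of $\tilde t^*$-length $O(\delta_0)$) or from the quantitative difference $|a-\mathring a|$; the former are dominated by the $\mathring a$-ILED bound and the latter, in view of the fixed-$m$ half-derivative loss at trapping, can be absorbed by commuting once with the Killing fields $T,Z$ and using the Killing-commuted ILED as a black box. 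Choosing $\delta_0$ small (using the $\mathring a$-ILED statement) and then $|a-\mathring a|$ small depending on $\delta_0$, one closes a $\tau$-independent energy flux bound for $\swei{\upphi}{s}_{k,\flat}$ and, by the agreement on $\mc{R}_{(0,\tau-\delta_0)}$, for the genuine parameter-$a$ solution. Taking $\tau\to\infty$ and applying Lemma~\ref{lemma:future-int-finite-flux} gives $a\in\mathbb{A}$, proving openness.

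The hard part is precisely this openness step: one must design the hybrid interpolation so that (i) it respects the transport-hyperbolic structure of the transformed system of Definition~\ref{def:transformed-system} (so that Proposition~\ref{prop:estimates-assuming-ILED} is applicable) and (ii) the error introduced in the bulk, which is not \emph{a priori} small, can be absorbed using only the half-derivative gain from Theorem~\ref{thm:ODE-ILED-fixed-m} together with Killing commutations. Everything else — closedness, the base case $a=0$, and the passage from energy boundedness to future integrability — is, by the results already established, essentially automatic.
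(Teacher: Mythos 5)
Your proposal follows essentially the same route as the paper: a continuity argument in $a$ (base case $a=0$, closedness from Theorem~\ref{thm:energy-bddness-futurint} plus smooth dependence from Proposition~\ref{prop:well-posedness}, openness from a hybrid operator interpolating in $\tilde t^*$ between $a$ and $\mathring a$ at fixed azimuthal mode $m$, using the half-derivative gain of Theorem~\ref{thm:ODE-ILED-fixed-m}). The one ingredient you invoke but do not isolate is the derivative-gain estimate (the paper's Lemma~\ref{lemma:derivative-gain}), which bounds the $J$-commuted flux by data plus a bulk error at a \emph{fixed} derivative level independent of $J$; this is what lets one absorb the higher-order bulk produced by the interpolated-system ILED after shrinking $\delta_0$ and then $|a-\mathring a|$.
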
 
Theorem~\ref{thm:cont-arg} renders the results of our previous sections, namely Theorems~\ref{thm:ILED-futurint}, \ref{thm:energy-bddness-futurint} and \ref{thm:decay}, unconditional, thus completing the proof of our main results, Theorem~\ref{thm:main} and Corollary~\ref{cor:main}.

We will prove Theorem~\ref{thm:cont-arg} by a reduction the the case of fixed azimuthal modes. If $\swei{\upalpha}{s}$ is a homogeneous solution to the Teukolsky equation \eqref{eq:teukolsky-alpha} arising from suitably regular data as in the above theorem, then it can be decomposed into its azimuthal modes: $\swei{\upphi}{s}_0=\sum_{m\in\mathbb{Z}}\swei{\upphi}{s}_{0,m}$ where each $\swei{\upphi}{s}_{0,m}$ is defined by $Z\swei{\upphi}{s}_{0,m}=im\swei{\upphi}{s}_{0,m}$. The decomposition is preserved by the transport relations generating the transformed system of in Lemma~\ref{lemma:transformed-system}: $\swei{\upphi}{s}_{k,m}:=(w\mc L)^k\swei{\upphi}{s}_{0,m}$ verifies $Z\swei{\upphi}{s}_{k,m}=im\swei{\upphi}{s}_{k,m}$, and one has $\swei{\upphi}{s}_k=\sum_{m\in\mathbb{Z}}\swei{\upphi}{s}_{k,m}$. Moreover, the decomposition is orthogonal; thus, it is enough to prove Theorem~\ref{thm:cont-arg} under the additional assumption that the solutions are supported on a single $m\in\mathbb{Z}$. In light of our characterization of future integrable solutions in Lemma~\ref{lemma:future-int-finite-flux}, Theorem~\ref{thm:cont-arg} follows from the following Proposition:

\begin{proposition}[Future integrability for fixed azimuthal modes]\label{prop:continuity-arg-m} Fix $s\in\mathbb{Z}_\leq 2$, $M>0$, and $m\in\mathbb{Z}$. Let $\mc A_m\subset(-M,M)$ denote the set of all $a\in(-M,M)$ such that all solutions to the homogeneous transformed system of Definition~\ref{def:transformed-system} with black hole parameters $(a,M)$ arising from smooth, compactly supported initial data, which are furthermore supported on the azimuthal number $m$, verify the condition 
\begin{align}
\sup_{\tau>0}\overline{\mathbb{E}}_p^J[\swei{\dbtilde\upphi}{s}_k](\tau)<\infty\,, \label{eq:cont-arg-finite-flux-cond}
\end{align}
for all $k\in\{0,\dots, |s|$ and $J\geq 0$ and some $p\in(1,2)$. Then, we have $\mc A_m=(-M,M)$, i.e.\ the set $\mc A_m$ is 
\begin{enumerate}[label=(\roman*)]
\item non-empty: there is an $a\in(-M,M)$ such that $a\in\mc A_m$;
\item closed: if $\{a_n\}_{n=0}^\infty$ with $a_n\in\mc A_m$ verifies $a:=\lim_{n\to\infty} a_n\in(-M,M)$ then $a\in\mc A$;
\item open: for each $\mathring{a}\in\mc A_m$, there exists an $\epsilon>0$ such that $|a-\mathring{a}|<\epsilon$ implies $a\in\mc A_m$.
\end{enumerate}
\end{proposition}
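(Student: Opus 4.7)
The plan is to prove each of the three topological properties of $\mc A_m$ separately, with the bulk of the work concentrated in openness.

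For non-emptiness, the natural starting point is $a=0$. In this case the stationary Killing field $T$ is globally timelike in the exterior (no ergoregion), so the conserved $T$-energy current yields coercive energy flux bounds without appealing to ILED; combined with the transport estimates of Section~\ref{sec:toolbox-physical-space} to handle the lower level variables and the redshift estimates near $\mc H^+$, this immediately gives finite flux at all orders $J$ and any $p\in(1,2)$. Alternatively, one may use that Theorem~\ref{thm:energy-bddness-futurint} already provides unconditional energy flux boundedness in the $|a|\ll M$ perturbative regime, where the term $|a|\,\overline{\mathbb{I}}^{\rm deg}$ appearing on the right hand side of \eqref{eq:non-energy-bddness-intro} can be directly absorbed by the ILED estimate \eqref{eq:non-future-int-intro}.

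For closedness, suppose $a_n\to a\in(-M,M)$ with $a_n\in\mc A_m$. Pick $a_0\in(\sup_n|a_n|\cup\{|a|\},M)$, which is strictly less than $M$ by hypothesis. Since each $a_n\in\mc A_m$, Lemma~\ref{lemma:future-int-finite-flux} applies and Theorem~\ref{thm:energy-bddness-futurint} produces an energy flux bound with constant $B(a_0,J,p)$ that is \emph{independent of $n$}. To transfer this to the parameter $a$, one fixes initial data on $\Sigma_0$ and invokes the joint continuity in $(a,\text{data})$ of the solution map established in Proposition~\ref{prop:well-posedness}. On any finite time slab $\mc R_{(0,\tau)}$, the solution at parameter $a$ is the $C^0$-limit of the solutions at parameters $a_n$ (after commuting with sufficiently many Killing vector fields), and Fatou's lemma applied to the energy density on $\Sigma_\tau$ yields the same uniform-in-$\tau$ bound for the $a$-solution. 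This gives $a\in\mc A_m$.

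Openness is the heart of the matter. Fix $\mathring a\in\mc A_m$, so by Theorems~\ref{thm:ILED-futurint}, \ref{thm:energy-bddness-futurint} the full conclusion of our main theorem holds for the $\mathring a$-system. For $a$ with $|a-\mathring a|\ll 1$, one introduces a hybrid system whose differential operators coincide with those of the $a$-system in a slab $\mc R_{(0,\tau-\delta_0)}$ and interpolate smoothly to the $\mathring a$-operators in $\mc R_{(\tau-\delta_0,\tau)}$, matching the $\mathring a$-system from $\Sigma_\tau$ to the future. The $a$-solution we wish to control agrees with the solution of this hybrid system on $\mc R_{(0,\tau-\delta_0)}$, and the hybrid solution extended by the $\mathring a$-flow in the future is \emph{globally future integrable} with bounds depending on the $\mathring a$-estimates and on how far the initial data on $\Sigma_{\tau-\delta_0}$ differs from the genuine $\mathring a$-evolution of the original data. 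The crucial input specific to fixed $m$ is Theorem~\ref{thm:ODE-ILED-fixed-m}: the trapping set $r_{\rm trap}(\omega,m,l)$ lies in a compact set bounded away from $r_+$ and from $\infty$, so by the frequency localization arguments of Section~\ref{sec:physical-space-estimates} the degenerate ILED estimate loses only a \emph{half} derivative at trapping, and the estimate \eqref{eq:estimate-for-openness-first-order} of Proposition~\ref{prop:estimates-assuming-ILED} is available. Feeding this into the energy estimate for the hybrid system, the bulk error on the right hand side is a \emph{lower-order} quantity $\int r^{-3}(s^2+r^{-1})(|\tilde\upphi_k|^2+m^2|\tilde\upphi_k|^2)$ which, using Hardy inequalities and the gain from the small parameters, can be absorbed into the left hand side after first choosing $\delta_0$ small (to reduce the size of the interpolation region) and then $|a-\mathring a|$ small enough (to make the symbol of the perturbation small in that region).

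The main obstacle is the openness step: one must verify that the $m$-dependent but $\tau$-independent constant produced by the hybrid-system estimate really closes a bootstrap for $\sup_{\tau>0}\overline{\mathbb E}_p^J[\swei{\dbtilde\upphi}{s}_k](\tau)$, i.e.\ that the error contribution from the interpolation region is genuinely subordinate rather than just lower order at fixed $\tau$. This requires exploiting the half-derivative (rather than full-derivative) loss at trapping from Theorem~\ref{thm:ODE-ILED-fixed-m} to gain an $m$-dependent smallness factor in the coupling terms, and also requires carefully tracking how the Kerr-specific coefficients in the transformed system~\eqref{eq:transformed-R-k-tilde} depend on $a$ so that the difference between the $a$- and $\mathring a$-operators can be bounded pointwise by $|a-\mathring a|$ times a smooth background-independent symbol. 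Once this smallness is available, the standard bootstrap machinery closes and openness follows.
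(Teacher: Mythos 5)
Your proposal follows essentially the same route as the paper: non-emptiness via the $a=0$ case (the paper cites~\cite{Dafermos2016a,Dafermos2017}, while your direct $T$-energy argument is the same mechanism underlying those works); closedness via the well-posed continuous dependence on $a$ combined with the uniform constant in Theorem~\ref{thm:energy-bddness-futurint}; and openness via a hybrid-operator interpolation matching the $a$-system in the past of $\Sigma_{\tau-\delta_0}$ and the $\mathring a$-system in the future of $\Sigma_\tau$, closed by the derivative-gain estimate (Lemma~\ref{lemma:derivative-gain}) which exploits the fixed-$m$ localization of trapping and the half-derivative loss.

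One caveat worth flagging: your alternative non-emptiness argument, ``Theorem~\ref{thm:energy-bddness-futurint} already provides unconditional energy flux boundedness in the $|a|\ll M$ perturbative regime,'' is circular as stated, since Theorem~\ref{thm:energy-bddness-futurint} is itself conditional on future integrability — precisely the property you are trying to establish. The non-circular version of the $|a|\ll M$ argument closes the ILED and energy estimates simultaneously on finite slabs $[0,\tau]$ with the $|a|\,\mathbb{I}^{\rm deg}$ error absorbed by smallness, uniformly in $\tau$, exactly as in~\cite{Dafermos2016a,Dafermos2017}; your primary $a=0$ argument via the coercive conserved $T$-energy (plus redshift and transport estimates) is sound and suffices for what is needed.
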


As we have argued before, in view of condition~\eqref{eq:cont-arg-finite-flux-cond} in Proposition~\ref{prop:continuity-arg-m}, it is clear that if one can close an energy boundedness statement for all solutions of the transformed system for a given black hole rotation parameter $a$ arising from suitably regular data, then $a\in\mc A_m$.  Thus, in light of  the literature preceding this work, it is easy to establish non-emptiness of $\mc A_m$:
\begin{proof}[Proof of Proposition~\ref{prop:continuity-arg-m}(i)] It follows from work of Dafermos, Holzegel and Rodnianski \cite{Dafermos2016a} that  $0\in\mc A_m$, hence the latter set is non-empty; indeed the later \cite{Dafermos2017} shows that there is some sufficiently small $\epsilon>0$ such that $[0,\epsilon)\subset\mc A_m$.
\end{proof}

In turn, closedness of $\mc A_m$ follows immediately from the results of the previous two sections and the continuity of the solutions on $a$:
\begin{proof}[Proof of Proposition~\ref{prop:continuity-arg-m}(ii)]
For some $|a|\leq a_0$, let $\tilde\upphi_k$ denote the solutions, for given smooth, compactly supported initial data, to the transformed system with black hole parameter $a$, which is further supported on a single azimuthal number $m$. Take a sequence $\{a_n\}_{n=0}^\infty$ of elements $a_n\in\mc A_m$ such that $\lim_{n\to\infty} a_n=a$ and define ${\tilde\upphi}_{k,n}$ to be the (future integrable) solutions to the system with the same initial data as ${\tilde\upphi}_k$ and supported on the same azimuthal number $m$. Then, by the  well-posedness statement in Proposition~\ref{prop:well-posedness}, for any $p\in[0,2)$ and $J\geq 0$
\begin{align*}
\sum_{k=0}^{|s|}\overline{\mathbb{E}}_p^{J+|s|-k}[{\tilde\upphi}_k](\tau)&=\lim_{n\to\infty}\sum_{k=0}^{|s|}\overline{\mathbb{E}}_p^{J+|s|-k}[{\tilde\upphi}_{k,n}](\tau)
\end{align*}
By Theorem~\ref{thm:energy-bddness-futurint}, we can bound the right hand side by data: for any $\tau\geq 0$
\begin{align*}
\sum_{k=0}^{|s|}\overline{\mathbb{E}}_p^{J+|s|-k}[{\tilde\upphi}_k](\tau)&\leq B(a_0,J_{\rm max})\sum_{k=0}^{|s|}\overline{\mathbb{E}}_p^{J+|s|-k}[{\tilde\upphi}_{k,n}](0)=B(a_0,J)\sum_{k=0}^{|s|}\overline{\mathbb{E}}_p^{J+|s|-k}[{\tilde\upphi}_{k}](0)<\infty\,,
\end{align*}
since the initial data for the solutions with $a_n$ is the same as for the solution with $a$. Hence, ${\tilde\upphi}_k$ verifies the necessary finite flux condition for all $\tau\geq 0$. Since the initial data was chosen arbitrarily in the class of smooth, compactly supported initial data, we conclude that $a\in\mc A_m$. Thus, $\mc A_m$ is closed.
\end{proof}

Finally, we are left with establishing openness of $\mc A_m$, which is significantly more involved. The main obstacle is the absence of an a priori energy boundedness statement for $a\not\in\mc A_m$: estimate~\eqref{eq:bulk-flux-estimate-large-r} in Proposition~\ref{prop:bulk-flux-large-r} shows that the flux on $\Sigma_\tau$ is controlled by a bulk term at the same differentiability level but the latter, though it can be itself be controlled by the data and the flux on $\Sigma_\tau$, comes without smallness for general subextremal $a$. In the context of fixed $m$ solutions, we can improve this estimate considerably: in the following lemma,  we gain an arbitrary number of derivatives as we show we can control the flux of $J+1$ derivatives of the solutions by a lower order bulk term.

\begin{lemma}[Energy boundedness up to lower order bulk error] \label{lemma:derivative-gain} Fix $M>0$, $s\neq 0$ and $a_0\in[0,M)$. Then, for all $|a|\leq a_0$, $p\in(1,2)$ and $J\geq 3|s|+1$, we have
\begin{align*}
&b(\delta)\sum_{k=0}^{|s|}\overline{\mathbb{I}}_{-\delta,p}^{J+|s|-k-1}[\tilde\upphi_k](0,\tau) + \sup_{\tau'\in[0,\tau]} \sum_{k=0}^{|s|}\overline{\mathbb{E}}_p^{J+|s|-k}[\tilde\upphi_k](\tau') 
\\
&\quad \leq B(a_0,m,J)\sum_{k=0}^{|s|}\overline{\mathbb{E}}_p^{J+|s|-k}[\tilde\upphi_k](0) + B(a_0,m,J)\sum_{k=0}^{|s|}\sum_{J_1=0}^{4|s|-k}\int_0^\tau\int_{\Sigma_{\tau'}} r^{-3}|T^{J_1}\tilde\upphi_k|^2dr d\sigma d\tau'\,. \numberthis\label{eq:derivative-gain}
\end{align*}
\end{lemma}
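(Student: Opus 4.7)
The plan is to adapt the fixed-azimuthal-number continuity argument from \cite{Dafermos2016b} to the transformed spin-weighted system. The central input is the fixed-$m$ frequency-space estimate of Theorem~\ref{thm:ODE-ILED-fixed-m}, which upgrades Theorem~\ref{thm:ODE-ILED} in two crucial ways once the azimuthal number is fixed: first, the trapping radius $r_{\rm trap}$ is bounded uniformly away from $r_+$ (in particular $r_{\rm trap}>(1+\sqrt{2})M$, so that trapping is entirely outside the ergoregion up to $|a|\to M$); and second, the degeneration at trapping costs only half a derivative in $|\omega|$, since the bulk estimate contains a non-degenerate $|\omega|$-weighted zeroth-order term alongside the usual $(1-r_{\rm trap}/r)^2\omega^2$ term.

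I would first introduce the hyperboloidal cutoff inhomogeneous system of Section~\ref{sec:weird-cutoff-system} built from a cutoff $\xi=\xi(\tilde t^*)$ supported in $\mc{R}_{(0,\tau)}$ and identically $1$ on $\mc{R}_{(1,\tau-1)}$, commute with $T^{J_1}Z^{J_2}$ for $J_1+J_2\leq J+|s|-k-1$, and apply the fixed-$m$ analogue of Proposition~\ref{prop:ODE-to-PDE-non-future-int}, namely sum the conclusion of Theorem~\ref{thm:ODE-ILED-fixed-m} over $(\omega,l)$ as in Section~\ref{sec:hyp-cutoffs-conclusion}. Since $Z=im$ on the $m$-mode solutions, the $Z$-derivatives produce only explicit factors of $|m|$ and no genuine angular commutation errors. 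This controls the cutoff-system degenerate-at-trapping bulk by initial data, by the $\Sigma_\tau$ flux (arising from the $\tilde t^*\in(\tau-1,\tau)$ part of $\operatorname{supp}\nabla\xi$), and by cutoff-induced errors which are absorbable into these via Lemmas~\ref{lemma:weird-cutoffs-diff-fake-real} and \ref{lemma:weird-cutoffs-inhoms-rough-estimates}. I would then upgrade to the top flux level using the higher-order flux statement of Proposition~\ref{prop:higher-order-bulk-flux} and control the large-$|r^*|$ portion via Proposition~\ref{prop:bulk-flux-large-r}, so that taking the supremum in $\tau'\in[0,\tau]$ and choosing the smallness parameter appropriately absorbs the $\Sigma_\tau$-flux contribution into the left-hand side.

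The heart of the argument, and the expected main obstacle, is to remove the trapping degeneration on the left, i.e.\ to upgrade $\overline{\mathbb{I}}^{\mathrm{deg},J+|s|-k-1}_{-\delta,p}$ to $\overline{\mathbb{I}}^{J+|s|-k-1}_{-\delta,p}$. For this I would return to \eqref{eq:part1-ILED-withoutTS-top} exploiting the extra non-degenerate term $\int r^{-5/2}(|m|+|\omega|)|\sml{\Psi}{s}|^2dr$ afforded by Theorem~\ref{thm:ODE-ILED-fixed-m}, and via Plancherel obtain a physical-space interpolation of the schematic form
\begin{equation*}
\int (1-\zeta)r^{-3}|T^{j}\tilde\upphi_k|^2\, dr\, d\sigma\, d\tau \leq \varepsilon_0 \int \zeta r^{-3}|T^{j+1}\tilde\upphi_k|^2\, dr\, d\sigma\, d\tau + \varepsilon_0^{-1}\int r^{-3}|T^{j-1}\tilde\upphi_k|^2\, dr\, d\sigma\, d\tau,
\end{equation*}
reflecting the frequency identity $|\omega|^{2j}=\zeta|\omega|^{2j}+(1-\zeta)|\omega|\cdot|\omega|^{2j-1}$ combined with the half-derivative gain at trapping. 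Iterating this across $j=J+|s|-k-1,\dots,0$ converts the degenerate bulk into a $T$-only lower-order bulk, which is precisely the second term on the right-hand side of \eqref{eq:derivative-gain}; the upper index $4|s|-k$ rather than $J+|s|-k-1$ accommodates the regularity losses incurred by repeated use of the transport equations \eqref{eq:transformed-transport} and of the cross-couplings $a\mathfrak{J}_{k,j}$ in \eqref{eq:transformed-k} when passing between the different levels of the hierarchy. The main technical difficulty will be to close the bookkeeping of these transport-induced losses in tandem with the interpolation gain, so that every non-degenerate higher-order term reappearing after one iteration is re-absorbed by the next without any constant developing unfavorable $m$-dependence.
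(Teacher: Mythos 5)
Your plan captures the two genuine ingredients the paper uses---that $Z=im$ on fixed-$m$ data converts all $Z$-commutations into harmless multiplications by $|m|$, and that the degenerate trapping bulk controls an extra half $T$-derivative (the $\int r^{-5/2}(|m|+|\omega|)|\Psi|^2$ term), which can be iterated to trade one derivative in the flux for two fewer $T$-derivatives in the bulk error. Your schematic interpolation is the same $T^{-1/2}T^{1/2}$ split the paper performs in frequency space at a threshold $\omega_0$. So the proposal is in essence correct.

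However, two points are worth flagging. First, you present the fixed-$m$ lower bound $r_{\rm trap}>(1+\sqrt 2)M$ (trapping outside the ergoregion) as a crucial input, but the paper's proof does not use this at all; the paper explicitly distances itself from the argument of \cite{Dafermos2016b}, which does rely on that $r$-localization of trapping, and replaces it with the half-derivative gain. Indeed, the paper's proof never invokes Theorem~\ref{thm:ODE-ILED-fixed-m}: it goes through Proposition~\ref{prop:estimates-assuming-ILED} and Lemma~\ref{lemma:ILED-for-non-future-int}, both of which rest only on the $m$-independent Theorem~\ref{thm:ODE-ILED}. The entire fixed-$m$ gain comes from the $Z=im$ substitution when estimating the coupling/transport error terms, not from sharpened localization of trapping. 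Second, and relatedly, your route re-introduces the hyperboloidal cutoff machinery of Section~\ref{sec:weird-cutoff-system} and re-sums fixed-$m$ frequency estimates from scratch, whereas the paper has already packaged that work into Proposition~\ref{prop:estimates-assuming-ILED}, Lemma~\ref{lemma:ILED-for-non-future-int}, and Proposition~\ref{prop:higher-order-bulk-flux}, and simply cites them. Your more laborious approach would also get there, but the bookkeeping you yourself flag as the main technical worry is largely already done in those references; starting from the conditional estimate \eqref{eq:estimate-for-openness-first-order}, applying the transport equation \eqref{eq:transformed-transport-tilde} to reduce $\tilde\upphi_k'$ terms, and then the $\omega_0$-threshold split on $T^{J_1}\Phi$ is a more direct path to \eqref{eq:derivative-gain}.
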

\begin{proof} 
Recall from Proposition~\ref{prop:estimates-assuming-ILED}, Lemma~\ref{lemma:ILED-for-non-future-int}, and Proposition~\ref{prop:higher-order-bulk-flux}, that for some $p\in(1,2)$ we have
\begin{align*}
\sum_{k=0}^{|s|}\overline{\mathbb{E}}_{p}^{J+|s|-k}[{\tilde\upphi}_k](\tau)
&\leq B(a_0,m)\sum_{J_1=0}^J\int_0^\tau\int_{\Sigma_{\tau'}} r^{-3}|T^{J_1}\Phi|^2 drd\sigma d\tau'+B(a_0,m)\sum_{k=0}^{|s|}\overline{\mathbb{E}}_p^{J+|s|-k}[{\tilde\upphi}_k](0)
\\
&\qquad+B(a_0,m)\sum_{k=0}^{|s|-1}\sum_{J_1=0}^{J+|s|-k-1}\int_0^\tau\int_{\Sigma_{\tau'}} r^{-3}\lp(|T^{J_1}{\tilde\upphi}_k|^2+|T^{J_1}{\tilde\upphi}_k'|^2\rp)dr d\sigma d\tau'\,,
\end{align*}
for $J\geq 3|s|$. Assume $s\neq 0$; by the transport estimate \eqref{eq:transformed-transport-tilde}, the last term can be estimated by
\begin{align*}
&\sum_{k=0}^{|s|-1}\sum_{J_1=0}^{J+|s|-k-1}\int_0^\tau\int_{\Sigma_{\tau'}} r^{-3}\lp(|T^{J_1}{\tilde\upphi}_k|^2+|T^{J_1}{\tilde\upphi}_k'|^2\rp)dr d\sigma d\tau'\\
&\leq B(m)\sum_{k=0}^{|s|-1}\sum_{J_1=0}^{J+|s|-k-2}\int_0^\tau\int_{\Sigma_{\tau'}} r^{-3}\Big(\sum_{J_2=0}^2|T^{J_1}\p_{r^*}^{J_2}{\tilde\upphi}_k|^2+|T^{J_1}\tilde\upphi_{k+1}|^2+|T^{J_1}\tilde\upphi_{k+1}'|^2\Big)dr d\sigma d\tau'\\
& \leq B(a_0,m)\sum_{k=0}^{|s|-1}\lp(\overline{\mathbb{E}}_{p}^{J+|s|-k-1}[{\tilde\upphi}_k](\tau)+\overline{\mathbb{E}}_{p}^{J+|s|-k-1}[{\tilde\upphi}_k](0)\rp)+ B(a_0,m)\sum_{J_1=0}^{J-1}\int_0^\tau\int_{\Sigma_{\tau'}} r^{-3}|T^{J_1}\Phi|^2 drd\sigma d\tau'\\
&\qquad +B(a_0,m)\sum_{k=0}^{|s|-1}\sum_{J_1=0}^{4|s|-k-1}\int_0^\tau\int_{\Sigma_{\tau'}} r^{-3}(|T^{J_1}\tilde\upphi_{k}|^2+|T^{J_1}\tilde\upphi_{k}'|^2) drd\sigma d\tau'\,.
\end{align*}
Hence, we deduce that 
\begin{align*}
&\sum_{k=0}^{|s|}\overline{\mathbb{E}}_{p}^{J+|s|-k}[{\tilde\upphi}_k](\tau)\\
&\quad\leq B(a_0,m)\int_0^\tau\int_{\Sigma_{\tau'}} r^{-3}\Big\{\sum_{J_1=0}^{J}|T^{J_1}\Phi|^2 +B(a_0,m)\sum_{k=0}^{|s|-1}\sum_{J_1=0}^{4|s|-k-1}(|T^{J_1}\tilde\upphi_{k}|^2+|T^{J_1}\tilde\upphi_{k}'|^2)\Big\}drd\sigma d\tau'\\
&\quad\quad+B(a_0,m)\sum_{k=0}^{|s|}\overline{\mathbb{E}}_p^{J+|s|-k}[{\tilde\upphi}_k](0)\,.
\end{align*}
To gain some smallness in the bulk $T^{J_1}\Phi$ term, we can use the fact that only half of a $T$ derivative is lost at trapping:
\begin{align*}
\int_{\mc R_{(0,\tau)}} r^{-2}|T^{J_1}\Phi|^2 
&\leq \int_{\mc R_{(0,\tau)}} r^{-2}|T^{-1/2}T^{1/2}[T^{J_1}(\xi\Phi)]|^2  + B\mathbb{E}^{J_1-1}[\Phi](\tau)+B\mathbb{E}^{J_1-1}[\Phi](0)\\
&\leq B\int_{-\infty}^\infty \sum_{ml}\lp(\int_{|\omega|\geq \omega_0}\frac{\omega}{\omega_0}|\omega^{J_1} \Psi|^2d\omega +\int_{|\omega|\leq \omega_0}\omega_0^{2}|\omega^{J_1-1}\Psi|^2d\omega \rp)dr^*\\
&\qquad+ B\overline{\mathbb{E}}^{J_1-1}_p[\Phi](\tau)+B\overline{\mathbb{E}}_p^{J_1-1}[\Phi](0)\\
&\leq \frac{B}{\omega_{0}}\mathbb{I}^{\mathrm{deg}}[T^{J_1}\Phi](0,\tau)  + B\omega_0^{2}\mathbb{I}^{\mathrm{deg},J_1-1}[\Phi](\tau)+B\overline{\mathbb{E}}_p^{J_1-1}[\Phi](0)\\
&\leq \frac{B}{\omega_{0}}\mathbb{E}[T^{J_1}\Phi](\tau)+ B\omega_0^2\overline{\mathbb{E}}_p^{J_1-1}[\Phi](\tau)+B\overline{\mathbb{E}}_p^{J_1-1}[\Phi](0)\,,
\end{align*}
using the fact that $\mathbb{I}^{\mathrm{deg}}[T\Phi](0,\tau)$ includes a bulk term in $|T^{3/2}(\xi\Phi)|^2$. 
(Note that this additional argument is actually not necessary in the case $s=0$, where one can make sure the error term in estimate \eqref{eq:estimate-for-openness-first-order} in Proposition~\ref{prop:estimates-assuming-ILED} involving a $T$ derivative is supported outside the region where trapping may occur, see \cite[Lemma 11.2.1]{Dafermos2016b}.) Taking $\omega_{0}$ sufficiently large and iterating concludes the proof.
\end{proof}

It is worth noting that the zeroth order bulk error term in $T^{4|s|-k}\tilde\upphi_k$ appearing in the right hand side of \eqref{eq:derivative-gain} in Lemma~\ref{lemma:derivative-gain} is, of course, controlled by initial data whenever $a\in \mc A_m$ by the integrated local energy decay estimate of Theorem~\ref{thm:ILED-futurint}. Thus, we establish openness of $\mc A_m$ by showing that if $a$ is sufficiently close to an element of $\mc A_m$ then it inherits this property from it:

\begin{proof}[Proof of Proposition~\ref{prop:continuity-arg-m}(iii)] Take $\mathring{a}\in\mc A_m$, and choose $a_0<M$ such that $|\mathring{a}|<a_0$ strictly. Now take $a\in\mathbb R$ with $\epsilon:=|a-\mathring{a}|$ to be fixed later, but sufficiently small so that $|a|\leq a_0$. 

Let $\upphi_{k}$ be solutions of the homogeneous transformed system of Definition~\ref{def:transformed-system} with black hole parameter $a$ arising from suitably regular data. By Lemma~\ref{lemma:derivative-gain}, we just need to show that for $p\in(1,2)$
\begin{align*}
\sum_{k=0}^{|s|}\sum_{J=0}^{4|s|-k}\int_{\mc R_{(0,\tau)}} r^{-3}|T^{J}\tilde\upphi_k|^2dr d\sigma d\tau' \leq B\sum_{k=0}^{|s|}\overline{\mathbb{I}}_{p}^{\mathrm{deg},4|s|-k+1}[\tilde\upphi_k](0,\tau) \numberthis\label{eq:openness-intermediate}
\end{align*}
is controlled by initial data as long as $\epsilon$ is sufficiently small. Indeed, recall that for $\upphi_{k,\circ}$ solutions of the homogeneous transformed system of Definition~\ref{def:transformed-system} with black hole parameter $\mathring{a}$ arising from the same data, \eqref{eq:openness-intermediate} \textit{is} controlled by the data by the results of the previous sections.  

\noindent \textit{Step 1: interpolated system.} For $\epsilon$ sufficiently small, we seek to a  introduce hyperbolic differential operator which allows us to interpolate between the  transformed systems with black hole parameters $a$ and $\mathring{a}$. Let
\begin{align*}
\hat{\mathfrak{R}}_{a,M,0}:=\frac{r^2+a^2}{\Delta}\lp[\mathfrak{R}_0-\sign s\frac{w'}{w}|s|\mc L\rp]\,,
\end{align*}
where the rescaling ensures that the operator extends regularly to $\mc H^+$. Let  $0\leq \chi_\tau\leq 1$ be a cutoff which is 0 in the past of $\Sigma_{\tau-\delta_0}$ and identically 1 to the future of $\Sigma_\tau$, for some sufficiently small $\delta_0$ to be fixed later; recall that these hypersurfaces are independent of the black hole rotation parameter, see Section~\ref{sec:hyp-folliation}. Then we set
\begin{align*}
\hat{\mathfrak{R}}_{0,\tau}&:=\chi_\tau \hat{\mathfrak{R}}_{\mathring{a},M,0}+(1-\chi_\tau)\hat{\mathfrak{R}}_{a,M,0}\,,
\end{align*}
to be the differential operator interpolating between $\hat{\mathfrak{R}}_{\mathring{a},M,0}$ and $\hat{\mathfrak{R}}_{{a},M,0}$. We assume $\epsilon$ is small enough that $\hat{\mathfrak{R}}_{0,\tau}$ is hyperbolic. Then, we define the interpolating $\upphi_{0,\tau}$ to be the solution of $\hat{\mathfrak{R}}_{0,\tau}[\upphi_{0,\tau}]=0$ which arises from the initial data posed for $\upphi_{0}$. Then, we have
\begin{align*}
\frac{r^2-2M r+\mathring{a}^2}{r^2+\mathring{a}^2}\hat{\mathfrak{R}}_{\mathring{a},M,0}[\upphi_{0,\tau}]&=\mathfrak{H}_0\,,\qquad\text{with} \quad
\mathfrak{H}_0:=\frac{r^2-2M r+\mathring{a}^2}{r^2+\mathring{a}^2}(\mathfrak{R}_{\mathring{a},M,0}-\mathfrak{R}_{0,\tau})[\upphi_{k,\tau}]\,,
\end{align*}
and the transport equations \eqref{eq:transformed-transport} and \eqref{eq:transformed-transport-inhom-separated} then, by Lemma~\ref{lemma:transformed-system}, yield a complete system of the form of Definition~\ref{def:transformed-system}, where there and in what follows all derivatives are with respect to the spacetime with rotation parameter $\mathring{a}$. Thus, crucially, we can interpret $\upphi_{k,\tau}$ as \textit{future-integrable} solutions to an \textit{inhomogeneous} version of the transformed system  of Definition~\ref{def:transformed-system}. 

To conclude the step, we expand on the form of the inhomogeneity. We have
\begin{align*}
\mathfrak{H}_k=(a-\mathring{a})w\sum_{j=0}^k\lp(\sum_{J_1+J_2+J_3=1}^{2+k-j}c_{s,k,j}^{(J_1,J_2,J_3)}(r,\theta,\tilde t^*)(r^{-1}\mathring{\slashed\nabla})^{J_1}T^{J_2}(X^*)^{J_3}\tilde\upphi_{j,\tau}
+r^{-2+k-j} c_{s,k,j}^0(r,\tilde t^*)\tilde \upphi_{j,\tau}\rp)\,,
\end{align*}
where $c_{s,k,j}^{(J_1,J_2,J_3)}$ and $c_{s,k,j}^0$ are regular functions in $r\in[r_+,\infty)$. In particular, if $|s|\leq 2$, then
\begin{align*}
\int_{\mc R_{(0,\tau)}} \frac{|{\mathfrak{H}}_k|^2}{w}dr^*d\sigma d\tau
&\leq 
\epsilon B(\delta_0) \sum_{j=0}^k\int_{\mc R_{(0,\tau-\delta_0)}} r^{-2}\Big(\sum_{1\leq |i|\leq 2+k-j}|(r^{-1}\mathring{\slashed\nabla})^{i_1}T^{i_2}(X^*)^{i_3} \upphi_{j,\tau}|^2+s^2|\tilde \upphi_{j,\tau}|^2\Big)dr d\sigma d\tau \\
&\leq \epsilon B(\delta_0) \sum_{j=0}^{k}\overline{\mathbb{I}}^{1+k-j}_{1}[\tilde\upphi_{j,\tau}](0,\tau) \,,
\end{align*}
and moreover, setting $\delta=2-p\in(0,1)$, we have
\begin{align*}
&\int_{\mc R_{(0,\tau-\delta_0)}} r^{p-1}\frac{|\tilde{\mathfrak{H}}_k|^2}{w}dr^*d\sigma d\tau\\
&\quad\leq 
\epsilon B(\delta_0) \sum_{j=0}^k\int_{\mc R_{(0,\tau-\delta_0)}} \Big(r^{-1-(2-p)}\sum_{1\leq |i|\leq 2+k-j}|(r^{-1}\mathring{\slashed\nabla})^{i_1}T^{i_2}(X^*)^{i_3}\tilde \upphi_{j,\tau}|^2+s^2r^{p-3}|\tilde \upphi_{j,\tau}|^2\Big)dr d\sigma d\tau \\
&\quad\leq \epsilon B(\delta_0,\delta) \sum_{j=0}^{k}\overline{\mathbb{I}}^{1+k-j}_{-\delta,p}[\tilde\upphi_{j,\tau}](0,\tau-\delta_0)\,.
\end{align*}
All of the norms are taken with respect to the spacetime with black hole parameter $\mathring{a}$.

\medskip

\noindent \textit{Step 2: integrated estimates.} With norms taken with respect to the spacetime with $\mathring{a}$, we can appeal to a version of Proposition~\ref{prop:ODE-to-PDE-future-int} for inhomogeneous transformed systems to control a degenerate bulk for $\tilde\upphi_k$ with non-optimal weights as $r^*\to \pm \infty$. Indeed, by repeating the proof of this proposition, we have 
\begin{align*}
&\mathbb{I}^{\rm deg}[\Phi_\tau](0,\infty) + \sum_{k=0}^{|s|-1}\sum_{J_1+J_2=0}^{|s|-k}\int_{\mc{R}_0} \frac{1}{r^2}\lp(|T^{J_1}Z^{J_2}\upphi_{k,\tau}''|^2+|T^{J_1+1}Z^{J_2}\upphi_{k,\tau}'|^2+r^{-1}|T^{J_1}Z^{J_2}\mathring{\slashed\nabla}^{[s]}\upphi_{k,\tau}'|^2\rp)dr  d\sigma d\tau\\
&\qquad + \sum_{k=0}^{|s|-1}\sum_{J_1+J_2=0}^{|s|-k}\int_{\mc{R}_0} \frac{1}{r^2}\lp(|T^{J_1+1}Z^{J_2}\upphi_{k,\tau}|^2+\frac{1}{r}|T^{J_1}Z^{J_2}\mathring{\slashed\nabla}^{[s]}\upphi_{k,\tau}|^2\rp)dr  d\sigma d\tau\\
&\qquad+\sum_{k=0}^{|s|-1}\sum_{J_1+J_2=0}^{|s|-k}\int_{\mc{R}_0} \frac{1}{r^2}\lp(\frac{1}{r}|T^{J_1}Z^{J_2}\upphi_{k,\tau}|^2+|T^{J_1}Z^{J_2}\upphi_{k,\tau}'|^2\rp)dr  d\sigma d\tau\\
&\quad\leq  B(a_0)\overline{\mathbb{E}}^{|s|-k}[\tilde\upphi_{k,\tau}](0) + \epsilon B(a_0)\int_{\mc R_{(0,\infty)}}(1-\zeta) \lp(|T\Phi|^2+|\mathring{\slashed\nabla}^{[s]}\Phi_\tau|\rp)dr d\sigma d\tau \\
&\quad\qquad+B(a_0)\int_{\mc R_{(0,\infty)}} \frac{|{\mathfrak{H}}|^2}{\epsilon w}dr^*d\sigma d\tau+B(a_0)\sum_{k=0}^{|s|-1}\sum_{J_1+J_2=0}^{|s|-k-1}\sum_{X\in\{\mathrm{id},X^*\}}\int_{\mc R_{(0,\infty)}} \frac{|T^{J_1}Z^{J_2}X{\mathfrak{H}}|^2}{\epsilon w}dr^*d\sigma d\tau\,.
\end{align*}
To obtain the last two terms, we have applied Cauchy--Schwarz to all terms arising from the interaction of frequency-localized currents with the inhomogeneities $\mathfrak{H}_k$; then for $k<|s|$ we have used the transport estimates \eqref{eq:transport-estimates-inhom-intermediate} to control $\mathfrak{H}_k$ terms by $\mathfrak{H}$ terms. (Notice that we lose several derivatives in the process, but the nature of the statement we are seeking to prove does not require us to be optimal.) Then, an easy extension of Proposition~\ref{prop:higher-order-bulk-flux} yields that fixing a  sufficiently large $R^*>0$,
\begin{align*}
&\sum_{k=0}^{|s|}\overline{\mathbb{I}}_{-\delta,p}^{\mathrm{deg},J+|s|-k}[\tilde\upphi_{k,\tau}](0,\infty)\\
&\quad\leq B(\delta)\sum_{J_1+J_2=0}^{J}\int_0^{\infty} \int_{\Sigma_{\tau'}} r^{p+1}|T^{J_1}Z^{J_2}\mathfrak{H}|^2 dr d\sigma d\tau' \\
&\quad\qquad +B(\delta)\sum_{k=0}^{|s|-1}\sum_{J_1+J_2=0}^{J+|s|-k-1}\sum_{X\in\{\mathrm{id},X^*\}}\int_0^{\infty} \int_{\Sigma_{\tau'}} r^{p+1}|T^{J_1}Z^{J_2}X\tilde{\mathfrak{H}}_k|^2 dr d\sigma d\tau'\\
&\quad\qquad +B(\delta)\sum_{k=0}^{|s|}\overline{\mathbb{I}}^{\mathrm{deg},J+|s|-k}[\tilde\upphi_{k,\tau}\mathbbm{1}_{[-R^*,R^*]}](0,\infty) + B(\delta)\sum_{k=0}^{|s|}\overline{\mathbb{E}}_p^{J+|s|-k}[\tilde\upphi_{k,\tau}](0)\\
&\quad\leq  B(a_0,\delta)\sum_{J_1+J_2=0}^{J+|s|-k}\int_0^{\tau} \int_{\Sigma_{\tau'}} r^{p+1}|T^{J_1}Z^{J_2}\mathfrak{H}|^2 dr d\sigma d\tau' + B(a_0,\delta)\sum_{k=0}^{|s|}\overline{\mathbb{E}}_p^{J+|s|-k}[\tilde\upphi_{k,\tau}](0) \\
&\quad\qquad +B(a_0,\delta)\sum_{k=0}^{|s|-1}\sum_{J_1+J_2=0}^{J+|s|-k-1}\sum_{X\in\{\mathrm{id},X^*\}}\int_0^{\tau} \int_{\Sigma_{\tau'}} \epsilon^{-1}r^{p+1}|T^{J_1}Z^{J_2}X\tilde{\mathfrak{H}}_k|^2 dr d\sigma d\tau'\\
&\quad\qquad + \epsilon B(a_0,\delta)\sum_{J_1+J_2=0}^{J}\int_{\mc R_{(0,\infty)}}(1-\zeta) \lp(|T^{1+J_1}Z^{J_2}\Phi_\tau|^2+|T^{J_1}Z^{J_2}\mathring{\slashed\nabla}^{[s]}\Phi_\tau|\rp)dr d\sigma d\tau\\\
&\quad\leq \epsilon B(\delta_0,\delta) \sum_{k=0}^{|s|-1}\overline{\mathbb{I}}^{J+|s|-k+1}_{-\delta,p}[\tilde\upphi_{k,\tau}](0,\tau) +  \epsilon B(\delta_0,\delta) \overline{\mathbb{I}}^{J+1}_{-\delta,p}[\Phi_{\tau}](\tau,\infty)+ B(a_0,\delta)\sum_{k=0}^{|s|}\overline{\mathbb{E}}_p^{J+|s|-k}[\tilde\upphi_{k,\tau}](0)\,,
\end{align*}
where we have made use of the estimates on the inhomogeneity which we have obtained in the previous step.

Since $\upphi_{k,\tau}=\upphi_k$ in the past of $\Sigma_{\tau-\delta_0}$, $\upphi_{k,\tau}=\upphi_{k,\circ}$ in the future of $\Sigma_\tau$, and $\upphi_k$, $\upphi_{k,\tau}$ and $\upphi_{k,\circ}$ all share the same initial data, we deduce
\begin{align*}
&\sum_{k=0}^{|s|}\overline{\mathbb{I}}_{-\delta,p}^{\mathrm{deg},J+|s|-k}[\tilde\upphi_{k}](0,\tau)\\
&\quad\leq \epsilon B(\delta_0,\delta)\Big( \sum_{k=0}^{|s|}\overline{\mathbb{I}}^{J+|s|-k+1}_{-\delta,p}[\tilde\upphi_{k}](0,\tau-\delta_0) +  \overline{\mathbb{I}}^{J+1}_{-\delta,p}[\Phi_{\circ}](\tau,\infty) \Big)\\
&\quad\qquad + B(a_0,\delta)\sum_{k=0}^{|s|}\lp[\lp(\delta_0+\epsilon B(\delta_0)\rp)\sup_{\tau'\in[0,\tau]}\overline{\mathbb{E}}_p^{J+|s|-k+1}[\tilde\upphi_{k}](\tau)+\overline{\mathbb{E}}_p^{J+|s|-k+1}[\tilde\upphi_{k}](0)\rp]\\
&\quad\leq \epsilon B(\delta_0,\delta)\sum_{k=0}^{|s|}\overline{\mathbb{I}}^{J+|s|-k+1}_{-\delta,p}[\tilde\upphi_{k}](0,\tau) \\
&\quad\qquad + B(a_0,\delta)\sum_{k=0}^{|s|}\lp[\lp(\delta_0+\epsilon B(\delta_0)\rp)\sup_{\tau'\in[0,\tau]}\overline{\mathbb{E}}_p^{J+|s|-k+1}[\tilde\upphi_{k}](\tau)+\overline{\mathbb{E}}_p^{J+|s|-k+1}[\tilde\upphi_{k}](0)\rp]\,,\numberthis~\label{eq:cont-arg-end}
\end{align*}
where we have used finite in time estimates as well as the fact that $\upphi_{k,\circ}$ is future integrable with respect to $\mathring{a}$ and thus verifies the integrated local energy decay and energy boundedness estimates of Theorems~\ref{thm:ILED-futurint} and \ref{thm:energy-bddness-futurint}. 

\medskip
\noindent \textit{Step 3: conclusion.}  Notice that the right hand side of \eqref{eq:cont-arg-end} is higher order in derivatives than the left hand side. Thus, to conclude we need to appeal to our derivative gain estimate \eqref{eq:derivative-gain} in Lemma~\ref{lemma:derivative-gain}; using \eqref{eq:openness-intermediate}, we have 
\begin{align*}
&\sup_{\tau'\in[0,\tau]}\sum_{k=0}^{|s|}\overline{\mathbb E}_p^{J+|s|-k}[\tilde\upphi_k](\tau) + \sum_{k=0}^{|s|}\overline{\mathbb I}_p^{J+|s|-k-1}[\tilde\upphi_k](0,\tau) \\
&\quad\leq B(a_0,m,\delta)\sum_{k=0}^{|s|}\overline{\mathbb I}_p^{\mathrm{deg}, 4|s|-k+1}[\tilde\upphi_k](0,\tau) +B(a_0,m,\delta)\overline{\mathbb E}_p^{J+|s|-k}[\tilde\upphi_k](0)\\
&\quad \leq  B(a_0,\delta,m)\lp(\delta_0+\epsilon B(\delta_0)\rp)\sum_{k=0}^{|s|}\lp\{\sup_{\tau'\in[0,\tau]}\overline{\mathbb{E}}_p^{4|s|-k+2}[\tilde\upphi_{k}](\tau)+\overline{\mathbb{I}}^{4|s|-k+2}_{-\delta,p}[\tilde\upphi_{k}](0,\tau)\rp\} \\
&\quad\qquad+B(a_0,\delta, m)\overline{\mathbb E}_p^{J+|s|-k}[\tilde\upphi_k](0)\,.
\end{align*}
Choosing $\delta_0$ sufficiently small and fixing it, we see that for sufficiently small $\epsilon=|a-\mathring{a}|$ we can absorb the first term on the right hand side into the left hand side of the above inequality as long as $J\geq 3|s|+2$. This concludes the proof. 
\end{proof}

\bibliographystyle{../../../halpha-abbrv-rita}
{\small \bibliography{../unpub,../../../library}}

\begin{thebibliography}{AMPW17}
\expandafter\ifx\csname url\endcsname\relax
  \def\url#1{\texttt{#1}}\fi
\expandafter\ifx\csname doi\endcsname\relax
  \def\doi#1{\burlalt{doi:#1}{http://dx.doi.org/#1}}\fi
\expandafter\ifx\csname urlprefix\endcsname\relax\def\urlprefix{URL }\fi
\expandafter\ifx\csname href\endcsname\relax
  \def\href#1#2{#2}\fi
\expandafter\ifx\csname burlalt\endcsname\relax
  \def\burlalt#1#2{\href{#2}{#1}}\fi

\bibitem[AB15]{Andersson2015}
L.~Andersson and P.~Blue.
\newblock {Hidden symmetries and decay for the wave equation on the Kerr
  spacetime}.
\newblock {\em Ann. Math.}, 182(3):787--853, 2015.

\bibitem[ABBM19]{Andersson2019}
L.~Andersson, T.~B{\"{a}}ckdahl, P.~Blue, and S.~Ma.
\newblock {Stability for linearized gravity on the Kerr spacetime}.
\newblock {\itshape Preprint}, 2019,
  \href{http://arxiv.org/abs/1903.03859}{{\ttfamily arXiv:1903.03859}}.

\bibitem[ABBM22]{Andersson2022}
L.~Andersson, T.~B{\"{a}}ckdahl, P.~Blue, and S.~Ma.
\newblock {Nonlinear Radiation Gauge for Near Kerr Spacetimes}.
\newblock {\em Commun. Math. Phys.}, 396(1):45--90, 2022.

\bibitem[AHW22]{Andersson2022b}
L.~Andersson, D.~H{\"{a}}fner, and B.~F. Whiting.
\newblock {Mode analysis for the linearized Einstein equations on the Kerr
  metric: the large a case}.
\newblock {\itshape Preprint}, 2022,
  \href{http://arxiv.org/abs/2207.12952}{{\ttfamily arXiv:2207.12952}}.

\bibitem[AMPW17]{Andersson2017}
L.~Andersson, S.~Ma, C.~Paganini, and B.~F. Whiting.
\newblock {Mode stability on the real axis}.
\newblock {\em J. Math. Phys.}, 58(7):072501, 2017.

\bibitem[Ben20]{Benomio2020}
G.~Benomio.
\newblock {\em {The wave equation on black rings and the linear stability of
  slowly rotating Kerr spacetimes}}.
\newblock Phd thesis, Imperial College London, 2020.

\bibitem[Ben22a]{Benomio2022}
G.~Benomio.
\newblock {A new gauge for gravitational perturbations of Kerr spacetimes I:
  The linearised theory}.
\newblock {\itshape Preprint}, 2022,
  \href{http://arxiv.org/abs/2211.00602}{{\ttfamily arXiv:2211.00602}}.

\bibitem[Ben22b]{Benomio2022a}
G.~Benomio.
\newblock {A new gauge for gravitational perturbations of Kerr spacetimes II:
  The linear stability of Schwarzschild revisited}.
\newblock {\itshape Preprint}, 2022,
  \href{http://arxiv.org/abs/2211.00616}{{\ttfamily arXiv:2211.00616}}.

\bibitem[Cha75]{Chandrasekhar1975a}
S.~Chandrasekhar.
\newblock {On the Equations Governing the Perturbations of the Schwarzschild
  Black Hole}.
\newblock {\em Proc. R. Soc. A Math. Phys. Eng. Sci.}, 343(1634):289--298,
  1975.

\bibitem[Chr12]{Christodoulou2012}
D.~Christodoulou.
\newblock {The global initial value problem in general relativity}.
\newblock In {\em Ninth Marcel Grossmann Meet.}, pages 44--54. World Scientific
  Publishing Company, 2012.

\bibitem[DHR19a]{Dafermos2017}
M.~Dafermos, G.~Holzegel, and I.~Rodnianski.
\newblock {Boundedness and Decay for the Teukolsky Equation on Kerr Spacetimes
  I: The Case $|a|\ll M$}.
\newblock {\em Ann. PDE}, 5(2):1--118, 2019.

\bibitem[DHR19b]{Dafermos2016a}
M.~Dafermos, G.~Holzegel, and I.~Rodnianski.
\newblock {The linear stability of the Schwarzschild solution to gravitational
  perturbations}.
\newblock {\em Acta Math.}, 222(1):1--214, 2019.

\bibitem[DHRT21]{Dafermos2021}
M.~Dafermos, G.~Holzegel, I.~Rodnianski, and M.~Taylor.
\newblock {The non-linear stability of the Schwarzschild family of black
  holes}.
\newblock {\itshape Preprint}, 2021,
  \href{http://arxiv.org/abs/2104.08222}{{\ttfamily arXiv:2104.08222}}.

\bibitem[DR10a]{Dafermos2010a}
M.~Dafermos and I.~Rodnianski.
\newblock {A new physical-space approach to decay for the wave equation with
  applications to black hole spacetimes}.
\newblock In {\em XVIth Int. Congr. Math. Phys.}, pages 421--432. World
  Scientific, 2010.

\bibitem[DR10b]{Dafermos2010}
M.~Dafermos and I.~Rodnianski.
\newblock {Decay for solutions of the wave equation on Kerr exterior spacetimes
  I-II: The cases $|a| \ll M$ or axisymmetry}.
\newblock {\itshape Preprint}, 2010,
  \href{http://arxiv.org/abs/1010.5132}{{\ttfamily arXiv:1010.5132}}.

\bibitem[DR13]{Dafermos2008}
M.~Dafermos and I.~Rodnianski.
\newblock {Lectures on black holes and linear waves}.
\newblock In {\em Evol. equations. Clay Math. Proceedings, vol. 17}, pages
  97--205. American Mathematical Society, Providence, Rhode Island, 2013.

\bibitem[DRSR16]{Dafermos2016b}
M.~Dafermos, I.~Rodnianski, and Y.~Shlapentokh-Rothman.
\newblock {Decay for solutions of the wave equation on Kerr exterior spacetimes
  III: The full subextremal case $|a| < M$}.
\newblock {\em Ann. Math.}, 183(3):787--913, 2016.

\bibitem[GKS20]{Giorgi2020b}
E.~Giorgi, S.~Klainerman, and J.~Szeftel.
\newblock {A general formalism for the stability of Kerr}.
\newblock {\itshape Preprint}, 2020,
  \href{http://arxiv.org/abs/2002.02740}{{\ttfamily arXiv:2002.02740}}.

\bibitem[GKS22]{Giorgi2022}
E.~Giorgi, S.~Klainerman, and J.~Szeftel.
\newblock {Wave equations estimates and the nonlinear stability of slowly
  rotating Kerr black holes}.
\newblock {\itshape Preprint}, 2022,
  \href{http://arxiv.org/abs/2205.14808}{{\ttfamily arXiv:2205.14808}}.

\bibitem[HHV21]{Hafner2019}
D.~H{\"{a}}fner, P.~Hintz, and A.~Vasy.
\newblock {Linear stability of slowly rotating Kerr black holes}.
\newblock {\em Invent. Math.}, 223(3):1227--1406, 2021.

\bibitem[HK23]{Holzegel2023}
G.~Holzegel and C.~Kauffman.
\newblock {The wave equation on subextremal Kerr spacetimes with small
  non-decaying first order terms}.
\newblock {\itshape Preprint}, 2023,
  \href{http://arxiv.org/abs/2302.06387}{{\ttfamily arXiv:2302.06387}}.

\bibitem[HV18]{Hintz2016}
P.~Hintz and A.~Vasy.
\newblock {The global non-linear stability of the Kerr–de Sitter family of
  black holes}.
\newblock {\em Acta Math.}, 220(1):1--206, 2018.

\bibitem[HV20]{Hintz2020}
P.~Hintz and A.~Vasy.
\newblock {Stability of Minkowski space and polyhomogeneity of the metric}.
\newblock {\em Ann. PDE}, 6(2):1--146, 2020.

\bibitem[Keh22]{Kehrberger2022}
L.~M. Kehrberger.
\newblock {\em {The Case Against Smooth Null Infinity I: Heuristics and
  Counter-Examples}}, volume~23.
\newblock Springer International Publishing, 2022.

\bibitem[KMW89]{Kalnins1989}
E.~G. Kalnins, W.~Miller, and G.~C. Williams.
\newblock {Teukolsky--Starobinsky identities for arbitrary spin}.
\newblock {\em J. Math. Phys.}, 30(12):2925--2929, 1989.

\bibitem[KS19]{Klainerman2019}
S.~Klainerman and J.~Szeftel.
\newblock {Effective results on uniformization and intrinsic GCM spheres in
  perturbations of Kerr}.
\newblock {\itshape Preprint}, 2019,
  \href{http://arxiv.org/abs/1912.12195}{{\ttfamily arXiv:1912.12195}}.

\bibitem[KS20]{Klainerman2017}
S.~Klainerman and J.~Szeftel.
\newblock {\em {Global Nonlinear Stability of Schwarzschild Spacetime under
  Polarized Perturbations}}.
\newblock Annals of Mathematics Studies, 2020.

\bibitem[KS21]{Klainerman2021}
S.~Klainerman and J.~Szeftel.
\newblock {Kerr stability for small angular momentum}.
\newblock {\itshape Preprint}, 2021,
  \href{http://arxiv.org/abs/2104.11857}{{\ttfamily arXiv:2104.11857}}.

\bibitem[KS22]{Klainerman2022}
S.~Klainerman and J.~Szeftel.
\newblock {Construction of GCM Spheres in Perturbations of Kerr}.
\newblock {\em Ann. PDE}, 8(2):17, 2022.

\bibitem[Ma20a]{Ma2017}
S.~Ma.
\newblock {Uniform Energy Bound and Morawetz Estimate for Extreme Components of
  Spin Fields in the Exterior of a Slowly Rotating Kerr Black Hole I: Maxwell
  Field}.
\newblock {\em Ann. Henri Poincar{\'{e}}}, 21(3):815--863, 2020.

\bibitem[Ma20b]{Ma2017a}
S.~Ma.
\newblock {Uniform Energy Bound and Morawetz Estimate for Extreme Components of
  Spin Fields in the Exterior of a Slowly Rotating Kerr Black Hole II:
  Linearized Gravity}.
\newblock {\em Commun. Math. Phys.}, 377(3):2489--2551, 2020.

\bibitem[Mil23]{Millet2023}
P.~Millet.
\newblock {Optimal decay for solutions of the Teukolsky equation on the Kerr
  metric for the full subextremal range |a| < M}.
\newblock {\itshape Preprint}, 2023,
  \href{http://arxiv.org/abs/2302.06946}{{\ttfamily arXiv:2302.06946}}.

\bibitem[Mos16]{Moschidis2016}
G.~Moschidis.
\newblock {The $r^{p}$-Weighted Energy Method of Dafermos and Rodnianski in
  General Asymptotically Flat Spacetimes and Applications}.
\newblock {\em Ann. PDE}, 2(6):1--194, 2016.

\bibitem[MZ21]{Ma2021a}
S.~Ma and L.~Zhang.
\newblock {Sharp decay for Teukolsky equation in Kerr spacetimes}.
\newblock {\itshape Preprint}, 2021,
  \href{http://arxiv.org/abs/2111.04489}{{\ttfamily arXiv:2111.04489}}.

\bibitem[Pas19]{Pasqualotto2016}
F.~Pasqualotto.
\newblock {The spin $\pm 1$ Teukolsky equations and the Maxwell system on
  Schwarzschild}.
\newblock {\em Ann. Henri Poincar{\'{e}}}, 20(4):1263--1323, 2019.

\bibitem[SC74]{Starobinsky1974}
A.~A. Starobinsky and S.~M. Churilov.
\newblock {Amplification of electromagnetic and gravitational waves scattered
  by a rotating ``black hole''}.
\newblock {\em J. Exp. Theor. Phys.}, 38(1):1--5, 1974.

\bibitem[She22]{Shen2022}
D.~Shen.
\newblock {Construction of GCM hypersurfaces in perturbations of Kerr}.
\newblock {\itshape Preprint}, 2022,
  \href{http://arxiv.org/abs/2205.12336}{{\ttfamily arXiv:2205.12336}}.

\bibitem[SR15]{Shlapentokh-Rothman2015}
Y.~Shlapentokh-Rothman.
\newblock {Quantitative Mode Stability for the Wave Equation on the Kerr
  Spacetime}.
\newblock {\em Ann. Henri Poincar{\'{e}}}, 16:289--345, 2015.

\bibitem[SRTdC20]{SRTdC2020}
Y.~Shlapentokh-Rothman and R.~{Teixeira da Costa}.
\newblock {Boundedness and decay for the Teukolsky equation on Kerr in the full
  subextremal range $|a|<M$: frequency space analysis}.
\newblock {\itshape Preprint}, 2020,
  \href{http://arxiv.org/abs/2007.07211}{{\ttfamily arXiv:2007.07211}}.

\bibitem[TdC20]{TeixeiradaCosta2019}
R.~{Teixeira da Costa}.
\newblock {Mode Stability for the Teukolsky Equation on Extremal and
  Subextremal Kerr Spacetimes}.
\newblock {\em Commun. Math. Phys.}, 378(1):705--781, 2020.

\bibitem[TdC21]{TdC-thesis}
R.~{Teixeira da Costa}.
\newblock {\em {Frequency space analysis in General Relativity}}.
\newblock PhD thesis, University of Cambridge, 2021.

\bibitem[Teu73]{Teukolsky1973}
S.~A. Teukolsky.
\newblock {Perturbations of a Rotating Black Hole. I. Fundamental Equations for
  Gravitational, Electromagnetic, and Neutrino-Field Perturbations}.
\newblock {\em Astrophys. J.}, 185:635--647, 1973.

\bibitem[TP74]{Teukolsky1974}
S.~A. Teukolsky and W.~H. Press.
\newblock {Perturbations of a rotating black hole. III - Interaction of the
  hole with gravitational and electromagnetic radiation}.
\newblock {\em Astrophys. J.}, 193:443--461, 1974.

\bibitem[Whi89]{Whiting1989}
B.~F. Whiting.
\newblock {Mode stability of the Kerr black hole}.
\newblock {\em J. Math. Phys.}, 30(6):1301--1305, 1989.

\end{thebibliography}

\end{document}